\documentclass[11pt]{article}

\pdfoutput=1

\emergencystretch=1em

\usepackage{announce_macros}


\bibliography{homological_toolbox_refs}

\begin{document}

\long\def\symbolfootnote[#1]#2{\begingroup%
\def\thefootnote{\fnsymbol{footnote}} \footnote[#1]{#2}\endgroup}

\title{Homological Tools for the Quantum Mechanic}
\author{Tom Mainiero\thanks{tom.mainiero@rutgers.edu or mainiero@physics.utexas.edu}}
\affil{New High Energy Theory Center, Rutgers University, Piscataway, NJ 08854, USA}
\date{}

\maketitle

\begin{abstract}
	This paper is an introduction to work motivated by the question ``can multipartite entanglement be detected by homological algebra?"
	We introduce cochain complexes associated to multipartite density states whose cohomology detects factorizability. 
	The $k$th cohomology components of such cochain complexes produce tuples of $(k+1)$-body operators that are non-locally correlated due to the non-factorizability of the state.
	Associated Poincar\'{e} polynomials are invariant under local invertible linear transformations (automorphisms that decompose as tensor products of automorphisms). 
	These complexes can be considered as a step toward realizing mutual information as an Euler characteristic. 
	We motivate the definition of the ``state index" associated to a multipartite state: a three-parameter function which is invariant under local invertible transformations, well-behaved under tensor products of states, and interpolates between multipartite mutual information and the integer-valued Euler characteristics of our complexes.
	We compute cohomologies and state indices of multipartite W and GHZ states.
	The approach in this paper is directed toward practitioners of finite-dimensional quantum mechanics, although the machinery developed generalizes far beyond.  
	Some results are applicable in infinite dimensions and should be generalizable to the context of quantum field theory. 
	In order to compensate for the long length, a detailed summary is provided in the introduction section.
\end{abstract}

\tableofcontents

\section{Introduction \label{sec:introduction}}
We begin by considering two famous ``factorizability" questions:
\begin{enumerate}
	\item[(C)] Let $\Omega = \Omega_{1} \times \cdots \times \Omega_{n}$ be a Cartesian product of finite sets; suppose we are handed a probability measure $\mu: \Omega \rightarrow \mathbb{R}_{\geq 0}$, is $\mu$ a product measure?
	I.e. does $\mu = \mu_{1} \times \cdots \times \mu_{n}$ for some $\mu_{i}: \Omega_{i} \rightarrow \mathbb{C}$?
	
	\item[(Q)] Let $\hilb = \hilb_{1} \otimes \cdots \otimes \hilb_{n}$ be a tensor product of Hilbert spaces; suppose we are handed a state $\psi \in \hilb$, does $\psi = \psi_{1} \otimes \cdots \otimes \psi_{n}$ for some $\psi_{i} \in \hilb_{i}$?
\end{enumerate}
The first question is one of classical probability theory: one is asking if random variables associated to $\Omega_{i}$ ($\mathbb{C}$-valued functions $\Omega_{i} \rightarrow \mathbb{C}$) are independent of random variables associated to $\Omega_{j}$ for $j \neq i$.
On the other hand, the second question is quantum mechanical in nature: states that fail to factorize in such a way are called \textit{entangled}, and play a fundamental role in quantum information/computation \cite{nc:qcqi,preskill}.
The study of entangled states has also found its way into high-energy physics---particularly after the work of Ryu-Takayanagi \cite{Ryu:2006bv}, relating entanglement entropy (a numerical measure of entanglement in a field-theoretic state) of states in a conformal field theory to the area of minimal surfaces in a gravity dual theory in one spatial dimension higher.
Some of these ideas have discrete, computationally accessible versions \cite{Pastawski:2015qua,Hayden:2016cfa,harlow16} using the techniques of tensor networks  \cite{Verstraete:2004cf, vidal:tn, Verstraete2009} and error correcting codes. 

Expressed purely algebraically, both questions are special cases of a more general question of the form:
\begin{enumerate}
	\item[(QC)] Given an \textit{algebra of random variables} $\mathcal{A}$ (an algebra over a field $k$) that factorizes as a tensor product $\mathcal{A} = \bigotimes_{i = 1}^{n} \mathcal{A}_{i}$ and an \textit{expectation value} linear functional $\mathbbm{E}: \mathcal{A} \rightarrow k$, does $\mathbbm{E}: \mathcal{A} \rightarrow k$ factorize as a tensor product of expectation values on each $\mathcal{A}_{i}$? 
	I.e.\ are there expectation values $\mathbbm{E}_{i}: \mathcal{A}_{i} \rightarrow k$ such that $\mathbbm{E}(a_{1} \otimes a_{2} \otimes \cdots \otimes a_{k}) = \mathbbm{E}_{1}(a_{1}) \mathbbm{E}_{2}(a_{2}) \cdots \mathbbm{E}_{n}(a_{n})$ for all $a_{i} \in \mathcal{A}_{i}$?
\end{enumerate}
In question (C), $\mathcal{A}$ is the commutative algebra of $\mathbb{C}$-valued functions on $\Omega$, and (Q) is a question about the non-commutative algebra of (bounded) endomorphisms of $\hilb$.
Serious practitioners of non-commutative measure theory would tell us that the algebra of random variables $\mathcal{A}$ here should be taken to be a non-commutative $W^{*}$-algebra: special types of $C^{*}$-algebras that abstract von Neumann algebras,\footnote{Lovers of two-dimensional topological field theory might take $\mathcal{A}$ to be a commutative Fr\"{o}benius algebra; most of the results in this paper are generalizable to this situation.} and the expectation value to be a normal state.
In finite-dimensions $W^{*}$-algebras and $C^{*}$-algebras algebras are isomorphic to direct sums (i.e. Cartesian products of algebras) of matrix algebras and normal states are uniquely associated to direct sums/tuples of density operators.

In practice, the tensor factors of the algebra of random variables are associated to disjoint ``local" subsystems or ``observers": e.g.\ lattice sites, individual atoms in a molecule, disjoint regions on a spatial slice of spacetime, disjoint causal diamonds, etc.\  The failure to factorize is indicative of ``non-local" or ``global" correlations among subsystems.
Those words in quotation marks---particularly ``global"---should set off some cohomological alarms, indicating that perhaps there is some cochain complex whose cohomology captures how badly an expectation value fails to factorize.
Yet the reader should not be convinced simply by a couple of vague analogies, so we provide the following diagram of boxes and squiggly arrows as further evidence.

\begin{center}
	\begin{tikzpicture}
		\tikzstyle{block} = [rectangle, draw=blue, thick, fill=blue!10,
		text width=16em, text centered, rounded corners, minimum height=2em]

		\tikzstyle{blockred} = [rectangle, draw=blue, thick, fill=red!10,
		text width=16em, text centered, rounded corners, minimum height=2em]

		\node at (-4, 0)[block,draw=blue,very thick] (fact) {Factorizability};

		\node at (4, 0)[block,draw=blue,very thick] (descent) {Descent of data to subsystems: all global data comes from gluing local data: $\mathbbm{E}(\sum_{i j} a_{i} \otimes b_{j}) = \frac{1}{\mathbbm{E}(1)} \sum_{i j} \mathbbm{E}(a_{i} \otimes 1) \mathbbm{E}(1 \otimes b_{j})$.};

		\node at (-4,-4)[blockred,draw=red,very thick] (failfact) {Failure to Factorize/``(weak) Entanglement"};

		\node at (4, -4)[blockred,draw=red,very thick] (obsdesc) {Obstruction to descent: $\mathbbm{E}(a \otimes b) \neq \frac{1}{\mathbbm{E}(1)}\mathbbm{E}(a \otimes 1) \mathbbm{E}(1 \otimes b)$ for some $(a,b)$};

		\node at (0, -.5) (midpointup) {};

		\node at (0,-3.5) (midpointdown) {};

		\draw[latex-latex,decorate, thick, decoration={snake,amplitude=3pt,pre length=5pt,post length=4pt}] (fact) -- (descent);
		\draw[latex-latex,decorate, thick,  decoration={snake,amplitude=3pt,pre length=5pt,post length=4pt}] (failfact) -- (obsdesc);

		\draw[implies-implies,line width=1pt,double distance=5pt] (midpointup) -- (midpointdown);

	\end{tikzpicture}
\end{center}
		
To elaborate on this diagram: because any random variable is decomposable as a sum of tensor products of local random variables, an expectation value is factorizable if and only if it can be reconstructed via sums and products of local expectation values, i.e.\ ``glued together from local data".
Hence, the failure to be factorizable is captured by tuples of local random variables $(a_{1},\cdots,a_{k})$ such that $\mathbbm{E}(a_{1} \otimes \cdots \otimes a_{k}) \neq \mathbbm{E}_{1}(a_{1}) \mathbbm{E}_{2}(a_{2}) \cdots \mathbbm{E}_{k}(a_{k})$---where  $\mathbbm{E}_{k}$ is (up to a proportionality constant) given by the pullback of $\mathbbm{E}$ along the embedding $A_{k} \hookrightarrow \otimes_{i = 1}^{n} A_{i}$ defined by $a_{k} \mapsto 1_{A_{1}} \otimes 1_{A_{2}} \otimes \cdots \otimes 1_{A_{k-1}} \otimes a_{k} \otimes 1_{A_{k+1}} \otimes \cdots \otimes 1_{A_{n}}$; these are precisely tuples of variables with non-local correlations.
This should be a somewhat more convincing argument for cohomology, as cohomological techniques are precisely designed for extracting those globally defined quantities that fail to be completely determined by locally defined ones; in this situation, the cohomology components should be directly related to tuples of operators with non-local correlations representing the obstructions to factorization.

On the other hand, there are already plenty of \textit{numerical} quantities that measure how states fail to factorize, especially in the bipartite situation (two tensor factors); in this paper, our focus will be on a particularly famous one: the mutual information and its multipartite versions.
Suppose we are given observers\footnote{Also referred to as ``primitive subsystems" or ``tensor factors" throughout this paper.} $\mathsf{A}$ and $\mathsf{B}$, each whom is equipped with an associated local algebra of random variables $\mathcal{A}_{\mathsf{A}}$ and $\mathcal{A}_{\mathsf{B}}$ (respectively).
Then given the data of an expectation value functional $\mathbbm{E}: \mathcal{A}_{\mathsf{A}} \otimes \mathcal{A}_{\mathsf{B}} \rightarrow \mathbb{C}$, we associate the real-valued quantity
\begin{align*}
	I_{2} = S_{\mathsf{A}} + S_{\mathsf{B}} - S_{\mathsf{\sAB}}
\end{align*}
where $\mathsf{\sAB}$ is a shorthand notation for the joint system of $\mathsf{A}$ and $\mathsf{B}$, and $S_{\mathsf{T}},\,$ for $\mathsf{T} \in \{\sA, \sB, \sAB \}$ denotes the entropy associated to the system $\mathsf{T}$.
When the entropy is defined properly, $I_{2}$ is always a positive quantity, and vanishes if and only if the expectation value on $\sAB$ factorizes.
As the name suggests, $I_{2}$ is a quantitative measure of the information that is shared between the disjoint subsystems $\sA$ and $\sB$; hence, a good measure of non-local correlations.
For more than two systems, there are several possible generalizations of the mutual information; one approach takes seriously the inclusion-exclusion nature of the sum of entropies in the definition of $I_{2}$:  for an expectation value $\mathbbm{E}: \bigotimes_{p \in P} \mathcal{A}_{p} \rightarrow \mathbb{C}$, where the indexing set $P$ is ordered and of size $n$, we define
\begin{align*}
	I_{n} := \sum_{k = 0}^{n} (-1)^{k} \left[ \sum_{\{T \subseteq P: |T| = k+1 \} } S_{T} \right]
\end{align*}
where the second nested sum is over all joint subsystems of $k$ elements of $P$.
The multipartite mutual information $I_{n}$ vanishes if there is any subsystem---represented by a subset of $P$---that is independent in the sense that the expectation value functional factorizes with respect to a partition defined by that subsystem and its complement.
In this sense, multipartite mutual information can be thought of as a measure of the information that is shared between all possible subsystems due to any non-local correlations induced by the expectation value functional.\footnote{Unlike the bipartite mutual information, however, tripartite and higher mutual informations can take on negative values.}
The inclusion-exclusion sum defining mutual information has the superficial appearance of an Euler characteristic of some chain complex, with the sum of entropies of systems of size $k$ acting as the dimension of the $k$th cochain component; so one might ask if there is an associated cochain complex whose associated Euler characteristic (an alternating sum of dimensions) is related to mutual information.
Because we expect a cochain complex to be a much richer mathematical object than a numerical quantity, if one were able to produce such a complex, we would likely gain further insight: i.e.\ we might be able to find out not only a numerical measure of the amount of information that is shared between subsystems, but specifically \textit{what} information is shared.
This was already indicated above where we conjectured the cohomology of such complexes should be related to non-locally correlated operators, i.e.\ those operators representing shared information among disjoint subsystems.

Moreover, cohomology maintains information about how objects are glued together in contrast to the Euler characteristic which only depends on the number of things being glued together.
This property of this Euler characteristic is advantageous for ease of computability, but at the cost of information.
One can have two cochain complexes with different cohomologies but the same Euler characteristic, or an individual cochain complex with vanishing Euler characteristic but non-trivial cohomology.
As an extreme example of both phenomena: the Euler characteristic of any orientable compact manifold of odd-dimension is vanishing, making it useless for detecting the difference between two such manifolds, let alone detecting the information about the topology (as opposed to the calculation of homology or cohomology).
Similarly, there  may be situations where the mutual information of two states are identical, but are distinguished at the level of cohomology or the mutual information of a state is vanishing despite non-vanishing cohomology.
This latter situation cannot happen for bipartite systems---where mutual information completely determines factorizability---but can happen for tripartite systems: e.g.\ the GHZ and W-states both have vanishing tripartite mutual informations, but are clearly entangled states.

Cochain complexes motivated by the above discussion are more than just fantastical musings; in this paper we will provide an exposition of definitions and results that will be elaborated and expanded upon in forthcoming papers, likely with a more sophisticated approach.
We limit our approach primarily to the purely quantum mechanical: where algebras of random variables are given by endomorphisms on Hilbert spaces, and expectation value functionals are given by tracing with respect to density states.
In the spirit of maintaining a reasonable length, and to appeal to a wide audience of both physicists and mathematicians, we resist the urge to utilize categorical and homotopical techniques that have not yet found their way into the physics lingo (beyond a few specialists).
This certainly has disadvantages, but emphasizes that the ideas here are primarily linear algebraic in nature; our approach is sufficient for the reader to engage in concrete computations which should be especially appealing to physicist readers.

\subsection{Provided Software}
``Alpha" versions of software for calculating the cohomologies of multipartite states are provided in the arXiv source of this paper; the most up-to-date versions are available on GitHub.\footnote{See: \href{https://github.com/tmainiero}{\url{https://github.com/tmainiero/homological-tools-4QM-octave}} and \href{https://github.com/tmainiero}{\url{https://github.com/tmainiero/homological-tools-4QM-mathematica}}}
  There are two versions of this software corresponding to implementations in \textit{Mathematica} and \textit{Octave/Matlab}.
  Both versions are able to compute dimensions of cohomology components (which may be thought of as coefficients of associated Poincar\'{e} polynomials); the Mathematica version is able to output explicit generators for each cohomology component. 
  Some documentation is provided for both versions (in the form of a quick-start notebook for the Mathematica version).
  Readers are encouraged to experiment with this software, using this paper as a guideline for definitions and observations of the properties of associated cohomologies.\footnote{As a disclaimer: this software was written in a very early stage of the author's understanding; as a result, its implementation uses `co-\v{C}ech" rather than the \v{C}ech techniques suggested in this paper.
  However, it is possible to verify the resulting cochain complexes are chain isomorphic via a ``sign-correcting" chain isomorphism.
  The author intends to rectify this with updated software in future versions of this paper.}   

The reader with either no knowledge of homological techniques, or an itching impatience can use the software as a black box and employ the following famous schematic.
\begin{center}
	\begin{tikzpicture}
		\tikzstyle{block} = [rectangle, thick,
		text width=10em, text centered, rounded corners, minimum height=2em]

		\node at (-5,0)[block,draw=black,very thick,align=center] (collectunderpants) {\textbf{Phase 1}\\ \vspace{3mm} \hrule \vspace{3mm} Insert $N$-partite Density State $\dens{\rho}_{\{1, \cdots, N \}} \in \Dens(\bigotimes_{i = 1}^{N} \hilb_{i})$};

		\node at (0,0)[block,draw=black,very thick,align=center] (magic) {\textbf{Phase 2}\\ \vspace{3mm} \hrule \vspace{3mm} \textcolor{red}{\Huge \textbf{?}}};

		\node at (5,0)[block,draw=black,very thick,align=center] (profit) {\textbf{Phase 3}\\ \vspace{3mm} \hrule \vspace{3mm} Profit:\\ Degree $N-1$ polynomials with positive integer coefficients};

		\draw[->, >=open triangle 45] (collectunderpants) to (magic);

		\draw[->, >=open triangle 45] (magic) to (profit);
	\end{tikzpicture}
\end{center}
The strange patterns in its output should hopefully be enough inspiration to dig deeper.

\subsection{Summary and Key Results \label{sec:key_results}}
In order to compensate for the long length of this paper, we provide a detailed summary; the reader can reference particular sections for even further detail. 
This paper can be divided into two parts which can, for the most part, be read independently. 
The first part consists of sections \S\ref{sec:alg_and_states}-\ref{sec:multipartite_complexes} and is focused on the construction of cohomology and its properties.
The second part, which is contained in \S\ref{sec:categorification}, is an exposition focused on a motivation of the state index via an attempt to realize multipartite mutual information as an Euler characteristic of some cohomology theory.
The ideas of both sections are mildly mixed together in \S\ref{sec:W_vs_GHZ}, which consists of some concrete computations.

\subsubsection{Cohomology For Bipartite Density States}
In the first part of this paper we apply homological techniques to the study of bipartite density states: density states on the tensor product of two Hilbert spaces.
Formally a \textit{bipartite density state} is given by a tuple of data $(\hilb_{\sA}, \hilb_{\sB}, \dens{\rho}_{\sAB})$ where:
\begin{itemize}
	\item $\sA$ and $\sB$ label the two \textit{tensor factors} or \textit{primitive subsystems},

	\item $\hilb_{\sA}$ and $\hilb_{\sB}$ are finite-dimensional Hilbert spaces,\footnote{Most of the results in this paper generalize appropriately to infinite dimensions.  See \S\ref{sec:alg_and_states}.} and 

	\item $\dens{\rho}_{\sAB}$ is a positive-semidefinite trace 1 endomorphism of $\hilb_{\sA} \otimes \hilb_{\sB}$ i.e.\ a \textit{density state} on $\hilb_{\sA} \otimes \hilb_{\sB}$.
\end{itemize}
To lighten notation we denote such a tuple of data as a boldface version of its density state $\bdens{\rho}_{\sAB}$.
From any bipartite density state $\bdens{\rho}_{\sAB}$ we construct associated chain complexes 
\begin{align*}
	\cpx{g}(\bdens{\rho}_{\sAB}) &= \cdots 
	\longleftarrow 0 \longleftarrow
	\mathbb{C} 
	\overset{\partial^{\cpx{g}}_{0}}{\longleftarrow}
	\mathtt{g}^{0}(\bdens{\rho}_{\sAB})
	\overset{\partial^{\cpx{g}}_{1}}{\longleftarrow}
	\mathtt{g}^{1}(\bdens{\rho}_{\sAB})
	\longleftarrow 0 \longleftarrow \cdots, \\
	\cpx{e}(\bdens{\rho}_{\sAB}) &= \cdots 
	\longleftarrow 0 \longleftarrow
	\mathbb{C} 
	\overset{\partial^{\cpx{e}}_{0}}{\longleftarrow}
	\mathtt{e}^{0}(\bdens{\rho}_{\sAB})
	\overset{\partial^{\cpx{e}}_{1}}{\longleftarrow}
	\mathtt{e}^{1}(\bdens{\rho}_{\sAB})
	\longleftarrow 0 \longleftarrow \cdots,
\end{align*}
and cochain complexes
\begin{align*}
	\cpx{G}(\bdens{\rho}_{\sAB}) &= \cdots 
	\longrightarrow 0 \longrightarrow
	\mathbb{C} 
	\overset{d_{\cpx{G}}^{-1}}{\longrightarrow}
	\mathtt{G}^{0}(\bdens{\rho}_{\sAB})
	\overset{d_{\cpx{G}}^{0}}{\longrightarrow}
	\mathtt{G}^{1}(\bdens{\rho}_{\sAB})
	\longrightarrow 0 \longrightarrow \cdots, \\
	\cpx{E}(\bdens{\rho}_{\sAB}) &= \cdots 
	\longrightarrow 0 \longrightarrow
	\mathbb{C} 
	\overset{d_{\cpx{E}}^{-1}}{\longrightarrow}
	\mathtt{E}^{0}(\bdens{\rho}_{\sAB})
	\overset{d_{\cpx{E}}^{0}}{\longrightarrow}
	\mathtt{E}^{1}(\bdens{\rho}_{\sAB})
	\longrightarrow 0 \longrightarrow \cdots. 
\end{align*}
In Prop.~\ref{prop:bipartite_cocomplex_to_dualcomplex} it is shown that the cochain complexes are the canonically duals of the chain complexes:
\begin{align*}
	\cpx{G}(\bdens{\rho}_{\sAB}) &\cong \left[ \cpx{g}(\bdens{\rho}_{\sAB}) \right]^{\vee}, \\
	\cpx{E}(\bdens{\rho}_{\sAB}) &\cong \left[ \cpx{e}(\bdens{\rho}_{\sAB}) \right]^{\vee};
\end{align*}
which results isomorphisms:
\begin{align*}
	H^{k}\left[\cpx{G}(\bdens{\rho}_{\sAB})\right] &\cong H_{k}\left[\cpx{g}(\bdens{\rho}_{\sAB})\right]^{\vee},\\
	H^{k}\left[\cpx{E}(\bdens{\rho}_{\sAB})\right] &\cong H_{k}\left[\cpx{e}(\bdens{\rho}_{\sAB})\right]^{\vee},
\end{align*}
for $k \in \mathbb{Z}$.
With this in mind, we primarily focus on cochain complexes and cohomology (rather than complexes and homology).
Letting $\cpx{C}(\bdens{\rho}_{\sAB})$ denote one of the cochain complexes above, its components are of the form 
\begin{align*}
	\cpx{C}^{0}(\bdens{\rho}_{\sAB}) &= \mathtt{B}(\dens{\rho}_{\sA}) \times \mathtt{B}(\dens{\rho}_{\sB}), \\
	\cpx{C}^{1}(\bdens{\rho}_{\sAB}) &= \mathtt{B}(\dens{\rho}_{\sAB})
\end{align*}
where $\dens{\rho}_{\sX}$ is the reduced density state on $\hilb_{\sX}$, and---for any density state $\dens{\hilb}$ on a Hilbert space $\hilb$---$\mathtt{B}(\dens{\rho})$ is one of the building blocks of \S \ref{sec:building_blocks}.
In particular, $\mathtt{B}(\dens{\rho})$ is constructed as a subspace of $\algebra{\hilb}$---the space of (bounded) endomorphisms of $\hilb$---that depends on the data of the density state $\dens{\rho}$.
The coboundary maps of the cochain complexes are constructed by ``tensoring by the appropriate identity" (the trace-pairing dual of partial trace maps), followed by an projection onto the subspace of operators of interest. A pictorial example of such a composition of maps is the following: 
\begin{equation}
	\begin{tikzpicture}[baseline=(current bounding box.center), descr/.style={fill=white,inner sep=1.5pt}]
		\matrix (m) [
		matrix of math nodes,
		row sep=2em,
		column sep=4em,
		text height=1.5ex, text depth=0.25ex
		]
		{\algebra{\hilb_{\sA}}  & \algebra{\hilb_{\sAB}} \\
		\mathtt{B}(\dens{\rho}_{\sA}) & \mathtt{B}(\dens{\rho}_{\sAB})\\};

		\draw[->] (m-1-1) to node[above, font=\scriptsize]{$a \mapsto a \otimes 1_{\sB}$} (m-1-2);
		\draw[right hook->] (m-2-1) to (m-1-1);
		\draw[->] (m-1-2) to node[right, font=\scriptsize]{$\mathrm{proj}_{\sAB}$}(m-2-2);
	\end{tikzpicture}
	\label{eq:lift_and_project}
\end{equation}
Where the form of the linear map labelled ``$\mathrm{proj}_{\sAB}$" depends on the building block $\mathtt{B}(\dens{\rho}_{\sAB})$ and involves either right multiplication or compression (left and right multiplication) by the \textit{support projection} of $\dens{\rho}_{\sAB}$: the orthogonal projection onto the image of $\dens{\rho}_{\sAB}$.
Then, for instance, the map $d_{\cpx{C}}^{0}$ takes the form:
\begin{align*}
	d_{\cpx{C}}^{0}(a,b) = \mathrm{proj}_{\sAB} \left[ a \otimes 1_{\sB} - 1_{\sA} \otimes b \right].
\end{align*}
The map $d^{-1}_{\mathrm{C}}$ acts by sending $\lambda \in \mathbb{C}$ to $(\lambda \supp_{\dens{\rho}_{\sA}}, \lambda \supp_{\dens{\rho}_{\sB}})$.
The fact that the coboundary of our (co)chain complexes square to zero relies on a compatibility condition between the support projections of reduced density states.
The key lemmata here are Lem.~\ref{lem:compat_of_supports} (and its manifestation in terms of partial traces: Lem.~\ref{lem:compat_of_supports_predual}).

Because the building blocks are explicit subspaces of algebras of operators, an element of either the first or second cohomology components is identifiable with an equivalence classes of operators.
In particular, the zeroth cohomology components $\ker(d_{\cpx{C}}^{0})/\image(d_{\cpx{C}}^{0})$ of either cochain complex is identifiable with equivalence classes of pairs of operators/random variables $(a,b) \in \algebra{\hilb_{\sA}} \times \algebra{\hilb_{\sB}}$.
Following some of the ideas outlined in the previous section, our hope is that such elements act as obstructions to factorizability of the density state $\dens{\rho}_{\sAB}$ (or, equivalently, its associated expectation value functional). 
This is indeed the case in the context of the cochain complex $\cpx{G}(\bdens{\rho}_{\sAB})$ (referred to as the \textit{GNS complex}).

The building blocks $\mathtt{B}(\dens{\rho})$ used to construct $\cpx{G}(\bdens{\rho}_{\sAB})$ are denoted $\mathtt{GNS}(\dens{\rho})$; they are formed by noticing that to each density state $\dens{\rho} \in \Dens(\hilb)$ there is an associated left module $\mathtt{GNS}(\dens{\rho})$---dubbed the \textit{GNS module}---for the algebra $\algebra{\hilb}$ of (bounded) endomorphisms of the Hilbert space $\hilb$;\footnote{By ``module" we mean a module in the purely algebraic sense: i.e.\ a module for the underlying $\mathbb{C}$-algebra of $\algebra{\hilb}$.} it can be realized as a left submodule of $\algebra{\hilb}$.
In finite dimensions there is a canonical isomorphism (of left $\algebra{\hilb}$-modules):
\begin{align*}
	\mathtt{GNS}(\dens{\rho}) \cong \hilb \otimes \image(\dens{\rho})^{\vee} \hookrightarrow \algebra{\hilb}
\end{align*}
with the action of $\algebra{\hilb}$ acting on the left tensor component in the obvious manner.
On one hand, the GNS module is a precursor to the Gelfand-Neumark-Segal representation---this is what inspires its name---but in this paper we emphasize that it can be identified with representatives of \textit{right essential equivalence classes} of operators (c.f.\ Def.~\ref{def:right_essential_equivalence}): a non-commutative generalization of the notion of ``almost-everywhere" equivalence classes of functions that appear in commutative measure theory.

The zeroth cohomology component of $\cpx{G}(\bdens{\rho}_{\sAB})$ is identifiable with operators on separate tensor components that are ``maximally correlated".
Let us make this statement more precise: to the bipartite density state $\bdens{\rho}_{\sAB}$ we associate (c.f.\ Def.~\ref{def:cov_var}):
\begin{itemize}
	\item A sesquilinear form $\Cov: \algebra{\hilb_{\sA}} \times \algebra{\hilb_{\sB}} \rightarrow \mathbb{C}$ that measures the covariance between operators associated to different tensor components (a linear measure of ``global correlations");

	\item Quadratic forms $\Var_{\sX}: \algebra{\hilb_{\sX}} \rightarrow \mathbb{C}$, for $\sX \in \{\sA, \sB \}$. 
\end{itemize}
Then, as with the covariance and variances one encounters in classical probability theory, one can use Cauchy-Schwarz arguments to show that the square of the covariance is bounded above by the product of variances.
Quoting Lem.~\ref{lem:covariance_saturation}: for any pair of local\footnote{Throughout this paper the word ``local" will be used to refer to quantities associated to or operations on each tensor factor/primitive subsystem.  When partitioning up the tensor factors in a different manner, the word ``local" is in reference to the new tensor factors/primitive subsystems defined by the components of the partition.} 
operators $(a,b) \in \algebra{\hilb_{\sA}} \times \algebra{\hilb_{\sB}}$ we have
\begin{align*}
		\left |\Cov(a,b)  \right|^{2} \leq \Var_{\sA}(a) \Var_{\sB}(b).
\end{align*}
The quantities $\Cov$ and $\Var_{\sX}$ restrict to functions on the subspace $\mathtt{GNS}(\dens{\rho}_{\sX})$;\footnote{As mentioned above, $\mathtt{GNS}(\dens{\rho}_{\sX}$ is canonically isomorphic to the module of right essential equivalence classes: which is formed as a quotient of $\mathcal{A}(\hilb)$ by a left-ideal.
Luckily $\Cov$ and $\Var_{\sX}$ descend naturally to this quotient.}
this allows us to make sense of the following identification;
\begin{corollary*}[]{(C.f.\ Cor.~\ref{cor:GNS_cohom_saturated_cov})}{}
	\begin{align*}
		H^{0}[\cpx{G}(\bdens{\rho}_{\sAB})] &= \left \{
			(a,b) \in \mathtt{GNS}(\dens{\rho}_{\sA}) \times \mathtt{GNS}(\dens{\rho}_{\sB}) : 
			\text{\parbox{13em}{
					\centering \small $\Cov(a,b) = \Var_{\sA}(a) = \Var_{\sB}(b)$ \\
					and \\ 
				$\Tr[\dens{\rho}_{\sA}a] = \Tr[\dens{\rho}_{\sB}b]$}
			}
		\right \} 
		\Bigg/ \{\Cov = 0\} ,
	\end{align*}
	where $\{\Cov = 0 \}$ is shorthand for $\{(a,b) \in \mathtt{GNS}(\dens{\rho}_{\sA}) \times \mathtt{GNS}(\dens{\rho}_{\sB}): \Cov(a,b) = 0 \}$. 
\end{corollary*}

This is in harmony with what was said in \S\ref{sec:introduction}: there should be some cohomology theory that encodes the failure to recover global expectation values from local ones---more precisely, this cohomology theory should encode pairs of operators $(a,b)$ such that the expectation value of $a \otimes b$ cannot be constructed as the product of expectation values.
A quantitative measure of the failure of such a factorization of expectation values is given by the magnitude of the covariance; in this sense, non-trivial elements of $H^{0}[\cpx{G}(\bdens{\rho}_{\sAB})]$ represent the ``worst offenders". 

When the density state $\bdens{\rho}_{\sAB}$ is a \textit{pure} bipartite state, then one can use the covariance saturation condition to compute the zeroth cohomology component of the GNS complex explicitly in terms of a Schmidt decomposition.
\begin{theorem*}{(C.f.\ Thm.~\ref{thm:pure_bipartite_cohomology})}{}
	Let $\bdens{\rho}_{\sAB}$ be a pure bipartite density state with $\dens{\rho}_{\sAB} = \psi \otimes \psi^{\vee}$ for some $\psi \in \hilb_{\sA} \otimes \hilb_{\sB}$. Decompose $\psi$ as:
	\begin{align*}
		\psi = \sum_{i = 1}^{S} \sqrt{p_{i}} \xi^{\sA}_{i} \otimes \xi^{\sB}_{i}
	\end{align*} 
	for positive coefficients $\{p_{i} \}_{i = 1}^{S} \subseteq \mathbb{R}_{>0}$ and orthonormal vectors $\{\xi^{\sX}_{i}\}_{i = 1}^{S} \subset \hilb_{\sX}$---i.e.\ a Schmidt decomposition of $\psi$---then 
	\begin{align*}
		H^{0}[\cpx{G}(\bdens{\rho}_{\sAB})] = \operatorname{span}_{\mathbb{C}} \left[(\mathbbm{e}^{\sA}_{ij},\mathbbm{e}^{\sB}_{ij}): i,j = 1,\cdots, S \right]/\operatorname{span}_{\mathbb{C}} \{(\supp_{\sA}, \supp_{\sB}) \}
	\end{align*}
	where $\supp_{\sX}$ denotes the support projection of $\dens{\rho}_{\sX}$ (the orthogonal projection onto the image of $\dens{\rho}_{\sX}$), and
	\begin{align*}
		\mathbbm{e}^{\sX}_{ij} := \xi^{\sX}_{i} \otimes \left(\xi^{\sX}_{j} \right)^{\vee} \in \mathtt{GNS}(\dens{\rho}_{\sX})
	\end{align*}
	for $\sX \in \{\sA, \sB \}$.
	In particular, $\dim H^{0}[\cpx{G}(\bdens{\rho}_{\sAB})] = S^2 - 1$, where $S$ is the Schmidt rank of $\psi$ (equivalently, $\dim H^{0}[\cpx{G}(\bdens{\rho}_{\sAB})] = \mathrm{rank}(\rho_{\sA})^2 -1 = \mathrm{rank}(\rho_{\sB})^2 -1$).
\end{theorem*}
Because the cohomology of bipartite complexes are concentrated in degrees 0 and 1, one can determine the dimension of one cohomology component from the other by an Euler characteristic computation.  It follows that for a pure bipartite density state as in the theorem: 
\begin{align}
	\dim H^{1}[\cpx{G}(\bdens{\rho}_{\sAB})] = (d_{\sA} - S) (d_{\sB} - S) 
	\label{eq:H1_pure_bipartite_intro}
\end{align}
which provides a very coarse measure of how far $\psi$ is from being ``maximally entangled".

The building blocks for the components of the cochain complex $\cpx{E}(\bdens{\rho}_{\sAB})$ are identifiable with spaces of endomorphisms of $\mathtt{GNS}(\dens{\rho})$ that are equivariant with respect to the left $\algebra{\hilb}$-action---a vector space we call $\mathtt{Com}(\dens{\rho})$: it is canonically identifiable with (bounded) endomorphisms on the image of $\dens{\rho}$
\begin{align*}
	\mathtt{Com}(\dens{\rho}) \cong \End[\image(\dens{\rho})] \hookrightarrow \algebra{\hilb}.
\end{align*}
$\mathtt{Com}(\dens{\rho})$ is a purely algebraic version of what is referred to as the ``commutant" in $C^{*}$-algebra theory, so we call $\cpx{E}(\bdens{\rho}_{\sAB})$ the \textit{commutant complex}.\footnote{Although we are working with purely algebraic structures---in particular we forget the inner product on $\mathtt{GNS}(\dens{\rho})$---it happens that the space of equivariant endomorphisms of the underlying algebraic module agrees with the commutant in the $C^{*}$-algebraic sense; so this terminology is not misleading.}  The commutant complex and its cohomology might be considered as a subtle way of recovering some information about the module structure of the GNS representation: information that is lost when forgetting down from modules to vector spaces when computing cochain complexes and cohomology.

The cohomology components of the cochain complex $\cpx{E}(\bdens{\rho}_{\sAB})$ can also be computed for a pure bipartite density state $\bdens{\rho}_{\sAB}$ with $\dens{\rho}_{\sAB} = \psi \otimes \psi^{\vee}$ as:
\begin{align*}
	H^{1}[\cpx{E}(\bdens{\rho}_{\sAB})] &= 0,\\
	H^{0}[\cpx{E}(\bdens{\rho}_{\sAB})] &= \{(a,b) \in \mathtt{Com}(\dens{\rho}_{\sA}) \times \mathtt{Com}(\dens{\rho}_{\sB}): \langle \psi, a \psi \rangle = \langle \psi, b \psi \rangle = 0 \}/\mathrm{span}_{\mathbb{C}}\{(\supp_{\sA}, \supp_{\sB}) \}.
\end{align*}
Here the zeroth cohomology component does not have an interpretation as correlated operators,\footnote{For instance, if $\bdens{\rho}$ is pure as Thm.~\ref{thm:pure_bipartite_cohomology}, then $(0, \mathbbm{e}^{\sB}_{ij})$ is the representative of a non-trivial cohomology class for all $1 \leq i,j \leq S$; but this pair of operators has zero covariance.} however, the Schmidt rank does make a reappearance if we compute the dimensions.
\begin{theorem}{(C.f. Thm.~\ref{thm:pure_commutant_cohomology})}{}
	Let $\bdens{\rho}_{\sAB}$ be a bipartite density state with $\dens{\rho}_{\sAB} = \psi \otimes \psi^{\vee}$ for some $\psi \in \hilb_{\sA} \otimes \hilb_{\sB}$, then
	\begin{equation*}
		\dim H^{k}[\cpx{E}(\bdens{\rho}_{\sAB})] =
		\left \{
			\begin{array}{ll} 
				2(S^2 -1) &, \text{if $k=0$} \\
				0 &, \text{otherwise}
			\end{array}
		\right.
	\end{equation*}
	where $S$ is the Schmidt rank of $\psi$ (equivalently given as $\rank \left( \Tr_{\sX}[\psi \otimes \psi^{\vee}] \right)$ for $\sX \in \{\sA, \sB \})$. 
	In particular $\psi$ is factorizable if and only if $H^{k}[\mathrm{E}(\bdens{\rho}_{\sAB})] = 0$ for all $k$.
\end{theorem}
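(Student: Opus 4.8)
The plan is to reduce everything to a one-line Euler-characteristic count, leaning on the two facts already recorded for the pure case: that $H^{1}[\cpx{E}(\bdens{\rho}_{\sAB})]$ vanishes and that the commutant building blocks satisfy $\mathtt{Com}(\dens{\rho}) \cong \End[\image(\dens{\rho})]$. First I would pin down where cohomology can live. The complex $\cpx{E}(\bdens{\rho}_{\sAB})$ is supported in degrees $-1$, $0$, and $1$, so $H^{k}$ vanishes automatically for $k \notin \{-1,0,1\}$. In degree $-1$ the differential is the augmentation $d_{\cpx{E}}^{-1}\colon \lambda \mapsto (\lambda\supp_{\sA}, \lambda\supp_{\sB})$, which is injective since the support projections are nonzero; hence $H^{-1}[\cpx{E}(\bdens{\rho}_{\sAB})] = \ker d_{\cpx{E}}^{-1} = 0$. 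Combined with the given $H^{1}[\cpx{E}(\bdens{\rho}_{\sAB})] = 0$, all cohomology is concentrated in degree $0$, so the entire content of the dimension formula is the single computation of $\dim H^{0}$.

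The key step is to equate two expressions for the Euler characteristic $\chi$ of $\cpx{E}(\bdens{\rho}_{\sAB})$ (the alternating sum of the dimensions of its cochain groups). Since cohomology sits only in degree $0$, $\chi = \dim H^{0}[\cpx{E}(\bdens{\rho}_{\sAB})]$. Computing $\chi$ from the cochain groups I use $\dim\mathtt{Com}(\dens{\rho}) = \dim\End[\image(\dens{\rho})] = (\rank\dens{\rho})^{2}$. For a pure state of Schmidt rank $S$ the reduced states $\dens{\rho}_{\sA}$ and $\dens{\rho}_{\sB}$ each have rank $S$, whereas $\dens{\rho}_{\sAB} = \psi\otimes\psi^{\vee}$ has rank $1$; thus the degree-$(-1)$ term $\mathbb{C}$ contributes dimension $1$, the degree-$0$ term $\mathtt{Com}(\dens{\rho}_{\sA})\times\mathtt{Com}(\dens{\rho}_{\sB})$ contributes $2S^{2}$, and the degree-$1$ term $\mathtt{Com}(\dens{\rho}_{\sAB})$ contributes $1$. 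Therefore
\begin{align*}
	\dim H^{0}[\cpx{E}(\bdens{\rho}_{\sAB})] = \chi = -1 + 2S^{2} - 1 = 2(S^{2}-1),
\end{align*}
which is exactly the claimed formula. As an internal check against the explicit description of $H^{0}$, one can count directly: each of the constraints $\langle\psi, a\psi\rangle = 0$ and $\langle\psi, b\psi\rangle = 0$ is a single nontrivial linear condition---nontrivial because $\langle\psi, \supp_{\sA}\psi\rangle = \Tr[\dens{\rho}_{\sA}\supp_{\sA}] = 1$ and likewise for $\sB$---so the space of zero-expectation representatives has dimension $(S^{2}-1)+(S^{2}-1) = 2(S^{2}-1)$, in agreement.

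Finally, for the ``in particular'' clause I would observe that $\psi$ is factorizable exactly when its Schmidt rank is $S = 1$: then $2(S^{2}-1) = 0$ and every cohomology group vanishes, while if $S \geq 2$ then $\dim H^{0}[\cpx{E}(\bdens{\rho}_{\sAB})] = 2(S^{2}-1) > 0$ is nonzero. The one ingredient that genuinely requires work---the main obstacle, were one to make the argument self-contained---is the input $H^{1}[\cpx{E}(\bdens{\rho}_{\sAB})] = 0$, i.e.\ the surjectivity of $d_{\cpx{E}}^{0}$ onto the one-dimensional space $\mathtt{Com}(\dens{\rho}_{\sAB})$; this reduces to checking that $d_{\cpx{E}}^{0}$ is not identically zero, which holds because $(\supp_{\sA}, 0)$ pairs nontrivially with $\psi$ (its pairing being $\Tr[\dens{\rho}_{\sA}\supp_{\sA}] = 1$) and hence has nonzero image. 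The rank fact $\rank\dens{\rho}_{\sAB} = 1$ likewise relies on $\bdens{\rho}_{\sAB}$ being genuinely pure; both ingredients are in place here.
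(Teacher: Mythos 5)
Your proposal is correct and follows essentially the same route as the paper's proof: establish that cohomology is concentrated in degree $0$ (the paper notes $H^{1}=0$ "by working through the definitions", which your surjectivity argument for $d_{\cpx{E}}^{0}$ onto the one-dimensional $\mathtt{Com}(\dens{\rho}_{\sAB})$ makes explicit) and then read off $\dim H^{0}$ from the Euler characteristic $-1+2S^{2}-1=2(S^{2}-1)$. The added checks (injectivity of the augmentation, the direct count of the zero-expectation subspace) are consistent with the paper's explicit description of $H^{0}[\cpx{E}(\bdens{\rho}_{\sAB})]$.
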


All of these (co)chain complexes obey some desirable properties outlined in \S\ref{sec:bipartite:properties}.
In particular, they are equivariant under local invertible transformations: invertible linear maps applied to the Hilbert space on each tensor factor (c.f.\ Def.~\ref{def:local_invertible_bipartite} and Def.~\ref{def:local_invertible_multipartite}).
As a result, their associated Poincar\'{e} polynomials should be invariant under local invertible transformations.\footnote{The set of equivalence classes of multipartite pure states up to local invertible equivalence coincides with the notion of an SLOCC (Stochastic Local Operations and Classical Communication) class: classes of states that can be transformed into one another (with non-zero probability) via completely positive maps living inside the class of Local Operations and Classical Communication. 
So, using this language, Poincar\'{e} polynomials are SLOCC invariants.
The collection of SLOCC equivalence classes is naturally a projective variety; we do not explore if these polynomials are compatible with the algebraic structure of this projective variety.}

Given that the Schmidt rank classifies pure bipartite density states up to local invertible transformations, then as corollary of the results above we have that the Poincar\'{e} polynomials:
\begin{align*}
	P_{\cpx{G}}(\bdens{\rho}_{\sAB}) &:= \dim H^{0}[ \cpx{G}(\bdens{\rho}_{\sAB}) ] + y  \dim H^{1}[ \cpx{G}(\bdens{\rho}_{\sAB}) ],\\
	P_{\cpx{E}}(\bdens{\rho}_{\sAB}) &:= \dim H^{0}[ \cpx{E}(\bdens{\rho}_{\sAB}) ] + y  \dim H^{1}[ \cpx{E}(\bdens{\rho}_{\sAB}) ].
\end{align*}
provide complete invariants of pure bipartite states up to local invertible transformations. 
\begin{remark}{(C.f.\ Rmk.~\ref{rmk:bipartite_poincare})}{}
	Suppose $\bdens{\rho}_{\sAB}$ is a pure bipartite density state with Schmidt rank $S$ then, by the above discussion, its Poincar\'{e} polynomials are given by: 
	 \begin{align*}
		 P_{\cpx{G}}(\bdens{\rho}_{\sAB}) &:= P[\cpx{G}(\bdens{\rho}_{\sAB})] = (S^2 - 1) + (d_{\sA} - S)(d_{\sB} - S) y,\\
		 P_{\cpx{E}}(\bdens{\rho}_{\sAB}) &:= P[\cpx{E}(\bdens{\rho}_{\sAB})] = 2(S^2-1).
	 \end{align*}
	 Spanning over all bipartite density states, there are $\mathrm{min}(d_{\sA}, d_{\sB})$ possibilities for each Poincar\'{e} polynomial, each uniquely labelled by the Schmidt rank.
	 In particular the Poincar\'{e} polynomials completely classify pure bipartite density states up to local invertible transformations.
\end{remark}

As summarized in the above remark, both GNS and commutant cohomologies are separately good measures of factorizability/entanglement for a \textit{pure} bipartite density state.  However, the situation for mixed states is more subtle.
There are vanishing theorems for the zeroth cohomology components when the density state is \textit{support factorizable}: a weaker condition than factorizability (c.f.\ \S\ref{sec:factorizabilia}), however the converse does not hold true as there are bipartite density states with trivial cohomology but not support factorizable (c.f.\ Ex.~\ref{ex:mixed_state_counterexamples}).\footnote{Moreover, there are separable states with non-trivial cohomology.}
Details for the mixed state situation are present in \S\ref{sec:factorizability_mixed}.

\subsubsection{Cohomology for Multipartite Density States}
In \S\ref{sec:multipartite_complexes} we introduce multipartite generalizations of the bipartite (co)chain complexes introduced in \S\ref{sec:bipartite_complexes}.\footnote{Although it is not emphasized in this paper, the cochain complexes arise as \v{C}ech complexes constructed from presheaves of vector spaces over the space of tensor factors (thought of as equipped with the discrete topology) and open covers given by the complement of each tensor factor.}
Generalizing the formal definition for a bipartite density state, an $N$-partite density state, $N \geq 2$ is formally a tuple $(P, (\hilb_{p})_{p \in P}, \dens{\rho})$, where
\begin{itemize}
	\item $P$ is an ordered set of size $|P|=N$, whose elements are called \textit{tensor factors} or \textit{primitive subsystems},

	\item $(\hilb_{p})_{p \in P}$ is a tuple of (finite-dimensional) Hilbert spaces indexed by $P$, and

	\item $\dens{\rho}$ is a unit trace positive semidefinite endomorphism of $\bigotimes_{p \in P} \hilb_{p}$.
\end{itemize}
Once again, when the tuple of Hilbert spaces is understood, we denote such a tuple of data by the boldface quantity $\bdens{\rho}_{P}$.

Using the same building blocks as previously, we construct the \textit{GNS (cochain) complex}: 
\begin{align*}
	\newmath{\cpx{G}(\bdens{\rho}_{P})} :=
	\cdots \longrightarrow 0
	\longrightarrow \mathbb{C}
	\overset{d_{\cpx{G}}^{-1}}{\longrightarrow}
	\cpx{G}^{0}(\bdens{\rho}_{P})
	\overset{d_{\cpx{G}}^{0}}{\longrightarrow}
	\cpx{G}^{1}(\bdens{\rho}_{P})
	\overset{d_{\cpx{G}}^{2}}{\longrightarrow} \cdots
	\overset{d_{\cpx{G}}^{N-2}}{\longrightarrow}
	\cpx{G}^{N-1}(\bdens{\rho}_{P})
	\longrightarrow 0 \longrightarrow
	\cdots
\end{align*}
and the \textit{commutant (cochain) complex}
\begin{align*}
	\newmath{\cpx{E}(\bdens{\rho}_{P})} :=
	\cdots \longrightarrow 0
	\longrightarrow \mathbb{C}
	\overset{d_{\cpx{E}}^{-1}}{\longrightarrow}
	\cpx{E}^{0}(\bdens{\rho}_{P})
	\overset{d_{\cpx{E}}^{0}}{\longrightarrow}
	\cpx{E}^{1}(\bdens{\rho}_{P})
	\overset{d_{\cpx{E}}^{1}}{\longrightarrow} \cdots
	\overset{d_{\cpx{E}}^{N-2}}{\longrightarrow}
	\cpx{E}^{N-1}(\bdens{\rho}_{P})
	\longrightarrow 0 \longrightarrow
	\cdots 
\end{align*}
Along with chain complex versions $\cpx{g}(\bdens{\rho}_{P})$ and $\cpx{e}(\bdens{\rho}_{P})$, once again satisfying (c.f.\ Prop.~\ref{prop:trace_duality_multipartite}):
\begin{align*}
	\cpx{G}(\bdens{\rho}_{\sAB}) &\cong \left[ \cpx{g}(\bdens{\rho}_{\sAB}) \right]^{\vee}, \\
	\cpx{E}(\bdens{\rho}_{\sAB}) &\cong \left[ \cpx{e}(\bdens{\rho}_{\sAB}) \right]^{\vee}.
\end{align*}

Letting $\cpx{C}$ be one of the cochain complexes, above, the non-trivial components of degree $-1 \leq k \leq N$ are defined using building blocks:
\begin{align*}
	\cpx{C}^{k}(\bdens{\rho}_{P}) &= \prod_{\{T \subseteq P: |T| = k + 1 \}} \mathtt{B}(\dens{\rho}_{T}),
\end{align*}
where:
\begin{itemize}
	\item $\dens{\rho}_{T}$ is the reduced density state associated to the subset/subsystem $T \subseteq P$;
	\item $\mathtt{B} = \mathtt{GNS}$ for the GNS complex $\cpx{G}(\bdens{\rho}_{P})$, and $\mathtt{B} = \mathtt{Com}$ for the commutant complex $\cpx{E}(\bdens{\rho}_{P})$.
\end{itemize}
In \S\ref{sec:visualization} we give a geometric picture of these complexes.\footnote{This picture is not specific to the complexes we have developed, it is a general property of cochain complexes arising from \v{C}ech cohomology of a presheaf.}
We begin by visualizing the standard $(N-1)$-simplex whose vertices are labelled by elements of the ordered set $P$, call it $\Delta_{P}$.
Dimension $k$-faces of this simplex are in bijective correspondence with size $k+1$ subsets of $P$.
With this in mind, let $F_{T}$ denote the dimension $(|T|-1)$-simplex associated to the set $|T|$; then we have an assignment:
\begin{align*}
	\text{Faces of $\Delta_{P}$} & \longrightarrow \text{Subspaces of operators}\\
	F_{T}                          & \longmapsto \mathtt{B}(\bdens{\rho}_{T}) \leq \algebra{\hilb_{T}}.
\end{align*}
where $\hilb_{T} := \bigotimes_{t \in T} \hilb_{t}$.
Elements of the set of $k$-cochains $\cpx{C}^{k}(\bdens{\rho}_{P})$ are precisely sections of this assignment over the $k$-skeleton of $\Delta_{P}$: the union of all faces of dimension $k$. 
In particular, an element $R \in \cpx{C}^{k}(\bdens{\rho}_{P})$ can be thought of as consisting of as an assignment of an element 
\begin{align*}
	R_{T} \in \mathtt{B}(\dens{\rho}_{T}) \subseteq \algebraext{\bigotimes_{t \in T} \hilb_{t}} = \{\text{Operators/random variables associated to subsystem $T$} \}.
\end{align*}
to each face $F_{T}$ with $|T| = k+1$.
The coboundary is constructed as an alternating sum over the $(k+1)$-possible ways of lifting to the $(k+1)$-skeleton.
For example, returning to the bipartite situation: given a 1-cochain
\begin{center}
	\begin{tikzpicture}[very thick,decoration={markings, mark=at position 0.5 with {\arrow{>}}}]
		\coordinate [label=below:$\sA$, label=above:$R_{\sA}$] (A) at (0,0);
		\coordinate [label=below:$\sB$, label=above:$R_{\sB}$] (B) at (7,0);

		\draw[postaction={decorate}] (A) to (B);

		\fill[red]  (A) circle [radius=2pt]; 
		\fill[red]  (B) circle [radius=2pt]; 
	\end{tikzpicture}
\end{center}
we have the coboundary:
\begin{center}
	\begin{tikzpicture}[very thick,decoration={markings, mark=at position 0.5 with {\arrow{>}}}]
		\coordinate [label=below:$\sA$] (A) at (0,0);
		\coordinate [label=below:$\sB$] (B) at (7,0);

		\draw[postaction={decorate}] (A) to node[sloped,above,midway] {$(d_{\cpx{C}}^{0}R)_{\sAB} = \mathrm{proj}_{\sAB}\left[R_{\sA} \otimes 1_{\sB} - 1_{\sA} \otimes R_{\sB} \right]$} (B);

		\fill[red]  (A) circle [radius=2pt]; 
		\fill[red]  (B) circle [radius=2pt]; 
	\end{tikzpicture}
\end{center}
Given a tripartite density state $\bdens{\rho}_{\sABC}$ over the set of tensor factors $(\sA, \sB, \sC)$, and a 0-cochain $Q \in \cpx{C}^{0}(\bdens{\rho}_{\sABC})$, we have the coboundary:
\begin{center}
	\begin{tikzpicture}[very thick,decoration={markings, mark=at position 0.5 with {\arrow{>}}}]
		\coordinate [label=below:$\sA$, label=left:$Q_{\sA}$] (AL) at (-9,0);
		\coordinate [label=below:$\sB$, label=right:$Q_{\sB}$] (BL) at (-6,0);
		\coordinate [label=above:$\sC$, label=left:$Q_{\sC}$] (CL) at (-9,3);
		\coordinate [label=below:$\sA$] (AR) at (0,0);
		\coordinate [label=below:$\sB$] (BR) at (3,0);
		\coordinate [label=above:$\sC$] (CR) at (0,3);

		\draw[fill=blue!30] (AL)--(BL)--(CL)--cycle;

		\draw[fill=blue!30] (AR)--(BR)--(CR)--cycle;

		\draw[postaction={decorate}] (AL) to node[sloped,below,midway] {} (BL);

		\draw[postaction={decorate}] (BL) to node[sloped,above,midway] {} (CL);

		\draw[postaction={decorate}] (AL) to node[sloped,above,midway] {} (CL);

		\draw[postaction={decorate}] (AR) to node[sloped,below,midway] {$(d_{\cpx{C}}^{0}Q)_{\sAB}$} (BR);

		\draw[postaction={decorate}] (BR) to node[sloped,above,midway] {$(d_{\cpx{C}}^{0}Q)_{\sBC}$} (CR);

		\draw[postaction={decorate}] (AR) to node[sloped,above,midway] {$(d_{\cpx{C}}^{0}Q)_{\sAC}$} (CR);

		\fill[red]  (AL) circle [radius=2pt]; 
		\fill[red]  (AR) circle [radius=2pt]; 
		\fill[red]  (BL) circle [radius=2pt]; 
		\fill[red]  (BR) circle [radius=2pt]; 
		\fill[red]  (CL) circle [radius=2pt]; 
		\fill[red]  (CR) circle [radius=2pt]; 

		\draw[-latex] (-5.5,1.5) to node[above,midway]{\small $d_{\cpx{C}}^{0}: \cpx{C}^{0}(\bdens{\rho}_{\sABC}) \rightarrow \cpx{C}^{1}(\bdens{\rho}_{\sABC})$} (-1.5,1.5);
	\end{tikzpicture}
\end{center}
where
\begin{align*}
	(d^{0}Q):
	\left \{
		\begin{array}{ll}
			\sAB \mapsto \mathrm{proj}_{\sAB}(1_{\sA} \otimes Q_{\sB} - Q_{\sA} \otimes 1_{\sB})\\
			\sAC \mapsto \mathrm{proj}_{\sAC}(1_{\sA} \otimes Q_{\sC} - Q_{\sA} \otimes 1_{\sC})\\
			\sBC \mapsto \mathrm{proj}_{\sBC}(Q_{\sB} \otimes 1_{\sC} - Q_{\sB} \otimes 1_{\sC})
		\end{array}
	\right..\\
\end{align*}
Given a 1-cochain $R \in \cpx{C}^{1}(\bdens{\rho}_{\sABC})$, we have the coboundary:
\begin{center}
	\begin{tikzpicture}[very thick,decoration={markings, mark=at position 0.5 with {\arrow{>}}}]
		\coordinate [label=below:$\sA$] (AL) at (-9,0);
		\coordinate [label=below:$\sB$] (BL) at (-6,0);
		\coordinate [label=above:$\sC$] (CL) at (-9,3);
		\coordinate [label=below:$\sA$] (AR) at (0,0);
		\coordinate [label=below:$\sB$] (BR) at (3,0);
		\coordinate [label=above:$\sC$] (CR) at (0,3);

		\draw[fill=blue!30] (AL)--(BL)--(CL)--cycle;

		\draw[fill=blue!30] (AR)--(BR)--(CR)--cycle;

		\draw[postaction={decorate}] (AL) to node[sloped,below,midway] {$R_{\sAB}$} (BL);

		\draw[postaction={decorate}] (BL) to node[sloped,above,midway] {$R_{\sBC}$} (CL);

		\draw[postaction={decorate}] (AL) to node[sloped,above,midway] {$R_{\sAC}$} (CL);

		\draw[postaction={decorate}] (AR) to node[sloped,below,midway] {} (BR);

		\draw[postaction={decorate}] (BR) to node[sloped,above,midway] {} (CR);

		\draw[postaction={decorate}] (AR) to node[sloped,above,midway] {} (CR);

		\fill[red]  (AL) circle [radius=2pt]; 
		\fill[red]  (AR) circle [radius=2pt]; 
		\fill[red]  (BL) circle [radius=2pt]; 
		\fill[red]  (BR) circle [radius=2pt]; 
		\fill[red]  (CL) circle [radius=2pt]; 
		\fill[red]  (CR) circle [radius=2pt]; 
		
		\node at  (1,0.75) {$(d^{1}R)_{\sABC}$};

		\draw[-latex] (-5.5,1.5) to node[above,midway]{\small $d_{\cpx{C}}^{1}: \cpx{C}^{1}(\bdens{\rho}_{\sABC}) \rightarrow \cpx{C}^{2}(\bdens{\rho}_{\sABC})$} (-1.5,1.5);
	\end{tikzpicture}
\end{center}
Here the coboundary $d^{1}R$ is specified by its value on the only 2-face $F_{\sABC}$:
\begin{align*}
	(d^{1}_{\cpx{C}} R)_{\sABC} = \mathrm{proj}_{\sABC} \left[1_{\sA} \otimes R_{\sBC} - \Sigma_{\sABC, \sB} \left(R_{\sAC} \otimes 1_{\sB}\right) + R_{\sAB} \otimes 1_{\sC} \right],
\end{align*}
where $\Sigma_{\sABC, \sB}$ is a reshuffling of tensor factors, and $\mathrm{proj}_{\sABC}$ consists of a projection to the subspace $\mathtt{B}(\dens{\rho}_{\sABC})$ by either right multiplication or compression by support projections. 

In \S\ref{sec:classes_and_correlations} we describe how non-zero classes in multipartite GNS or commutant cohomology consist of (equivalence classes) of tuples of operators that exhibit non-local correlations indicative of non-factorizability (a generalization of the story for bipartite GNS cohomology).
When $k \leq N-2$, elements of $\ker(d^{k}_{\cpx{C}})$---a.k.a. \textit{$k$-cocycles}---are those sections over the $k$-skeleton that satisfy linear relations of the form
\begin{align}
	\sum_{l = 0}^{k+1} (-1)^{l} \underline{R_{\partial_{l}V}} \sim_{V}  0
	\label{eq:cocycle_relation_intro}
\end{align}
for each $k$-face $F_{V}$, where:
\begin{itemize}
	\item $R_{\partial_{l}V} \in \mathtt{B}(\bdens{\rho}_{\partial_{l}V})$ denotes the value of the cochain $R$ on the $l$th boundary face $F_{\partial_{l} V}$ of $F_{V}$: Letting $V(l)$ denote the $l$th element of $V$, then $\partial_{l} V = V \backslash V(l)$;

	\item $\underline{R_{\partial_{l}V}}$ is the lift of $R_{\partial_{l}V}$ to an element of $\algebra{\hilb_{V}}$: Letting $V(l)$ denote the $l$th element of $V$ and $\partial_{l} V = V \backslash V(l)$ be the $l$th face of, $\underline{R_{\partial_{l}V}} := \Sigma_{V, l} \left(R_{\partial_{l} V} \otimes 1_{V(l)} \right)$, where  $V(l)$ is the $l$th element of $V$,\, $\partial_{l} V = V \backslash V(l)$, and $\Sigma_{(V,l)}$ is a reshuffling of tensor factors;
	
	\item $\sim_{V}$ is the equivalence relation on operators in $\algebra{\hilb_{V}}$ defined by: $a \sim_{V} b$ if $\mathrm{proj}_{V}a = \mathrm{proj}_{V} b$ for $a,\, b, \in \algebra{\hilb_{V}}$.
		(Once again, $\mathrm{proj}_{V}$ is an appropriate projection onto the subspace $\mathtt{B}(\dens{\rho}_{V})$.)
\end{itemize}
The relation \eqref{eq:cocycle_relation_intro} is indicative of possible non-trivial, non-local correlations between distinct tensor factors.
This is most evident in the bipartite situation, where the cocycle condition for $0$-chains reduces to
\begin{align}
	R_{\sA} \sim_{\sAB} R_{\sB}.
	\label{eq:bipartite_relation_intro}
\end{align}
However, there are trivial solutions to \eqref{eq:cocycle_relation_intro} that we are not interested in if we are seeking indicators of non-factorizability.
For example: if we are studying the GNS complex for some bipartite density state $\bdens{\rho}_{\sAB} = (\hilb_{\sA}, \hilb_{\sB}, \dens{\rho}_{\sAB})$, then it follows from our discussion above that the relation \eqref{eq:bipartite_relation_intro} is equivalent to a saturation of covariance condition:
\begin{align}
	\Cov(R_{\sA}, R_{\sB}) = \Var_{\sA}(R_{\sA}) = \Var_{\sB}(R_{\sB})
	\label{eq:covariance_condition_intro}
\end{align}
on pairs of operators $(R_{\sA}, R_{\sB}) \in \mathtt{GNS}(\dens{\rho}_{\sA}) \times \mathtt{GNS}(\dens{\rho}_{\sB})$.
There is a trivial solution to \eqref{eq:covariance_condition_intro} given by $R_{\sX} = \supp_{\sX}$ (the support projection of $\dens{\rho}_{\sX}$), this is the unique solution for which the covariance and variances in \eqref{eq:covariance_condition_intro} vanish.
The trivial solution is also the only solution that exists for the factorizable bipartite density state $(\hilb_{\sA}, \hilb_{\sB}, \dens{\rho}_{\sA} \otimes \dens{\rho}_{\sB})$ constructed from the reduced density states $\dens{\rho}_{\sA}$ and $\dens{\rho}_{\sB}$ of $\dens{\rho}_{\sAB}$, and it descends in a precise way to a solution of $\eqref{eq:covariance_condition_intro}$ for $\bdens{\rho}_{\sAB}$.

In general, given an arbitrary multipartite density state $\bdens{\rho}_{P}$, one can construct an associated ``fully factorizable" multipartite density state from the reduced density states $\dens{\rho}_{\{p \}},\, p \in P$ of $\bdens{\rho}_{P}$.
Solutions to \eqref{eq:cocycle_relation_intro} for this associated factorizable state descend to solutions to \eqref{eq:cocycle_relation_intro} for $\bdens{\rho}_{P}$, but such solutions should be considered trivial as the full factorizable state cannot have any associated non-trivial, non-local correlations along the boundaries of any face $F_{V}$. 
As it so happens the subspace of trivial solutions is equivalent to the subspace $\image(d^{k-1}_{\cpx{C}})$ when $k \leq N-2$.
As a result, when $k \leq N-2$ we have the identification: 
\begin{align*}
	H^{k} \left[ \cpx{C}(\bdens{\rho}_{P}) \right] &= \ker(d_{\cpx{C}}^{k})/\image(d_{\cpx{C}}^{k-1})\\[5pt]
                                                   &= \left\{\text{\parbox{15.5em}{Sections of $F_{T} \mapsto \mathtt{B}(\bdens{\rho}_{T})$ over the $k$-skeleton of the $(N-1)$-simplex that exhibit \textit{possible} non-local correlations along each face: i.e.\ solutions to \eqref{eq:cocycle_relation_trivial}.}} \right\} \Big / \left\{\text{\parbox{15em}{Trivial solutions to \eqref{eq:cocycle_relation_trivial}: i.e.\ those solutions that do not encode correlations due to non-factorizability of $\bdens{\rho}_{P}$ along the boundaries of any face.}} \right\}.\\
  \end{align*}
As described in \S\ref{sec:interpretation}, when $k \in\{0, \cdots, N-2 \}$, a representative of a non-zero classes in $H^{k}[\cpx{C}(\bdens{\rho}_{P})]$ is identifiable with a section of $F_{T} \rightarrow \mathtt{B}(\bdens{\rho}_{P})$ that exhibits non-trivial, non-local correlations along the boundary of at least one $(k+1)$-face: a property indicative of non-factorizability of $\bdens{\rho}_{P}$.
Fig.~\ref{fig:cocycles_representation_intro} provides an explicit 1-cocycle representative of a non-trivial GNS cohomology class for the tripartite GHZ state.   

From another perspective: we can think of a $k$-cochain as an assignment to each size $(k+1)$ subset of tensor factors a $(k+1)$-body operator\footnote{An operator that can only act non-trivially on $(k+1)$-tensor factors.} implementable or detectable by an observer with access to that collection of tensor factors.
The operators in such an assignment always live inside an appropriate version of the building blocks we have constructed. 
In particular, when considering GNS cohomology, the operators in such an assignment live inside of the GNS building blocks, and can be thought of as canonical representatives of right essential equivalences class of operators.
A $k$-cocycle ($0 \leq k \leq N-2$) with non-trivial cohomology class is then a $k$-cochain where there is at least one size $k+2$ subset of tensor factors $V \subseteq P$ such that an observer with access to the tensor factors of $V$ can see correlations among the $(k+2)$, $(k+1)$-body operators that arise from the assignments of the $k$-cochain to the size $k+1$ subsets that are given by removing a single element of $V$.

We do not offer as satisfying of an interpretation for the cohomology component in degree $N-1$: the highest degree component that might be non-trivial.
However, as described at the end of \S~\ref{sec:highest_multipartite_cohomology}, non-trivial representatives in this component consist of $N$-body operators that \textit{cannot} be written as a linear combination of lifts of $(N-1)$-body operators up to the equivalence relation $\sim_{V}$.
The ability to write a $N$-body operator as such a linear combination is indicative of correlations; so we should expect the dimension for $H^{N-1}\left[\cpx{C}(\bdens{\rho}_{P})\right]$ to be larger for multipartite density states that are more strongly correlated. 
This is, in fact, the case for the GNS cohomology of a pure bipartite state where the dimension of the first cohomology component is coarse measure of how far the state is from maximal entanglement (c.f.\ \eqref{eq:H1_pure_bipartite_intro} above).

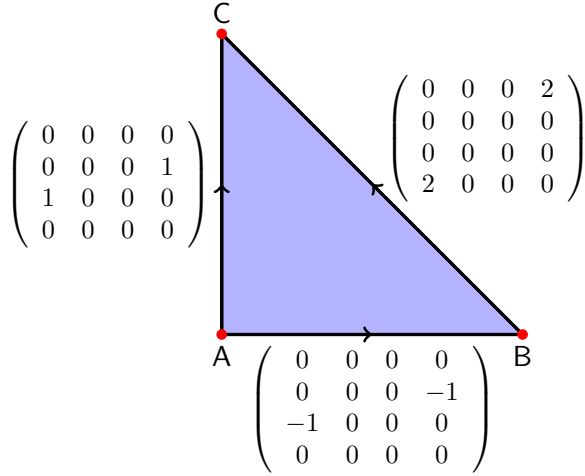
\begin{figure}
	\centering
	\begin{tikzpicture}[very thick,decoration={markings, mark=at position 0.5 with {\arrow{>}}}]
		\coordinate [label=below:$\sA$] (A) at (0,0);
		\coordinate [label=below:$\sB$] (B) at (4,0);
		\coordinate [label=above:$\sC$] (C) at (0,4);

		\draw[fill=blue!30] (A)--(B)--(C)--cycle;


		\draw[postaction={decorate}] (A) to node[sloped,below,midway] {\small $\left(
					\begin{array}{cccc}
						0 & 0 & 0 & 0 \\
						0 & 0 & 0 & -1 \\
						-1 & 0 & 0 & 0 \\
						0 & 0 & 0 & 0 \\
					\end{array}
			\right)$} (B);

		\draw[postaction={decorate}] (B) to node[above,midway,xshift=4em,yshift=-1em] {\small $\left(
					\begin{array}{cccc}
						0 & 0 & 0 & 2 \\
						0 & 0 & 0 & 0 \\
						0 & 0 & 0 & 0 \\
						2 & 0 & 0 & 0 \\
					\end{array}
			\right)$} (C);

		\draw[postaction={decorate}] (A) to node[left,midway] {\small $\left(
					\begin{array}{cccc}
						0 & 0 & 0 & 0 \\
						0 & 0 & 0 & 1 \\
						1 & 0 & 0 & 0 \\
						0 & 0 & 0 & 0 \\
					\end{array}
			\right)$} (C);

		\fill[red]  (A) circle [radius=2pt]; 
		\fill[red]  (B) circle [radius=2pt]; 
		\fill[red]  (C) circle [radius=2pt]; 
	\end{tikzpicture}
	\caption{1-cocycle representative of a non-trivial GNS cohomology class for the GHZ state $\frac{1}{\sqrt{2}} \left(\ket{0}^{\otimes 3} + \ket{1}^{\otimes 3} \right)$.
	The matrices are defined with respect to the ordered basis $(\ket{0}, \ket{1})$.
A table of generators for the GNS cohomology components of the tripartite GHZ and W-states is provided in App.~\ref{app:ghz_w_cohomology_generators}. \label{fig:cocycles_representation_intro}}
\end{figure}
A brute force computation of GNS or commutant cohomologies involves computing reduced density states associated to all subsystems.
If our initial state is pure and we are given the collection of all such reduced density states, one can in principle work out the factorizability properties of the original state.
In this sense, using cohomology (computed via brute force) to detect factorizability might not seem advantageous from a computational perspective.
On the other hand, the cohomologies provide a natural way of encoding all the partial trace information into one mechanism that allows us to answer factorizability questions more easily.
Moreover, when a state \textit{fails} to be factorizable, the actual cohomology components provide correlated tuples of operators that encode the obstructions to being factorizable; Poincar\'{e} polynomials can be used as quantitative measure of how badly certain factorizability fails (i.e.\ an entanglement measure for pure states).

As described in \S\ref{sec:multipartite_properties}, the multipartite complexes obey generalizations of the basic properties enjoyed by their bipartite specializations: once again they are equivariant under local invertible transformations, making their associated Poincar\'{e} polynomials local invertible invariants. 
Moreover, the multipartite complexes of a tensor product of multipartite density states (Def.~\ref{def:multipartite_state_tensor}) is given by the tensor product of chain complexes up to a shift in degree. 
\begin{theorem*}{(C.f.\ Thm.~\ref{thm:cochain_fact}) Cochains for Factorizable States}{}
	Let $\bdens{\rho}_{P}$ and $\bdens{\varphi}_{Q}$ be $N$ and $M$ partite density states, then there are canonical isomorphisms:
	\begin{align*}	
		\cpx{G}(\bdens{\rho}_{P} \otimes \bdens{\varphi}_{Q}) &\cong \left( \cpx{G}(\bdens{\rho}_{P}) \otimes \cpx{G}(\bdens{\varphi}_{Q}) \right)[1];\\
		\cpx{E}(\bdens{\rho}_{P} \otimes \bdens{\varphi}_{Q}) &\cong \left(\cpx{E}(\bdens{\rho}_{P}) \otimes \cpx{E}(\bdens{\varphi}_{Q}) \right)[1].
	\end{align*}
\end{theorem*}
Thus, combined with the K\"{u}nneth theorem for cochain complexes over a field of characteristic zero, the cohomology of a tensor product of states is the tensor product of states up to a shift.
Stated in terms of Poincar\'{e} polynomials
\begin{align*}
	P_{\cpx{G}}(\bdens{\rho}_{P}) &:= \sum_{k = 0}^{|P|-1} y^{k} \dim H^{k} \left[ \cpx{G}(\bdens{\rho}_{P}) \right] \in \mathbb{Z}[y],\\
	P_{\cpx{E}}(\bdens{\rho}_{P}) &:= \sum_{k = 0}^{|P|-1} y^{k} \dim H^{k} \left[ \cpx{E}(\bdens{\rho}_{P}) \right] \in \mathbb{Z}[y].
\end{align*}
we have the following corollary.
\begin{corollary*}{(C.f.\ Cor.~\ref{cor:poincare_factorization})}{}
	Let $\bdens{\rho}_{P}$ and $\bdens{\varphi}_{Q}$ be multipartite density states, then we have a factorization of Poincar\'{e} polynomials:
	\begin{align*} 
		P_{\cpx{G}}(\bdens{\rho}_{P} \otimes \bdens{\varphi}_{Q}) &= y P_{\cpx{G}}(\bdens{\rho}_{P}) P_{\cpx{G}}(\bdens{\varphi}_{Q}),\\
		P_{\cpx{E}}(\bdens{\rho}_{P} \otimes \bdens{\varphi}_{Q}) &= y P_{\cpx{E}}(\bdens{\rho}_{P}) P_{\cpx{E}}(\bdens{\varphi}_{Q}).
	\end{align*}
\end{corollary*}

Given the behavior of cohomologies under tensor product of multipartite density states, it becomes relatively straightforward to demonstrate show that cohomologies of the multipartite complexes can detect the presence of factorizability over the set of tensor factors.
\begin{corollary*}{(C.f.\ Cor.~\ref{cor:fully_fact_to_cohom})}{}
	Let $\bdens{\rho}_{P}$ be a pure, fully factorizable $N$-partite density state, then:
	\begin{itemize}
		\item The only non-vanishing cohomology component of $\cpx{G}(\bdens{\rho}_{P})$ is in degree $N-1$ with
			\begin{align*}
				\dim H^{N-1}[\cpx{G}(\bdens{\rho}_{P})] = \prod_{p \in P} (\dim \hilb_{p} -1);
			\end{align*}

		\item The complex $\cpx{E}(\bdens{\rho}_{P})$ has vanishing cohomology: $\dim H^{k}[\cpx{E}(\bdens{\rho}_{P})] \equiv 0$ for all $k \in \mathbb{Z}$.
	\end{itemize}
\end{corollary*}
This a corollary of Thm.~\ref{thm:support_fact_multipartite_cohomologies}: a general statement for support factorizable mixed density states.
One can also try to detect factorizability with respect to an arbitrary partition.
Given a multipartite density state $\bdens{\rho}_{P}$ and a partition $\lambda$ of $P$ of length $L$, we can construct an $L$-partite ``coarsening" $\lambda[\bdens{\rho}_{P}]$ of the $|P|$-partite density state $\bdens{\rho}_{P}$: a multipartite density state over the set of tensor factors labelled by the $L$ components of the partition $\lambda$.  
Thus, given a multipartite density state over a set of tensor factors $P$, one can compute GNS/commutant cohomologies of any coarsening associated to a given partition of $P$.
\begin{theorem*}{(C.f.\ Thm.~\ref{thm:lambda_fact_bipartite_scan})}{}
	Let $\bdens{\rho}_{P}$ be a pure $N$-partite density state and $\lambda$ a partition of $P$ of length $\geq 2$.
	\begin{enumerate}
		\item If there exists a coarsening $\eta \geq \lambda$ such that $H^{k}[\cpx{G}(\eta[\bdens{\rho}_{P}])]$ or $H^{k}[\cpx{E}(\eta[\bdens{\rho}_{P}])] \neq 0$ for some $k < |\eta|-1$, then $\bdens{\rho}_{P}$ is $\lambda$-entangled.

		\item $\bdens{\rho}_{P}$ is $\lambda$-entangled if and only if there exists a length 2 coarsening $\eta \geq \lambda$ such that $H^{0}[\cpx{G}(\eta[\bdens{\rho}_{P}])] \neq 0$.

		\item $\bdens{\rho}_{P}$ is $\lambda$-entangled if and only if there exists a length 2 coarsening $\eta \geq \lambda$ such that $H^{0}[\cpx{E}(\eta[ \bdens{\rho}_{P} ])] \neq 0$.
	\end{enumerate}
\end{theorem*}
The first statement is just an application of the previous corollary, and the latter two statements follow from the fact that a state is factorizable with respect to a partition if and only if it is factorizable with respect to every bipartite coarsening of that partition; moreover, bipartite cohomologies provide necessary and sufficient conditions for the presence of factorizability.
Hence, if we are trying to answer questions about factorizability with respect to a given partition, then it suffices to focus on GNS or commutant cohomologies of bipartite coarsenings of that partition.
On the other hand, if one is interested in factorizability with respect to \textit{any} partition, one can compute cohomologies for the finest partition, keeping the following theorem in mind.
\begin{theorem*}{(C.f.\ Thm.~\ref{thm:pure_nonvanishing_complete_entanglement})}{}
	If $\bdens{\rho}_{P}$ is $\lambda$-(support) factorizable for some $|\lambda| = L$, then $H^{k}[\cpx{G}(\bdens{\rho}_{P})] =0$ and $H^{k}[\cpx{E}(\bdens{\rho}_{P})] = 0$ for all $k \leq L-2$.
\end{theorem*}
Equivalently, if $\bdens{\rho}_{P}$ is $\lambda$-factorizable for some $|\lambda|=L$, then the Poincar\'{e} polynomials 
\begin{align*}
	P_{\cpx{G}}(\bdens{\rho}_{P}) &:= \sum_{k = 0}^{|P|-1} y^{k} \dim H^{k} \left[ \cpx{G}(\bdens{\rho}_{P}) \right] \in \mathbb{Z}[y],\\
	P_{\cpx{E}}(\bdens{\rho}_{P}) &:= \sum_{k = 0}^{|P|-1} y^{k} \dim H^{k} \left[ \cpx{E}(\bdens{\rho}_{P}) \right] \in \mathbb{Z}[y].
\end{align*}
are divisible by $y^{L-2}$.
When $\bdens{\rho}_{P}$ is pure we conjecture that the converse is true: i.e.\ we are guaranteed that $\lambda$-factorizability with respect to some $\lambda$ is detected by a non-trivial cohomology component.
\begin{conjecture*}{(C.f.\ Conj.~\ref{conj:complete_entanglement_pure_cohomologies})}{}
	Suppose $\bdens{\rho}_{P}$ is a pure multipartite density state.
	\begin{enumerate}
		\item $P_{\cpx{G}}(\bdens{\rho}_{P})$ is divisible by $y^{k} \Longleftrightarrow \bdens{\rho}_{P}$ is $\lambda$-factorizable for some $|\lambda| = k+2$.
		
		\item $P_{\cpx{E}}(\bdens{\rho}_{P})$ is divisible by $y^{k} \Longleftrightarrow \bdens{\rho}_{P}$ is $\lambda$-factorizable for some $|\lambda| = k+2$.
	\end{enumerate}
\end{conjecture*}
(This is the contrapositive of Conj.~\ref{conj:complete_entanglement_pure_cohomologies}, which is stated in terms of the notion of ``complete-$k$-entanglement" introduced in Def.~\ref{def:pure_multipartite_entanglement}.)
So, roughly speaking, non-zero elements of $k$th cohomology components---representable by $\binom{|P|}{k+1}$-tuples of $(k+1)$-body operators---can be thought of as obstructions to the factorizability of $\bdens{\rho}_{P}$ with respect to all length $\leq (k+2)$ partitions.

\subsubsection{Tripartite Computational Results}
\S\ref{sec:W_vs_GHZ} is a testing ground for multipartite cohomologies.
We begin with tripartite states noting that, for qubit systems and tripartite pure states, it is known \cite{Dur} that there are six distinct equivalence classes of states under local invertible transformations (known as \textit{SLOCC equivalence classes}): the class corresponding to factorizable states, three bipartite entangled classes (corresponding to the three possible bipartite partitions of three elements), the class represented by the GHZ state
\begin{align*}
	\frac{1}{\sqrt{2}}\left( \ket{000} + \ket{111} \right) \in \mathrm{span}_{\mathbb{C}} \{ \ket{0}, \ket{1} \}^{\otimes 3},
\end{align*}
and the class represented by the W state 
\begin{align*}
	\frac{1}{\sqrt{3}}\left( \ket{001} + \ket{010} + \ket{100} \right) \in \mathrm{span}_{\mathbb{C}} \{ \ket{0}, \ket{1} \}^{\otimes 3}.
\end{align*}
In this section we state explicit computations of associated cohomologies of the tripartite W-state and GHZ-states as well as higher generalizations.
Secondly, the tripartite mutual informations of both the W-state and GHZ state are vanishing: meaning that, if there is any shared information between all three tensor factors, the (tripartite) mutual information fails to detect it.
On the other hand, the Poincar\'{e} polynomials associated to these states, are non-trivial: 
\begin{align*}
	P_{\cpx{G}}(\boldsymbol{\mathrm{GHZ}}) = P_{\cpx{G}}(\boldsymbol{\mathrm{W}}) = 1 + 6 y.
\end{align*}
Because the polynomials are identical, GNS Poincar\'{e} polynomials cannot distinguish between the SLOCC equivalence classes of the W and GHZ states. 
Nevertheless, the presence of a non-vanishing cohomology component in degree 0 indicates that (modulo shifts by 3-tuples of identity operators) there is a one-dimensional family of $3$-tuples of operators encapsulating shared correlations/information among all three tensor factors, and (modulo coboundaries) there is a six-dimensional family of $3$-tuples of operators encapsulating correlations between pairs of tensor factors.
These are representative of the ``shared information" that the tripartite mutual information fails to detect.

Moreover, despite the fact that one cannot distinguish the SLOCC equivalence class of the GHZ state from the W-state via Poincar\'{e} polynomials of GNS cohomology, Poincar\'{e} polynomials associated to commutant cohomologies \textit{do} distinguish the SLOCC classes of these two states:
\begin{align*}
	P_{\cpx{E}}(\boldsymbol{\mathrm{GHZ}}) &= 7 + 7 y,\\
	P_{\cpx{E}}(\boldsymbol{\mathrm{W}}) &= 3 + 3 y.
\end{align*}
In \S\ref{sec:generalized_W_and_GHZ} we also formulate conjectures about the form of Poincar\'{e} polynomials of $(N \geq 3)$-partite versions of the GHZ and W-states based upon software aided computations.

\subsubsection{The State Index and a Path Toward a Categorification of Mutual Information}
In \S\ref{sec:categorification} we take a diversion to outline how these (co)chain complexes can be thought of as steps toward the goal of a categorification of entropy/mutual information.\footnote{One can view this section as an introductory motivation for a forthcoming paper.}
After some motivation, a quantity called the \textit{state index} is introduced: to a multipartite density state $\bdens{\rho}_{P}$,  the state index $\mathfrak{X}(\bdens{\rho}_{P})$ is a function on $\mathbb{C}^{3}$ that is valued in the ring of polynomials $\mathbb{C}[w]$ in a formal variable $w$:\footnote{Although the state index and its properties are very precisely defined, most of \S\ref{sec:categorification} is spent on its motivation. 
The motivational discussion is largely schematic to keep with the spirit of avoiding use of sophisticated technology.}
\begin{equation*}
	\begin{array}{lccl}
		\mathfrak{X}(\bdens{\rho}_{P}): & \mathbb{C}^{3} & \longrightarrow & \mathbb{C}[w]\\
		{}                                        & (\alpha,q,r) & \longmapsto & w^{|P|} \sum_{\emptyset \subseteq T \subseteq P} (-1)^{|T|} \dim(\hilb_{T})^{\alpha} \left[ \Tr(\dens{\rho}_{T})^{q} \right]^{r},
	\end{array}
\end{equation*}
where, $\hilb_{T} := \bigotimes_{t \in T} \hilb_{t}$. The formal variable $w$ can be thought of as a book-keeping device that keeps track of the number of tensor factors under consideration, although for the purposes of this paper it suffices to evaluate it at, e.g. $w = \pm 1$. 
The state index has the advantage of having properties one would expect from an ``Euler characteristic" of some associated (non-commutative) geometry associated to each density state.
\begin{theorem*}{(C.f.\ Thm.~\ref{thm:index_properties})}{}
	Let $\bdens{\rho}_{P}$ and $\bdens{\varphi}_{Q}$ denote multipartite density states.
	\begin{enumerate}
		\item For any fixed $w \in \mathbb{C}$, the index $\mathfrak{X}^{w}(\bdens{\rho}_{P})$ is an entire function in in the parameters $\alpha,\,q,$ and $r$.

		\item The state index is invariant under local invertible transformations of multipartite states. 

		\item $\mathfrak{X}^{w}(\bdens{\rho}_{P} \otimes \bdens{\varphi}_{Q}) = \mathfrak{X}^{w}(\bdens{\rho}_{P}) \mathfrak{X}^{w}(\bdens{\varphi}_{Q})$.
		
		\item For any $0 \leq i < j \leq |P|-1$, we have 
			\begin{align*}	
				\mathfrak{X}^{w}(\bdens{\rho}_{P}) = -w \left[ \mathfrak{X}^{w}(\bdens{\rho}_{\partial_{i} P}) + \mathfrak{X}^{w}(\bdens{\rho}_{\partial_{j} P}) - \mathfrak{X}^{w}(\lambda_{ij} [\bdens{\rho}_{P}]) \right],
		\end{align*}
		where $\bdens{\rho}_{\partial_{k}P}$ is the reduction of $\bdens{\rho}_{P}$ to a $(|P|-1)$-partite density state after tracing over the $(k+1)$th tensor factor in $P$, and $\lambda_{ij} [\bdens{\rho}_{P}]$ is the $(|P|-1)$-partite coarsening one obtains after merging together the $(i+1)$th and $(j+1)$th elements of $P$ into a single tensor factor.
	\end{enumerate}
\end{theorem*}

One can recover the Euler characteristics of complexes defined in this paper by studying the $q \rightarrow 0$ limit of the state index; on the other hand, multipartite mutual information emerges from a study of the $q \rightarrow 1$ limit (more precisely, by studying derivatives, or $q$-derivatives in this limit).
The situation is summarized in the diagram below.

\begin{center}
	\begin{tikzpicture}
		\node at (0,0) (gi) {\parbox{21em}{\centering State Index:\newline \mbox{\small $\mathfrak{X}^{w=1}_{\alpha,q,r}(\bdens{\rho}_{P}) := \sum_{\emptyset \subseteq T \subseteq P} (-1)^{|T|} \dim(\hilb_{T})^{\alpha} \left[ \Tr(\dens{\rho}_{T})^{q} \right]^{r}$}}}; 

		\node at (-4.4,-4) (tsmi) {\parbox{20em}{\centering Tsallis/R\'{e}nyi Deformed Mutual Information: \newline \mbox{\small $I_{q,r}(\bdens{\rho}_{P}) = \sum_{T \subseteq P} (-1)^{|T|-1} S_{q,r}^{\mathrm{TR}}(\dens{\rho}_{T})$}}};

		\node at (-4.4,-7.4) (mi) {\parbox{19em}{\centering Mutual Information:\newline \mbox{ \small $I(\bdens{\rho}_{P}) = \sum_{T \subseteq P} (-1)^{|T|-1} S^{\mathrm{vN}}(\dens{\rho}_{T}) \in \mathbb{R}$}}};

		\node at (4.4, -4) (ranks) {\parbox{18em}{ \centering \mbox{\small $\sum_{\emptyset \subseteq T \subseteq P} (-1)^{|T|} \dim(\hilb_{T})^{\alpha} \rank(\dens{\rho}_{T})^{r}$}}};

		\node at (2.4,-7.4) (gnseulerchar) {$-\chi \left[\cpx{G}(\bdens{\rho}_{P}) \right] \in \mathbb{Z}$}; 

		\node at (6.4,-7.4) (comeulerchar) {$-\chi \left[\cpx{E}(\bdens{\rho}_{P}) \right] \in \mathbb{Z}$};

		\draw[->, >=open triangle 45, decorate, thick] (gi) to node[sloped,above,midway]{$\alpha \rightarrow 0$} node[sloped,below,midway]{$\times \frac{1}{r(q-1)}$} (tsmi);

		\draw[->, >=open triangle 45, decorate, thick] (tsmi) to node[left,midway]{$q \rightarrow 1$}  (mi);

		\draw[->, >=open triangle 45, decorate, thick] (gi) to node[sloped,above,midway]{$q \rightarrow 0$} (ranks);

		\draw[->, >=open triangle 45, decorate, thick] (ranks) to node[sloped,above,midway]{$(\alpha,r) = (1,1)$} (gnseulerchar);

		\draw[->, >=open triangle 45, decorate, thick] (ranks) to node[sloped,above,midway]{$(\alpha,r) = (0,2)$} (comeulerchar);
	\end{tikzpicture}
\end{center}
In \S\ref{sec:W_vs_GHZ} the state index is computed for (generalized) W and GHZ states.

\subsection{Related Work}
This work is closely related to that of Baudot and Bennequin \cite{bb:homent} (J.P.\ Vigneaux provides an excellent detailed exposition in \cite{vigneaux}), who constructed cochain complexes of functions on spaces of probability measures such that mutual informations (and their Tsallis $q$-deformations, which also appear naturally in our story) arise as generators of the first cohomology component.
The technology outlined here, however, is associated to a fixed measure rather than the space of measures---hinting at an interpretation of mutual informations as Euler characteristics rather than cocycles.
Yet, the two theories are undoubtedly intimately related; an approach via the perspective of obstruction theory and classifying stacks might help formalize a precise connection.

A large part of the ideas behind the categorification of mutual information are inspired by the work of Baez, Fritz, and Leinster \cite{bfl:paper, bfl:nlab1} who realize entropy of measures on a finite set as continuous functors out of a suitable topological category of such measures. 
There are also certainly connections with the work of Drummond-Cole, Park, and Terilla \cite{terilla:1,terilla:2,terilla:3,terilla:4} who approach (non-commutative) probability theory from an $A_{\infty}/L_{\infty}$-perspective.
In fact, one sophisticated version of our (co)chain complexes actually admit the structure of a differential graded module for a differential graded algebra (at least when working with finite dimensional Hilbert spaces); hence, the resulting cohomology should be an $A_{\infty}$-module of some $A_{\infty}$-algebra.
Very little is known about these higher algebraic structures at the moment, but---drawing vague analogies with the way that Massey products can identify subtle linkage properties of knot complements in the three sphere---one might speculatively hope, for instance, that such higher structures can detect the Borromean-like entanglement properties of the famous tripartite Greenberger-Horne-Zeilinger (GHZ) state.

\subsection{Future Directions}
Beyond the search for new measures of entanglement and correlations, the author is compelled to mention some other possible future directions with a bias toward interests that overlap with some high energy physicists.

\begin{enumerate}
	\item \textbf{Link invariants in (Quantum) Chern-Simons with compact gauge group}:  Chern-Simons theory is a topological field theory specified by the data of compact Lie Group $G$ and a level $k \in H^{4}(BG,\mathbb{Z})$ \cite{dijkgraaf1990,Freed:CS}.
		Given a (framed) $N$-component link $L$ on the three sphere $S^{3}$, one can study its complement $\mathcal{L} := S^{3} \backslash N(L)$, where $N(L)$ is a tubular neighborhood of $L$ specified by the framing; the boundary of $\mathcal{L}$ is a disjoint union of $N$-tori.
		Chern Simons assigns to $\mathcal{L}$ a pure multipartite state $\ket{\mathcal{L}} \in \left(\hilb_{T^{2}}\right)^{\otimes N}$, where $\hilb_{T^2}$ is the (finite-dimensional) Hilbert space assigned to the two-torus.
		Entanglement properties of these states were studied in \cite{Balasubramanian:2016sro, Balasubramanian:2018por, Salton:2016qpp}. 
		By taking Poincar\'{e} polynomials of GNS/commutant cohomologies of such states, produce (framing-independent) link invariants (given by positive integer coefficient polynomials).
		Even for Abelian Chern-Simons (i.e.\ with $G = U(1)^{r}$) these polynomials are non-trivial.
		The relationship of these invariants to known link invariants remains unexplored.

	\item \textbf{A possible cohomological/geometric approach to a proof of Strong Subadditivity}: 
		Strong subadditivity of quantum entropies is a non-trivial statement, first proven by Leib and Ruskai \cite{lieb}.
		However, Ryu-Takayanagi formulae lead to a very simple geometric interpretation of quantum strong subadditivity; the caveat is that these formulae only hold for small class of states that have a holographic interpretation.
		Such an approach only holds for states that have a holographic interpretation \cite{Hirata:2006jx}.
		The homological techniques touched upon in this paper are not bound to this limitation.

	\item \textbf{The study of how cohomology varies in families of states}: any good geometric structure should be studied in families.
		A deeper understanding of the cohomological constructs of this paper would likely follow from a study of how cohomology varies with families of states.
		For instance: as mentioned below, if one varies mixed states within particular families, our cohomologies are ``piecewise constant", with possible jumps across loci where the ranks of the density states change.
		Although such a jumping behavior might be simple, it would be interesting to explore if it can be quantified.
		One might also wish to understand cohomologies of non-commutative families of density states \cite{moore:qmna}.
		Such an approach is particularly adapted to the $C^{*}/W^{*}$-algebraic origins of this work: a family over a topological space $X$, can be rephrased in terms of, for instance, modules for the commutative algebra of functions on $X$; it is not too far fetched to speculate that passing to non-commutative algebras might lead to the emergence of new phenomena and significant insight.

	\item \textbf{The study of cohomologies associated to completely positive maps}: following the discussion of \cite{moore:qmna}, the study of non-commutative families is the study of generalizations of the results here from density states---thought of as defining a completely positive map from a $C^{*}$-algebra into $\mathbb{C}$---to completely positive maps.
		The Kasparov bimodule is a generalization of the GNS module to completely positive maps; one can possibly generalize the results here using such a bimodule as a basic building block.
\end{enumerate}

\subsubsection*{Disclaimer}
When working with mixed density states we avoid the use of the word ``entanglement" as our homological computations are better adapted to the detection of the failure a very weak form of factorizability (or lack-thereof) called ``support factorizability".
For pure states, support factorizability and the usual notion of factorizability coincide.
However, support factorizability does not imply factorizability, or even separability. 
Nevertheless, the homological technology developed here is quite general; so it would not be surprising if future work demonstrated that it can be used to understand the entanglement of mixed states (defined as the failure of separability).

\subsection{Acknowledgements}
This work is the result of approximately a decade's worth of discussions with various colleagues. 
Detailed comments, suggestions, and encouragement by Gregory Moore influenced the majority of this paper.
Throughout Daniel Carney and Victor Chua were extremely helpful in pointing out relevant and related work. 
Brief discussions with Pierre Baudot, Daniel Bennequin, and Juan Pablo Vigneaux were helpful in inspiring possible future connections with their work.  
For discussions at the early stages I would like to thank: David Ben-Zvi, Yingyue Boretz, Aaron Fenyes, Daniel S. Freed, Andrew Neitzke, and E.C.G. Sudarshan; particularly Aaron Fenyes whose early involvement inspired a productive viewpoint in terms of multi-body correlated operators.
For discussions at later stages I thank Richard Derryberry, Gregory Moore, and John Terilla.
Jackie Shadlen and James Stokes were helpful in pointing out an interesting class of examples that might be incorporated in future versions.
My canine colleagues Thelma and Louise were helpful in pointing out relevant smells that could not be incorporated into the electronic version of this paper.
The author acknowledges support from the US Department of Energy under grant DOE-SC0010008.

\section{Terminology and Notation \label{sec:notation}}
In this section we introduce some basic (mostly standard) notation.
It can be skipped and returned to as needed.
The first occurrences of particularly important definitions throughout the paper will be highlighted in \newword{blue}.
\begin{enumerate}
	\item The algebra of operators Hilbert/vector space $\hilb$ will be denoted as $\algebra{\hilb}$.
		The space of trace class operators on $\hilb$ will be denoted as $\states{\hilb}$.
		Because we restrict our attention to finite dimensions in the majority of this paper the adjectives ``bounded" and ``trace-class" are not necessary.

	\item If $\hilb$ is a Hilbert space, then $\Dens(\hilb)$ is the (convex) set of density states on $\hilb$.  When $\hilb \neq 0$ this is the set of positive semidefinite trace 1 operators.  When $\hilb = 0$ we take $\Dens(\hilb) = \{0\}$.

	\item The symbols $\dens{\rho}$ and $\dens{\varphi}$, along with subscript decorations, will be reserved for density states.

	\item Elements of $\algebra{\hilb}$ will be denoted with lowercase Latin letters (e.g.\ $r,a$ and $b$); elements of $\states{\hilb}$ will be denoted with Greek letters (e.g.\ $\dens{\gamma},\, \dens{\alpha},\, \dens{\beta}$).
\end{enumerate}
The following are some linear algebraic remarks.
\begin{enumerate}
	\item Beyond \S\ref{sec:alg_and_states}, all vector spaces in this paper will be finite dimensional vector spaces over $\mathbb{C}$.
		With this in mind, we will occasionally say ``vector spaces" instead of ``finite dimensional complex vector spaces".

	\item For any two vector (or Hilbert) spaces $V$ and $W$, the vector space of homomorphisms from $V$ to $W$ will be denoted as $\Hom(V,W)$.

	\item The dual $\Hom(V,\mathbb{C})$ of a vector space $V$ will be denoted $\newmath{V^{\vee}}$.  (In the infinite dimensional generalizations of the results of this paper, we work with Banach spaces instead of vector spaces and $V^{\vee}$ is replaced with the continuous dual of bounded linear maps from $V$ to $\mathbb{C}$.)
\end{enumerate}

The reader is expected to have a passing familiarity with the notion of (co)chain complexes, but we review some relevant definitions here to set up appropriate notation.
Complexes and cochain complexes will be denoted in non-italicized font.
Instead of using the common technique of notationally distinguishing cohomological grading from homological grading via the placement of a bullet (e.g.\ $\cpx{C}^{\bullet}$ vs $\cpx{C}_{\bullet}$) we will write chain complexes in lowercase and cochain complexes in upper case.
\begin{enumerate}
	\item A \newword{cochain complex} (of complex vector spaces) $\cpx{C}$ is the data of a collection of vector spaces $(C_{n})_{n \in	\mathbb{Z}}$ (referred to as \textit{components}) and linear maps (referred to as \textit{coboundary maps}) $(d^{n}: \cpx{C}^{n} \rightarrow \cpx{C}^{n+1})_{n \in \mathbb{Z}}$ satisfying $d^{n + 1} \circ d^{n} = 0$ for all $n \in \mathbb{Z}$ (equivalently $\image(d^{n}) \leq \ker(d^{n+1})$ for all $n$).  As a notational shorthand, sometimes we will simply write:
		\begin{align*}
			\cpx{C} = \cdots
			\overset{d^{-3}}{\longrightarrow} \cpx{C}^{-2}
			\overset{d^{-2}}{\longrightarrow} \cpx{C}^{-1}
			\overset{d^{1}}{\longrightarrow} \cpx{C}^{0}
			\overset{d^{0}}{\longrightarrow} \cpx{C}^{1}
			\overset{d^{1}}{\longrightarrow} \cpx{C}^{2}
			\overset{d^{2}}{\longrightarrow} \cdots.
		\end{align*}
		Elements of $\ker(d^{k})$ are called \textit{$k$-cocycles}.

	\item A \newword{chain complex} (of complex vector spaces) $\cpx{c}$ is the	data of a collection of vector spaces $(\cpx{c}_{n})_{n \in \mathbb{Z}}$ and linear maps (referred to as \textit{boundary maps}) $\partial_{n}:\cpx{c}_{n} \rightarrow c_{n -1}$ such that $\partial_{n-1} \circ \partial_{n} = 0$ for all $n \in \mathbb{Z}$ (equivalently $\image(\partial_{n}) \leq \ker(\partial_{n-1})$ for all $n$).  As with cochain complexes, we will occasionally use the shorthand
		\begin{align*}
			\cpx{c} = \cdots \overset{\partial^{-2}}{\longleftarrow}
			\cpx{c}^{-2} \overset{\partial^{-1}}{\longleftarrow} \cpx{c}^{0}
			\overset{\partial{1}}{\longleftarrow} \cpx{c}^{0}
			\overset{\partial^{1}}{\longleftarrow} \cpx{c}^{1}
			\overset{\partial^{2}}{\longleftarrow} \cpx{c}^{2}
			\overset{\partial^{3}}{\longleftarrow} \cdots
		\end{align*}

	\item A (co)chain complex is \newword{bounded} if only finitely many components are non-trivial.  In this paper we will only consider bounded complexes.

	\item Given cochain complexes $\cpx{C}$ and $\cpx{D}$, a \newword{cochain morphism} $\sigma: \cpx{C} \rightarrow \cpx{D}$ is a collection of linear maps $(\sigma^{k}: \cpx{C}^{k} \rightarrow \cpx{D}^{k})_{k \in \mathbb{Z}}$ such that $\sigma^{k+1} \circ d^{k} = d^{k} \circ \sigma^{k}$ for all $k \in \mathbb{Z}$.  A chain morphism between chain complexes is defined similarly.  A (co)chain isomorphism is a cochain morphism that is componentwise an isomorphism.

	\item The cohomology of a cochain complex $\cpx{C}$ with coboundary $d$ is the graded vector space $H[\cpx{C}] = \bigoplus_{k \in \mathbb{Z}}H^{k}[\cpx{C}]$ whose $k$th component $H^{k}[\cpx{C}]$---referred to as the $k$th cohomology---is defined as
		\begin{align*}
			\newmath{H^{k}[\cpx{C}]} := \ker(d^{k})/\image(d^{k-1})
		\end{align*}
		similarly the homology of a chain complex $\cpx{c}$ is the graded vector space $H[\cpx{c}] = \bigoplus_{k \in \mathbb{Z}} H_{k}[\cpx{c}]$ whose $k$th component $H^{k}[\cpx{c}]$---referred to as the $k$th homology---is defined as
		\begin{align*}
			\newmath{H_{k}[\cpx{c}]} := \ker(\partial_{k})/\image(\partial_{k+1}).
		\end{align*}

	\item The \newword{Euler characteristic} of a cochain complex $\cpx{C}$ is defined as
		\begin{align*}
			\newmath{\chi(\cpx{C})} := \sum_{k \in \mathbb{Z}}(-1)^{k} \dim_{\mathbb{C}} H^{k} [\cpx{C}],
		\end{align*}
		although (as a	corollary of the rank-nullity theorem for vector spaces), one can calculate it directly from the dimensions of cochain complexes:
		\begin{align*}
			\chi(\cpx{C}) = \sum_{k \in \mathbb{Z}} (-1)^{k}\dim_{\mathbb{C}} \cpx{C}^{k}.
		\end{align*}
		The definition of the Euler characteristic of a chain complex is syntactically identical:
		\begin{align*}
			\newmath{\chi(\cpx{c})} := \sum_{k \in \mathbb{Z}} (-1)^{k} \dim_{\mathbb{C}} H_{k} [\cpx{c}] = \sum_{k \in \mathbb{Z}} (-1)^{k}\dim_{\mathbb{C}} \cpx{c}_{k}.
		\end{align*}

	\item The \newword{Poincar\'{e} polynomial} of a cochain complex $\cpx{C}$ is a polynomial (which we take to be in the variable ``$y$") defined as:
		\begin{align*}
			P_{\cpx{C}} := \sum_{k \in \mathbb{Z}}y^{k} \dim_{\mathbb{C}} H^{k} [\cpx{C}] \in \mathbb{Z}[y].
		\end{align*}
		with the definition for chain complexes once again defined by lowering indices.

\end{enumerate}
We also offer some remarks for the overly pedantic reader (or for the overly pedantic writer)
\begin{enumerate}
	\item  It is notationally and computationally convenient to work throughout with embedded subspaces	rather than abstract vector spaces.
		An \textit{embedded subspace} $\iota_{W}: W \hookrightarrow V$ of a vector space $V$ is a linear injection from $W$ into $V$.
		For any vector space $V$, we can think of it as an embedded subspace in a trivial way:  $\mathrm{id}_{V}: V \rightarrow V$.
		The tensor product of two embedded subspaces $\iota_{W}: W	\hookrightarrow V$  and $\iota_{Y}: Y \hookrightarrow X$ is the embedded subspace $\iota_{W} \otimes \iota_{Y}: W \otimes Y \hookrightarrow V \otimes X$.
		Every vector space in this paper (including cohomology components) should be secretly thought of as an embedded subspace.
		However, to prevent a proliferation of unnecessary notation, we will employ the	convenient (and widespread) abuse of conflating an embedded subspace with its image.

	\item The direct sum of vector spaces is denoted as $V \oplus W$ and the Cartesian product as $V \times W$.
		These two notions give precisely the same vector space, however when we write $V \oplus W$ we are emphasizing its properties as a categorical coproduct, and when we write $V \times W$ we are emphasizing its properties as a categorical product.
		Roughly speaking this means that $V \oplus W$ should be secretly thought of as a vector space along with the two maps $\iota_{V}: V \rightarrow V \oplus W$ and $\iota_{W}: W \rightarrow V \oplus W$ given by $\iota_{V}: v \mapsto v \oplus 0$ and $\iota_{W}: w \mapsto 0 \oplus w$.
		On the other hand, $V \times W$ should be secretly thought of as a vector space along with the two projection maps $V	\times W \rightarrow V$ and $V \times W \rightarrow W$.
\end{enumerate}

\section{Algebras and States \label{sec:alg_and_states}}
Let $\hilb$ be a Hilbert space; we allow $\hilb$ to be infinite dimensional in this section.
Denote the algebra of bounded endomorphisms (``operators'') on $\hilb$ as $\algebra{\hilb}$, the subset of self-adjoint bounded operators (``observables'') as $\algebra{\hilb}_{\mathrm{s.a.}} = \{a \in \algebra{\hilb}: a = a^{*} \}$, and the (convex) set of of density states (positive semidefinite operators with unit trace\footnote{If $\hilb$ is the trivial Hilbert space then ``unit trace" should be replaced with zero trace.}) on $\hilb$ as $\Dens(\hilb)$.
The trace supplies a pairing:
\begin{align*}
	\Dens(\hilb) \times  \algebra{\hilb}_{\mathrm{s.a.}}  &\longrightarrow \mathbb{R}\\
	(\dens{\rho},r)			         &\longmapsto     \Tr[\dens{\rho}r].
\end{align*}
However, it is often easier complexify things and work with $\algebra{\hilb}$---which might be thought of as the space of ``complexified observables" via the decomposition $a  = a^{R}+ i a^{I}$ into self-adjoint $a^{R}$ and $a^{I}$)---and the space of trace-class operators:\footnote{$\algebra{\hilb}$ is a $C^{*}$-algebra, in particular it is an associative $\mathbb{C}$-algebra.
The space of self-adjoint operators on the other hand do not have an associative algebra structure, but rather the structure of a Jordan $C^{*}$-algebra, a somewhat less popular notion.}
\begin{align*}
	\states{\hilb} := \{\dens{\gamma} \in \algebra{\hilb}: \Tr[\dens{\gamma}]< \infty \},
\end{align*}
which might be thought of as the space of ``complexified" density states either via the polar decomposition theorem, or a combination of the self-adjoint and Jordan decompositions: $\dens{\gamma} = \Tr(\dens{\gamma}) \left[ (\dens{\gamma}^{R}_{+} - \dens{\gamma}^{R}_{-}) + i(\dens{\gamma}^{I}_{+} - \dens{\gamma}^{I}_{-}) \right]$ for $\dens{\gamma}^{R}_{\pm},\, \dens{\gamma}^{I}_{\pm} \in \Dens(\hilb)$.
The trace pairing extends to a bilinear map:
\begin{align*}
	\states{\hilb} \times  \algebra{\hilb}  &\longrightarrow \mathbb{\mathbb{C}}\\
	(\dens{\gamma},r)	    &\longmapsto     \Tr[\dens{\gamma}r].
\end{align*}
If we equip $\algebra{\hilb}$ with the operator norm, and $\states{\hilb}$ with the trace norm, then we can use this trace pairing to define an isometric isomorphism (of Banach spaces)
\begin{align*}
	(-)^{\Tr}: \algebra{\hilb} & \overset{\sim}{\longrightarrow} \states{\hilb}^{\vee}\\
	r        & \longmapsto r^{\Tr} := (\Tr[(-) r]: \dens{\gamma} \mapsto \Tr[\dens{\gamma}r])
\end{align*}
where $(-)^{\vee}$ is denoting the continuous dual (bounded linear maps into $\mathbb{C}$ equipped with its usual norm).
On the other hand, the map
\begin{align*}
	\mathbb{E}_{-}: \states{\hilb} & \longrightarrow \algebra{\hilb}^{\vee}\\
	\dens{\gamma}  & \longmapsto (\mathbb{E}_{\dens{\gamma}}: r \mapsto \Tr[\dens{\gamma} r])
\end{align*}
is an isometric embedding, but is \textit{not} an isomorphism when $\hilb$ is infinite dimensional (elements in the image of this map are called normal linear functionals).

In this paper we will only work with finite dimensional $\hilb$; all subtle analytic conditions vanish and one can work purely in the realm of linear algebra.
In particular, all endomorphisms of $\hilb$ are trace class and bounded; hence, at the level of vector spaces, then we can write $\algebra{\hilb} = \End(\hilb) = \states{\hilb}$.
Secondly, the map $\mathbb{E}_{-}$ is also an isomorphism; giving an identification between $\End(\hilb)^{\vee}$ and $\End(\hilb)$.
Nevertheless, we will continue to write $\algebra{\hilb}$ and $\states{\hilb}$ and avoid using $\mathbb{E}_{-}$ as an isomorphism in order to make the generalizations to infinite dimensions obvious and distinguish when we are secretly thinking of things as ``operators" vs. ``complexified states''.

This paper is about quantum mechanics.
However, this notation might also allow some readers to guess at the correct classical, or mixed quantum-classical versions of the constructions here.
In general $\algebra{\hilb}$ can be replaced by a $W^{*}$-algebra (or von Neumann algebra with choice of a particular representation) and $\states{\hilb}$ with its predual (which are canonically embedded into a subspace of continuous linear functionals of the $W^{*}$ algebra in a representation there is an isometric surjection into $\states{\hilb}$ from the Banach space of trace class operators.
In the purely commutative situation we can replace $\algebra{\hilb}$ with the (commutative) algebra of
essentially bounded functions on some measurable space and $\states{\hilb}$ with the space of measures on that
measurable space (if we only care about measure spaces given by a finite collection of points, we can replace
$\algebra{\hilb}$ with the algebra of $\mathbb{C}$-valued functions on a finite set).

\section{Building Blocks \label{sec:building_blocks}}
Throughout the remainder of this paper $\hilb$ will always be a finite dimensional Hilbert space, $\newmath{\algebra{\hilb}}$ its algebra of endomorphisms, $\newmath{\states{\hilb}}$ the vector space of complexified density states (which is just the underlying vector space of the algebra of endomorphisms in finite dimensions), and $\newmath{\Dens(\hilb)}$ the (convex) set of density states on $\hilb$.
We begin with a definition critical to the construction of our (co)chain complexes.
\begin{definition}{}{}
	Let $\dens{\rho} \in \Dens(\hilb)$, then the \newword{support projection of $\dens{\rho}$}--denoted $\newmath{\supp_{\dens{\rho}}} \in \algebra{\hilb}$---is the orthogonal projection onto $\image(\dens{\rho}) \subseteq \hilb$.
\end{definition}
From the support projection we will build cochain complexes using following building blocks of vector subspaces of $\algebra{\hilb}$.
(The names of these building blocks will be justified in \S\ref{sec:interpretation}.)
\begin{definition}{}{GNS_Com_def}
	Let $\dens{\rho} \in \Dens(\hilb)$, then define the embedded subspaces.\footnote{For any algebra $A$ and element $x \in A$, we use the common notation $Ax := \{a x : a \in A \}$ and $y A x = \{y a x: a \in A\}$.}
	\begin{equation*}
		\begin{array}{lll}
			\newmath{\mathtt{GNS}(\dens{\rho})}  := & \supp_{\dens{\rho}}  \algebra{\hilb} 				       & \leq \algebra{\hilb}\\
			\newmath{\mathtt{Com}(\dens{\rho})}  := & \supp_{\dens{\rho}}  \algebra{\hilb} \supp_{\dens{\rho}} & \leq \algebra{\hilb}.
		\end{array}
	\end{equation*}
\end{definition}
As a somewhat more explicit description: note that a density state $\dens{\rho}$ supplies an orthogonal decomposition $\hilb \cong \mathcal{I} \oplus \mathcal{K}$ where $\mathcal{I}:= \image(\dens{\rho})$ and $\mathcal{K} := \ker(\dens{\rho}) = \image(\dens{\rho})^{\perp}$.  Using this decomposition we can identify the $\mathbb{C}$-algebra of endomorphisms $\algebra{\hilb}$ with the algebra of $2 \times 2$ block matrices of the form
\begin{align}
	r &=
	\begin{blockarray}{lcc}
		\mathcal{I} & \mathcal{K} & \\
		\begin{block}{(cc)r}
			r_{\mathcal{I} \mathcal{I}} & r_{\mathcal{I} \mathcal{K}} & \mathcal{I}\\
			r_{\mathcal{K} \mathcal{I}} & r_{\mathcal{K} \mathcal{K}} & \mathcal{K}\\
		\end{block}
	\end{blockarray}
	\label{eq:block_form}
\end{align}
where $r_{\mathcal{I} \mathcal{I}} \in \End(\mathcal{I}), r_{\mathcal{K} \mathcal{I}} \in \Hom(\mathcal{I}, \mathcal{K}),\, r_{\mathcal{I} \mathcal{K}} \in \Hom(\mathcal{K}, \mathcal{I})$ and $r_{\mathcal{K} \mathcal{K}} \in \End(\mathcal{K})$.
Then we have
\begin{align*}
	\mathtt{GNS}(\dens{\rho}) &= \left\{ \blockmat{*}{0}{*}{0}\right \} \leq \algebra{\hilb},\\
	\mathtt{Com}(\dens{\rho})  &= \left\{ \blockmat{*}{0}{0}{0} \right \} \leq \algebra{\hilb}.
\end{align*}
Note, furthermore, that there are canonical isomorphisms (of $\mathbb{C}$-vector spaces)
\begin{equation*}
	\begin{array}{lclcl}
		\mathtt{GNS}(\dens{\rho}) &\cong& \Hom[\image(\dens{\rho}), \hilb] &\cong& \hilb \otimes \image(\dens{\rho})^{\vee}\\
		\mathtt{Com}(\dens{\rho}) &\cong& \End[\image(\dens{\rho})] &\cong& \image(\dens{\rho}) \otimes \image(\dens{\rho})^{\vee}.
	\end{array}
\end{equation*}
However it is helpful (both computationally and notationally) to consider $\mathtt{GNS}(\dens{\rho})$ and $\mathtt{Com}(\dens{\rho})$ as embedded subspaces in $\algebra{\hilb}$.

The building blocks for chain complexes are defined as:
\begin{definition}{}{}
	Let $\dens{\rho} \in \Dens(\hilb)$,
	\begin{equation*}
		\begin{array}{lll}
			\newmath{\mathtt{gns}(\dens{\rho})}  := & \supp_{\dens{\rho}} \states{\hilb} 					 & \leq \states{\hilb}\\
			\newmath{\mathtt{com}(\dens{\rho})}  := & \supp_{\dens{\rho}} \states{\hilb} \supp_{\dens{\rho}} & \leq \states{\hilb}.
		\end{array}
	\end{equation*}
\end{definition}
In finite dimensions, $\algebra{\hilb} = \states{\hilb} = \End(\hilb)$ (as vector spaces), so we immediately have $\mathtt{com}(\hilb) = \mathtt{Com}(\hilb)$; however, as alluded to in \S\ref{sec:alg_and_states} this finite-dimensional accident is somewhat misleading in the context of interpretation and generalizability; so we continue to distinguish between our usage of $\mathtt{com}(\dens{\rho})$ and $\mathtt{Com}(\dens{\rho})$

Using the trace pairing, the building blocks for cochain complexes are canonically isomorphic to the duals of the building blocks for chain complexes.  Indeed, recall the map
\begin{align}
	(-)^{\Tr}: \algebra{\hilb} &\overset{\sim}{\longrightarrow} \states{\hilb}^{\vee}\\
	r        &\longmapsto (\dens{\gamma} \mapsto \Tr[\dens{\gamma}r]),
\end{align}
introduced in \S\ref{sec:alg_and_states}.  Its restrictions provide isomorphisms:\footnote{We can also use the map $\mathbb{E}_{-}$ defined in \S\ref{sec:alg_and_states} to get maps $\mathtt{gns}(\dens{\rho}) \rightarrow \mathtt{GNS}(\dens{\rho})^{\vee}$ and $\mathtt{com}(\dens{\rho}) \rightarrow \mathtt{Com}(\dens{\rho})^{\vee}$.  In finite dimensions such maps are isomorphisms; in infinite dimensions they are not isomorphisms, but (isometric) embeddings.}
\begin{align}
	(-)^{\Tr}|_{\mathtt{GNS}(\dens{\rho})}: \mathtt{GNS}(\dens{\rho}) &\overset{\sim}{\longrightarrow} \mathtt{gns}(\dens{\rho})^{\vee}
	\label{eq:GNS_dual_iso}
\end{align}
and
\begin{align}
	(-)^{\Tr}|_{\mathtt{Com}}: \mathtt{Com}(\dens{\rho}) &\overset{\sim}{\longrightarrow} \mathtt{com}(\dens{\rho})^{\vee}.
	\label{eq:Com_dual_iso}
\end{align}

\section{Building Blocks and Right Essential Equivalence Classes \label{sec:interpretation}}
It is worthwhile to take a moment to give a meaning to the cochain building blocks of the previous section. We begin with a sensible definition.
\begin{definition}{}{right_essential_equivalence}
	Two operators $x,\, y \in \algebra{\hilb}$ are \newword{right essentially equivalent with respect to $\dens{\rho}$} if
	\begin{align*}
		\Tr[\dens{\rho} r^{*} x ] = \Tr[\dens{\rho} r^{*} y],
	\end{align*}
	for all $r \in \algebra{\hilb}$.
	Equivalently, $x$ and $y$ are right essential equivalent if $x -y \in \mathfrak{N}_{\dens{\rho}}$ where
	\begin{align*}
		\newmath{\mathfrak{N}_{\dens{\rho}}} := \{z \in \algebra{\hilb}: \text{$\Tr(\dens{\rho} r^{*} z) = 0$ for all $r \in \algebra{\hilb}$} \}.
	\end{align*}
	When $\dens{\rho}$ is understood, we will simply say $x$ and $y$ are ``right essentially equivalent".
\end{definition}
From the definition we see that $\mathfrak{N}_{\dens{\rho}}$ is a vector space; this observation allows us to verify that right essential equivalence satisfies the axioms of an equivalence relation on the set $\algebra{\hilb}$.  The notion of right essential equivalence is a non-commutative generalization of the notion of almost everywhere (a.e.) equivalence of measurable functions (``commutative random variables") on a measurable space equipped with some fixed measure, and $\mathfrak{N}_{\dens{\rho}}$ is the generalization of functions a.e.\ equivalent to zero.  Indeed, a Cauchy-Schwarz argument shows that
\begin{align*}
	\mathfrak{N}_{\dens{\rho}} = \{z \in \algebra{\hilb}: \Tr(\dens{\rho} z^{*} z) = 0 \}
\end{align*}
which is the non-commutative version of the statement that a.e.\ zero functions are precisely those whose absolute square integrates to zero.

Right essential equivalence is compatible with left multiplication by elements of $\algebra{\hilb}$ as $\mathfrak{N}_{\dens{\rho}}$ is a left ideal: given any  $z \in \mathfrak{N}_{\dens{\rho}}$ we have $a z \in \mathfrak{N}_{\dens{\rho}}$ for any $a \in \algebra{\hilb}$.
As a result, the set of right essential equivalence classes\footnote{This is the non-commutative generalization of the space of essentially bounded functions on a measurable space modulo the two-sided ideal of measure zero essentially bounded functions.
	In the non-commutative world, we have either left or right ideals of measure zero functions.
The distinction between left and right ideals comes from the non-equivalent choices of what is meant by the absolute square of a function: either $z^{*} z$ or $z z^{*}$.}
\begin{align*}
	\algebra{\hilb}/\mathfrak{N}_{\dens{\rho}}
\end{align*}
forms a left
$\algebra{\hilb}$-module.
We claim that this module is canonically isomorphic as a left module to $\mathtt{GNS}(\dens{\rho})$, after equipping the latter with a left-module structure given by left multiplication by elements of $\algebra{\hilb}$.
To verify this claim we begin with the following proposition.
\begin{proposition}{}{vanishing_ideal_characterizations}
	\begin{center}
		$\mathfrak{N}_{\dens{\rho}} = \algebra{\hilb} (1 -\supp_{\dens{\rho}})$.
	\end{center}
\end{proposition}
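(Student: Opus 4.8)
The plan is to prove the two inclusions separately, using as input the Cauchy--Schwarz reformulation
\begin{align*}
	\mathfrak{N}_{\dens{\rho}} = \{z \in \algebra{\hilb} : \Tr(\dens{\rho}\, z^{*} z) = 0 \}
\end{align*}
recorded immediately above the statement, together with the two elementary identities $\supp_{\dens{\rho}}\, \dens{\rho} = \dens{\rho}\, \supp_{\dens{\rho}} = \dens{\rho}$ (which hold because $\supp_{\dens{\rho}}$ is the orthogonal projection onto $\image(\dens{\rho})$, so it fixes every vector in the image and, being self-adjoint, the adjoint identity follows). Writing $s := \supp_{\dens{\rho}}$ for brevity, the key bridge I would establish first is that right multiplication by $1-s$ cuts out exactly the operators killed on the right by $s$:
\begin{align*}
	\algebra{\hilb}(1-s) = \{ z \in \algebra{\hilb} : z s = 0 \},
\end{align*}
since $a(1-s)\, s = a(s - s) = 0$ gives one containment, and $zs = 0$ forces $z = z(1-s) \in \algebra{\hilb}(1-s)$ for the other. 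Thus it suffices to show $\mathfrak{N}_{\dens{\rho}} = \{z : zs = 0\}$.

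For the inclusion $\algebra{\hilb}(1-s) \subseteq \mathfrak{N}_{\dens{\rho}}$ I would take $z$ with $zs = 0$, so that $z = z(1-s)$ and hence $z^{*}z = (1-s) z^{*} z$. Using cyclicity of the trace and $\dens{\rho}(1-s) = \dens{\rho} - \dens{\rho}s = 0$, one gets
\begin{align*}
	\Tr(\dens{\rho}\, z^{*}z) = \Tr\big( \dens{\rho}(1-s)\, z^{*}z \big) = 0,
\end{align*}
so $z \in \mathfrak{N}_{\dens{\rho}}$. This direction is purely formal manipulation of the support identities.

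The reverse inclusion $\mathfrak{N}_{\dens{\rho}} \subseteq \algebra{\hilb}(1-s)$ is where positivity does the real work, and is the step I regard as the crux. Given $z$ with $\Tr(\dens{\rho}\, z^{*}z) = 0$, I would diagonalize $\dens{\rho} = \sum_{i} p_{i}\, \xi_{i} \otimes \xi_{i}^{\vee}$ with strictly positive eigenvalues $p_{i} > 0$ and $\{\xi_{i}\}$ an orthonormal basis of $\image(\dens{\rho}) = \mathcal{I}$, and compute
\begin{align*}
	0 = \Tr(\dens{\rho}\, z^{*}z) = \sum_{i} p_{i}\, \langle \xi_{i}, z^{*}z\, \xi_{i} \rangle = \sum_{i} p_{i}\, \lVert z\, \xi_{i} \rVert^{2}.
\end{align*}
Since every $p_{i}$ is \emph{strictly} positive and each summand is nonnegative, each term must vanish, so $z\, \xi_{i} = 0$ for all $i$; that is, $z$ annihilates all of $\mathcal{I} = \image(s)$, which is precisely $zs = 0$. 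By the bridge above this yields $z \in \algebra{\hilb}(1-s)$, finishing the proof. The only point demanding genuine care is this use of strict positivity of the eigenvalues on the support: it is exactly what lets one pass from vanishing of the weighted sum to vanishing of $z$ on the \emph{entire} image of $\dens{\rho}$, rather than on a proper subspace; everything else is bookkeeping with the projection identities.
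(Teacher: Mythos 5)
Your proof is correct and complete, and it follows essentially the route the paper intends: the paper leaves this as an exercise with the hint that $\mathfrak{N}_{\dens{\rho}} = \{x : x\sqrt{\dens{\rho}} = 0\}$, and your spectral-decomposition argument (strict positivity of the eigenvalues on the support forcing $z$ to annihilate all of $\image(\dens{\rho})$) is precisely the concrete verification that vanishing of $\Tr(\dens{\rho}\,z^{*}z)$ is equivalent to $z\supp_{\dens{\rho}} = 0$. The bridge identity $\algebra{\hilb}(1-\supp_{\dens{\rho}}) = \{z : z\supp_{\dens{\rho}} = 0\}$ and the forward inclusion are handled correctly.
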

\begin{proof}
	We leave the proof as an exercise.
	As a hint: one can use non-degeneracy of the trace pairing to show that $\mathfrak{N}_{\dens{\rho}} = \{x \in \algebra{\hilb}: x \sqrt{\dens{\rho}} = 0 \}$.
\end{proof}
This proposition meshes with our intuition if we think of $\mathfrak{N}_{\dens{\rho}}$ as generalizing the ideal of functions that are a.e.\ equivalent to zero: in classical measure theory, the ideal of such functions only depends on where the measure is non-zero (i.e.\ the support of the measure), as opposed to its finer details.
Moreover, the proposition gives us an explicit way to verify right essential equivalence.
\begin{corollary}{}{}
	Let $x,y \in \algebra{\hilb}$, then $x$ and $y$ are right essentially equivalent with respect to $\dens{\rho}$ if and only if $x \supp_{\dens{\rho}} = y \supp_{\dens{\rho}}$.
\end{corollary}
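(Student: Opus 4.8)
The plan is to reduce the statement entirely to Prop.~\ref{prop:vanishing_ideal_characterizations}, which identifies the null space $\mathfrak{N}_{\dens{\rho}}$ with the left ideal $\algebra{\hilb}(1 - \supp_{\dens{\rho}})$, together with the single algebraic fact that the support projection $\supp_{\dens{\rho}}$ is idempotent. Writing $s := \supp_{\dens{\rho}}$ and $z := x - y$, Def.~\ref{def:right_essential_equivalence} says that $x$ and $y$ are right essentially equivalent precisely when $z \in \mathfrak{N}_{\dens{\rho}}$, so by the proposition this is equivalent to $z \in \algebra{\hilb}(1 - s)$. It therefore suffices to prove the purely algebraic equivalence $z \in \algebra{\hilb}(1-s) \Longleftrightarrow z s = 0$, since $z s = 0$ is exactly the desired condition $x s = y s$ after substituting $z = x-y$.

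For the forward implication I would take an arbitrary element $z = a(1-s)$ with $a \in \algebra{\hilb}$ and right-multiply by $s$; using $s^2 = s$ gives $z s = a(1-s)s = a(s - s^2) = 0$. For the converse, assuming $z s = 0$, I would insert the identity in the form $1 = s + (1 - s)$ and compute $z = z s + z(1-s) = z(1-s)$, which exhibits $z$ as an element of $\algebra{\hilb}(1-s)$. Both directions use only idempotency of the orthogonal projection $s$, so no analytic or trace-theoretic input is needed once the proposition is granted.

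I expect no genuine obstacle here: the entire content is the observation that the left ideal generated by $1 - s$ coincides with the right-annihilator of $s$ (the operators killed by right multiplication by $s$), which is a standard consequence of $s$ being idempotent. The only point worth stating carefully is bookkeeping: right essential equivalence is \emph{defined} as the difference $x - y$ lying in $\mathfrak{N}_{\dens{\rho}}$, so one must be explicit that the final condition $z s = 0$ translates back to $x \supp_{\dens{\rho}} = y \supp_{\dens{\rho}}$. Given this, the corollary is essentially immediate from Prop.~\ref{prop:vanishing_ideal_characterizations}.
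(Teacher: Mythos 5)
Your proof is correct and is exactly the argument the paper intends: the corollary is stated without proof as an immediate consequence of Prop.~\ref{prop:vanishing_ideal_characterizations}, and your two-line verification that $\algebra{\hilb}(1-\supp_{\dens{\rho}})$ coincides with the right-annihilator of $\supp_{\dens{\rho}}$ via idempotency is precisely the omitted step.
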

Moreover, for every equivalence class $x + \mathfrak{N}_{\dens{\rho}} \in \algebra{\hilb}/\mathfrak{N}_{\dens{\rho}}$, the element $x \supp_{\dens{\rho}} \in
\mathtt{GNS}(\dens{\rho})$ is independent of the choice of representative $x \in \mathfrak{N}_{\dens{\rho}}$.
Thus, in combination with the above corollary, we have that the quotient map
\begin{align*}
	q: \algebra{\hilb} &\longrightarrow \algebra{\hilb}/\mathfrak{N}_{\dens{\rho}}\\
		x			   &\longmapsto      x + \mathfrak{N}_{\dens{\rho}}
\end{align*}
has a (well-defined) section:\footnote{I.e.\ a map such that $q \circ s$ is the identity map on $\algebra{\hilb}/\mathfrak{N}_{\dens{\rho}}$.}

\begin{align*}
	s: \algebra{\hilb}/\mathfrak{N}_{\dens{\rho}} & \longrightarrow \algebra{\hilb} \\
		x + \mathfrak{N}_{\dens{\rho}}            & \longmapsto     x \supp_{\dens{\rho}}.
\end{align*}
Moreover $q$ and $s$ are equivariant with respect to the left action of $\algebra{\hilb}$.
This gives us the following:
\begin{proposition}{}{}
	$\mathtt{GNS}(\dens{\rho})$ and $\algebra{\hilb}/\mathfrak{N}_{\dens{\rho}}$ are isomorphic as left $\algebra{\hilb}$ modules using the maps $s$ and $q|_{\mathtt{GNS}(\dens{\rho})}$.
\end{proposition}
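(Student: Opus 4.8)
The plan is to show that the two equivariant linear maps $s\colon \algebra{\hilb}/\mathfrak{N}_{\dens{\rho}} \to \algebra{\hilb}$ and $q|_{\mathtt{GNS}(\dens{\rho})}\colon \mathtt{GNS}(\dens{\rho}) \to \algebra{\hilb}/\mathfrak{N}_{\dens{\rho}}$ are mutually inverse bijections. Since each map is already known to intertwine the left $\algebra{\hilb}$-action, producing a two-sided inverse at the level of vector spaces automatically upgrades the pair to an isomorphism of left $\algebra{\hilb}$-modules; so the entire content of the statement reduces to checking that the two composites are identity maps on the appropriate spaces.

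First I would isolate the two structural facts that carry the argument. From the preceding proposition we have $\mathfrak{N}_{\dens{\rho}} = \algebra{\hilb}(1-\supp_{\dens{\rho}})$, and from the block-matrix description every $a \in \mathtt{GNS}(\dens{\rho})$ annihilates $\ker(\dens{\rho})$ and is therefore fixed by right multiplication by the support projection, $a\supp_{\dens{\rho}} = a$. At this stage I would also record the two routine well-posedness points: that $s$ genuinely lands in $\mathtt{GNS}(\dens{\rho})$, since $s(x+\mathfrak{N}_{\dens{\rho}}) = x\supp_{\dens{\rho}}$ annihilates $\ker(\dens{\rho})$ and hence lies in $\mathtt{GNS}(\dens{\rho})$; and that $s$ is well defined, which is exactly the content of the corollary above, namely that $x - x' \in \mathfrak{N}_{\dens{\rho}}$ forces $x\supp_{\dens{\rho}} = x'\supp_{\dens{\rho}}$.

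The two composites are then one line each. For $a \in \mathtt{GNS}(\dens{\rho})$ one has $s(q(a)) = a\supp_{\dens{\rho}} = a$ by the idempotence fact, so $s \circ q|_{\mathtt{GNS}(\dens{\rho})} = \mathrm{id}_{\mathtt{GNS}(\dens{\rho})}$. In the other order, for a class $x + \mathfrak{N}_{\dens{\rho}}$ we compute $q(s(x+\mathfrak{N}_{\dens{\rho}})) = x\supp_{\dens{\rho}} + \mathfrak{N}_{\dens{\rho}}$, and the difference $x\supp_{\dens{\rho}} - x = -x(1-\supp_{\dens{\rho}})$ lies in $\algebra{\hilb}(1-\supp_{\dens{\rho}}) = \mathfrak{N}_{\dens{\rho}}$, so the two classes coincide and $q|_{\mathtt{GNS}(\dens{\rho})} \circ s = \mathrm{id}$ on the quotient. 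Finally I would note the equivariance of $s$ by the computation $s(b\cdot(x+\mathfrak{N}_{\dens{\rho}})) = bx\supp_{\dens{\rho}} = b\, s(x+\mathfrak{N}_{\dens{\rho}})$ for $b \in \algebra{\hilb}$, with the equivariance of $q$ being built into the quotient construction; this is the only point at which the left-module structure is actually used.

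There is no real obstacle here: the statement is a formal consequence of the identification $\mathfrak{N}_{\dens{\rho}} = \algebra{\hilb}(1-\supp_{\dens{\rho}})$ together with the idempotence of $\supp_{\dens{\rho}}$, both of which are already in hand. The only point demanding mild care is bookkeeping — keeping track that $s$ is defined on the quotient with image in $\mathtt{GNS}(\dens{\rho})$ while $q|_{\mathtt{GNS}(\dens{\rho})}$ runs in the opposite direction — so that each of the two identity checks is performed on the correct domain.
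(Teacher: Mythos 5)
Your proof is correct and follows essentially the same route as the paper: both rest on $\mathfrak{N}_{\dens{\rho}}=\algebra{\hilb}(1-\supp_{\dens{\rho}})$ and the idempotence identity $a\supp_{\dens{\rho}}=a$ for $a\in\mathtt{GNS}(\dens{\rho})$, and then observe that $s$ and $q|_{\mathtt{GNS}(\dens{\rho})}$ are mutually inverse equivariant maps. You are slightly more explicit than the paper in verifying both composites rather than only the section property $q\circ s=\mathrm{id}$, but this is a difference of presentation, not of substance.
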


The following remark describes the relationship of the module of right essential equivalence classes to the GNS representation.
\begin{remark}{Right Essential Equivalence Classes and the GNS Representation}{GNS_as_GNS}
	The module $\algebra{\hilb}/\mathfrak{N}_{\dens{\rho}}$ is a precursor to the \textit{Gelfand-Neumark-Segal (GNS) representation}: we can equip $\algebra{\hilb}/\mathfrak{N}_{\dens{\rho}}$ with a non-degenerate Hermitian inner product given by the descent of the pairing $(x,y) \mapsto \Tr[\dens{\rho} x^{*} y]$.
	The result is a Hilbert space;\footnote{In infinite dimensions we would need to take a completion of $\algebra{\hilb}/\mathfrak{N}_{\dens{\rho}}$} the left module structure gives this Hilbert space the structure of a $*$-representation of $\algebra{\hilb}$.
	In this sense, giving $\algebra{\hilb}/\mathfrak{N}_{\dens{\rho}}$ the name \newword{GNS module} might be appropriate.
	The results of this paper most easily generalize to infinite dimensions when we think of  $\algebra{\hilb}/\mathfrak{N}_{\dens{\rho}}$ as a Banach module equipped with the operator norm.
	In fact, there is a whole family of ``$L^{p}$-norms" one can place on $\algebra{\hilb}/\mathfrak{N}_{\dens{\rho}}$: the $L^{2}$-norm gives the GNS representation, while the $L^{\infty}$-norm is morally the structure we consider in this paper.
\end{remark}

Just as $\mathtt{GNS}(\dens{\rho}) = \algebra{\hilb} \supp_{\dens{\rho}}$ is naturally a left $\algebra{\hilb}$-submodule of $\algebra{\hilb}$, the subspace $\mathtt{Com}(\dens{\rho}) = \supp_{\dens{\rho}} \algebra{\hilb} \supp_{\dens{\rho}}$ is naturally a $\mathbb{C}$-subalgebra.\footnote{Indeed, the product of any two elements in $\supp_{\dens{\rho}} \algebra{\hilb} \supp_{\dens{\rho}}$ is in $\supp_{\dens{\rho}} \algebra{\hilb} \supp_{\dens{\rho}}$.}
As an (non-embedded) algebra it is canonically isomorphic to $\End[\image(\dens{\rho})]$.
To give the relationship of this algebra to right essential equivalence classes we introduce the following lemma.
\begin{lemma}{}{equivariant_endo}
	Let $k$ be a field, $A$ be an associative $k$-algebra, $L \subseteq A$ a left ideal of $A$, and ${}_{A}(A/L)$ the
	left $A$ module\footnote{Assuming the usual postcomposition of endomorphisms as our multiplication: $m(\phi_{1}, \phi_{2}) = \phi_{1} \circ \phi_{2}$.} given by action via left multiplication.
	Define $I := \{a \in A: L a \subset L\}$, then the $k$-algebra of equivariant endomorphisms $\End_{A} \left[{}_{A}(A/L) \right]$ is isomorphic to $(I/L)^{\mathrm{op}}$, via the descent of the right multiplication map\footnote{If $A$ is an $\mathbb{C}$-algebra with multiplication $(x,y) \mapsto x \cdot y$, then $A^{\op}$ is the $\mathbb{C}$-algebra with the same underlying $\mathbb{C}$-vector space, but with multiplication $(x,y) \mapsto y \cdot x$.}
	\begin{align*}
		\mathpzc{r} :I^{\mathrm{op}} &\longrightarrow \End_{A} \left[{}_{A}(A/L) \right]\\
		c &\longmapsto (n + L \mapsto n c + L).
	\end{align*}
\end{lemma}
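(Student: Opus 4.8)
The plan is to leverage the fact that ${}_{A}(A/L)$ is a \emph{cyclic} left module, generated by the class of the unit $\bar{1} := 1 + L$. First I would observe that any equivariant endomorphism $\phi \in \End_{A}[{}_{A}(A/L)]$ is completely determined by its value on this generator: equivariance gives $\phi(a + L) = \phi(a \cdot \bar{1}) = a \cdot \phi(\bar{1})$ for all $a \in A$, so once we write $\phi(\bar{1}) = c + L$ for a representative $c \in A$, the map $\phi$ is forced to be the right-multiplication map $n + L \mapsto nc + L$, which is exactly $\mathpzc{r}(c)$. This already shows every endomorphism is of the form $\mathpzc{r}(c)$ for some $c$; the remaining work is to identify precisely which $c$ are allowed and how $\mathpzc{r}$ interacts with the algebra structures.

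Next I would pin down the domain and the multiplicativity. The assignment $a + L \mapsto ac + L$ is well defined exactly when $a \in L$ forces $ac \in L$, i.e.\ when $Lc \subseteq L$, which is precisely the condition $c \in I$. Hence $\mathpzc{r}$ is defined on $I$ and, by the previous paragraph, is surjective onto $\End_{A}[{}_{A}(A/L)]$. A short computation shows it reverses order: applying $\mathpzc{r}(c_{2})$ and then $\mathpzc{r}(c_{1})$ sends $n + L$ to $n c_{2} c_{1} + L$, so that $\mathpzc{r}(c_{1}) \circ \mathpzc{r}(c_{2}) = \mathpzc{r}(c_{2} c_{1})$. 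This is exactly the statement that $\mathpzc{r}$ is a unital $k$-algebra homomorphism out of the opposite algebra $I^{\op}$ (and $\mathpzc{r}(1) = \mathrm{id}$ confirms unitality), which is the structural reason the ``op'' appears in the target.

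Finally I would compute the kernel and assemble the isomorphism. One has $\mathpzc{r}(c) = 0$ iff $nc \in L$ for all $n \in A$, i.e.\ $Ac \subseteq L$; taking $n = 1$ gives $c \in L$, and conversely $c \in L$ yields $Ac \subseteq L$ since $L$ is a left ideal, so $\ker \mathpzc{r} = L$. Before invoking the first isomorphism theorem I would record the two bookkeeping facts that make the quotient meaningful: $L \subseteq I$ (because $L\ell \subseteq A\ell \subseteq L$ for $\ell \in L$), and $L$ is a two-sided ideal of $I$ (from $IL \subseteq AL \subseteq L$ and $Lc \subseteq L$ for $c \in I$). The first isomorphism theorem then gives $I^{\op}/L \cong \End_{A}[{}_{A}(A/L)]$, and since $I^{\op}/L = (I/L)^{\op}$ as $k$-algebras, the claim follows. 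The only genuine delicacy here is the opposite-algebra bookkeeping: one must resist reading $\mathpzc{r}$ as a homomorphism of $I$ (it is an anti-homomorphism) and instead treat it as a homomorphism of $I^{\op}$, then correctly identify $I^{\op}/L$ with $(I/L)^{\op}$. Everything else---well-definedness, surjectivity via cyclicity, and the kernel computation---is routine, and the unitality of $A$ enters only through the generator $\bar{1}$ and the step $c = 1 \cdot c \in Ac$.
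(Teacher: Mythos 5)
Your proof is correct and follows essentially the same route as the paper's: exploit cyclicity of ${}_{A}(A/L)$ at the generator $1+L$ to get surjectivity, observe that well-definedness of $n+L \mapsto nc+L$ forces $c \in I$, and compute that the kernel is exactly $L$. Packaging the conclusion via the first isomorphism theorem rather than checking injectivity of the descended map directly is only a cosmetic difference.
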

\begin{proof}
	The proof is a straightforward generalization of Schur's lemma.
	See Appendix~\ref{app:equivariant_endo_lemma}.
\end{proof}
Applying this to our situation where $k = \mathbb{C},\,A = \algebra{\hilb}$ and $L = \mathfrak{N}_{\dens{\rho}} = \algebra{\hilb} (1-\supp_{\dens{\rho}})$ we have $I = \supp_{\dens{\rho}} \algebra{\hilb} \supp_{\dens{\rho}} + \mathfrak{N}_{\dens{\rho}}$.  We can see this explicitly if we recall the block matrix decomposition of an element $r \in \algebra{\hilb}$ (described above \eqref{eq:block_form}):
\begin{align*}
	r &=
	\begin{blockarray}{lcc}
		\mathcal{I} & \mathcal{K} & \\
		\begin{block}{(cc)r}
			r_{\mathcal{I} \mathcal{I}} & r_{\mathcal{I} \mathcal{K}} & \mathcal{I}\\
			r_{\mathcal{K} \mathcal{I}} & r_{\mathcal{K} \mathcal{K}} & \mathcal{K}\\
		\end{block}
	\end{blockarray}.
\end{align*}
Using this decomposition:
\begin{align*}
	L = \left \{\blockmat{*}{*}{0}{0} \right \} = \algebra{\hilb}(1-\supp_{\dens{\rho}})
\end{align*}
so it is easy to see that
\begin{align*}
	I = \left \{ \blockmat{*}{0}{*}{*} \right \} = \supp_{\dens{\rho}} \algebra{\hilb} \supp_{\dens{\rho}} + \algebra{\hilb}(1-\supp_{\dens{\rho}})
\end{align*}
In the same manner that the quotient module $\algebra{\hilb}/\mathfrak{N}_{\dens{\rho}}$ was canonically identified with the submodule $\algebra{\hilb} \supp_{\dens{\rho}}$ via application of support projections, we can identify the quotient algebra $I/L$ with the subalgebra $\supp_{\dens{\rho}} \algebra{\hilb} \supp_{\dens{\rho}}$.  As a result, we have the following.
\begin{proposition}{}{equivariant_endos}
	The right action of the subalgebra $\supp_{\dens{\rho}} \algebra{\hilb} \supp_{\dens{\rho}}$ on $\algebra{\hilb}$ provides a canonical isomorphism between the algebra of equivariant isomorphisms of the submodule $\algebra{\hilb} \supp_{\dens{\rho}}$ and $(\supp_{\dens{\rho}} \algebra{\hilb} \supp_{\dens{\rho}})^{\op}$.
\end{proposition}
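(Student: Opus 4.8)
The plan is to deduce this proposition by transporting Lemma~\ref{lem:equivariant_endo} across the left-module isomorphism $\mathtt{GNS}(\dens{\rho}) \cong \algebra{\hilb}/\mathfrak{N}_{\dens{\rho}}$ established in the preceding proposition (via the mutually inverse maps $s$ and $q|_{\mathtt{GNS}(\dens{\rho})}$). First I would record the general fact that an isomorphism of left $A$-modules $\phi\colon M \to M'$ induces an isomorphism of $k$-algebras $\End_A(M') \to \End_A(M)$ by $f \mapsto \phi^{-1}\circ f \circ \phi$. Applying this with $\phi = q|_{\mathtt{GNS}(\dens{\rho})}$ and $\phi^{-1}=s$ identifies $\End_{\algebra{\hilb}}[\mathtt{GNS}(\dens{\rho})]$ with $\End_{\algebra{\hilb}}[{}_{\algebra{\hilb}}(\algebra{\hilb}/\mathfrak{N}_{\dens{\rho}})]$. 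By Lemma~\ref{lem:equivariant_endo}, taken with $A = \algebra{\hilb}$ and $L = \mathfrak{N}_{\dens{\rho}} = \algebra{\hilb}(1-\supp_{\dens{\rho}})$ (Prop.~\ref{prop:vanishing_ideal_characterizations}), the latter algebra is $(I/L)^{\op}$ with $I = \supp_{\dens{\rho}}\algebra{\hilb}\supp_{\dens{\rho}} + \mathfrak{N}_{\dens{\rho}}$, exactly as computed above.

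Next I would identify the quotient algebra $I/L$ with the subalgebra $\mathtt{Com}(\dens{\rho}) = \supp_{\dens{\rho}}\algebra{\hilb}\supp_{\dens{\rho}}$. The composite $\mathtt{Com}(\dens{\rho}) \hookrightarrow I \twoheadrightarrow I/L$ is an algebra homomorphism: $I$ is the idealizer of $L$, hence a subalgebra in which $L$ is a two-sided ideal, so $I \to I/L$ is an algebra map, and $\mathtt{Com}(\dens{\rho})$ is a subalgebra of $\algebra{\hilb}$ contained in $I$. It is surjective because $I = \mathtt{Com}(\dens{\rho}) + L$, and injective because $\mathtt{Com}(\dens{\rho}) \cap \mathfrak{N}_{\dens{\rho}} = 0$: any $c = \supp_{\dens{\rho}}c'\supp_{\dens{\rho}}$ satisfies $c\supp_{\dens{\rho}} = c$, whereas every element of $\mathfrak{N}_{\dens{\rho}} = \algebra{\hilb}(1-\supp_{\dens{\rho}})$ is annihilated on the right by $\supp_{\dens{\rho}}$, forcing $c = 0$. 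Hence $\mathtt{Com}(\dens{\rho}) \cong I/L$ as $\mathbb{C}$-algebras.

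Finally I would verify that under these identifications the induced map $(\supp_{\dens{\rho}}\algebra{\hilb}\supp_{\dens{\rho}})^{\op} \to \End_{\algebra{\hilb}}[\mathtt{GNS}(\dens{\rho})]$ is precisely right multiplication. Chasing $c \in \mathtt{Com}(\dens{\rho})$ through $q$, the action $n + L \mapsto nc + L$ supplied by Lemma~\ref{lem:equivariant_endo}, and $s$, an element $m \in \mathtt{GNS}(\dens{\rho})$ is sent to $(mc)\supp_{\dens{\rho}} = m\,(c\supp_{\dens{\rho}}) = mc$, where the last equality uses $c\supp_{\dens{\rho}} = c$; thus the transported endomorphism is $m \mapsto mc$. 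Since $(m\mapsto mc_{1})\circ(m\mapsto mc_{2}) = (m\mapsto mc_{2}c_{1})$, the assignment reverses the order of multiplication, which accounts for the ${}^{\op}$, giving the claimed isomorphism $\mathpzc{r}\colon (\supp_{\dens{\rho}}\algebra{\hilb}\supp_{\dens{\rho}})^{\op}\xrightarrow{\sim}\End_{\algebra{\hilb}}[\mathtt{GNS}(\dens{\rho})]$. The main obstacle is purely bookkeeping: confirming the injectivity $\mathtt{Com}(\dens{\rho})\cap\mathfrak{N}_{\dens{\rho}} = 0$ and that $\supp_{\dens{\rho}}$ acts as a two-sided identity on $\mathtt{Com}(\dens{\rho})$, so that the a priori more complicated transported action collapses to genuine right multiplication; both are immediate from the block-matrix description in \eqref{eq:block_form}.
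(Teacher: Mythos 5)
Your proposal is correct and follows essentially the same route as the paper: apply Lemma~\ref{lem:equivariant_endo} with $A = \algebra{\hilb}$ and $L = \mathfrak{N}_{\dens{\rho}} = \algebra{\hilb}(1-\supp_{\dens{\rho}})$, identify the quotient algebra $I/L$ with the subalgebra $\supp_{\dens{\rho}}\algebra{\hilb}\supp_{\dens{\rho}}$ by compression with support projections, and transport across the module isomorphism $\mathtt{GNS}(\dens{\rho})\cong\algebra{\hilb}/\mathfrak{N}_{\dens{\rho}}$. You have merely made explicit the details the paper leaves implicit (the injectivity $\mathtt{Com}(\dens{\rho})\cap\mathfrak{N}_{\dens{\rho}}=0$ and the chase showing the transported endomorphism collapses to genuine right multiplication), which is exactly what a careful write-up should do.
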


In the $C^{*}$-algebraic world where one considers $*$-representations, the algebra of equivariant endomorphisms usually goes by the name ``commutant" (this is the origin of our notation $\mathtt{Com}$ and $\mathtt{com}$); the commutant of a $*$-representation of a $C^{*}$-algebra is also a $C^{*}$-algebra.\footnote{Furthermore, it is a von Neumann algebra}
In fact one can show that the endomorphisms given by the right action of $\supp_{\dens{\rho}} \algebra{\hilb} \supp_{\dens{\rho}}$, thought of as a $C^{*}$-algebra, is isomorphic to the commutant of the GNS representation associated to $\dens{\rho}$ (see Rmk.~\ref{rmk:GNS_as_GNS}).

Unfortunately, our tools in this paper are too brutish, and when we pass to (co)homology (and Poincar\'{e} polynomials) we will forget any underlying module structures and only remember the structure of underlying vector spaces.  However, Prop.~\eqref{prop:equivariant_endos} shows us that the complexes of vector spaces built out of $\mathtt{Com}(\dens{\rho})$ depend on the module structures that were forgotten when  constructing complexes out of $\mathtt{GNS}(\dens{\rho})$; in this way, we can still encode partial information about forgotten module structure while only working with vector spaces.

\section{Bipartite Complexes \label{sec:bipartite_complexes}}
The following definition and notation for bipartite (density) states emphasizes that the Hilbert spaces in the factorization are part of the data.
\begin{definition}[label=def:bipartite_density_states]{}{}
	\begin{enumerate}
		\item A \newword{bipartite density state} is a tuple $(\hilb_{\sA}, \hilb_{\sB}, \dens{\rho}_{\sAB})$ where $\hilb_{\sA}$ and $\hilb_{\sB}$ are non-zero Hilbert spaces and $\dens{\rho}_{\sAB} \in \Dens(\hilb_{\sA} \otimes \hilb_{\sB})$.

		\item Given a bipartite density state $(\hilb_{\sA}, \hilb_{\sB}, \dens{\rho}_{\sAB})$ we define the associated reduced density states:
			\begin{align*}
				\newmath{\dens{\rho}_{\sA}} &:= \Tr_{\sB} \left[\dens{\rho}_{\sAB} \right] \in \Dens(\hilb_{\sA}),\\
				\newmath{\dens{\rho}_{\sB}} &:= \Tr_{\sA} \left[\dens{\rho}_{\sAB} \right] \in \Dens(\hilb_{\sB}).
			\end{align*}

		\item When no confusion can arise, the support projection $\supp_{\dens{\rho}_{\sX}}$ will be denoted by $\newmath{\supp_{\sX}}$ and left ideal $\mathfrak{N}_{\dens{\rho}_{\sX}}$ will be denoted $\mathfrak{N}_{\sX}$ for $\sX \in \{\sAB, \sA, \sB \}$.

		\item A bipartite state $(\hilb_{\sA}, \hilb_{\sB}, \dens{\rho}_{\sAB})$ is \newword{pure} if $\dens{\rho}_{\sAB}$ is pure (i.e.\ $\dens{\rho}_{\sAB} = \psi \otimes \psi^{\vee}$ for some $\psi \in \hilb_{\sA} \otimes \hilb_{\sB}$).  A bipartite state is \newword{mixed} if it is not pure.

		\item  In the following sections a bipartite density state will be denoted by boldface version of its associated density state, i.e.\ $\bdens{\rho}_{\sAB}$ will denote $(\hilb_{\sA}, \hilb_{\sB}, \dens{\rho}_{\sAB})$, with the Hilbert spaces $\hilb_{\sA}$ and $\hilb_{\sB}$ being understood.
	\end{enumerate}
\end{definition}
Our definition of a bipartite density state includes the situation where one of $\hilb_{\sA}$ or $\hilb_{\sB}$ might be a one-dimensional Hilbert space.  Such an extreme situation might not be considered a bipartite state in practice, however all of our results follow easily using the definition provided above.  One can further generalize the above definition to allow for zero Hilbert spaces, but this only complicates the statements of the results in this paper without much payoff.

\subsection{Cochain Complexes}
Begin with a bipartite density state $\bdens{\rho}_{\sAB} =(\hilb_{\sA}, \hilb_{\sB}, \dens{\rho}_{\sAB})$.  Then we define the \newword{GNS cochain complex} (or just \newword{GNS complex}): 
\begin{align}
	\newmath{\cpx{G}(\bdens{\rho}_{\sAB})} := \cdots
	\rightarrow 0 \longrightarrow
	\mathbb{C}
	\overset{d_{\cpx{G}}^{-1}}{\longrightarrow}
	\mathtt{GNS}(\dens{\rho}_{\sA}) \times \mathtt{GNS}(\dens{\rho}_{\sB})
	\overset{d_{\cpx{G}}^{0}}{\longrightarrow}
	\mathtt{GNS}(\dens{\rho}_{\sAB})
	\longrightarrow 0 \rightarrow \cdots
	\label{eq:GNS_bipartite_complex}
\end{align}
and the \newword{commutant cochain complex} (or just \newword{commutant complex}):
\begin{align}
	\newmath{\cpx{E}(\bdens{\rho}_{\sAB})} := \cdots
	\rightarrow 0 \longrightarrow
	\mathbb{C}
	\overset{d_{\cpx{E}}^{-1}}{\longrightarrow}
	\mathtt{Com}(\dens{\rho}_{\sA}) \times \mathtt{Com}(\dens{\rho}_{\sB})
	\overset{d_{\cpx{E}}^{0}}{\longrightarrow}
	\mathtt{Com}(\dens{\rho}_{\sAB})
	\longrightarrow 0 \rightarrow \cdots
	\label{eq:Com_bipartite_complex}
\end{align}
where the coboundaries on $\cpx{G}(\bdens{\rho}_{\sAB})$ are given by
\begin{align*}
	d_{\cpx{G}}^{-1} : \lambda &\longmapsto (\lambda \supp_{\sA}, \lambda \supp_{\sB}),\\ d_{\cpx{G}}^{0}: (a,b) &\longmapsto \left(1_{\sA} \otimes b - a \otimes 1_{\sB}	\right) \supp_{\sAB},
\end{align*}
and the coboundaries on $\cpx{E}(\bdens{\rho}_{\sAB})$ are given by
\begin{align*}
	d_{\cpx{E}}^{-1} : \lambda &\longmapsto (\lambda \supp_{\sA}, \lambda \supp_{\sB}),\\
	d_{\cpx{E}}^{0}  : (a,b)   &\longmapsto \supp_{\sAB} \left(1_{\sA} \otimes b - a \otimes 1_{\sB} \right) \supp_{\sAB}.
\end{align*}
It remains to verify that these are actually cochain complexes: i.e.\ the coboundaries satisfy $d^{n+1} \circ d^{n} = 0$.  This is a straightforward corollary of the following lemma.
\begin{lemma}{Compatibility of supports}{compat_of_supports}
	\begin{center}
		$(\supp_{\sA} \otimes \supp_{\sB})\supp_{\sAB} = \supp_{\sAB}$.
	\end{center}
\end{lemma}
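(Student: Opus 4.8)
The plan is to recast the operator identity as a statement about images of orthogonal projections and then establish the required containment by a positivity argument. First I would observe that both sides live in the algebra of projections: writing $P := \supp_{\sA} \otimes \supp_{\sB}$ and $Q := \supp_{\sAB}$, both $P$ and $Q$ are orthogonal projections ($P$ because a tensor product of self-adjoint idempotents is again a self-adjoint idempotent). For an orthogonal projection $P$ one has $PQ = Q$ if and only if $\image(Q) \subseteq \image(P)$, so the lemma is equivalent to the containment $\image(\dens{\rho}_{\sAB}) \subseteq \image(P)$. Since $\image(P) = \image(\supp_{\sA}) \otimes \image(\supp_{\sB}) = \image(\dens{\rho}_{\sA}) \otimes \image(\dens{\rho}_{\sB})$, the whole lemma reduces to the single geometric fact that the support of the joint state is contained in the tensor product of the supports of its marginals.

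To prove this containment I would pass to orthogonal complements, where it suffices to show $[\image(\dens{\rho}_{\sA}) \otimes \image(\dens{\rho}_{\sB})]^{\perp} \subseteq \ker(\dens{\rho}_{\sAB})$. Using self-adjointness of the reduced states (so that $\image(\dens{\rho}_{\sX})^{\perp} = \ker(\dens{\rho}_{\sX})$) together with the standard description of the complement of a tensor product of subspaces, this reduces further to the two symmetric inclusions $\ker(\dens{\rho}_{\sA}) \otimes \hilb_{\sB} \subseteq \ker(\dens{\rho}_{\sAB})$ and $\hilb_{\sA} \otimes \ker(\dens{\rho}_{\sB}) \subseteq \ker(\dens{\rho}_{\sAB})$. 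I would prove the first, the second being identical with $\sA$ and $\sB$ interchanged.

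The heart of the argument is the following positivity step. Fix $\xi \in \ker(\dens{\rho}_{\sA})$ and an orthonormal basis $\{e_{j}\}$ of $\hilb_{\sB}$. Since $\dens{\rho}_{\sA} \geq 0$ and $\xi \in \ker(\dens{\rho}_{\sA})$, we have $\langle \xi, \dens{\rho}_{\sA} \xi \rangle = 0$; rewriting the left side through the partial trace relation $\dens{\rho}_{\sA} = \Tr_{\sB}[\dens{\rho}_{\sAB}]$ as $\sum_{j} \langle \xi \otimes e_{j}, \dens{\rho}_{\sAB} (\xi \otimes e_{j}) \rangle$ exhibits it as a sum of non-negative terms (non-negative because $\dens{\rho}_{\sAB} \geq 0$). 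A vanishing sum of non-negative numbers forces each term to vanish, so $\langle \xi \otimes e_{j}, \dens{\rho}_{\sAB} (\xi \otimes e_{j}) \rangle = 0$ for every $j$; and for a positive semidefinite operator $\langle v, \dens{\rho}_{\sAB} v \rangle = 0$ implies $\dens{\rho}_{\sAB} v = 0$. Hence $\dens{\rho}_{\sAB}(\xi \otimes e_{j}) = 0$ for all $j$, and by linearity $\dens{\rho}_{\sAB}$ annihilates all of $\ker(\dens{\rho}_{\sA}) \otimes \hilb_{\sB}$, which is the desired inclusion.

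I expect the only genuine subtlety to be this positivity step, specifically the two uses of positive semidefiniteness: the passage from the vanishing of the single expectation $\langle \xi, \dens{\rho}_{\sA} \xi\rangle$ to the vanishing of each individual summand, and the implication $\langle v, \dens{\rho}_{\sAB} v\rangle = 0 \Rightarrow \dens{\rho}_{\sAB} v = 0$ (which rests on factoring through $\dens{\rho}_{\sAB}^{1/2}$). Everything else is routine bookkeeping: verifying that $\supp_{\sA} \otimes \supp_{\sB}$ is the orthogonal projection onto $\image(\dens{\rho}_{\sA}) \otimes \image(\dens{\rho}_{\sB})$, and the elementary fact that $PQ = Q$ for an orthogonal projection $P$ is equivalent to $\image(Q) \subseteq \image(P)$. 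As an alternative to the complement argument, one could instead take a spectral decomposition $\dens{\rho}_{\sAB} = \sum_{k} \lambda_{k}\, \psi_{k} \otimes \psi_{k}^{\vee}$ with all $\lambda_{k} > 0$ and argue directly that each $\psi_{k}$ lies in $\image(\dens{\rho}_{\sA}) \otimes \image(\dens{\rho}_{\sB})$, but the complement/positivity route is cleaner and avoids choosing eigenvectors.
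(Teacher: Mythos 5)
Your proof is correct. The paper itself only sketches two possible routes: (i) show that $a \in \mathfrak{N}_{\dens{\rho}_{\sA}}$ implies $a \otimes 1_{\sB} \in \mathfrak{N}_{\dens{\rho}_{\sAB}}$ and invoke the characterization $\mathfrak{N}_{\dens{\rho}} = \algebra{\hilb}(1-\supp_{\dens{\rho}})$ from Prop.~\ref{prop:vanishing_ideal_characterizations}, or (ii) establish the image containments $\image(\dens{\rho}_{\sAB}) \leq \image(\dens{\rho}_{\sA}) \otimes \hilb_{\sB}$ and $\image(\dens{\rho}_{\sAB}) \leq \hilb_{\sA} \otimes \image(\dens{\rho}_{\sB})$ ``using Schmidt decompositions and properties of positive semidefinite operators.'' What you have written is a complete version of route (ii), except that you replace the suggested Schmidt-decomposition bookkeeping with a passage to orthogonal complements plus the positivity implication $\langle v, \dens{\rho}_{\sAB} v\rangle = 0 \Rightarrow \dens{\rho}_{\sAB}v = 0$; this is cleaner and basis-free on the $\sAB$ side. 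It is also worth noticing that your key step is exactly the vector-level shadow of the paper's route (i): the identity $0 = \langle \xi, \dens{\rho}_{\sA}\xi\rangle = \Tr[\dens{\rho}_{\sAB}\,(\xi\otimes\xi^{\vee})^{*}(\xi\otimes\xi^{\vee})\otimes 1_{\sB}]$ is the statement that $(\xi\otimes\xi^{\vee})\otimes 1_{\sB}$ lies in $\mathfrak{N}_{\dens{\rho}_{\sAB}}$ whenever $\xi\otimes\xi^{\vee} \in \mathfrak{N}_{\dens{\rho}_{\sA}}$. The ideal-theoretic route has the advantage of surviving the passage to infinite dimensions and general $W^{*}$-algebras (where images of operators need not be closed and the support projection is better handled through the left ideal $\mathfrak{N}_{\dens{\rho}}$), whereas your complement argument is tied to the finite-dimensional orthogonal decomposition $\hilb = \image(\dens{\rho}) \oplus \ker(\dens{\rho})$; within the finite-dimensional scope of the paper both are equally valid.
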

\begin{proof}
	A straightforward proof follows by showing that for any $a \in \mathfrak{N}_{\dens{\rho}_{\sA}}$ we have $a \otimes 1_{\sB} \in \mathfrak{N}_{\dens{\rho}_{\sAB}}$ and then applying Prop.~\ref{prop:vanishing_ideal_characterizations}.  Alternatively, in finite dimensions, one can prove this statement by showing:
	\begin{align*}
		\image(\dens{\rho}_{\sAB}) &\leq \image(\dens{\rho}_{\sA}) \otimes \hilb_{\sB},\\
		\image(\dens{\rho}_{\sAB}) &\leq \hilb_{\sA} \otimes \image(\dens{\rho}_{\sB}).
	\end{align*}
	which is an exercise in linear algebra.\footnote{One possible way is to verify this fact using Schmidt decompositions and properties of positive semidefinite operators.}
\end{proof}

\subsection{Chain complexes}
We define chain complexes\footnote{As remarked in \S\ref{sec:notation}, the switch between the notation $\times$ and $\oplus$ is notation is a hint at the manner in which the coboundaries of the complexes are constructed (and provides insight into generalizations of the theory here). However, one should keep in mind that $V \times W$ and $V \oplus W$ are canonically isomorphic as vector spaces.
While an element of $V \times W$ is written as a pair $(v,w)$ we will denote elements of $V \oplus W$ as $v \oplus w$.}
\begin{align*}
	\newmath{\cpx{g}(\bdens{\rho}_{\sAB})} := \cdots
	\leftarrow 0 \longleftarrow
	\mathbb{C}
	\overset{\partial^{\cpx{g}}_{0}}{\longleftarrow}
	\mathtt{gns}(\dens{\rho}_{\sA}) \oplus \mathtt{gns}(\dens{\rho}_{\sB})
	\overset{\partial^{\cpx{g}}_{1}}{\longleftarrow}
	\mathtt{gns}(\dens{\rho}_{\sAB})
	\longleftarrow 0 \leftarrow \cdots,
\end{align*}
named the \newword{GNS chain complex} and
\begin{align*}
	\newmath{\cpx{e}(\bdens{\rho}_{\sAB})} := \cdots
	\leftarrow 0 \longleftarrow
	\mathbb{C}
	\overset{\partial^{\cpx{e}}_{0}}{\longleftarrow}
	\mathtt{com}(\dens{\rho}_{\sA}) \oplus \mathtt{com}(\dens{\rho}_{\sB})
	\overset{\partial^{\cpx{e}}_{1}}{\longleftarrow}
	\mathtt{com}(\dens{\rho}_{\sAB})
	\longleftarrow 0 \leftarrow \cdots,
\end{align*}
named the \newword{commutant chain complex}; the boundary maps of either complex are given by the restricting the following maps to the appropriate domain:
\begin{equation*}
	\begin{array}{llcll}
		\partial_{0} &:& \dens{\alpha} \oplus \dens{\beta} &\longmapsto \Tr[\dens{\alpha}] + \Tr[\dens{\beta}],\\
		\partial_{1} &:&  \dens{\gamma}  &\longmapsto 0_{\sA} \oplus \Tr_{\sA}[\dens{\gamma}] - \Tr_{\sB}[\dens{\gamma}] \oplus 0_{\sB} = \left(-\Tr_{\sB}[\dens{\gamma}] \right) \oplus \left(\Tr_{\sA}[\dens{\gamma}] \right),
	\end{array}
\end{equation*}
where $0_{\sX} \in \states{\hilb_{\sX}}$ is the zero element.  In the construction of the cochain complexes it was immediately clear that the coboundaries landed in the appropriate spaces, but it was not obvious that they squared to zero. On the other hand, the maps above clearly satisfy the square zero condition ``$\partial_{-1} \circ \partial_{0} = 0$", but their restrictions do not obviously land in the appropriate spaces (one might expect that the expressions should be compressed by left/right multiplications of support projections).  The following lemma---which can be taken to be the ``predual" of the compatibility of supports lemma---shows that these coboundaries do indeed land in the appropriate spaces.
\begin{lemma}{}{compat_of_supports_predual}
	Let $\dens{\gamma} \in \mathtt{gns}(\dens{\rho}_{\sAB})$; define $\sA^{c} := \sB$ and $\sB^{c} := \sA$.  Then $\Tr_{\sX^{c}}[\dens{\gamma}] \in \mathtt{gns}(\dens{\rho}_{\sX})$ for $\sX \in \{\sA, \sB \}$. Similarly if $\dens{\gamma} \in \mathtt{com}(\dens{\rho}_{\sAB})$, then $\Tr_{\sX^{c}}[\dens{\gamma}] \in \mathtt{com}(\dens{\rho}_{\sX})$.
\end{lemma}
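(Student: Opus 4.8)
The plan is to reduce the statement to two elementary support identities plus the bimodule property of the partial trace. First I record the membership criteria for the predual building blocks: since each $\supp_{\sX}$ is idempotent, an operator $\dens{\gamma}$ lies in $\mathtt{gns}(\dens{\rho}_{\sX}) = \supp_{\sX}\states{\hilb_{\sX}}$ if and only if $\supp_{\sX}\dens{\gamma} = \dens{\gamma}$, and it lies in $\mathtt{com}(\dens{\rho}_{\sX}) = \supp_{\sX}\states{\hilb_{\sX}}\supp_{\sX}$ if and only if $\supp_{\sX}\dens{\gamma}\supp_{\sX} = \dens{\gamma}$. Hence for the first claim it suffices to verify $\supp_{\sA}\Tr_{\sB}[\dens{\gamma}] = \Tr_{\sB}[\dens{\gamma}]$ (and the symmetric statement with $\sA,\sB$ exchanged), and for the second claim to verify $\supp_{\sA}\Tr_{\sB}[\dens{\gamma}]\supp_{\sA} = \Tr_{\sB}[\dens{\gamma}]$.

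The two tools I would invoke are: (i) the bimodule property of the partial trace, $\Tr_{\sB}[(a\otimes 1_{\sB})\,X\,(b\otimes 1_{\sB})] = a\,\Tr_{\sB}[X]\,b$ for $a,b\in\algebra{\hilb_{\sA}}$ and $X\in\algebra{\hilb_{\sAB}}$; and (ii) the one-sided support identity $(\supp_{\sA}\otimes 1_{\sB})\supp_{\sAB} = \supp_{\sAB}$. Identity (ii) is a one-line consequence of the compatibility of supports Lemma~\ref{lem:compat_of_supports}: left-multiplying $(\supp_{\sA}\otimes\supp_{\sB})\supp_{\sAB} = \supp_{\sAB}$ by $\supp_{\sA}\otimes 1_{\sB}$ and using idempotency of $\supp_{\sA}$ gives $(\supp_{\sA}\otimes 1_{\sB})\supp_{\sAB} = (\supp_{\sA}\otimes\supp_{\sB})\supp_{\sAB} = \supp_{\sAB}$. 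Because support projections are self-adjoint, taking adjoints yields the right-handed version $\supp_{\sAB}(\supp_{\sA}\otimes 1_{\sB}) = \supp_{\sAB}$ as well.

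With these in hand the two cases are short. For the gns case, $\dens{\gamma}\in\mathtt{gns}(\dens{\rho}_{\sAB})$ means $\supp_{\sAB}\dens{\gamma} = \dens{\gamma}$, so by (ii), $(\supp_{\sA}\otimes 1_{\sB})\dens{\gamma} = (\supp_{\sA}\otimes 1_{\sB})\supp_{\sAB}\dens{\gamma} = \supp_{\sAB}\dens{\gamma} = \dens{\gamma}$; pulling $\supp_{\sA}$ through the partial trace via (i) with $b=1_{\sA}$ then gives $\supp_{\sA}\Tr_{\sB}[\dens{\gamma}] = \Tr_{\sB}[(\supp_{\sA}\otimes 1_{\sB})\dens{\gamma}] = \Tr_{\sB}[\dens{\gamma}]$, as required. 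For the com case, $\dens{\gamma} = \supp_{\sAB}\dens{\gamma}\supp_{\sAB}$, and sandwiching with (ii) and its adjoint gives $(\supp_{\sA}\otimes 1_{\sB})\dens{\gamma}(\supp_{\sA}\otimes 1_{\sB}) = \dens{\gamma}$; applying (i) in full then yields $\supp_{\sA}\Tr_{\sB}[\dens{\gamma}]\supp_{\sA} = \Tr_{\sB}[\dens{\gamma}]$. The statements for $\sX=\sB$ are identical after exchanging the roles of the two tensor factors, i.e.\ tracing over $\sA$ and using $1_{\sA}\otimes\supp_{\sB}$.

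The computation is entirely routine; the only point that requires care is the passage from the two-sided identity of Lemma~\ref{lem:compat_of_supports} to the one-sided identity (ii) — recognizing that $\supp_{\sA}\otimes 1_{\sB}$, rather than the full $\supp_{\sA}\otimes\supp_{\sB}$, already fixes $\supp_{\sAB}$, since it is precisely the $\sA$-support tensored with the identity on $\sB$ that commutes past the partial trace $\Tr_{\sB}$. A secondary bookkeeping point is tracking the one-sided (gns) versus two-sided (com) compression so the correct number of support projections is pulled through. Alternatively, one could present the whole lemma as the trace-dual of the statement that the cochain coboundaries land in the correct subspaces: the maps $a\mapsto a\otimes 1_{\sB}$ and $\Tr_{\sB}$ are trace-adjoint, so membership of $\Tr_{\sX^{c}}[\dens{\gamma}]$ in the predual blocks is dual to the already-established preservation property of the cochain blocks; but the direct argument above is cleaner and self-contained.
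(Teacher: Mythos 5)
Your proof is correct and follows essentially the same route as the paper: derive the one-sided identity $(\supp_{\sA}\otimes 1_{\sB})\supp_{\sAB}=\supp_{\sAB}$ from Lem.~\ref{lem:compat_of_supports}, absorb it into $\dens{\gamma}$, and pull the support projection through the partial trace. The only cosmetic difference is that the paper verifies the bimodule property of $\Tr_{\sX^{c}}$ explicitly by pairing against an arbitrary test operator $b$ via the trace, whereas you cite it as a known identity.
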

\begin{proof}
	Begin with $\dens{\gamma} \in \mathtt{gns}(\dens{\rho}_{\sAB})$, we will show that $\Tr_{\sA}[\dens{\gamma}] \in \mathtt{gns}(\dens{\rho}_{\sB})$ by verifying $\supp_{\sB} \Tr_{\sA}[\dens{\gamma}] = \Tr_{\sA}[\dens{\gamma}]$.
	Via Lemma~\ref{lem:compat_of_supports}: $\supp_{\sAB} = (\supp_{\sA} \otimes \supp_{\sB})\supp_{\sAB} = (1_{\sA} \otimes \supp_{\sB})\supp_{\sAB}$; so, $(1_{\sA} \otimes \supp_{\sB}) \dens{\gamma} = \dens{\gamma}$.  Hence, $\Tr_{\sA}[\dens{\gamma}] = \Tr_{\sA}[(1_{\sA} \otimes \supp_{\sB})\dens{\gamma}] = \supp_{\sB} \Tr_{\sA}[\dens{\gamma}]$. To verify the last equality note that
	\begin{align*}
		\Tr \{\Tr_{\sA}[(1_{\sA} \otimes \supp_{\sB}) \gamma] b\}
		&= \Tr[(1_{\sA} \otimes \supp_{\sB}) \dens{\gamma} (1_{\sA} \otimes b)]\\
		&= \Tr[(1_{\sA} \otimes b \supp_{\sB}) \gamma] \\
		&= \Tr[\supp_{\sB} \Tr_{\sA}(\dens{\gamma}) b],
	\end{align*}
	for all $b \in \algebra{\hilb_{\sB}}$.  The proof that $\dens{\gamma} \in \mathtt{com}(\dens{\rho}_{\sAB}) \Rightarrow \Tr_{\sX^{c}}[\dens{\gamma}] \in \mathtt{com}(\dens{\rho}_{\sX})$ follows similar reasoning.
\end{proof}

\subsection{Basic Properties of the Bipartite Complexes \label{sec:bipartite:properties}}
We first explore a few simple properties and isomorphisms of the complexes we have constructed.

\subsubsection{Descent to Support Equivalence Classes}
Our (co)chain complexes are defined directly in terms of the support projections of density states.
\begin{definition}{}{support_equiv_vanilla}
	Two density states $\dens{\rho}, \dens{\varphi} \in \Dens(\hilb)$ are said to be \newword{support equivalent} if $\supp_{\dens{\rho}} = \supp_{\dens{\varphi}}$.
\end{definition}
It is not hard to see that support equivalence is an honest equivalence relation on density states.  Moreover, the following shows that this equivalence relation is compatible with partial traces.

\begin{lemma}{}{support_equivalence_partial_trace}
	If $\dens{\rho}_{\sAB},\, \dens{\varphi}_{\sAB} \in \Dens(\hilb_{\sA} \otimes \hilb_{\sB})$ are support equivalent, then their reduced density matrices are support equivalent.
\end{lemma}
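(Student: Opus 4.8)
The plan is to reduce everything to a statement about \emph{images}: since the support projection $\supp_{\dens{\rho}}$ is by definition the orthogonal projection onto $\image(\dens{\rho})$, two density states are support equivalent precisely when they have the same image. Thus the hypothesis is $\image(\dens{\rho}_{\sAB}) = \image(\dens{\varphi}_{\sAB}) =: W$, and the conclusion $\supp_{\dens{\rho}_{\sX}} = \supp_{\dens{\varphi}_{\sX}}$ (for $\sX \in \{\sA,\sB\}$) will follow at once if I show that the image of a reduced density state depends only on the image of the joint state. By symmetry it suffices to treat the $\sA$-reduction, so the whole lemma reduces to the claim: \emph{$\image\!\left(\Tr_{\sB}[\dens{\rho}_{\sAB}]\right)$ is a function of $W = \image(\dens{\rho}_{\sAB})$ alone.}

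To prove this claim I would compute $\ker(\dens{\rho}_{\sA})$ directly and exhibit it as an object built only from $W$. Fix $\psi \in \hilb_{\sA}$ and let $M_{\psi}\colon \hilb_{\sB} \to \hilb_{\sA}\otimes\hilb_{\sB}$ be the map $\eta \mapsto \psi\otimes\eta$. A short trace computation gives $\langle \psi, \dens{\rho}_{\sA}\,\psi\rangle = \Tr\!\left[M_{\psi}^{*}\,\dens{\rho}_{\sAB}\,M_{\psi}\right]$. The operator $M_{\psi}^{*}\dens{\rho}_{\sAB}M_{\psi}$ on $\hilb_{\sB}$ is positive semidefinite, so its trace vanishes if and only if the operator itself vanishes; and since $\dens{\rho}_{\sAB}\ge 0$ we may factor $\dens{\rho}_{\sAB}=\sqrt{\dens{\rho}_{\sAB}}^{\,2}$ and write $M_{\psi}^{*}\dens{\rho}_{\sAB}M_{\psi} = (\sqrt{\dens{\rho}_{\sAB}}M_{\psi})^{*}(\sqrt{\dens{\rho}_{\sAB}}M_{\psi})$, so its vanishing is equivalent to $\sqrt{\dens{\rho}_{\sAB}}\,M_{\psi}=0$, i.e.\ to $\image(M_{\psi}) = \psi\otimes\hilb_{\sB}$ lying in $\ker(\sqrt{\dens{\rho}_{\sAB}}) = \ker(\dens{\rho}_{\sAB}) = W^{\perp}$. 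Using positivity of $\dens{\rho}_{\sA}$ once more ($\psi\in\ker\dens{\rho}_{\sA} \Leftrightarrow \langle\psi,\dens{\rho}_{\sA}\,\psi\rangle = 0$), this chain of equivalences yields
\begin{align*}
	\ker(\dens{\rho}_{\sA}) = \{\psi \in \hilb_{\sA} : \psi\otimes\hilb_{\sB} \perp W\},
\end{align*}
a subspace manifestly determined by $W$. Taking orthogonal complements (and using that $\dens{\rho}_{\sA}$ is self-adjoint, so $\image = (\ker)^{\perp}$) shows $\image(\dens{\rho}_{\sA})$ depends only on $W$, proving the claim.

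With the claim in hand the lemma is immediate: applying it to both $\dens{\rho}_{\sAB}$ and $\dens{\varphi}_{\sAB}$, whose images coincide, gives $\image(\dens{\rho}_{\sA}) = \image(\dens{\varphi}_{\sA})$ and hence $\supp_{\dens{\rho}_{\sA}} = \supp_{\dens{\varphi}_{\sA}}$; the identical argument with the roles of $\sA$ and $\sB$ exchanged handles the $\sB$-reduction. I expect the only genuinely delicate point to be the positivity step---that a positive semidefinite operator of zero trace is zero, and that $\langle\psi,\dens{\rho}_{\sA}\,\psi\rangle=0$ forces $\psi\in\ker\dens{\rho}_{\sA}$---since everything else is bookkeeping with partial traces; this is exactly the flavor of reasoning already used in the proof of Lem.~\ref{lem:compat_of_supports}. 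An alternative route staying entirely within the algebraic language of \S\ref{sec:interpretation} would instead show that the vanishing left ideal $\mathfrak{N}_{\dens{\rho}_{\sA}}$ depends only on $\mathfrak{N}_{\dens{\rho}_{\sAB}}$, using Prop.~\ref{prop:vanishing_ideal_characterizations} to pass between supports and these ideals; this is equivalent but slightly heavier, so I would present the image-theoretic argument above.
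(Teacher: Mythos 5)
Your argument is correct and is essentially the paper's own proof translated from the operator picture to the vector picture: the paper shows that the null ideal $\mathfrak{N}_{\dens{\rho}_{\sA}}$ is determined by $\supp_{\dens{\rho}_{\sAB}}$ via the identity $\Tr[\dens{\rho}_{\sA}a^{*}a] = \Tr[\dens{\rho}_{\sAB}(a^{*}a\otimes 1_{\sB})]$ together with Prop.~\ref{prop:vanishing_ideal_characterizations}, while you prove the equivalent statement that $\ker(\dens{\rho}_{\sA})$ is determined by $\image(\dens{\rho}_{\sAB})$ --- indeed you name the ideal-theoretic version as your alternative route. The positivity facts you flag (a positive semidefinite operator with zero trace vanishes, and $\langle\psi,\dens{\rho}_{\sA}\psi\rangle=0$ forces $\psi\in\ker\dens{\rho}_{\sA}$) are exactly what the paper's proof also relies on, so there is no gap.
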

\begin{proof}
	Let $a \in \mathfrak{N}_{\sA}$, then by definition:
	\begin{align*}
		0 &= \Tr[\dens{\rho}_{\sA} a^{*} a] = \Tr[\dens{\rho}_{\sAB} a^{*} a \otimes 1_{\sB}]
	\end{align*}
	but the right hand side vanishes if and only if $a \otimes 1_{\sB} \in \mathfrak{N}_{\dens{\rho}_{\sAB}}= \algebra{\hilb}(1-\supp_{\dens{\rho}_{\sAB}}) = \algebra{\hilb}(1- \supp_{\dens{\varphi}_{\sAB}}) = \mathfrak{N}_{\dens{\varphi}_{\sAB}}$ (where we have used Prop.~\ref{prop:vanishing_ideal_characterizations}).  Using this we can show $\mathfrak{N}_{\dens{\rho}_{\sA}} = \mathfrak{N}_{\dens{\varphi}_{\sA}}$; it is then straightforward to verify that $\supp_{\dens{\rho}_{\sA}} = \supp_{\dens{\varphi}_{\sA}}$.  Repeating the argument we have $\supp_{\dens{\rho}_{\sB}} = \supp_{\dens{\varphi}_{\sB}}$.
\end{proof}

With Lem.~\ref{lem:support_equivalence_partial_trace} in mind, we introduce a definition of support equivalence for bipartite density states that live on the same Hilbert space decomposition.
\begin{definition}{}{}
	Two bipartite density states $\bdens{\rho}_{\sAB} = (\hilb_{\sA}, \hilb_{\sB}, \dens{\rho}_{\sAB})$ and $\bdens{\rho}_{\sAB}' = (\hilb_{\sA}, \hilb_{\sB}, \dens{\rho}_{\sAB}')$ are support equivalent if $\dens{\rho}_{\sAB}$ is support equivalent to $\dens{\rho}_{\sAB}'$.
\end{definition}

The following is immediate via definition and Lem.~\ref{lem:support_equivalence_partial_trace}.
\begin{proposition}{}{support_invariance_bipartite}
	The (co)chain complexes constructed above only depend on support equivalence classes: i.e.\ if $\bdens{\rho}_{\sAB}$ and $\bdens{\varphi}_{\sAB}$ are support equivalent then $\cpx{G}(\bdens{\rho}_{\sAB}) = \cpx{G}(\bdens{\varphi}_{\sAB})$ and $\cpx{E}(\bdens{\rho}_{\sAB}) = \cpx{E}(\bdens{\varphi}_{\sAB})$ (and similarly for the corresponding chain complexes).
\end{proposition}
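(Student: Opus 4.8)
The plan is to observe that every ingredient entering the four (co)chain complexes—the components in each degree and the (co)boundary maps—is manufactured solely from support projections, so two support-equivalent bipartite states feed identical data into the construction and produce identical complexes. The word ``immediate'' is apt: the entire argument is an unwinding of definitions, with the single nontrivial input being the propagation of support equivalence through partial traces.

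First I would pass from the hypothesis to the reduced states. By assumption $\supp_{\dens{\rho}_{\sAB}} = \supp_{\dens{\varphi}_{\sAB}}$, and Lem.~\ref{lem:support_equivalence_partial_trace} then forces the reductions to be support equivalent as well, so that $\supp_{\dens{\rho}_{\sA}} = \supp_{\dens{\varphi}_{\sA}}$ and $\supp_{\dens{\rho}_{\sB}} = \supp_{\dens{\varphi}_{\sB}}$. Hence all three support projections relevant to the complex—those on $\hilb_{\sA}$, on $\hilb_{\sB}$, and on $\hilb_{\sA}\otimes\hilb_{\sB}$—agree for $\bdens{\rho}_{\sAB}$ and $\bdens{\varphi}_{\sAB}$.

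Next I would match the components. By Def.~\ref{def:GNS_Com_def}, the building blocks $\mathtt{GNS}(\dens{\rho})$ and $\mathtt{Com}(\dens{\rho})$ (and their chain-complex predual versions $\mathtt{gns}(\dens{\rho})$, $\mathtt{com}(\dens{\rho})$) are embedded subspaces of $\algebra{\hilb}$ (resp.\ $\states{\hilb}$) defined purely through $\supp_{\dens{\rho}}$. Since the support projections coincide in each relevant degree, the corresponding building blocks are literally equal as embedded subspaces; in particular $\mathtt{GNS}(\dens{\rho}_{\sX}) = \mathtt{GNS}(\dens{\varphi}_{\sX})$ and $\mathtt{Com}(\dens{\rho}_{\sX}) = \mathtt{Com}(\dens{\varphi}_{\sX})$ for $\sX \in \{\sA,\sB,\sAB\}$, so the two complexes have the same components in every degree. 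Finally I would match the maps: the coboundaries $d_{\cpx{G}}^{-1}$, $d_{\cpx{G}}^{0}$, $d_{\cpx{E}}^{-1}$, $d_{\cpx{E}}^{0}$ are written out entirely in terms of $\supp_{\sA}$, $\supp_{\sB}$, $\supp_{\sAB}$ (for instance $d_{\cpx{G}}^{0}(a,b) = (1_{\sA}\otimes b - a\otimes 1_{\sB})\supp_{\sAB}$), while the chain boundaries $\partial_0$, $\partial_1$ are the universal trace and partial-trace maps whose only dependence on the state is through the domains on which they act—domains we have just shown to coincide, with image landing correctly by Lem.~\ref{lem:compat_of_supports_predual}. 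Equal support projections therefore yield identical maps on identical domains, giving $\cpx{G}(\bdens{\rho}_{\sAB}) = \cpx{G}(\bdens{\varphi}_{\sAB})$, $\cpx{E}(\bdens{\rho}_{\sAB}) = \cpx{E}(\bdens{\varphi}_{\sAB})$, and likewise for $\cpx{g}$ and $\cpx{e}$.

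I do not anticipate a genuine obstacle. The only point requiring care is that the conclusion is an equality on the nose—identical embedded subspaces and identical maps, not merely an isomorphism of complexes—which is exactly why the statement is phrased with $=$ rather than $\cong$; keeping the building blocks as embedded subspaces of $\algebra{\hilb}$ (rather than abstract vector spaces) is what makes this literal equality legitimate.
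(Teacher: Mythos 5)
Your proposal is correct and matches the paper's argument exactly: the paper states the proposition is "immediate via definition and Lem.~\ref{lem:support_equivalence_partial_trace}," which is precisely the route you take—propagate support equivalence to the reduced states via that lemma, then note that every component and every (co)boundary map is built solely from the support projections. Your write-up simply makes explicit what the paper leaves as immediate.
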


As a result, our (co)chain complexes, and hence their (co)homologies, will not depend on some of the finer details of density states.  In particular, if we have a density state $\dens{\rho}_{\sAB}$ with spectral decomposition
\begin{align*}
	\dens{\rho}_{\sAB} = \sum_{i \in I} \lambda_{i} \mathbbm{P}_{i}
\end{align*}
for some eigenvalues $\{\lambda_{i}\}_{i \in I} \subseteq (0,1]_{>0}$ and projections $\{\mathbbm{P}_{i}\}_{i \in I} \in \algebra{\hilb_{\sA} \otimes \hilb_{\sB}}$, then because our cohomologies will only depend on the projection
\begin{align*}
	\supp_{\sAB} = \sum_{i \in I}  \mathbbm{P}_{i},
\end{align*}
we can vary the eigenvalues $\lambda_{i}$ inside of $(0,1]_{>0}$ without changing the associated chain complexes. Furthermore, for density states of rank $>1$, there are non-commuting density states with the same support projection; hence, there are support equivalent density states with incompatible spectral decompositions.
Generalizing the results here in a manner that depends on finer details of the spectral decomposition will require a more sophisticated approach than presented in this paper.
Alternatively, this ``bug" can turn, be considered a ``feature": namely our (co)homologies are invariants of \textit{families} of density states that vary within a given support equivalence class (a linear/convex subspace of the convex space of density states).

\begin{remark}{}{}
	Every subspace of a finite dimensional Hilbert space can be realized as the image of the support projection of some density state.  With this observation we have a bijective correspondence:
	\begin{center}
		\begin{tikzpicture}
			\tikzstyle{block} = [rectangle, thick,
			text width=10em, text centered, rounded corners, minimum height=2em]

			\node at (-3,0)[block,draw=black,very thick] (supeq) {Pairs ($\hilb_{\sA}, \hilb_{\sB})$ of Hilbert spaces along with a support equivalence class of states $\rho_{\sAB} \in \Dens(\hilb_{\sA} \otimes \hilb_{\sB})$.};

			\node at (3,0)[block,draw=black,very thick] (subspaces) {Pairs $(\hilb_{\sA}, \hilb_{\sB})$ of Hilbert spaces along with a subspace $V \leq \hilb_{\sA} \otimes \hilb_{\sB}$.};

			\draw[<->, >=open triangle 45] (supeq) to (subspaces);
		\end{tikzpicture}
	\end{center}
	so by Prop.~\ref{prop:support_invariance_bipartite} we can think of our (co)chain complexes as associated to the data of a pair $(\hilb_{\sA}, \hilb_{\sB})$ along with a subspace $V \leq \hilb_{\sA} \otimes \hilb_{\sB}$.
	However, to emphasize the applicability to individual states in quantum mechanics (and with an eye toward generalizability) we continue to think of our (co)chain complexes as associated to density states.\footnote{In infinite dimensions this bijective correspondence holds true with subspaces $V \subseteq \hilb_{\sA} \otimes \hilb_{\sB}$ that are closed and of countable dimension (separable).  From a von Neumann algebraic perspective, such subspaces are in correspondence with (weakly closed) left ideals of $\algebra{\hilb_{\sA}} \otimes_{\mathrm{sp}} \algebra{\hilb_{\sB}} := \algebra{\hilb_{\sA} \otimes \hilb_{\sB}}$, where ``$\otimes_{\mathrm{sp}}$" is the spatial tensor product of von Neumann algebras.}
\end{remark}

\subsubsection{Trace duality for Chains and Cochains}
Recall that the dual of a chain complex (of $\mathbb{C}$-vector spaces):
\begin{align*}
	\cpx{c} = \cdots \overset{\partial_{-2}}{\longleftarrow}
	\cpx{c}_{-2} \overset{\partial_{-1}}{\longleftarrow} \cpx{c}_{-1}
	\overset{\partial_{0}}{\longleftarrow} \cpx{c}_{0}
	\overset{\partial_{1}}{\longleftarrow} \cpx{c}_{1}
	\overset{\partial_{2}}{\longleftarrow} \cpx{c}_{2}
	\overset{\partial_{3}}{\longleftarrow} \cdots
\end{align*}
is the cochain complex $\cpx{c}^{\vee}$ given by:
\begin{align*}
	\cpx{c}^{\vee} = \cdots \overset{\left(\partial_{-2}\right)^{\vee}}{\longrightarrow}
	(\cpx{c}_{-2})^{\vee} \overset{\left(\partial_{-1}\right)^{\vee}}{\longrightarrow} (\cpx{c}_{-1})^{\vee}
	\overset{\left(\partial_{0}\right)^{\vee}}{\longrightarrow} (\cpx{c}_{0})^{\vee}
	\overset{\left(\partial_{1}\right)^{\vee}}{\longrightarrow} (\cpx{c}_{1})^{\vee}
	\overset{\left(\partial_{2}\right)^{\vee}}{\longrightarrow} (\cpx{c}_{2})^{\vee}
	\overset{\left(\partial_{3}\right)^{\vee}}{\longrightarrow} \cdots
\end{align*}
i.e.\ the cochain complex with components $(\cpx{c}^{\vee})^{k} := (\cpx{c}_{k})^{\vee} = \mathrm{Hom}(\cpx{c}^{k}, \mathbb{C})$ and coboundaries 
\begin{align*}
	d^{k} := (\partial_{k+1})^{\vee}: (\cpx{c}^{\vee})^{k} &\longrightarrow (\cpx{c}^{\vee})^{k+1}\\
	\omega &\longmapsto \omega \circ \partial_{k+1}.
\end{align*}
At the end of \S\ref{sec:building_blocks} we showed that the building blocks for cochain complexes were isomorphic to the duals of chain complexes using restrictions of the map $(-)^{\Tr}$ defined by partial application of the trace pairing.  We can use these isomorphisms to define componentwise maps:
\begin{equation}
	\xymatrix{
		\cdots \ar[r] &
		0 \ar[r] \ar[d] & \mathbb{C} \ar[r] \ar[d]^{\sigma_{\cpx{G}}^{-1}=\mathrm{id}} & 
		\mathtt{GNS}(\dens{\rho}_{\sA}) \times \mathtt{GNS}(\dens{\rho}_{\sB}) \ar[r] \ar[d]^{\sigma_{\cpx{G}}^{0}= (-)_{\sA}^{\Tr} \times (-)_{\sB}^{\Tr} } & 
		\mathtt{GNS}(\dens{\rho}_{\sAB}) \ar[r] \ar[d]^{\sigma_{\cpx{G}}^{1}=(-)^{\Tr}_{\sAB}} &
		0 \ar[r] \ar[d] &
		\cdots\\
		\cdots \ar[r] &
		0 \ar[r] & 
		\mathbb{C} \ar[r]  & 
		\left[\mathtt{gns}(\dens{\rho}_{\sA}) \right]^{\vee} \times \left[\mathtt{gns}(\dens{\rho}_{\sB}) \right]^{\vee} \ar[r] & 
		\left[\mathtt{gns}(\dens{\rho}_{\sAB})\right]^{\vee} \ar[r]  &
		0 \ar[r] & \cdots
	}
\end{equation}
and
\begin{equation}
	\xymatrix{
		\cdots \ar[r] &
		0 \ar[r] \ar[d] & \mathbb{C} \ar[r] \ar[d]^{\sigma_{\cpx{E}}^{-1}=\mathrm{id}} & 
		\mathtt{Com}(\dens{\rho}_{\sA}) \times \mathtt{Com}(\dens{\rho}_{\sB}) \ar[r] \ar[d]^{\sigma_{\cpx{E}}^{0}= (-)_{\sA}^{\Tr} \times (-)_{\sB}^{\Tr} } & 
		\mathtt{Com}(\dens{\rho}_{\sAB}) \ar[r] \ar[d]^{\sigma_{\cpx{E}}^{1}=(-)^{\Tr}_{\sAB}} &
		0 \ar[r] \ar[d] &
		\cdots\\
		\cdots \ar[r] &
		0 \ar[r] & 
		\mathbb{C} \ar[r]  & 
		\left[\mathtt{com}(\dens{\rho}_{\sA}) \right]^{\vee} \times \left[\mathtt{com}(\dens{\rho}_{\sB}) \right]^{\vee} \ar[r] & 
		\left[\mathtt{com}(\dens{\rho}_{\sAB})\right]^{\vee} \ar[r]  &
		0 \ar[r] & \cdots
	}
\end{equation}
One can check that all squares above commute: i.e.\ our componentwise defined maps commute with coboundaries and so are honest chain maps (which is a fancy manifestation of the trace-pairing duality/adjunction between tensoring by the identity maps and partial traces); so we have the following.\footnote{In Prop.~\ref{prop:trace_duality_multipartite} one can find a sketch of a proof of a generalization of Prop.~\ref{prop:bipartite_cocomplex_to_dualcomplex} to the multipartite situation.}
\begin{proposition}{}{bipartite_cocomplex_to_dualcomplex}
	Componentwise application of restrictions of $(-)^{\Tr}$ defines a cochain isomorphisms.
	\begin{align*}
		\sigma_{\mathrm{G}}: \cpx{G}(\bdens{\rho}_{\sAB}) &\overset{\sim}{\longrightarrow} \left[\cpx{g}(\bdens{\rho}_{\sAB}) \right]^{\vee}\\
		\sigma_{\mathrm{E}}: \cpx{E}(\bdens{\rho}_{\sAB}) &\overset{\sim}{\longrightarrow} \left[\cpx{e}(\bdens{\rho}_{\sAB}) \right]^{\vee}.
	\end{align*}
\end{proposition}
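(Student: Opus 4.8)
The plan is to establish the two properties whose conjunction makes a componentwise family of maps a cochain isomorphism: that each component $\sigma^{k}$ is a linear isomorphism, and that the family intertwines the (co)boundaries. The first property is essentially free. We have $\sigma_{\cpx{G}}^{-1} = \sigma_{\cpx{E}}^{-1} = \mathrm{id}$, while $\sigma_{\cpx{G}}^{1} = (-)^{\Tr}_{\sAB}$ and $\sigma_{\cpx{E}}^{1} = (-)^{\Tr}_{\sAB}$ are isomorphisms by the restriction isomorphisms \eqref{eq:GNS_dual_iso} and \eqref{eq:Com_dual_iso} established at the end of \S\ref{sec:building_blocks}; finally $\sigma^{0}$ is a product of two such restriction isomorphisms, hence an isomorphism once one invokes the canonical identification $(V \oplus W)^{\vee} \cong V^{\vee} \times W^{\vee}$ (this is precisely the bookkeeping recorded by the switch between $\oplus$ and $\times$ in the two complexes). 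Since a cochain morphism that is a componentwise isomorphism is automatically a cochain isomorphism, everything reduces to checking that the two nontrivial squares in each of the two displayed diagrams commute.

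First I would dispatch the left-hand square, i.e.\ the identity $\sigma^{0} \circ d^{-1} = (\partial_{0})^{\vee} \circ \sigma^{-1}$. Both sides are linear functionals on $\mathtt{gns}(\dens{\rho}_{\sA}) \oplus \mathtt{gns}(\dens{\rho}_{\sB})$ (resp.\ its commutant analogue), so it suffices to evaluate on $\dens{\alpha} \oplus \dens{\beta}$, where commutativity collapses to the two scalar identities $\Tr[\dens{\alpha}\, \supp_{\sA}] = \Tr[\dens{\alpha}]$ and $\Tr[\dens{\beta}\, \supp_{\sB}] = \Tr[\dens{\beta}]$. These hold because elements of the predual building blocks absorb their own support projection (e.g.\ $\supp_{\sA}\dens{\alpha} = \dens{\alpha}$ for $\dens{\alpha} \in \mathtt{gns}(\dens{\rho}_{\sA})$), after which idempotency of $\supp_{\sA}$ and cyclicity of the trace finish it.

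The heart of the argument is the right-hand square, $\sigma^{1} \circ d^{0} = (\partial_{1})^{\vee} \circ \sigma^{0}$, which I would verify by pairing both sides against an arbitrary $\dens{\gamma}$ in the top chain degree. On the cochain side, $(\sigma^{1} \circ d_{\cpx{G}}^{0})(a,b)$ evaluated at $\dens{\gamma}$ equals $\Tr[\dens{\gamma}\,(1_{\sA} \otimes b - a \otimes 1_{\sB})\,\supp_{\sAB}]$; using $\supp_{\sAB}\dens{\gamma} = \dens{\gamma}$ for $\dens{\gamma} \in \mathtt{gns}(\dens{\rho}_{\sAB})$ together with cyclicity, the trailing support projection is absorbed and this becomes $\Tr[\dens{\gamma}(1_{\sA} \otimes b)] - \Tr[\dens{\gamma}(a \otimes 1_{\sB})]$. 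Each term is then rewritten by the defining adjunction of the partial trace, namely $\Tr[\dens{\gamma}(1_{\sA} \otimes b)] = \Tr[\Tr_{\sA}[\dens{\gamma}]\, b]$ and $\Tr[\dens{\gamma}(a \otimes 1_{\sB})] = \Tr[\Tr_{\sB}[\dens{\gamma}]\, a]$. On the chain side, unwinding $(\partial_{1})^{\vee} \circ \sigma^{0}$ at $\dens{\gamma}$ via $\partial_{1}(\dens{\gamma}) = (-\Tr_{\sB}[\dens{\gamma}]) \oplus (\Tr_{\sA}[\dens{\gamma}])$ yields exactly $\Tr[\Tr_{\sA}[\dens{\gamma}]\, b] - \Tr[\Tr_{\sB}[\dens{\gamma}]\, a]$, matching term by term.

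I expect the only real pitfalls to be bookkeeping rather than conceptual: matching the sign-and-swap built into $\partial_{1}$ against the sign in $d^{0}$, and confirming that the dual map $(\partial_{1})^{\vee}$ is even well-defined---i.e.\ that $\partial_{1}$ genuinely lands in $\mathtt{gns}(\dens{\rho}_{\sA}) \oplus \mathtt{gns}(\dens{\rho}_{\sB})$ (resp.\ the commutant predual), which is exactly the content of Lem.~\ref{lem:compat_of_supports_predual}. The commutant case runs verbatim, the sole difference being that the two-sided compression $\supp_{\sAB}(-)\supp_{\sAB}$ appearing in $d_{\cpx{E}}^{0}$ is absorbed using the two-sided condition $\supp_{\sAB}\dens{\gamma}\supp_{\sAB} = \dens{\gamma}$ on $\mathtt{com}(\dens{\rho}_{\sAB})$; so I would write the GNS case in detail and then remark that the commutant verification is identical. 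A cleaner, coordinate-free alternative would be to note that the entire statement is the functoriality of the trace-pairing adjunction between ``tensoring by an identity'' and ``partial trace,'' but the explicit pairing above is the most transparent route and generalizes directly to the multipartite setting of Prop.~\ref{prop:trace_duality_multipartite}.
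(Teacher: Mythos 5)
Your proposal is correct and follows the same route the paper takes: the paper asserts the commutativity of the two squares as ``a fancy manifestation of the trace-pairing duality/adjunction between tensoring by the identity maps and partial traces,'' and spells out the corresponding pairing identity only in the multipartite generalization (Prop.~\ref{prop:trace_duality_multipartite}). You have simply supplied the details the paper leaves to the reader, including the correct appeal to Lem.~\ref{lem:compat_of_supports_predual} for well-definedness and the absorption of support projections via cyclicity of the trace.
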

The cohomology of a dual of a chain complex (of $\mathbb{C}$-vector spaces) is isomorphic to the dual of the homology of the chain complex: i.e.\ $H^{k}[\cpx{c}^{\vee}] \overset{\sim}{\longrightarrow} (H_{k}[\cpx{c}])^{\vee}$ via the (well-defined) map\footnote{If we were working with chain complexes $R$-modules for some ring $R$, this map would have a kernel given by $\mathrm{Ext}^{1}(\mathrm{c}_{k-1},R)$ This group vanishes if $R$ is a field (i.e.\ complexes of vector spaces over field).}
\begin{equation*}
	\begin{array}{lccl}
		h_{\cpx{c}}^{k}:& H^{k}[\cpx{c}^{\vee}] & \longrightarrow & \left(H_{k}[\cpx{c}]\right)^{\vee} \\
		{}&    [f: c \mapsto f(c)] & \longmapsto & ([c] \mapsto f(c))
	\end{array}
\end{equation*}
where the square brackets on the second line denote classes in (co)homology (e.g.\ $[c]$ denotes the homology class of the chain $c$).  Hence, we have isomorphisms. 
\begin{align*}
	h_{\cpx{G}}^{k} \circ (\sigma^{k}_{\cpx{G}})_{*}: H^{k}\left[\cpx{G}(\bdens{\rho})\right] &\overset{\sim}{\longrightarrow} H_{k}\left[\cpx{g}(\bdens{\rho})\right]^{\vee},\\
	h_{\cpx{E}}^{k} \circ (\sigma^{k}_{\cpx{E}})_{*}: H^{k}\left[\cpx{G}(\bdens{\rho})\right] &\overset{\sim}{\longrightarrow} H_{k}\left[\cpx{e}(\bdens{\rho})\right]^{\vee}.
\end{align*}
As a result of this isomorphism, we will specialize most of our results to the cochain/cohomological realm, with the understanding that the appropriate version is given by taking the appropriate duals (at least in finite dimensions).

\subsubsection{Equivariance Under Permutation of Tensor Factors}
The following is a straightforward exercise.
\begin{proposition}{}{}
	Let $\bdens{\rho}_{\sAB} = ((\hilb_{\sA}, \hilb_{\sB}), \dens{\rho}_{\sAB})$ be a bipartite density state; suppose $u: \hilb_{\sA} \otimes \hilb_{\sB} \rightarrow \hilb_{\sB} \otimes \hilb_{\sA}$ is the obvious reshuffling map and define the bipartite density state $\bdens{\rho}_{\sB \sA} := ((\hilb_{\sB}, \hilb_{\sA}), u \dens{\rho} u^{*})$.  Then there are (canonical) chain isomorphisms 
	\begin{align*}
		U_{\cpx{G}}: \cpx{G}(\bdens{\rho}_{\sAB}) &\overset{\sim}{\longrightarrow} \cpx{G}(\bdens{\rho}_{\sB \sA})\\
		U_{\cpx{E}}: \cpx{E}(\bdens{\rho}_{\sAB}) &\overset{\sim}{\longrightarrow} \cpx{E}(\bdens{\rho}_{\sB \sA}). 
	\end{align*}
\end{proposition}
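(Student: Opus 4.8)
The plan is to construct $U_{\cpx{G}}$ and $U_{\cpx{E}}$ degree by degree out of the reshuffling map $u$, which is unitary ($u^{*}=u^{-1}$). In each case the natural candidate is: the identity in degree $-1$, the coordinate swap in degree $0$, and a \emph{signed} conjugation by $u$ in degree $1$. Explicitly, for the GNS complex I would set
\begin{align*}
	U^{-1}_{\cpx{G}} &= \mathrm{id}_{\mathbb{C}}, &
	U^{0}_{\cpx{G}}(a,b) &= (b,a), &
	U^{1}_{\cpx{G}}(r) &= -\,u\,r\,u^{*},
\end{align*}
and identically for $U_{\cpx{E}}$ (the two differ only in which building block is the ambient space). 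The minus sign in degree $1$ is the essential ingredient: it is the \v{C}ech orientation sign attached to the transposition $(\sA\,\sB)$, which reverses the orientation of the single edge $F_{\sAB}$ while fixing each vertex. I expect that, as in ordinary simplicial theory, the bare (unsigned) conjugation makes the degree $0\to 1$ square \emph{anti}commute, so keeping careful track of this sign is the one genuinely delicate point.

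First I would check that each component map lands in, and is a bijection onto, the correct space. Since $u$ is unitary, conjugation $r\mapsto uru^{*}$ is an algebra isomorphism $\algebra{\hilb_{\sA}\otimes\hilb_{\sB}}\xrightarrow{\sim}\algebra{\hilb_{\sB}\otimes\hilb_{\sA}}$, and it sends $\image(\dens{\rho}_{\sAB})$ to $\image(u\dens{\rho}_{\sAB}u^{*})$, hence $u\supp_{\sAB}u^{*}$ is exactly the support projection of the swapped state $u\dens{\rho}_{\sAB}u^{*}$. Therefore conjugation carries $\mathtt{GNS}(\dens{\rho}_{\sAB})=\algebra{\hilb_{\sA}\otimes\hilb_{\sB}}\supp_{\sAB}$ onto $\mathtt{GNS}(u\dens{\rho}_{\sAB}u^{*})$ and $\mathtt{Com}(\dens{\rho}_{\sAB})=\supp_{\sAB}\algebra{\hilb_{\sA}\otimes\hilb_{\sB}}\supp_{\sAB}$ onto $\mathtt{Com}(u\dens{\rho}_{\sAB}u^{*})$, giving the degree $1$ isomorphisms. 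For degree $0$ I would use naturality of the partial trace under the swap to show the reduced states of $\bdens{\rho}_{\sB\sA}$ on $\hilb_{\sB}$ and $\hilb_{\sA}$ coincide respectively with $\dens{\rho}_{\sB}$ and $\dens{\rho}_{\sA}$; consequently $\mathtt{GNS}(\dens{\rho}_{\sB})\times\mathtt{GNS}(\dens{\rho}_{\sA})$ is the degree $0$ component of the swapped complex and the coordinate flip is manifestly a bijection (and likewise for $\mathtt{Com}$).

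Next I would verify the two commuting squares. The computational heart is the pair of identities $u(1_{\sA}\otimes b)u^{*}=b\otimes 1_{\sA}$ and $u(a\otimes 1_{\sB})u^{*}=1_{\sB}\otimes a$, together with $u\supp_{\sAB}u^{*}=\supp_{\sB\sA}$. For the degree $0\to1$ square of $\cpx{G}$ one then computes
\begin{align*}
	U^{1}_{\cpx{G}}\bigl(d^{0}_{\cpx{G}}(a,b)\bigr)
	= -\,u\,(1_{\sA}\otimes b - a\otimes 1_{\sB})\,\supp_{\sAB}\,u^{*}
	= (1_{\sB}\otimes a - b\otimes 1_{\sA})\,\supp_{\sB\sA},
\end{align*}
which is precisely $d^{0}_{\cpx{G}}(b,a)=d^{0}_{\cpx{G}}\bigl(U^{0}_{\cpx{G}}(a,b)\bigr)$ in the swapped complex; note how the sign exactly absorbs the orientation reversal produced by conjugation. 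The degree $-1\to0$ square is immediate from $u\supp_{\sX}u^{*}=\supp_{\sX}$ on each reduced factor (no sign is needed there), and the commutant case runs identically after flanking every expression by $\supp_{\sAB}$. Since every component map is an isomorphism and all squares commute, $U_{\cpx{G}}$ and $U_{\cpx{E}}$ are chain isomorphisms.

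In short, the only nontrivial obstacle is bookkeeping the transposition sign so that degree $1$ commutes rather than anticommutes; once that is pinned down, the rest reduces to the elementary unitarity of $u$ and the naturality of partial traces.
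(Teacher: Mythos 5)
The paper states this proposition without proof (it is flagged as ``a straightforward exercise''), so there is no argument to compare against; your proof is correct and complete. All three component checks go through: $u\supp_{\sAB}u^{*}$ is indeed the support projection of $u\dens{\rho}_{\sAB}u^{*}$ because $u$ is unitary, the reduced states of the swapped state are $\dens{\rho}_{\sB}$ and $\dens{\rho}_{\sA}$ by the duality $\Tr[\Tr_{\sA}(u\dens{\rho}u^{*})b]=\Tr[\dens{\rho}(1_{\sA}\otimes b)]$, and your degree-$(0\to1)$ computation correctly shows that unsigned conjugation \emph{anti}commutes with the coboundary, so the transposition sign $-1$ on $U^{1}$ (equivalently, $\mathrm{sgn}(\sigma|_{T})$ on the $T$-component, which is $+1$ on singletons) is genuinely required. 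This is the one subtle point and you got it right; it is worth noting that the paper's later general statement on permutation equivariance (Prop.~in \S\ref{sec:multipartite_permutation_equivariance}) writes the componentwise conjugation maps with no signs at all, and your explicit bipartite calculation shows that as written those maps would only anticommute with the differentials for odd permutations --- so your version is the one that actually assembles into a cochain morphism.
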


\subsubsection{Equivariance under Local Unitary/Invertible Transformations}
If the cochain complexes defined so far encapsulate interesting information about the entanglement properties of bipartite states, then one should expect that they transform nicely under ``local" unitary transformations: i.e.\ unitary transformations on each tensor factor.\footnote{The adjective ``local" is being used synonymously here with ``tensor factor", however nothing about our setup implies that the tensor factors are associated to any specific underlying geometry.}  We formalize the definition of such transformations with the notation we have developed.
As we will see, it is helpful to give a generalize our definition to include arbitrary invertible transformations that might not be unitary.
\begin{definition}[label=def:local_invertible_bipartite]{}{}
	Let $\bdens{\rho}_{\sAB} := (\hilb_{\sA}, \hilb_{\sB}, \dens{\rho}_{\sAB})$ and $\bdens{\varphi}_{\sAB} = (\mathcal{K}_{\sA}, \mathcal{K}_{\sB}, \dens{\varphi}_{\sAB})$ be bipartite density states.  
	A \newword{local invertible transformation} $\boldsymbol{l}: \bdens{\rho_{\sAB}} \rightarrow \bdens{\varphi}_{\sAB}$ is a pair of invertible linear maps\footnote{In infinite dimensions we require these to be bounded.} $(l_{\sA}: \hilb_{\sA} \rightarrow \mathcal{K}_{\sB},\, l_{\sB}: \hilb_{\sA} \rightarrow \mathcal{K}_{\sA})$ such that $\dens{\varphi}_{\sAB} = l \dens{\rho}_{\sAB} l^{-1}$.
	When $l_{\sA}$ and $l_{\sB}$ are unitary ($l_{\sX}^{-1} = l_{\sX}^{*}$), we call the pair $(l_{\sA}, l_{\sB})$ a \newword{local unitary transformation}.
\end{definition}

\begin{lemma}{}{building_block_unitary_equivariance}
	Let $l: \hilb \rightarrow \mathcal{K}$ be an invertible linear transformation\footnote{Bounded for infinite dimensional Hilbert spaces.} and $\dens{\rho} \in \Dens(\hilb)$, then the appropriate restrictions of the map
	\begin{align*}
		\algebra{\hilb} &\longrightarrow \algebra{\mathcal{K}}\\
		r &\longmapsto l r l^{-1}
	\end{align*} 
	gives isomorphisms of vector spaces: 
	\begin{align*}
		\mathtt{GNS}(l): \mathtt{GNS}(\dens{\rho}) &\overset{\sim}{\longrightarrow} \mathtt{GNS}(l \dens{\rho} l^{-1}),\\
		\mathtt{Com}(l): \mathtt{Com}(\dens{\rho}) &\overset{\sim}{\longrightarrow} \mathtt{Com}(l \dens{\rho} l^{-1}).
	\end{align*}
\end{lemma}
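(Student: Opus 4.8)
The plan is to recognize the map $r \mapsto l r l^{-1}$ as a unital $\mathbb{C}$-algebra isomorphism $\mathrm{Ad}_{l}: \algebra{\hilb} \overset{\sim}{\longrightarrow} \algebra{\mathcal{K}}$, with two-sided inverse $\mathrm{Ad}_{l^{-1}}$, and then to observe that both building blocks are manufactured from the single idempotent $\supp_{\dens{\rho}}$ by operations that any algebra isomorphism respects. Since $\mathrm{Ad}_{l}$ preserves products, it carries the left submodule $\mathtt{GNS}(\dens{\rho}) = \algebra{\hilb}\,\supp_{\dens{\rho}}$ onto $\algebra{\mathcal{K}}\,\mathrm{Ad}_{l}(\supp_{\dens{\rho}})$, and the subalgebra $\mathtt{Com}(\dens{\rho}) = \supp_{\dens{\rho}}\algebra{\hilb}\supp_{\dens{\rho}}$ onto $\mathrm{Ad}_{l}(\supp_{\dens{\rho}})\,\algebra{\mathcal{K}}\,\mathrm{Ad}_{l}(\supp_{\dens{\rho}})$. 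Because $\mathrm{Ad}_{l}$ is a bijection whose inverse $\mathrm{Ad}_{l^{-1}}$ restricts back to these subspaces, the restricted maps are automatically isomorphisms; no dimension count is required.

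Everything thus reduces to the single identity $\mathrm{Ad}_{l}(\supp_{\dens{\rho}}) = l\,\supp_{\dens{\rho}}\,l^{-1} = \supp_{l\dens{\rho}l^{-1}}$, which I regard as the heart of the argument. First I would record the two subspace identities that hold purely from invertibility of $l$: $\image(l\dens{\rho}l^{-1}) = l\,\image(\dens{\rho})$ (because $l^{-1}$ is onto $\hilb$) and $\ker(l\dens{\rho}l^{-1}) = l\,\ker(\dens{\rho})$ (because $l$ is injective). Consequently $l\,\supp_{\dens{\rho}}\,l^{-1}$ is an idempotent whose range is $\image(l\dens{\rho}l^{-1})$ and whose kernel is $l\,\ker(\dens{\rho}) = l\,\image(\dens{\rho})^{\perp}$.

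The remaining point is that $l\,\supp_{\dens{\rho}}\,l^{-1}$ is the \emph{orthogonal} projection onto $\image(l\dens{\rho}l^{-1})$, i.e.\ literally equals the support projection of $l\dens{\rho}l^{-1}$, and this is where I expect the main obstacle to sit. For non-unitary $l$ the idempotent $l\,\supp_{\dens{\rho}}\,l^{-1}$ is in general an \emph{oblique} projection, and the transformed operator $l\dens{\rho}l^{-1}$ need not even be self-adjoint, so the identity can fail at this level of generality. It is rescued precisely when $l\dens{\rho}l^{-1}$ is again a density state, which is exactly what is built into the definition of a local invertible transformation (Def.~\ref{def:local_invertible_bipartite}): self-adjointness of $l\dens{\rho}l^{-1}$ forces its kernel to be the orthogonal complement of its image, so that $l\,\image(\dens{\rho})^{\perp} = \bigl(l\,\image(\dens{\rho})\bigr)^{\perp}$, whence $l\,\supp_{\dens{\rho}}\,l^{-1}$ is a self-adjoint idempotent with the correct range and therefore \emph{is} $\supp_{l\dens{\rho}l^{-1}}$. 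For unitary $l$ the conclusion is immediate, since $l\,\supp_{\dens{\rho}}\,l^{-1} = l\,\supp_{\dens{\rho}}\,l^{*}$ is visibly self-adjoint.

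Finally I would assemble the pieces: substituting $\supp_{l\dens{\rho}l^{-1}} = l\,\supp_{\dens{\rho}}\,l^{-1}$ into the two image computations of the first paragraph yields $\mathrm{Ad}_{l}\bigl(\mathtt{GNS}(\dens{\rho})\bigr) = \mathtt{GNS}(l\dens{\rho}l^{-1})$ and $\mathrm{Ad}_{l}\bigl(\mathtt{Com}(\dens{\rho})\bigr) = \mathtt{Com}(l\dens{\rho}l^{-1})$, so the restrictions $\mathtt{GNS}(l)$ and $\mathtt{Com}(l)$ are the claimed vector-space isomorphisms. As a bonus worth recording, because $\mathrm{Ad}_{l}$ is an algebra map these isomorphisms intertwine the left $\algebra{\hilb}$- and $\algebra{\mathcal{K}}$-module structures along $\mathrm{Ad}_{l}$, which is the structural input that makes the complexes $\cpx{G}$ and $\cpx{E}$ equivariant under local invertible transformations in the results that follow.
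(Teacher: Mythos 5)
Your proof is correct and follows the same route as the paper's: conjugation by $l$ is an algebra isomorphism, so everything hinges on the single identity $\supp_{l\dens{\rho}l^{-1}} = l\,\supp_{\dens{\rho}}\,l^{-1}$, after which the building blocks $\algebra{\hilb}\,\supp_{\dens{\rho}}$ and $\supp_{\dens{\rho}}\algebra{\hilb}\supp_{\dens{\rho}}$ are carried onto their counterparts. The paper simply asserts that identity, whereas you prove it and correctly isolate the hypothesis it needs: for non-unitary $l$ the idempotent $l\,\supp_{\dens{\rho}}\,l^{-1}$ is a priori oblique, and it coincides with the orthogonal support projection precisely because $l\dens{\rho}l^{-1}$ is again self-adjoint, which is implicit in the lemma's use (Def.~\ref{def:local_invertible_bipartite} requires $l\dens{\rho}l^{-1} \in \Dens(\mathcal{K})$) though not in its literal hypotheses. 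This is a genuine improvement in rigor on the paper's two-line argument, not a different method.
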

\begin{proof}
	Note that $\supp_{l \dens{\rho} l^{-1}} = l \supp_{\dens{\rho}} l^{-1}$.  Hence, if $r \supp_{\dens{\rho}} = r$ we must have
	\begin{align*}
		l r l^{-1} \supp_{l \dens{\rho} l^{-1}} = l r l^{-1} l\supp_{\dens{\rho}} l^{-1} =  l r l^{-1}
	\end{align*} 
	thus, $r \in \mathtt{GNS}(\dens{\rho}) \Rightarrow l r l^{-1} \in \mathtt{GNS}(l \dens{\rho} l^{-1})$; by a similar argument, we have $r \in \mathtt{Com}(\dens{\rho}) \Rightarrow l r l^{-1} \in \mathtt{Com}(l \dens{\rho} l^{-1})$.  Showing these maps are isomorphisms is straightforward.
\end{proof}

\begin{theorem}{}{bipartite_invertible_equivariance}
	Let $\bdens{\rho}_{\sAB} := (\hilb_{\sA}, \hilb_{\sB}, \dens{\rho}_{\sAB})$ and $\bdens{\varphi}_{\sAB} = (\mathcal{K}_{\sA}, \mathcal{K}_{\sB}, \dens{\varphi}_{\sAB})$ be bipartite density states; suppose $\boldsymbol{l}: \bdens{\rho_{\sAB}} \rightarrow \bdens{\varphi}_{\sAB}$ is a local unitary transformation given by $(l_{\sA}: \hilb_{\sA} \rightarrow \mathcal{K}_{\sA}, l_{\sB}: \hilb_{\sB} \rightarrow \mathcal{K}_{\sB})$, then there are induced chain isomorphisms
	\begin{align*}
		\cpx{G}(\boldsymbol{l}): \cpx{G}(\bdens{\rho}_{\sAB}) &\overset{\sim}{\longrightarrow} \cpx{G}(\bdens{\varphi}_{\sAB})\\
		\cpx{E}(\boldsymbol{l}): \cpx{E}(\bdens{\rho}_{\sAB}) &\overset{\sim}{\longrightarrow} \cpx{E}(\bdens{\varphi}_{\sAB})
	\end{align*}
	given componentwise by
	\begin{equation*}
		\xymatrix{
			\cdots \ar[r] &
			0 \ar[r] \ar[d] & \mathbb{C} \ar[r] \ar[d]|-{\cpx{G}^{-1}(\boldsymbol{l})=\mathrm{id}} & 
			\mathtt{GNS}(\dens{\rho}_{\sA}) \times \mathtt{GNS}(\dens{\rho}_{\sB}) \ar[r] \ar[d]|{\cpx{G}^{0}(\boldsymbol{l})= \mathtt{GNS}(l_{\sA}) \times \mathtt{GNS}(l_{\sB})} & 
			\mathtt{GNS}(\dens{\rho}_{\sAB}) \ar[r] \ar[d]|-{\cpx{G}^{1}(\boldsymbol{l}) =\mathtt{GNS}(l_{\sA} \otimes l_{\sB})} &
			0 \ar[r] \ar[d] &
			\cdots\\
			\cdots \ar[r] &
			0 \ar[r] & 
			\mathbb{C} \ar[r]  & 
			\mathtt{GNS}(l_{\sA} \dens{\rho}_{\sA} l_{\sA}^{-1}) \times \mathtt{GNS}(l_{\sB} \dens{\rho}_{\sAB} l_{\sB}^{-1}) \ar[r] & 
			\mathtt{GNS}(l_{\sAB} \dens{\rho}_{\sAB} l_{\sAB}^{-1}) \ar[r]  &
			0 \ar[r] & \cdots
		}
	\end{equation*}
	and
	\begin{equation*}
		\xymatrix{
			\cdots \ar[r] &
			0 \ar[r] \ar[d] & \mathbb{C} \ar[r] \ar[d]|-{\cpx{E}^{-1}(\boldsymbol{l})=\mathrm{id}} & 
			\mathtt{Com}(\dens{\rho}_{\sA}) \times \mathtt{Com}(\dens{\rho}_{\sB}) \ar[r] \ar[d]|-{\cpx{E}^{0}(\boldsymbol{l})= \mathtt{Com}(l_{\sA}) \times \mathtt{Com}(l_{\sB})} & 
			\mathtt{Com}(\dens{\rho}_{\sAB}) \ar[r] \ar[d]|-{\cpx{E}^{1}(\boldsymbol{l}) = \mathtt{Com}(l_{\sA} \otimes l_{\sB})} &
			0 \ar[r] \ar[d] &
			\cdots\\
			\cdots \ar[r] &
			0 \ar[r] & 
			\mathbb{C} \ar[r]  & 
			\mathtt{Com}(l_{\sA} \dens{\rho}_{\sA} l_{\sA}) \times \mathtt{Com}(l_{\sB} \dens{\rho}_{\sAB} l_{\sB}) \ar[r] & 
			\mathtt{Com}(l_{\sAB} \dens{\rho}_{\sAB} l_{\sAB}) \ar[r]  &
			0 \ar[r] & \cdots
		}
	\end{equation*}
\end{theorem}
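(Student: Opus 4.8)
The plan is to show that the two displayed diagrams are genuine cochain isomorphisms by checking two things: that each vertical arrow is a well-defined isomorphism \emph{onto the correct component} of the target complex, and that every square commutes. The componentwise isomorphism property is essentially already available. Lem.~\ref{lem:building_block_unitary_equivariance} says that conjugation $r \mapsto l_{\sX} r l_{\sX}^{-1}$ restricts to isomorphisms $\mathtt{GNS}(\dens{\rho}_{\sX}) \overset{\sim}{\longrightarrow} \mathtt{GNS}(l_{\sX} \dens{\rho}_{\sX} l_{\sX}^{-1})$, and likewise for $\mathtt{Com}$; the same lemma applied to $l_{\sA} \otimes l_{\sB}$ acting on $\hilb_{\sA} \otimes \hilb_{\sB}$ supplies the degree-$1$ vertical arrows $\mathtt{GNS}(l_{\sA} \otimes l_{\sB})$ and $\mathtt{Com}(l_{\sA} \otimes l_{\sB})$.

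First I would pin down that the codomains of these conjugation isomorphisms are genuinely the components of $\cpx{G}(\bdens{\varphi}_{\sAB})$ (resp.\ $\cpx{E}(\bdens{\varphi}_{\sAB})$). Since $\boldsymbol{l}$ is a local unitary, $\dens{\varphi}_{\sAB} = (l_{\sA} \otimes l_{\sB}) \dens{\rho}_{\sAB} (l_{\sA} \otimes l_{\sB})^{*}$, and the key point is that partial traces intertwine with unitary conjugation: pulling the $\sA$-factors $l_{\sA} \otimes 1_{\sB}$ and $l_{\sA}^{*} \otimes 1_{\sB}$ outside $\Tr_{\sB}$ and using cyclicity of the $\sB$-trace with $l_{\sB}^{*} l_{\sB} = 1_{\sB}$ gives $\dens{\varphi}_{\sA} = \Tr_{\sB}[\dens{\varphi}_{\sAB}] = l_{\sA} \dens{\rho}_{\sA} l_{\sA}^{*}$, and symmetrically $\dens{\varphi}_{\sB} = l_{\sB} \dens{\rho}_{\sB} l_{\sB}^{*}$. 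Hence $\mathtt{GNS}(\dens{\varphi}_{\sX}) = \mathtt{GNS}(l_{\sX} \dens{\rho}_{\sX} l_{\sX}^{-1})$ exactly (and similarly for $\mathtt{Com}$), so the vertical arrows do land in the right spaces. If one later wishes to relax unitarity to mere invertibility, this is the one place the unitarity hypothesis is genuinely used; it can be replaced by the weaker observation that $\dens{\varphi}_{\sX}$ and $l_{\sX} \dens{\rho}_{\sX} l_{\sX}^{-1}$ are support equivalent, after which Prop.~\ref{prop:support_invariance_bipartite} identifies the components.

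Next I would check the squares. The left square (degree $-1 \to 0$) commutes because support projections transform by conjugation, $\supp_{\dens{\varphi}_{\sX}} = l_{\sX} \supp_{\dens{\rho}_{\sX}} l_{\sX}^{-1}$ (noted already in the proof of Lem.~\ref{lem:building_block_unitary_equivariance}), so both composites send $\lambda \in \mathbb{C}$ to $(\lambda \supp_{\dens{\varphi}_{\sA}}, \lambda \supp_{\dens{\varphi}_{\sB}})$. For the right square (degree $0 \to 1$) the essential observation is that conjugation by $u := l_{\sA} \otimes l_{\sB}$ is a unital algebra homomorphism $\algebra{\hilb_{\sA} \otimes \hilb_{\sB}} \to \algebra{\mathcal{K}_{\sA} \otimes \mathcal{K}_{\sB}}$; therefore it is additive, sends the lift $a \otimes 1_{\sB}$ to $(l_{\sA} a l_{\sA}^{-1}) \otimes 1_{\sB}$ and $1_{\sA} \otimes b$ to $1_{\sA} \otimes (l_{\sB} b l_{\sB}^{-1})$, and carries $\supp_{\dens{\rho}_{\sAB}}$ to $\supp_{\dens{\varphi}_{\sAB}}$. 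Applying $u(-)u^{-1}$ to $d^{0}_{\cpx{G}}(a,b) = (1_{\sA} \otimes b - a \otimes 1_{\sB}) \supp_{\dens{\rho}_{\sAB}}$ then produces precisely $d^{0}_{\cpx{G}}$ evaluated on the transformed pair $(l_{\sA} a l_{\sA}^{-1}, l_{\sB} b l_{\sB}^{-1})$ in the $\bdens{\varphi}_{\sAB}$-complex, which is the commutativity we want. The commutant case is identical: the only difference is the extra outer factor of $\supp_{\dens{\rho}_{\sAB}}$ in $d^{0}_{\cpx{E}}$, which is transported by the same homomorphism property since $u \supp_{\dens{\rho}_{\sAB}} u^{-1} = \supp_{\dens{\varphi}_{\sAB}}$.

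Having a family of componentwise isomorphisms commuting with all coboundaries is exactly the definition of a cochain isomorphism, so $\cpx{G}(\boldsymbol{l})$ and $\cpx{E}(\boldsymbol{l})$ are the asserted isomorphisms. The only non-formal step is the intertwining of partial traces with the local transformations needed to identify $\dens{\varphi}_{\sX}$ with $l_{\sX} \dens{\rho}_{\sX} l_{\sX}^{-1}$; everything else is the homomorphism property of conjugation together with the already-established transformation law $\supp_{l \dens{\rho} l^{-1}} = l \supp_{\dens{\rho}} l^{-1}$. I expect that partial-trace step to be the main obstacle, as it is precisely where the unitarity hypothesis (or, in the invertible generalization, the finer support-equivalence argument) must be invoked.
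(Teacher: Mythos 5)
Your proposal is correct and is exactly the verification the paper intends but leaves implicit: the paper states this theorem with no written proof, relying on Lem.~\ref{lem:building_block_unitary_equivariance} for the componentwise isomorphisms, and your write-up supplies the two missing checks (that $\dens{\varphi}_{\sX}=l_{\sX}\dens{\rho}_{\sX}l_{\sX}^{-1}$ via intertwining of partial traces with local unitaries, and commutativity of the squares via the homomorphism property of conjugation together with $\supp_{l\dens{\rho}l^{-1}}=l\supp_{\dens{\rho}}l^{-1}$). Your closing remark correctly isolates the partial-trace identification as the one step where unitarity is genuinely used and where the invertible generalization requires a more careful support-equivalence argument.
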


More generally, one can define a local completely positive map between density states as a pair of completely positive maps that plays nicely with support projections; such maps induce chain morphisms.\footnote{More generally, completely positive maps that admit a decomposition into Kraus operators that factorize (known as \textit{separable operations} \cite{Bengtsson}), and play nicely with support projections, will induce chain morphisms.}
However, because we will not explicitly need such technology we leave the appropriate definitions and statements (which should make an appearance in future work) as an exercise for the interested reader.

Because every cochain map induces a map in cohomology we have the following corollary. 
\begin{corollary}{}{bipartite_invertible_polynomial_invariance}
	The $\cpx{G}$ and $\cpx{E}$ cohomology groups of bipartite density states related by a local invertible transformation are canonically isomorphic.  In particular, the Poincar\'{e} polynomials
	\begin{align*}
		\newmath{P_{\cpx{G}}(\bdens{\rho}_{\sAB})} &:= \dim H^{0} [ \cpx{G}(\bdens{\rho}_{\sAB})] + \dim H^{1} [ \cpx{G}(\bdens{\rho}_{\sAB})]y \in \mathbb{Z}[y],\\
		\newmath{P_{\cpx{E}}(\bdens{\rho}_{\sAB})} &:= \dim H^{0} [ \cpx{E}(\bdens{\rho_{\sAB}})] +  \dim H^{1} [ \cpx{E}(\bdens{\rho_{\sAB}})]y \in \mathbb{Z}[y] 
	\end{align*}
	are invariant under local invertible transformations of $\bdens{\rho}_{\sAB}$.
\end{corollary}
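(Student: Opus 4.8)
The plan is to reduce the statement to Theorem~\ref{thm:bipartite_invertible_equivariance} and then invoke the formal fact that a cochain isomorphism induces isomorphisms on cohomology. The one point that needs care is that the corollary asserts invariance under arbitrary local \emph{invertible} transformations, whereas Theorem~\ref{thm:bipartite_invertible_equivariance} is phrased for local \emph{unitary} ones; so the first task is to check that the theorem's construction survives the relaxation from unitary to invertible.

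To carry out this extension, let $\boldsymbol{l} = (l_{\sA}, l_{\sB})$ be a local invertible transformation with $\dens{\varphi}_{\sAB} = (l_{\sA} \otimes l_{\sB}) \dens{\rho}_{\sAB} (l_{\sA} \otimes l_{\sB})^{-1}$. Three ingredients are needed. First, the building-block maps $\mathtt{GNS}(l_{\sX})$ and $\mathtt{Com}(l_{\sX})$ are already shown to be isomorphisms for arbitrary invertible $l_{\sX}$ in Lemma~\ref{lem:building_block_unitary_equivariance}, which also supplies the identity $\supp_{l \dens{\rho} l^{-1}} = l \supp_{\dens{\rho}} l^{-1}$. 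Second---and this is the only genuinely non-formal point---one must check that the reduced states transform compatibly, i.e.\ $\dens{\varphi}_{\sX} = l_{\sX} \dens{\rho}_{\sX} l_{\sX}^{-1}$ for $\sX \in \{\sA, \sB\}$, so that the target of the theorem's map really is $\cpx{G}(\bdens{\varphi}_{\sAB})$. This holds because the partial trace over a factor is invariant under conjugation by an invertible operator supported on that factor: using $\Tr_{\sB}[(1_{\sA} \otimes l_{\sB}) X (1_{\sA} \otimes l_{\sB}^{-1})] = \Tr_{\sB}[X]$ one gets $\dens{\varphi}_{\sA} = l_{\sA} \Tr_{\sB}[\dens{\rho}_{\sAB}] l_{\sA}^{-1} = l_{\sA} \dens{\rho}_{\sA} l_{\sA}^{-1}$, and symmetrically for $\sB$. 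Third, the componentwise maps commute with the coboundaries, which follows because conjugation by $l_{\sA} \otimes l_{\sB}$ is an algebra homomorphism; explicitly,
\begin{align*}
	(l_{\sA} \otimes l_{\sB}) \left[ (1_{\sA} \otimes b - a \otimes 1_{\sB}) \supp_{\sAB} \right] (l_{\sA} \otimes l_{\sB})^{-1} = \left( 1_{\sA} \otimes l_{\sB} b l_{\sB}^{-1} - l_{\sA} a l_{\sA}^{-1} \otimes 1_{\sB} \right) \supp_{\dens{\varphi}_{\sAB}},
\end{align*}
which is exactly $\cpx{G}^{1}(\boldsymbol{l}) \circ d_{\cpx{G}}^{0} = d_{\cpx{G}}^{0} \circ \cpx{G}^{0}(\boldsymbol{l})$; the remaining squares, and the compression version for $\cpx{E}$, are identical. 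Hence $\cpx{G}(\boldsymbol{l})$ and $\cpx{E}(\boldsymbol{l})$ are cochain isomorphisms for any local invertible $\boldsymbol{l}$.

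The rest is standard homological algebra: a cochain morphism induces maps on cohomology, and a cochain isomorphism induces isomorphisms in each degree. Applying this to $\cpx{G}(\boldsymbol{l})$ and $\cpx{E}(\boldsymbol{l})$ produces the canonical isomorphisms $H^{k}[\cpx{G}(\bdens{\rho}_{\sAB})] \cong H^{k}[\cpx{G}(\bdens{\varphi}_{\sAB})]$ and $H^{k}[\cpx{E}(\bdens{\rho}_{\sAB})] \cong H^{k}[\cpx{E}(\bdens{\varphi}_{\sAB})]$ for every $k$. Isomorphic finite-dimensional vector spaces have equal dimension, so the coefficients $\dim H^{k}$ of the Poincar\'{e} polynomials coincide for $\bdens{\rho}_{\sAB}$ and $\bdens{\varphi}_{\sAB}$, giving $P_{\cpx{G}}(\bdens{\rho}_{\sAB}) = P_{\cpx{G}}(\bdens{\varphi}_{\sAB})$ and $P_{\cpx{E}}(\bdens{\rho}_{\sAB}) = P_{\cpx{E}}(\bdens{\varphi}_{\sAB})$.

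I expect the main obstacle to be the second ingredient above: verifying that reducing over a factor is insensitive to conjugation by a non-unitary invertible operator on that factor. For unitary $l_{\sB}$ this is the familiar cyclicity of the partial trace, but for a general invertible $l_{\sB}$ it relies on the slightly less obvious identity $\Tr_{\sB}[(1_{\sA} \otimes l_{\sB}) X (1_{\sA} \otimes l_{\sB}^{-1})] = \Tr_{\sB}[X]$; once this is in hand, every other step is either quoted from Lemma~\ref{lem:building_block_unitary_equivariance} or a routine intertwining computation.
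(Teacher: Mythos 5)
Your proof is correct and follows essentially the same route as the paper, which simply deduces the corollary from Thm.~\ref{thm:bipartite_invertible_equivariance} via the standard fact that cochain isomorphisms induce isomorphisms on cohomology. Your extra verification that the reduced states transform as $\dens{\varphi}_{\sX} = l_{\sX}\dens{\rho}_{\sX}l_{\sX}^{-1}$ under non-unitary local invertibles (via $\Tr_{\sB}[(1_{\sA}\otimes l_{\sB})X(1_{\sA}\otimes l_{\sB}^{-1})]=\Tr_{\sB}[X]$) is a detail the paper leaves implicit when it remarks that the theorem ``actually pertain[s] to invertible linear transformations,'' and it is a worthwhile addition rather than a deviation.
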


\subsubsection{Poincar\'{e} Polynomials as SLOCC Invariants}
Because local unitary transformations are a special case of local invertible transformations, the discussion above establishes equivariance of our cohomology groups (Poincar\'{e} polynomials) under local unitary transformations.
In particular, the corresponding Poincar\'{e} polynomials only depend on local unitary equivalence classes of density states (two density states being local unitary equivalent if there exists a local unitary transformation between them).
However, because Thm.~\ref{thm:bipartite_invertible_equivariance} and Cor.~\ref{cor:bipartite_invertible_polynomial_invariance} actually pertain to invertible linear transformations, they only depend equivalence classes of density states defined by local invertible transformations.
In the context of pure density states, such equivalence classes can be interpreted as equivalence classes of (pure) states that are (stochastically) related by a special class of quantum operations dubbed LOCC (local operations and classical communication) (see, e.g.\ \cite[\S 16.4]{Bengtsson}); in this context, local invertible equivalence classes of pure density states are known as \textit{SLOCC equivalence classes} and Cor.~\ref{cor:bipartite_invertible_polynomial_invariance} shows that the Poincar\'{e} polynomials $P_{\cpx{G}}$ and $P_{\cpx{E}}$ are \newword{SLOCC invariants} for pure bipartite density states (i.e.\ they only depend on SLOCC equivalence classes).\footnote{Note that these polynomials might not be invariant polynomials associated to the standard algebraic structure on the space of SLOCC equivalence classes, thought of as quotient of a projective space.}
Although we will not make use of such an operational perspective, we will adopt this terminology for the remainder of this section if only for the reason that ``SLOCC equivalence class" is a more succinct phrase than ``local invertible equivalence class".

Restricting our attention to pure bipartite density states, it is easy to see that the Schmidt rank is also a invariant under local invertible transformations; moreover, two states with the same Schmidt rank are equivalent under local invertible transformations.
Hence, the Schmidt rank is a \textit{complete} local SLOCC invariant: its value determines a unique equivalence class.
Thus, there are $\mathrm{min}(d_{\sA}, d_{\sB})$ SLOCC equivalence classes of pure bipartite density states, each uniquely labelled by the corresponding Schmidt rank.  
As a result, we should only see at most $\min(d_{\sA}, d_{\sB})$ possible Poincar\'{e} polynomials associated to pure bipartite density states on with Hilbert space tensor factors of dimensions $d_{\sA}$ and $d_{\sB}$.
As we will see (c.f. Rmk.~\ref{rmk:bipartite_poincare}), these Poincar\'{e} polynomials are explicit functions of the Schmidt rank, and are complete SLOCC invariants: there are precisely $\min(d_{\sA}, d_{\sB})$ possible polynomials, each uniquely labelling an SLOCC equivalence class.

This might not seem particularly impressive: the Schmidt rank, which is easily computable, already forms a very good SLOCC invariant.
The computation of Poincar\'{e} polynomials for bipartite pure states, offers no further advantage.
However, multipartite SLOCC/local-invertible equivalence classes have a much more complicated structure.
In the generalization of our cohomological technology to multipartite density states, the Poincar\'{e} polynomials associated to the pure multipartite density states are SLOCC invariants and it is not unreasonable to speculate that they might form a useful tool for distinguishing SLOCC equivalence classes.
As proof of concept of such an application, in \S\ref{sec:tripartite_cohomology_distinguish} (and \S\ref{sec:multipartite_cohomology_distinguish}) we distinguish the SLOCC equivalence classes of tripartite (and multipartite) GHZ and W-states via their commutant Poincar\'{e} polynomials.

\subsubsection{Comparison of the Bipartite GNS and Commutant (Co)homologies}
There are natural chain morphisms between the GNS and commutant complexes.  For any $\dens{\rho} \in \Dens(\hilb)$ we automatically have an inclusion of building blocks
\begin{align}
	\newmath{i_{\dens{\rho}}}: \mathtt{com}(\dens{\rho}) \hooklongrightarrow \mathtt{gns}(\dens{\rho}).
	\label{eq:com_inc_gns}
\end{align}
Taking the dual map $i_{\dens{\rho}}^{\vee}$ and utilizing the canonical identifications $\mathtt{com}(\dens{\rho})^{\vee} \cong \mathtt{Com}(\dens{\rho})$ and $\mathtt{gns}(\dens{\rho}) \cong \mathtt{GNS}(\dens{\rho})$ (c.f.\ equations \eqref{eq:Com_dual_iso} and \eqref{eq:GNS_dual_iso}) we have the surjective linear map:
\begin{align}
	\newmath{\Pi_{\dens{\rho}}}: \mathtt{GNS}(\dens{\rho}) &\longrightarrow \mathtt{Com}(\dens{\rho})\\
	r &\longmapsto \supp_{\dens{\rho}} r.	
	\label{eq:GNS_prj_Com}	
\end{align}
If we work with a bipartite state $\dens{\rho}_{\sAB} \in \Dens(\hilb_{\sA} \otimes \hilb_{\sB})$ and define $i_{\sX} := i_{\dens{\rho}_{\sX}}$ and $\Pi_{\sX} := \Pi_{\dens{\rho}_{\sX}}$ for $\sX \in \{\sA, \sB, \sAB \}$, then we then have componentwise maps
\begin{equation}
	\xymatrix{
		\cdots &
		\ar[l] 0 & \ar[l] \mathbb{C}  & 
		\ar[l] \mathtt{gns}(\dens{\rho}_{\sA}) \oplus \mathtt{gns}(\dens{\rho}_{\sB}) & 
		\ar[l] \mathtt{gns}(\dens{\rho}_{\sAB}) &
		\ar[l] 0  &
		\ar[l] \cdots\\
		\cdots &
		\ar[l] \ar[u] 0 & \ar[l] \ar[u]^{\newmath{i_{-1}} := \mathrm{id}} \mathbb{C}  & 
		\ar[l] \ar[u]^{\newmath{i_{0}} := i_{\sA} \oplus i_{\sB}} \mathtt{com}(\dens{\rho}_{\sA}) \oplus \mathtt{com}(\dens{\rho}_{\sB}) & 
		\ar[l] \ar[u]^{\newmath{i_{1}} := i_{\sAB}} \mathtt{com}(\dens{\rho}_{\sAB}) &
		\ar[l] \ar[u] 0  &
		\ar[l] \cdots\\
	}
\end{equation}
and
\begin{equation}
	\xymatrix{
		\cdots \ar[r] &
		0 \ar[r] \ar[d] & \mathbb{C} \ar[r] \ar[d]^{\newmath{\Pi^{-1}} :=\mathrm{id}} & 
		\mathtt{GNS}(\dens{\rho}_{\sA}) \times \mathtt{GNS}(\dens{\rho}_{\sB}) \ar[r] \ar[d]^{\newmath{\Pi^{0}} := \Pi_{\sA} \times \Pi_{\sB}} & 
		\mathtt{GNS}(\dens{\rho}_{\sAB}) \ar[r] \ar[d]^{\newmath{\Pi^{1}} :=\Pi_{\sAB}} &
		0 \ar[r] \ar[d] &
		\cdots\\
		\cdots \ar[r] &
		0 \ar[r] & 
		\mathbb{C} \ar[r]  & 
		\mathtt{Com}(\dens{\rho}_{\sA}) \times \mathtt{Com}(\dens{\rho}_{\sB}) \ar[r] & 
		\mathtt{Com}(\dens{\rho}_{\sAB}) \ar[r]  &
		0 \ar[r] & \cdots
	}
\end{equation}
(with $i_{k}$ and $P_{k}$ being defined as the zero map for $k \neq -1,0,1$).
\begin{proposition}{}{GNS_to_Com}
	The maps above assemble into a (co)chain morphisms
	\begin{align*}
		i: \cpx{e}(\dens{\rho}_{\sAB}) &\longrightarrow \cpx{g}(\dens{\rho}_{\sAB}),\\
		\Pi: \cpx{G}(\dens{\rho}_{\sAB}) &\longrightarrow \cpx{E}(\dens{\rho}_{\sAB}).
	\end{align*}
	where, moreover $i$ (resp. $P$) is componentwise injective (resp.\ surjective).
\end{proposition}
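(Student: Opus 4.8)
The plan is to verify each of $i$ and $\Pi$ separately, checking that its component maps commute with the (co)boundaries and then reading off injectivity and surjectivity componentwise. I would dispatch $i$ first, since it is essentially tautological, and then obtain $\Pi$ either by a short direct computation or---more cleanly---by recognizing it as the dual of $i$.

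For $i\colon\cpx{e}(\bdens{\rho}_{\sAB})\to\cpx{g}(\bdens{\rho}_{\sAB})$: because $\mathtt{com}(\dens{\rho}_{\sX})=\supp_{\sX}\states{\hilb_{\sX}}\supp_{\sX}\subseteq\supp_{\sX}\states{\hilb_{\sX}}=\mathtt{gns}(\dens{\rho}_{\sX})$, each $i_{\sX}$ is a genuine inclusion, so every nonzero component ($\mathrm{id}$, $i_{\sA}\oplus i_{\sB}$, $i_{\sAB}$) is injective. The chain-morphism property then holds on the nose: the boundary maps $\partial_0,\partial_1$ used to define $\cpx{e}(\bdens{\rho}_{\sAB})$ and $\cpx{g}(\bdens{\rho}_{\sAB})$ are restrictions of one and the same pair of maps, so applying a boundary and then including gives the same element as including and then applying the boundary. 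The only point requiring justification is that the restricted boundary of a $\mathtt{com}$-element lands again in the $\mathtt{com}$ building blocks (so that $\partial^{\cpx{e}}$ is even well defined), and this is exactly Lem.~\ref{lem:compat_of_supports_predual}.

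For $\Pi\colon\cpx{G}(\bdens{\rho}_{\sAB})\to\cpx{E}(\bdens{\rho}_{\sAB})$: I would first note that each $\Pi_{\dens{\rho}}$ restricts to the identity on $\mathtt{Com}(\dens{\rho})\subseteq\mathtt{GNS}(\dens{\rho})$ (since $\supp_{\dens{\rho}}c=c$ there), which gives componentwise surjectivity of $\Pi^{-1}=\mathrm{id}$, $\Pi^{0}=\Pi_{\sA}\times\Pi_{\sB}$, and $\Pi^{1}=\Pi_{\sAB}$. The degree $-1$ square commutes by idempotency of the support projections, so the essential computation is the degree $0$ square $\Pi^{1}\circ d_{\cpx{G}}^{0}=d_{\cpx{E}}^{0}\circ\Pi^{0}$. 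Expanding both sides, their difference is
\begin{align*}
	\supp_{\sAB}\left[1_{\sA}\otimes(1_{\sB}-\supp_{\sB})b\right]\supp_{\sAB}-\supp_{\sAB}\left[(1_{\sA}-\supp_{\sA})a\otimes 1_{\sB}\right]\supp_{\sAB},
\end{align*}
and I would kill both terms using the one-sided identities $(\supp_{\sA}\otimes 1_{\sB})\supp_{\sAB}=\supp_{\sAB}=(1_{\sA}\otimes\supp_{\sB})\supp_{\sAB}$ established in the proofs of Lem.~\ref{lem:compat_of_supports} and Lem.~\ref{lem:compat_of_supports_predual}: taking adjoints and using self-adjointness of the support projections yields $\supp_{\sAB}(1_{\sA}\otimes\supp_{\sB})=\supp_{\sAB}$ and $\supp_{\sAB}(\supp_{\sA}\otimes 1_{\sB})=\supp_{\sAB}$, so $\supp_{\sAB}(1_{\sA}\otimes(1_{\sB}-\supp_{\sB}))=0$ and $\supp_{\sAB}((1_{\sA}-\supp_{\sA})\otimes 1_{\sB})=0$, and the difference vanishes.

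I expect this degree $0$ square to be the only genuine obstacle; everything else is formal. An attractive alternative that sidesteps it entirely is to observe that, under the trace-duality isomorphisms $\cpx{G}\cong\cpx{g}^{\vee}$ and $\cpx{E}\cong\cpx{e}^{\vee}$ of Prop.~\ref{prop:bipartite_cocomplex_to_dualcomplex}, the map $\Pi$ is precisely the dual $i^{\vee}$ of the chain morphism $i$. Concretely, via \eqref{eq:GNS_dual_iso} and \eqref{eq:Com_dual_iso} one checks that $i_{\dens{\rho}}^{\vee}$ corresponds to $r\mapsto\supp_{\dens{\rho}}r$: for $\dens{\gamma}\in\mathtt{com}(\dens{\rho})$ one has $\dens{\gamma}=\supp_{\dens{\rho}}\dens{\gamma}\supp_{\dens{\rho}}$, whence $\Tr[\dens{\gamma}\,r]=\Tr[\dens{\gamma}\,\supp_{\dens{\rho}}r]$. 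Since the dual of a chain morphism is automatically a cochain morphism, this reduces the whole proposition to the (routine) verification that $i$ is a chain morphism, with injectivity of $i$ and surjectivity of $\Pi$ as above.
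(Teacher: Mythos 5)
Your proof is correct and takes essentially the same route as the paper, whose proof likewise reduces everything to commutativity with the (co)boundaries via Lem.~\ref{lem:compat_of_supports_predual} (for $i$) and Lem.~\ref{lem:compat_of_supports} (for $\Pi$); you have simply carried out the degree-$0$ computation explicitly. Your closing observation that $\Pi$ is $i^{\vee}$ under the isomorphisms of Prop.~\ref{prop:bipartite_cocomplex_to_dualcomplex} is also sound and consistent with how the paper already constructs $\Pi_{\dens{\rho}}$ as the dual of $i_{\dens{\rho}}$ at the building-block level, so it is a legitimate shortcut rather than a genuinely different argument.
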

\begin{proof}
	We need only show commutativity with the coboundary.  This follows by working through the definitions of the coboundaries along with an application of Lem.~\ref{lem:compat_of_supports_predual} (for $i$) or Lem.~\ref{lem:compat_of_supports} (for $\Pi$).
\end{proof}
The kernel of a chain morphism $M: \cpx{C} \rightarrow \cpx{D}$ is a cochain complex $\ker(M)$ formed by taking componentwise kernels and restrictions of coboundaries of $\cpx{C}$.  Let $\ker(\Pi)$ denote the kernel of the cochain morphism $P$ defined above. Then, by componentwise surjectivity of $P$, we have a short exact sequences of cochain complexes:
\begin{align*}
	0 \longrightarrow \ker(\Pi) \overset{\iota}{\longrightarrow} \cpx{G}(\dens{\rho}_{\sAB}) \overset{\Pi}{\longrightarrow} \cpx{E}(\dens{\rho}_{\sAB}) \longrightarrow 0 
\end{align*}
where $\iota$ is the obvious inclusion.  We can think of $\ker(\Pi)$ as a complex built\footnote{In fact, one can also build a cochain complex using the ideals $\mathfrak{N}_{\dens{\rho}}$ as building blocks.} from the building block $\ker(\Pi_{\dens{\rho}}) = (1-\supp_{\dens{\rho}}) \algebra{\hilb} \supp_{\dens{\rho}}$; using the block matrix decomposition of \S \ref{sec:building_blocks}:
\begin{align*}
	\ker(\Pi_{\dens{\rho}}) = \left\{\blockmat{0}{0}{*}{0} \right \} \cong \Hom(\image(\dens{\rho}),\ker(\dens{\rho})).
\end{align*}
The induced long exact sequence in cohomology is:
\begin{equation}
	\begin{tikzpicture}[descr/.style={fill=white,inner sep=1.5pt},baseline={([yshift=-.5ex]current bounding box.center)}]
		\matrix (m) [
		matrix of math nodes,
		row sep=2em,
		column sep=2.5em,
		text height=1.5ex, text depth=0.25ex
		]
		{ 0 & H^{0}[\ker(\Pi)] & H^{0}[\cpx{G}(\bdens{\rho}_{\sAB})] & H^{0}[\cpx{E}(\bdens{\rho}_{\sAB})] & {} \\
			& H^{1}[\ker(\Pi)] & H^{1}[\cpx{G}(\bdens{\rho}_{\sAB})]  & H^{1}[\cpx{E}(\bdens{\rho}_{\sAB})] & 0, \\
		};

		\path[overlay,->, font=\scriptsize]
			(m-1-1) edge (m-1-2)
			(m-1-2) edge node[above] {$\iota^{0}_{*}$} (m-1-3)
			(m-1-3) edge node[above] {$P^{0}_{*}$} (m-1-4)
			(m-1-4) edge[out=355,in=175] node[descr,yshift=0.3ex] {$\delta^0$} (m-2-2)
			(m-2-2) edge node[above] {$\iota^{1}_{*}$} (m-2-3)
			(m-2-3) edge node[above] {$P^{1}_{*}$} (m-2-4)
			(m-2-4) edge (m-2-5);
	\end{tikzpicture}
	\label{eq:com_les_cohomology}
\end{equation}
\noindent where $\delta^{0}$ is the Bockstein homomorphism.\footnote{As an easy corollary of this long exact sequence, we have that $\dim H^{1}[\cpx{E}(\bdens{\rho}_{\sAB})] \leq \dim H^{1}[\cpx{G}(\bdens{\rho}_{\sAB})]$.}

Similarly we can define the cokernel of any map of complexes; because $i$ is just realizing $\cpx{e}$ as a subcomplex of $\cpx{g}$ then $\coker(i)$ is the quotient complex $\cpx{g}(\dens{\rho}_{\sAB})/ \cpx{e}(\dens{\rho}_{\sAB})$ and we end up with a short exact sequence:
\begin{align*}
	0 \longleftarrow  \cpx{g}(\dens{\rho}_{\sAB})/ \cpx{e}(\dens{\rho}_{\sAB}) \overset{\pi}{\longleftarrow} \cpx{g}(\dens{\rho}_{\sAB}) \overset{i}{\longleftarrow} \cpx{e}(\dens{\rho}_{\sAB}) \longleftarrow 0, 
\end{align*}
(where $\pi$ is the componentwise projection map) and an associated long exact sequence
\begin{center}
	\begin{tikzpicture}[descr/.style={fill=white,inner sep=1.5pt}]
		\matrix (m) [
		matrix of math nodes,
		row sep=2em,
		column sep=2.5em,
		text height=1.5ex, text depth=0.25ex
		]
		{ 0 & H_{0}[ \cpx{e}(\dens{\rho}_{\sAB})] & H_{0}[\cpx{g}(\bdens{\rho}_{\sAB})] & H_{0}[\cpx{e}(\bdens{\rho}_{\sAB})] & {} \\
			& H_{1}[ \cpx{e}(\dens{\rho}_{\sAB})] & H_{1}[\cpx{g}(\bdens{\rho}_{\sAB})]  & H_{1}[\cpx{e}(\bdens{\rho}_{\sAB})] & 0. \\
		};

		\path[overlay,<-, font=\scriptsize]
			(m-1-1) edge (m-1-2)
			(m-1-2) edge node[above] {$(\pi_{0})_{*}$} (m-1-3)
			(m-1-3) edge node[above] {$(i_{0})_{*}$} (m-1-4)
			(m-1-4) edge[out=355,in=175] node[descr,yshift=0.3ex] {$\delta_0$} (m-2-2)
			(m-2-2) edge node[above] {$(\pi_{1})_{*}$} (m-2-3)
			(m-2-3) edge node[above] {$(i_{1})_{*}$} (m-2-4)
			(m-2-4) edge (m-2-5);
	\end{tikzpicture}
\end{center}
By taking duals we recover the story in cohomology.

\begin{remark}{}{}
	The fact that one gets (co)chain maps is deeply related the fact that the process of taking (the underlying vector space of) equivariant endomorphisms of a module (for a $\mathbb{C}$-algebra) with distinguished cyclic point is functorial.  On the other hand, the surjective map $\mathtt{gns}(\dens{\rho}) \longrightarrow \mathtt{com}(\dens{\rho})$ given by $\dens{\gamma} \mapsto \dens{\gamma} \supp_{\dens{\rho}}$ and its dual inclusion $\mathtt{Com}(\dens{\rho}) \hookrightarrow \mathtt{GNS}(\dens{\rho})$ do not produce (co)chain maps.
\end{remark}

\subsection{Factorizability and Cohomology}

Next we would like to understand the extent that the (co)homologies of our (co)chain complexes detect factorizability of a density state.
For pure density states, the notion of factorizability relevant to our purposes is the standard one.
However, for mixed density states the situation is more complex, and the notion of factorizability relevant to our (co)homological approach will be a weaker notion we call \textit{support factorizability}.  In the next section we introduce some relevant definitions of factorizability and its generalizations. 

\subsubsection{Factorizabilia \label{sec:factorizabilia}}
We begin with the most straightforward definition of factorizability.
\begin{definition}{}{}
	A bipartite density state $\bdens{\rho}_{\sAB}$ is \newword{factorizable} if $\dens{\rho}_{\sAB} = \dens{\rho}_{\sA} \otimes \dens{\rho}_{\sB}$.
\end{definition}
This definition of factorizability is equivalent to the factorizability of the expectation value functional $\mathbb{E}_{\dens{\rho}_{\sAB}}$, i.e.\ $\mathbb{E}_{\dens{\rho}_{\sAB}}(a \otimes b) = \mathbb{E}_{\dens{\rho}_{\sA}}(a) \mathbb{E}_{\dens{\rho}_{\sB}}(b)$ for all $(a,b) \in \algebra{\hilb_{\sA}} \times \algebra{\hilb_{\sB}}$ (see also Lem.~\ref{lem:vanishing_cov_factorizability}).  Moreover, it generalizes the natural notion of factorizability for vectors in a Hilbert space.
\begin{proposition}{}{}
	If $\dens{\rho}_{\sAB} = \psi \otimes \psi^{\vee}$ for some $\psi \in \hilb_{\sA} \otimes \hilb_{\sB}$, then $\bdens{\rho}_{\sAB}$ is factorizable if and only if $\psi = \psi_{\sA} \otimes \psi_{\sB}$ for some $\psi_{\sX} \in \hilb_{\sX}$ (i.e.\ $\psi$ is factorizable).
\end{proposition}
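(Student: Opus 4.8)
The plan is to prove both implications directly, leaning on the Schmidt decomposition to control the ranks of the reduced density states. First I would observe that since $\dens{\rho}_{\sAB} = \psi \otimes \psi^{\vee}$ is assumed to be a density state, its unit trace forces $\langle \psi, \psi \rangle = 1$; thus $\psi$ is automatically a unit vector and I may work with a genuine Schmidt decomposition without worrying about normalization constants.

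For the easy ($\Leftarrow$) direction, suppose $\psi = \psi_{\sA} \otimes \psi_{\sB}$, where each $\psi_{\sX}$ may be taken to be a unit vector. I would substitute and regroup tensor factors to get $\dens{\rho}_{\sAB} = (\psi_{\sA} \otimes \psi_{\sA}^{\vee}) \otimes (\psi_{\sB} \otimes \psi_{\sB}^{\vee})$, then take partial traces to identify $\dens{\rho}_{\sA} = \psi_{\sA} \otimes \psi_{\sA}^{\vee}$ and $\dens{\rho}_{\sB} = \psi_{\sB} \otimes \psi_{\sB}^{\vee}$, and conclude $\dens{\rho}_{\sAB} = \dens{\rho}_{\sA} \otimes \dens{\rho}_{\sB}$, i.e.\ $\bdens{\rho}_{\sAB}$ is factorizable.

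For the ($\Rightarrow$) direction I would argue by comparing ranks. Writing a Schmidt decomposition $\psi = \sum_{i=1}^{S}\sqrt{p_i}\,\xi_i^{\sA}\otimes\xi_i^{\sB}$ with $S$ the Schmidt rank, the reduced states are $\dens{\rho}_{\sX} = \sum_{i=1}^{S} p_i\, \xi_i^{\sX}\otimes(\xi_i^{\sX})^{\vee}$, so that $\rank(\dens{\rho}_{\sA}) = \rank(\dens{\rho}_{\sB}) = S$. Consequently $\rank(\dens{\rho}_{\sA}\otimes\dens{\rho}_{\sB}) = S^2$, whereas $\rank(\dens{\rho}_{\sAB}) = \rank(\psi\otimes\psi^{\vee}) = 1$. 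If $\bdens{\rho}_{\sAB}$ is factorizable these two operators coincide, forcing $S^2 = 1$, hence $S = 1$; but $S=1$ means precisely that $\psi = \sqrt{p_1}\,\xi_1^{\sA}\otimes\xi_1^{\sB}$ is a product vector.

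There is essentially no hard part here: the only facts requiring any care are the multiplicativity of rank under tensor product and the equality of the two marginal spectra, both standard consequences of the Schmidt decomposition. An alternative route that sidesteps the rank count is to use purity directly, since a density state $\sigma$ is pure iff $\Tr[\sigma^2]=1$; because $\Tr[(\dens{\rho}_{\sA}\otimes\dens{\rho}_{\sB})^2] = \Tr[\dens{\rho}_{\sA}^2]\,\Tr[\dens{\rho}_{\sB}^2]$, purity of the factorized state forces each marginal to be pure, again yielding $S=1$. Either way the genuine content of the statement is simply that a rank-one operator can factor as a tensor product only when both factors are themselves rank one.
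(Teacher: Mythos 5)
Your proof is correct. Note, however, that the paper states this proposition without any proof at all --- it is treated as a standard fact (the author similarly leaves the neighboring propositions in \S\ref{sec:factorizabilia} as exercises) --- so there is no argument of the author's to compare against. Both routes you sketch are sound: the $(\Leftarrow)$ direction by direct computation of the partial traces of $(\psi_{\sA}\otimes\psi_{\sA}^{\vee})\otimes(\psi_{\sB}\otimes\psi_{\sB}^{\vee})$, and the $(\Rightarrow)$ direction by comparing $\rank(\dens{\rho}_{\sA}\otimes\dens{\rho}_{\sB})=S^{2}$ against $\rank(\dens{\rho}_{\sAB})=1$. Your rank argument is entirely in the spirit of the surrounding material: the paper's subsequent propositions assert that $\psi$ factorizes iff $\rank\left(\Tr_{\sX}[\psi\otimes\psi^{\vee}]\right)=1$ and that $\rank(\dens{\rho}_{\sAB})\leq\rank(\dens{\rho}_{\sA})\rank(\dens{\rho}_{\sB})$ with equality iff the state is support factorizable, which together give essentially the same Schmidt-rank bookkeeping you use. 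The alternative purity argument via $\Tr[\sigma^{2}]=1$ is also valid and slightly more self-contained, since it avoids invoking multiplicativity of rank under tensor products.
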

Quantum informatics students learn early that factorizability of a pure state can be accomplished by a computation of partial traces.
\begin{proposition}{}{}
	A unit norm vector $\psi \in \hilb_{\sA} \otimes \hilb_{\sB}$ can be written as $\psi  = \psi_{\sA} \otimes \psi_{\sB}$ if and only if $\rank \left(\Tr_{\sX}[\psi \otimes \psi^{\vee}] \right) = 1$, where $\sX$ is one of $\sA$ or $\sB$. 
\end{proposition}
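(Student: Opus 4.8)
The plan is to reduce everything to a Schmidt decomposition of $\psi$, as in Thm.~\ref{thm:pure_bipartite_cohomology}. First I would write
\begin{align*}
	\psi = \sum_{i=1}^{S} \sqrt{p_i}\, \xi^{\sA}_i \otimes \xi^{\sB}_i
\end{align*}
for positive coefficients $\{p_i\}_{i=1}^{S} \subset \mathbb{R}_{>0}$ and orthonormal families $\{\xi^{\sX}_i\}_{i=1}^{S} \subset \hilb_{\sX}$, where $S$ is the Schmidt rank; the unit-norm hypothesis gives $\sum_i p_i = 1$. The existence of such a decomposition is the standard singular-value decomposition of $\psi$ regarded as a map $\hilb_{\sB}^{\vee} \to \hilb_{\sA}$, and I would take it as known.

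The central computation is to evaluate the partial trace directly on this decomposition. Using orthonormality of the $\xi^{\sB}_i$ one finds
\begin{align*}
	\Tr_{\sB}\bigl[\psi \otimes \psi^{\vee}\bigr] = \sum_{i=1}^{S} p_i\, \xi^{\sA}_i \otimes (\xi^{\sA}_i)^{\vee},
\end{align*}
and symmetrically for $\Tr_{\sA}$. Since the $\xi^{\sA}_i$ are orthonormal and the $p_i$ are strictly positive, the right-hand side is already written in diagonal (spectral) form with $S$ nonzero eigenvalues; hence $\rank(\Tr_{\sX}[\psi \otimes \psi^{\vee}]) = S$ for each $\sX \in \{\sA, \sB\}$. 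This identifies the rank of either reduced density state with the Schmidt rank of $\psi$.

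With this identification the equivalence is immediate. If $\rank(\Tr_{\sX}[\psi \otimes \psi^{\vee}]) = 1$ then $S = 1$, so the decomposition collapses to $\psi = \sqrt{p_1}\, \xi^{\sA}_1 \otimes \xi^{\sB}_1$, which is a product state with $\psi_{\sA} := \sqrt{p_1}\, \xi^{\sA}_1$ and $\psi_{\sB} := \xi^{\sB}_1$. Conversely, if $\psi = \psi_{\sA} \otimes \psi_{\sB}$ then a one-line computation gives $\Tr_{\sB}[\psi \otimes \psi^{\vee}] = \|\psi_{\sB}\|^2\, \psi_{\sA} \otimes \psi_{\sA}^{\vee}$, a rank-one operator.

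The only step requiring genuine care is the rank claim: one must confirm that no cancellation or degeneracy reduces the number of nonzero eigenvalues below $S$. This is where orthonormality of the Schmidt vectors is essential, as it guarantees the displayed sum is a bona fide spectral decomposition rather than merely a sum of rank-one operators; the strict positivity of the $p_i$ then forces exactly $S$ nonzero eigenvalues. Everything else is routine.
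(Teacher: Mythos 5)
Your proof is correct. The paper states this proposition without proof, treating it as a standard fact ("quantum informatics students learn early..."), and your argument---Schmidt-decompose $\psi$, observe that $\Tr_{\sX}[\psi \otimes \psi^{\vee}]$ is then already in spectral form with exactly $S$ strictly positive eigenvalues, and conclude that the rank of either reduced state equals the Schmidt rank---is precisely the canonical route the paper is alluding to (and is consistent with how the Schmidt rank is used throughout, e.g.\ in Thm.~\ref{thm:pure_bipartite_cohomology}).
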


Because our (co)chain complexes can only detect properties of density states up to support equivalence, it is more appropriate to focus our attention on the following variant of factorizability (which we will see is a strictly weaker than factorizability for mixed states). 
\begin{definition}{}{}
	A bipartite density state $\bdens{\rho}_{\sAB}$ is \newword{support factorizable} if $\dens{\rho}_{\sAB}$ is support equivalent to $\dens{\rho}_{\sA} \otimes \dens{\rho}_{\sB}$ (equivalently, $\supp_{\sAB} = \supp_{\sA} \otimes \supp_{\sB}$).
\end{definition}

Similar to factorizability for pure states, we can easily check for support factorizability by computing ranks of partial traces.\footnote{Although, if our state is mixed, it is not guaranteed that the ranks of partial traces over disjoint subsystems are equal; so really we must compute partial traces over \textit{both} subsystems if our state is not pure.}
\begin{proposition}{}{}
	If $\bdens{\rho}_{\sAB}$ is a (finite rank) bipartite density state, then $\rank(\dens{\rho}_{\sAB}) \leq \rank(\dens{\rho}_{\sA}) \rank(\dens{\rho}_{\sB})$ with equality if and only if $\dens{\rho}_{\sAB}$ is support factorizable.\footnote{This also holds for finite rank density states in infinite dimensions.}  
\end{proposition}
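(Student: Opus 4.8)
The plan is to recognize that both halves of the statement are essentially packaged into Lem.~\ref{lem:compat_of_supports}: the inequality is a dimension count applied to a subspace containment, and the equality case amounts to upgrading that containment to an equality of orthogonal projections.

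First I would invoke Lem.~\ref{lem:compat_of_supports}, which gives $(\supp_{\sA} \otimes \supp_{\sB})\supp_{\sAB} = \supp_{\sAB}$. Reading this at the level of images---$\supp_{\sX}$ being the orthogonal projection onto $\image(\dens{\rho}_{\sX})$---it says that every vector fixed by $\supp_{\sAB}$ is fixed by $\supp_{\sA} \otimes \supp_{\sB}$, so that
\begin{align*}
	\image(\dens{\rho}_{\sAB}) \subseteq \image(\supp_{\sA} \otimes \supp_{\sB}) = \image(\dens{\rho}_{\sA}) \otimes \image(\dens{\rho}_{\sB}).
\end{align*}
Taking dimensions and using $\dim(V \otimes W) = (\dim V)(\dim W)$ then yields
\begin{align*}
	\rank(\dens{\rho}_{\sAB}) = \dim \image(\dens{\rho}_{\sAB}) \leq \dim \image(\dens{\rho}_{\sA}) \cdot \dim \image(\dens{\rho}_{\sB}) = \rank(\dens{\rho}_{\sA}) \rank(\dens{\rho}_{\sB}),
\end{align*}
which is the desired bound.

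For the equality clause I would argue both directions through the containment above. Since $\image(\dens{\rho}_{\sAB})$ is a subspace of the finite-dimensional space $\image(\dens{\rho}_{\sA}) \otimes \image(\dens{\rho}_{\sB})$, the dimensions agree (i.e.\ $\rank(\dens{\rho}_{\sAB}) = \rank(\dens{\rho}_{\sA})\rank(\dens{\rho}_{\sB})$) if and only if the containment is an equality of subspaces. An orthogonal projection is determined by its image, and $\supp_{\sA} \otimes \supp_{\sB}$ is precisely the orthogonal projection onto $\image(\dens{\rho}_{\sA}) \otimes \image(\dens{\rho}_{\sB})$ (the tensor product of the orthogonal projections onto the factor images). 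Hence equality of the two subspaces is equivalent to $\supp_{\sAB} = \supp_{\sA} \otimes \supp_{\sB}$, which is by definition support factorizability.

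There is essentially no hard step here; the only points demanding a line of care are the identification of $\supp_{\sA} \otimes \supp_{\sB}$ as the orthogonal projection onto the tensor product of images, and the elementary fact that a subspace contained in another of equal finite dimension must coincide with it. The whole content is therefore inherited from Lem.~\ref{lem:compat_of_supports}. I would close by noting that the same argument runs in infinite dimensions under the stated finite-rank hypothesis, since that keeps the relevant image subspaces finite dimensional, so that dimension counting still characterizes equality.
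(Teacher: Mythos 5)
Your proof is correct and follows the same route as the paper, which simply cites the compatibility of supports lemma (Lem.~\ref{lem:compat_of_supports}) as making the result immediate; you have filled in exactly the dimension-counting and projection-image details the paper leaves to the reader. No issues.
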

\begin{proof}
This is an immediate consequence of the compatibility of supports lemma (Lem.~\ref{lem:compat_of_supports}).
\end{proof}

The following is immediate. 
\begin{proposition}{}{fact_then_supp_fact}
	If a bipartite density state is factorizable, then it is support factorizable.
\end{proposition}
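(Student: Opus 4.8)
The plan is to observe that this is an immediate consequence of the definitions: factorizability asserts an \emph{equality} of operators, whereas support factorizability only requires equality of the associated support projections. Since the latter is a strictly weaker demand, the implication is essentially definitional, and no substantial work is required.

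Concretely, suppose $\bdens{\rho}_{\sAB}$ is factorizable, so that $\dens{\rho}_{\sAB} = \dens{\rho}_{\sA} \otimes \dens{\rho}_{\sB}$ as operators on $\hilb_{\sA} \otimes \hilb_{\sB}$. The support projection of a density state is determined entirely by the operator itself---it is the orthogonal projection onto its image---so two equal operators have equal images and hence equal support projections. Thus $\supp_{\dens{\rho}_{\sAB}} = \supp_{\dens{\rho}_{\sA} \otimes \dens{\rho}_{\sB}}$, which is precisely the statement (Def.~\ref{def:support_equiv_vanilla}) that $\dens{\rho}_{\sAB}$ is support equivalent to $\dens{\rho}_{\sA} \otimes \dens{\rho}_{\sB}$. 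By the definition of support factorizability, this means $\bdens{\rho}_{\sAB}$ is support factorizable. Equivalently, one may phrase the single step as reflexivity of the support-equivalence relation applied to the pair $(\dens{\rho}_{\sAB}, \dens{\rho}_{\sA} \otimes \dens{\rho}_{\sB})$ once one notes that these two operators coincide.

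There is no real obstacle in this direction; the entire content is the trivial observation that equality of operators implies equality of their support projections. The genuinely delicate point---emphasized in the surrounding discussion and addressed in Ex.~\ref{ex:mixed_state_counterexamples}---is the failure of the converse for mixed states: support factorizability does \emph{not} imply factorizability. But that converse is not what is claimed here, so the proof of this proposition reduces to the one-line argument above.
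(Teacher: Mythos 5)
Your argument is correct and matches the paper, which simply states that the proposition "is immediate" and offers no further proof: equality of the operators $\dens{\rho}_{\sAB}$ and $\dens{\rho}_{\sA}\otimes\dens{\rho}_{\sB}$ trivially yields equality of their support projections, hence support equivalence and support factorizability. Nothing more is needed.
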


Restricting our attention to pure states, the converse of Prop.~\ref{prop:fact_then_supp_fact} holds (we leave the proof of the following as a simple exercise).
\begin{proposition}{}{}
	Suppose $\bdens{\rho}_{\sAB}$ is a bipartite density state such that $\rank(\dens{\rho}_{\sAB}) = 1$ (i.e.\ $\dens{\rho}_{\sAB}$ is a pure) then $\bdens{\rho}_{\sAB}$ is factorizable if and only if it is support factorizable.
\end{proposition}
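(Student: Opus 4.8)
The plan is to dispatch one implication by citation and reduce the other to the rank characterizations already established. The forward direction---that factorizability implies support factorizability---holds for arbitrary (not necessarily pure) states and is exactly Prop.~\ref{prop:fact_then_supp_fact}, so nothing new is needed there. The entire content lies in the converse: assuming $\bdens{\rho}_{\sAB}$ is pure and support factorizable, I must produce an honest factorization $\dens{\rho}_{\sAB} = \dens{\rho}_{\sA} \otimes \dens{\rho}_{\sB}$.

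First I would invoke the rank characterization of support factorizability established earlier in \S\ref{sec:factorizabilia}: support factorizability is equivalent to the equality $\rank(\dens{\rho}_{\sAB}) = \rank(\dens{\rho}_{\sA})\,\rank(\dens{\rho}_{\sB})$. Since $\dens{\rho}_{\sAB}$ is pure we have $\rank(\dens{\rho}_{\sAB}) = 1$, so this equality reads $\rank(\dens{\rho}_{\sA})\,\rank(\dens{\rho}_{\sB}) = 1$. Because the reduced states of a nonzero density state are themselves nonzero density states, their ranks are strictly positive integers; a product of two positive integers equals $1$ only when both equal $1$. Hence $\rank(\dens{\rho}_{\sA}) = \rank(\dens{\rho}_{\sB}) = 1$.

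Next I would translate a rank-one reduced state into a factorization of the underlying vector. Writing $\dens{\rho}_{\sAB} = \psi \otimes \psi^{\vee}$, we have $\dens{\rho}_{\sA} = \Tr_{\sB}[\psi \otimes \psi^{\vee}]$, and $\rank(\dens{\rho}_{\sA}) = 1$ is precisely the hypothesis of the partial-trace criterion for factorizability of a vector established above; it yields $\psi = \psi_{\sA} \otimes \psi_{\sB}$ for some $\psi_{\sX} \in \hilb_{\sX}$. Feeding this back through the proposition relating factorizability of $\psi$ to factorizability of $\bdens{\rho}_{\sAB}$---equivalently, just expanding $\psi \otimes \psi^{\vee} = (\psi_{\sA} \otimes \psi_{\sA}^{\vee}) \otimes (\psi_{\sB} \otimes \psi_{\sB}^{\vee}) = \dens{\rho}_{\sA} \otimes \dens{\rho}_{\sB}$---gives the desired factorizability.

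The argument involves no serious obstacle; the only point requiring care is the elementary step that a product of two positive integers equalling $1$ forces each to be $1$, which in turn rests on both reduced states being nonzero. This is guaranteed by the standing assumption in Def.~\ref{def:bipartite_density_states} that $\hilb_{\sA}$ and $\hilb_{\sB}$ are nonzero. Everything else is a direct appeal to the rank characterizations of support factorizability and of vector factorizability recorded earlier in the section.
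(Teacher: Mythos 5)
Your proof is correct, and all the ingredients you cite (Prop.~\ref{prop:fact_then_supp_fact}, the rank characterization of support factorizability, and the partial-trace/Schmidt-rank criterion for factorizability of a unit vector) are indeed established in the paper before this proposition appears. The paper itself leaves this statement as ``a simple exercise'' and supplies no proof, so there is nothing to compare against; your route---reducing $\rank(\dens{\rho}_{\sA})\rank(\dens{\rho}_{\sB})=1$ to $\rank(\dens{\rho}_{\sA})=\rank(\dens{\rho}_{\sB})=1$ and then factorizing $\psi$---is the natural and evidently intended argument, and the one point of care you flag (nonzero reduced states having positive integer rank) is exactly the right one.
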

However, the converse of Prop.~\ref{prop:fact_then_supp_fact} does not hold in general for mixed factorizable states as the following counter-example demonstrates.
\begin{example}{Support Factorizable but not Factorizable}{supp_fact_not_fact}
	Let $\hilb_{\sA}$ and $\hilb_{\sB}$ be single-qubit Hilbert spaces: two-dimensional Hilbert spaces equipped with (orthonormal) computational bases $\{\ket{0_{\sX}}, \ket{1_{\sX}} \} \subseteq \hilb_{\sX}$.  Consider the three parameter family of ``diagonal'' bipartite states $(\hilb_{\sA}, \hilb_{\sB}, \rho_{\sAB})$ given by: 
	\begin{align*}
		\dens{\rho}_{\sAB} &= \lambda_{1} \ket{0_{\sA}0_{\sB}}\bra{0_{\sA}0_{\sB}} + \lambda_{2} \ket{0_{\sA}1_{\sB}}\bra{0_{\sA}1_{\sB}} + \lambda_{3} \ket{1_{\sA}0_{\sB}}\bra{1_{\sA}0_{\sB}} +\\
						   &+ (1-\lambda_{1} - \lambda_{2} - \lambda_{3}) \ket{1_{\sA}1_{\sB}}\bra{1_{\sA}1_{\sB}}
	\end{align*}
	where $\{\lambda_{i}\}_{i = 1}^{3} \in [0,1]$.  It is a straightforward exercise to show that this is factorizable if and only if $\left(\lambda_{1} + \lambda_{2} \right) \left(\lambda_{1} \lambda_{3} + 1 \right) = 0$.
	Thus, a generic choice of $\{\lambda_{i}\}_{i = 1}^{3} \subseteq [0,1]$ will provide a non-factorizable state that is support equivalent to a factorizable state.  (For concreteness we can choose e.g.\ $\lambda_{1} = \lambda_{2} = 1/3$ and $\lambda_{3} = 1/4$).
\end{example}

More generally, we might ask if there are any mixed bipartite density states that are support factorizable but not \textit{separable}.
Recall a density state is \newword{separable} if it can be written as a convex linear combination of factorizable density states.
The failure to be separable is usually taken to be the generalization of entanglement to mixed states.\footnote{The action of local operations and classical communication (LOCC) on a factorizable density state produces a separable density states; conversely, every separable density state can be created by acting on a factorizable density state with LOCC\@.
This characterizes entangled states as precisely those that cannot be generated from factorizable states via LOCC \cite{Werner}.}
The density states in the counter-example above are all clearly separable from the way we have presented them. 
There are support factorizable states that not only fail to be factorizable, but also fail to be separable: i.e.\ cannot be written as a convex linear combination of factorizable states.
\begin{example}{Support Factorizable but Entangled (Not Separable)}{supp_fact_but_entangled}
	Let $\hilb_{\sA}$ and $\hilb_{\sB}$ be single-qubit Hilbert spaces as in Ex~\ref{ex:supp_fact_not_fact}.
	Consider the one-parameter family of bipartite density states $(\hilb_{\sA}, \hilb_{\sB}, \rho_{\sAB})$ given by: 	
	\begin{align*}
		\rho_{\sAB} &= \lambda \left(\ket{00}+\ket{11} \right)\left(\bra{00}+\bra{11} \right) + \frac{\lambda}{4} \mathbbm{1}
	\end{align*}
	where $\mathbbm{1}=\mathbbm{1}_{\sA} \otimes \mathbbm{1}_{\sB}$ is the identity operator on $\hilb_{\sA} \otimes \hilb_{\sB}$, and $\lambda \in (0,1)$.
	These are commonly known as ``Werner states".
	When $\lambda > 1/3$ this state fails to satisfy the Peres-Horodecki/PPT (positive partial transpose) criterion and so must be non-separable.
	On the other hand, it is clear that $\supp_{\sAB} = \supp_{\sA} \otimes \supp_{\sB}$ for any $\lambda \in (0,1)$.
\end{example}

So, not surprisingly, support factorizability is a rather weak property. Nevertheless, it is a natural property to ask for when studying families of bipartite density states.

\subsubsection{Pure States \label{sec:bipartite_factorizability_pure}}
As a first step toward understanding what cohomology can tell us about factorizability, we provide the following easy theorem.
\begin{theorem}{}{pure_commutant_cohomology}
	Let $\bdens{\rho}_{\sAB}$ be a bipartite density state with $\dens{\rho}_{\sAB} = \psi \otimes \psi^{\vee}$ for some $\psi \in \hilb_{\sA} \otimes \hilb_{\sB}$, then
	\begin{equation*}
		\dim H^{k}[\cpx{E}(\bdens{\rho}_{\sAB})] =
		\left \{
			\begin{array}{ll} 
				2(S^2 -1) &, \text{if $k=0$} \\
				0 &, \text{otherwise}
			\end{array}
		\right.
	\end{equation*}
	where $S$ is the Schmidt rank of $\psi$ (equivalently given as $\rank \left( \Tr_{\sX}[\psi \otimes \psi^{\vee}] \right)$ for $\sX \in \{\sA, \sB \})$.  In particular $\psi$ is factorizable if and only if $H^{k}[\mathrm{E}(\bdens{\rho}_{\sAB})] = 0$ for all $k$.
\end{theorem}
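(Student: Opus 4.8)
The plan is to exploit the fact that, for a pure state, the top building block $\mathtt{Com}(\dens{\rho}_{\sAB})$ collapses to a one-dimensional space; this forces the coboundary $d_{\cpx{E}}^{0}$ to reduce to a single scalar functional, after which the entire computation is rank--nullity bookkeeping.

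First I would record the dimensions of the three nonzero components. Since $\dens{\rho}_{\sAB} = \psi \otimes \psi^{\vee}$ has rank one, $\image(\dens{\rho}_{\sAB}) = \mathbb{C}\psi$ and hence $\mathtt{Com}(\dens{\rho}_{\sAB}) \cong \End[\mathbb{C}\psi] \cong \mathbb{C}$; in particular $\supp_{\sAB} = \psi \otimes \psi^{\vee}$ (a rank-one projection, as $\Tr[\dens{\rho}_{\sAB}]=1$ forces $\|\psi\|=1$). The reduced states $\dens{\rho}_{\sA}$ and $\dens{\rho}_{\sB}$ each have rank equal to the Schmidt rank $S$, so $\mathtt{Com}(\dens{\rho}_{\sX}) \cong \End[\image(\dens{\rho}_{\sX})]$ has dimension $S^{2}$ for $\sX \in \{\sA,\sB\}$. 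Thus $\cpx{E}(\bdens{\rho}_{\sAB})$ is concentrated in degrees $-1,0,1$ with component dimensions $1,\ 2S^{2},\ 1$.

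Next I would compute the coboundaries. The map $d_{\cpx{E}}^{-1}\colon \lambda \mapsto (\lambda\supp_{\sA}, \lambda\supp_{\sB})$ is injective, so $H^{-1} = \ker(d_{\cpx{E}}^{-1}) = 0$. For $d_{\cpx{E}}^{0}$ I would use $\supp_{\sAB} = \psi\otimes\psi^{\vee}$ together with the identities $\langle\psi, (1_{\sA}\otimes b)\psi\rangle = \Tr[\dens{\rho}_{\sB} b]$ and $\langle\psi, (a\otimes 1_{\sB})\psi\rangle = \Tr[\dens{\rho}_{\sA} a]$ (both immediate from $\dens{\rho}_{\sX} = \Tr_{\sX^{c}}[\psi\otimes\psi^{\vee}]$, or from a Schmidt decomposition), and the fact that $\supp_{\sAB} X \supp_{\sAB} = \langle\psi, X\psi\rangle\,\supp_{\sAB}$. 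This yields
\[
  d_{\cpx{E}}^{0}(a,b) = \bigl(\Tr[\dens{\rho}_{\sB} b] - \Tr[\dens{\rho}_{\sA} a]\bigr)\,\supp_{\sAB},
\]
which exhibits $d_{\cpx{E}}^{0}$ as a nonzero scalar functional followed by the identification $\mathbb{C} \cong \mathtt{Com}(\dens{\rho}_{\sAB})$. Hence $d_{\cpx{E}}^{0}$ is surjective and $H^{1} = \mathtt{Com}(\dens{\rho}_{\sAB})/\image(d_{\cpx{E}}^{0}) = 0$. Finally I would assemble $H^{0} = \ker(d_{\cpx{E}}^{0})/\image(d_{\cpx{E}}^{-1})$: the kernel $\{(a,b) : \Tr[\dens{\rho}_{\sA} a] = \Tr[\dens{\rho}_{\sB} b]\}$ is the zero set of a nonzero functional, so it has dimension $2S^{2}-1$, while $\image(d_{\cpx{E}}^{-1}) = \mathbb{C}\cdot(\supp_{\sA},\supp_{\sB})$ is one-dimensional and lies in the kernel (since $\Tr[\dens{\rho}_{\sA}\supp_{\sA}] = 1 = \Tr[\dens{\rho}_{\sB}\supp_{\sB}]$). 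Therefore $\dim H^{0} = (2S^{2}-1) - 1 = 2(S^{2}-1)$, and the cohomology vanishes in all other degrees. The concluding equivalence is then immediate: because $H^{-1}$ and $H^{1}$ always vanish, the cohomology is trivial in every degree precisely when $2(S^{2}-1)=0$, i.e.\ $S=1$, which for a pure state is exactly factorizability of $\psi$.

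I expect no serious obstacle, as the whole argument hinges on the collapse $\dim\mathtt{Com}(\dens{\rho}_{\sAB})=1$, after which it is pure dimension counting. The only step requiring care is the explicit evaluation of $d_{\cpx{E}}^{0}$ and the verification that the resulting functional $(a,b) \mapsto \Tr[\dens{\rho}_{\sB} b] - \Tr[\dens{\rho}_{\sA} a]$ is genuinely nonzero, so that $d_{\cpx{E}}^{0}$ surjects onto the one-dimensional top component; a normalization or sign slip there is the one place the computation could go astray.
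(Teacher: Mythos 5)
Your proposal is correct and follows essentially the same route as the paper: both arguments hinge on the collapse $\dim\mathtt{Com}(\dens{\rho}_{\sAB})=1$ forcing $d^{0}_{\cpx{E}}$ to be a surjective scalar functional (hence $H^{1}=0$), after which the value $2(S^{2}-1)$ drops out of dimension counting — the paper phrases this as an Euler-characteristic computation, while you do the equivalent rank–nullity bookkeeping explicitly. Your explicit formula $d^{0}_{\cpx{E}}(a,b)=\bigl(\Tr[\dens{\rho}_{\sB}b]-\Tr[\dens{\rho}_{\sA}a]\bigr)\supp_{\sAB}$ and the resulting description of $\ker(d^{0}_{\cpx{E}})$ also match the explicit presentation of $H^{0}$ the paper records immediately after its proof.
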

\begin{proof}
	By working through the definitions it is easy to see that the degree one cohomology vanishes.
	Hence, the dimension of the zeroth cohomology group must equal to the Euler characteristic which is easily computed to be $2S^2 - 2$.
\end{proof}
In fact, it is straightforward to explicitly compute (co)homology groups: let $\dens{\rho}_{\sAB}$ be a pure state as in Thm.~\ref{thm:pure_commutant_cohomology} then:
\begin{align*}
	H_{1}[\cpx{e}(\bdens{\rho}_{\sAB})] &= 0,\\
	H_{0}[\cpx{e}(\bdens{\rho}_{\sAB})] &= \{(\dens{\gamma}_{\sA}, \dens{\gamma}_{\sB}) \in \mathtt{com}(\dens{\rho}_{\sA}) \times \mathtt{com}(\dens{\rho}_{\sB}): \Tr[\dens{\gamma}_{\sA}] = -\Tr[\dens{\gamma}_{\sB}] \}/\mathrm{span}_{\mathbb{C}}\{(-\dens{\rho}_{\sA}, \dens{\rho}_{\sB}) \}.
\end{align*}
Via the trace pairing duality, then:
\begin{align*}
	H^{1}[\cpx{E}(\bdens{\rho}_{\sAB})] &= 0,\\
	H^{0}[\cpx{E}(\bdens{\rho}_{\sAB})] &= \{(a,b) \in \mathtt{Com}(\dens{\rho}_{\sA}) \times \mathtt{Com}(\dens{\rho}_{\sB}): \Tr[\dens{\rho}_{\sA} a] = \Tr[\dens{\rho}_{\sB} b] = 0 \}/\mathrm{span}_{\mathbb{C}}\{(\supp_{\sA}, \supp_{\sB}) \},
\end{align*}
and it follows that the degree 0 (co)homology groups have dimension $2S^2 -2$.

The following proposition shows that the zeroth cohomology of $\cpx{G}(\bdens{\rho}_{\sAB})$ also detects factorizability of pure states.
\begin{theorem}{}{pure_GNS_cohomology}
	Let $\bdens{\rho}_{\sAB}$ be a bipartite density state with $\dens{\rho}_{\sAB} = \psi \otimes \psi^{\vee}$ for some $\psi \in \hilb_{\sA} \otimes \hilb_{\sB}$, then $H^{0}[\cpx{G}(\bdens{\rho}_{\sAB})] = 0$ if and only if $\psi$ is factorizable.
\end{theorem}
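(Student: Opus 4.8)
The plan is to reduce the cocycle equation to a single linear condition on the vector $\psi$, exploiting that for a pure state the support projection $\supp_{\sAB}$ is the rank-one projector onto $\psi$. Since $\dens{\rho}_{\sAB}$ has unit trace, $\psi$ is automatically unit norm, so $\supp_{\sAB} = \dens{\rho}_{\sAB} = \psi \otimes \psi^{\vee}$. First I would unwind $H^{0}[\cpx{G}(\bdens{\rho}_{\sAB})] = \ker(d_{\cpx{G}}^{0})/\image(d_{\cpx{G}}^{-1})$. A pair $(a,b) \in \mathtt{GNS}(\dens{\rho}_{\sA}) \times \mathtt{GNS}(\dens{\rho}_{\sB})$ lies in $\ker(d_{\cpx{G}}^{0})$ precisely when $(1_{\sA}\otimes b - a \otimes 1_{\sB})\supp_{\sAB} = 0$. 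For any operator $M$ one has $M(\psi\otimes\psi^{\vee}) = (M\psi)\otimes\psi^{\vee}$, which vanishes if and only if $M\psi = 0$; hence the cocycle condition is equivalent to the clean equation
\begin{equation}
(a \otimes 1_{\sB})\psi = (1_{\sA}\otimes b)\psi. \tag{$\star$}
\end{equation}
The quotient is by $\image(d_{\cpx{G}}^{-1}) = \mathbb{C}\,(\supp_{\sA},\supp_{\sB})$, which indeed satisfies $(\star)$ by the compatibility of supports lemma (Lem.~\ref{lem:compat_of_supports}).

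For the backward direction (factorizable $\Rightarrow H^{0}=0$), I would use that $\psi = \psi_{\sA}\otimes\psi_{\sB}$ forces both reduced states to be rank one, so $\mathtt{GNS}(\dens{\rho}_{\sX}) = \algebra{\hilb_{\sX}}\supp_{\sX} = \{v \otimes \psi_{\sX}^{\vee} : v \in \hilb_{\sX}\}$. Writing $a = v\otimes\psi_{\sA}^{\vee}$ and $b = w\otimes\psi_{\sB}^{\vee}$, the condition $(\star)$ becomes $v \otimes \psi_{\sB} = \psi_{\sA}\otimes w$; a short rank-one-tensor argument then shows $v = \lambda\psi_{\sA}$ and $w = \lambda\psi_{\sB}$ for a common scalar $\lambda$, i.e.\ $(a,b) = \lambda(\supp_{\sA},\supp_{\sB})$. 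Thus $\ker(d_{\cpx{G}}^{0}) = \image(d_{\cpx{G}}^{-1})$ and $H^{0}$ vanishes.

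For the forward direction I would argue the contrapositive: if $\psi$ is not factorizable its Schmidt rank satisfies $S \geq 2$, and I would exhibit an explicit non-trivial cocycle. Fixing a Schmidt decomposition $\psi = \sum_{i=1}^{S}\sqrt{p_{i}}\,\xi_{i}^{\sA}\otimes\xi_{i}^{\sB}$ and the rank-one operators $\mathbbm{e}^{\sX}_{ij} := \xi_{i}^{\sX}\otimes(\xi_{j}^{\sX})^{\vee}$, a direct computation shows that $(a,b) := \bigl(\mathbbm{e}^{\sA}_{12},\,\sqrt{p_{2}/p_{1}}\,\mathbbm{e}^{\sB}_{21}\bigr)$ satisfies $(\star)$, since both $(a\otimes 1_{\sB})\psi$ and $(1_{\sA}\otimes b)\psi$ equal $\sqrt{p_{2}}\,\xi_{1}^{\sA}\otimes\xi_{2}^{\sB}$. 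This pair is a genuine cocycle but is not a scalar multiple of $(\supp_{\sA},\supp_{\sB})$: its first component $\mathbbm{e}^{\sA}_{12}$ is off-diagonal in the Schmidt basis, whereas $\supp_{\sA} = \sum_{k}\mathbbm{e}^{\sA}_{kk}$ is diagonal, so no scalar $\lambda$ can make $\mathbbm{e}^{\sA}_{12} = \lambda\supp_{\sA}$ when $S \geq 2$. Hence its class in $H^{0}$ is nonzero.

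The main obstacle is bookkeeping rather than conceptual. One must verify carefully that $\mathbbm{e}^{\sA}_{12}$ and $\mathbbm{e}^{\sB}_{21}$ really lie in the building blocks $\mathtt{GNS}(\dens{\rho}_{\sA})$ and $\mathtt{GNS}(\dens{\rho}_{\sB})$ (they do, since each vanishes on $\ker\dens{\rho}_{\sX}$ and is therefore fixed by right multiplication by $\supp_{\sX}$), and that the reduction to $(\star)$ is legitimate; this is the one place where purity is essential, entering through the rank-one form of $\supp_{\sAB}$. I note that the conclusion also follows immediately from the explicit dimension count $\dim H^{0} = S^{2}-1$ of the full pure-state computation (Thm.~\ref{thm:pure_bipartite_cohomology}), but the argument above is self-contained and does not presuppose that sharper result.
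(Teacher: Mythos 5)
Your proof is correct, but it takes a genuinely different route from the one the paper gives in the proof environment attached to this theorem. The paper proves the hard direction ($H^{0}[\cpx{G}(\bdens{\rho}_{\sAB})]=0 \Rightarrow \psi$ factorizable) indirectly: it feeds the hypothesis into the long exact sequence relating $\cpx{G}(\bdens{\rho}_{\sAB})$ and $\cpx{E}(\bdens{\rho}_{\sAB})$ to deduce $H^{0}[\ker(\Pi)]=0$, computes the Euler characteristic $\chi[\ker(\Pi)] = (1-r_{\sA})^{2}+(1-r_{\sB})^{2}-(1-r_{\sAB})^{2}$, and then uses purity ($r_{\sAB}=1$) together with non-negativity of $\dim H^{1}[\ker(\Pi)]$ to force $r_{\sA}=r_{\sB}=1$. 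You instead argue the contrapositive by exhibiting an explicit non-trivial cocycle $(\mathbbm{e}^{\sA}_{12},\sqrt{p_{2}/p_{1}}\,\mathbbm{e}^{\sB}_{21})$ whenever the Schmidt rank is at least $2$; your computation checks out, the pair does lie in $\mathtt{GNS}(\dens{\rho}_{\sA})\times\mathtt{GNS}(\dens{\rho}_{\sB})$, and it is visibly not proportional to $(\supp_{\sA},\supp_{\sB})$. What each approach buys: the paper's argument needs no Schmidt decomposition and no explicit cocycle, but it leans on the $\ker(\Pi)$ machinery built earlier; yours is self-contained, produces a concrete witness of non-factorizability (in the spirit of \S\ref{sec:classes_and_correlations}), and --- unlike the paper's \emph{second} proof via the functional-calculus cocycles $(f(\dens{\rho}_{\sA}),f(\dens{\rho}_{\sB}))$ of Lem.~\ref{lem:explicit_cocycles}, which are diagonal in the Schmidt basis and require a genericity/perturbation step to handle repeated Schmidt coefficients --- your off-diagonal cocycle works uniformly for any $S\geq 2$. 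The price is that you are essentially reproving a slice of Thm.~\ref{thm:pure_bipartite_cohomology}, which the paper defers and then cites as yet another derivation. Your easy direction (rank-one supports force $\ker(d^{0}_{\cpx{G}})=\image(d^{-1}_{\cpx{G}})$) matches the paper's in substance.
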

\begin{proof}
	If $\psi$ is factorizable then $\supp_{\sX}$ for $\sX \in \{\sA, \sB, \sAB \}$ is a rank 1 projection.  Suppose $H^{0}[\cpx{G}] = 0$, then by the long exact sequence \eqref{eq:com_les_cohomology} we have $H^{0}[\ker(\Pi)] = 0$; as a result, the Euler characteristic $\chi = \chi[\ker(\Pi)]$ of $\ker(\Pi)$ must be $-\dim H^{1}[\ker(\Pi)]$. 
	It is a straightforward exercise to calculate
	\begin{align*}
		\chi &= (1- r_{\sA})^2 + (1-r_{\sB})^2 - (1-r_{\sAB})^2
	\end{align*}
	where $r_{\sX} := \rank(\dens{\rho}_{\sX})$. Because $H^{0}[\ker(\Pi)] = 0 $, then we must have
	\begin{align*}
		\dim H^{1}[\ker(\Pi)] = -\chi = -(1- r_{\sA})^2 - (1-r_{\sB})^2 + (1-r_{\sAB})^2.
	\end{align*}
	Because $\dens{\rho}_{\sAB}$ is pure, then $r_{\sAB} = 1$.  Further, a dimension must be $\geq 0$; so we must have $r_{\sA} = r_{\sB} = 1$; hence, $\psi$ is factorizable.
\end{proof}

Theorem~\ref{thm:pure_GNS_cohomology} can also be considered as a corollary of Thm.~\ref{thm:pure_bipartite_cohomology} below, where we construct non-trivial cohomology classes and provide the dimension of $H^{0}[\cpx{G}(\bdens{\rho}_{\sAB})]$ in terms of the Schmidt rank $S$.
However, it is instructive to supply a quicker construction of non-trivial cohomology classes (providing another proof of Thm.~\ref{thm:pure_GNS_cohomology}).  
To this end, suppose $x$ is a self-adjoint operator on a finite dimensional Hilbert space, then $x$ has the spectral decomposition 
\begin{align*}
	x &= \sum_{\lambda \in \sigma_{\neq 0}(x)} \lambda \mathbbm{P}_{\lambda} 
\end{align*}
where $\sigma_{\neq 0}(x)$ is the non-vanishing set of eigenvalues of $x$, and $\mathbbm{P}_{\lambda}$ is the projection onto the eigenspace associated to eigenvalue $\lambda \in \sigma_{\neq 0}(x)$. 
Then for any function $f: \mathbb{R} \rightarrow \mathbb{C}$ we can define a new operator\footnote{Note that the right hand side is a well-defined operator for any function of underlying sets $f: \mathbb{R} \rightarrow \mathbb{C}$. This means, we require no, e.g.\ continuity properties.
	(In fact, if we can work with functions only be defined on $\sigma_{\neq 0}(x)$, and extend arbitrarily, if we wish.)
	In infinite dimensions one can perform similar procedures on self-adjoint operators with respect to a restricted class of functions (e.g.\ the continuous functional calculus, holomorphic functional calculus, and Borel functional calculus).
	The proper generalization of our discussion here relies on taking $f$ to be Borel measurable and applying the Borel functional calculus.}
\begin{align*}
	f(x) &= \sum_{\lambda \in \sigma_{\neq 0}} f(\lambda) \mathbbm{P}_{\lambda}.
\end{align*}

The following lemma allows us to provide alternative proofs of Theorems \ref{thm:pure_commutant_cohomology} and \ref{thm:pure_GNS_cohomology}.
\begin{lemma}{}{explicit_cocycles}
	Let $\bdens{\rho}_{\sAB}$ be a pure bipartite density state and $f: \mathbb{R} \rightarrow \mathbb{C}$ a function\footnote{In infinite dimensions we use the Borel function calculus and require $f$ to be Borel-measurable and essentially bounded.}, then $(f(\dens{\rho}_{\sA}), f(\dens{\rho}_{\sB})) \in \algebra{\hilb_{\sA}} \times \algebra{\hilb_{\sB}}$ defines a degree 0 cocycle of both $\cpx{E}(\bdens{\rho}_{\sAB})$ and $\cpx{G}(\bdens{\rho}_{\sAB})$.
\end{lemma}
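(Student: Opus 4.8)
The plan is to reduce both assertions to a single computation for the GNS complex, since the commutant coboundary factors through it. Comparing the two formulas one sees $d_{\cpx{E}}^{0}(a,b) = \supp_{\sAB}\, d_{\cpx{G}}^{0}(a,b)$, so anything killed by $d_{\cpx{G}}^{0}$ is automatically killed by $d_{\cpx{E}}^{0}$. First I would check that the pair lands in the correct degree-$0$ component. Writing the functional calculus $f(\dens{\rho}_{\sX}) = \sum_{\lambda \neq 0} f(\lambda)\, \mathbbm{P}_{\lambda}$ as a sum over the \emph{nonzero} eigenvalues of $\dens{\rho}_{\sX}$, every spectral projection $\mathbbm{P}_{\lambda}$ projects onto a subspace of $\image(\dens{\rho}_{\sX})$, so $f(\dens{\rho}_{\sX})$ satisfies $\supp_{\sX} f(\dens{\rho}_{\sX}) = f(\dens{\rho}_{\sX}) = f(\dens{\rho}_{\sX})\supp_{\sX}$. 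Hence $f(\dens{\rho}_{\sX}) \in \mathtt{Com}(\dens{\rho}_{\sX}) \subseteq \mathtt{GNS}(\dens{\rho}_{\sX})$ for $\sX \in \{\sA, \sB\}$, and $(f(\dens{\rho}_{\sA}), f(\dens{\rho}_{\sB}))$ is a genuine element of the degree-$0$ component of both complexes.

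The main step is to show $d_{\cpx{G}}^{0}(f(\dens{\rho}_{\sA}), f(\dens{\rho}_{\sB})) = \bigl(1_{\sA}\otimes f(\dens{\rho}_{\sB}) - f(\dens{\rho}_{\sA})\otimes 1_{\sB}\bigr)\supp_{\sAB} = 0$, and here I would exploit purity through a Schmidt decomposition $\psi = \sum_{i=1}^{S}\sqrt{p_{i}}\,\xi^{\sA}_{i}\otimes\xi^{\sB}_{i}$ with $p_{i}>0$ and orthonormal Schmidt vectors. Taking partial traces gives $\dens{\rho}_{\sA} = \sum_{i} p_{i}\,\xi^{\sA}_{i}\otimes(\xi^{\sA}_{i})^{\vee}$ and $\dens{\rho}_{\sB} = \sum_{i} p_{i}\,\xi^{\sB}_{i}\otimes(\xi^{\sB}_{i})^{\vee}$, so the two reduced states share the \emph{same} nonzero spectrum $\{p_{i}\}$ with $\xi^{\sA}_{i}$ and $\xi^{\sB}_{i}$ as matched eigenvectors. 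Consequently $f(\dens{\rho}_{\sA})\xi^{\sA}_{i} = f(p_{i})\xi^{\sA}_{i}$ and $f(\dens{\rho}_{\sB})\xi^{\sB}_{i} = f(p_{i})\xi^{\sB}_{i}$, and applying the two operators to $\psi$ yields the identical vector
\[
	(f(\dens{\rho}_{\sA})\otimes 1_{\sB})\,\psi = \sum_{i}\sqrt{p_{i}}\,f(p_{i})\,\xi^{\sA}_{i}\otimes\xi^{\sB}_{i} = (1_{\sA}\otimes f(\dens{\rho}_{\sB}))\,\psi.
\]

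Finally, because the state is pure the support projection is the rank-one projection $\supp_{\sAB} = \psi\otimes\psi^{\vee}$ (using $\langle\psi,\psi\rangle = 1$), and for any operator $M$ on $\hilb_{\sA}\otimes\hilb_{\sB}$ one has $M\,\supp_{\sAB} = (M\psi)\otimes\psi^{\vee}$. The vector identity of the previous step therefore upgrades to $(f(\dens{\rho}_{\sA})\otimes 1_{\sB})\supp_{\sAB} = (1_{\sA}\otimes f(\dens{\rho}_{\sB}))\supp_{\sAB}$, which is exactly $d_{\cpx{G}}^{0}(f(\dens{\rho}_{\sA}), f(\dens{\rho}_{\sB})) = 0$; left-multiplying by $\supp_{\sAB}$ then gives $d_{\cpx{E}}^{0}(f(\dens{\rho}_{\sA}), f(\dens{\rho}_{\sB})) = 0$ as well. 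The one place purity is essential — and the crux of the argument — is the matched-spectrum observation: for a mixed $\dens{\rho}_{\sAB}$ the reduced states need not have equal nonzero spectra, and $f(\dens{\rho}_{\sA})\otimes 1_{\sB}$ and $1_{\sA}\otimes f(\dens{\rho}_{\sB})$ will in general disagree on $\image(\dens{\rho}_{\sAB})$. Degeneracies among the $p_{i}$ cause no difficulty, since the matched-eigenvector computation is carried out vector-by-vector and never requires the eigenvalues to be distinct.
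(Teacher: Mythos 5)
Your proof is correct and follows essentially the same route as the paper: verify that $f(\dens{\rho}_{\sX})$ lands in the degree-$0$ building blocks, reduce the coboundary condition to the action on $\psi$ via $\supp_{\sAB}=\psi\otimes\psi^{\vee}$, and conclude from the matched Schmidt spectra of the two reduced states. The only (harmless) cosmetic difference is that you make explicit the factorization $d_{\cpx{E}}^{0}=\supp_{\sAB}\,d_{\cpx{G}}^{0}$ to handle the commutant case, where the paper treats both complexes in parallel.
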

\begin{proof}
	Because $\bdens{\rho}_{\sAB}$ is pure, then $\dens{\rho}_{\sA} = \psi \otimes \psi^{\vee}$ for some unit vector $\psi \in \hilb_{\sA} \otimes \hilb_{\sB}$ (unique up to multiplication by an element of the unit circle).
	It is clear that $f(\bdens{\rho}_{\sX}) = \supp_{\sX} f(\bdens{\rho}_{\sAB}) \supp_{\sX}$ so $f(\bdens{\rho}_{\sX})$ is an element of both $\mathtt{Com}(\bdens{\rho}_{\sX})$ and $\mathtt{GNS}(\dens{\rho}_{\sX})$.
	Note that $(\supp_{\sA} \otimes f(\dens{\rho}_{\sB}) - f(\dens{\rho}_{\sA}) \otimes s_{\sA}) \supp_{\sAB} = 0$ if and only if
	\begin{align*}
		0 &= (\supp_{\sA} \otimes f(\dens{\rho}_{\sB}) - f(\dens{\rho}_{\sA}) \otimes s_{\sA}) \psi.
	\end{align*}
	With this in hand, the result follows easily by choosing a Schmidt decomposition of $\psi$ and expressing $f(\dens{\rho}_{\sA})$ and $f(\dens{\rho}_{\sB})$ as weighted sums of projections onto the orthonormal vectors in this Schmidt decomposition. 
\end{proof}

As a special family of cases: note that, by choosing $f$ to be characteristic functions with sufficiently small support, we can select out pairs of projection operators.
When $\dens{\rho}_{\sAB}$ is pure, its reduced states have the same spectrum, and these pairs of projection operators are those pairs of projections onto eigenspaces that are associated to a given eigenvalue. 
For a generic pure bipartite state with Schmidt rank $S$ (where ``generic" means that the Schmidt coefficients have multiplicity at most 1), this provides $(S-1)$-independent elements of $H^{0}[\cpx{E}(\bdens{\rho}_{\sAB})]$ and $H^{0}[\cpx{G}(\bdens{\rho}_{\sAB})]$.
Hence, when the state is generic, we have just provided an alternative proof to both Thm.~\ref{thm:pure_commutant_cohomology} and Thm.~\ref{thm:pure_GNS_cohomology}.
To extend to the case when the state is \textit{not} generic we use Thm.~\ref{thm:bipartite_invertible_equivariance}: suppose $\dens{\rho}_{\sAB} = \psi \otimes \psi^{\vee}$ and the Schmidt decomposition of $\psi$ has Schmidt coefficients with multiplicity $\geq 1$; then we can always find linear transformations $l_{\sX}: \hilb_{\sX} \rightarrow \hilb_{\sX},\, \sX \in \{\sA, \sB \}$, such that $\left(l_{\sA} \otimes l_{\sB} \right) \psi$ is a unit norm vector with generic Schmidt decomposition (note that this requires $l_{\sA}$ and $l_{\sB}$ to be non-unitary), by an application of Thm.~\ref{thm:bipartite_invertible_equivariance}.

As yet another way to prove Theorems \ref{thm:pure_commutant_cohomology} and \ref{thm:pure_GNS_cohomology} using Lem.~\ref{lem:explicit_cocycles}, we can choose $f: \mathbb{R} \rightarrow \mathbb{C}$ to be the function $\log_{+}$ given by $\log_{+}(\lambda) = \log(\lambda)$ for $\lambda > 0$ and $0$ otherwise; the result is that $(\log_{+}(\dens{\rho}_{\sA}), \log_{+}(\dens{\rho}_{\sB}))$ defines a cocycle.\footnote{Our choice of modifying the logarithm to be finite at zero is somewhat immaterial: no matter how we wish to extend the notion of logarithm $\log(\dens{\rho})$ as an operator on $\image(\dens{\rho}) \oplus \ker(\dens{\rho})$, the extension results in the same element $\log(\dens{\rho}) \supp_{\dens{\rho}} \in \mathtt{GNS}(\dens{\rho})$ and $\supp_{\dens{\rho}} \log(\dens{\rho}) \supp_{\dens{\rho}} \in \mathtt{Com}(\dens{\rho})$.}
It is straightforward to show that, outside of the highly non-generic case that that all Schmidt coefficients are equal to $1/S$ (where $S$ is the Schmidt rank), this cocycle defines a non-trivial cohomology class if $S > 1$.
Once again, one can treat the generic situation by an application of Thm.~\ref{thm:bipartite_invertible_equivariance}.

\begin{remark}{}{}
	The operator $-\log_{+}(\dens{\rho})$ is related to the \textit{modular Hamiltonian}, alluding to its role in the modular flow of Tomita-Takesaki.\footnote{The flow here is on the GNS representation associated to $\dens{\rho}$; the modular flow acts on the GNS representation (given by $\hilb \otimes \image(\dens{\rho})^{\vee}$ for finite rank $\dens{\rho})$) by right multiplication by the 1-parameter of unitary elements $t \mapsto \dens{\rho}^{it}$ for real $t$; it commutes with left multiplication by elements of $\algebra{\hilb}$.}
		Traditionally the modular flow is only defined for full rank states; however, there is nothing wrong with defining it for non-full rank states.
		In this sense, the modular Hamiltonians of reduced density states of a pure bipartite state always form a cocycle, which is (generically) represents a non-trivial cohomology class of either GNS or commutant cohomology whenever the state is entangled.
\end{remark}

\subsubsection{Mixed States \label{sec:factorizability_mixed}}
We turn our attention to generic, possibly mixed (non-pure) states.
In this more general setup, we have the following weaker version of Thms.~\ref{thm:pure_GNS_cohomology} and \ref{thm:pure_commutant_cohomology}.
\begin{theorem}{}{support_fact_then_vanishing}
	If $\bdens{\rho}_{\sAB}$ is support factorizable then $H^{0}[\cpx{E}(\bdens{\rho}_{\sAB})] = H^{0}[\cpx{G}(\bdens{\rho}_{\sAB})] = 0$.
\end{theorem}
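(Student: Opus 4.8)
The plan is to show directly that $\ker(d^{0}) = \image(d^{-1})$ for each complex, since $H^{0}[\cpx{C}(\bdens{\rho}_{\sAB})] = \ker(d^{0})/\image(d^{-1})$ and the inclusion $\image(d^{-1}) \subseteq \ker(d^{0})$ is automatic from the fact that these are cochain complexes (a consequence of Lem.~\ref{lem:compat_of_supports}). For either complex one reads off $\image(d^{-1}) = \mathrm{span}_{\mathbb{C}}\{(\supp_{\sA}, \supp_{\sB})\}$ directly from the definition of $d^{-1}$. Hence the entire content of the theorem is the reverse inclusion $\ker(d^{0}) \subseteq \image(d^{-1})$, and this is exactly where support factorizability is used.

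First I would substitute the hypothesis $\supp_{\sAB} = \supp_{\sA} \otimes \supp_{\sB}$ into the cocycle condition and simplify using the defining idempotency of the building blocks. For the GNS complex a pair $(a,b)$ with $a\supp_{\sA}=a$ and $b\supp_{\sB}=b$ lies in $\ker(d_{\cpx{G}}^{0})$ iff $(1_{\sA}\otimes b - a \otimes 1_{\sB})(\supp_{\sA}\otimes\supp_{\sB}) = 0$; expanding the two terms as $(1_{\sA}\supp_{\sA})\otimes(b\supp_{\sB}) = \supp_{\sA}\otimes b$ and $(a\supp_{\sA})\otimes(1_{\sB}\supp_{\sB}) = a\otimes\supp_{\sB}$, the condition collapses to the single tensor identity $\supp_{\sA}\otimes b = a\otimes\supp_{\sB}$. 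For the commutant complex the same substitution, now together with the two-sided compression property $\supp_{\sX}\,a\,\supp_{\sX} = a$ defining $\mathtt{Com}(\dens{\rho}_{\sX})$, turns $\supp_{\sAB}(1_{\sA}\otimes b - a\otimes 1_{\sB})\supp_{\sAB}=0$ into the \emph{identical} equation $\supp_{\sA}\otimes b = a\otimes\supp_{\sB}$. Thus both cases merge into one elementary statement.

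The remaining step is the linear-algebra fact that, because $\supp_{\sA}$ and $\supp_{\sB}$ are nonzero, the equation $\supp_{\sA}\otimes b = a\otimes\supp_{\sB}$ in $\algebra{\hilb_{\sA}}\otimes\algebra{\hilb_{\sB}}$ forces $(a,b)$ to be a scalar multiple of $(\supp_{\sA},\supp_{\sB})$. I would argue this by choosing any functional $\phi\in\algebra{\hilb_{\sA}}^{\vee}$ with $\phi(\supp_{\sA})=1$ and applying $\phi\otimes\mathrm{id}$ to both sides, which gives $b = \phi(a)\supp_{\sB}$. Setting $\lambda := \phi(a)$ and feeding this back into the identity yields $(\lambda\supp_{\sA} - a)\otimes\supp_{\sB} = 0$, whence $a = \lambda\supp_{\sA}$ since $\supp_{\sB}\neq 0$. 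Therefore $(a,b) = \lambda(\supp_{\sA},\supp_{\sB}) = d^{-1}(\lambda)\in\image(d^{-1})$, with the degenerate situations ($a$ or $b$ vanishing) automatically subsumed under $\lambda = 0$.

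I do not expect a serious obstacle here: the crux is recognizing that support factorizability is precisely what converts the a priori entangled cocycle condition into a rank-one tensor identity, after which the GNS and commutant arguments coincide and the conclusion is pure linear algebra. The only point demanding care is the bookkeeping with left- versus right-multiplication by the support projections when simplifying the two coboundary expressions; getting these idempotency reductions correct is exactly what makes both complexes land on the common identity $\supp_{\sA}\otimes b = a\otimes\supp_{\sB}$.
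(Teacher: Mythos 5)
Your proposal is correct and follows essentially the same route as the paper: the paper's one-line proof likewise reduces the cocycle condition, via $\supp_{\sAB}=\supp_{\sA}\otimes\supp_{\sB}$, to the identity $a\otimes\supp_{\sB}=\supp_{\sA}\otimes b$ and observes this forces $(a,b)$ to be proportional to $(\supp_{\sA},\supp_{\sB})$. Your version merely spells out the elementary tensor argument (and correctly states the conclusion up to a scalar, which the paper's phrasing glosses over).
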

\begin{proof}
	The proof follows from the observation that $a \otimes \supp_{\sB} - \supp_{\sA} \otimes b = 0$ for $(a,b) \in \mathtt{GNS}(\dens{\rho}_{\sA}) \times \mathtt{GNS}(\dens{\rho}_{\sB})$ if and only if $a = \supp_{\sA}$ and $b = \supp_{\sB}$.
\end{proof}
So, even for mixed states, cohomologies can be used to detect factorizability: the non-vanishing of either GNS or commutant cohomologies contradicts support factorizability, and hence factorizability. 
However, the converse to Thm.~\ref{thm:support_fact_then_vanishing} is not true for a general mixed state: without purity, vanishing of cohomologies does not imply factorizability.
The first example below provides a counter-example.
The second example shows that there exist separable states (recall that a density state is \textit{separable} if it can be written as a convex linear combination of factorizable states) whose cohomologies are non-vanishing.\footnote{One can use the included software to calculate explicit cohomologies of the examples.}
\begin{example}{}{mixed_state_counterexamples}
	In the following we suppress the indices $\sA$ and $\sB$ and $\{\ket{0}, \ket{1} \}$ denotes an orthonormal basis for a qubit Hilbert space (i.e.\ a two-dimensional Hilbert space).
	\begin{enumerate}
		\item \textit{$\supp_{\sAB}$ is a strict subprojection of $\supp_{\sA} \otimes \supp_{\sB}$ (i.e.\ not support factorizable) but $H^{0}[\cpx{G}(\bdens{\rho}_{\sAB})] = H^{0}[\cpx{E}(\bdens{\rho}_{\sAB})] = 0$}:
			\begin{enumerate}
				\item $\dens{\rho}_{\sAB} = \lambda \ket{00} \bra{00} + (1-\lambda) \ket{01}\bra{01}$ for $\lambda \in (0,1)$;

				\item $\dens{\rho}_{\sAB} = \lambda \ket{00}\bra{00} + \eta \ket{01}\bra{01} + (1- \lambda - \eta) \ket{10} \bra{10}$ for $\lambda, \eta \in (0,1]$.
			\end{enumerate}

		\item \textit{Separable but $H^{0}[\cpx{G}(\bdens{\rho}_{\sAB})],\, H^{0}[\cpx{E}(\bdens{\rho}_{\sAB})] \neq 0$}:
			\begin{align*}
				\dens{\rho}_{\sAB} &= \lambda \ket{00}\bra{00} + (1-\lambda) \ket{11}\bra{11} 
			\end{align*}
			for $\lambda >0$.
			The associated Poincar\'{e} polynomials are $P_{\cpx{G}}(\bdens{\rho}_{\sAB}) = 1 + 2y$ and $P_{\cpx{E}}(\bdens{\rho}_{\sAB}) = 5 + 2y$.
	\end{enumerate}
\end{example}
In \S\ref{sec:cor_and_cohom} we present an interpretation of $H^{0}[\cpx{G}(\bdens{\rho}_{\sAB})]$ in terms of pairs of correlated operators.
The fact that there are separable states with non-vanishing cohomology is related to the fact that (non-factorizable) separable states can still be associated with correlations between local operators; because separable states can be constructed from pure factorizable states by local operations and classical communication \cite{Werner}, one might consider correlation between operators with respect to a separable state as ``classical".

\subsection{Correlation and Cohomology \label{sec:cor_and_cohom}}
Next we explore the relationship between the cohomology of $\cpx{G}(\bdens{\rho}_{\sAB})$ and pairs of operators with ``non-local" correlations.

\subsubsection{EPR Pairs \label{sec:EPR_pairs}}
Given a self adjoint operator $r \in \algebra{\hilb}$ (a.k.a. an \textit{observable}), and a density state $\dens{\rho} \in \Dens(\hilb)$, the Born-rule is a way of assigning a probability measure on the spectrum of $r$.
This probability measure is thought of as the outcome of a ``projective measurement" of $r$: an experiment whose outcome is an element of the spectrum of $r$; and modifying the density state associated to the system to be compatible with this outcome.
Given two non-commuting operators $q,r \in \algebra{\hilb}$; the Born rule derived probability measure associated to the outcome of a projective measurement of $q$, then $r$ is different than the Born-rule derived probability measure for the projective measurements in the reversed order.
However, for commuting operators, the Born-rule derived probability distributions are independent of order (an one can even speak of a simultaneous measurement).
With this in mind, we can unambiguously speak of the result of a projective measurement of a pair of operators $(q,r) \in \algebra{\hilb} \times \algebra{\hilb}$, the outcome of such a measurement is a pair of real numbers $(\lambda, \eta)$ where $\lambda$ is in the spectrum of $q$, $\eta$ is in the spectrum of $r$, and such outcomes are distributed according to the probability measure derived through the Born-rule.

Now given a bipartite density state $\bdens{\rho}_{\sAB} = (\hilb_{\sA}, \hilb_{\sB}, \dens{\rho}_{\sAB})$, any operators of the form $a \otimes 1_{\sB} \in \algebra{\hilb_{\sAB}}$ commute with operators of the form $1_{\sA} \otimes b \in \algebra{\hilb_{\sB}}$ and we can speak of the result of a projective measurement of the pair $(a \otimes 1_{\sB}, 1_{\sA} \otimes b)$; for simplicity we will simply say a projective measurement of $(a,b)$.
If we interpret the tensor factors $\sA$ and $\sB$ as spatially disjoint ``local" subsystems, then a measurement of $a \otimes 1_{\sB}$ can be thought of as a measurement of $a$ by an observer sitting at the subsystem $\sA$ and a measurement of $1_{\sA} \otimes b$ as a measurement of $b$ by an observer sitting at $\sB$.
If $\dens{\rho}_{\sAB}$ is a pure entangled (non-factorizable) state, then there are pairs of operators $(a,b)$ where the outcome of a projective measurement of $a$ is sufficient to determine the outcome of a projective measurement of $b$ and vice-versa.

To see this we begin with a typical modern version of the Einstein-Podolsky-Rosen gedankenexperiment one begins with a bipartite system given by qubit Hilbert spaces $\hilb_{\sA}$ and $\hilb_{\sB}$ (equipped with computational bases $\{ \ket{0_{\sX}}, \ket{1_{\sX}} \}$) along with the pure bipartite state given by $\dens{\rho}_{\sAB} = \psi_{\mathrm{bell}} \otimes \psi_{\mathrm{bell}}^{\vee}$ where
\begin{align*}
	\psi_{\mathrm{Bell}} := \frac{1}{\sqrt{2}} \left( \ket{0_{\sA}0_{\sB}} + \ket{1_{\sA}1_{\sB}} \right).
\end{align*}
Define $p_{\sX} := \ket{0_{\sX}} \bra{0_{\sX}}$ (i.e.\ the projection operator onto the span of $\ket{0_{\sX}}$) for $\sX \in \{\sA, \sB \}$.
Then a projective measurement of the pair of operators $(p_{\sA}, p_{\sB})$ in the presence of the state $\dens{\rho}_{\sAB}$ results in either the pair $(0,0)$ (with probability $1/2$) or $(1,1)$ (with probability $1/2$).
If observers at $\sA$ and $\sB$ know the initial state on the system $\sAB$ before measurement, the value of the measurement of $p_{\sA}$ by an observer at subsystem $\sA$ automatically and instantaneously knows the result of the measurement of $p_{\sB}$ at subsystem $\sB$ (even if the two are separated by large distance).
This sort of phenomenon is characteristic of pure entangled states.

\begin{remark}{}{}
	There is a similar, purely classical, phenomenon that also exists for joint probability measures.
	For instance, let $X$ be a finite set and $\mu: X \times X \rightarrow [0,1]$ a point-wise defined probability measure concentrated along the diagonal.
	Random variables are simply taken to be $\mathbb{C}$-valued functions on $X$.
	The entanglement phenomenon of quantum mechanics, however, is distinguished by the fact that there are EPR pairs that do not commute: i.e.\ there are pairs $(a_1, b_1)$ and $(a_2, b_{2})$ such that $[a_1,b_1] \neq 0$ and $[b_1,b_2] \neq 0$.
	In contrast, all classical observables/random variables are commutative.
	See, e.g.\ \cite{bell} for an exposition of the differences between classical correlations and quantum correlations due to entanglement.
	When generalized and restricted to the purely commutative/classical case, the cohomology theory defined in this paper actually takes into account classical EPR pairs; however, as a reflection of the fact that entanglement is a quantum mechanical phenomenon, the dimensions of cohomology groups of, e.g.\ the Bell state, are different than the dimensions of cohomology groups associated to a non-factorizable joint probability measure on the set $\{0,1 \}^{\times 2}$.
	In particular, the cohomology groups of the Bell state contain non-commuting EPR pairs in the sense described above.
\end{remark}

Using the example above as motivation, we define the notion of an EPR pair in the presence of an arbitrary bipartite density state.
\begin{definition}{}{EPR_pair}
	Let $\bdens{\rho}_{\sAB} = (\hilb_{\sA}, \hilb_{\sB}, \dens{\rho}_{\sAB})$ be a bipartite density state.
	A pair of self-adjoint operators $(a,b) \in \algebra{\hilb_{\sA}} \times \algebra{\hilb_{\sB}}$ is an \newword{EPR Pair} (associated to $\bdens{\rho}_{\sAB}$) if the result of any projective measurement of $(a,b)$ lies on the diagonal of $\mathbb{R} \times \mathbb{R}$. 
\end{definition}
I.e.\ suppose that $(a,b) \in \ker(d^{0}_{\msf{G}})$ with $a$ and $b$ self-adjoint, if an observer associated to the tensor factor/subsystem $\sA$ performs a projective measurement of the variable $a$, obtaining the eigenvalue $\lambda$, and an observer associated to $\sB$ performs a simultaneous projective measurement of $b$, obtaining the eigenvalue $\eta$, then we are guaranteed that $\lambda = \eta$.
Requiring the result of any projective measurement of $a$ to \textit{equal} the projective measurement of $b$ might appear too strict: more generally we would like the result of a projective measurement of $a$ to simply determine the value of the projective measurement of $b$ (encompassing observables that are, perhaps correlated in some non-linear manner).
However, in Appendix~\ref{app:projective_indistinguishability} we explain how any pair $(a,b)$ of self-adjoint operators such that the value of a projective measurement of $a$ determines the value of a projective measurement of $b$ can be used to construct a new pair $(\twid{a}, b)$ that is an EPR pair by our above definition.

Using the following theorem, we can extract EPR pairs of $\bdens{\rho}_{\sAB}$ (up to shifts by the ``trivial" pair of observables $(\supp_{\sA}, \supp_{\sB})$) from $H^{0}[\cpx{G}(\bdens{\rho}_{\sAB})]$. 
\begin{theorem}{}{projective_measurement_indistinguishability}
	Let $(a,b) \in \mathtt{GNS}(\dens{\rho}_{\sA}) \times \mathtt{GNS}(\dens{\rho}_{\sB})$ be self-adjoint operators.\footnote{In particular we must have $a \in \supp_{\sA} \algebra{\hilb_{\sA}} \supp_{\sA}$ and $b \in \supp_{\sB} \algebra{\hilb_{\sB}} \supp_{\sB}$.}  Then $(a,b) \in \ker(d^{0}_{\cpx{G}})$ if and only if $(a,b)$ is an EPR pair.
\end{theorem}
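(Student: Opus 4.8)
The plan is to translate both sides of the claimed equivalence into a single statement about how the commuting self-adjoint operators $A := a \otimes 1_{\sB}$ and $B := 1_{\sA} \otimes b$ act on $\image(\dens{\rho}_{\sAB})$, mediating between the probabilistic (EPR) formulation and the algebraic (cocycle) formulation via the spectral decompositions of $a$ and $b$. First I would fix spectral decompositions $a = \sum_{\lambda} \lambda\, p^{a}_{\lambda}$ on $\hilb_{\sA}$ and $b = \sum_{\eta} \eta\, p^{b}_{\eta}$ on $\hilb_{\sB}$; since $a$ and $b$ are self-adjoint, the families $\{p^{a}_{\lambda}\}$ and $\{p^{b}_{\eta}\}$ are orthogonal resolutions of the identity. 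Because $A$ and $B$ commute, their joint spectral projections are the tensor products $p^{a}_{\lambda} \otimes p^{b}_{\eta}$, and the Born-rule probability of the outcome $(\lambda,\eta)$ in the presence of $\dens{\rho}_{\sAB}$ is $\Tr[\dens{\rho}_{\sAB}\,(p^{a}_{\lambda} \otimes p^{b}_{\eta})]$. With this notation, $(a,b)$ is an EPR pair precisely when these probabilities vanish for every off-diagonal pair $\lambda \neq \eta$ (the diagonal weights then automatically sum to $\Tr[\dens{\rho}_{\sAB}] = 1$).

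The next step is positivity: each $p^{a}_{\lambda} \otimes p^{b}_{\eta}$ is a self-adjoint projection $Q$, so $\Tr[\dens{\rho}_{\sAB}\,Q] = \Tr[\dens{\rho}_{\sAB}\,Q^{*}Q]$, and by the characterization of $\mathfrak{N}_{\dens{\rho}_{\sAB}}$ together with Prop.~\ref{prop:vanishing_ideal_characterizations} this trace vanishes if and only if $Q \in \mathfrak{N}_{\dens{\rho}_{\sAB}}$, i.e.\ if and only if $Q\,\supp_{\sAB} = 0$. Hence the EPR condition is equivalent to the purely algebraic statement
\begin{align*}
	(p^{a}_{\lambda} \otimes p^{b}_{\eta})\,\supp_{\sAB} = 0 \qquad \text{for all } \lambda \neq \eta.
\end{align*}

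For the two implications I would use the spectral expansion $A - B = \sum_{\lambda,\eta}(\lambda - \eta)\,(p^{a}_{\lambda} \otimes p^{b}_{\eta})$, whose kernel is exactly $\bigoplus_{\mu}\image(p^{a}_{\mu} \otimes p^{b}_{\mu})$. If $(a,b) \in \ker(d^{0}_{\cpx{G}})$, meaning $(1_{\sA} \otimes b - a \otimes 1_{\sB})\supp_{\sAB} = (B - A)\supp_{\sAB} = 0$, then $\image(\dens{\rho}_{\sAB}) = \image(\supp_{\sAB}) \subseteq \ker(A-B)$; since for $\lambda \neq \eta$ the projection $p^{a}_{\lambda} \otimes p^{b}_{\eta}$ is orthogonal to every $p^{a}_{\mu} \otimes p^{b}_{\mu}$ and hence annihilates $\ker(A-B)$, the displayed statement follows, so $(a,b)$ is an EPR pair. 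Conversely, if the displayed statement holds, then for any $v \in \image(\dens{\rho}_{\sAB})$ the resolution of identity $\sum_{\lambda,\eta} p^{a}_{\lambda} \otimes p^{b}_{\eta} = 1_{\sA} \otimes 1_{\sB}$ collapses to its diagonal part on $v$, whence $(A - B)v = \sum_{\mu}(\mu - \mu)(p^{a}_{\mu} \otimes p^{b}_{\mu})v = 0$; thus $(B - A)\supp_{\sAB} = 0$ and $(a,b) \in \ker(d^{0}_{\cpx{G}})$.

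The main obstacle is not any single computation but getting the dictionary right: I must verify that ``the outcome of a projective measurement lies on the diagonal'' is faithfully captured by the vanishing of \emph{exactly} the off-diagonal joint probabilities, and that the positivity step converting $\Tr[\dens{\rho}_{\sAB} Q] = 0$ into $Q\,\supp_{\sAB} = 0$ genuinely uses that $Q$ is a positive (here projection) operator---this is the same Cauchy--Schwarz phenomenon underlying $\mathfrak{N}_{\dens{\rho}} = \{z : \Tr(\dens{\rho} z^{*} z) = 0\}$. A minor bookkeeping point worth flagging is that a self-adjoint element of $\mathtt{GNS}(\dens{\rho}_{\sX})$ automatically lies in $\supp_{\sX}\algebra{\hilb_{\sX}}\supp_{\sX}$ (as in the theorem's footnote), so $a$ and $b$ commute with their own support projections; this keeps the spectral projections well behaved under compression by supports but does not otherwise affect the argument.
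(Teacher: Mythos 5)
Your proof is correct and follows essentially the same route as the paper's: both reduce the cocycle condition and the EPR condition to the vanishing of $\Tr[\dens{\rho}_{\sAB}\,\mathbbm{P}^{a}_{\lambda}\otimes\mathbbm{P}^{b}_{\eta}]$ for $\lambda\neq\eta$ via the spectral decompositions of $a$ and $b$ together with the positivity characterization of $\mathfrak{N}_{\dens{\rho}_{\sAB}}$. The only (immaterial) difference is organizational---the paper runs a single chain of equivalences through the scalar $\Tr[\dens{\rho}_{\sAB}(a\otimes 1_{\sB}-1_{\sA}\otimes b)^{*}(a\otimes 1_{\sB}-1_{\sA}\otimes b)]$ and uses nonnegativity of the terms in its spectral expansion, whereas you convert each joint spectral projection to the operator identity $(\mathbbm{P}^{a}_{\lambda}\otimes\mathbbm{P}^{b}_{\eta})\supp_{\sAB}=0$ and compare kernels directly.
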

\begin{proof}
	See Appendix~\ref{app:projective_indistinguishability}.
\end{proof}
Note that all of the explicit cocycles in Lemma.~\ref{lem:explicit_cocycles} are pairs of self-adjoint operators; hence, this Lemma supplies us with a way of generating EPR pairs.\footnote{Of course, this can be also proven directly from the definition of an EPR pair, without any intermediate reference to cocycles.}
\begin{proposition}{}{}
	Let $\bdens{\rho}_{\sAB}$ be a pure bipartite density state, then $(f(\bdens{\rho}_{\sA}), f(\bdens{\rho}_{\sB}))$ is an EPR pair for any function\footnote{Borel measurable in infinite dimensions} $f:\mathbb{R} \rightarrow \mathbb{R}$.
\end{proposition}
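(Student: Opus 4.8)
The plan is to reduce the statement to two results already in hand: the cocycle construction of Lemma~\ref{lem:explicit_cocycles} and the characterization of EPR pairs in Theorem~\ref{thm:projective_measurement_indistinguishability}. First I would observe that because $f$ takes values in $\mathbb{R}$ and each reduced density state $\dens{\rho}_{\sX}$ (for $\sX \in \{\sA,\sB\}$) is self-adjoint, the functional calculus produces self-adjoint operators $f(\dens{\rho}_{\sA})$ and $f(\dens{\rho}_{\sB})$. This is precisely the real-valued specialization of the setup of Lemma~\ref{lem:explicit_cocycles}, which already tells us that the pair $(f(\dens{\rho}_{\sA}), f(\dens{\rho}_{\sB}))$ is a $0$-cocycle of $\cpx{G}(\bdens{\rho}_{\sAB})$; in particular it is a $0$-cochain, so it lies in $\cpx{G}^{0}(\bdens{\rho}_{\sAB}) = \mathtt{GNS}(\dens{\rho}_{\sA}) \times \mathtt{GNS}(\dens{\rho}_{\sB})$, and it belongs to $\ker(d^{0}_{\cpx{G}})$.

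With the pair established to be a self-adjoint element of $\mathtt{GNS}(\dens{\rho}_{\sA}) \times \mathtt{GNS}(\dens{\rho}_{\sB})$ lying in $\ker(d^{0}_{\cpx{G}})$, I would then invoke Theorem~\ref{thm:projective_measurement_indistinguishability}, whose hypotheses are now exactly met: a self-adjoint pair in the GNS building blocks is a $0$-cocycle if and only if it is an EPR pair. This immediately gives the conclusion. The only genuine verification along this route is that $f(\dens{\rho}_{\sX})$ is self-adjoint, which is just the statement that $f$ is real-valued; there is no serious obstacle, since all the content has been front-loaded into Lemma~\ref{lem:explicit_cocycles} and Theorem~\ref{thm:projective_measurement_indistinguishability}.

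As the footnote to the proposition suggests, one can also argue directly from the definition of an EPR pair, bypassing cohomology. Diagonalizing via a Schmidt decomposition $\psi = \sum_{i=1}^{S}\sqrt{p_i}\,\xi^{\sA}_i \otimes \xi^{\sB}_i$, the two reduced states share the nonzero spectrum $\{p_i\}$, so $f(\dens{\rho}_{\sA})$ and $f(\dens{\rho}_{\sB})$ carry the common eigenvalues $f(p_i)$ on the corresponding Schmidt vectors. Writing $\mathbbm{P}^{\sX}_{\lambda}$ for the spectral projection of $f(\dens{\rho}_{\sX})$ at the value $\lambda$, one computes the joint-measurement projection acting on the state as
\begin{align*}
	(\mathbbm{P}^{\sA}_{\lambda} \otimes \mathbbm{P}^{\sB}_{\eta})\,\psi = \sum_{i:\, f(p_i) = \lambda = \eta} \sqrt{p_i}\, \xi^{\sA}_i \otimes \xi^{\sB}_i ,
\end{align*}
which vanishes whenever $\lambda \neq \eta$. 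Hence the outcome of any projective measurement of $(f(\dens{\rho}_{\sA}), f(\dens{\rho}_{\sB}))$ lands on the diagonal of $\mathbb{R} \times \mathbb{R}$, which is the defining property of an EPR pair.

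The one point to watch in the direct argument is the bookkeeping around spectral degeneracy and the zero eigenvalue. Here I would use that $\psi \in \image(\dens{\rho}_{\sA}) \otimes \image(\dens{\rho}_{\sB})$, so contributions from $\ker(\dens{\rho}_{\sX})$ never enter the measurement, and that several Schmidt indices may collapse to a single value of $f$ — these are handled simply by grouping them into the appropriate eigenprojection $\mathbbm{P}^{\sX}_{\lambda}$. Neither subtlety constitutes a real difficulty; both routes are short, and I would present the first (cohomological) one as the primary proof, remarking that the second recovers the same fact without reference to the complex.
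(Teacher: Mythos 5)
Your primary argument is exactly the paper's: Lemma~\ref{lem:explicit_cocycles} supplies the $0$-cocycle, real-valuedness of $f$ gives self-adjointness of $(f(\dens{\rho}_{\sA}),f(\dens{\rho}_{\sB}))$, and Theorem~\ref{thm:projective_measurement_indistinguishability} then yields the EPR property. Your supplementary direct computation via the Schmidt decomposition is also correct and is precisely the cocycle-free alternative the paper's footnote alludes to, so nothing is missing.
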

More generally, one can generate EPR pairs of positive operators from an arbitrary (possibly non-self adjoint) pair in $\ker(d^{0}_{\cpx{G}})$ using the following observation.
\begin{proposition}{}{}
	Let $(a,b) \in \ker(d^{0}_{\cpx{G}})$, then $(a^{*} a, b^{*} b) \in \ker(d^{0}_{\cpx{G}})$.
\end{proposition}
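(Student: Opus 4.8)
The plan is to reduce the statement to a single operator identity and then verify that identity using the commutativity of the two "one‑sided" lifts together with the compatibility of supports. Recall that $r \in \mathtt{GNS}(\dens{\rho}_{\sX})$ precisely when $r \supp_{\sX} = r$, and that $d^{0}_{\cpx{G}}(a,b) = (1_{\sA}\otimes b - a\otimes 1_{\sB})\supp_{\sAB}$. Writing $A := a\otimes 1_{\sB}$, $B := 1_{\sA}\otimes b$ and $S := \supp_{\sAB}$, the hypothesis $(a,b)\in\ker(d^{0}_{\cpx{G}})$ is exactly the operator equation $AS = BS$, and the four operators $A,B,A^{*},B^{*}$ commute pairwise since they act on complementary tensor factors. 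Because $a^{*}a\otimes 1_{\sB} = A^{*}A$ and $1_{\sA}\otimes b^{*}b = B^{*}B$, the conclusion to be proved is the cocycle identity $A^{*}AS = B^{*}BS$.

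I would first dispose of the easy point that $(a^{*}a,b^{*}b)$ lands in the correct building blocks: from $a\supp_{\sA} = a$ one gets $a^{*}a\supp_{\sA} = a^{*}(a\supp_{\sA}) = a^{*}a$, so $a^{*}a\in\mathtt{GNS}(\dens{\rho}_{\sA})$, and symmetrically $b^{*}b\in\mathtt{GNS}(\dens{\rho}_{\sB})$; hence $(a^{*}a,b^{*}b)\in\cpx{G}^{0}(\bdens{\rho}_{\sAB})$ and only the cocycle identity remains. As a warm‑up that both fixes the target and exhibits the commutant analogue, I would record the compressed identity $S A^{*}A S = S B^{*}B S$, which is immediate: for $v,w\in\image(\dens{\rho}_{\sAB})$ one has $\langle v, A^{*}Aw\rangle = \langle Av,Aw\rangle = \langle Bv,Bw\rangle = \langle v, B^{*}Bw\rangle$, since $AS = BS$ forces $A$ and $B$ to agree on $\image(S)$. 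The point is that the GNS coboundary demands the strictly stronger one‑sided equality, and this compression never sees the GNS/commutant distinction.

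For the one‑sided identity I would start from $AS = BS$, left‑multiply by $A^{*}$, and use $[A^{*},B]=0$ to obtain $A^{*}AS = A^{*}BS = B A^{*}S = B(A^{*}S)$. This reduces the problem to establishing $A^{*}S = B^{*}S$ on $\image(S)$, i.e.\ that $\ker(d^{0}_{\cpx{G}})$ is stable under the componentwise adjoint $(a,b)\mapsto(a^{*},b^{*})$. To control this I would combine the adjoint of the hypothesis, $S A^{*} = S B^{*}$, with Lem.~\ref{lem:compat_of_supports}, which confines $\image(\dens{\rho}_{\sAB})$ inside both $\image(\dens{\rho}_{\sA})\otimes\hilb_{\sB}$ and $\hilb_{\sA}\otimes\image(\dens{\rho}_{\sB})$ and thereby governs how $A^{*}$ and $B^{*}$ displace vectors of $\image(S)$ relative to the support.

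The hard part will be exactly this last upgrade: passing from the symmetric two‑sided compression to the one‑sided identity, and in particular keeping careful track of whether the bookkeeping produces $b^{*}b$ or $bb^{*}$, since the naive adjoint manipulation and the operator $B$ do not commute. I expect the cleanest route is to sidestep the general non‑self‑adjoint case by reducing to self‑adjoint generators, where $a^{*}a = a^{2}$ and $b^{*}b = b^{2}$ so that the one‑sidedness and the $b^{*}b$ versus $bb^{*}$ ambiguity both evaporate: by Lem.~\ref{lem:explicit_cocycles} the pairs $(f(\dens{\rho}_{\sA}),f(\dens{\rho}_{\sB}))$ supply self‑adjoint cocycles whose squares are again of this form and are controlled by the functional calculus, Thm.~\ref{thm:bipartite_invertible_equivariance} allows passage to a generic representative, and on such self‑adjoint EPR pairs the identity is transparent from Thm.~\ref{thm:projective_measurement_indistinguishability}, since a simultaneous projective measurement of $(a,b)$ yields diagonal outcomes and hence so does one of $(a^{2},b^{2})$. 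Extending this from self‑adjoint generators to an arbitrary element of $\ker(d^{0}_{\cpx{G}})$ is the delicate step, and is where I anticipate the real work.
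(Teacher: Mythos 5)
Your reduction of the claim to the operator identity $A^{*}AS=B^{*}BS$ (with $A=a\otimes 1_{\sB}$, $B=1_{\sA}\otimes b$, $S=\supp_{\sAB}$), your check that $a^{*}a\in\mathtt{GNS}(\dens{\rho}_{\sA})$, and your proof of the compressed identity $SA^{*}AS=SB^{*}BS$ are all correct. But the proposal is not a proof: the one-sided upgrade---equivalently the claim $A^{*}S=B^{*}S$ together with the $b^{*}b$ versus $bb^{*}$ bookkeeping---is exactly where you stop, and you explicitly defer it as ``the real work.'' That gap cannot be closed, because the proposition as stated is false. Take the Bell state $\psi=\tfrac{1}{\sqrt2}(\ket{00}+\ket{11})$ with $a=\ket{0}\bra{1}$ and $b=\ket{1}\bra{0}$. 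Then $(a\otimes 1_{\sB})\psi=\tfrac{1}{\sqrt2}\ket{01}=(1_{\sA}\otimes b)\psi$, so $(a,b)\in\ker(d^{0}_{\cpx{G}})$ (here $\supp_{\sAB}=\psi\otimes\psi^{\vee}$ and the reduced states are full rank). But $a^{*}a=\ket{1}\bra{1}$, $b^{*}b=\ket{0}\bra{0}$, and $(a^{*}a\otimes 1_{\sB})\psi=\tfrac{1}{\sqrt2}\ket{11}\neq\tfrac{1}{\sqrt2}\ket{00}=(1_{\sA}\otimes b^{*}b)\psi$. Equivalently, $\Cov(a^{*}a,b^{*}b)=-\tfrac14$ while $\Var_{\sA}(a^{*}a)=\Var_{\sB}(b^{*}b)=+\tfrac14$: the pair is perfectly \emph{anti}-correlated, so it fails the characterization of Thm.~\ref{thm:kernel_covariance}.

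For comparison, the paper's own proof is a one-line appeal to the identity $(a^{*}a\otimes 1_{\sB}-1_{\sA}\otimes b^{*}b)\supp_{\sAB}=(a\otimes 1_{\sB}+1_{\sA}\otimes b)^{*}(a\otimes 1_{\sB}-1_{\sA}\otimes b)\supp_{\sAB}$, which is itself false: expanding the right-hand side leaves the cross terms $(a\otimes b^{*}-a^{*}\otimes b)\supp_{\sAB}$, which do not cancel, and in the example above they account for exactly the discrepancy $\tfrac{1}{\sqrt2}(\ket{00}-\ket{11})$. So your instincts were right on both counts: the two-sided compression $SA^{*}AS=SB^{*}BS$ is the part of the statement that survives, and the obstruction really is the adjoint/ordering issue you flagged. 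Your manipulation $A^{*}AS=A^{*}BS=BA^{*}S$ shows that \emph{if} the kernel were stable under componentwise adjoints one would obtain $(a^{*}a,bb^{*})\in\ker(d^{0}_{\cpx{G}})$ rather than $(a^{*}a,b^{*}b)$; that repaired statement holds in the maximally entangled case but also fails for $\psi=\sqrt{p}\ket{00}+\sqrt{q}\ket{11}$ with $p\neq q$, since membership in $\ker(d^{0}_{\cpx{G}})$ forces $b$ to be a Schmidt-coefficient-weighted transpose of $a$ and the weights do not survive conjugation. One minor slip worth noting: $A$ and $A^{*}$ do not commute for non-normal $a$, so ``the four operators commute pairwise'' should read ``each of $A,A^{*}$ commutes with each of $B,B^{*}$''; you only use the latter, so nothing downstream breaks.
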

\begin{proof}
	This follows from the identity $(a^{*} a \otimes 1_{\sB} - 1_{\sA} \otimes b^{*}b)\supp_{\sAB} = (a \otimes 1_{\sB} + 1_{\sA} \otimes b)^{*} (a \otimes 1_{\sB} - 1_{\sA} \otimes b) \supp_{\sAB}$ for any $(a,b) \in \algebra{\hilb_{\sA}} \times \algebra{\hilb_{\sB}}$.
\end{proof}

\subsubsection{Covariance and Cohomology}
Not all elements of $\ker(d_{\cpx{G}}^{0})$ are self-adjoint.\footnote{Or even normal (commute with their adjoint): a more relaxed condition for the notion of a spectral decomposition.}
Nevertheless, we can give a meaning to ``maximally correlated" to any pair of operators by thinking of saturation of their covariance (a linear measure of correlation).
For any two operators $x,y \in \algebra{\hilb_{\sA} \otimes \hilb_{\sB}}$ we define their covariance as $\Tr[\dens{\rho}_{\sAB}x^{*} y] - \Tr[\dens{\rho}_{\sAB}x^{*}] \Tr[\dens{\rho}_{\sAB}y]$; the variance is defined by specializing to $x =y$.
Restricting our attention to operators of the form $x = a \otimes 1_{\sB}$ and $y = 1_{\sA} \otimes b$, we present the following definitions.
\begin{definition}[label=def:cov_var]{}{}
	Suppose $\bdens{\rho}_{\sAB}$ is a bipartite density state; for any $r \in \algebra{\hilb_{\sX}}$ define $\newmath{r'} := r - \Tr[\dens{\rho}_{\sX} r]$, then define the sesquilinear map
	\begin{align*}
		\newmath{\Cov}:  \algebra{\hilb_{\sA}}  \times  \algebra{\hilb_{\sB}} &\longrightarrow \mathbb{C},
	\end{align*}
	and the quadratic forms
	\begin{align*}
		\newmath{\Var_{\sA}}: \algebra{\hilb_{\sA}} &\longrightarrow \mathbb{R}_{\geq 0},\\
		\newmath{\Var_{\sB}}: \algebra{\hilb_{\sB}} &\longrightarrow \mathbb{R}_{\geq 0},	
	\end{align*}
	via:
	\begin{align*}
		\Cov: (a,b) &\longmapsto \Tr[\dens{\rho}_{\sAB}(a')^{*} b' ],\\
		\Var_{\sX}: r &\longmapsto \Tr[\dens{\rho}_{\sX} (r')^{*} r'].
	\end{align*}
\end{definition}
The following is easy to show using non-degeneracy of the trace pairing between operators and states along with the observation that $\Cov(a,b) = \Tr[(\dens{\rho}_{\sAB} - \dens{\rho}_{\sA} \otimes \dens{\rho}_{\sB})(a^{*} \otimes b) ]$.
\begin{lemma}{}{vanishing_cov_factorizability}
	$\Cov \equiv 0$ if and only if $\dens{\rho}_{\sAB} = \dens{\rho}_{\sA} \otimes \dens{\rho}_{\sB}$.
\end{lemma}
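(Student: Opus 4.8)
The plan is to reduce the entire statement to the single operator identity recorded just before the lemma, namely
\begin{equation*}
	\Cov(a,b) = \Tr\left[ \left( \dens{\rho}_{\sAB} - \dens{\rho}_{\sA} \otimes \dens{\rho}_{\sB} \right) \left( a^{*} \otimes b \right) \right],
\end{equation*}
and then to invoke non-degeneracy of the trace pairing. Writing $\Delta := \dens{\rho}_{\sAB} - \dens{\rho}_{\sA} \otimes \dens{\rho}_{\sB} \in \algebra{\hilb_{\sA} \otimes \hilb_{\sB}}$, the lemma becomes the assertion that $\Delta$ pairs to zero against every operator of the form $a^{*} \otimes b$ precisely when $\Delta = 0$.

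First I would verify the displayed identity by unwinding Def.~\ref{def:cov_var}. Expanding the centered operators $a' = a - \Tr[\dens{\rho}_{\sA} a] 1_{\sA}$ and $b' = b - \Tr[\dens{\rho}_{\sB} b] 1_{\sB}$ inside $\Cov(a,b) = \Tr[\dens{\rho}_{\sAB}(a'^{*} \otimes b')]$ produces four terms; using the marginalization identities $\Tr[\dens{\rho}_{\sAB}(a^{*} \otimes 1_{\sB})] = \Tr[\dens{\rho}_{\sA} a^{*}]$ and $\Tr[\dens{\rho}_{\sAB}(1_{\sA} \otimes b)] = \Tr[\dens{\rho}_{\sB} b]$ (the defining property of the reduced states) together with $\Tr[\dens{\rho}_{\sAB}] = 1$, the three correction terms collapse to the single product $\Tr[\dens{\rho}_{\sA} a^{*}]\Tr[\dens{\rho}_{\sB} b]$. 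Recognizing this product as $\Tr[(\dens{\rho}_{\sA} \otimes \dens{\rho}_{\sB})(a^{*} \otimes b)]$ then yields the identity.

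With the identity in hand, the ``if'' direction is immediate: if $\dens{\rho}_{\sAB} = \dens{\rho}_{\sA} \otimes \dens{\rho}_{\sB}$ then $\Delta = 0$, so $\Cov(a,b) = 0$ for all $(a,b)$. For the ``only if'' direction, suppose $\Cov \equiv 0$, so that $\Tr[\Delta(a^{*} \otimes b)] = 0$ for all $a \in \algebra{\hilb_{\sA}}$ and $b \in \algebra{\hilb_{\sB}}$. Since the adjoint is a bijection of $\algebra{\hilb_{\sA}}$, the operators $a^{*} \otimes b$ range over all simple tensors in $\algebra{\hilb_{\sA}} \otimes \algebra{\hilb_{\sB}} = \algebra{\hilb_{\sA} \otimes \hilb_{\sB}}$; as these span the full operator algebra, bilinearity gives $\Tr[\Delta M] = 0$ for every $M \in \algebra{\hilb_{\sA} \otimes \hilb_{\sB}}$, whence $\Delta = 0$ by non-degeneracy of the trace pairing, i.e.\ $\dens{\rho}_{\sAB} = \dens{\rho}_{\sA} \otimes \dens{\rho}_{\sB}$.

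The computation is entirely routine, so there is no genuine obstacle; the only points demanding minor care are the bookkeeping of the three scalar correction terms in the expansion of $\Cov$ (two carrying a relative sign and one quadratic in the recentering constants) and the explicit observation that $\{a^{*} \otimes b\}$ already spans the whole algebra, which is what licenses the final appeal to non-degeneracy.
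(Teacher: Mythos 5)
Your proof is correct and follows essentially the same route the paper sketches: establish the identity $\Cov(a,b) = \Tr[(\dens{\rho}_{\sAB} - \dens{\rho}_{\sA} \otimes \dens{\rho}_{\sB})(a^{*} \otimes b)]$ and then conclude by non-degeneracy of the trace pairing, noting that the simple tensors $a^{*} \otimes b$ span the full operator algebra. You have merely filled in the routine expansion that the paper leaves as an observation, so there is nothing to add.
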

In this sense, any pair of operators $(a,b)$ such that $\Cov(a,b) \neq 0$ can be thought of as an obstruction to factorizability. We claim that the cohomology group $H^{0}[\cpx{G}(\bdens{\rho}_{\sAB})]$ outputs (right essential equivalence classes of) such obstructing pairs where $\Cov$ is maximized in a precise way. 
Just as in commutative probability theory, the (absolute) square of the covariance is bounded above by the product of variances.
\begin{lemma}{}{covariance_saturation}
	For any $(a,b) \in \algebra{\hilb_{\sA}} \times \algebra{\hilb_{\sB}}$ we have
	\begin{align}
		\left |\Cov(a,b)  \right|^{2} \leq \Var_{\sA}(a) \Var_{\sB}(b)
		\label{eq:covariance_cauchy_schwarz}
	\end{align}
	with saturation if and only if $a' \otimes 1_{\sB} = \lambda (1_{\sA} \otimes b') + z$ for some $\lambda \in \mathbb{C}$ and $z \in \mathfrak{N}_{\sAB}$ (i.e.\ $a' \otimes 1_{\sB}$ and $\lambda (1_{A} \otimes b')$ are right essentially equivalent).
\end{lemma}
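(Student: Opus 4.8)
The plan is to recognize the claimed inequality as the Cauchy--Schwarz inequality for the positive semidefinite Hermitian form carried by $\dens{\rho}_{\sAB}$. Concretely, I would work on $\algebra{\hilb_{\sA} \otimes \hilb_{\sB}}$ with the sesquilinear pairing
\[
	\langle x, y \rangle := \Tr[\dens{\rho}_{\sAB}\, x^{*} y],
\]
which is Hermitian and positive semidefinite precisely because $\dens{\rho}_{\sAB} \geq 0$. The first step is then pure bookkeeping: setting $u := a' \otimes 1_{\sB}$ and $v := 1_{\sA} \otimes b'$, one reads off from Def.~\ref{def:cov_var} that $\Cov(a,b) = \langle u, v \rangle$, while the two variances are the associated seminorms, $\Var_{\sA}(a) = \langle u, u \rangle$ and $\Var_{\sB}(b) = \langle v, v \rangle$. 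The only non-formal input here is the partial-trace identity $\Tr_{\sB}[\dens{\rho}_{\sAB}] = \dens{\rho}_{\sA}$ (and its mirror), which collapses $\Tr[\dens{\rho}_{\sAB}\,(a')^{*}a' \otimes 1_{\sB}]$ to $\Tr[\dens{\rho}_{\sA}\,(a')^{*}a'] = \Var_{\sA}(a)$.

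With this dictionary in place, \eqref{eq:covariance_cauchy_schwarz} is exactly $|\langle u, v \rangle|^{2} \leq \langle u,u \rangle\, \langle v,v \rangle$, valid for any positive semidefinite Hermitian form. I would prove it in the usual way, expanding $0 \leq \langle u - t v,\, u - t v \rangle$ for $t \in \mathbb{C}$ and optimizing over $t$; when $\langle v,v \rangle = 0$, positivity already forces $\langle u, v \rangle = 0$ and both sides vanish.

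For the saturation clause I would pass to the radical. Since the form is positive semidefinite, its isotropic vectors form a subspace, and by the characterization $\mathfrak{N}_{\dens{\rho}} = \{z : \Tr(\dens{\rho}\, z^{*} z) = 0\}$ recorded after Def.~\ref{def:right_essential_equivalence}, that subspace is precisely $\mathfrak{N}_{\sAB}$. Hence $\langle \cdot,\cdot \rangle$ descends to a genuine inner product on the quotient $\algebra{\hilb_{\sA}\otimes\hilb_{\sB}}/\mathfrak{N}_{\sAB}$, on which Cauchy--Schwarz saturates if and only if the images of $u$ and $v$ are linearly dependent. Translating back, saturation with $\Var_{\sB}(b) \neq 0$ is equivalent to $u = \lambda v + z$ with $\lambda = \overline{\Cov(a,b)}/\Var_{\sB}(b)$ and $z = u - \lambda v \in \mathfrak{N}_{\sAB}$, which is exactly the asserted right essential equivalence of $a'\otimes 1_{\sB}$ and $\lambda(1_{\sA}\otimes b')$.

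The delicate point, and the one I would treat most carefully, is the degenerate corner $\langle v,v \rangle = 0$, i.e.\ $1_{\sA}\otimes b' \in \mathfrak{N}_{\sAB}$: here both sides of the inequality vanish, so saturation is automatic, yet the relation $u = \lambda v + z$ forces $\langle u,u \rangle = |\lambda|^{2}\langle v,v \rangle = 0$ (all cross terms with $z$ drop out since $z$ lies in the radical) and can thus be exhibited only when $\Var_{\sA}(a)$ also vanishes, via $\lambda = 0,\ z = u$. I would therefore emphasize that the conceptually symmetric statement is ``the classes of $a'\otimes 1_{\sB}$ and $1_{\sA}\otimes b'$ in $\algebra{\hilb_{\sA}\otimes\hilb_{\sB}}/\mathfrak{N}_{\sAB}$ are linearly dependent,'' and that the displayed form of the condition is the correct unpacking of this exactly when the reference vector $1_{\sA}\otimes b'$ is non-null, which is the generic and relevant case for identifying maximally correlated pairs.
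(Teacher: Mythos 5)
Your proposal is correct and follows exactly the route the paper intends: its proof is the one-line remark that the lemma ``follows by application of the Cauchy--Schwarz inequality (for possibly degenerate sesquilinear forms),'' and you have simply supplied the details, including the careful handling of the radical $\mathfrak{N}_{\sAB}$ and the degenerate case $\Var_{\sB}(b)=0$. No discrepancies to report.
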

\begin{proof}
	The lemma follows by application of the Cauchy-Schwarz inequality (for possibly degenerate sesquilinear forms).
\end{proof}

\begin{remark}{}{}
	Notice that $\Cov, \,\Var_{\sA}, \Var_{\sB}$ are invariant under shifts by elements of the left ideals $\mathfrak{N}_{\sA}$ and $\mathfrak{N}_{\sB}$ defined in Def.~\ref{def:right_essential_equivalence}; so all three of these functions descend to well-defined functions on right essential equivalence classes; because the $\algebra{\hilb}$-module $\algebra{\hilb}/\mathfrak{N}_{\dens{\rho}}$ of right essential equivalence classes is canonically isomorphic to $\mathtt{GNS}(\dens{\rho})$ (as described in \S\ref{sec:interpretation}), these descents are canonically identifiable with the restrictions of $\Cov,\, \Var_{\sA},$ and $\Var_{\sB}$ to the subspaces $\mathtt{GNS}(\dens{\rho}_{\sX})$.
\end{remark}

The following theorem relates the zeroth cohomology group of $\mathrm{G}$ to pairs of operators that saturate the bound above. 
\begin{theorem}{}{kernel_covariance} 
	\begin{center}
		$\ker(d^{0}_{\cpx{G}}) =
		\left \{(a,b) \in \mathtt{GNS}(\dens{\rho}_{\sA}) \times \mathtt{GNS}(\dens{\rho}_{\sB}):
			\text{\parbox{15em}{
					\centering $\Cov(a,b) = \Var_{\sA}(a) = \Var_{\sB}(b)$ \\
					and \\ 
				$\Tr[\dens{\rho}_{\sA}a] = \Tr[\dens{\rho}_{\sB}b]$}
			}
		\right \}$.
	\end{center}
\end{theorem}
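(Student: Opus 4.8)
The plan is to reduce the kernel condition to a single statement about a degenerate sesquilinear form and then expand. Introduce the (possibly degenerate) Hermitian form $\langle x, y \rangle := \Tr[\dens{\rho}_{\sAB} x^{*} y]$ on $\algebra{\hilb_{\sA} \otimes \hilb_{\sB}}$. Using the Cauchy--Schwarz characterization $\mathfrak{N}_{\dens{\rho}} = \{z : \Tr[\dens{\rho} z^{*} z] = 0\}$ together with self-adjointness of $\dens{\rho}_{\sAB}$ (which forces the left and right radicals to agree), one checks that $\mathfrak{N}_{\sAB}$ is exactly the two-sided radical of $\langle \cdot, \cdot \rangle$. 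The first step is to note that, by the characterization $x \supp_{\dens{\rho}} = y \supp_{\dens{\rho}} \iff x \sim_{\dens{\rho}} y$ (the corollary to Prop.~\ref{prop:vanishing_ideal_characterizations}), the pair $(a,b)$ lies in $\ker(d_{\cpx{G}}^{0})$ precisely when $w := a \otimes 1_{\sB} - 1_{\sA} \otimes b$ satisfies $w\,\supp_{\sAB} = 0$, i.e.\ $w \in \mathfrak{N}_{\sAB}$. The second step records, from the definitions and the partial-trace identities $\Tr_{\sB}[\dens{\rho}_{\sAB}] = \dens{\rho}_{\sA}$, $\Tr_{\sA}[\dens{\rho}_{\sAB}] = \dens{\rho}_{\sB}$, the identifications $\Var_{\sA}(a) = \langle a' \otimes 1_{\sB}, a' \otimes 1_{\sB} \rangle$, $\Var_{\sB}(b) = \langle 1_{\sA} \otimes b', 1_{\sA} \otimes b' \rangle$, and $\Cov(a,b) = \langle a' \otimes 1_{\sB}, 1_{\sA} \otimes b' \rangle$.

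For the forward inclusion, suppose $w \in \mathfrak{N}_{\sAB}$. Pairing against the identity (the case $r = 1$ in the definition of $\mathfrak{N}_{\sAB}$) gives $\Tr[\dens{\rho}_{\sAB} w] = 0$, which is exactly the trace condition $\Tr[\dens{\rho}_{\sA} a] = \Tr[\dens{\rho}_{\sB} b]$. Granted this, the scalar shifts cancel and $w = a' \otimes 1_{\sB} - 1_{\sA} \otimes b'$, so $a' \otimes 1_{\sB}$ and $1_{\sA} \otimes b'$ are congruent modulo the radical. Substituting one representative for the other inside $\langle \cdot, \cdot \rangle$ then yields $\Var_{\sA}(a) = \Cov(a,b) = \Var_{\sB}(b)$, placing $(a,b)$ in the right-hand set.

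For the reverse inclusion, assume the trace condition and the three equalities. As before the trace condition gives $w = a' \otimes 1_{\sB} - 1_{\sA} \otimes b'$, and expanding
\begin{align*}
	\langle w, w \rangle = \Var_{\sA}(a) - \Cov(a,b) - \overline{\Cov(a,b)} + \Var_{\sB}(b)
\end{align*}
shows this vanishes: the hypotheses make $\Cov(a,b)$ real (being equal to the nonnegative reals $\Var_{\sA}(a)$ and $\Var_{\sB}(b)$), whence the four terms cancel. Thus $\langle w, w \rangle = 0$, so $w \in \mathfrak{N}_{\sAB}$ and $(a,b) \in \ker(d_{\cpx{G}}^{0})$.

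The only genuine subtlety is the bookkeeping of the centering shifts $r' = r - \Tr[\dens{\rho}_{\sX} r]$: the trace condition $\Tr[\dens{\rho}_{\sA} a] = \Tr[\dens{\rho}_{\sB} b]$ is precisely what is needed to pass between $w$ and its centered form $a' \otimes 1_{\sB} - 1_{\sA} \otimes b'$. This scalar is exactly what distinguishes the sharp triple equality $\Cov = \Var_{\sA} = \Var_{\sB}$ here (which corresponds to radical-congruence with coefficient $\lambda = 1$) from the weaker Cauchy--Schwarz saturation $|\Cov|^{2} = \Var_{\sA}\Var_{\sB}$ of Lem.~\ref{lem:covariance_saturation}; I expect verifying the two-sided radical property and tracking these shifts to be the main (though routine) points of care.
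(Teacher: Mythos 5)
Your proof is correct, and it follows the same skeleton as the paper's (reduce $\ker(d^{0}_{\cpx{G}})$ to the condition $w = a\otimes 1_{\sB} - 1_{\sA}\otimes b \in \mathfrak{N}_{\sAB}$, extract the trace condition by pairing against the identity, center, then relate to $\Cov$ and the $\Var_{\sX}$), but the way you derive the triple equality differs in a worthwhile way. The paper's forward direction routes through Lem.~\ref{lem:covariance_saturation}: it first deduces Cauchy--Schwarz saturation $|\Cov(a,b)|^{2} = \Var_{\sA}(a)\Var_{\sB}(b)$, writes $\Cov = \lambda\sqrt{\Var_{\sA}\Var_{\sB}}$ with $|\lambda|=1$, and then uses the vanishing of $\langle w,w\rangle$ (equation \eqref{eq:cov_sat_prime_null}) together with AM--GM-type positivity to pin down $\lambda = 1$. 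You instead observe that $\mathfrak{N}_{\sAB}$ is the two-sided radical of the form and simply substitute $a'\otimes 1_{\sB}$ for $1_{\sA}\otimes b'$ in each slot, which hands you $\Var_{\sA}(a)=\Cov(a,b)=\Var_{\sB}(b)$ in one line without ever invoking the saturation lemma or resolving a phase. Your reverse direction likewise replaces the paper's appeal to Lem.~\ref{lem:covariance_saturation} (which, as stated, only yields $a'\otimes 1_{\sB} \equiv \lambda(1_{\sA}\otimes b')$ mod $\mathfrak{N}_{\sAB}$ for \emph{some} unimodular $\lambda$, so the paper is implicitly doing a small extra step there) by a direct expansion of $\langle w,w\rangle$, noting the hypotheses force $\Cov$ real and the four terms to cancel. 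Net effect: your argument is self-contained modulo the radical characterization of $\mathfrak{N}_{\dens{\rho}}$ already established in \S\ref{sec:interpretation}, and it is arguably tighter on the one point where the paper's write-up is slightly loose; the paper's version has the advantage of exhibiting the theorem explicitly as the $\lambda=1$ stratum of the Cauchy--Schwarz saturation locus, which is the picture it wants for the surrounding discussion. One tiny imprecision on your side: it is positive semidefiniteness of $\dens{\rho}_{\sAB}$ (not merely self-adjointness) that makes the form positive and hence collapses the left and right radicals to $\{z : \langle z,z\rangle = 0\}$ via Cauchy--Schwarz.
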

\begin{proof}
	Suppose $(a,b) \in \ker(d^{0}_{\cpx{G}})$; then $a \otimes 1_{\sB} - 1_{\sA} \otimes b \in \mathfrak{N}_{\sAB}$ (i.e.\ $a \otimes 1_{\sB}$ and $1_{\sA} \otimes b$ are right essentially equivalent); it follows that
	\begin{align*}
		\Tr[\dens{\rho}_{\sA} a] =  \Tr[\dens{\rho}_{\sAB} (1_{\sAB})^{*} (a \otimes 1_{\sB})] =  \Tr[\dens{\rho}_{\sAB} (1_{\sAB})^{*} (b \otimes 1_{\sA})] = \Tr[\dens{\rho}_{\sB} b]
	\end{align*}
	with this observation it follows that $a \otimes 1_{\sB} - 1_{\sA} \otimes b = a' \otimes 1_{\sB} - 1_{\sA} \otimes b'$; hence,
	\begin{align*}
		0 &= (a \otimes 1_{\sB} - 1_{\sA} \otimes b) \supp_{\sAB} = \left(a' \otimes 1_{\sB} - 1_{\sA} \otimes b' \right)\supp_{\sAB};
	\end{align*} 
	equivalently, $a' \otimes 1_{\sB} - 1_{\sA} \otimes b' \in \mathfrak{N}_{\sAB}$.  Thus, by Lemma~\ref{lem:covariance_saturation} we have $|\Cov(a,b)|^{2} = \Var_{\sA}(a) \Var_{\sB}(b) \Rightarrow \Cov(a,b) = \lambda \sqrt{\Var_{\sA}(a) \Var_{B}(b)}$ for some $\lambda \in \mathbb{C}$ with $|\lambda | = 1$ (and we take the positive square root of $\Var_{\sA}(a) \Var_{\sB}(b) \geq 0$). Moreover,
	\begin{align}
		\Tr \left[\dens{\rho}_{\sAB}(a' \otimes 1_{B} - 1_{\sA} \otimes b')^{*} (a' \otimes 1_{B} - 1_{\sA} \otimes b') \right] = \Var_{\sA}(a) + \Var_{\sB}(b) - 2 \Cov(a,b).
		\label{eq:cov_sat_prime_null}
	\end{align}
	The left-hand-side of the above must vanish as $a' \otimes 1_{B} - 1_{\sA} \otimes b' \in \mathfrak{N}_{\sAB}$.  Because $\Var_{\sA}(a)$ and $\Var_{B}(b)$ are non-negative, the only way the right-hand-side can vanish is if $\lambda =1$; it follows that $\Cov(a,b) = \Var_{\sA}(a) = \Var_{B}(b)$.
	Thus, we have shown
	\begin{align*}
		\ker(d^{0}_{\cpx{G}}) \subseteq  \left \{(a,b) \in \mathtt{GNS}(\dens{\rho}_{\sA}) \times \mathtt{GNS}(\dens{\rho}_{\sA}):
		\text{$\Cov(a,b) = \Var_{\sA}(a) = \Var_{\sB}(b)$ and $\Tr[\dens{\rho}_{\sA}a] = \Tr[\dens{\rho}_{\sB}b]$} \right \}.
	\end{align*}
	For the reverse inclusion, suppose $(a,b) \in \mathtt{GNS}(\dens{\rho}_{\sA}) \times \mathtt{GNS}(\dens{\rho}_{\sB})$ is such that $\Cov(a,b) = \Var_{\sA}(a) = \Var_{\sB}(b)$ and $\Tr[\dens{\rho}_{\sA} a] = \Tr[\dens{\rho}_{\sB}b]$.
	By Lem.~\ref{lem:covariance_saturation} we have $a' \otimes 1_{\sB} - 1_{\sA} \otimes b' \in \mathfrak{N}_{\sAB}$.  But $a \otimes 1_{\sB} - 1_{\sA} \otimes b = a' \otimes 1_{\sB} - 1_{\sA} \otimes b'$ as $\Tr[\dens{\rho}_{\sA} a] = \Tr[\dens{\rho}_{\sB} b]$; it follows that $a \otimes 1_{\sB} - 1_{\sA} \otimes b = 0 \Rightarrow (a,b) \in \ker(d^{0}_{\cpx{G}})$.  
\end{proof}

Note that, the only elements with vanishing variances in $\mathtt{GNS}(\dens{\rho}_{\sX})$ are elements of the line $\mathrm{span}_{\mathbb{C}}\{\supp_{\sX}\}$; as a result 
\begin{align*}
	\{(a,b): \Cov(a,b) = 0 \}  \cap \ker(d^{0}_{\cpx{G}}) &= \mathrm{span}_{\mathbb{C}}\{(\supp_{\sA}, \supp_{\sB})\},
\end{align*} 
this is precisely what we quotient by to get $H^{0}[\cpx{G}(\bdens{\rho}_{\sAB})]$.  We summarize this in the following corollary.
\begin{corollary}[label=cor:GNS_cohom_saturated_cov]{}{}
	\begin{align*}
		H^{0}[\cpx{G}(\bdens{\rho}_{\sAB})] &= \left \{
			(a,b) \in \mathtt{GNS}(\dens{\rho}_{\sA}) \times \mathtt{GNS}(\dens{\rho}_{\sB}) \colon 
			\text{\parbox{8.2em}{
					\centering \small $\Cov(a,b) = \Var_{\sA}(a),$ \\
					$\Cov(a,b) = \Var_{\sB}(b),$\\
					$\Tr[\dens{\rho}_{\sA}a] = \Tr[\dens{\rho}_{\sB}b]$}
			}
		\right \} 
		\Bigg/ \text{\small $\{(a,b): \Cov(a,b)=0 \}$ }.
	\end{align*}
\end{corollary}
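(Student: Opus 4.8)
The plan is to assemble the corollary from Thm.~\ref{thm:kernel_covariance} together with a short computation of $\image(d^{-1}_{\cpx{G}})$, since by definition $H^{0}[\cpx{G}(\bdens{\rho}_{\sAB})] = \ker(d^{0}_{\cpx{G}})/\image(d^{-1}_{\cpx{G}})$. Thm.~\ref{thm:kernel_covariance} already identifies the numerator with the set of pairs $(a,b) \in \mathtt{GNS}(\dens{\rho}_{\sA}) \times \mathtt{GNS}(\dens{\rho}_{\sB})$ satisfying the saturation conditions $\Cov(a,b) = \Var_{\sA}(a) = \Var_{\sB}(b)$ and $\Tr[\dens{\rho}_{\sA}a] = \Tr[\dens{\rho}_{\sB}b]$, so all that remains is to recognize the subspace being quotiented out.

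First I would read off from the definition $d^{-1}_{\cpx{G}}: \lambda \mapsto (\lambda\supp_{\sA}, \lambda\supp_{\sB})$ that $\image(d^{-1}_{\cpx{G}}) = \mathrm{span}_{\mathbb{C}}\{(\supp_{\sA}, \supp_{\sB})\}$. The substance of the corollary is therefore the identity $\{(a,b): \Cov(a,b)=0\} \cap \ker(d^{0}_{\cpx{G}}) = \mathrm{span}_{\mathbb{C}}\{(\supp_{\sA}, \supp_{\sB})\}$, so that quotienting $\ker(d^{0}_{\cpx{G}})$ by the image of the previous coboundary coincides with quotienting by its covariance-vanishing pairs.

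Next I would establish the key auxiliary fact: for $r \in \mathtt{GNS}(\dens{\rho}_{\sX})$ one has $\Var_{\sX}(r)=0$ if and only if $r \in \mathrm{span}_{\mathbb{C}}\{\supp_{\sX}\}$. The plan here is a Cauchy--Schwarz argument: writing $r' = r - \Tr[\dens{\rho}_{\sX}r]$, one has $\Var_{\sX}(r) = \Tr[\dens{\rho}_{\sX}(r')^{*}r'] = 0$ exactly when $r' \in \mathfrak{N}_{\dens{\rho}_{\sX}}$, i.e.\ $r'\supp_{\sX} = 0$ by the characterization of $\mathfrak{N}_{\dens{\rho}}$ in Prop.~\ref{prop:vanishing_ideal_characterizations}. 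Since $r = r\supp_{\sX}$ for elements of $\mathtt{GNS}(\dens{\rho}_{\sX})$, this forces $r = \Tr[\dens{\rho}_{\sX}r]\supp_{\sX}$, a scalar multiple of the support projection.

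Finally I would combine these on $\ker(d^{0}_{\cpx{G}})$, where $\Cov = \Var_{\sA} = \Var_{\sB}$: here $\Cov(a,b)=0$ forces both variances to vanish, hence $a = \alpha\supp_{\sA}$ and $b = \beta\supp_{\sB}$ for scalars $\alpha,\beta$; the trace condition $\Tr[\dens{\rho}_{\sA}a] = \Tr[\dens{\rho}_{\sB}b]$ together with $\Tr[\dens{\rho}_{\sX}\supp_{\sX}] = 1$ then forces $\alpha = \beta$, giving $(a,b) \in \mathrm{span}_{\mathbb{C}}\{(\supp_{\sA},\supp_{\sB})\}$; the reverse containment is immediate. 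The only mildly delicate point---the ``hard part'' in an otherwise routine bookkeeping argument---is the auxiliary variance-vanishing fact, and in particular keeping the GNS normalization $r = r\supp_{\sX}$ straight so that a vanishing variance pins down $r$ exactly, rather than merely up to $\mathfrak{N}_{\dens{\rho}_{\sX}}$.
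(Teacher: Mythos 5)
Your proposal is correct and follows essentially the same route as the paper: the text preceding the corollary invokes Thm.~\ref{thm:kernel_covariance} for $\ker(d^{0}_{\cpx{G}})$ and then observes that the only variance-zero elements of $\mathtt{GNS}(\dens{\rho}_{\sX})$ are the scalar multiples of $\supp_{\sX}$, so that $\{\Cov = 0\} \cap \ker(d^{0}_{\cpx{G}}) = \mathrm{span}_{\mathbb{C}}\{(\supp_{\sA},\supp_{\sB})\} = \image(d^{-1}_{\cpx{G}})$. The only difference is that you supply the short Cauchy--Schwarz/$\mathfrak{N}_{\dens{\rho}}$ justification of the variance-vanishing claim, which the paper asserts without proof.
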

Alternatively, we can observe that $\Cov,\,\Var_{\sA}$, and $\Var_{\sB}$ descend to a well-defined functions
\begin{align*}
	\underline{\Cov}: \left(\mathtt{GNS}(\dens{\rho}_{\sA})/\mathrm{span}_{\mathbb{C}} \{\supp_{\sA} \} \right) \times  \left(\mathtt{GNS}(\dens{\rho}_{\sB})/\mathrm{span}_{\mathbb{C}} \{\supp_{\sB} \}\right) &\longrightarrow \mathbb{C}, \\
	\underline{\Var}_{\sA}: \mathtt{GNS}(\dens{\rho}_{\sA})/\mathrm{span}_{\mathbb{C}} \{\supp_{\sA} \} \longrightarrow \mathbb{C},\\
	\underline{\Var}_{\sB}: \mathtt{GNS}(\dens{\rho}_{\sB})/\mathrm{span}_{\mathbb{C}}\{\supp_{\sB} \} \longrightarrow \mathbb{C}.
\end{align*}
Then, we have
\begin{align*}
	H^{0}[\cpx{G}(\bdens{\rho}_{\sAB})] &\cong \left \{([a],[b]): \underline{\Cov}([a],[b]) = \underline{\Var}_{\sA}([a]) = \underline{\Var}_{\sB}([b]) >0 \right \},   
\end{align*}
where $[r]$ is denoting the class $r + \mathrm{span}_{\mathbb{C}}\{\supp_{\sX} \} \in \mathtt{GNS}(\dens{\rho}_{\sX})/\mathrm{span}_{\mathbb{C}}\{\supp_{\sX} \}$.  Moreover, using the well-defined map 
\begin{align*}
	\left(\mathtt{GNS}(\dens{\rho}_{\sX})/\mathrm{span}_{\mathbb{C}} \{\supp_{\sA} \} \right) &\overset{\sim}{\longrightarrow} \{r \in \mathtt{GNS}(\dens{\rho}_{\sX}): \Tr[\dens{\rho}_{\sX} r] = 0 \}\\
	[r] &\longmapsto r - \Tr[\dens{\rho}_{\sX} r] \supp_{\sX}
\end{align*}
we have a canonical isomorphism from $H^{0}[\cpx{G}(\bdens{\rho}_{\sAB})]$ to an explicit subspace of $\mathtt{GNS}(\dens{\rho}_{\sA}) \times \mathtt{GNS}(\dens{\rho}_{\sB}) \leq \algebra{\hilb_{\sA}} \times \algebra{\hilb_{\sB}}$ consisting of pairs of operators with zero expectation values.

\subsection{Bipartite Pure States and Schmidt Decompositions \label{sec:pure_state_schmidt}}
Using the discussion of the previous section, we can explicitly calculate the zeroth cohomology group of $\cpx{G}(\bdens{\rho}_{\sAB})$, when $\dens{\rho}_{\sAB}$ is pure.
\begin{theorem}[label=thm:pure_bipartite_cohomology]{}{}
	Let $\bdens{\rho}_{\sAB}$ be a pure bipartite density state with $\dens{\rho}_{\sAB} = \psi \otimes \psi^{\vee}$ for some $\psi \in \hilb_{\sA} \otimes \hilb_{\sB}$. Decompose $\psi$ as:
	\begin{align*}
		\psi = \sum_{i = 1}^{S} \sqrt{p_{i}} \xi^{\sA}_{i} \otimes \xi^{\sB}_{i}
	\end{align*} 
	for positive coefficients $\{p_{i} \}_{i = 1}^{S} \subseteq \mathbb{R}_{>0}$ and orthonormal vectors $\{\xi^{\sX}_{i}\}_{i = 1}^{S} \subset \hilb_{\sX}$---i.e.\ a Schmidt decomposition of $\psi$---then 
	\begin{align*}
		H^{0}[\cpx{G}(\bdens{\rho}_{\sAB})] = \operatorname{span}_{\mathbb{C}} \left[(\mathbbm{e}^{\sA}_{ij},\mathbbm{e}^{\sB}_{ij}): i,j = 1,\cdots, S \right]/\operatorname{span}_{\mathbb{C}} \{(\supp_{\sA}, \supp_{\sB}) \}
	\end{align*}
	where 
	\begin{align*}
		\mathbbm{e}^{\sX}_{ij} := \xi^{\sX}_{i} \otimes \left(\xi^{\sX}_{j} \right)^{\vee} \in \mathtt{GNS}(\dens{\rho}_{\sX})
	\end{align*}
	for $\sX \in \{\sA, \sB \}$. In particular, $\dim H^{0}[\cpx{G}(\bdens{\rho}_{\sAB})] = S^2 - 1$, where $S$ is the Schmidt rank of $\psi$ (equivalently, $\dim H^{0}[\cpx{G}(\bdens{\rho}_{\sAB})] = \mathrm{rank}(\rho_{\sA})^2 -1 = \mathrm{rank}(\rho_{\sB})^2 -1$).
\end{theorem}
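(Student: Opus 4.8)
The plan is to compute $\ker(d^{0}_{\cpx{G}})$ directly and then divide by $\image(d^{-1}_{\cpx{G}}) = \operatorname{span}_{\mathbb{C}}\{(\supp_{\sA},\supp_{\sB})\}$. The essential simplification is that purity makes $\supp_{\sAB}$ the rank-one orthogonal projection onto the line $\mathbb{C}\psi$. Consequently, for a pair $(a,b)\in\mathtt{GNS}(\dens{\rho}_{\sA})\times\mathtt{GNS}(\dens{\rho}_{\sB})$ the cocycle equation $d^{0}_{\cpx{G}}(a,b) = (1_{\sA}\otimes b - a\otimes 1_{\sB})\supp_{\sAB}=0$ is equivalent to the single vector equation $(a\otimes 1_{\sB})\psi = (1_{\sA}\otimes b)\psi$, so the whole problem reduces to solving this equation in the Schmidt basis. (One could instead invoke Cor.~\ref{cor:GNS_cohom_saturated_cov} and evaluate $\Cov$, $\Var_{\sA}$, $\Var_{\sB}$ on matrix units, but the rank-one reduction is more economical.)

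First I would extend the Schmidt vectors $\{\xi^{\sX}_{i}\}_{i=1}^{S}$ to orthonormal bases of $\hilb_{\sX}$ and expand $a$ and $b$ in the corresponding matrix units. Because $a\in\mathtt{GNS}(\dens{\rho}_{\sA})$ satisfies $a\supp_{\sA}=a$, only the columns indexed by the support survive (i.e.\ $A_{mn}:=\langle\xi^{\sA}_{m},a\,\xi^{\sA}_{n}\rangle$ vanishes for $n>S$), and similarly for $B_{mn}:=\langle\xi^{\sB}_{m},b\,\xi^{\sB}_{n}\rangle$. Plugging into $(a\otimes 1_{\sB})\psi=(1_{\sA}\otimes b)\psi$ and equating the coefficient of each basis vector $\xi^{\sA}_{k}\otimes\xi^{\sB}_{l}$ produces $\sqrt{p_{l}}\,A_{kl}=\sqrt{p_{k}}\,B_{lk}$ for $k,l\le S$, together with $A_{kl}=0$ for $k>S$ and $B_{lk}=0$ for $l>S$. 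The vanishing equations say precisely that $a$ and $b$ are supported on the $S\times S$ blocks, i.e.\ $a\in\operatorname{span}_{\mathbb{C}}\{\mathbbm{e}^{\sA}_{ij}\}$ and $b\in\operatorname{span}_{\mathbb{C}}\{\mathbbm{e}^{\sB}_{ij}\}$, while the surviving relation $B_{lk}=\sqrt{p_{l}/p_{k}}\,A_{kl}$ determines $b$ uniquely from $a$.

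From here the dimension count is immediate: the kernel is parametrized by the free $S\times S$ matrix $(A_{kl})$, hence $\dim\ker(d^{0}_{\cpx{G}})=S^{2}$, with an explicit basis of cocycles given by $(\mathbbm{e}^{\sA}_{ij},\sqrt{p_{j}/p_{i}}\,\mathbbm{e}^{\sB}_{ji})$ for $i,j=1,\dots,S$. Choosing $A=I_{S}$ recovers $(\supp_{\sA},\supp_{\sB})=d^{-1}_{\cpx{G}}(1)$, a one-dimensional subspace of the kernel, so passing to the quotient yields $\dim H^{0}[\cpx{G}(\bdens{\rho}_{\sAB})]=S^{2}-1$. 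Since the reduced states of a pure state satisfy $\rank(\dens{\rho}_{\sA})=\rank(\dens{\rho}_{\sB})=S$, this equals $\rank(\dens{\rho}_{\sA})^{2}-1=\rank(\dens{\rho}_{\sB})^{2}-1$, and the spanning set $\{(\mathbbm{e}^{\sA}_{ij},\mathbbm{e}^{\sB}_{ij})\}$ modulo $(\supp_{\sA},\supp_{\sB})$ realizes $H^{0}$ once the generators are matched as above.

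The main obstacle is bookkeeping rather than conceptual: one must keep track of the fact that the relation between $a$ and $b$ is a Schmidt-coefficient-weighted transpose, $B_{lk}=\sqrt{p_{l}/p_{k}}\,A_{kl}$, so that the true cocycles pair $\mathbbm{e}^{\sA}_{ij}$ with $\mathbbm{e}^{\sB}_{ji}$ (up to the factor $\sqrt{p_{j}/p_{i}}$) rather than with $\mathbbm{e}^{\sB}_{ij}$. The off-block vanishing, which crucially uses $a\supp_{\sA}=a$ to prevent $a$ from leaking components outside $\image(\dens{\rho}_{\sA})$, must also be verified rather than assumed; once it is, the rest is linear algebra.
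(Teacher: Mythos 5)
Your proposal is correct, and it reaches the result by a genuinely different route than the paper. The paper's proof (Appendix~C) goes through the covariance machinery: it invokes Thm.~\ref{thm:kernel_covariance} to identify $\ker(d^{0}_{\cpx{G}})$ with pairs saturating $\Cov(a,b)=\Var_{\sA}(a)=\Var_{\sB}(b)$, then runs a Cauchy--Schwarz argument for the degenerate sesquilinear form $(\mathbbm{X},\mathbbm{Y})\mapsto\Tr[\mathbbm{D}\mathbbm{X}^{*}\mathbbm{Y}]$, and finally uses membership in $\mathtt{GNS}(\dens{\rho}_{\sX})$ to kill the null-vector ambiguity. You instead exploit purity at the outset: since $\supp_{\sAB}=\psi\otimes\psi^{\vee}$ is a rank-one projection, $X\supp_{\sAB}=0$ iff $X\psi=0$, so the cocycle condition collapses to the single vector equation $(a\otimes 1_{\sB})\psi=(1_{\sA}\otimes b)\psi$, which you then solve coefficient-by-coefficient in the Schmidt basis. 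Your route is more elementary and self-contained (no appeal to the covariance saturation theorem), and it produces something slightly sharper: the explicit basis of cocycles $(\mathbbm{e}^{\sA}_{ij},\sqrt{p_{j}/p_{i}}\,\mathbbm{e}^{\sB}_{ji})$, recording both the index transposition and the Schmidt-coefficient weight. You are right to flag this as the delicate bookkeeping point --- the theorem's displayed span $\{(\mathbbm{e}^{\sA}_{ij},\mathbbm{e}^{\sB}_{ij})\}$ and the appendix's unweighted $(\mathbbm{e}^{\sA}_{ij},\mathbbm{e}^{\sB}_{ji})$ should both be read up to this relabelling and reweighting; the kernel is the graph of the map $A\mapsto B$ with $B_{lk}=\sqrt{p_{l}/p_{k}}\,A_{kl}$, which is $S^{2}$-dimensional either way, so the dimension count $\dim H^{0}=S^{2}-1$ and the identification with $\rank(\dens{\rho}_{\sX})^{2}-1$ are unaffected. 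What the paper's approach buys in exchange is conceptual continuity: the same covariance-saturation lemma underlies the interpretation of $H^{0}$ as maximally correlated pairs, so the Schmidt computation there doubles as an illustration of that framework, whereas your argument is special to the pure case where $\supp_{\sAB}$ has rank one.
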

\begin{proof}
	The proof involves a usage of Thm.~\ref{thm:kernel_covariance}. We supply the details in Appendix~\ref{app:pure_bipartite_cohomology}.
\end{proof}

In general, the first cohomology group of $\cpx{G}(\bdens{\rho}_{\sAB})$ for pure $\bdens{\rho}_{\sAB}$ will not vanish.  Indeed, let $\bdens{\rho}_{\sAB}$ be pure with $\dens{\rho}_{\sAB} = \psi \otimes \psi^{\vee}$.  The Euler characteristic of $\cpx{G}(\bdens{\rho}_{\sAB})$ can be calculated easily at the level of cochains and is given by
\begin{align*}
	\chi \left[ \cpx{G}(\bdens{\rho}_{\sAB}) \right] = S(d_{\sA} + d_{\sB}) - (d_{\sA} d_{\sB}+1),
\end{align*}
where $d_{\sX} := \dim \hilb_{\sX}$, and $S$ the Schmidt rank of $\psi$. Because this must be equal to the alternating sum of cohomology groups, and the only non-vanishing cohomology groups are in degrees $0$ and $1$, then
\begin{align*}
	\dim H^{0}\left[ \cpx{G}(\bdens{\rho}_{\sAB}) \right] - \dim H^{1}\left[ \cpx{G}(\bdens{\rho}_{\sAB}) \right] = S(d_{\sA} + d_{\sB}) - (d_{\sA} d_{\sB}+1)
\end{align*}
It follows from Thm.~\ref{thm:pure_bipartite_cohomology} that
\begin{align*}
	\dim H^{1} \left[\cpx{G}(\bdens{\rho}_{\sAB})  \right] = (d_{\sA} - S)(d_{\sB} - S)
\end{align*}
which might be thought of as a very coarse measure of how far $\psi$ is from being ``maximally entangled" (smaller means closer): the Schmidt rank $S$ of a maximally entangled state should be equal to the smallest of $d_{\sA}$ or $d_{\sB}$ (i.e.\ the ranks of the reduced density matrices should be their maximal possible values).
Being $\mathbb{Z}$-valued, of course, this measure cannot detect the intricate details about how large the coefficients of the Schmidt decomposition are---only whether they are zero or not.

\begin{remark}[label=rmk:bipartite_poincare]{}{}
	Suppose $\bdens{\rho}_{\sAB}$ is a pure bipartite density state with Schmidt rank $S$ then, by the above discussion, its Poincar\'{e} polynomials are given by: 
	 \begin{align*}
		 P_{\cpx{G}}(\bdens{\rho}_{\sAB}) &= (S^2 - 1) + (d_{\sA} - S)(d_{\sB} - S) y\\
		 P_{\cpx{E}}(\bdens{\rho}_{\sAB}) &= 2(S^2-1).
	 \end{align*}
	 Thus, there are $\mathrm{min}(d_{\sA}, d_{\sB})$ possible polynomials, each uniquely labelled by the Schmidt rank. In particular the Poincar\'{e} polynomials associated to pure bipartite density states are complete SLOCC invariants.
\end{remark}
The following remark describes why one should not expect the dimensions of cohomology groups for generic mixed bipartite states to be simple functions of Schmidt ranks.
\begin{remark}{}{}
	As mentioned in Example~\ref{ex:mixed_state_counterexamples} the mixed bipartite density state $\bdens{\rho}_{\sAB}$ with 
	\begin{align*}
		\dens{\rho}_{\sAB} &= \lambda \ket{00} \bra{00} + (1-\lambda) \ket{01}\bra{01}
	\end{align*}
	($\lambda \in (0,1)$) has $H^{0}[\cpx{E}(\bdens{\rho}_{\sAB})] = H^{0}[\cpx{G}(\bdens{\rho}_{\sAB})] = 0$, but the bipartite density state with
	\begin{align*}
		\dens{\rho}_{\sAB} &= \lambda \ket{00} \bra{00} + (1-\lambda) \ket{11}\bra{11}
	\end{align*}
	($\lambda \in (0,1)$) has non-trivial cohomology groups.
	This shows that, in contrast to the situation for pure states established in Thm~\ref{thm:pure_bipartite_cohomology}, the dimensions of cohomology groups associated to a generic bipartite mixed state are not simply functions of ranks of reduced density matrices and Schmidt ranks of orthogonal decompositions, and depend sensitively on the relations between the subspaces $\image(\dens{\rho}_{\sA}),\, \image(\dens{\rho}_{\sB})$, and $\image(\dens{\rho}_{\sAB})$.
\end{remark}

\section{Multipartite Complexes \label{sec:multipartite_complexes}}
We next generalize (co)chain complexes above in the situation above for the case of a density state on a finite dimensional Hilbert space equipped with an $N$-partite ($N \geq 1$) tensor product decomposition.
It is helpful to define some notation for the data of such a decomposition.

\begin{definition}{}{multipartite_dens}
	Let $N$ be a positive integer.  Then a \newword{$N$-partite density state} (or \newword{multipartite density state} if we suppress mention of $N$) is a tuple of data $(P,(\hilb_{p})_{p \in P},\dens{\rho})$ where
	\begin{enumerate}
		\item $P$---the set of \newword{tensor factors}---is a totally ordered set of size $|P| = N$: i.e.\ a set equipped with a bijection to $\{1,\cdots,|P|\}$ that allows one to place a (total) order on $P$;

		\item $(\hilb_{p})_{p \in P}$ is a tuple of (non-zero) Hilbert spaces indexed by $P$;

		\item $\dens{\rho} \in \Dens{(\bigotimes_{p \in P} \hilb_{p})}$ (the order of the tensor product induced by the order on $P$).
	\end{enumerate}
	When the data $(\hilb_{p})_{p \in P}$ are understood, we will denote an $N$-partite density state $(P,(\hilb_{p})_{p \in P},\dens{\rho})$ by the bold symbol $\bdens{\rho}_{P}$.
	An $N$-partite density state $\bdens{\rho}_{P}$ is \newword{pure} if $\dens{\rho}$ is pure; otherwise it is \newword{mixed}.
\end{definition}
In examples, such as a multipartite density state associated to a lattice of spins, or a state on a space-time system divided up into causal diamonds, the elements of the set $P$ of tensor factors of a local assignment can be thought of as primitive/irreducible ``subsystems" (hence the letter ``$P$"), each equipped with a ``local degrees of freedom"\footnote{We do not necessarily assume $P$ is associated to any underlying geometric system, however.} encapsulated by $\hilb_{p}$; a more general ``subsystem" is given by an arbitrary subset of $P$.

\begin{remark}{}{}
	The total order on the set of tensor factors $P$ required in Def.~\ref{def:multipartite_dens} is for computational convenience.
	Any two choices of total order on $P$ (related by a permutation of $P$) result in (co)chain isomorphic cochain complexes (c.f.\ \S\ref{sec:multipartite_permutation_equivariance}.
	It is possible to work with a definition of a multipartite state that does not require a total order on the set of tensor factors, but this would only complicate the discussion and remain unfaithful to the computationally-oriented nature of this paper. 
\end{remark}

\begin{notation}{}{}
When convenient, totally ordered sets are denoted are denoted in tuple notation: e.g.\ $P = (p_{1}, \cdots p_{n})$ denotes the set $P = \{p_{1}, \cdots, p_{n} \}$ equipped with the total order given by $p_{i} < p_{j}$ if $i < j$.
\end{notation}

From the data of an $N$-partite density state it is useful to develop notation for a few more quantities.

\begin{definition}{}{multipartite_induced_data}
Let $(P,(\hilb_{p})_{p \in P},\dens{\rho})$ be an $N$-partite density state.
\begin{enumerate}
	\item For each subset $T \subseteq P$, define:
		\begin{align*}
			\newmath{\hilb_{T}} := 
			\left \{
				\begin{array}{ll}
					\bigotimes_{t \in T} \hilb_{t}, & \text{if $T \neq \emptyset$}\\
					\mathbb{C}, & \text{if $T = \emptyset$}
				\end{array}
			\right. ,
		\end{align*}
		where the order of the tensor product being given by the induced order on elements of $T$ (e.g.\ if $T = \{t_1, t_2 \}$ with $t_{1} < t_{2}$ then $\hilb_{T} = \hilb_{t_{1}} \otimes \hilb_{t_{2}}$).

		\item The reduced density state assigned to $T \subseteq P$ is
			\begin{align*}
				\newmath{\dens{\rho}_{T}} := \Tr_{P \backslash T} (\dens{\rho}) \in \Dens(\hilb_{T}),
			\end{align*}
			where $P \backslash T$ denotes the complement of $T$ in $P$, and $\Tr_{P \backslash T}$ denotes the partial trace over $\hilb_{P \backslash T}$ (note that $\rho_{\emptyset} = 1 \in \mathrm{Dens}(\mathbb{C})$ by definition).
			When $\dens{\rho} = \dens{\rho}_{P}$ is understood, the support projection $\supp_{\dens{\rho}_{T}}$ of $\dens{\rho}_{T}$ will be denoted $\newmath{\supp_{T}}$ and the left ideal $\mathfrak{N}_{\dens{\rho}_{T}}$ (c.f.\ Def.~\ref{def:right_essential_equivalence}) as $\newmath{\mathfrak{N}_{T}}$.
	\end{enumerate}
\end{definition}

To add to our zoo of definitions, the following straightforward notation is also helpful.

\begin{definition}[label=def:ordered_set_ops]{}{}
	Let $P$ be a finite set equipped with a total order and $T \subseteq P$.  
	\begin{enumerate} 
		\item $\newmath{T(l)}$ is the $(l+1)$th element of $T$ ( $0 \leq l \leq |T|-1$) using the induced order on elements of $T \subseteq P$.

		\item $\newmath{\partial_{l} T} := T \backslash T(l)$: i.e.\ the set of order $|T|-1$ obtained by eliminating $T(l)$ from $T$.
	\end{enumerate}
\end{definition}

\subsection{Cochain Complexes \label{sec:multipartite_cochain_complexes}}
Given an $N$-partite density state $\bdens{\rho}_{P} := (P, (\hilb_{p})_{p \in P}, \dens{\rho})$, we define the cochain complexes
\begin{align*}
	\newmath{\cpx{G}(\bdens{\rho}_{P})} :=
	\cdots \longrightarrow 0
	\longrightarrow \mathbb{C}
	\overset{d_{\cpx{G}}^{-1}}{\longrightarrow}
	\cpx{G}^{0}(\bdens{\rho}_{P})
	\overset{d_{\cpx{G}}^{0}}{\longrightarrow}
	\cpx{G}^{1}(\bdens{\rho}_{P})
	\overset{d_{\cpx{G}}^{2}}{\longrightarrow} \cdots
	\overset{d_{\cpx{G}}^{N-2}}{\longrightarrow}
	\cpx{G}^{N-1}(\bdens{\rho}_{P})
	\longrightarrow 0 \longrightarrow
	\cdots
\end{align*}
and
\begin{align*}
	\newmath{\cpx{E}(\bdens{\rho}_{P})} :=
	\cdots \longrightarrow 0
	\longrightarrow \mathbb{C}
	\overset{d_{\cpx{E}}^{-1}}{\longrightarrow}
	\cpx{E}^{0}(\bdens{\rho}_{P})
	\overset{d_{\cpx{E}}^{0}}{\longrightarrow}
	\cpx{E}^{1}(\bdens{\rho}_{P})
	\overset{d_{\cpx{E}}^{2}}{\longrightarrow} \cdots
	\overset{d_{\cpx{E}}^{N-2}}{\longrightarrow}
	\cpx{E}^{N-1}(\bdens{\rho}_{P})
	\longrightarrow 0 \longrightarrow
	\cdots 
\end{align*}
with components
\begin{align*}
	\newmath{\cpx{G}^{k}(\bdens{\rho}_{P})} = \prod_{\{T \subseteq P : |T| = k +1 \} } \mathtt{GNS}(\dens{\rho}_{T}),
\end{align*}
and 
\begin{align*}
	\newmath{\cpx{E}^{k}(\bdens{\rho}_{P})} = \prod_{\{T \subseteq P: |T| = k +1 \} } \mathtt{Com}(\dens{\rho}_{T}).
\end{align*}	
For $k < -1$ and $k > N-1$ the coboundaries are trivial; the coboundaries for $k = -1$ are given by:
\begin{align*}
	d_{\cpx{G}}^{-1}: \lambda &\longmapsto (\lambda \supp_{p})_{p \in P} =: \prod_{p \in P} (\lambda \supp_{p}),\\
	d_{\cpx{E}}^{-1}: \lambda &\longmapsto (\lambda \supp_{p})_{p \in P} =: \prod_{p \in P} (\lambda \supp_{p}),
\end{align*} 
(the far right hand side just offering an alternative notation for tuples $(\supp_{p})_{p \in P})$; the coboundaries for $0 \leq k \leq N-1$ defined via alternating sums of linear maps:
\begin{align*} 
	d_{\cpx{G}}^{k} = \sum_{j = 0}^{k+1} (-1)^{j} (\Delta_{\cpx{G}})\indices{^{k}_{j}},\\ 
	d_{\cpx{E}}^{k} = \sum_{j = 0}^{k+1} (-1)^{j} (\Delta_{\cpx{E}})\indices{^{k}_{j}}.
\end{align*}
To define the maps
\begin{align*} 
	(\Delta_{\cpx{G}})\indices{^{k}_{j}}: \prod_{\{V \subseteq P : |V| = k +1 \} } 
	\mathtt{GNS}(\dens{\rho}_{V}) \longrightarrow \prod_{\{T \subseteq P : |T| = k + 2 \} } \mathtt{GNS}(\dens{\rho}_{T}) 
\end{align*} 
and
\begin{align*} 
	(\Delta_{\cpx{G}})\indices{^{k}_{j}}:  \prod_{\{V \subseteq P : |V| = k +1 \} }
	\mathtt{Com}(\dens{\rho}_{V})  \longrightarrow \prod_{\{T \subseteq P : |T| = k + 2 \} }  \mathtt{Com}(\dens{\rho}_{T}), 
\end{align*}
for $0 \leq j \leq k+1$, we first define the ``extension maps" that take in an assignment of operators to each subset of size $k+1$ (an element of $\prod_{|V| = k + 1} \algebra{\hilb_{V}}$) and extend them to an assignment of operators on sets of size $k+2$ by tensoring by the identity at certain positions.
Explicitly these maps are given by   
\begin{align*}
	\epsilon\indices{^{k}_{j}}: \prod_{\{V \subseteq P : |V| = k +1 \} } 
	\algebra{\hilb_{V}} &\longrightarrow \prod_{\{T \subseteq P : |T| = k+2 \} } \algebra{\hilb_{T}}\\
	R &\longmapsto \prod_{\{T \subseteq P : |T| = k+2 \} } (\epsilon\indices{^{k}_{j}} R)_{T},
\end{align*}
where\footnote{We are using the following notation: given a product of vector spaces $\prod_{l \in S}V_{l}$ and an element $v \in \prod_{l \in S}V_{l}$ we let $v_{s} \in V_{s}$ denote the image of the projection $\prod_{l \in S}V_{l} \rightarrow V_{s}$.
Moreover, given a sequence of maps $(f_{l}: W \rightarrow V_{l})_{l \in S}$, we denote the induced map into the product $\prod_{l \in S} V_{l}$ as $\prod_{l \in S} f_{l}: W \rightarrow \prod_{l \in S} V_{l}$.}
\begin{align*}
	(\epsilon\indices{^{k}_{j}} R)_{T} := \Sigma_{(T,j)}\left(R_{\partial_{j} T} \otimes 1_{t_{j}} \right) \in \algebra{\hilb_{T}},
\end{align*}
and $\Sigma_{(T,j)}$ is the reshuffling map
\begin{align*}
	\Sigma_{(T,j)}: \algebra{\hilb_{\partial_{j} T}} \otimes \algebra{\hilb_{t}} & \longrightarrow \algebra{\hilb_{T}}.
\end{align*}
This reshuffling map does ``the obvious thing" and is not worthy of much attention, but for total clarity we define it explicitly: first define the reshuffling of Hilbert spaces
\begin{align*} 
	\sigma_{(T,j)}: \left( \bigotimes_{t \in \partial_{j} T} \hilb_{t} \right) \otimes \hilb_{t_{j}} \rightarrow  \bigotimes_{t \in T} \hilb_{t}
\end{align*}
by linearization of
\begin{align*}
	\phi_{t_{1}} \otimes \phi_{t_{2}} \otimes \cdots \phi_{t_{j-1}} \otimes \cancel{\phi_{t_{j}}} \otimes \phi_{t_{j+1}} \otimes \cdots \otimes \phi_{t_{|T|}} \otimes \phi_{t_{j}} \longmapsto \phi_{t_{1}} \otimes \phi_{t_{2}} \otimes \cdots \phi_{t_{|T|}} 
\end{align*}
where $T = \{t_{1}, \cdots, t_{|T|} \}$ with $t_{l} < t_{m}$ for $l < m$, $t_{r} \in \hilb_{t_{r}}$, and $\cancel{\phi_{t_{j}}}$ is a placeholder emphasizing the absence of the expected tensor factor at the $j$th position.
Then we define:
\begin{align*}
	\Sigma_{(T,j)}: \algebra{\hilb_{\partial_{j} T}} \otimes \algebra{\hilb_{t}} & \longrightarrow \algebra{\hilb_{T}}\\
	r               & \longmapsto \left(\sigma_{(T,j)} \right) r \left( \sigma_{(T,j)} \right)^{*}.
\end{align*}
With this in hand, the evaluation of $(\Delta_{\cpx{G}})\indices{^{k}_{j}}$ and $(\Delta_{\cpx{E}})\indices{^{k}_{j}}$ on a $k$-cochain $R$ (living in the appropriate domain) are defined by restrictions and support-projection-compressions of the extension maps $\epsilon\indices{^{k}_{j}}$: 
\begin{align*}
	\left[(\Delta_{\cpx{G}})\indices{^{k}_{j}} R\right]_{T} &:= \left(\epsilon\indices{^{k}_{j}} R\right)_{T} \supp_{T},\\
	\left[(\Delta_{\cpx{E}})\indices{^{k}_{j}} R\right]_{T} &:= \supp_{T} \left(\epsilon\indices{^{k}_{j}} R \right)_{T} \supp_{T},
\end{align*}
for any subset $T \subseteq P$ with $|T| = k + 2$.

Summarizing this in a different notation, we have:
\begin{align*}
	(\Delta_{\cpx{G}})\indices{^{k}_{j}} : R &\longmapsto \prod_{|T| = k + 2} \left[\Sigma_{(T,j)}\left(R_{\partial_{j} T} \otimes 1_{t_{j}} \right) \right] \supp_{T}.
\end{align*}
and
\begin{align*}
	(\Delta_{\cpx{E}})\indices{^{k}_{j}} : R &\longmapsto \prod_{|T| = k + 2} \supp_{T}\left[\Sigma_{(T,j)}\left(R_{\partial_{j} T} \otimes 1_{t_{j}} \right) \right] \supp_{T}
\end{align*}

\begin{proposition}{}{}
	The above description defines a cochain complex.
\end{proposition}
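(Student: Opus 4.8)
The plan is to recognize both families of maps as arising from a cosimplicial structure on the faces of $\Delta_{P}$ and to verify the identity $d^{k+1}\circ d^{k}=0$ by the usual cancellation of terms in pairs, the only genuinely new ingredient being the interaction of the support-projection compressions with iterated lifting. I will carry out the argument for the GNS coboundary $d_{\cpx{G}}$; the commutant case $d_{\cpx{E}}$ is identical, with the one-sided right multiplication by $\supp_{T}$ replaced by two-sided compression $\supp_{T}(-)\supp_{T}$ and the absorption step applied on both sides.

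First I would isolate the purely combinatorial content by working with the unprojected extension maps $\epsilon\indices{^{k}_{j}}$ alone. The claim is that they obey the standard cosimplicial (coface) identity: for $i<j$,
\[
	\epsilon\indices{^{k+1}_{j}}\circ\epsilon\indices{^{k}_{i}}=\epsilon\indices{^{k+1}_{i}}\circ\epsilon\indices{^{k}_{j-1}},
\]
with the two sides landing on the same double face $\partial_{i}\partial_{j}T$ of each $T$ with $|T|=k+3$. Concretely this says that inserting an identity tensor factor at slot $i$ and then at slot $j$ agrees, after the reshufflings $\Sigma_{(T,\cdot)}$, with inserting them in the opposite order under the shifted index. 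I would verify it by unwinding the definition of $\sigma_{(T,j)}$ and checking the two composite reshufflings of $\hilb_{T}$ coincide as linear maps; this is routine since tensoring with $1$ at disjoint positions commutes and the orders induced on the subfaces are compatible.

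The key step is to show that when the two genuine coboundaries are composed, the \emph{intermediate} support projections may be dropped. Fix $T$ with $|T|=k+3$. Expanding, the $(i,j)$ contribution to $[d_{\cpx{G}}^{k+1}(d_{\cpx{G}}^{k}R)]_{T}$ has the shape
\[
	\Sigma_{(T,i)}\Bigl(\bigl[\Sigma_{(\partial_{i}T,j)}(R_{\partial_{j}\partial_{i}T}\otimes 1)\,\supp_{\partial_{i}T}\bigr]\otimes 1_{T(i)}\Bigr)\supp_{T}.
\]
Since each $\Sigma_{(T,i)}$ is a $*$-algebra isomorphism, the factor $\supp_{\partial_{i}T}$ pulls out as $\underline{\supp_{\partial_{i}T}}$, the lift of $\supp_{\partial_{i}T}$ to $\algebra{\hilb_{T}}$ (identity on the $T(i)$ slot). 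Applying the bipartite compatibility-of-supports lemma to the bipartition $\hilb_{T}\cong\hilb_{\partial_{i}T}\otimes\hilb_{T(i)}$ — for which $\dens{\rho}_{\partial_{i}T}=\Tr_{T(i)}[\dens{\rho}_{T}]$ is exactly the reduced state — gives $\underline{\supp_{\partial_{i}T}}\,\supp_{T}=\supp_{T}$ (Lem.~\ref{lem:compat_of_supports}). Hence each term collapses to $(\epsilon\indices{^{k+1}_{i}}\epsilon\indices{^{k}_{j}}R)_{T}\,\supp_{T}$, carrying a single projection at the end. Finally I invoke the cosimplicial identity of the previous paragraph: after this collapse the signed double sum pairs off terms with identical lifted operators and opposite signs, so everything cancels and $[d_{\cpx{G}}^{k+1}(d_{\cpx{G}}^{k}R)]_{T}=0$ for every $T$. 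I expect the main obstacle to be precisely this absorption step — justifying that the intermediate $\supp_{\partial_{i}T}$ is swallowed by $\supp_{T}$ through the lift $\Sigma_{(T,i)}(-\otimes 1_{T(i)})$, which requires both the multiplicativity of the reshuffling maps and the nested form of the compatibility lemma; the sign and reshuffling bookkeeping of the cosimplicial identity is tedious but conceptually harmless.
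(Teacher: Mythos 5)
Your proof is correct and takes essentially the same route the paper intends: the paper's own proof of this proposition consists of the single remark that squaring to zero is ``a computational exercise that relies on the compatibility of supports lemma,'' and your write-up --- absorbing the intermediate projection $\supp_{\partial_i T}$ into $\supp_T$ via Lem.~\ref{lem:compat_of_supports} applied to the bipartition $\hilb_T \cong \hilb_{\partial_i T}\otimes\hilb_{T(i)}$ (where $\dens{\rho}_{\partial_i T}=\Tr_{T(i)}[\dens{\rho}_T]$), followed by the standard cosimplicial cancellation of the unprojected extension maps --- is precisely that computation carried out, including the correct two-sided variant for the commutant complex.
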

\begin{proof}
	One must show that the coboundaries square to zero.
	This is a computational exercise that relies on the compatibility of supports lemma Lem~\ref{lem:compat_of_supports}.
\end{proof}

\subsection{Chain Complexes \label{sec:multipartite_chain_complexes}}
We define the chain complexes 
\begin{align*}
	\newmath{\cpx{g}(\bdens{\rho}_{P})} :=
	\cdots \longleftarrow 0
	\longleftarrow \mathbb{C}
	\overset{\partial^{\cpx{g}}_{0}}{\longleftarrow}
	\cpx{g}_{0}(\bdens{\rho}_{P})
	\overset{\partial^{\cpx{g}}_{1}}{\longleftarrow}
	\cpx{g}_{1}(\bdens{\rho}_{P})
	\overset{\partial^{\cpx{g}}_{2}}{\longleftarrow} \cdots
	\overset{\partial^{\cpx{g}}_{N-1}}{\longleftarrow}
	\cpx{g}_{N-1}(\bdens{\rho}_{P})
	\longleftarrow 0 \longleftarrow
	\cdots
\end{align*}
and
\begin{align*}
	\newmath{\cpx{e}(\bdens{\rho}_{P})} :=
	\cdots \longleftarrow 0
	\longleftarrow \mathbb{C}
	\overset{\partial^{\cpx{e}}_{0}}{\longleftarrow}
	\cpx{e}_{0}(\bdens{\rho}_{P})
	\overset{\partial^{\cpx{e}}_{1}}{\longleftarrow}
	\cpx{e}_{1}(\bdens{\rho}_{P})
	\overset{\partial^{\cpx{e}}_{2}}{\longleftarrow} \cdots
	\overset{\partial^{\cpx{e}}_{N-1}}{\longleftarrow}
	\cpx{e}_{N-1}(\bdens{\rho}_{P})
	\longleftarrow 0 \longleftarrow
	\cdots 
\end{align*}
with components
\begin{align*}
	\newmath{\cpx{g}_{k}(\bdens{\rho}_{P})} = \bigoplus_{\{T \subseteq P : |T| = k +1 \} } \mathtt{gns}(\dens{\rho}_{T}),
\end{align*}
and 
\begin{align*}
	\newmath{\cpx{e}_{k}(\bdens{\rho}_{P})} = \bigoplus_{\{T \subseteq P: |T| = k +1 \} } \mathtt{com}(\dens{\rho}_{T}).
\end{align*}	
For $k < 0$ and $k > N -1$ the boundary maps are trivial; postponing the definition of the $k=0$ boundary maps, the non-trivial boundary maps for $k > 0$ are defined via alternating sums of linear maps
\begin{align*} 
	\partial^{\cpx{g}}_{k} = \sum_{j = 0}^{k-1} (-1)^{j} (\delta^{\cpx{g}})\indices{_k^j}\\ 
	\partial^{\cpx{e}}_{k} = \sum_{j = 0}^{k-1} (-1)^{j} (\delta^{\cpx{e}})\indices{_k^j}.
\end{align*} 
where
\begin{align*} 
	(\delta^{\cpx{g}})\indices{_k^j}: \bigoplus_{\{V \subseteq P : |V| = k + 1 \} } 
	\mathtt{gns}(\dens{\rho}_{V}) \rightarrow \bigoplus_{\{W \subseteq P : |W| = k \} } \mathtt{gns}(\dens{\rho}_{W}) 
\end{align*} 
and
\begin{align*} 
	(\delta^{\cpx{e}})\indices{_k^j}:  \bigoplus_{\{V \subseteq P : |V| = k +1 \} }
	\mathtt{com}(\dens{\rho}_{V})  \rightarrow \bigoplus_{\{W \subseteq P : |W| = k \} }  \mathtt{com}(\dens{\rho}_{W}), 
\end{align*}
are given by restrictions of a map:
\begin{align*}
	\tau\indices{_k^j}: \bigoplus_{\{V \subseteq P : |V| = k +1 \} } 
	\states{\hilb_{V}} &\longrightarrow \bigoplus_{\{W \subseteq P : |W| = k \} } \states{\hilb_{W}}
\end{align*}
to the proper domains.
To define $\tau\indices{_k^j}$, note that every element $\dens{\Gamma} \in \bigoplus_{|V| = k +1} 
\states{\hilb_{V}}$ can be written as $\bigoplus_{|V| = k=1} \dens{\Gamma}_{W}$ where $\dens{\Gamma}_{W} \in \states{\hilb_{W}}$; with this in mind we have:
\begin{align*}
	\tau\indices{_k^j}: \bigoplus_{|V|=k+1} \dens{\Gamma}_{V} &\longmapsto \bigoplus_{|W|=k} \left( \sum_{\{V \subseteq P: W = \partial_{j} V \}} \Tr_{P \backslash (\partial_{j} V)} \left[\dens{\Gamma}_{V}\right] \right),
\end{align*}
(where the summation in parenthesis is taken to be $0_{W} \in \states{\hilb_{W}}$ if $\{V \subseteq P: W= \partial_{j}V \} = \emptyset$ for a fixed $W$ and $j$). The $k=0$ boundary maps are given by restrictions of $\tau\indices{_0^0}$; noting that $\states{\hilb_{\emptyset}} = \mathbb{C}$, $\tau\indices{_0^0}$ is given by:
\begin{align*}
	\tau\indices{_0^0}: \bigoplus_{p \in P} \states{\hilb_{p}} &\longrightarrow \mathbb{C},\\ 
	\bigoplus_{p \in P} \dens{\Gamma}_{p} &\longmapsto \sum_{p \in P} \Tr\left[\dens{\Gamma}_{p}\right].
\end{align*}

\begin{proposition}{}{}
	The above defines a chain complex.
\end{proposition}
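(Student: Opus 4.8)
The plan is to verify two separate assertions bundled into this proposition: first, that each $\partial_k$ genuinely maps $\cpx{g}_k(\bdens{\rho}_P)$ into $\cpx{g}_{k-1}(\bdens{\rho}_P)$ (so that the restrictions $\delta_k^j$ of $\tau_k^j$ really do land in the stated building-block subspaces), and second, that $\partial_{k-1}\circ\partial_k=0$. For the commutant complex $\cpx{e}(\bdens{\rho}_P)$ the two arguments are verbatim the same with $\mathtt{com}$ replacing $\mathtt{gns}$, so I would carry out the work for $\cpx{g}$ and remark that $\cpx{e}$ follows mutatis mutandis.

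For well-definedness, fix $V\subseteq P$ with $|V|=k+1$ and $\dens{\Gamma}_V\in\mathtt{gns}(\dens{\rho}_V)$. Since $\dens{\Gamma}_V$ is already supported on $\hilb_V$, tracing over $P\setminus\partial_jV=(P\setminus V)\cup\{V(j)\}$ reduces to the single partial trace $\Tr_{V(j)}[\dens{\Gamma}_V]\in\states{\hilb_{\partial_jV}}$. Regarding $V$ as the bipartition of $\partial_jV$ against the single factor $V(j)$, and noting $\dens{\rho}_{\partial_jV}=\Tr_{V(j)}[\dens{\rho}_V]$, this is exactly the hypothesis of Lem.~\ref{lem:compat_of_supports_predual}, which yields $\Tr_{V(j)}[\dens{\Gamma}_V]\in\mathtt{gns}(\dens{\rho}_{\partial_jV})$. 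Summing over the disjoint sets $V$ with $\partial_jV=W$ then shows each $\delta_k^j$ maps into $\bigoplus_{|W|=k}\mathtt{gns}(\dens{\rho}_W)$, as required; so the $\partial_k$ are defined on the stated domains and codomains.

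For the square-zero relation I would work entirely upstairs in the ambient spaces $\states{\hilb_W}$, since the building blocks are embedded subspaces and an identity proved there descends automatically. Expanding, $\partial_{k-1}\circ\partial_k=\sum_{i,j}(-1)^{i+j}\,\delta_{k-1}^i\delta_k^j$. Tracking one summand $\dens{\Gamma}_V$ with $|V|=k+1$ down to a component indexed by $U$ with $|U|=k-1$, the composite contributes precisely when $U$ arises from $V$ by deleting two elements, say at ordered positions $a<b$, and the operator contributed is (up to the reshuffling maps $\Sigma$) the double partial trace $\Tr_{V(a)}\Tr_{V(b)}[\dens{\Gamma}_V]$. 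The standard simplicial identity $\partial_i\partial_j=\partial_{j-1}\partial_i$ for $i<j$ shows this same pair of factors is reached by exactly two index pairs: deleting $V(b)$ first then $V(a)$, contributing sign $(-1)^{a+b}$; or deleting $V(a)$ first, after which $V(b)$ occupies position $b-1$, contributing sign $(-1)^{a+(b-1)}$. Because partial traces over the disjoint factors $\hilb_{V(a)}$ and $\hilb_{V(b)}$ commute, $\Tr_{V(a)}\Tr_{V(b)}=\Tr_{V(b)}\Tr_{V(a)}$, so the two contributions are the identical operator carrying opposite signs and cancel. Every contribution pairs off this way, and the degenerate bottom case $\partial_0\circ\partial_1$ cancels by the same mechanism (the two full traces $\Tr[\dens{\Gamma}_V]$ appear with opposite signs), matching the bipartite observation preceding Lem.~\ref{lem:compat_of_supports_predual}.

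The genuine mathematical input is slight: compatibility of supports (Lem.~\ref{lem:compat_of_supports_predual}) for well-definedness, and commutativity of partial traces over disjoint tensor factors for the cancellation. I therefore expect the main obstacle to be not any single identity but the careful bookkeeping of the reshuffling maps $\Sigma_{(T,j)}$ together with the induced orders on faces, so that the two cancelling terms are verifiably the same operator and the sign computation is faithful. As a cleaner alternative that bypasses this bookkeeping, I would observe that the componentwise trace-pairing transposes exhibit each $\partial^{\cpx{g}}_k$ as the transpose of the coboundary $d_{\cpx{G}}^{k-1}$ via an adjunction between partial trace and tensoring-by-the-identity---the same adjunction underlying Prop.~\ref{prop:bipartite_cocomplex_to_dualcomplex} and Prop.~\ref{prop:trace_duality_multipartite}. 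Crucially this adjunction is an identity of linear maps that does not presuppose the complex property; granting it, $\partial_{k-1}\circ\partial_k$ is the transpose of $d_{\cpx{G}}^{k-1}\circ d_{\cpx{G}}^{k-2}=0$, and non-degeneracy of the trace pairing forces $\partial_{k-1}\circ\partial_k=0$ with no further computation.
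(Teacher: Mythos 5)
Your proposal is correct and follows essentially the same route as the paper: well-definedness of the restricted boundary maps via the predual compatibility-of-supports lemma (Lem.~\ref{lem:compat_of_supports_predual}), and the square-zero relation by direct computation with the maps $\tau\indices{_k^j}$ using the simplicial sign cancellation and commutativity of partial traces over disjoint factors. Your alternative argument via the trace-pairing adjunction with the already-established cochain complex is also sound and is consistent with the duality recorded later in Prop.~\ref{prop:trace_duality_multipartite}.
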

\begin{proof}
	The hard part here is showing that the maps $(\delta^{\cpx{e}})\indices{_k^j}$ and $(\delta^{\cpx{g}})\indices{_k^j}$ defined above are indeed well-defined: restricted to the appropriate domain, the differentials as defined above land in the appropriate targets/codomains.  This can be shown using Lem~\ref{lem:compat_of_supports_predual} (the ``predual" of the compatibility of supports lemma).
	After this is shown, the fact that the differentials square to zero is a straightforward computation in terms of the maps $\tau\indices{_k^j}$.
\end{proof}

\subsection{Unravelling Definitions}
It is worthwhile to unravel some of the definitions above for $N$-partite density states with small $N$.  In the following sections we focus on cochain complexes. 

\subsubsection{Unipartite Complexes}
Nothing in our definitions prevents us from considering the case of a ``unipartite" density state $(*, (\hilb_{*}), \dens{\rho} )$: a density state on a single tensor factor ``$*$" (here we abuse notation by denoting both the one-point set and its single element by $*$).  It should be straightforward to see the following proposition from the definitions.

\begin{proposition}{Unipartite complex}{unipartite}
	Let $\bdens{\rho}_{*} = (*, (\hilb_{*}), \dens{\rho} )$ be a unipartite density state then $\cpx{G}(\bdens{\rho}_{*})$ is the complex concentrated in degrees $-1$ and $0$ with 
	\begin{align*}
		\cpx{G}(\bdens{\rho}_{*})^{-1} &= \mathbb{C},\\
		\cpx{G}(\bdens{\rho}_{*})^{0} &= \mathtt{GNS}(\dens{\rho});
	\end{align*}	
	the map $\delta^{-1}_{0}$ being the map $\lambda \mapsto \lambda \supp_{\dens{\rho}}$.  As a result, its associated cohomology is concentrated in degree 0 with
	\begin{align*}
		H^{0}\left[\cpx{G}(\bdens{\rho}_{*}) \right] &= \mathtt{GNS}(\dens{\rho})/\mathrm{span}_{\mathbb{C}}\{\supp_{\dens{\rho}}\}.
	\end{align*}
	Similarly, $\cpx{E}(\bdens{\rho}_{*})$ has cohomology concentrated in degree 0 with
	\begin{align*}
		H^{0}\left[\cpx{E}(\bdens{\rho}_{*}) \right] &= \mathtt{Com}(\dens{\rho})/\mathrm{span}\{\supp_{\dens{\rho}}\};
	\end{align*}
	if $\bdens{\rho}_{*}$ is of finite rank, the dimension of this latter vector space is $\rank(\dens{\rho})^2 - 1$, so can be considered as a measure of the ``purity" of the state.
\end{proposition}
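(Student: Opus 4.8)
The plan is to read everything off directly from the general multipartite definitions of \S\ref{sec:multipartite_cochain_complexes}, specialized to $P = (*)$ with $N = |P| = 1$. First I would enumerate the subsets of $\{*\}$: the only subsets are $\emptyset$ (size $0$) and $\{*\}$ (size $1$), and there are none of size $\geq 2$. Since $\cpx{G}^k(\bdens{\rho}_*) = \prod_{\{T \subseteq P : |T| = k+1\}} \mathtt{GNS}(\dens{\rho}_T)$, this immediately gives $\cpx{G}^{-1}(\bdens{\rho}_*) = \mathtt{GNS}(\dens{\rho}_\emptyset)$, $\cpx{G}^0(\bdens{\rho}_*) = \mathtt{GNS}(\dens{\rho})$, and $\cpx{G}^k(\bdens{\rho}_*) = 0$ for $k \geq 1$ (the empty product of vector spaces being the zero space). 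I would note that $\hilb_\emptyset = \mathbb{C}$ and $\dens{\rho}_\emptyset = 1$ by Def.~\ref{def:multipartite_induced_data}, whence $\mathtt{GNS}(\dens{\rho}_\emptyset) = \supp_{\dens{\rho}_\emptyset}\algebra{\mathbb{C}} = \mathbb{C}$, matching the degree $-1$ term. The coboundary $d^{-1}_{\cpx{G}}$ is, by definition, $\lambda \mapsto (\lambda \supp_p)_{p \in P}$, which for $P = (*)$ is exactly $\lambda \mapsto \lambda \supp_{\dens{\rho}}$; every $d^k_{\cpx{G}}$ with $k \geq 0$ vanishes since its target is the zero space.

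Next I would compute the cohomology. Because $d^{-1}_{\cpx{G}}(\lambda) = \lambda \supp_{\dens{\rho}}$ and $\supp_{\dens{\rho}} \neq 0$ (the Hilbert space $\hilb_*$ is nonzero, so $\dens{\rho}$ has nonempty image), this map is injective; hence $H^{-1}[\cpx{G}(\bdens{\rho}_*)] = \ker(d^{-1}_{\cpx{G}}) = 0$. In degree $0$ the coboundary out of $\cpx{G}^0(\bdens{\rho}_*)$ is zero, so $\ker(d^0_{\cpx{G}}) = \mathtt{GNS}(\dens{\rho})$, while $\image(d^{-1}_{\cpx{G}}) = \mathrm{span}_{\mathbb{C}}\{\supp_{\dens{\rho}}\}$, giving $H^0[\cpx{G}(\bdens{\rho}_*)] = \mathtt{GNS}(\dens{\rho})/\mathrm{span}_{\mathbb{C}}\{\supp_{\dens{\rho}}\}$; all remaining cohomology vanishes since the remaining components do. The argument for $\cpx{E}(\bdens{\rho}_*)$ is verbatim the same with $\mathtt{GNS}$ replaced by $\mathtt{Com}$ (using that $\supp_{\dens{\rho}} = \supp_{\dens{\rho}} 1 \supp_{\dens{\rho}} \in \mathtt{Com}(\dens{\rho})$), yielding $H^0[\cpx{E}(\bdens{\rho}_*)] = \mathtt{Com}(\dens{\rho})/\mathrm{span}\{\supp_{\dens{\rho}}\}$.

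For the dimension claim I would invoke the canonical isomorphism $\mathtt{Com}(\dens{\rho}) \cong \End[\image(\dens{\rho})]$ recorded in \S\ref{sec:building_blocks}. When $\dens{\rho}$ has finite rank $r := \rank(\dens{\rho})$, the space $\image(\dens{\rho})$ is $r$-dimensional, so $\dim \mathtt{Com}(\dens{\rho}) = r^2$; under this isomorphism $\supp_{\dens{\rho}}$ corresponds to the identity endomorphism $1_{\image(\dens{\rho})}$, which is nonzero and spans a one-dimensional line. Therefore $\dim H^0[\cpx{E}(\bdens{\rho}_*)] = r^2 - 1 = \rank(\dens{\rho})^2 - 1$.

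I expect no serious obstacle: the entire statement is a bookkeeping specialization of the general construction to the case of a single tensor factor. The only points requiring a moment of care are the empty-set conventions ($\hilb_\emptyset = \mathbb{C}$, $\dens{\rho}_\emptyset = 1$, and the empty product being $0$), which pin down the degree $-1$ and degree $\geq 1$ components, and the observation that $\supp_{\dens{\rho}} \neq 0$ forcing injectivity of $d^{-1}$, which relies on the standing assumption that the Hilbert space is nonzero.
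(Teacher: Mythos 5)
Your proof is correct and is precisely the "straightforward from the definitions" argument the paper intends: the paper states this proposition without proof, and your specialization of the general multipartite construction to $P=(*)$ (including the empty-set conventions, the injectivity of $d^{-1}$ from $\supp_{\dens{\rho}}\neq 0$, and the identification $\mathtt{Com}(\dens{\rho})\cong\End[\image(\dens{\rho})]$ for the dimension count) fills in exactly the omitted bookkeeping.
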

As often in mathematics, the trivial case of a good definition is far from having trivial applicability.  In \S\ref{sec:Kunneth} we will see that the unipartite example is an essential ingredient used to understand the cohomology of factorizable density states.

\subsubsection{Bipartite Complexes Revisited}
Next we compare the constructions in \S\ref{sec:multipartite_cochain_complexes} to the bipartite constructions of the previous sections.
We can work with a bipartite density state $\bdens{\rho}_{\sAB} = ((\sA,\sB), (\hilb_{\sA}, \hilb_{\sB}), \dens{\rho}_{\sAB})$ with $\sA < \sB$.
For simplicity we abuse notation and write the one point sets $\{\sA \}$ and $\{\sB \}$ as simply $\sA$ and $\sB$ (respectively) and denote the two element (totally ordered) set $\{\sA, \sB\}$ as $\sAB$.
Then running through our definitions, we start with the diagram that encodes the non-trivial part of our complex:
\begin{align*}
	\underbrace{\mathbb{C}}_{\cpx{G}^{-1}(\bdens{\rho}_{\sAB})}
	\xrightarrow{d^{-1}_{\cpx{G}}} 
	\underbrace{\mathtt{GNS}(\dens{\rho}_{\sA}) \times \mathtt{GNS}(\dens{\rho}_{\sB})}_{\cpx{G}^{0}(\bdens{\rho}_{\sAB})}
	\substack{\xrightarrow{(\Delta_{\cpx{G}})\indices{^{0}_{0}}} \\ \xrightarrow[(\Delta_{\cpx{G}})\indices{^{0}_{1}}]{}} 
	\underbrace{\mathtt{GNS}(\dens{\rho}_{\sAB})}_{\cpx{G}^{1}(\bdens{\rho}_{\sAB})}.
\end{align*}
to decode the maps $(\Delta_{\cpx{G}})\indices{^{0}_{1}}$, let $R \in \cpx{G}(\bdens{\rho}_{\sAB})^{0}$ which we will think of as an assignment of operators to (the primitive) subsystems: $R_{\sA} = a \in \mathtt{GNS}(\dens{\rho}_{\sA})$ and $R_{\sB} = b \in \mathtt{GNS}(\dens{\rho}_{\sB})$.
Then
\begin{align*}
	\left[(\Delta_{\cpx{G}})\indices{^{0}_{0}}R\right]_{\sAB} &= (1_{\sA} \otimes R_{\partial_{0}(\sAB)}) \supp_{\sAB} = (1_{\sA} \otimes b ) \supp_{\sAB}\\
	\left[(\Delta_{\cpx{G}})\indices{^{0}_{1}}R\right]_{\sAB} &= (R_{\partial_{1}(\sAB)} \otimes 1_{\sB}) \supp_{\sAB} = (a \otimes 1_{\sB} ) \supp_{\sAB},
\end{align*}
or, writing $R$ as a tuple $(a,b)$:
\begin{align*}
	(\Delta_{\cpx{G}})\indices{^{0}_{0}}: (a,b) &\longmapsto (1_{\sA} \otimes b ) \supp_{\sAB}\\
	(\Delta_{\cpx{G}})\indices{^{0}_{1}}: (a,b) &\longmapsto (a \otimes 1_{\sB} ) \supp_{\sAB};
\end{align*}	
and we have
\begin{align*}
	d^{0}_{\cpx{G}} = (\Delta_{\cpx{G}})\indices{^{0}_{0}} - (\Delta_{\cpx{G}})\indices{^{0}_{1}}: (a,b) \mapsto \left[1_{\sA} \otimes b  -  a \otimes 1_{\sB} \right]\supp_{\sAB}. 
\end{align*}		

\subsubsection{Tripartite Complexes}
In this section we work with a tripartite density state $\bdens{\rho}_{\sABC} = ((\sA,\sB, \sC), (\hilb_{\sA}, \hilb_{\sB}, \hilb_{\sABC}) , \dens{\rho}_{\sABC})$ with $\sA < \sB < \sC$, as in the previous section we abuse notation and let, e.g.\ $\sAB$ denote the set $\{\sA, \sB \}$.
The non-trivial part of the cochain complex $\cpx{G}(\bdens{\rho}_{\sABC})$ is given by studying the diagram: 
\begin{align*}
	\underbrace{\mathbb{C}}_{\cpx{G}^{-1}(\bdens{\rho}_{\sAB})}
	\xrightarrow{\Delta\indices{^{-1}_{0}}} 
	\underbrace{\mathtt{GNS}(\dens{\rho}_{\sA}) \times \mathtt{GNS}(\dens{\rho}_{\sB}) \times \mathtt{GNS}(\dens{\rho}_{\sC}) }_{\cpx{G}^{0}(\bdens{\rho}_{\sABC})}
	\substack{\xrightarrow{\Delta\indices{^{0}_{0}}} \\ \xrightarrow[\Delta\indices{^{0}_{1}}]{}} 
	\underbrace{\mathtt{GNS}(\dens{\rho}_{\sAB}) \times \mathtt{GNS}(\dens{\rho}_{\sAC}) \times \mathtt{GNS}(\dens{\rho}_{\sBC})}_{\cpx{G}^{1}(\bdens{\rho}_{\sABC})}
	\substack{\xrightarrow{\Delta\indices{^{1}_{0}}} \\ \xrightarrow{\Delta\indices{^{1}_{1}}} \\ \xrightarrow{\Delta\indices{^{1}_{2}}} }
	\underbrace{\mathtt{GNS}(\dens{\rho}_{\sABC})}_{\cpx{G}^{2}(\bdens{\rho}_{\sABC})},
\end{align*}
where $\Delta \indices{^{j}_{k}} := (\Delta_{\cpx{G}})\indices{^{j}_{k}}$; the non-trivial differentials $d_{\cpx{G}}^{k} = \sum_{j=0}^{k-1} (-1)^{k} \Delta\indices{^{j}_{k}}: \cpx{G}^{k}(\bdens{\rho}_{\sABC}) \rightarrow \cpx{G}^{k+1}(\bdens{\rho}_{\sABC})$ act via
\begin{align*}
	d_{\cpx{G}}^{-1}:
	\left \{
		\begin{array}{ll}
			\emptyset \mapsto \lambda
		\end{array}
	\right.
	&\longmapsto
	\left \{
		\begin{array}{ll}
			\sA \mapsto \lambda \supp_{\sA}\\
			\sB \mapsto \lambda \supp_{\sB}\\
			\sC \mapsto \lambda \supp_{\sC}
		\end{array}
	\right.,\\
	d_{\cpx{G}}^{0}:	
	\left \{
		\begin{array}{ll}
			\sA \mapsto a\\
			\sB \mapsto b\\
			\sC \mapsto c
		\end{array}
	\right.
	&\longmapsto
	\left \{
		\begin{array}{ll}
			\sAB \mapsto (1_{\sA} \otimes b - a \otimes 1_{\sB}) \supp_{\sAB}\\
			\sAC \mapsto (1_{\sA} \otimes c - a \otimes 1_{\sC}) \supp_{\sAC} \\
			\sBC \mapsto (1_{\sB} \otimes c - b \otimes 1_{\sC}) \supp_{\sBC}
		\end{array}
	\right.,\\
	d_{\cpx{G}}^{1}:	
	\left \{
		\begin{array}{ll}
			\sAB \mapsto x\\
			\sAC \mapsto y\\
			\sBC \mapsto z
		\end{array}
	\right.
	&\longmapsto
	\left \{
		\begin{array}{ll}
			\sABC \mapsto \left[1_{\sA} \otimes z - \Sigma_{(\sABC, \sB)} (y \otimes 1_{\sB}) + x \otimes 1_{\sC} \right]\supp_{\sABC}
		\end{array}
	\right. ,
\end{align*}
where---in lieu of tuple notation---we are denoting elements of Cartesian products via maps out of the indexing set of the product.

\subsection{An Interpretative Aside}
In this section we take a pause to give a geometric picture of the construction of the chain complexes above. 
We will use this picture to aid in our understanding of the $k$th cohomology component associated to an $N$-partite density state (where $k \leq N-2$) as encoding non-trivial non-local correlations between operators associated to a collection subsets of size $(k+1)$: a property related to the non-factorizability of the multipartite density state in question. 

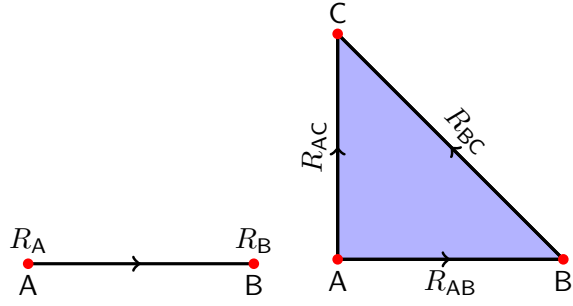
\begin{figure}
	\centering 
	\begin{tikzpicture}[very thick,decoration={markings, mark=at position 0.5 with {\arrow{>}}}]
		\coordinate [label=below:$\sA$, label=above:$R_{\sA}$] (A) at (0,0);
		\coordinate [label=below:$\sB$, label=above:$R_{\sB}$] (B) at (3,0);

		\draw[postaction={decorate}] (A) to (B);

		\fill[red]  (A) circle [radius=2pt]; 
		\fill[red]  (B) circle [radius=2pt]; 
	\end{tikzpicture}
	\begin{tikzpicture}[very thick,decoration={markings, mark=at position 0.5 with {\arrow{>}}}]
		\coordinate [label=below:$\sA$] (A) at (0,0);
		\coordinate [label=below:$\sB$] (B) at (3,0);
		\coordinate [label=above:$\sC$] (C) at (0,3);

		\draw[fill=blue!30] (A)--(B)--(C)--cycle;

		\draw[postaction={decorate}] (A) to node[sloped,below,midway] {$R_{\sAB}$} (B);

		\draw[postaction={decorate}] (B) to node[sloped,above,midway] {$R_{\sBC}$} (C);

		\draw[postaction={decorate}] (A) to node[sloped,above,midway] {$R_{\sAC}$} (C);

		\fill[red]  (A) circle [radius=2pt]; 
		\fill[red]  (B) circle [radius=2pt]; 
		\fill[red]  (C) circle [radius=2pt]; 
	\end{tikzpicture}
	\caption{Left: Representation of a 0-cochain $R$ in the computation of GNS or commutant cohomologies for a bipartite density state.
		One associates operators to the 0-simplices/nodes of a 1-simplex.
		The coboundary of $R$, which assigns an operator to the 1-simplex, is computed by taking the alternating sum of values on the boundary of the 1-simplex, with signs determined by the oriented boundary of the simplex.
		Right: situation representing a 1-cochain $R$ associated to GNS/commutant cohomologies of a tripartite density state.
		Oriented boundaries are computed by using the standard orientation of the 2-simplex on the page.
	\label{fig:cocycles_representation}} 
\end{figure}
\subsubsection{Visualization of (Co)-Chains as Sections Over Simplices \label{sec:visualization}}
The (co)boundary maps of the previous sections can be visualized geometrically.
Indeed, suppose we are handed $\bdens{\rho}_{P}$ an $N$-partite density state.
Then we begin by visualizing a standard topological $(N-1)$-simplex whose vertices are labelled by elements of the ordered set $P$.
With this labelling, faces of dimension $k$ (a.k.a. $k$-faces) of the simplex are in bijective correspondence with subsets $T \subseteq P$ of order $k+1$.
We will denote the face whose vertices are labelled by the subset $T$ by $F_{T}$; the ordering of elements of $P$ induces an orientation on each such face.

Suppose we wish to consider the cochain complex $\cpx{C}(\bdens{\rho}_{P})$, where $\cpx{C}$ is the GNS complex $\cpx{G}$ or the commutant complex $\cpx{E}$.
For $k \geq 1$, a $k$-cochain $R \in \cpx{C}^{k}(\bdens{\rho}_{P})$ is precisely the data of an assignment of an operator (living in the appropriate subspace) to the faces of dimension $k$.
To see this let $\mathtt{B}$ denote either the GNS building block $\mathtt{GNS}$ or commutant building block $\mathtt{Com}$ depending on which cochain complex we are considering; then we begin with the assignment
\begin{align}
	F_{T} &\longmapsto \mathtt{B}(\dens{\rho}_{T})
	\label{eq:face_assignment}
\end{align}
over all faces $\{F_{T}: T \subseteq P \}$.
A $k$-cochain $R$, which is a choice of $R_{T} \in \mathtt{B}(\bdens{\rho}_{T})$ for each subset $T$ with $|T| = k + 1$, is a section of this assignment over the \textit{$k$-skeleton} of the $N$-simplex: the ordered collection of dimension $k$-faces.

To understand the coboundary map $d^{k}: \cpx{C}^{k}(\bdens{\rho}_{P}) \rightarrow \cpx{C}^{k+1}(\bdens{\rho}_{P})$, we begin by trying to understand the intermediate maps
\begin{align}
	\left(\Delta_{\cpx{C}}\right) \indices{^{l}_{k}}: \cpx{C}^{k}(\bdens{\rho}_{P}) \rightarrow  \cpx{C}^{k+1}(\bdens{\rho}_{P}),
	\label{eq:coface_general}
\end{align}
for $l \in \{0, \cdots, k+1 \}$.
Indeed, fix $l$, then application of the map \eqref{eq:coface_general} can be thought of as the process of extending a section $R$ over $k$-faces to a section over on $(k+1)$ faces by:
\begin{enumerate}
	\item Supposing that the extension of $R$ to a $(k+1)$-face $F_{V}$ is given by looking at its value $R_{\partial_{l} V}$ on the $k$-face $F_{\partial_{l} V}$ (the $(l+1)$th-face of $F_{V}$ using the induced lexicographical ordering on subsets of $P$);

	\item Tensoring by $1_{V(l)}$ and then reshuffling tensor components to get an element of $\algebra{\hilb_{V}}$; 

	\item Projecting to the appropriate subspace $\mathtt{B}\left(\dens{\rho}_{V}\right) \leq \algebra{\hilb_{V}}$ by either right multiplication by $\supp_{V}$ (in the case of the GNS complex) or left/right multiplication by $\supp_{V}$ (in the case of the commutant complex).
\end{enumerate}
In order to combine the maps \eqref{eq:coface_general} into the alternating sum defining the differential, we note that the oriented boundary of $F_{V}$ is given as a disjoint union over its boundary faces with alternating choices of orientation: 
\begin{align*}
	\partial F_{V} = \coprod_{l = 0}^{k+1} (-1)^{l} F_{\partial_{l}V} 
\end{align*}
where $-F_{T}$ denotes the face $F_{T}$ with the opposite orientation to the one induced by the ordering of elements of $T$.
The signs used in the decomposition above are precisely the same signs used in the definition of the differential: 
\begin{align*}
	\left( d^{k}_{\cpx{C}} R \right)_{V} &= \sum_{l = 0}^{k+1} (-1)^{l} (\Delta \indices{^{l}_{k}} R)_{V}
\end{align*}
for all $V \subseteq P$ with $|V| = k+2$.

The picture for $k$-chains is similar: for GNS or commutant chain complexes we begin with an assignment of a vector space $\mathtt{b}(\bdens{\rho}_{T})$ to each face $F_{T}$, where $\mathtt{b}$ is one of $\mathtt{gns}$ or $\mathtt{com}$.
A $k$-chain $\Gamma \in \cpx{c}^{k}(\bdens{\rho}_{P})$ (where $\cpx{c} = \cpx{g}$ or $\cpx{e}$) is a section of this assignment over the $k$-faces; each of the maps $(\delta^{\cpx{c}})\indices{_k^j},\, j =0,\cdots, k-1$ are given by using partial traces to determine a section over the $(k-1)$-faces. 

\begin{remark}{}{}
	The above geometric picture gives an intuitive proof of the statement that commutant (co)homology must vanish for pure factorizable states. 
	Indeed, note that if $\bdens{\rho}_{P}$ is pure and factorizable, then $\mathtt{com}(\bdens{\rho}_{T}) \cong \mathbb{C}$.
	The commutant \textit{chain} complex is just the augmented simplicial chain complex, with $\mathbb{C}$-coefficients, of the standard $(|P|-1)$-simplex. This is the complex whose $k$th chain component $0 \leq k \leq (|P|-1)$ is the $\mathbb{C}$-linear vector space spanned by formal linear combinations of $k$-faces, with boundary maps $\partial_{k}$ for $0 \leq k \leq |P|-1$ defined by formal linear combinations of oriented boundaries of faces, and with $\partial_{-1}$ from $\mathbb{C}$ (in degree $-1$) to the degree 0 component sending $1 \in \mathbb{C}$ to the formal sum of all 0-simplices.
	For the standard $|P|$-simplex; its homology is the reduced simplicial homology of the standard $(|P|-1)$-simplex with $\mathbb{C}$-coefficients which is vanishing in all degrees as the $(|P|-1)$-simplex is contractible.
	Similarly the dual commutant cochain complex is then the augmented simplicial cochain complex where $k$-cochains are $\mathbb{C}$-linear functions on the set of $k$-faces of the $(|P|-1)$-simplex.
\end{remark}

\subsubsection{Cohomology classes and Multipartite Correlations \label{sec:classes_and_correlations}}
Let us now try to understand in what sense our cohomologies might encapsulate non-local correlations.
Recall that $k$-cochains (elements of $\cpx{C}^{k}(\bdens{\rho}_{P})$) can be understood as sections of the assignment
\begin{align*}
	F_{T} \mapsto \mathtt{B}(F_{T})
\end{align*}
over the $k$-skeleton (the union of all $k$-faces) of the $(|P|-1)$-simplex.
With this in mind, a $k$-cocycle $R \in \cpx{C}^{k}(\bdens{\rho}_{P})$ is a section over the $k$-skeleton of the $(|P|-1)$-simplex such that, for each $(k+1)$-face $F_{V}$, the operators on the boundary of $F_{V}$ satisfy the relation:
\begin{align}
	\sum_{l = 0}^{k+1} (-1)^{l} \underline{R_{\partial_{l}V}} \overset{\mathtt{B}}{\sim}_{V} 0
	\label{eq:cocycle_relation}
\end{align}
where:
\begin{itemize}
	\item $\underline{R_{\partial_{l}V}}$ is a shorthand notation for tensoring by the support projection and reshuffling:
		\begin{align*}
			\underline{R_{\partial_{l}V}}  := \Sigma_{(l,V)} (R_{\partial_{l}V} \otimes \supp_{V(l)}) \in \algebra{\hilb_{V}},
		\end{align*}
	\item and the equivalence relation $\overset{\mathtt{B}}{\sim}_{V}$ is an equivalence relation on elements of $\algebra{\hilb_{V}}$ defined by: 
		\begin{align*}
			r \overset{\mathtt{B}}{\sim}_{V} m \Leftrightarrow 
			\left \{	
				\begin{array}{ll}
					0 =(r-m)\supp_{V}, &  \text{if $\mathtt{B} = \mathtt{GNS}$ (used when computing $\cpx{G}(\bdens{\rho}_{P})$)}\\
					0 =\supp_{V} (r - m) \supp_{V}, &  \text{if $\mathtt{B} = \mathtt{Com}$ (used when computing $\cpx{E}(\bdens{\rho}_{P})$)}
				\end{array}
			\right. ,
		\end{align*}
		for any $r,m \in \algebra{\hilb_{V}}$.
\end{itemize}
As the definition suggests, the relation $\sim^{\mathtt{B}}_{V}$ depends on the density state $\dens{\rho}_{V}$ (up to support equivalence), but we suppress this dependence to keep our notation reasonable.
For those who prefer extending by the identity operator instead of support projections, it is worth noting that:
\begin{align*}
	\underline{R_{\partial_{l}V}} \overset{\mathtt{B}}{\sim}_{V} \Sigma_{(l,V)} (R_{\partial_{l}V} \otimes 1_{V(l)})
\end{align*}
so we can think of \eqref{eq:cocycle_relation} as equivalent to the statement that: 
\begin{align*}
	\sum_{l = 0}^{k} (-1)^{l} \Sigma_{(l,V)} (R_{\partial_{l}V} \otimes 1_{V(l)}) \overset{\mathtt{B}}{\sim}_{V} 0.
\end{align*}

\begin{remark}{}{}
	Let $r,m \in \algebra{\hilb_{V}}$.
	Note that $r \overset{\mathtt{GNS}}{\sim}_{V} m$ is equivalent to the statement that $r$ is right essentially equivalent to $m$: c.f.\ Def.~\ref{def:right_essential_equivalence}.
	In an attempt to give one possible interpretation to $\overset{\mathtt{Com}}{\sim}_{V}$ we note that $r \overset{\mathtt{Com}}{\sim}_{V}m$ is equivalent to the statement that the right multiplication action of $r$ and $m$ on elements of $\mathtt{GNS}(\dens{\rho}_{V}) = \algebra{\hilb_{V}} \supp_{V}$ (which might be thought of as the space of canonical representatives of right essential equivalence classes) are indistinguishable up to right essential equivalence, i.e.\ 
	\begin{align*}
		a r \overset{\mathtt{GNS}}{\sim}_{V} a m
	\end{align*}
	for all $a \in \mathtt{GNS}(\dens{\rho}_{V})$.
\end{remark}
We can think of the relation \eqref{eq:cocycle_relation} in the following way: suppose the $(k+1)$-face $V$ is fixed and we understand the value of the $k$-cocycle $R$ on every boundary face of $V$ except one of them; let this be the $m$th face (where $m \in \{0,\cdots,k + 1 \}$), then \eqref{eq:cocycle_relation} allows us to determine this missing value (tensored by a support projection) up to the equivalence relation $\overset{\mathtt{B}}{\sim}_{V}$:
\begin{align*}
	\underline{R_{\partial_{m} V}} \overset{\mathtt{B}}{\sim}_{V} \sum_{l \neq m} (-1)^{l + m} \underline{R_{\partial_{l}V}}.
\end{align*}
For instance, referring to Fig.~\ref{fig:cocycles_representation}: 
\begin{enumerate}
	\item If $R$ is a 0-cycle associated to the GNS/commutant cohomology of a bipartite density state $\bdens{\rho}_{\sAB}$, then:
		\begin{align}
			\underline{R_{\sA}} \overset{\mathtt{B}}{\sim}_{\sAB} \underline{R_{\sB}};
			\label{eq:bipartite_cocycle_relation}
		\end{align}
		In the case that $\cpx{C} = \cpx{G}$, by Thm.~\ref{thm:kernel_covariance}, the relation \eqref{eq:bipartite_cocycle_relation} is equivalent to the statement that $\Cov(R_{\sA}, R_{\sB}) = \Var_{\sA}(R_{\sA}) = \Var_{\sB}(R_{\sB})$ and $\Tr[\dens{\rho}_{\sA} R_{\sA}] = \Tr[\dens{\rho}_{\sB} R_{\sB}]$. 

	\item if $R$ is a 1-cycle associated to the GNS/commutant cohomology of a tripartite density state $\bdens{\rho}_{\sABC}$, we have:
		\begin{align*}
			\underline{R_{\sAC}} \overset{\mathtt{B}}{\sim}_{\sABC} \underline{R_{\sAB}} + \underline{R_{\sBC}}.	
		\end{align*}
\end{enumerate}
In general, the right hand side of \eqref{eq:cocycle_relation} includes operators associated to subsystems that include the tensor factor $V(m)$---a tensor factor that is not included in $\partial_{m}V$.
Thus, the non-zero elements of the set 
\begin{align*}
	\mathrm{Sol}_{V}(\bdens{\rho}_{P}) := \left\{R \in \prod_{l=0}^{k+1} \mathtt{B}\left(\dens{\rho}_{\partial_{l} V} \right) \colon \sum_{l = 0}^{k+1} (-1)^{l} \underline{R_{\partial_{l}V}} \overset{\mathtt{B}}{\sim}_{V} 0  \right\}
\end{align*}
can be interpreted as those sections over the face $V$ that exhibit \textit{possible} non-local correlations between $\overset{\mathtt{B}}{\sim}_{V}$ equivalence classes of elements in the collection $(\underline{R_{\partial_{l}V}})_{l = 0}^{k+1}$.  We emphasize the word ``possible" as it might happen that the relation \eqref{eq:cocycle_relation} does not convey any interesting information about the multipartite state in question.
For instance:
\begin{enumerate}
	\item When dealing with a bipartite density state $\bdens{\rho}_{\sAB}$: the pair of operators $(\supp_{\sA}, \supp_{\sB})$ satisfies:
		\begin{align*}
			\underline{\supp_{\sA}} = \supp_{\sA} \otimes \supp_{\sB} = \underline{\supp_{\sB}}
		\end{align*}
		so, regardless of what the density state $\dens{\rho}_{\sAB}$ associated to the full system $\sAB$ might be, we have: 
		\begin{align}
			\underline{\supp_{\sA}} \overset{\mathtt{B}}{\sim}_{\sAB} \underline{\supp_{\sB}}.
			\label{eq:bipartite_identity_relation}
		\end{align}
		However, this relation does not convey any non-local information between tensor factors: thinking of $\supp_{\sX}$ as a ``constant" random variable, it requires the collaboration of zero observers/tensor factors to measure; moreover, its variance with respect to the reduced density state $\dens{\rho}_{\sX}$ is vanishing and the value of its measurement is already known.

	\item Suppose we are studying cochain complexes associated to a tripartite state on the (ordered) set of tensor factors $(\sA, \sB, \sC)$.
		Then for any 0-cochain $F$, define its coboundary $R = d^{1}_{\cpx{C}} F$; because every coboundary is a cocycle $R$ must satisfy \eqref{eq:cocycle_relation}. However, this relation descends from the equality of operators 
		\begingroup
		\renewcommand{\arraystretch}{1.5}
		\begin{align*}
			\begin{array}{llc}
				\underbrace{\supp_{\sA} \otimes \supp_{\sB} \otimes F_{\sC} - F_{\sA} \otimes \supp_{\sB} \otimes \supp_{\sC}}_{\underline{R_{\sAC}}} & = & 
				\underbrace{\supp_{\sA} \otimes F_{\sB} \otimes \supp_{\sC} - F_{\sA} \otimes \supp_{\sB} \otimes \supp_{\sC}}_{\underline{R_{\sAB}}} \\ 
				{} & {} & + \\
				{} & {} & \underbrace{\supp_{\sA} \otimes \supp_{\sB} \otimes F_{\sC}  - \supp_{\sA} \otimes F_{\sB} \otimes \supp_{\sC}}_{\underline{R_{\sBC}}}.
			\end{array}
		\end{align*}
		\endgroup
		It is difficult to interpret this equality as some profound non-local relation between operators.
\end{enumerate}

If we are interested in non-trivial non-local correlations, we wish to quotient out by the subspace spanned by the ``trivial" elements of $\mathrm{Sol}(\bdens{\rho}_{P})$: those that do not exhibit any correlations inherent to the non-factorizability of $\bdens{\rho}_{P}$. 
As suggested by the examples such trivial elements seem to be related to those solutions to \eqref{eq:cocycle_relation} that descend from an \textit{equality} of operators
\begin{align}
	\sum_{l = 0}^{k+1} (-1)^{l} \underline{R_{\partial_{l}V}} = 0.
	\label{eq:cocycle_relation_trivial}
\end{align}
To make this statement more precise and convincing, begin by letting $\bdens{\rho}_{P} = ((\hilb_{p})_{p \in P}, \dens{\rho}_{P})$ be an arbitrary multipartite density state; using the reduced density states $\dens{\rho}_{\{p\}} := \Tr_{P \backslash \{P\}}(\dens{\rho}_{p})$ on the one element subsets $\{p \} \subseteq P$ we define the associated ``fully factorizable" multipartite density state\footnote{Using the notation developed below in Def.~\ref{def:multipartite_state_tensor}, we have $\twid{\bdens{\rho}}_{P} = \bigotimes_{p \in P} \bdens{\rho}_{\{p\}}$.}
\begin{align}
	\twid{\bdens{\rho}}_{P} = \left((\hilb_{p})_{p \in P}, \bigotimes_{p \in P} \dens{\rho}_{\{p\}}\right).
	\label{eq:fully_factorizable_form}
\end{align}
The assignment \eqref{eq:face_assignment} associated to $\twid{\bdens{\rho}}_{P}$ is given by $F_{T} \mapsto \mathtt{B}\left(\bigotimes_{t \in T} \dens{\rho}_{\{t\}}\right)$.
Suppose $R$ is a section of such an assignment over the boundary faces of a fixed face $F_{V}$; using the fact that our support projections factorize as $\supp_{\twid{\dens{\rho}}_{T}} = \bigotimes_{t \in T} \supp_{\dens{\rho}_{\{t\}}}$ for any $T \subseteq V$, it is straightforward to show that $R$ satisfies the condition \eqref{eq:cocycle_relation} if and only if the equality \eqref{eq:cocycle_relation_trivial} is true.
That is:
\begin{align*}
	\mathrm{Sol}_{V}(\twid{\bdens{\rho}}_{P}) = \left\{R \in \prod_{l=0}^{k+1} \mathtt{B}\left(\bigotimes_{m \neq l} \dens{\rho}_{V(m)} \right) \colon \sum_{l = 0}^{k+1} (-1)^{l} \underline{R_{\partial_{l}V}} = 0  \right\}.
\end{align*}
All elements of this set should be considered ``trivial" as, $\twid{\bdens{\rho}}_{P}$ is fully factorizable; so no operators should exhibit non-local correlations due to non-factorizability.

Now let us return to studying sections of $F_{T} \mapsto \mathtt{B}(\dens{\rho}_{T})$ over boundary faces of the face $F_{V}$ for the assignment given by the arbitrary density state $\bdens{\rho}_{P}$.
By the compatibility of supports lemma (Lem.~\ref{lem:compat_of_supports}), we have the inclusion: 
\begin{align*}
	\mathtt{B}(\dens{\rho}_{T}) \subseteq \mathtt{B}\left(\bigotimes_{t \in T} \dens{\rho}_{\{t\}}\right) 
\end{align*}
for any subset $T \subseteq P$.
Thus, we can always lift sections of the assignment associated to an arbitrary multipartite density state $\bdens{\rho}_{P}$ to sections of the assignment of its fully factorized form $\twid{\bdens{\rho}}_{P}$.
Hence, we can make sense of the intersection: 
\begin{align*}
	\mathrm{Sol}_{V}(\bdens{\rho}_{P}) \cap \mathrm{Sol}_{V}(\twid{\bdens{\rho}}_{P}),
\end{align*}
whose elements should be considered to be trivial: any element of this intersection is \textit{equal} to an element of $\mathrm{Sol}_{V}(\twid{\bdens{\rho}}_{P})$, whose elements cannot be indicators of non-factorizability/non-local correlations because $\twid{\bdens{\rho}}_{P}$ is factorizable. 
However, comparing sections via the relations $\sim^{\mathtt{B}}_{T},\, T \subseteq P$ is more natural than comparison by equality; after all, elements of the subspace $\mathtt{B}(\dens{\rho}_{V}) \leq \algebra{\hilb_{V}}$ are canonical representatives of $\overset{\mathtt{B}}{\sim}_{V}$ equivalence classes of elements in $\algebra{\hilb_{V}}$.
So really, it is more natural to consider elements of 
\begin{align}
	\mathrm{Triv}_{V}(\bdens{\rho}_{P}) :=  \left\{R \in \mathrm{Sol}(\bdens{\rho}_{P}) \colon \text{$\exists Q \in \mathrm{Sol}(\twid{\bdens{\rho}}_{P})$ such that $R_{\partial_{l}V} \overset{\mathtt{B}}{\sim}_{\partial_{l}V} Q_{\partial_{l}V}$ for $0 \leq l \leq k+1$} \right\}
	\label{eq:Triv_V_def}
\end{align}
to be considered trivial.
By the discussion above, these are precisely the solutions in $\mathrm{Sol}(\bdens{\rho}_{P})$ that ``descend" (in the sense of $\overset{\mathtt{B}}{\sim}_{T}$-equivalence) from solutions of \eqref{eq:cocycle_relation_trivial}.

\begin{remark}[label=rmk:triv_simplification]{}{}
	It is worth noting that the condition that $R \in \mathrm{Sol}(\bdens{\rho}_{P})$ in the definition \eqref{eq:Triv_V_def} is extraneous: using the compatibility of supports lemma one can show that any section $R$ of the assignment $F_{T} \mapsto \mathtt{B}(\bdens{\rho}_{T})$ over the face $V$ is an element of $\mathrm{Sol}(\bdens{\rho}_{P})$ if there exists  $Q \in \mathrm{Sol}(\twid{\bdens{\rho}}_{P})$ such that $R_{\partial_{l} V} \overset{\mathtt{B}}{\sim}_{\partial_{l}V} Q_{\partial_{l} V}$ for $0 \leq l \leq k+1$.
	As a result, we can write:
	\begin{align*}
		\mathrm{Triv}_{V}(\bdens{\rho}_{P}) =  \left\{R \in \prod_{l=0}^{k+1} \mathtt{B}\left(\dens{\rho}_{\partial_{l} V} \right) \colon \text{$\exists Q \in \mathrm{Sol}(\twid{\bdens{\rho}}_{P})$ s.t. $R_{\partial_{l}V} \overset{\mathtt{B}}{\sim}_{\partial_{l}V} Q_{\partial_{l}V},\,\forall l \in \{0,\cdots,k\}$} \right\}.
	\end{align*}
\end{remark}
Thus, if we are interested in studying sections over the boundary of a fixed face $F_{V}$ that exhibit non-trivial, non-local correlations due to non-factorizability, we wish to study non-trivial representatives of elements of the quotient vector space:
\begin{align*}
	\mathrm{Cor}_{V}(\bdens{\rho}_{P}) := \mathrm{Sol}_{V}(\bdens{\rho}_{P})/\mathrm{Triv}_{V}(\bdens{\rho}_{P}).
\end{align*}
However, our primary interest is in $k$-cocycles, which define sections over the entire $k$-skeleton (not just a single face) and satisfy \eqref{eq:cocycle_relation} over \textit{all} faces; that is, we want to look at the space
\begin{align*}
	\ker(d_{\cpx{C}}^{k}) = \left \{R \in \prod_{|T| = k+1} \mathtt{B}(\dens{\rho}_{V}): \text{$\sum_{l = 0}^{k+1} (-1)^{l} \underline{R_{\partial_{l}V}} \overset{\mathtt{B}}{\sim}_{V} 0$ for all $(k+1)$-faces $F_{V}$} \right \}.
\end{align*}
Once again, we would like to identify the ``trivial elements" of this subspace of operators: those that do not convey any information about non-factorizability/non-local correlations. 
These should be the operators that do not exhibit correlations along the boundaries of \textit{any} $(k+1)$-face $F_{V}$; to define this notion formally, note that for every $|V| = k + 2$, the projection map
\begin{align*}
	\mathrm{pr}_{V}: \prod_{|T| =k + 1} \mathtt{B}_{T} \rightarrow \prod_{l=0}^{k+1} \mathtt{B}_{\partial_{l} V} 
\end{align*}
restricts to a map
\begin{align*}
	\mathrm{pr}_{V}|_{\ker(d^{k}_{\cpx{C}})}: \ker(d^{k}_{\cpx{C}}) \rightarrow \mathrm{Sol}_{V}(\bdens{\rho}_{P});
\end{align*}
so, composing with the quotient map $\mathrm{Sol}_{V}(\bdens{\rho}_{P}) \rightarrow \mathrm{Cor}_{V}(\bdens{\rho}_{P})$ we have a map 
\begin{align*}
	\mathrm{cor}_{V}: \ker(d^{k}_{\cpx{C}}) \rightarrow \mathrm{Cor}_{V}(\bdens{\rho}_{P}).
\end{align*}
Elements of $\ker(d^{k})$ that are in the kernel of $\mathrm{cor}_{V}$ have trivial correlations along the face $V$.
Thus, the subspace of trivial elements of $\ker(d^{k}_{\cpx{C}})$ can be defined as: 
\begin{align*}
	\mathrm{Triv}_{k}(\bdens{\rho}_{P}) := \left \{R \in \ker(d^{k}_{\cpx{C}}) : \text{$\mathrm{cor}_{V}(R) = 0,\, \forall V \subseteq P$ with $|V| = k+2$} \right\}.
\end{align*}
This can be explicitly written as (c.f.\ Remark~\ref{rmk:triv_simplification} for the second equality)
\begin{align*}
	\mathrm{Triv}^{k}(\bdens{\rho}_{P}) &=  \left\{R \in \ker(d^{k}) \colon \text{$\exists Q \in \ker(\twid{d}^{k})$ s.t. $R_{T} \overset{\mathtt{B}}{\sim}_{T} Q_{T},\, \forall T \subseteq P$ with $|T| = k+1$} \right\}\\
										&=	\left\{R \in \prod_{|T|=k+1} \mathtt{B}(\dens{\rho}_{T}) \colon \text{$\exists Q \in \ker(\twid{d}^{k})$ s.t. $R_{T} \overset{\mathtt{B}}{\sim}_{T} Q_{T},\, \forall T \subseteq P$ with $|T| = k+1$} \right\},
\end{align*}
where, to remove some notational ambiguity, we are letting $d_{\cpx{C}}$ denote the coboundary associated to the complex $\cpx{C}(\bdens{\rho}_{P})$ and $\twid{d}_{\cpx{C}}$ denote the coboundary associated to the complex $\cpx{C}(\twid{\bdens{\rho}}_{P})$.
As it so happens, elements of $\mathrm{Triv}^{k}(\bdens{\rho}_{P})$ are precisely coboundaries.
\begin{proposition}[label=prop:triviality_means_coboundary]{}{}
	$\mathrm{Triv}^{k}(\bdens{\rho}_{P}) = \image \left(d_{\cpx{C}}^{k-1}: \cpx{C}^{k-1}(\bdens{\rho}_{P}) \rightarrow \cpx{C}^{k}(\bdens{\rho}_{P})  \right)$ for $k < |P|-1$.
\end{proposition}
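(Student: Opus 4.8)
The plan is to prove the equality by two inclusions, handling the GNS and commutant cases uniformly (writing $\mathtt{B}$, $\overset{\mathtt{B}}{\sim}_{T}$, and $\mathrm{proj}_{T}$ for the common data), and to isolate the single substantive input: the vanishing of the cohomology of the fully factorized complex below top degree. First I would record the structural consequence of the compatibility of supports (Lem.~\ref{lem:compat_of_supports}): for every $T\subseteq P$ one has $\supp_{T}\le\twid{\supp}_{T}:=\bigotimes_{t\in T}\supp_{\{t\}}$, hence $\twid{\supp}_{T}\supp_{T}=\supp_{T}=\supp_{T}\twid{\supp}_{T}$ and $\mathtt{B}(\dens{\rho}_{T})\subseteq\mathtt{B}(\twid{\dens{\rho}}_{T})$. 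Thus each component $\cpx{C}^{j}(\bdens{\rho}_{P})$ embeds in $\cpx{C}^{j}(\twid{\bdens{\rho}}_{P})$, the two coboundaries $d_{\cpx{C}}$ and $\twid{d}_{\cpx{C}}$ differing only in whether the terminal projection uses $\supp$ or $\twid{\supp}$.

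For the inclusion $\image(d_{\cpx{C}}^{k-1})\subseteq\mathrm{Triv}^{k}(\bdens{\rho}_{P})$: given $R=d_{\cpx{C}}^{k-1}S$ with $S\in\cpx{C}^{k-1}(\bdens{\rho}_{P})$, I would set $Q:=\twid{d}_{\cpx{C}}^{k-1}S$, viewing $S$ inside $\cpx{C}^{k-1}(\twid{\bdens{\rho}}_{P})$ via the embedding above. Then $Q\in\ker(\twid{d}^{k})$ automatically, and a one-line computation using $\twid{\supp}_{T}\supp_{T}=\supp_{T}$ shows $\mathrm{proj}_{T}(R_{T})=\mathrm{proj}_{T}(Q_{T})$, i.e.\ $R_{T}\overset{\mathtt{B}}{\sim}_{T}Q_{T}$ for every $T$; since $R$ is a coboundary it lies in $\ker(d^{k})$, so $R\in\mathrm{Triv}^{k}(\bdens{\rho}_{P})$.

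The reverse inclusion is the hard one, and it is where the hypothesis $k<|P|-1$ enters. I would first establish that $H^{k}[\cpx{C}(\twid{\bdens{\rho}}_{P})]=0$ for $k<|P|-1$: since $\twid{\bdens{\rho}}_{P}=\bigotimes_{p\in P}\bdens{\rho}_{\{p\}}$ is a tensor product of unipartite states, the tensor-product formula (Thm.~\ref{thm:cochain_fact}) together with the K\"{u}nneth theorem and the unipartite computation (Prop.~\ref{prop:unipartite}, whose cohomology sits in degree $0$) forces the cohomology of $\cpx{C}(\twid{\bdens{\rho}}_{P})$ to be concentrated in the top degree $|P|-1$; equivalently $\ker(\twid{d}^{k})=\image(\twid{d}^{k-1})$ in the relevant range. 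Given $R\in\mathrm{Triv}^{k}(\bdens{\rho}_{P})$ with witness $Q\in\ker(\twid{d}^{k})$ satisfying $R_{T}\overset{\mathtt{B}}{\sim}_{T}Q_{T}$, I would write $Q=\twid{d}^{k-1}S$ with $S\in\cpx{C}^{k-1}(\twid{\bdens{\rho}}_{P})$ and then push $S$ down into the smaller complex by $S'_{W}:=\mathrm{proj}_{W}(S_{W})\in\mathtt{B}(\dens{\rho}_{W})$, giving $S'\in\cpx{C}^{k-1}(\bdens{\rho}_{P})$. The remaining verification $d_{\cpx{C}}^{k-1}S'=R$ reduces, face by face $T$, to the identities $R_{T}=\mathrm{proj}_{T}(R_{T})=\mathrm{proj}_{T}(Q_{T})$ and $\underline{\supp_{\partial_{l}T}}\,\supp_{T}=\supp_{T}$ (again Lem.~\ref{lem:compat_of_supports}), using that each reshuffle $\Sigma_{(T,l)}$ is an algebra homomorphism, so that the insertion of $\supp_{\partial_{l}T}$ inside $\underline{S_{\partial_{l}T}}$ is absorbed upon postcomposition with $\mathrm{proj}_{T}$.

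The main obstacle is this cohomology-vanishing step for the fully factorized complex: the rest is bookkeeping with support projections, but the vanishing inputs the tensor/K\"{u}nneth machinery and is exactly what restricts the statement to $k<|P|-1$, since in top degree the fully factorized complex has nontrivial cohomology and the identity fails. I would take care to confirm this vanishing holds with \emph{no} purity hypothesis on $\bdens{\rho}_{P}$---the unipartite cohomology of each $\bdens{\rho}_{\{p\}}$ lives in degree $0$ whether or not the reduced state is pure---and to check that the K\"{u}nneth degree shift lands the surviving cohomology precisely in degree $|P|-1$, so that $\ker(\twid{d}^{k})=\image(\twid{d}^{k-1})$ is valid for the whole range $k<|P|-1$.
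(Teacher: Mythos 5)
Your proposal is correct and follows essentially the same route as the paper's proof in Appendix~E: both directions are handled by embedding into the fully factorized complex $\cpx{C}(\twid{\bdens{\rho}}_{P})$, invoking the vanishing of its cohomology below top degree (the paper cites Thm.~\ref{thm:support_fact_multipartite_cohomologies}, which is itself proved by exactly the tensor-product/K\"{u}nneth/unipartite argument you reconstruct), and then transporting the primitive $S$ down to $\cpx{C}^{k-1}(\bdens{\rho}_{P})$ by applying support projections, with the compatibility-of-supports lemma absorbing the discrepancies between $\supp_{T}$ and $\twid{\supp}_{T}$. Your observation that no purity hypothesis is needed and that the restriction $k<|P|-1$ comes precisely from the top-degree cohomology of the factorized complex is also consistent with the paper.
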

\begin{proof}	
	The proof is provided in Appendix.~\ref{app:proof_prop_triviality_means_coboundary}.
	The proof involves a reference to Thm.~\ref{thm:support_fact_multipartite_cohomologies} (a result in \S\ref{sec:Kunneth} below).
\end{proof}
Summarizing, for any $k \leq |P|-2$, we have:
\begin{align*}
	H^{k} \left[ \cpx{C}(\bdens{\rho}_{P}) \right] &= \ker(d_{\cpx{C}}^{k})/\image(d_{\cpx{C}}^{k-1})\\[5pt]
													 &=\left\{\text{\parbox{15.5em}{Sections of $F_{T} \mapsto \mathtt{B}(\bdens{\rho}_{T})$ over the $k$-skeleton of the $(|P|-1)$-simplex that exhibit \textit{possible} non-local correlations along each face: i.e.\ solutions to \eqref{eq:cocycle_relation}.}} \right\} \Big / \left\{\text{\parbox{15em}{Trivial solutions to \eqref{eq:cocycle_relation}: i.e.\ those solutions that do not encode correlations due to non-factorizability of $\bdens{\rho}_{P}$ along the boundaries of any face.}} \right\}.\\
\end{align*}

Moreover, if $R$ is a representative of a non-zero class $[R] \in H^{k}[\cpx{C}(\bdens{\rho}_{P})]$, then $\mathrm{cor}_{V}(R) \neq 0$ for at least one $V \subseteq P$ with $|V|=k+2$: i.e.\ representatives of non-zero equivalence classes of the $k$th cohomology groups are those sections over the assignment $F_{T} \rightarrow \mathtt{B}(\bdens{\rho}_{P})$ that exhibit non-trivial, non-local correlations along the boundary of at least one $(k+1)$-face.

\subsubsection{The Highest Cohomology Component \label{sec:highest_multipartite_cohomology}}
So far we have neglected to offer an interpretation to the highest (possibly) non-trivial cohomology component $H^{|P|-1} \left[ \cpx{C}(\bdens{\rho}_{P}) \right]$. 
At the time of writing the author does not have an interpretation as satisfying for the components of degree $\leq |P|-1$, however it is worthwhile to make some observations.
First note that
\begin{align*}
	\ker(d_{\cpx{C}}^{|P|-1}) = \cpx{C}^{k}(\bdens{\rho}_{P}) = \mathtt{B}(\bdens{\rho}_{P}).
\end{align*}
That is, every section over the $(|P|-1)$-face defines a cocycle. 
Moreover, we have:
\begin{align*}
	\mathrm{Image}(d^{|P|-2}) = \mathrm{span}_{\mathbb{C}} \left( \bigcup_{l = 0}^{|P|-1} \left \{\underline{F_{l}} \supp_{P} \colon F_{l} \in \mathtt{B}(\bdens{\rho}_{\partial_{l} P}) \right \} \right)
\end{align*}
As a result, any $R \in \mathtt{B}(\bdens{\rho}_{P})$ passes to the zero class in $H^{|P|-1}\left[ \cpx{C}(\bdens{\rho}_{P}) \right]$ if
\begin{align*}
	R \overset{\mathtt{B}}{\sim}_{P} \sum_{l = 0}^{|P|-1} \Sigma_{P,l} \left[F_{l} \otimes 1_{l} \right]
\end{align*}
for any tuple of operators $(F_{l})_{l=1}^{|P|-1}$ with $F_{l} \in \mathtt{B}(\bdens{\rho}_{\partial_{l} P})$.
In other words, trivial cohomology classes consist of those $|P|$-body operators $R$ that, up to $\overset{\mathtt{B}}{\sim}_{P}$-equivalence, can be reduced to a sum of of lifts of $N-1$-body operators.
A generic operator is certainly not equal to such a linear combination, and in the fully factorizable situation, $\overset{\mathtt{B}}{\sim}_{P}$-equivalence is synonymous with equality; so in this sense, the dimension of the component of degree $|P|-1$ can be indicative of ``how badly" $\dens{\rho}_{P}$ is \textit{not} factorizable (reflected in the ability to reduce $N$-body operators up to $N-1$ body operators using our equivalence relation).
This phenomenon was already seen at the end of \S\ref{sec:pure_state_schmidt}, where it is shown that the dimension of the first GNS cohomology component for a pure bipartite state is a (very coarse) measure of how far that state is from being maximally entangled.\footnote{On the other hand, the dimension of pure bipartite commutant cohomology is always zero.}

\subsection{Basic Properties of the Multipartite Complexes \label{sec:multipartite_properties}}
The multipartite complexes obey natural generalizations of the properties outlined for the bipartite situation.  At the risk of sounding repetitive, we state these generalizations and the appropriate definitions that go along with them for the sake of clarity.

\subsubsection{Descent to Support Equivalence Classes}

\begin{definition}{}{}
	Two multipartite density states $(P,(\hilb_{p})_{p \in P}, \dens{\rho})$ and $(P,(\hilb_{p})_{p \in P}, \dens{\rho}')$ are \newword{support equivalent} if $\supp_{\dens{\rho}} = \supp_{\dens{\rho}'}$ (i.e.\ $\dens{\rho}$ is support equivalent to $\dens{\rho}'$ in the sense of Def.~\ref{def:support_equiv_vanilla}).
\end{definition}

\begin{proposition}{}{}
	The multipartite (co)chain complexes only depend on support equivalence classes: if $\bdens{\rho}_{P}$ and $\bdens{\varphi}_{P}$ are support equivalent then $\cpx{G}(\bdens{\rho}_{P}) = \cpx{G}(\bdens{\varphi}_{P})$ and $\cpx{E}(\bdens{\rho}_{P}) = \cpx{E}(\bdens{\varphi}_{P})$ (and similarly for the corresponding chain complexes). 
\end{proposition}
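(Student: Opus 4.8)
The plan is to reduce everything to two facts: that the complexes are assembled out of nothing but the support projections of the reduced states, and that support equivalence of the top states forces support equivalence of every reduction. Since the claimed conclusion is literal equality $\cpx{G}(\bdens{\rho}_{P}) = \cpx{G}(\bdens{\varphi}_{P})$ (not merely isomorphism), I want to argue that the two constructions produce the very same subspaces and the very same maps.

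First I would observe that both $\cpx{G}(\bdens{\rho}_{P})$ and $\cpx{E}(\bdens{\rho}_{P})$ are manifestly functions of the family of support projections $\{\supp_{\dens{\rho}_{T}}\}_{T \subseteq P}$ alone. Indeed, the components $\cpx{G}^{k}(\bdens{\rho}_{P}) = \prod_{|T|=k+1}\mathtt{GNS}(\dens{\rho}_{T})$ and $\cpx{E}^{k}(\bdens{\rho}_{P}) = \prod_{|T|=k+1}\mathtt{Com}(\dens{\rho}_{T})$ are built from the blocks $\mathtt{GNS}(\dens{\rho}_{T}) = \supp_{\dens{\rho}_{T}} \algebra{\hilb_{T}}$ and $\mathtt{Com}(\dens{\rho}_{T}) = \supp_{\dens{\rho}_{T}} \algebra{\hilb_{T}} \supp_{\dens{\rho}_{T}}$, each of which depends only on $\supp_{\dens{\rho}_{T}}$ as an embedded subspace of $\algebra{\hilb_{T}}$. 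Likewise the coboundaries $d^{k} = \sum_{j} (-1)^{j} (\Delta)\indices{^{k}_{j}}$ are defined by tensoring by identities, reshuffling by the fixed maps $\Sigma_{(T,j)}$, and projecting onto $\mathtt{B}(\dens{\rho}_{T})$ via right-multiplication (resp.\ compression) by $\supp_{\dens{\rho}_{T}}$; none of these operations sees anything about $\dens{\rho}$ beyond the projections $\supp_{\dens{\rho}_{T}}$. The same holds verbatim for the chain complexes $\cpx{g}$ and $\cpx{e}$, whose boundary maps are restrictions of partial traces to the support-determined subspaces $\mathtt{gns}(\dens{\rho}_{T}), \mathtt{com}(\dens{\rho}_{T})$. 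Hence it suffices to show that support equivalence of $\bdens{\rho}_{P}$ and $\bdens{\varphi}_{P}$ implies $\supp_{\dens{\rho}_{T}} = \supp_{\dens{\varphi}_{T}}$ for every $T \subseteq P$.

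Next I would establish this descent of support equivalence to all reductions, the multipartite analogue of Lem.~\ref{lem:support_equivalence_partial_trace}. The cleanest route is to reprove nothing but to invoke the bipartite lemma directly: fix $T \subseteq P$ and regard $\bigotimes_{p \in P} \hilb_{p}$ as a \emph{bipartite} Hilbert space with the two factors $\hilb_{T}$ and $\hilb_{P \backslash T}$. Under this coarsening the multipartite reduction $\dens{\rho}_{T} = \Tr_{P \backslash T}(\dens{\rho})$ is exactly the partial trace of $\dens{\rho}$ over the second factor in the bipartite sense (and similarly for $\dens{\varphi}$). Since $\supp_{\dens{\rho}} = \supp_{\dens{\varphi}}$ by hypothesis, the two states are support equivalent as bipartite states on $\hilb_{T} \otimes \hilb_{P \backslash T}$, and Lem.~\ref{lem:support_equivalence_partial_trace} yields $\supp_{\dens{\rho}_{T}} = \supp_{\dens{\varphi}_{T}}$; the bipartite lemma is insensitive to any internal tensor structure of its two factors, so it applies without modification. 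Running over all $T$ and combining with the first paragraph gives that the components, the coboundaries, and hence the cohomologies are literally identical.

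The argument carries no serious obstacle; the only point requiring care is the descent step, and the subtlety there is purely bookkeeping: one must check that the multipartite reduction $\dens{\rho}_{T}$ genuinely coincides with the bipartite partial trace under the coarsening $P = T \sqcup (P \backslash T)$. This is immediate from Def.~\ref{def:multipartite_induced_data}, since $\Tr_{P \backslash T}$ is by definition the trace over $\hilb_{P \backslash T}$ and iterated partial traces compose. If one prefers a self-contained proof, the alternative is to show directly that tracing over a single tensor factor preserves support equivalence---repeating the computation of Lem.~\ref{lem:support_equivalence_partial_trace} with $a \otimes 1$ replaced by an operator acting nontrivially only on the factor being traced out, using $\mathfrak{N}_{\dens{\rho}} = \algebra{\hilb}(1-\supp_{\dens{\rho}})$ from Prop.~\ref{prop:vanishing_ideal_characterizations}---and then iterate, since any $\dens{\rho}_{T}$ is obtained by tracing out the factors of $P \backslash T$ one at a time.
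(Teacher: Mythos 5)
Your proposal is correct and follows essentially the same route the paper intends: the paper leaves this multipartite statement unproved (its bipartite counterpart, Prop.~\ref{prop:support_invariance_bipartite}, is declared ``immediate via definition and Lem.~\ref{lem:support_equivalence_partial_trace}''), and your argument is exactly the expected expansion --- the components and (co)boundaries are manifestly functions of the support projections $\supp_{\dens{\rho}_{T}}$ alone, and support equivalence descends to every reduction by applying the bipartite lemma to the coarsening $\hilb_{T}\otimes\hilb_{P\backslash T}$. Your care about the bookkeeping step (that $\dens{\rho}_{T}$ agrees with the bipartite partial trace under this coarsening, up to the canonical reshuffling) is appropriate and resolves the only point of potential concern.
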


\subsubsection{Trace Duality for Chains and Cochains}
The following proposition is just the multipartite version of Prop.~\ref{prop:bipartite_cocomplex_to_dualcomplex}.
\begin{proposition}{}{trace_duality_multipartite}	
	Using appropriate restrictions of the maps $(-)^{\Tr}: \algebra{\hilb} \rightarrow \states{\hilb}^{\vee}$, there are cochain isomorphisms of cochain complexes 
	\begin{align*}
		\cpx{G}(\bdens{\rho}_{P}) &\overset{\sim}{\longrightarrow} [\cpx{g}(\bdens{\rho}_{P})]^{\vee}\\
		\cpx{E}(\bdens{\rho}_{P}) &\overset{\sim}{\longrightarrow} [\cpx{e}(\bdens{\rho}_{P})]^{\vee}. 
	\end{align*}	
\end{proposition}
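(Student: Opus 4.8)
The plan is to produce the isomorphisms componentwise from the building-block trace dualities and then check compatibility with the differentials by exhibiting each coboundary as the trace-pairing adjoint of the corresponding boundary. First I would fix a degree $k$ and assemble, over all faces $T$ with $|T|=k+1$, the restricted isomorphisms $(-)^{\Tr}|_{\mathtt{GNS}(\dens{\rho}_{T})}\colon \mathtt{GNS}(\dens{\rho}_{T})\overset{\sim}{\longrightarrow}\mathtt{gns}(\dens{\rho}_{T})^{\vee}$ of \eqref{eq:GNS_dual_iso} (and the analogous $(-)^{\Tr}|_{\mathtt{Com}(\dens{\rho}_{T})}$ of \eqref{eq:Com_dual_iso}). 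Since $\cpx{G}^{k}(\bdens{\rho}_{P})=\prod_{|T|=k+1}\mathtt{GNS}(\dens{\rho}_{T})$ is a product while $\cpx{g}_{k}(\bdens{\rho}_{P})=\bigoplus_{|T|=k+1}\mathtt{gns}(\dens{\rho}_{T})$ is a direct sum, the canonical identification $\left[\bigoplus_{i}V_{i}\right]^{\vee}\cong\prod_{i}V_{i}^{\vee}$ upgrades the product of building-block dualities to an isomorphism $\sigma_{\cpx{G}}^{k}\colon\cpx{G}^{k}(\bdens{\rho}_{P})\overset{\sim}{\longrightarrow}\left(\cpx{g}_{k}(\bdens{\rho}_{P})\right)^{\vee}=\left(\left[\cpx{g}(\bdens{\rho}_{P})\right]^{\vee}\right)^{k}$, and similarly for $\sigma_{\cpx{E}}^{k}$. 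Each $\sigma^{k}$ is a componentwise isomorphism by construction, so the only remaining content is the commutativity of the squares with the coboundaries.

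The core step is the adjunction. On $\left[\cpx{g}(\bdens{\rho}_{P})\right]^{\vee}$ the coboundary is $(\partial^{\cpx{g}}_{k+1})^{\vee}$, and since $\partial^{\cpx{g}}_{k+1}=\sum_{j}(-1)^{j}(\delta^{\cpx{g}})\indices{_{k+1}^{j}}$ and $d^{k}_{\cpx{G}}=\sum_{j}(-1)^{j}(\Delta_{\cpx{G}})\indices{^{k}_{j}}$ carry the same alternating signs over the same index set, it suffices to match, face by face, each $(\Delta_{\cpx{G}})\indices{^{k}_{j}}$ with the trace-dual of $(\delta^{\cpx{g}})\indices{_{k+1}^{j}}$. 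Concretely, for a face $V$ with $|V|=k+2$, its $j$th boundary face $W=\partial_{j}V$, an operator $R_{W}\in\mathtt{GNS}(\dens{\rho}_{W})$, and a state $\dens{\gamma}_{V}\in\mathtt{gns}(\dens{\rho}_{V})$, I would verify
\[
	\Tr\!\left[\Sigma_{(V,j)}\!\left(R_{W}\otimes 1_{V(j)}\right)\dens{\gamma}_{V}\right]
	=\Tr\!\left[R_{W}\,\Tr_{V(j)}(\dens{\gamma}_{V})\right],
\]
which is exactly the trace-pairing adjointness of tensoring by $1_{V(j)}$ against the partial trace $\Tr_{V(j)}$ appearing in $\tau\indices{_{k+1}^{j}}$; the reshuffling $\Sigma_{(V,j)}$, being conjugation by the unitary $\sigma_{(V,j)}$, is matched against the reindexing implicit in $\tau\indices{_{k+1}^{j}}$ and is undone by the adjoint. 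Summing over all faces yields $\langle d^{k}_{\cpx{G}}R,\dens{\Gamma}\rangle=\langle R,\partial^{\cpx{g}}_{k+1}\dens{\Gamma}\rangle$, i.e.\ that $\sigma_{\cpx{G}}$ is a cochain map; the commutant case is identical.

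Two technical points are dispatched along the way. First, the support-projection compression in $(\Delta_{\cpx{G}})\indices{^{k}_{j}}$ is invisible under the pairing: for $\dens{\gamma}_{V}\in\mathtt{gns}(\dens{\rho}_{V})=\supp_{V}\states{\hilb_{V}}$ one has $\supp_{V}\dens{\gamma}_{V}=\dens{\gamma}_{V}$, so $\Tr[(X\supp_{V})\dens{\gamma}_{V}]=\Tr[X\dens{\gamma}_{V}]$, while in the commutant case cyclicity together with $\dens{\gamma}_{V}=\supp_{V}\dens{\gamma}_{V}\supp_{V}$ gives $\Tr[(\supp_{V}X\supp_{V})\dens{\gamma}_{V}]=\Tr[X\dens{\gamma}_{V}]$. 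Second, Lem.~\ref{lem:compat_of_supports} and its predual Lem.~\ref{lem:compat_of_supports_predual} are what guarantee that the restricted maps $(\delta^{\cpx{g}})\indices{_{k+1}^{j}}$ and $(\delta^{\cpx{e}})\indices{_{k+1}^{j}}$ actually land in $\mathtt{gns}(\dens{\rho}_{W})$ (resp.\ $\mathtt{com}(\dens{\rho}_{W})$), so that the dual pairing is well defined on the correct subspaces; the degenerate $k=-1,0$ ends reduce to the computations already made in Prop.~\ref{prop:bipartite_cocomplex_to_dualcomplex}. Since every $\sigma^{k}$ is an isomorphism and all squares commute, $\sigma_{\cpx{G}}$ and $\sigma_{\cpx{E}}$ are cochain isomorphisms.

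I expect the main obstacle to be purely bookkeeping: keeping the face combinatorics ($W=\partial_{j}V$), the reshuffling unitaries $\sigma_{(V,j)}$, and the summation inside $\tau\indices{_{k+1}^{j}}$ aligned so that the single-face adjunction assembles into the global pairing with correct signs. The mathematical content is elementary, so the real risk is a misplaced sign or a mismatched reshuffling; I would guard against this by first checking the displayed identity explicitly in the tripartite case against the differentials unravelled in \S\ref{sec:multipartite_cochain_complexes} before stating the general argument.
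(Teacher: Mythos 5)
Your proposal is correct and follows essentially the same route as the paper: componentwise restrictions of $(-)^{\Tr}$ combined with the identification $\bigl[\bigoplus_{|T|=k+1}\mathtt{b}(\dens{\rho}_{T})\bigr]^{\vee}\cong\prod_{|T|=k+1}\mathtt{b}(\dens{\rho}_{T})^{\vee}$, with commutativity reduced to the face-by-face adjunction between the extension maps $\epsilon\indices{^{k}_{j}}$ and the partial-trace maps $\tau\indices{_{(k+1)}^{j}}$, which is exactly the paper's identity \eqref{eq:trace_duality_multipartite}. You additionally make explicit two points the paper leaves to the reader --- that the support-projection compression in $(\Delta_{\cpx{C}})\indices{^{k}_{j}}$ is invisible under the pairing because elements of $\mathtt{gns}(\dens{\rho}_{V})$ (resp.\ $\mathtt{com}(\dens{\rho}_{V})$) already absorb $\supp_{V}$, and that Lem.~\ref{lem:compat_of_supports_predual} is what makes the dual pairing well defined on the restricted domains --- which is a welcome clarification rather than a deviation.
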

\begin{proof}
	For $-1 \leq k \leq |P|-1$, define the linear maps 
	\begin{align*}
		\sigma^{k} := \prod_{|V|=k+1} (-)_{V}^{\Tr}: \prod_{|V|=k+1} \algebra{\hilb_{V}} &\longrightarrow \prod_{|V| = k+1} \states{\hilb_{V}}^{\vee} \cong \left(\bigoplus_{|V| = k+1} \states{\hilb_{V}} \right)^{\vee}\\
		R &\longmapsto  \prod_{|V| = k+1} (R_{V})^{\Tr},
	\end{align*}
	and for $k < -1$ and $k>N$ let $\sigma^{k}$ be the zero map between the zero vector space.
	Let $\sigma|_{\cpx{G}}^{k}$ and $\sigma|_{\cpx{E}}^{k}$ denote the restrictions of $\sigma^{k}$ to $\cpx{G}^{k}(\bdens{\rho}_{P})$ and $\cpx{E}^{k}(\bdens{\rho}_{P})$ respectively; one can verify that these restrictions respectively land in $\cpx{G}^{k+1}(\bdens{\rho}_{P})$ and $\cpx{E}^{k+1}(\bdens{\rho}_{P})$.
	To show that these restrictions define chain isomorphisms (taking $\sigma^{k}$ to be the identity map for $k < -1$ or $k > N$) it remains to show commutativity with the appropriate differentials; to show this we begin by observing that, for any $\Gamma \in \bigoplus_{|T| = k+2} \states{\hilb_{T}}$ and $R \in \prod_{|V| = k + 1} \algebra{\hilb_{V}}$, we have:	
	\begin{align*}
		\Tr \left \{ \left[\tau\indices{_{(k+1)}^j}(\Gamma)\right]_{\partial_{j}T} R_{\partial_{j}T} \right \} &= \Tr \left \{\Gamma_{T} \left[\epsilon\indices{^{k}_{j}}(R) \right]_{T} \right \}
	\end{align*}	
	for all $T \subseteq P$ with $|T| = k+2$ and $0 \leq j \leq k+1$.
	The above equation is equivalently written as 
	\begin{equation}
		\begin{aligned}
			\left \langle \left(R_{\partial_{j}T} \right)^{\Tr}, \left[\tau\indices{_{(k+1)}^j}(\Gamma)\right]_{\partial_{j}T} \right \rangle_{\partial_{j}T} &= \left \langle \left(\left[\epsilon\indices{^{k}_{j}}(R)\right]_{T} \right)^{\Tr}, \Gamma_{T} \right \rangle_{T},
		\end{aligned}
		\label{eq:trace_duality_multipartite}
	\end{equation}
	where, to reduce an overload of parenthesis, $\langle - , - \rangle_{V}: \states{\hilb_{V}}^{\vee} \times \states{\hilb_{V}} \rightarrow \mathbb{C}$ denotes the dual pairing.
	Using \eqref{eq:trace_duality_multipartite} one can verify: 
	\begin{align*}
		\sigma|_{\cpx{G}}^{k+1} \circ (\Delta_{\cpx{G}})\indices{^{k}_{j}} &= \left[(\delta^{\cpx{g}}) \indices{_{(k+1)}^j}\right]^{\vee} \circ \sigma|_{\cpx{G}}^{k}\\
		\sigma|_{\cpx{E}}^{k+1} \circ (\Delta_{\cpx{E}})\indices{^{k}_{j}} &= \left[(\delta^{\cpx{e}}) \indices{_{(k+1)}^j}\right]^{\vee} \circ \sigma|_{\cpx{E}}^{k}.
	\end{align*}
	from which commutativity with the appropriate differentials follows.
\end{proof}

As a corollary we have induced isomorphisms 
\begin{align*}
	H^{k}\left[\cpx{G}(\bdens{\rho}_{P}) \right] &\overset{\sim}{\longrightarrow} \left(H_{k}\left[\cpx{g}(\bdens{\rho}_{P}) \right] \right)^{\vee}\\
	H^{k}\left[\cpx{E}(\bdens{\rho}_{P}) \right] &\overset{\sim}{\longrightarrow}  \left(H_{k}\left[\cpx{e}(\bdens{\rho}_{P}) \right] \right)^{\vee}.  
\end{align*}

\subsubsection{Equivariance Under Permutations of Tensor Factors \label{sec:multipartite_permutation_equivariance}}
\begin{definition}{}{}
	\begin{enumerate}
		\item Let $P = (p_{1}, \cdots, p_{N})$ be a totally ordered set and $\sigma$ a permutation of $N$-elements (a bijective map $\{1, \cdots, N \} \rightarrow \{1, \cdots, N \}$), then $\sigma \cdot P$ is the totally ordered set $(p_{\sigma(1)}, \cdots p_{\sigma(N)})$.

		\item Let $\bdens{\rho}_{P} = (P, (\hilb_{p})_{p \in P}, \dens{\rho})$ be a multipartite density state, then 
			\begin{align*}
				\newmath{\sigma \cdot \bdens{\rho}_{P}} := (\sigma \cdot P, (\hilb_{\sigma(p)})_{p \in P}, u_{\sigma} \dens{\rho} u_{\sigma}^{*})
			\end{align*}
			where 
			\begin{align}
				u_{\sigma}: \bigotimes_{p \in P} \hilb_{p} \overset{\sim}{\longrightarrow} \bigotimes_{q \in \sigma \cdot P} \hilb_{q} = \bigotimes_{p \in P} \hilb_{\sigma(p)}
				\label{eq:permutation_to_induced_unitary}
			\end{align}
			(i.e.\ $u_{\sigma}: \hilb_{P} \rightarrow \hilb_{\sigma \cdot P}$) is the (unitary) reshuffling isomorphism given by linearization of
			\begin{align*}
				\psi_{1} \otimes \psi_{2} \otimes \cdots \otimes \psi_{N} \longmapsto \psi_{\sigma(1)} \otimes \psi_{\sigma(2)} \otimes \cdots \otimes \psi_{\sigma(N)},
			\end{align*}
			where $\psi_{i} \in \hilb_{p_{i}}$.
	\end{enumerate}
\end{definition}
\begin{proposition}{}{multipartite_permutation_equivariance}
	Let $\bdens{\rho}_{P}$ be an $N$-partite density state and $\sigma$ a permutation of $N$-elements, then there are induced cochain isomorphisms:
	\begin{align*}
		U_{\cpx{G}}(\sigma): \cpx{G}(\bdens{\rho}_{P}) &\overset{\sim}{\longrightarrow} \cpx{G}(\sigma \cdot \bdens{\rho}_{P})\\
		U_{\cpx{E}}(\sigma): \cpx{E}(\bdens{\rho}_{P}) &\overset{\sim}{\longrightarrow} \cpx{E}(\sigma \cdot \bdens{\rho}_{P}).
	\end{align*}
	and similarly for chain complexes.
\end{proposition}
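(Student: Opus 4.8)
The plan is to build the isomorphisms componentwise out of the building-block equivariance of Lem.~\ref{lem:building_block_unitary_equivariance}, together with a sign correction forced by the alternating-sum structure of the coboundary. For each subset $T \subseteq P$ let $u^{T}_{\sigma}\colon \hilb_{T} \to \hilb^{\sigma}_{T}$ denote the reshuffling isomorphism relating the tensor ordering induced on $T$ by $P$ to the one induced by $\sigma \cdot P$ (the evident restriction of $u_{\sigma}$ in \eqref{eq:permutation_to_induced_unitary} to the tensor factors indexed by $T$). The first fact I would record is that partial traces intertwine with these reshufflings, so that the reduced states of $\sigma \cdot \bdens{\rho}_{P}$ are given by $(\sigma\cdot\dens{\rho})_{T} = u^{T}_{\sigma}\,\dens{\rho}_{T}\,(u^{T}_{\sigma})^{*}$, and consequently $\supp_{(\sigma\cdot\dens{\rho})_{T}} = u^{T}_{\sigma}\,\supp_{T}\,(u^{T}_{\sigma})^{*}$. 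Since each $u^{T}_{\sigma}$ is unitary, Lem.~\ref{lem:building_block_unitary_equivariance} then supplies isomorphisms $\mathtt{GNS}(u^{T}_{\sigma})\colon \mathtt{GNS}(\dens{\rho}_{T}) \to \mathtt{GNS}((\sigma\cdot\dens{\rho})_{T})$ and likewise for $\mathtt{Com}$.

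Next I would assemble these into graded maps. On the degree-$k$ component, whose factors are indexed by the size-$(k+1)$ subsets $T$, define
\begin{align*}
	U_{\cpx{G}}(\sigma)^{k} := \prod_{|T| = k+1} \epsilon(\sigma,T)\,\mathtt{GNS}(u^{T}_{\sigma}),
\end{align*}
and analogously $U_{\cpx{E}}(\sigma)^{k}$ with $\mathtt{Com}$ in place of $\mathtt{GNS}$, where $\epsilon(\sigma,T) \in \{\pm 1\}$ is the sign of the permutation by which $\sigma$ reorders the elements of $T$ (the relative sign of the two induced orders on $T$). Being a product of isomorphisms times a nonzero scalar, each $U^{k}$ is manifestly an isomorphism of vector spaces, so it only remains to check commutativity with the coboundaries.

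The heart of the argument, and the step I expect to be the main obstacle, is verifying $U(\sigma)^{k+1}\circ d^{k} = \tilde d^{k}\circ U(\sigma)^{k}$, where $\tilde d$ is the coboundary of the permuted complex. The content is the same bookkeeping that makes the ordered simplicial chain complex functorial under vertex relabelling: reordering the vertices permutes which boundary face of a given $(k+2)$-face $F_{V}$ is the ``$j$th'' face and hence scrambles the signs $(-1)^{j}$ appearing in $d^{k} = \sum_{j}(-1)^{j}(\Delta_{\cpx{C}})\indices{^{k}_{j}}$, while the reshuffling $u^{V}_{\sigma}$ intertwines the extension-and-compress maps term by term (using $\supp_{(\sigma\cdot\dens{\rho})_{V}} = u^{V}_{\sigma}\supp_{V}(u^{V}_{\sigma})^{*}$ to move the support-projection compression past the conjugation). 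The signs $\epsilon(\sigma,T)$ are precisely what is needed to reconcile the two sign conventions; the bipartite swap already exhibits this, where the degree-one component must carry an overall $-1$. To organize the sign computation cleanly I would reduce to the case of adjacent transpositions, which generate the symmetric group and for which the reordering of faces (and the induced change in $\epsilon$) is explicit, and then obtain a general $\sigma$ by composing, checking along the way that $U(\sigma\tau) = U(\sigma)U(\tau)$, which follows from $u^{T}_{\sigma\tau} = u^{T}_{\sigma}u^{T}_{\tau}$ and multiplicativity of $\mathrm{sgn}$.

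Finally, since each $U_{\cpx{C}}(\sigma)$ is a componentwise isomorphism commuting with coboundaries, it is a cochain isomorphism and therefore induces isomorphisms on cohomology; the statement for the chain complexes $\cpx{g}$ and $\cpx{e}$ follows either by the same computation with partial traces in place of extension maps, or by dualizing via the trace-pairing isomorphisms of Prop.~\ref{prop:trace_duality_multipartite}.
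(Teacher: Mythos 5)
Your proposal is correct and follows essentially the same route as the paper: componentwise conjugation by the reshuffling unitaries $u^{T}_{\sigma}$ (which intertwine partial traces and hence support projections), compressed into the appropriate building blocks, with commutation with the coboundary checked degree by degree. The one substantive point where you go beyond the paper's one-paragraph sketch is the per-face sign $\epsilon(\sigma,T)$: the paper's stated map omits it, but it is genuinely required for the maps to commute (rather than commute only up to a subset-dependent sign) with the alternating-sum coboundary --- the bipartite swap already forces a $-1$ on the degree-one component --- so your treatment of the sign bookkeeping, reduced to adjacent transpositions via $U(\sigma\tau)=U(\sigma)U(\tau)$, is the more complete version of the argument.
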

\begin{proof}
	The proof is nearly an exercise in notation. For $0 \leq k \leq |P| - 1$ define the map 
	\begin{align*}
		U^{k}: \prod_{\{T \subseteq P: |T|=k+1 \}} \algebra{\hilb_{T}} &\longrightarrow \prod_{\{S \subseteq \sigma \cdot P: |S|=k+1 \}}\algebra{\hilb_{S}} \\
		\prod_{\{T \subseteq P: |T|=k+1 \}} x_{T} &\longmapsto \prod_{\{S \subseteq \sigma \cdot P: |S|=k+1 \}} (u^{\sigma^{-1} \cdot S}_{\sigma}) x_{\sigma^{-1} \cdot S} (u^{\sigma^{1} \cdot S}_{\sigma})^{*}, 
	\end{align*}
	where (to be explicit) we have ordered our products via the induced lexicographical ordering on subsets, and  $u_{\sigma}^{T}: \hilb_{T} \rightarrow \hilb_{\sigma \cdot T}$ is the reshuffling map defined akin to \eqref{eq:permutation_to_induced_unitary}.
	By further composing this map with left/right multiplications by support projections, we can construct isomorphisms $U_{\cpx{G}}(\sigma)$ and $U_{\cpx{E}}(\sigma)$ as indicated above.
\end{proof}

\subsubsection{Equivariance under Local Unitary/Invertible Transformations}
We define the notion of a local unitary transformation between multipartite density states.\footnote{In this definition, the data of an order preserving bijection simply allows for a relabelling of the tensor factors.} 
\begin{definition}[label=def:local_invertible_multipartite]{}{}
	Let $\bdens{\rho}_{P} := (P, (\hilb_{p})_{p \in P}, \dens{\rho})$ and $\bdens{\varphi}_{P'} = (P', (\mathcal{K}_{p'})_{p' \in P'}, \dens{\varphi})$ be $N$-partite density states ($|P| = |P'| = N$).
	A \newword{local invertible transformation} $\boldsymbol{l}: \bdens{\rho}_{P} \rightarrow \bdens{\varphi}_{P'}$ is a pair $(f, (l_{p})_{p \in P})$ of an order-preserving bijection $f: P \rightarrow P'$, and a tuple of invertible linear maps $l_{p}: \mathcal{H}_{p} \rightarrow \mathcal{K}_{f(p)}$ such that $\dens{\varphi} = l \dens{\rho} l^{-1}$ where $l := \bigotimes_{p \in P} l_{p}: \bigotimes_{p \in P} \hilb_{p} \rightarrow \bigotimes_{p' \in Q} \mathcal{K}_{p'}$.  
	When $l_{p}$ is unitary for each $p \in P$, then we say $\boldsymbol{l}$ is a \newword{local unitary transformation}. 
\end{definition}

Recall Thm.~\ref{thm:bipartite_invertible_equivariance}, which states that every local invertible transformation of bipartite density states canonically induces isomorphisms of the corresponding GNS and commutant complexes.
This theorem can be extended to the multipartite situation; the proof follows as an exercise in explicitly writing down induced isomorphisms. 
\begin{theorem}{Invariance/Equivariance Under Local Invertible Transformations}{}
	Let $\boldsymbol{u}: \bdens{\rho}_{P} \rightarrow \bdens{\varphi}_{P'}$ be a local invertible transformation between $N$-partite density states, then there are induced cochain isomorphisms:
	\begin{align*}
		\cpx{G}(\boldsymbol{u}): \cpx{G}(\bdens{\rho}_{P}) &\overset{\sim}{\longrightarrow} \cpx{G}(\bdens{\varphi}_{P'})\\
		\cpx{E}(\boldsymbol{u}): \cpx{E}(\bdens{\rho}_{P}) &\overset{\sim}{\longrightarrow} \cpx{G}(\bdens{\varphi}_{P'}).	
	\end{align*}
\end{theorem}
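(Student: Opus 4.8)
The plan is to mimic the proof of Theorem~\ref{thm:bipartite_invertible_equivariance} componentwise, reducing everything to Lemma~\ref{lem:building_block_unitary_equivariance} (equivariance of the building blocks under conjugation) together with the fact that the complexes depend only on support equivalence classes. Write $\boldsymbol{u} = (f, (l_{p})_{p \in P})$ and, for each $T \subseteq P$, set $l_{T} := \bigotimes_{t \in T} l_{t} : \hilb_{T} \to \mathcal{K}_{f(T)}$, an invertible map.

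First I would establish compatibility of the transformation with reduction to subsystems: for every $T \subseteq P$,
\begin{align*}
	\image(\dens{\varphi}_{f(T)}) = l_{T}\bigl[\image(\dens{\rho}_{T})\bigr].
\end{align*}
Since $l := \bigotimes_{p} l_{p}$ is invertible, $\image(\dens{\varphi}) = l[\image(\dens{\rho})]$. The image of a reduced state is the span of all partial contractions: $\image(\dens{\rho}_{T}) = \operatorname{span}_{\mathbb{C}}\{(\langle \chi| \otimes 1_{T}) v : v \in \image(\dens{\rho}),\ \chi \in \hilb_{P \backslash T}\}$ (a short linear-algebra computation on the partial trace). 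Using that $l$ factors, up to reshuffling, as $l_{P \backslash T} \otimes l_{T}$, one computes
\begin{align*}
	(\langle \chi| \otimes 1_{T})\, l v = l_{T}\bigl[(\langle l_{P \backslash T}^{*}\chi| \otimes 1_{T}) v\bigr],
\end{align*}
and since $l_{P \backslash T}^{*}$ is invertible, letting $\chi$ range over $\hilb_{P \backslash T}$ recovers all of $\image(\dens{\rho}_{T})$ after applying $l_{T}$. Consequently $\dens{\varphi}_{f(T)}$ and $l_{T}\dens{\rho}_{T}l_{T}^{-1}$ share the same support projection, hence are support equivalent; by support-invariance of the complexes this identifies $\mathtt{GNS}(\dens{\varphi}_{f(T)}) = \mathtt{GNS}(l_{T}\dens{\rho}_{T}l_{T}^{-1})$ and likewise $\mathtt{Com}(\dens{\varphi}_{f(T)}) = \mathtt{Com}(l_{T}\dens{\rho}_{T}l_{T}^{-1})$ as embedded subspaces. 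Lemma~\ref{lem:building_block_unitary_equivariance} then supplies isomorphisms $\mathtt{GNS}(l_{T})$ and $\mathtt{Com}(l_{T})$ onto these subspaces.

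Next I would assemble the componentwise maps. For $0 \le k \le N-1$ define $\cpx{G}^{k}(\boldsymbol{u}) := \prod_{|T| = k+1} \mathtt{GNS}(l_{T})$, where the target product is reindexed by the order-preserving bijection $f$ on size-$(k+1)$ subsets (and the identity in degree $-1$), and set $\cpx{E}^{k}(\boldsymbol{u})$ analogously using $\mathtt{Com}(l_{T})$. By the previous step each component is an isomorphism, so it remains only to verify that these maps commute with the coboundaries. Since each coboundary is the alternating sum $\sum_{j}(-1)^{j}(\Delta)\indices{^{k}_{j}}$, it suffices to check that conjugation intertwines each coface map, i.e.\ $\cpx{G}^{k+1}(\boldsymbol{u}) \circ (\Delta_{\cpx{G}})\indices{^{k}_{j}} = (\Delta_{\cpx{G}})\indices{^{k}_{j}} \circ \cpx{G}^{k}(\boldsymbol{u})$ (and likewise for $\cpx{E}$); the signs then match automatically.

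This reduces to three commutations: conjugation commutes with tensoring by $1_{T(j)}$ because $l_{T}$ factors as $l_{\partial_{j}T} \otimes l_{T(j)}$, so $l_{T}(r \otimes 1_{T(j)})l_{T}^{-1} = (l_{\partial_{j}T} r\, l_{\partial_{j}T}^{-1}) \otimes 1_{T(j)}$; conjugation commutes with the reshuffle $\Sigma_{(T,j)}$ because $\Sigma_{(T,j)}$ is itself conjugation by a permutation unitary and $f$ is order preserving, so the reshuffles on source and target align; and conjugation commutes with the support-compression $(-)\supp_{T}$ (resp.\ $\supp_{T}(-)\supp_{T}$) because $\supp_{\dens{\varphi}_{f(T)}} = l_{T}\supp_{\dens{\rho}_{T}}l_{T}^{-1}$ from the proof of Lemma~\ref{lem:building_block_unitary_equivariance}. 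I expect the main obstacle to be precisely this last bookkeeping step: keeping the reshuffling maps $\Sigma_{(T,j)}$ and the tensor factorization of $l$ aligned across the reindexing by $f$, where the order-preservation hypothesis is essential. Once the intertwining is verified, $\cpx{G}(\boldsymbol{u})$ and $\cpx{E}(\boldsymbol{u})$ are cochain isomorphisms, and the induced isomorphisms in cohomology (hence invariance of the Poincar\'{e} polynomials, as in Cor.~\ref{cor:bipartite_invertible_polynomial_invariance}) follow immediately.
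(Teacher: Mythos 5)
Your proposal is correct and is precisely the ``exercise in explicitly writing down induced isomorphisms'' that the paper defers to when generalizing Thm.~\ref{thm:bipartite_invertible_equivariance}: componentwise conjugation maps from Lem.~\ref{lem:building_block_unitary_equivariance}, reindexed by $f$, checked against each coface map. One small simplification: by cyclicity of the partial trace in the traced-out factor, $\Tr_{P'\backslash f(T)}[l\dens{\rho}l^{-1}] = l_{T}\dens{\rho}_{T}l_{T}^{-1}$ holds \emph{exactly} (not merely up to support equivalence), so your first step can be shortened, though the support-equivalence version you prove is all that is needed.
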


As two consequences of the theorem we have:
\begin{enumerate}
	\item Cohomologies are equivariant under local unitary transformations.
		In particular, the associated Poincar\'{e} polynomials
		\begin{align*}
			\newmath{P_{\cpx{G}}(\bdens{\rho}_{P})} &:= \sum_{k = 0}^{N-1} \left(\dim H^{k}[\cpx{G}(\bdens{\rho}_{P})] \right) y^{k},\\
			\newmath{P_{\cpx{E}}(\bdens{\rho}_{P})} &:= \sum_{k = 0}^{N-1} \left(\dim H^{k}[\cpx{E}(\bdens{\rho}_{P})] \right) y^{k}
		\end{align*}
		are invariant under local invertible transformations.

	\item Suppose each local Hilbert space $\hilb_{p}$ is a representation for some group $G$, so that each $\hilb_{T} = \bigotimes_{t \in T} \hilb_{t}$ is a tensor product representation of $G$.
		If $\dens{\rho} \in \hilb_{P}$ is fixed under the conjugation action of $G$, then each of the cohomology groups are $G$-representations.
\end{enumerate}

\subsubsection{Comparing Commutant to GNS complexes}
As in the bipartite situation, one can construct long exact sequences relating the homology of the complexes $\cpx{e}(\bdens{\rho}_{P})$ and $\cpx{g}(\bdens{\rho}_{P})$ cohomology of the complexes $\cpx{E}(\bdens{\rho}_{P})$ and $\cpx{G}(\bdens{\rho}_{P})$ coming from short exact sequences of complexes obtained by componentwise application of the maps \eqref{eq:com_inc_gns} and \eqref{eq:GNS_prj_Com}.
The cohomological version has the form:
\begin{center}
	\begin{tikzpicture}[descr/.style={fill=white,inner sep=1.5pt}]
		\matrix (m) [
		matrix of math nodes,
		row sep=1em,
		column sep=2.5em,
		text height=1.5ex, text depth=0.25ex
		]
		{ 0 & H^0[\ker(\Pi)] & H^0[\cpx{G}(\bdens{\rho}_{P})] & H^0[\cpx{E}(\bdens{\rho}_{P})] & \\
			& H^1[\ker(\Pi)] & H^1[\cpx{G}(\bdens{\rho}_{P})] & H^1[\cpx{E}(\bdens{\rho}_{P})] & \\
			& H^2[\ker(\Pi)] & H^2[\cpx{G}(\bdens{\rho}_{P})] & H^2[\cpx{E}(\bdens{\rho}_{P})] & \\
			& \makebox[\widthof{$H^{3}[\ker(\Pi)]$}][c]{} &            \vdots     & \makebox[\widthof{$H^3[\cpx{E}(\bdens{\rho}_{P})]$}][c]{}  & \\
			& H^{N-1}[\ker(\Pi)] & H^{N-1}[\cpx{G}(\bdens{\rho}_{P})] & H^{N-1}[\cpx{E}(\bdens{\rho}_{P})] & 0. \\
		};

		\path[overlay,->, font=\scriptsize,>=to,line width = 0.5pt]
			(m-1-1) edge (m-1-2)
			(m-1-2) edge (m-1-3)
			(m-1-3) edge (m-1-4)
			(m-1-4) edge[out=355,in=175] node[descr,yshift=0.3ex] {$b^0$} (m-2-2)
			(m-2-2) edge (m-2-3)
			(m-2-3) edge (m-2-4)
			(m-2-4) edge[out=355,in=175] node[descr,yshift=0.3ex] {$b^1$} (m-3-2)
			(m-3-2) edge (m-3-3)
			(m-3-3) edge (m-3-4)
			(m-3-4) edge[out=355,in=175] node[descr,yshift=0.3ex] {$b^2$} (m-4-2)
			(m-4-4) edge[out=355,in=175] node[descr,yshift=0.3ex] {$b^{N-2}$} (m-5-2)
			(m-5-2) edge (m-5-3)
			(m-5-3) edge (m-5-4)
			(m-5-4) edge (m-5-5);
	\end{tikzpicture}
\end{center}

\subsection{Tensor Products and Factorizability \label{sec:Kunneth}}
In the following sections we will explore what can be said about the factorizability of a multipartite density state by studying the cohomology of its associated cochain complexes.
We begin by providing concrete definitions for the various notions of factorizability of multipartite density states.

\subsubsection{The Tensor Product of Multipartite States}
Before giving a precise definition of what it means for a multipartite density state to be factorizable, we give a precise definition of a tensor product of multipartite density states. 
\begin{definition}[label=def:multipartite_state_tensor]{}{}
	\begin{enumerate}
		\item Let $P$ and $Q$ be totally ordered sets, then $\newmath{P \vee Q}$ is the totally ordered set given by the underlying disjoint union set $P \coprod Q$ equipped with the total order given by the usual ordering on the subsets $P,Q \subseteq P \coprod Q$ and such that $p < q$ for every $p \in P$ and $q \in Q$.

		\item Let $\bdens{\rho}_{P} := (P, (\hilb_{p})_{p \in P}, \dens{\rho})$ and $\bdens{\varphi}_{Q} := (Q, (\mathcal{K}_{q})_{q \in Q}, \dens{\varphi})$ be $N$ and $M$-partite density states respectively, then $\newmath{\bdens{\rho}_{P} \otimes \bdens{\varphi}_{Q}}$ is the $NM$-partite density state 
			\begin{equation*}
				(P \vee Q, (\mathcal{L}_{r})_{r \in P \vee Q}, \dens{\rho}_{P} \otimes \dens{\varphi}_{Q})
			\end{equation*}
			where 
			\begin{align*}
				\mathcal{L}_{r} = 
				\left \{
					\begin{array}{cc}
						\hilb_{r}, & \text{if $r \in P$}\\
						\mathcal{K}_{r}, & \text{if $r \in Q$}
					\end{array}
				\right. .
			\end{align*}
	\end{enumerate}
\end{definition}

Next we recall the notion of tensor products of cochain complexes, and the K\"{u}nneth theorem, which expresses the cohomology of a tensor product of cochain complexes as the tensor product of cochain complexes of their cohomologies as graded vector spaces.

\subsubsection{Tensor Products of Complexes and The K\"{u}nneth theorem}
\begin{definition}{}{}
	Let $\cpx{L}$ and $\cpx{M}$ be cochain complexes with coboundaries $d_{\cpx{L}}$ and $d_{\cpx{M}}$.
	The tensor product of two cochain complexes $\cpx{L}$ and $\cpx{M}$ is the cochain complex $\cpx{L} \otimes \cpx{M}$ with components
	\begin{align*}
		(\cpx{L} \otimes \cpx{M})^{n} := \bigoplus_{i + j = n} \cpx{L}^{i} \otimes_{\mathbb{C}} \cpx{M}^{j},
	\end{align*}
	and coboundary defined by linearization of
	\begin{align*}
		d^{i + j}_{\cpx{L} \otimes \cpx{M}} (x \otimes y) := (d^{i}_{\cpx{L}} x) \otimes y + (-1)^{i} x \otimes (d^{j}_{\cpx{M}} y).
	\end{align*}
	for $x \in \cpx{L}^{i}$ and $y \in \cpx{M}^{j}$.
\end{definition}	
Note that we can regard a $\mathbb{Z}$-graded vector space as a complex with vanishing coboundary.
In particular, the cohomology $H(\cpx{L})$ of a cochain complex---which is a $\mathbb{Z}$-graded vector space---can be regarded as a complex in this manner.
The following theorem states that cohomology (of cochain complexes of $\mathbb{C}$-vector spaces) ``preserves" tensor products.
\begin{theorem}{The Weak K\"{u}nneth Theorem}{weak_kunneth}
	Suppose $\cpx{L}$ and $\cpx{M}$ are cochain complexes of $\mathbb{C}$-vector spaces.\footnote{This theorem generalizes to a statement for cochain complexes of abelian groups if we require the cohomologies of $\cpx{L}$ and $\cpx{M}$ to be levelwise torsion-free.}
	Then there is a canonical isomorphism:
	\begin{align*}
		H^{k}\left[\cpx{L} \otimes \cpx{M} \right] \cong \bigoplus_{k = l + m} H^{l}[\cpx{L}] \otimes H^{m}[\cpx{M}].
	\end{align*}
\end{theorem}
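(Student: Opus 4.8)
The plan is to exploit the fact that we work over the field $\mathbb{C}$, where every subspace admits a complement and every short exact sequence splits, so as to reduce the statement to a triviality. First I would produce a direct-sum decomposition of each complex. For a cochain complex $\cpx{L}$ with coboundary $d_{\cpx{L}}$, write $Z^{i} := \ker(d_{\cpx{L}}^{i})$ and $B^{i} := \image(d_{\cpx{L}}^{i-1})$, so that $B^{i} \leq Z^{i} \leq \cpx{L}^{i}$. Since we are over a field I may choose complements: a subspace $H^{i} \leq Z^{i}$ with $Z^{i} = B^{i} \oplus H^{i}$ (hence $H^{i} \cong H^{i}[\cpx{L}]$), and a subspace $W^{i} \leq \cpx{L}^{i}$ with $\cpx{L}^{i} = Z^{i} \oplus W^{i}$. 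The coboundary then restricts to an isomorphism $W^{i} \overset{\sim}{\longrightarrow} B^{i+1}$ and annihilates $B^{i} \oplus H^{i}$. This exhibits $\cpx{L}$ as a direct sum of complexes $\cpx{L} \cong H[\cpx{L}] \oplus \cpx{A}_{\cpx{L}}$, where $H[\cpx{L}]$ denotes the graded cohomology placed in its natural degrees with \emph{zero} coboundary, and $\cpx{A}_{\cpx{L}} = \bigoplus_{i} \left( W^{i} \overset{\sim}{\longrightarrow} B^{i+1} \right)$ is a direct sum of two-term \emph{contractible} complexes (each isomorphic to $V \overset{\mathrm{id}}{\longrightarrow} V$ concentrated in two adjacent degrees). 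I would do the same for $\cpx{M}$.

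Next I would distribute the tensor product over these decompositions:
\[
	\cpx{L} \otimes \cpx{M} \cong \big( H[\cpx{L}] \oplus \cpx{A}_{\cpx{L}} \big) \otimes \big( H[\cpx{M}] \oplus \cpx{A}_{\cpx{M}} \big),
\]
which splits into four summands. The key lemma is that the tensor product of \emph{any} (bounded) complex with a contractible complex is again contractible, hence acyclic: if $h$ is a contracting homotopy for $\cpx{A}_{\cpx{M}}$, then $\mathrm{id} \otimes h$, equipped with the usual Koszul sign, is a contracting homotopy for $\cpx{X} \otimes \cpx{A}_{\cpx{M}}$. Consequently the three summands containing a factor $\cpx{A}_{\cpx{L}}$ or $\cpx{A}_{\cpx{M}}$ have vanishing cohomology, and only $H[\cpx{L}] \otimes H[\cpx{M}]$ contributes. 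Since both tensor factors carry zero coboundary, so does their tensor product, whence its degree-$k$ cohomology is just its degree-$k$ component $\bigoplus_{l+m=k} H^{l}[\cpx{L}] \otimes H^{m}[\cpx{M}]$, which is the claimed formula.

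Finally, to identify the resulting isomorphism as the \emph{canonical} one, I would verify independently that the external product $[x] \otimes [y] \mapsto [x \otimes y]$ is a well-defined map $\bigoplus_{l+m=k} H^{l}[\cpx{L}] \otimes H^{m}[\cpx{M}] \to H^{k}[\cpx{L} \otimes \cpx{M}]$: a cocycle tensored with a cocycle is a cocycle by the Leibniz rule $d_{\cpx{L} \otimes \cpx{M}}(x \otimes y) = (d_{\cpx{L}} x) \otimes y + (-1)^{l} x \otimes (d_{\cpx{M}} y)$, and a coboundary tensored with a cocycle is a coboundary; one then observes that this is exactly the map realized by the splitting. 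There is no genuine obstacle here, because field coefficients annihilate the $\mathrm{Tor}$ term appearing in the general K\"{u}nneth theorem. The only points requiring care are the bookkeeping of the Koszul signs in the contracting homotopy and the observation that the complements above may be chosen degree by degree; both are routine over $\mathbb{C}$. Boundedness of the complexes, assumed throughout the paper, guarantees that all direct sums are finite and that no convergence subtleties intervene.
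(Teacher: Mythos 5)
Your proof is correct. Note, however, that the paper does not actually prove this statement: it simply cites standard references (Munkres \S 58 and the nLab entry), so there is no in-paper argument to compare against. What you have written is the standard self-contained proof for field coefficients: split each complex as (cohomology with zero differential) $\oplus$ (a direct sum of two-term contractible complexes), observe that tensoring any bounded complex against a contractible one yields a contractible complex via the Koszul-signed homotopy $x \otimes a \mapsto (-1)^{|x|} x \otimes h(a)$, and read off the answer from the surviving summand $H[\cpx{L}] \otimes H[\cpx{M}]$. This is more elementary than the route taken in the cited references, which establish the general K\"{u}nneth formula over a PID and then observe that the $\mathrm{Tor}$ term vanishes over a field; your argument buys a shorter, choice-of-complement proof at the cost of being specific to field coefficients, which is all the paper needs. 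Your closing remark that the isomorphism induced by the (non-canonical) splitting coincides with the canonical external product $[x] \otimes [y] \mapsto [x \otimes y]$ is exactly the right thing to check, and it is what makes the statement ``canonical'' as asserted. One small point: boundedness is not actually needed for this argument, since cohomology commutes with arbitrary direct sums of complexes, but invoking it does no harm in the paper's setting.
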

\begin{proof}
	See, e.g.\ \cite[\S 58]{munkres} or \cite{nlab:kuenneth_theorem}. 
\end{proof}

\subsubsection{From Tensor Products of States to Tensor Products of Complexes}
The cochain complexes associated to a multipartite density state are compatible with tensor products up to a shift in grading.
\begin{definition}{}{}
	Let $\cpx{L}$ be a cochain complex with coboundary $d_{\cpx{L}}$ then the $k$-shifted cochain complex $\cpx{L}[k]$ is the complex with components $\cpx{L}[k]^{n} = \cpx{L}^{n+k}$ and  coboundary defined componentwise by:
	\begin{align*}
		d^{n}_{\cpx{L}[k]}: \cpx{L}[k]^{n} &\longrightarrow \cpx{L}[k]^{n+k}\\
		c &\longmapsto (-1)^{k} d^{n+k}(c).
	\end{align*}
\end{definition}

\begin{theorem}{Cochains for Factorizable States}{cochain_fact}
	Let $\bdens{\rho}_{P}$ and $\bdens{\varphi}_{Q}$ be $N$ and $M$ partite density states, then there are canonical isomorphisms:
	\begin{align*}	
		\cpx{G}(\bdens{\rho}_{P} \otimes \bdens{\varphi}_{Q}) &\cong \left( \cpx{G}(\bdens{\rho}_{P}) \otimes \cpx{G}(\bdens{\varphi}_{Q}) \right)[1];\\
		\cpx{E}(\bdens{\rho}_{P} \otimes \bdens{\varphi}_{Q}) &\cong \left(\cpx{E}(\bdens{\rho}_{P}) \otimes \cpx{E}(\bdens{\varphi}_{Q}) \right)[1].
	\end{align*}
\end{theorem}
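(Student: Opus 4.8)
The plan is to prove both isomorphisms at once by writing $\cpx{C}$ for either the GNS complex $\cpx{G}$ or the commutant complex $\cpx{E}$, $\mathtt{B}$ for the associated building block ($\mathtt{GNS}$ or $\mathtt{Com}$), and $\mathrm{proj}$ for the associated projection (right multiplication by $\supp$ for GNS, two-sided compression by $\supp$ for the commutant). The whole argument reduces to three compatibilities between the tensor product of states and the data defining the complex, followed by a sign-bookkeeping check that the resulting graded identification intertwines the coboundaries. Throughout I write $\dens{\sigma} := \dens{\rho}_P \otimes \dens{\varphi}_Q$ and, for $S \subseteq P \vee Q$, set $T := S \cap P$ and $U := S \cap Q$.

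First I would record the purely algebraic facts. Since $\image(\dens{\rho}\otimes\dens{\varphi}) = \image(\dens{\rho})\otimes\image(\dens{\varphi})$, support projections factorize, $\supp_{\dens{\rho}\otimes\dens{\varphi}} = \supp_{\dens{\rho}}\otimes\supp_{\dens{\varphi}}$; reading off the block-matrix description of \S\ref{sec:building_blocks}, the canonical algebra isomorphism $\algebra{\hilb}\otimes\algebra{\mathcal{K}}\cong\algebra{\hilb\otimes\mathcal{K}}$ restricts to a vector-space isomorphism
\[
	\mu:\ \mathtt{B}(\dens{\rho})\otimes\mathtt{B}(\dens{\varphi})\ \overset{\sim}{\longrightarrow}\ \mathtt{B}(\dens{\rho}\otimes\dens{\varphi}),\qquad a\otimes b\longmapsto a\otimes b .
\]
Second, because $P\vee Q$ orders every element of $P$ before every element of $Q$, each $S$ splits canonically as $S=T\sqcup U$ with $\hilb_S=\hilb_T\otimes\hilb_U$ and no reshuffling, and the reduced states of $\dens{\sigma}$ factorize, $\dens{\sigma}_S = \dens{\rho}_T\otimes\dens{\varphi}_U$, since the partial trace over the disjoint complement $(P\setminus T)\sqcup(Q\setminus U)$ factors through the tensor product. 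Both facts are of the compatibility-of-supports flavor of Lem.~\ref{lem:compat_of_supports}.

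These combine into a component-level identification. The bijection $S\leftrightarrow(T,U)$ between size-$(m+1)$ subsets of $P\vee Q$ and pairs with $|T|+|U|=m+1$ (allowing $T$ or $U$ empty, which is exactly where the degree $-1$ augmentation term $\mathtt{B}(\dens{\rho}_\emptyset)=\mathbb{C}$ is used), together with $\mu$, yields
\[
	\cpx{C}^{m}(\bdens{\rho}_P\otimes\bdens{\varphi}_Q)\ \cong\ \bigoplus_{i+j=m-1}\cpx{C}^{i}(\bdens{\rho}_P)\otimes\cpx{C}^{j}(\bdens{\varphi}_Q)\ =\ \left(\cpx{C}(\bdens{\rho}_P)\otimes\cpx{C}(\bdens{\varphi}_Q)\right)^{m-1},
\]
because a block $\mathtt{B}(\dens{\rho}_T)$ sits in degree $|T|-1$ while $\mathtt{B}(\dens{\sigma}_S)$ sits in degree $|S|-1=|T|+|U|-1$. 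This is precisely a one-step shift of the grading, which is the shift $[1]$ appearing in the statement. Let $\Phi_0$ denote the sign-free graded map assembled from $\mu$ over all $S$.

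The remaining, and genuinely fiddly, step is to fix signs so that $\Phi_0$ becomes a cochain map; write $d$, $d_P$, $d_Q$ for the coboundaries of the three complexes. I would evaluate $d\,\Phi_0(x\otimes y)$ at a fixed face $S$ and split the alternating sum $\sum_l (-1)^l \Delta^{\,}_{l}$ according to whether the deleted vertex $S(l)$ lies in $T$ or in $U$. The ordering on $P \vee Q$ is exactly what makes this split cleanly: if $S(l)\in T$ then $l<|T|$, so the reshuffle $\Sigma_{(S,l)}$ acts only inside the $\hilb_T$ factor, the compression factors as $\supp_T\otimes\supp_U$, and $b\,\mathrm{proj}_U=b$ on $\mathtt{B}(\dens{\varphi}_U)$, reproducing $\mu\bigl((d_P x)_T\otimes y_U\bigr)$; if $S(l)\in U$ then $l=|T|+l'$, so $(-1)^l=(-1)^{|T|}(-1)^{l'}$ peels off a global Koszul factor $(-1)^{|T|}=(-1)^{i+1}$, reproducing $(-1)^{i+1}\mu\bigl(x_T\otimes(d_Q y)_U\bigr)$. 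Summing gives the graded Leibniz identity
\[
	d\,\Phi_0(x\otimes y)=\Phi_0(d_P x\otimes y)+(-1)^{i+1}\Phi_0(x\otimes d_Q y)
\]
for $x\otimes y$ of bidegree $(i,j)$. Comparing with $d_\otimes(x\otimes y)=d_P x\otimes y+(-1)^{i}x\otimes d_Q y$ and the sign contributed by the shift, I would define $\Phi:=\epsilon\cdot\Phi_0$ with a bidegree-dependent Koszul sign $\epsilon(i,j)=(-1)^{i}$; the recursions $\epsilon(i+1,j)=-\epsilon(i,j)$ and $\epsilon(i,j+1)=\epsilon(i,j)$ are exactly what is needed for $\Phi$ to intertwine $d_\otimes$ with the shifted coboundary. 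Since each $\mu$ is an isomorphism, $\Phi$ is a componentwise isomorphism, hence a cochain isomorphism, and the commutant case is verbatim with two-sided compression replacing right multiplication. The main obstacle is precisely this sign-and-reshuffle bookkeeping: confirming that the total order on $P\vee Q$ forces the face maps of the product complex to factor as the Koszul tensor differential of the two factor complexes. Everything else reduces to the factorization of supports and reduced states established above.
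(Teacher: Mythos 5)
Your proposal is correct and follows essentially the same route as the paper: identify the degree-$m$ component of $\cpx{C}(\bdens{\rho}_{P}\otimes\bdens{\varphi}_{Q})$ with the total-degree-$(m-1)$ component of $\cpx{C}(\bdens{\rho}_{P})\otimes\cpx{C}(\bdens{\varphi}_{Q})$ via the bijection $S\leftrightarrow(S\cap P,\,S\cap Q)$ together with the factorization of reduced states and building blocks, then match coboundaries. The only difference is that the paper stops at ``verifying the coboundary \dots is an exercise in unravelling definitions and keeping track of signs,'' whereas you actually carry out that exercise, and your Koszul-sign correction $\epsilon(i,j)=(-1)^{i}$ does make the componentwise isomorphism into a cochain map against the shifted tensor differential.
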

\begin{proof}
	We begin with the observation that, for Hilbert spaces $\hilb$ and $\mathcal{K}$ along with states $\dens{\alpha} \in \Dens(\hilb)$ and $\dens{\beta} \in \Dens(\mathcal{K})$, we have canonical isomorphisms (of underlying $\mathbb{C}$-vector spaces)
	\begin{align*}
		\mathtt{GNS}(\dens{\alpha} \otimes \dens{\beta}) &\cong \left(\hilb \otimes \mathcal{K} \right) \otimes \left( \image(\alpha)^{\vee} \otimes \image(\beta)^{\vee} \right)\\
														 &\cong \left(\hilb \otimes \image(\beta)^{\vee} \right) \otimes \left(\hilb \otimes \image(\alpha)^{\vee} \right)\\
														 &\cong \mathtt{GNS}(\dens{\alpha}) \otimes \mathtt{GNS}(\dens{\beta}).
	\end{align*}	
	With this observation, we can write a string of canonical isomorphisms
	\begin{align*}
		\cpx{G}(\bdens{\rho}_{P} \otimes \bdens{\varphi}_{Q})^{l} &= \prod_{|T|=l+1} \mathtt{GNS}\left[(\dens{\rho} \otimes \dens{\varphi})_{T} \right]\\
																  &\cong \prod_{|T|=l+1} \mathtt{GNS}\left[\dens{\rho}_{T \cap P} \right] \otimes \mathtt{GNS}\left[\dens{\varphi}_{T \cap Q} \right]\\
																  &\cong \prod_{T = U \vee V} \left(\prod_{\{U \subseteq P: |U|=m+1 \}}\mathtt{GNS}\left[\dens{\rho}_{U} \right] \right) \otimes \left(\prod_{\{V \subseteq Q: |V|=n+1 \}}\mathtt{GNS}\left[\dens{\rho}_{V} \right] \right)\\
																  &\cong \prod_{l=m+n+1} \cpx{G}(\bdens{\rho}_{P})^{m} \otimes \cpx{G}(\bdens{\varphi}_{Q})^{n};
	\end{align*} 
	where, on the second line, we used $(\dens{\rho} \otimes \dens{\varphi})_{T} = \dens{\rho}_{T \cap P} \otimes \dens{\varphi}_{T \cap J}$.
	Thus, at the level of graded vector spaces we have $\cpx{G}(\bdens{\rho}_{P} \otimes \bdens{\varphi}_{Q}) = (\cpx{G}(\bdens{\rho}_{P}) \otimes \cpx{G}(\bdens{\varphi}_{Q}))[1]$.
	Verifying that the coboundary is the same as the coboundary on the shifted tensor product cochain complex is an exercise in unravelling definitions and keeping track of signs.
	The proof of $\cpx{E}(\bdens{\rho}_{P} \otimes \bdens{\varphi}_{Q}) \cong \left(\cpx{E}(\bdens{\rho}_{P}) \otimes \cpx{E}(\bdens{\varphi}_{Q}) \right)[1]$ follows nearly identical reasoning.
\end{proof}

Combining this result with the weak K\"{u}nneth theorem (Thm.~\ref{thm:weak_kunneth}) we have:
\begin{align*}
	H^{k}\left[ \cpx{G}(\bdens{\rho}_{P} \otimes \bdens{\varphi}_{Q})  \right]&\cong H^{k}\left[ \left( \cpx{G}(\bdens{\rho}_{P}) \otimes \cpx{G}(\bdens{\varphi}_{Q}) \right) \right][1];\\
	H^{k}\left[ \cpx{E}(\bdens{\rho}_{P} \otimes \bdens{\varphi}_{Q})  \right]&\cong H^{k}\left[ \left(\cpx{E}(\bdens{\rho}_{P}) \otimes \cpx{E}(\bdens{\varphi}_{Q}) \right) \right][1].
\end{align*}
for all $k \in \mathbb{Z}$.  Stated in terms of Poincar\'{e} polynomials, we have the following corollary.
\begin{corollary}{}{poincare_factorization}
	Let $\bdens{\rho}_{P}$ and $\bdens{\varphi}_{Q}$ be multipartite density states, then we have a factorization of Poincar\'{e} polynomials:
	\begin{align*} 
		P_{\cpx{G}}(\bdens{\rho}_{P} \otimes \bdens{\varphi}_{Q}) = y P_{\cpx{G}}(\bdens{\rho}_{P}) P_{\cpx{G}}(\bdens{\varphi}_{Q}),\\
		P_{\cpx{E}}(\bdens{\rho}_{P} \otimes \bdens{\varphi}_{Q}) = y P_{\cpx{E}}(\bdens{\rho}_{P}) P_{\cpx{E}}(\bdens{\varphi}_{Q}).
	\end{align*}
\end{corollary}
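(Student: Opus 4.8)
The plan is to deduce the corollary from two inputs already established: the complex-level factorization of Theorem~\ref{thm:cochain_fact} and the weak K\"{u}nneth theorem (Thm.~\ref{thm:weak_kunneth}). The overall logic is that K\"{u}nneth makes the Poincar\'{e} polynomial multiplicative under tensor products of complexes, while the degree shift appearing in Theorem~\ref{thm:cochain_fact} contributes exactly one factor of $y$.

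First I would record the behaviour of the Poincar\'{e} polynomial under the two operations separately. Over the field $\mathbb{C}$ the weak K\"{u}nneth theorem gives $H^{n}[\cpx{L} \otimes \cpx{M}] \cong \bigoplus_{i+j=n} H^{i}[\cpx{L}] \otimes H^{j}[\cpx{M}]$; since $\dim_{\mathbb{C}}$ is additive over direct sums and multiplicative over tensor products of finite-dimensional spaces, this immediately yields $P_{\cpx{L}\otimes\cpx{M}} = P_{\cpx{L}}\,P_{\cpx{M}}$. Next, the grading shift of Theorem~\ref{thm:cochain_fact} identifies the degree-$l$ cohomology of the product state's complex with the degree-$(l-1)$ cohomology of the tensor product of the individual complexes, i.e. $H^{l}[\cpx{G}(\bdens{\rho}_P\otimes\bdens{\varphi}_Q)] \cong H^{l-1}[\cpx{G}(\bdens{\rho}_P)\otimes\cpx{G}(\bdens{\varphi}_Q)]$ (and likewise for $\cpx{E}$).

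Combining these and re-indexing the sum is then a one-line bookkeeping computation:
\begin{align*}
	P_{\cpx{G}}(\bdens{\rho}_{P} \otimes \bdens{\varphi}_{Q})
	&= \sum_{l} y^{l} \dim H^{l}\left[\cpx{G}(\bdens{\rho}_{P} \otimes \bdens{\varphi}_{Q})\right]\\
	&= \sum_{l} y^{l} \sum_{i+j=l-1} \dim H^{i}\left[\cpx{G}(\bdens{\rho}_{P})\right] \dim H^{j}\left[\cpx{G}(\bdens{\varphi}_{Q})\right]\\
	&= y \left(\sum_{i} y^{i} \dim H^{i}\left[\cpx{G}(\bdens{\rho}_{P})\right]\right)\left(\sum_{j} y^{j} \dim H^{j}\left[\cpx{G}(\bdens{\varphi}_{Q})\right]\right)\\
	&= y\, P_{\cpx{G}}(\bdens{\rho}_{P})\, P_{\cpx{G}}(\bdens{\varphi}_{Q}),
\end{align*}
where the substitution $l = i+j+1$ is what pulls out the single factor of $y$. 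The identical argument with $\mathtt{GNS}$ replaced by $\mathtt{Com}$ (and $\cpx{G}$ by $\cpx{E}$) proves the statement for $P_{\cpx{E}}$, since Theorem~\ref{thm:cochain_fact} and the weak K\"{u}nneth theorem apply verbatim to the commutant complexes.

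The only real subtlety---and the step I would be most careful about---is the direction of the grading shift: one must confirm that the shift in Theorem~\ref{thm:cochain_fact} raises cohomological degree by one (so that $H^{l}$ of the product complex matches $H^{l-1}$ of the tensor product), which is precisely what produces $y$ rather than $y^{-1}$. This is pinned down by the explicit degree identification in the proof of Theorem~\ref{thm:cochain_fact}: a size-$(l+1)$ subset $T = U \vee V$ of $P \vee Q$ forces $|U| + |V| = l+2$, so the corresponding summand sits in total degree $(|U|-1)+(|V|-1) = l-1$ of the tensor-product complex. Once that index match is confirmed, no convergence or sign issues arise, since all sums are finite and all coefficients are non-negative integers.
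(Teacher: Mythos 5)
Your proposal is correct and follows the same route as the paper: the paper likewise derives this corollary by feeding Theorem~\ref{thm:cochain_fact} into the weak K\"{u}nneth theorem and reading off the Poincar\'{e} polynomials, merely leaving the re-indexing $l=i+j+1$ implicit. Your explicit degree count ($|U|+|V|=l+1$ forcing total degree $l-1$ in the tensor-product complex) is the right way to pin down that the shift contributes $y$ rather than $y^{-1}$, and it matches the stated corollary.
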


The irritating shift ``$[1]$" on the right hand side of the equations in Thm.~\ref{thm:cochain_fact} (which manifests itself by multiplication by $y$ on the right hand side of the equations in Cor.~\ref{cor:poincare_factorization}) can be eliminated if we focus our attention on the shifted complexes $\cpx{E}(\bdens{\rho}_{P})[1]$ and $\cpx{G}(\bdens{\rho}_{P})[1]$.
It is worthwhile to define some simplified notation for these shifted complexes.

\begin{definition}{}{shifted_complexes}
	Let $\bdens{\rho}_{P}$ be a multipartite density state; define
	\begin{align*}
		\newmath{\twid{\cpx{G}}(\bdens{\rho}_{P})} &:= \cpx{E}(\bdens{\rho}_{P})[1],\\
		\newmath{\twid{\cpx{E}}(\bdens{\rho}_{P})} &:= \cpx{E}(\bdens{\rho}_{P})[1].
	\end{align*}
	Moreover, denote the associated Poincar\'{e} polynomials by
	\begin{align*}
		\newmath{P_{\twid{G}}(\bdens{\rho}_{P})} &= \sum_{k = 0}^{|P|-1} (-1)^{k} \left(\dim H^{k}[\twid{\cpx{G}}(\bdens{\rho}_{P})] \right) y^{k} = y P_{\cpx{G}}(\bdens{\rho}_{P}),\\
		\newmath{P_{\twid{E}}(\bdens{\rho}_{P})} &= \sum_{k = 0}^{|P|-1} (-1)^{k} \left(\dim H^{k}[\twid{\cpx{E}}(\bdens{\rho}_{P})] \right) y^{k} = y P_{\cpx{E}}(\bdens{\rho}_{P}).
	\end{align*}	
\end{definition}
Then Thm.~\ref{thm:cochain_fact} is equivalent to the existence of canonical isomorphisms
\begin{align*}
	\twid{\cpx{G}}(\bdens{\rho}_{P} \otimes \bdens{\varphi}_{Q}) &\cong \twid{\cpx{G}}(\bdens{\rho}_{P}) \otimes \twid{\cpx{G}}(\bdens{\rho}_{Q}),\\
	\twid{\cpx{E}}(\bdens{\rho}_{P} \otimes \bdens{\varphi}_{Q}) &\cong \twid{\cpx{E}}(\bdens{\rho}_{P}) \otimes \twid{\cpx{E}}(\bdens{\varphi}_{Q}).
\end{align*}
At the level of Poincar\'{e} polynomials, we have:
\begin{align*}
	P_{\twid{\cpx{G}}}(\bdens{\rho}_{P} \otimes \bdens{\varphi}_{Q}) &= P_{\twid{\cpx{G}}}(\bdens{\rho}_{P}) P_{\twid{\cpx{G}}}(\bdens{\varphi}_{Q}),\\
	P_{\twid{\cpx{E}}}(\bdens{\rho}_{P} \otimes \bdens{\varphi}_{Q}) &= P_{\twid{\cpx{E}}}(\bdens{\rho}_{P}) P_{\twid{\cpx{E}}}(\bdens{\varphi}_{Q}).
\end{align*}

\begin{remark}{}{}
	Motivated by  Thm.~\ref{thm:cochain_fact}, we can ask if there is an operation on multipartite density states that induces Cartesian products/direct sums of (co)chain complexes and sums of Poincar\'{e} polynomials. 
	Such an operation exists, but takes us outside of the realm of the purely quantum mechanical.
	First of all, it requires the notion of quantum-classical mixtures (e.g.\ tuples of density states on tuples of Hilbert spaces)\footnote{C.f.\ Def.~\ref{def:finstate}.}.
	But more importantly, defining such an operation requires a generalization of our notion of ``multipartite density state".
	Such a generalized notion requires a treatment in terms of presheaves of vector spaces, which is slightly beyond the scope of this paper and will be treated in future work.
\end{remark}

\subsubsection{Full Support Factorizability and Cohomology}

Given a multipartite density state $\bdens{\rho}_{P}$ we can define (reduced) multipartite density states associated to subsets $T \subseteq P$.
A multipartite state is factorizable if it can be written as a tensor product of the reduced states associated to the one-element subsets (i.e.\ the ``primitive" subsystems). 
\begin{definition}[label=def:full_support_factorizability]{}{}
	\begin{enumerate}
		\item Let $T \subseteq P$, then $\newmath{\bdens{\rho}_{T}}$ is the $|T|$-partite density state $(T, (\hilb_{t})_{t \in T}, \dens{\rho}_{T})$.

		\item $\bdens{\rho}_{P}$ is said to be \newword{fully factorizable} if $\bdens{\rho}_{P} = \bigotimes_{p \in P} \bdens{\rho}_{\{p\}}$.

		\item $\bdens{\rho}_{P}$ is said to be \newword{fully support factorizable} if $\bdens{\rho}_{P}$ is support equivalent to $\bigotimes_{p \in P} \bdens{\rho}_{\{p\}}$.
	\end{enumerate}
\end{definition}
Because our (co)chain complexes only depend on the support equivalence classes of density states, it is support factorizability that has the best hope of detection by our (co)chain complexes.
As in the bipartite situation, one can show that being fully support factorizable is generically a weaker condition than being fully factorizable.
Nevertheless, if we restrict our attention to pure multipartite density states, support factorizability suffices as described in the remark below.
\begin{remark}{}{}
	Let $\bdens{\rho}_{P}$ be a (pure) multipartite density state with $\dens{\rho} = \psi \otimes \psi^{\vee}$ for some $\psi \in \hilb_{P}$, then $\psi = \bigotimes_{p \in P} \psi_{p}$ for some $\psi_{p} \in \hilb_{p}$ if and only if $\bdens{\rho}$ is fully factorizable.
	Moreover, a pure multipartite density state is fully support factorizable if and only if it is fully factorizable.
\end{remark}
The notion of full (support) factorizability does not account coarser versions of factorizability that can occur for an $N$-partite density state for $N \geq 2$; we will return to these coarser notions of factorizability in \S\ref{sec:coarse_factorizability} (which require a few more definitions to make sense of in our already established notation).

Using results from the previous section, we can compute the cohomologies associated to fully support factorizable states.
\begin{theorem}[label=thm:support_fact_multipartite_cohomologies]{}{}
	If $\bdens{\rho}_{P}=(P,(\hilb_{p})_{p \in P}, \dens{\rho})$ is a fully support factorizable multipartite density state, then
	\begin{align}
		H^{k}\left[\cpx{G}(\bdens{\rho}_{P}) \right] \cong 		
		\left \{
			\begin{array}{ll}
				\mathbb{C}^{D}, & \text{if $k = |P|-1$}\\
				0, & \text{otherwise}
			\end{array}
		\right.
		\label{eq:factorizable_multipartite_GNS}
	\end{align}
	where $D = \prod_{p \in P} (\dim_{\mathbb{C}}(\hilb_{p}) \rank(\dens{\rho}_{\{p\}})-1)$, and 
	\begin{align}
		H^{k}\left[\cpx{E}(\bdens{\rho}_{P}) \right] \cong 		
		\left \{
			\begin{array}{ll}
				\mathbb{C}^{d}, & \text{if $k = |P|-1$}\\
				0, & \text{otherwise}
			\end{array}
		\right.
		\label{eq:factorizable_multipartite_commutant}
	\end{align}
	where $d = \prod_{p \in P} (\rank(\dens{\rho}_{\{p\}})^2-1)$.
\end{theorem}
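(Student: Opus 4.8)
The plan is to reduce the fully support factorizable case to an honest tensor product of unipartite complexes, and then to compute the cohomology of that tensor product via the Künneth theorem. First I would invoke the fact that the multipartite complexes depend only on the support equivalence class of the density state (the multipartite analogue of Prop.~\ref{prop:support_invariance_bipartite}): since $\bdens{\rho}_{P}$ is fully support factorizable, it is support equivalent to $\bigotimes_{p \in P} \bdens{\rho}_{\{p\}}$ by Def.~\ref{def:full_support_factorizability}, so $\cpx{G}(\bdens{\rho}_{P}) = \cpx{G}(\bigotimes_{p \in P} \bdens{\rho}_{\{p\}})$ and likewise for $\cpx{E}$. This lets me replace $\bdens{\rho}_{P}$ by its fully factorized form at the outset and work entirely with a genuine tensor product of states, where Thm.~\ref{thm:cochain_fact} applies.

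Next I would iterate the factorization isomorphism of Thm.~\ref{thm:cochain_fact}. Applying it $|P|-1$ times, using associativity of $\otimes$ on complexes and of $\vee$ on ordered index sets, produces a canonical isomorphism $\cpx{G}(\bigotimes_{p} \bdens{\rho}_{\{p\}}) \cong (\bigotimes_{p \in P} \cpx{G}(\bdens{\rho}_{\{p\}}))[|P|-1]$, and similarly for $\cpx{E}$; equivalently, in the shift-free tilde notation of Def.~\ref{def:shifted_complexes}, $\twid{\cpx{G}}(\bigotimes_{p}\bdens{\rho}_{\{p\}}) \cong \bigotimes_{p \in P}\twid{\cpx{G}}(\bdens{\rho}_{\{p\}})$. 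Then I would apply the weak Künneth theorem (Thm.~\ref{thm:weak_kunneth}), valid since we work over $\mathbb{C}$, to obtain $H^{k}[\bigotimes_{p}\cpx{G}(\bdens{\rho}_{\{p\}})] \cong \bigoplus_{\sum_p k_p = k} \bigotimes_{p} H^{k_p}[\cpx{G}(\bdens{\rho}_{\{p\}})]$.

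The input to Künneth is the unipartite computation of Prop.~\ref{prop:unipartite}: each factor $H^{\bullet}[\cpx{G}(\bdens{\rho}_{\{p\}})]$ is concentrated in degree $0$, equal to $\mathtt{GNS}(\dens{\rho}_{\{p\}})/\mathrm{span}_{\mathbb{C}}\{\supp_{\{p\}}\}$, of dimension $\dim_{\mathbb{C}}(\hilb_p)\rank(\dens{\rho}_{\{p\}}) - 1$ (using $\mathtt{GNS}(\dens{\rho}_{\{p\}}) \cong \hilb_p \otimes \image(\dens{\rho}_{\{p\}})^{\vee}$), while $H^{\bullet}[\cpx{E}(\bdens{\rho}_{\{p\}})]$ is concentrated in degree $0$ of dimension $\rank(\dens{\rho}_{\{p\}})^2 - 1$. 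Since each tensor factor contributes only in degree $0$, the only surviving Künneth summand sits in total degree $0$, has dimension $\prod_p(\dim_{\mathbb{C}}(\hilb_p)\rank(\dens{\rho}_{\{p\}}) - 1) = D$ in the GNS case and $\prod_p(\rank(\dens{\rho}_{\{p\}})^2 - 1) = d$ in the commutant case, and is moved by the degree-$(|P|-1)$ shift into degree $|P|-1$. This yields exactly \eqref{eq:factorizable_multipartite_GNS} and \eqref{eq:factorizable_multipartite_commutant}.

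The main obstacle I anticipate is bookkeeping the degree shift correctly: one must verify that iterating the binary isomorphism of Thm.~\ref{thm:cochain_fact} accumulates a total shift of exactly $|P|-1$, and that this shift, together with the sign conventions in the coboundary of the tensor product complex, deposits the unique non-vanishing Künneth term in the top degree $|P|-1$ rather than elsewhere. Managing the signs in the iterated isomorphism, and checking that the isomorphisms are genuinely canonical (independent of the bracketing used for the iterated binary factorization), is the delicate part; the remaining dimension counts are routine given Prop.~\ref{prop:unipartite}.
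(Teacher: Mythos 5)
Your proposal is correct and follows essentially the same route as the paper's proof: reduce to the tensor product of unipartite complexes via Thm.~\ref{thm:cochain_fact}, then combine the weak K\"{u}nneth theorem with the unipartite computation of Prop.~\ref{prop:unipartite} and the compatibility of cohomology with degree shifts. Your explicit first step invoking support-equivalence invariance to replace $\bdens{\rho}_{P}$ by $\bigotimes_{p \in P}\bdens{\rho}_{\{p\}}$ is left implicit in the paper but is a worthwhile clarification, since Thm.~\ref{thm:cochain_fact} applies literally only to genuine tensor products of states.
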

\begin{proof}
	By Theorem~\ref{thm:cochain_fact} we have a decomposition in terms of unipartite complexes: 
	\begin{align*}	
		\cpx{G}(\bdens{\rho}_{P}) \cong \left(\bigotimes_{p \in P} \mathrm{G}(\bdens{\rho}_{\{p\}}) \right)[|P|-1].
	\end{align*}
	Using a combination of the Weak-K\"{u}nneth theorem, Prop.~\ref{prop:unipartite}, and the observation that $H(\cpx{C}[l]) = H(\cpx{C})[l]$ for any (co)chain complex $\cpx{C}$ (here we are thinking of the cohomology $H(\cpx{C})$ as a graded vector space/complex with zero differential), we have 
	\begin{align*}	 
		H\left[\cpx{G}(\bdens{\rho}_{P})\right] \cong \left(\bigotimes_{p \in P} \mathtt{GNS}(\dens{\rho}_{\{p\}})/\mathrm{span}_{\mathbb{C}} \{ \supp_{\{p \}} \} \right)[|P|-1]
	\end{align*}
	and \eqref{eq:factorizable_multipartite_GNS} follows.  The equation
	\eqref{eq:factorizable_multipartite_commutant} follows via similar reasoning.
\end{proof}

Specializing to bipartite states (i.e.\ $|P|=2$), we note that Thm.~\ref{thm:support_fact_then_vanishing} follows as a corollary of Thm.~\ref{thm:support_fact_multipartite_cohomologies}.  Moreover, for pure multipartite density states, we have the following corollary.
\begin{corollary}[label=cor:fully_fact_to_cohom]{}{}
	Let $\bdens{\rho}_{P}$ be a pure, fully factorizable $N$-partite density state, then:
	\begin{itemize}
		\item The only non-vanishing cohomology group of $\cpx{G}(\bdens{\rho}_{P})$ is in degree $N-1$ with
			\begin{align*}
				\dim H^{N-1}[\cpx{G}(\bdens{\rho}_{P})] = \prod_{p \in P} (\dim \hilb_{p} -1);
			\end{align*}

		\item The complex $\cpx{E}(\bdens{\rho}_{P})$ has vanishing cohomology: $\dim H^{k}[\cpx{E}(\bdens{\rho}_{P})] \equiv 0$ for all $k \in \mathbb{Z}$.
	\end{itemize}
\end{corollary}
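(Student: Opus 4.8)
The plan is to derive this as a direct specialization of Theorem~\ref{thm:support_fact_multipartite_cohomologies}, whose conclusion already pins down the cohomology of any fully support factorizable state in terms of the ranks of the single-factor reduced states $\dens{\rho}_{\{p\}}$. The only genuine content to supply is the translation of the hypothesis ``pure and fully factorizable'' into the hypothesis ``fully support factorizable with rank-one marginals,'' after which both assertions become the arithmetic of substituting $\rank(\dens{\rho}_{\{p\}}) = 1$ into the two dimension formulas.

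First I would record that a pure, fully factorizable state has rank-one single-body marginals. Writing $\dens{\rho} = \bigotimes_{p \in P} \dens{\rho}_{\{p\}}$ (the definition of full factorizability) and using that $\rank\big(\bigotimes_{p} \dens{\rho}_{\{p\}}\big) = \prod_{p} \rank(\dens{\rho}_{\{p\}})$, purity of $\dens{\rho}$, i.e.\ $\rank(\dens{\rho}) = 1$, forces $\rank(\dens{\rho}_{\{p\}}) = 1$ for every $p \in P$. Equivalently, one can invoke the remark preceding the theorem: a pure multipartite state is fully factorizable if and only if $\psi = \bigotimes_{p} \psi_{p}$, whence $\dens{\rho}_{\{p\}} = \psi_{p} \otimes \psi_{p}^{\vee}$ is manifestly rank one. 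In particular the state is fully support factorizable, so Theorem~\ref{thm:support_fact_multipartite_cohomologies} applies directly.

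Next I would substitute $\rank(\dens{\rho}_{\{p\}}) = 1$ into the two exponents appearing in that theorem. For the GNS complex,
\[
D = \prod_{p \in P}\left(\dim_{\mathbb{C}}(\hilb_{p})\,\rank(\dens{\rho}_{\{p\}}) - 1\right) = \prod_{p \in P}\left(\dim \hilb_{p} - 1\right),
\]
which gives the asserted value of $\dim H^{N-1}[\cpx{G}(\bdens{\rho}_{P})]$, with all lower cohomology vanishing by the theorem. For the commutant complex,
\[
d = \prod_{p \in P}\left(\rank(\dens{\rho}_{\{p\}})^{2} - 1\right) = \prod_{p \in P}(1 - 1) = 0,
\]
so $H^{N-1}[\cpx{E}(\bdens{\rho}_{P})] = 0$ as well; combined with the vanishing in all other degrees, this yields $H^{k}[\cpx{E}(\bdens{\rho}_{P})] = 0$ for all $k \in \mathbb{Z}$.

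There is essentially no obstacle here beyond the bookkeeping above; the corollary is purely the arithmetic of setting the marginal ranks to one in the conclusion of Theorem~\ref{thm:support_fact_multipartite_cohomologies}. The one point worth stating carefully, to avoid circularity, is that at this stage I rely only on the rank computation and on the propagation of purity to rank-one marginals, both independent of the homological machinery. The theorem itself already supplies the tensor-product decomposition $\cpx{G}(\bdens{\rho}_{P}) \cong \big(\bigotimes_{p \in P} \cpx{G}(\bdens{\rho}_{\{p\}})\big)[|P|-1]$ together with the weak Künneth theorem and Prop.~\ref{prop:unipartite}, so no further homological input is needed to pass from the theorem to the corollary.
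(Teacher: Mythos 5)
Your proposal is correct and follows exactly the route the paper intends: the corollary is presented as an immediate specialization of Thm.~\ref{thm:support_fact_multipartite_cohomologies}, and the only work is observing that purity of a fully factorizable state forces $\rank(\dens{\rho}_{\{p\}})=1$ for each $p$, after which the dimension formulas reduce to $\prod_{p}(\dim\hilb_{p}-1)$ and $\prod_{p}(1-1)=0$ respectively. No gaps.
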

The latter statement of this corollary should be compared to Thm.~\ref{thm:support_fact_then_vanishing}, which states that a pure state $\bdens{\rho}_{P}$ is (fully) factorizable if and only if $\cpx{E}(\bdens{\rho}_{P})$ has vanishing cohomology; the first statement should be compared to Thm.~\ref{thm:pure_commutant_cohomology}, which states that a bipartite pure state $\bdens{\rho}_{P}$ is (fully) factorizable if and only if the cohomology of $\cpx{G}(\bdens{\rho}_{P})$ is concentrated in degree 1.
The ``if" parts of both of these bipartite theorems---which are statements of the form: factorizability of a pure bipartite state $\Rightarrow$ vanishing of certain cohomology groups---follow as corollaries of the above.
We might then ask if there are any multipartite generalizations of the ``only if" parts: i.e.\ can the vanishing of certain cohomology groups say anything about full factorizability.
This motivates the following example, which shows that the converse to the second statement of Cor.~\ref{cor:fully_fact_to_cohom} does not hold.

\begin{example}{Tensoring by a Unipartite Pure States Destroys Commutant Cohomology}{}
	Let $\bdens{\alpha}_{P}$ be an $N$-partite density state and suppose $\beta_{*} = ((*), (\hilb_{*}), \psi \otimes \psi^{\vee})$ is a (pure) unipartite state with $\psi \in \hilb_{*}$, then by an application of Thms.~\ref{thm:cochain_fact} and \ref{thm:support_fact_multipartite_cohomologies} the $(N+1)$-partite density state $\bdens{\alpha}_{P} \otimes \bdens{\beta}_{*}$ has
	\begin{align*}
		H^{k}[\cpx{E}(\bdens{\alpha}_{P} \otimes \bdens{\beta}_{*})] \equiv 0	
	\end{align*}
	for all $k \in \mathbb{Z}$.
\end{example}
In particular, if we take $\bdens{\alpha}_{P}$ to be a pure, non-factorizable, \textit{bipartite} state, then by Thm.~\ref{thm:pure_commutant_cohomology}, the zeroth cohomology group of $\cpx{E}(\bdens{\rho}_{P})$ is non-vanishing; however, by tensoring with an arbitrary unipartite pure state $\bdens{\beta}_{}$ on disjoint tensor factor, the resulting cohomology of $\cpx{E}(\bdens{\alpha}_{P} \otimes \bdens{\beta}_{*})$ is trivial even though $\bdens{\alpha}_{P} \otimes \bdens{\beta}_{*}$ is not fully factorizable.

On the other hand, this example motivates us to look at coarser measures of factorizability: we can use the data of any tripartite density state $\bdens{\rho}_{\sABC} = ((\sA,\sB,\sC), (\hilb_{\sA}, \hilb_{\sB}, \hilb_{\sC}), \dens{\rho}_{\sABC})$ to define a ``coarse-grained'' bipartite density state with respect to the partition $(\sAB, \sC)$: i.e.\ a bipartite state $\rho_{(\sAB)\sC} := ((\sAB, \sC), (\hilb_{\sAB}, \hilb_{\sC}), \dens{\rho}_{\sABC})$.
If we have $\bdens{\rho}_{\sABC} = \bdens{\alpha}_{\sAB} \otimes \bdens{\beta}_{*}$, then the resulting bipartite density state $\bdens{\rho}_{(\sAB)\sC}$ is factorizable.

\subsection{Factorizability with Respect to a Partition \label{sec:coarse_factorizability}}
For a bipartite density state there was only one non-trivial notion of factorizability, coinciding with what was referred to as ``full factorizability" in the previous section.
However, for multipartite density states, there is an entire zoo of possible generalizations of this notion of factorizability.
In particular ---as indicated end of the previous section---one might wish to study if a multipartite density state is factorizable with respect to arbitrary partitions of the tensor factors.\footnote{Given a multipartite density state $\bdens{\rho}_{P}$ one might also ask if any of the reductions $\bdens{\rho}_{T},\, T \subseteq T$ are factorizable in some sense; we will not explore this latter question in this paper.}

\subsubsection{Partitions and their Coarsenings}
We begin by recalling the definition of a partition of a set, placing an order on the elements of the partition when the set is totally ordered.
\begin{definition}{}{}
	Let $P$ be a finite set.
	\begin{enumerate}
		\item A \newword{partition} of $P$ is a collection $\lambda \subseteq \mathrm{Power}(P)$ such that the union $\bigcup_{T \in \lambda} T = P$  and $T \cap V = \emptyset$ for any $T, V \in \lambda$ with $T \neq V$. 

		\item The number of elements in a partition $\lambda$ of $P$ is denoted $|\lambda|$ and called the \newword{length} of $\lambda$. 

		\item Suppose $P$ is totally ordered, and $\lambda$ is a partition of $P$.  Using the lexicographical ordering on elements of $\mathrm{Power}(P)$ (a total order), $\lambda$ is equipped with a total order and we occasionally write it as a tuple $(\lambda_{1}, \lambda_{2}, \cdots, \lambda_{|\lambda|})$ where $\lambda_{i} < \lambda_{j}$ for $i < j$.
	\end{enumerate}
\end{definition}
The set of partitions of a set form a poset under refinement; for our purposes it is more natural to speak of ``coarsenings".\footnote{The deep reason behind this is that every partition $\lambda = (\lambda_{1}, \cdots \lambda_{L})$ of a set $P$ forms a ``complementary" cover of $P$ by open sets $\mathfrak{U}^{\lambda} := \{P \backslash \lambda_{l} \}_{l = 1}^{L}$; the cochain complexes formed in this paper are constructed by using \v{C}ech techniques to a presheaf using such covers.  Coarsenings of partitions correspond to refinements of covers.}
\begin{definition}{}{}
	A partition $\eta = (\eta_{1}, \cdots, \eta_{M})$ is a \newword{coarsening} of a partition $\lambda = (\lambda_{1}, \cdots, \lambda_{L})$, written $\eta \geq \lambda$, if for each $l \in \{1, \cdots L \}$ we have $\lambda_{l} \subseteq \eta_{m}$ for some $m \in \{1, \cdots, M \}$. 
\end{definition}
This definition is a trivial modification of the more common usage of the word \newword{refinement}: saying $\eta$ is a coarsening of $\lambda$ is equivalent to saying $\lambda$ is a refinement of $\eta$.  Note that, in particular, if $\eta$ is a coarsening of $\lambda$ we have $|\eta| \geq |\lambda|$ (with equality if and only if $\eta = \lambda$).

\subsubsection{Factorizability With Respect to a Partition}
We now give a formal definition of the factorizability of a multipartite density state with respect to a particular partition.
\begin{definition}[label=def:coarsenings]{}{}
	Let $\bdens{\rho}_{P} = (P, (\hilb_{p})_{p \in P}, \dens{\rho})$ be a multipartite density state and $\lambda$ a partition of $P$ of length $L$.
	\begin{enumerate}
		\item The \newword{$\lambda$-coarsening} of $\bdens{\rho}_{P}$ is the $L$-partite density state\footnote{Note that we are using the notation of Def.~\ref{def:multipartite_induced_data}, i.e.\ $\hilb_{\lambda_{l}} = \bigotimes_{p \in \lambda_{l}} \hilb_{p}$.}
			\begin{align*}
				\newmath{\lambda[\bdens{\rho}_{P}]} := \left((1,\cdots,L), (\hilb_{\lambda_{l}})_{l=1}^{L}, \lambda[\dens{\rho}]\right) 
			\end{align*}
			where
			\begin{align*}
				\newmath{\lambda[\dens{\rho}]} := u_{\lambda} \rho u_{\lambda}^{*}	
			\end{align*}
			with 
			\begin{align*}
				u_{\lambda}: \bigotimes_{p \in P} \hilb_{p} \overset{\sim}{\longrightarrow} \bigotimes_{l=1}^{L} \hilb_{\lambda_{l}}
			\end{align*}
			being the (unitary) reshuffling isomorphism defined by linearization of the map
			\begin{align*}
				\psi_{p_{1}} \otimes \cdots \otimes \psi_{p_{N}} &\longmapsto \left(\bigotimes_{q^{1} \in \lambda_{1}} \psi_{q^{1}} \right) \otimes \cdots \otimes \left(\bigotimes_{q^{L} \in \lambda_{L}} \psi_{q^{L}} \right),
			\end{align*}
			where $P = (p_{1}, \cdots, p_{N}),\, \psi_{q} \in \hilb_{q}$, and the order of the tensor product of each $\bigotimes_{q^{i} \in \lambda_{i}} \psi_{q^{i}}$ is in turn given by the order on elements of $P$.

		\item $\bdens{\rho}_{P}$ is \newword{factorizable with respect to $\lambda$} (a.k.a. \newword{$\lambda$-factorizable}) if $\lambda[\bdens{\rho}_{P}]$ is fully factorizable (c.f.\ Def.~\ref{def:full_support_factorizability}).

		\item $\bdens{\rho}_{P}$ is \newword{support factorizable with respect to $\lambda$} (a.k.a. \newword{$\lambda$-support factorizable}) if $\lambda[\bdens{\rho}_{P}]$ is fully support factorizable. 
	\end{enumerate}
\end{definition}

We make some quick remarks about these definitions.
\begin{remark}{}{pure_states_drop_support}
	\begin{enumerate}
		\item Once again, $\lambda$-\textit{support} factorizability is a weaker notion of $\lambda$-factorizability for general mixed multipartite density states: i.e.\ $\lambda$-factorizability implies $\lambda$-support factorizability, but the converse is not true.  However, for \textit{pure} multipartite states, $\lambda$-support factorizability coincides with $\lambda$-factorizability and the adjective ``support" is an unnecessary decoration.

		\item Within the collection of all partitions of a (totally ordered) set $P = (p_{1}, \cdots, p_{N})$ there are two ``extreme" partitions: the ``coarsest" or ``trivial" partition of length 1: $\lambda^{P}_{\mathrm{triv}} = ((P))$, and the ``finest" or ``full" partition of length $N$: $\lambda^{P}_{\mathrm{full}} := ((p_{1}), \cdots, (p_{N}))$.  Any multipartite density state $\bdens{\rho}_{P}$ is automatically (support) factorizable with respect to $\lambda^{P}_{\mathrm{triv}}$, while $\bdens{\rho}_{P}$ is (support) factorizable with respect to $\lambda^{P}_{\mathrm{full}}$ if and only if it is fully (support) factorizable. 
	\end{enumerate}
\end{remark}
The following proposition is a straightforward exercise.
\begin{lemma}{}{partition_fact_to_coarsening_fact}
	If $\bdens{\rho}_{P}$ is $\lambda$-(support) factorizable for some $P$ then it is $\eta$-(support) factorizable for any coarsening $\eta \geq \lambda$.
\end{lemma}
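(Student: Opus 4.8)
The plan is to reduce the statement to the special case where $\lambda$ is the full partition---namely, to the assertion that \emph{any coarsening of a fully (support) factorizable multipartite density state is again fully (support) factorizable}---and then to verify this special case by a direct regrouping of tensor factors. This reduction rests on a composition property of the coarsening operation.

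First I would record the \textbf{composition property of coarsenings}. Since $\eta \geq \lambda$, every block $\lambda_{l}$ is contained in a unique block $\eta_{m}$, and this assignment $l \mapsto m$ defines a partition $\mu = (\mu_{1}, \dots, \mu_{M})$ of the index set $\{1, \dots, L\}$ of $\lambda$, where $\mu_{m} := \{\, l : \lambda_{l} \subseteq \eta_{m} \,\}$. Comparing the reshuffling isomorphisms of Def.~\ref{def:coarsenings}, one checks that $u_{\eta} = u_{\mu} \circ u_{\lambda}$ modulo the canonical identification $\hilb_{\eta_{m}} \cong \bigotimes_{l \in \mu_{m}} \hilb_{\lambda_{l}}$, and hence that $\eta[\bdens{\rho}_{P}] \cong \mu\bigl[\lambda[\bdens{\rho}_{P}]\bigr]$ as multipartite density states. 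By Def.~\ref{def:coarsenings} and Def.~\ref{def:full_support_factorizability}, $\bdens{\rho}_{P}$ is $\lambda$-factorizable precisely when $\lambda[\bdens{\rho}_{P}]$ is fully factorizable, and $\eta$-factorizable precisely when $\eta[\bdens{\rho}_{P}] \cong \mu[\lambda[\bdens{\rho}_{P}]]$ is. It therefore suffices to prove that $\mu[\bdens{\sigma}_{Q}]$ is fully (support) factorizable whenever $\bdens{\sigma}_{Q}$ is, where I set $\bdens{\sigma}_{Q} = \lambda[\bdens{\rho}_{P}]$ and $Q = \{1, \dots, L\}$.

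Next I would carry out the \textbf{regrouping computation}. Supposing $\dens{\sigma} = \bigotimes_{q \in Q} \dens{\sigma}_{\{q\}}$ is fully factorizable, applying $u_{\mu}$ and regrouping the factors block by block yields
\[
	\mu[\dens{\sigma}] = u_{\mu} \Bigl( \bigotimes_{q \in Q} \dens{\sigma}_{\{q\}} \Bigr) u_{\mu}^{*} = \bigotimes_{m=1}^{M} \Bigl( \bigotimes_{q \in \mu_{m}} \dens{\sigma}_{\{q\}} \Bigr).
\]
It remains to identify the parenthesized factor with the reduced state $(\mu[\dens{\sigma}])_{\{m\}}$: taking the partial trace over $\hilb_{\mu_{m'}}$ for all $m' \neq m$ and using that each $\dens{\sigma}_{\{q\}}$ has unit trace leaves exactly $\bigotimes_{q \in \mu_{m}} \dens{\sigma}_{\{q\}}$. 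Hence $\mu[\dens{\sigma}] = \bigotimes_{m} (\mu[\dens{\sigma}])_{\{m\}}$, which is the desired full factorizability. For the \textbf{support} version I would run the identical argument with support projections in place of the states, invoking $\supp_{\dens{\alpha} \otimes \dens{\beta}} = \supp_{\dens{\alpha}} \otimes \supp_{\dens{\beta}}$, the covariance $\supp_{u \dens{\sigma} u^{*}} = u\, \supp_{\dens{\sigma}}\, u^{*}$ (from the proof of Lem.~\ref{lem:building_block_unitary_equivariance}), and the multipartite analogue of Lem.~\ref{lem:support_equivalence_partial_trace} to match support projections of the reduced states.

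The only genuinely delicate point---\textbf{the main obstacle}---is the bookkeeping behind the composition property $u_{\eta} = u_{\mu} \circ u_{\lambda}$: one must track the induced lexicographic orderings on the blocks of $\lambda$, $\eta$, and $\mu$ and confirm that the three reshuffling maps of Def.~\ref{def:coarsenings} are mutually compatible under the identifications $\hilb_{\eta_{m}} \cong \bigotimes_{l \in \mu_{m}} \hilb_{\lambda_{l}}$. This is purely notational, but it is where all the care is needed; once it is in place, the remaining steps are immediate, since both partial traces and support projections respect tensor products. (This is why the statement is, as advertised, a straightforward exercise.)
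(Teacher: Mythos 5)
Your proof is correct; the paper itself offers no argument here (it labels the lemma ``a straightforward exercise''), and what you have written is the natural one it intends: reduce via the composition $\eta[\bdens{\rho}_{P}] \cong \mu\bigl[\lambda[\bdens{\rho}_{P}]\bigr]$ to the statement that a coarsening of a fully (support) factorizable state is fully (support) factorizable, then regroup tensor factors and identify each block with the corresponding reduced state by partial trace. You also correctly isolate and dispose of the only real subtlety --- that $u_{\eta}$ and $u_{\mu}\circ u_{\lambda}$ agree only up to a within-block reshuffle when the $\lambda$-blocks interleave inside an $\eta$-block, which is a local unitary on each $\eta$-block and so preserves both full factorizability and the support projections of the blockwise reduced states.
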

Better yet, we can reduce the question of $\lambda$-(support) factorizability to a question about support factorizability with respect to any length 2 coarsening of $\lambda$. 
\begin{lemma}{}{lambda_fact_to_bipartite}
	$\bdens{\rho}_{P}$ is $\lambda$-(support) factorizable if and only if it is $\eta$-(support) factorizable for any coarsening $\eta \geq \lambda$ with $|\eta| = 2$. 
\end{lemma}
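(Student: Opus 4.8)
The forward implication is immediate and requires no new work: by Lemma~\ref{lem:partition_fact_to_coarsening_fact}, $\lambda$-(support) factorizability of $\bdens{\rho}_{P}$ implies $\eta$-(support) factorizability for \emph{every} coarsening $\eta \geq \lambda$, in particular for those of length $2$. The content of the lemma is therefore the converse, and my plan is to strip away the partition-bookkeeping, reduce to a clean statement about $L$-partite states, and then induct on the number of tensor factors.

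\textbf{Reduction.} Set $\bdens{\sigma}_{Q} := \lambda[\bdens{\rho}_{P}]$, the $L$-partite density state whose tensor factors $Q$ are indexed by the blocks of $\lambda$. Coarsenings compose: a coarsening $\eta \geq \lambda$ induces a partition of the index set $Q$, and $\eta[\bdens{\rho}_{P}]$ agrees with the corresponding coarsening of $\bdens{\sigma}_{Q}$ up to the canonical (unitary) reshuffling of tensor factors, which preserves both factorizability and support equivalence. The length-$2$ coarsenings $\eta \geq \lambda$ correspond exactly to the bipartitions of $Q$, and $\lambda$-factorizability of $\bdens{\rho}_{P}$ is by definition full factorizability of $\bdens{\sigma}_{Q}$ (c.f.\ Def.~\ref{def:coarsenings} and Def.~\ref{def:full_support_factorizability}). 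Thus it suffices to prove the following: an $L$-partite state $\bdens{\sigma}_{Q}$ that is (support) factorizable across every bipartition of $Q$ is fully (support) factorizable.

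\textbf{Induction.} I would argue by induction on $L = |Q|$, the cases $L \leq 2$ being trivial. Fix any $q \in Q$ and apply the hypothesis to the bipartition $(\{q\},\, Q \setminus \{q\})$, which gives $\dens{\sigma} = \dens{\sigma}_{\{q\}} \otimes \dens{\sigma}_{Q \setminus \{q\}}$ (in the support case, $\dens{\sigma}$ is support equivalent to this product). It then remains to verify that the reduced state $\bdens{\sigma}_{Q \setminus \{q\}}$ again factorizes across every bipartition of $Q \setminus \{q\}$, so that the inductive hypothesis applies to it. For a bipartition $(S,\, (Q\setminus\{q\})\setminus S)$ of $Q\setminus\{q\}$, I would invoke the hypothesis for the bipartition $(S,\, Q\setminus S)$ of $Q$ (with $q \in Q\setminus S$) and trace out $q$: from $\dens{\sigma} = \dens{\sigma}_{S} \otimes \dens{\sigma}_{Q\setminus S}$ and the fact that partial trace acts blockwise, consistency of reduced states yields $\dens{\sigma}_{Q\setminus\{q\}} = \dens{\sigma}_{S} \otimes \dens{\sigma}_{(Q\setminus\{q\})\setminus S}$. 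Combining the inductive conclusion $\dens{\sigma}_{Q\setminus\{q\}} = \bigotimes_{q' \neq q} \dens{\sigma}_{\{q'\}}$ with the peeling-off identity gives $\dens{\sigma} = \bigotimes_{q' \in Q} \dens{\sigma}_{\{q'\}}$, as desired.

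\textbf{Support version and the main obstacle.} The same induction runs verbatim in the support-factorizable setting, with equalities of operators replaced by support equivalences; the one step needing care is the inheritance claim, that the reduction of a support-factorized state is again support-factorized. This is exactly where I would invoke Lemma~\ref{lem:support_equivalence_partial_trace} (support equivalence is compatible with partial traces): applying it to the factorized model $\dens{\sigma}_{S} \otimes \dens{\sigma}_{Q\setminus S}$, viewed bipartitely with $Q\setminus\{q\}$ grouped against $q$, and tracing over $q$, shows $\dens{\sigma}_{Q\setminus\{q\}}$ is support equivalent to $\dens{\sigma}_{S} \otimes \dens{\sigma}_{(Q\setminus\{q\})\setminus S}$. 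I expect the difficulty here to be bookkeeping rather than conceptual: the genuine points to pin down are that the reduced states produced by tracing out $q$ are the intended ones (the consistency $\Tr_{q}[\dens{\sigma}_{S}\otimes\dens{\sigma}_{Q\setminus S}] = \dens{\sigma}_{S}\otimes\dens{\sigma}_{(Q\setminus S)\setminus\{q\}}$), and that the composition-of-coarsenings reduction tracks support equivalence, so the two notions of factorizability propagate in lockstep through the induction.
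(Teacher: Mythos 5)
The paper states this lemma without proof (the surrounding lemmas are explicitly left as exercises), so there is no in-text argument to compare against; your proof is correct and is the natural way to fill the gap. The reduction to ``bipartite-factorizable across every cut of $Q$ implies fully factorizable'' followed by induction on $|Q|$, peeling off one factor via the cut $(\{q\},\,Q\setminus\{q\})$ and propagating the hypothesis to $\bdens{\sigma}_{Q\setminus\{q\}}$ by tracing out $q$, goes through, and you correctly identify that the only step needing a cited result in the support version is the compatibility of support equivalence with partial traces (Lem.~\ref{lem:support_equivalence_partial_trace}), together with the fact that the support projection of a tensor product of density states is the tensor product of their support projections when assembling the final equality $\supp_{\dens{\sigma}} = \bigotimes_{q'\in Q}\supp_{\dens{\sigma}_{\{q'\}}}$.
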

So if $|\lambda| = L$ this involves checking factorizability of $\binom{L}{2} = \frac{1}{2} L (L-1)$ bipartite density states.  If our initial multipartite density state is pure, then one can check for factorizability of each of its (pure) bipartite coarsenings by checking the purity of their partial traces over a single tensor factor.

Next, let us explore the consequences of $\lambda$-factorizability on the cohomologies of cochain complexes associated to multipartite density states. Given an $N$-partite density state, for each partition $\lambda$ of its tensor factors we can associate a complex of its associated coarsening; this produces $2^{N}$ associated complexes associated to a multipartite density state.  For the sake of readability we introduce a natural shorthand notation for these complexes.
\begin{definition}{}{}
	Let $\bdens{\rho}_{P}$ be a multipartite density state and $\lambda$ a partition of $P$, then we define the complexes
	\begin{align*}
		\newmath{\cpx{G}_{\lambda}(\bdens{\rho}_{P})} &:= \cpx{G}(\lambda[\bdens{\rho}_{P}])\\
		\newmath{\cpx{E}_{\lambda}(\bdens{\rho}_{P})} &:= \cpx{E}(\lambda[\bdens{\rho}_{P}]).
	\end{align*}
\end{definition}
If a state is $\lambda$-factorizable, then there is a massive vanishing of cohomology groups: the cohomologies of the complexes associated to any coarsening of $\lambda$ are concentrated in the highest possible degree.
\begin{theorem}{}{multipartite_factorizable_coarsening_vanishing}
	Let $\bdens{\rho}_{P}$ be a multipartite density state and $\lambda$ a partition of $P$.
	\begin{enumerate}
		\item If $\bdens{\rho}_{P}$ is $\lambda$-support factorizable, then for each coarsening $\eta \geq \lambda$ we have $H^{k}[\cpx{G}_{\eta}(\bdens{\rho}_{P})] = 0$ for $k < |\eta|-1$. 

		\item If $\bdens{\rho}_{P}$ is $\lambda$-support factorizable, then for each coarsening $\eta \geq \lambda$ we have $H^{k}[\cpx{E}_{\eta}(\bdens{\rho}_{P})] = 0$ for $k < |\eta|-1$. 
	\end{enumerate}
\end{theorem}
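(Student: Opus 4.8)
The plan is to reduce the statement entirely to Theorem~\ref{thm:support_fact_multipartite_cohomologies} (the vanishing result for fully support factorizable states) by way of Lemma~\ref{lem:partition_fact_to_coarsening_fact} (coarsenings inherit factorizability). The two items of the theorem are proved identically, with $\cpx{G}$ replaced by $\cpx{E}$ throughout, so I would describe the argument once and remark that the commutant case is verbatim.

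First I would fix a coarsening $\eta \geq \lambda$ and apply Lemma~\ref{lem:partition_fact_to_coarsening_fact}: since $\bdens{\rho}_{P}$ is $\lambda$-support factorizable and $\eta$ is a coarsening of $\lambda$, the state $\bdens{\rho}_{P}$ is $\eta$-support factorizable. Unwinding item~3 of Definition~\ref{def:coarsenings}, this says precisely that the $|\eta|$-partite coarsening $\eta[\bdens{\rho}_{P}]$ is \emph{fully} support factorizable, now regarded as a multipartite density state over an index set of size $|\eta|$. Next I would apply Theorem~\ref{thm:support_fact_multipartite_cohomologies} directly to this fully support factorizable state $\eta[\bdens{\rho}_{P}]$. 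That theorem asserts that both its GNS and commutant cohomologies are concentrated in the top degree $|\eta|-1$; in particular $H^{k}[\cpx{G}(\eta[\bdens{\rho}_{P}])]=0$ and $H^{k}[\cpx{E}(\eta[\bdens{\rho}_{P}])]=0$ for every $k < |\eta|-1$. Recalling the shorthand $\cpx{G}_{\eta}(\bdens{\rho}_{P}) = \cpx{G}(\eta[\bdens{\rho}_{P}])$ and $\cpx{E}_{\eta}(\bdens{\rho}_{P}) = \cpx{E}(\eta[\bdens{\rho}_{P}])$ established just before the theorem, this is exactly the desired conclusion.

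There is no serious obstacle here: the substance of the result already lives in the two cited statements, and the present theorem is essentially a bookkeeping composition of them. The only point warranting a moment's care is that the chain of definitions composes faithfully---one must observe that $\eta[\bdens{\rho}_{P}]$ is itself a bona fide $|\eta|$-partite density state, so that the ``fully support factorizable'' hypothesis of Theorem~\ref{thm:support_fact_multipartite_cohomologies} applies to it without any modification. This is immediate from item~3 of Definition~\ref{def:coarsenings}, which defines $\eta$-support factorizability of $\bdens{\rho}_{P}$ to mean exactly full support factorizability of $\eta[\bdens{\rho}_{P}]$; hence the hypothesis transports cleanly and the argument closes.
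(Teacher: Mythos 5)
Your proof is correct and follows exactly the paper's own route: the paper likewise deduces the result by combining Lem.~\ref{lem:partition_fact_to_coarsening_fact} (coarsenings inherit support factorizability) with Thm.~\ref{thm:support_fact_multipartite_cohomologies} applied to the fully support factorizable state $\eta[\bdens{\rho}_{P}]$. Your additional remark that the definitions compose faithfully is a fair piece of bookkeeping that the paper leaves implicit.
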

\begin{proof}
	These statements are consequences of Lemma~\ref{lem:partition_fact_to_coarsening_fact} and Theorem.~\ref{thm:support_fact_multipartite_cohomologies}.
\end{proof}

Specializing to pure multipartite density states, one can supply a converse statement to the above theorem (requiring only bipartite coarsenings in the converse directions).

\begin{theorem}{}{pure_multipartite_cohomology_factorizability}
	Let $\bdens{\rho}_{P}$ be a pure multipartite density state and $\lambda$ a partition of $P$.
	\begin{enumerate}
		\labitem{(A)}{list:lambda_fact_G} $\bdens{\rho}_{P}$ is $\lambda$-support factorizable if and only if $H^{0}[\cpx{G}_{\eta}(\bdens{\rho}_{P})] = 0$ for all length 2 coarsenings $\eta \geq \lambda$. 

		\labitem{(B)}{list:lambda_fact_E} $\bdens{\rho}_{P}$ is $\lambda$-support factorizable if and only if $H^{0}[\cpx{E}_{\eta}(\bdens{\rho}_{P})] = 0$ for all length 2 coarsenings $\eta \geq \lambda$.
\end{enumerate}
\end{theorem}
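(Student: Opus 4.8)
The plan is to reduce the entire statement to the pure bipartite case, where Thm.~\ref{thm:pure_GNS_cohomology} and Thm.~\ref{thm:pure_commutant_cohomology} already provide cohomological characterizations of factorizability, and then to glue these local statements together using Lem.~\ref{lem:lambda_fact_to_bipartite}. The essential observation is that the cohomology conditions appearing in \ref{list:lambda_fact_G} and \ref{list:lambda_fact_E} are indexed over exactly the same set of length-$2$ coarsenings $\eta \geq \lambda$ that governs Lem.~\ref{lem:lambda_fact_to_bipartite}; so the proof becomes a chain of biconditionals, one link per such $\eta$, with no genuine computation required.

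First I would record two bookkeeping facts about a fixed length-$2$ coarsening $\eta \geq \lambda$. Since $|\eta| = 2$, the coarsening $\eta[\bdens{\rho}_P]$ is by definition a bipartite density state. Moreover $\eta[\dens{\rho}] = u_\eta \dens{\rho} u_\eta^{*}$ is a unitary conjugate of $\dens{\rho}$, so with $\dens{\rho} = \psi \otimes \psi^{\vee}$ pure we get $\eta[\dens{\rho}] = (u_\eta \psi) \otimes (u_\eta \psi)^{\vee}$, exhibiting $\eta[\bdens{\rho}_P]$ as a \emph{pure} bipartite density state. By Rmk.~\ref{rmk:pure_states_drop_support} this lets me drop the adjective ``support'' throughout: $\bdens{\rho}_P$ is $\eta$-support factorizable if and only if it is $\eta$-factorizable, i.e.\ if and only if the pure bipartite state $\eta[\bdens{\rho}_P]$ is factorizable (for a length-$2$ state, ``fully factorizable'' is just ordinary bipartite factorizability).

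Next I would assemble the chain of equivalences. By Lem.~\ref{lem:lambda_fact_to_bipartite}, $\bdens{\rho}_P$ is $\lambda$-support factorizable if and only if it is $\eta$-support factorizable for every length-$2$ coarsening $\eta \geq \lambda$, which by the previous paragraph holds if and only if each pure bipartite state $\eta[\bdens{\rho}_P]$ is factorizable. For \ref{list:lambda_fact_G} I would invoke Thm.~\ref{thm:pure_GNS_cohomology}, which states that a pure bipartite state is factorizable exactly when its degree-$0$ GNS cohomology vanishes; since $\cpx{G}_\eta(\bdens{\rho}_P) = \cpx{G}(\eta[\bdens{\rho}_P])$ by definition, this reads $H^{0}[\cpx{G}_\eta(\bdens{\rho}_P)] = 0$, giving \ref{list:lambda_fact_G}. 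For \ref{list:lambda_fact_E} I would instead invoke Thm.~\ref{thm:pure_commutant_cohomology} (concretely its $k = 0$ case $\dim H^{0}[\cpx{E}] = 2(S^2 - 1)$, which vanishes precisely when the Schmidt rank is $1$, i.e.\ when the state is factorizable), yielding $H^{0}[\cpx{E}_\eta(\bdens{\rho}_P)] = 0$ if and only if $\eta[\bdens{\rho}_P]$ is factorizable, hence \ref{list:lambda_fact_E}.

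Because every link in this chain is an honest biconditional quantified over the same finite set of length-$2$ coarsenings, the argument is a reduction rather than a computation, and there is no substantive obstacle. The only places demanding care are the two bookkeeping facts above---that a length-$2$ coarsening of a pure state is again a pure bipartite state, and that support factorizability coincides with factorizability for pure states---since these are exactly what ensure the hypotheses of Thm.~\ref{thm:pure_GNS_cohomology} and Thm.~\ref{thm:pure_commutant_cohomology} are met. I would also double-check that the quantifier in Lem.~\ref{lem:lambda_fact_to_bipartite} is indeed ``for all length-$2$ coarsenings $\eta \geq \lambda$,'' as a mismatch in this indexing convention is the only way the stated equivalence could break down.
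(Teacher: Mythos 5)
Your proposal is correct and follows the paper's own proof exactly: the paper proves \ref{list:lambda_fact_G} by combining Lem.~\ref{lem:lambda_fact_to_bipartite} with Thm.~\ref{thm:pure_GNS_cohomology}, and \ref{list:lambda_fact_E} by the same reduction using Thm.~\ref{thm:pure_commutant_cohomology}. The only difference is that you make explicit the bookkeeping the paper leaves implicit (purity of length-$2$ coarsenings and the collapse of support factorizability to factorizability for pure states), which is a welcome clarification rather than a deviation.
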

\begin{proof}
	To prove \ref{list:lambda_fact_G} we use Lem.~\ref{lem:lambda_fact_to_bipartite} along with Thm.~\ref{thm:pure_GNS_cohomology}.
	To prove \ref{list:lambda_fact_E} we use Thm.~\ref{thm:pure_commutant_cohomology}.
\end{proof}
Of course, if bipartite cohomologies are computed purely by brute force this theorem does not offer any computational advantage over checking partial traces with respect to bipartite coarsenings to verify factorizability.
However, it does guarantee that if a multipartite density state $\bdens{\rho}_{P}$ state is \newword{not} $\lambda$-factorizable, then there exists length 2 coarsenings $\eta \geq \lambda$ and $\chi \geq \lambda$ such that $H^{0}[\cpx{G}_{\eta}(\bdens{\rho}_{P})] \neq 0$ and $H^{0}[\cpx{E}_{\chi}(\bdens{\rho}_{P})] \neq 0$.
Using the statements in \S\ref{sec:bipartite_factorizability_pure} it is easy to see that one can take $\eta = \chi$.

\begin{remark}{}{}
	The relationship between cohomologies of coarsenings of a multipartite density state, which can be expressed in terms of the relationship between \v{C}ech cohomologies as one takes refinements, is an important mostly unexplored aspect of this paper.
	One might ask, for instance, what the cohomology of a multipartite density state says about the cohomologies of its coarsenings and vice-versa.  The author hopes to come back to this issue in a more sophisticated discussion of the constructs in this paper. 
\end{remark}
As a small step toward understanding the issues mentioned in the remark above, the following theorem expresses the consequence of $\lambda$-factorizability on cohomologies of complexes computed using the \textit{finest} partition (i.e.\ the multipartite complexes considered before this section).
\begin{theorem}{}{finest_partition_cohomology_factorizability}
	If $\bdens{\rho}_{P}$ is $\lambda$-support factorizable for some $|\lambda| = L$, then $H^{k}[\cpx{G}(\bdens{\rho}_{P})] =0$ and $H^{k}[\cpx{E}(\bdens{\rho}_{P})] = 0$ for all $k \leq L-2$.
\end{theorem}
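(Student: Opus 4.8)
The plan is to reduce the statement to the tensor-product factorization of Poincar\'{e} polynomials (Cor.~\ref{cor:poincare_factorization}) by exploiting support invariance. First I would unwind the definition of $\lambda$-support factorizability (Def.~\ref{def:coarsenings}): it asserts that the coarsening $\lambda[\bdens{\rho}_{P}]$ is fully support factorizable, i.e.\ $\lambda[\dens{\rho}] = u_{\lambda} \dens{\rho} u_{\lambda}^{*}$ is support equivalent to $\bigotimes_{l=1}^{L} \dens{\rho}_{\lambda_{l}}$. Undoing the reshuffle $u_{\lambda}$ and reading each block $\lambda_{l}$ as a $|\lambda_{l}|$-partite system in its own right, this says precisely that $\bdens{\rho}_{P}$ is support equivalent---after permuting tensor factors into the order induced by the $\vee$ operation---to the multipartite tensor product $\bdens{\rho}_{\lambda_{1}} \otimes \cdots \otimes \bdens{\rho}_{\lambda_{L}}$ of Def.~\ref{def:multipartite_state_tensor}, where each $\bdens{\rho}_{\lambda_{l}}$ is the reduction of Def.~\ref{def:full_support_factorizability}.

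Combining permutation equivariance (Prop.~\ref{prop:multipartite_permutation_equivariance}) with the fact that the complexes depend only on support-equivalence classes (the multipartite analogue of Prop.~\ref{prop:support_invariance_bipartite} from \S\ref{sec:multipartite_properties}), I would then obtain cochain isomorphisms $\cpx{G}(\bdens{\rho}_{P}) \cong \cpx{G}(\bdens{\rho}_{\lambda_{1}} \otimes \cdots \otimes \bdens{\rho}_{\lambda_{L}})$ and $\cpx{E}(\bdens{\rho}_{P}) \cong \cpx{E}(\bdens{\rho}_{\lambda_{1}} \otimes \cdots \otimes \bdens{\rho}_{\lambda_{L}})$. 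In particular their Poincar\'{e} polynomials agree, so it suffices to analyze the right-hand side.

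Next I would apply Cor.~\ref{cor:poincare_factorization} iteratively. Since each binary tensor product contributes one factor of $y$, an $L$-fold product yields
\begin{align*}
	P_{\cpx{G}}(\bdens{\rho}_{\lambda_{1}} \otimes \cdots \otimes \bdens{\rho}_{\lambda_{L}}) = y^{L-1} \prod_{l=1}^{L} P_{\cpx{G}}(\bdens{\rho}_{\lambda_{l}}),
\end{align*}
and identically for $P_{\cpx{E}}$. Each $P_{\cpx{G}}(\bdens{\rho}_{\lambda_{l}})$ is a polynomial with nonnegative integer coefficients whose lowest-degree term sits in degree $\geq 0$, since the complex $\cpx{G}(\bdens{\rho}_{\lambda_{l}})$ is supported in cohomological degrees $0,\dots,|\lambda_{l}|-1$ (its degree $-1$ differential is injective, so $H^{-1}=0$). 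Hence the product $\prod_{l} P_{\cpx{G}}(\bdens{\rho}_{\lambda_{l}})$ has lowest degree $\geq 0$, and multiplication by $y^{L-1}$ shows that $P_{\cpx{G}}(\bdens{\rho}_{P})$ is divisible by $y^{L-1}$; equivalently $\dim H^{k}[\cpx{G}(\bdens{\rho}_{P})] = 0$ for all $k \leq L-2$. The identical argument with $P_{\cpx{E}}$ gives $H^{k}[\cpx{E}(\bdens{\rho}_{P})] = 0$ for $k \leq L-2$, completing the proof. (One could instead argue directly at the chain level via Thm.~\ref{thm:cochain_fact} and the weak K\"{u}nneth theorem, noting that $\bigoplus_{\sum_{l} j_{l} = m} \bigotimes_{l} H^{j_{l}}[\cpx{G}(\bdens{\rho}_{\lambda_{l}})]$ vanishes for $m < 0$ and that the total shift of $L-1$ pushes the bottom of the cohomology up to degree $L-1$; the Poincar\'{e}-polynomial route is preferable only because it sidesteps any sign-tracking in the degree shift.)

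The main obstacle is the bookkeeping of the very first step: carefully matching the reduced states $\dens{\rho}_{\lambda_{l}}$, the reshuffle $u_{\lambda}$, and the tensor-factor ordering conventions of $\vee$, so as to legitimately identify $\bdens{\rho}_{P}$ (up to support equivalence and a permutation) with the multipartite tensor product $\bigotimes_{l} \bdens{\rho}_{\lambda_{l}}$. Everything after that identification is mechanical, driven entirely by the $y^{L-1}$ prefactor produced by Cor.~\ref{cor:poincare_factorization}.
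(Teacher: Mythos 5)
Your proposal is correct and follows essentially the same route as the paper's proof: reduce to the $\lambda$-factorizable case via support-equivalence invariance, identify $\bdens{\rho}_{P}$ with a permutation of $\bigotimes_{l}\bdens{\rho}_{\lambda_{l}}$, and invoke permutation equivariance together with Cor.~\ref{cor:poincare_factorization} to get $P_{\cpx{C}}(\bdens{\rho}_{P}) = y^{L-1}\prod_{l}P_{\cpx{C}}(\bdens{\rho}_{\lambda_{l}})$, from which the vanishing in degrees $\leq L-2$ is immediate. The extra care you take in justifying that each factor's Poincar\'{e} polynomial has lowest degree $\geq 0$ is a point the paper leaves implicit, but it does not change the argument.
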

\begin{proof}
	Every $\lambda$-support factorizable state is support equivalent to a $\lambda$-factorizable state and cohomologies only depend on support equivalence classes.
	Hence, wlog we begin by supposing that $\bdens{\rho}_{P}$ is $\lambda$-factorizable.
	By definition $\lambda[\dens{\rho}_{P}] = \bigotimes_{l = 1}^{L} \lambda[\dens{\rho}]_{l} = \bigotimes_{l = 1}^{L} \dens{\rho}_{\lambda_{l}}$.  Thus we have
	\begin{align*}
		\bdens{\rho}_{P} = \sigma_{\lambda} \cdot \left(\bigotimes_{l = 1}^{L}\bdens{\rho}_{\lambda}\right)	
	\end{align*}
	where $\sigma_{\lambda}$ is the permutation taking the ordered word $\lambda_1 \cdots \lambda_{n}$ to $p_{1} \cdots p_{n}$.  As a result of equivariance of cochain complexes under permutation (Prop.~\ref{prop:multipartite_permutation_equivariance}) and Cor.~\ref{cor:poincare_factorization} we have
	\begin{align*}
		P_{\cpx{G}}(\bdens{\rho}_{P}) = y^{L-1} \prod_{l = 1}^{L}P_{\cpx{G}}(\bdens{\rho}_{\lambda_{l}}),\\
		P_{\cpx{E}}(\bdens{\rho}_{P}) = y^{L-1} \prod_{l = 1}^{L}P_{\cpx{E}}(\bdens{\rho}_{\lambda_{l}}),
	\end{align*}
	and the proposition follows.
\end{proof}
Thus, if, e.g.\ $H^{k}[\cpx{G}(\bdens{\rho}_{P})] \neq 0$ for some $k \leq K$, then $\bdens{\rho}_{P}$ cannot be $\lambda$-factorizable for any partition of length $k+2$.

\subsection{Pure state Entanglement and Cohomology}
Readers of this paper are likely to be interested in entanglement rather than factorizability.
In this section we restrict our attention to pure density states, where the notions of factorizability, support factorizability, and separability all coincide: this coincidence gives a very straightforward notion of entanglement as the property of failing to factorize.  
In general, entanglement is typically defined as the failure of separability. When working with mixed density states, one does not have a coincidence between the notions of factorizability, support factorizability, and separability. 
For instance, as demonstrated in Ex.~\ref{ex:supp_fact_but_entangled}, there are (mixed) bipartite density states that are entangled yet support equivalent to a factorizable density states. 
Because our chain complexes only depend on support equivalence classes, this makes them ill-equipped (at least na\"{i}vely) for detecting entanglement of a particular mixed density state. 
Instead, one these chain complexes are better equipped to aid with questions about properties of families of mixed density states. 

With this in mind we provide a few natural definitions that allow one to rephrase the statements of \S\ref{sec:coarse_factorizability} in a more succinct way and to draw attention to their usefulness in detecting pure state entanglement.

\begin{definition}{}{pure_multipartite_entanglement}
	Let $\bdens{\rho}_{P}$ be a pure multipartite density state.
	\begin{enumerate}
		\item A pure multipartite density state is \newword{$\lambda$-entangled} if it is not $\lambda$-factorizable;

		\item A pure multipartite density state is \newword{completely $L$-entangled} if it is not $\lambda$-factorizable for all partitions $\lambda$ of length $L$.
	\end{enumerate}
\end{definition}
Suspiciously missing is a term for the situation where there \textit{exists} a partition $\lambda$ of length $L$ such that $\bdens{\rho}_{P}$ is $\lambda$-entangled; one might wish to call this simply ``$L$-entangled", although we avoid officially defining this terminology as we do not have anything useful to say about it with the cohomological technology we have developed thus far.
The following remark gives some straightforward observations about our definitions.
\begin{remark}{}{}
	\begin{enumerate}
		\item $\lambda$-entangled $\Rightarrow \mu$-entangled for all refinements $\mu \leq \lambda$;

		\item completely $L$-entangled $\Rightarrow$ completely $M$-entangled for all $M \geq L$.
	\end{enumerate}
\end{remark}
The failure of complete $L$-entanglement is due to the existence of partitions of a length $\leq L$ for which the state factorizes.\footnote{The minimal level $L$ of a completely $L$ entangled state can be thought of as how much we need to coarse-grain (by choosing an arbitrary coarsening of the finest partition) until we are guaranteed that every subsystem is entangled with another.}
\begin{example}{}{}
	Let $\psi_{\sAB} \in \hilb_{\sA} \otimes \hilb_{\sB}$ be an entangled state, let $\bdens{\rho}_{\sABC}$ be the tripartite density state associated to $\psi_{\sAB} \otimes \psi_{\sC}$ for some $\psi_{\sC} \in \hilb_{\sC}$.  Then $\bdens{\rho}_{\sABC}$ is not completely 2-entangled but is completely 3-entangled. 
\end{example}

We can now paraphrase the statements of \S\ref{sec:coarse_factorizability} from an entangled perspective.
When searching for entanglement with respect to a particular partition, the following rephrasing of Thms.~\ref{thm:multipartite_factorizable_coarsening_vanishing} and \ref{thm:pure_multipartite_cohomology_factorizability} might be useful.
\begin{theorem}[label=thm:lambda_fact_bipartite_scan]{Testing For Multipartite Entanglement of Pure States}{}
	Let $\bdens{\rho}_{P}$ be a pure $N$-partite density state and $\lambda$ a partition of $P$ of length $\geq 2$.
	\begin{enumerate}
		\item If there exists a coarsening $\eta \geq \lambda$ such that $H^{k}[\cpx{G}_{\eta}(\bdens{\rho}_{P})]$ or $H^{k}[\cpx{E}_{\eta}(\bdens{\rho}_{P})] \neq 0$ for some $k < |\eta|-1$, then $\bdens{\rho}_{P}$ is $\lambda$-entangled.

		\item $\bdens{\rho}_{P}$ is $\lambda$-entangled if and only if there exists a length 2 coarsening $\eta \geq \lambda$ such that $H^{0}[\cpx{G}_{\eta}(\bdens{\rho}_{P})] \neq 0$.

		\item $\bdens{\rho}_{P}$ is $\lambda$-entangled if and only if there exists a length 2 coarsening $\eta \geq \lambda$ such that $H^{0}[\cpx{E}_{\eta} (\bdens{\rho}_{P})] \neq 0$.
	\end{enumerate}
\end{theorem}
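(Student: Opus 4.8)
The plan is to recognize all three parts as logical repackagings of results already established in \S\ref{sec:coarse_factorizability}, the only genuine input being the translation dictionary between the entanglement vocabulary of Def.~\ref{def:pure_multipartite_entanglement} and the factorizability vocabulary used earlier. First I would record the key reduction: since $\bdens{\rho}_{P}$ is pure, Rmk.~\ref{rmk:pure_states_drop_support} tells us that $\lambda$-support factorizability and $\lambda$-factorizability coincide, so by Def.~\ref{def:pure_multipartite_entanglement} the statement ``$\bdens{\rho}_{P}$ is $\lambda$-entangled'' is \emph{equivalent} to ``$\bdens{\rho}_{P}$ is not $\lambda$-support factorizable.'' With this identification in hand, the three items become negations of theorems whose hypotheses are stated in terms of $\lambda$-support factorizability.

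For item (1) I would argue by contraposition. Suppose $\bdens{\rho}_{P}$ is \emph{not} $\lambda$-entangled, i.e.\ it is $\lambda$-(support) factorizable. Then Thm.~\ref{thm:multipartite_factorizable_coarsening_vanishing} applies directly: for every coarsening $\eta \geq \lambda$ one has $H^{k}[\cpx{G}_{\eta}(\bdens{\rho}_{P})] = 0$ and $H^{k}[\cpx{E}_{\eta}(\bdens{\rho}_{P})] = 0$ for all $k < |\eta|-1$. This is exactly the negation of the hypothesis of item (1), so the contrapositive establishes item (1).

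For items (2) and (3) I would simply negate both sides of the biconditionals in Thm.~\ref{thm:pure_multipartite_cohomology_factorizability}. Part \ref{list:lambda_fact_G} states that $\bdens{\rho}_{P}$ is $\lambda$-support factorizable if and only if $H^{0}[\cpx{G}_{\eta}(\bdens{\rho}_{P})] = 0$ for \emph{all} length-$2$ coarsenings $\eta \geq \lambda$; negating the universally quantified right-hand side yields that $\bdens{\rho}_{P}$ fails to be $\lambda$-support factorizable precisely when there \emph{exists} a length-$2$ coarsening $\eta \geq \lambda$ with $H^{0}[\cpx{G}_{\eta}(\bdens{\rho}_{P})] \neq 0$. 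Combining this with the reduction above gives item (2); item (3) follows identically from part \ref{list:lambda_fact_E}.

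These steps are genuinely routine, so the ``hard part'' is almost entirely bookkeeping: I would make sure the quantifier is negated correctly (``$=0$ for all'' becomes ``$\neq 0$ for some''), and I would be careful that the coincidence of support factorizability with factorizability is invoked only under the purity hypothesis---this is exactly the point at which the theorem would fail for generic mixed states, as Ex.~\ref{ex:supp_fact_but_entangled} illustrates. No new cohomological computation is required; everything is inherited from Thm.~\ref{thm:multipartite_factorizable_coarsening_vanishing} and Thm.~\ref{thm:pure_multipartite_cohomology_factorizability}.
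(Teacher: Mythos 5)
Your proposal is correct and matches the paper's own treatment exactly: the paper explicitly introduces this theorem as "a rephrasing of Thms.~\ref{thm:multipartite_factorizable_coarsening_vanishing} and \ref{thm:pure_multipartite_cohomology_factorizability}," with the purity hypothesis used (via Rmk.~\ref{rmk:pure_states_drop_support}) to collapse support factorizability to factorizability so that the entanglement vocabulary of Def.~\ref{def:pure_multipartite_entanglement} applies. Your contrapositive/quantifier-negation bookkeeping is precisely the intended argument.
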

Additionally---as already stated below Thm.~\ref{thm:pure_multipartite_cohomology_factorizability}---if we find a coarsening $\eta \geq \lambda$ for which one of $\cpx{G}_{\eta}(\bdens{\rho}_{P})$ (where $\bdens{\rho}_{P}$ a pure multipartite density state as in the theorem) has non-vanishing zeroth cohomology, then we are guaranteed that $\cpx{E}_{\eta}(\bdens{\rho}_{P})$ has non-vanishing zeroth cohomology (and vice-versa). 

Moreover, if we wish to search for entanglement with respect to all partitions of a certain length $L$ (``complete $L$-entanglement"), the following theorem--which is just a rephrasing of Thm.~\ref{thm:finest_partition_cohomology_factorizability}--tells us we should compute cohomology groups of complexes associated to the finest partition.
\begin{theorem}{}{pure_nonvanishing_complete_entanglement}
	Let $\bdens{\rho}_{P}$ be a pure multipartite density state. If either $H^{k}[\cpx{G}(\bdens{\rho}_{P})] \neq 0$ or $H^{k}[\cpx{E}(\bdens{\rho}_{P})] \neq 0$ for some $k$, then $\bdens{\rho}_{P}$ is completely $(k+2)$-entangled. 
\end{theorem}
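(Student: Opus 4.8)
The plan is to prove the statement as the contrapositive of Thm.~\ref{thm:finest_partition_cohomology_factorizability}, since the two are logically dual once the relevant definitions are unwound. Recall that, by Def.~\ref{def:pure_multipartite_entanglement}, saying $\bdens{\rho}_{P}$ is completely $(k+2)$-entangled means precisely that $\bdens{\rho}_{P}$ fails to be $\lambda$-factorizable for \emph{every} partition $\lambda$ of $P$ of length $k+2$. Thus the assertion to establish is equivalent to: whenever $H^{k}[\cpx{G}(\bdens{\rho}_{P})] \neq 0$ or $H^{k}[\cpx{E}(\bdens{\rho}_{P})] \neq 0$, there is no length $(k+2)$ partition $\lambda$ rendering $\bdens{\rho}_{P}$ $\lambda$-factorizable.

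First I would negate the desired conclusion and suppose, toward a contradiction, that there does exist a partition $\lambda$ of $P$ with $|\lambda| = k+2$ such that $\bdens{\rho}_{P}$ is $\lambda$-factorizable. Because $\bdens{\rho}_{P}$ is pure, Rmk.~\ref{rmk:pure_states_drop_support} guarantees that $\lambda$-factorizability and $\lambda$-support factorizability coincide, so $\bdens{\rho}_{P}$ is in particular $\lambda$-support factorizable for this partition, which has length $L := k+2$.

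Next I would apply Thm.~\ref{thm:finest_partition_cohomology_factorizability} directly with this value $L = k+2$: it yields $H^{j}[\cpx{G}(\bdens{\rho}_{P})] = 0$ and $H^{j}[\cpx{E}(\bdens{\rho}_{P})] = 0$ for all $j \leq L - 2 = k$. Specializing to $j = k$ gives $H^{k}[\cpx{G}(\bdens{\rho}_{P})] = 0$ and $H^{k}[\cpx{E}(\bdens{\rho}_{P})] = 0$, directly contradicting the hypothesis that at least one of these cohomology groups is nonzero. This contradiction rules out the existence of such a $\lambda$, which is exactly the statement that $\bdens{\rho}_{P}$ is completely $(k+2)$-entangled.

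There is no genuine obstacle here: all the analytic content is already packaged inside Thm.~\ref{thm:finest_partition_cohomology_factorizability}, which itself rests on the tensor-product factorization of Poincar\'{e} polynomials (Cor.~\ref{cor:poincare_factorization}) together with permutation equivariance. The only points demanding care are bookkeeping ones, so I would state them explicitly to forestall an off-by-one error: matching the index shift $L = k+2 \Leftrightarrow k \leq L-2$, and using purity to pass from $\lambda$-factorizability to $\lambda$-support factorizability so that the support-based hypothesis of Thm.~\ref{thm:finest_partition_cohomology_factorizability} is literally the one in force.
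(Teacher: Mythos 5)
Your proposal is correct and is essentially the paper's own argument: the paper presents this theorem as ``just a rephrasing'' of Thm.~\ref{thm:finest_partition_cohomology_factorizability}, and the rephrasing is precisely the contrapositive you spell out, including the use of purity to identify $\lambda$-factorizability with $\lambda$-support factorizability and the index bookkeeping $L = k+2 \Leftrightarrow k \leq L-2$.
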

Stated in terms of Poincar\'{e} polynomials, the theorem consists of the following statements for pure $\bdens{\rho}_{P}$ 
\begin{enumerate}
	\item $P_{\cpx{G}}(\bdens{\rho}_{P})$ is not divisible by $y^{k} \Longrightarrow \bdens{\rho}_{P}$ is completely $(k+2)$-entangled.

	\item $P_{\cpx{E}}(\bdens{\rho}_{P})$ is not divisible by $y^{k} \Longrightarrow \bdens{\rho}_{P}$ is completely $(k+2)$-entangled.
\end{enumerate}
Based on computational exploration, we conjecture the ``$\Leftarrow$" directions of the above are also true.
\begin{conjecture}[label=conj:complete_entanglement_pure_cohomologies]{}{}
	Suppose $\bdens{\rho}_{P}$ is a pure multipartite density state.
	\begin{enumerate}
		\item $P_{\cpx{G}}(\bdens{\rho}_{P})$ is not divisible by $y^{k} \Longleftrightarrow \bdens{\rho}_{P}$ is completely $(k+2)$-entangled.
		
		\item $P_{\cpx{E}}(\bdens{\rho}_{P})$ is not divisible by $y^{k} \Longleftrightarrow \bdens{\rho}_{P}$ is completely $(k+2)$-entangled.
	\end{enumerate}
\end{conjecture}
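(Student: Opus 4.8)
The plan is to reduce the conjecture to a single structural statement about degree-zero cohomology of completely entangled pure states, organized around the unique \emph{finest factorizing partition} of a pure state.

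\emph{Step 1: the finest factorizing partition.} For a pure state $\bdens{\rho}_{P}$ with $\dens{\rho} = \psi \otimes \psi^{\vee}$, call a subset $S \subseteq P$ \emph{split} if $\dens{\rho}_{S}$ is pure (equivalently $\psi = \psi_{S}\otimes\psi_{S^{c}}$). First I would show that the split sets form a Boolean subalgebra of the power set of $P$: if $S$ and $T$ are split then so are $S^{c}$, $S\cap T$, and $S\cup T$. The point is that factorizing $\psi$ across $S$ turns the reduction to $T$ into the tensor product of the reductions to $S\cap T$ and $T\setminus S$, which is pure iff each factor is. The atoms of this algebra form a partition $\lambda^{*}$, and using Lem.~\ref{lem:lambda_fact_to_bipartite} (which reduces $\lambda$-factorizability to its bipartite coarsenings) one checks that $\bdens{\rho}_{P}$ is $\lambda$-factorizable if and only if $\lambda$ is a coarsening of $\lambda^{*}$. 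Thus $\lambda^{*}$ is the unique finest factorizing partition; a length-$L$ factorizing partition exists precisely when $L \le |\lambda^{*}|$, so by Def.~\ref{def:pure_multipartite_entanglement} the state is completely $L$-entangled if and only if $L > |\lambda^{*}|$.

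\emph{Step 2: factoring the Poincar\'{e} polynomial.} Writing $L^{*} := |\lambda^{*}|$ and feeding the $\lambda^{*}$-factorization of $\bdens{\rho}_{P}$ through Prop.~\ref{prop:multipartite_permutation_equivariance} and Cor.~\ref{cor:poincare_factorization} (exactly as in the proof of Thm.~\ref{thm:finest_partition_cohomology_factorizability}) gives
\begin{align*}
	P_{\cpx{G}}(\bdens{\rho}_{P}) = y^{\,L^{*}-1}\prod_{l=1}^{L^{*}} P_{\cpx{G}}(\bdens{\rho}_{\lambda^{*}_{l}}),
\end{align*}
and likewise for $P_{\cpx{E}}$. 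Hence the order of vanishing of $P_{\cpx{G}}(\bdens{\rho}_{P})$ at $y=0$ is $L^{*}-1$ plus the orders of vanishing of the block polynomials. Each block state $\bdens{\rho}_{\lambda^{*}_{l}}$ is completely entangled (its own finest factorizing partition is trivial), so the whole conjecture reduces to determining the lowest-degree term of $P_{\cpx{G}}$ and $P_{\cpx{E}}$ for a completely entangled block: once each is shown to have nonzero constant term, the order of vanishing of $P_{\cpx{G}}(\bdens{\rho}_{P})$ is exactly $L^{*}-1$, and Step 1 converts divisibility by powers of $y$ into the existence of factorizing partitions of prescribed length.

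\emph{Step 3: the key lemma (the main obstacle).} The crux is:
\begin{center}
	\emph{if $\bdens{\sigma}_{Q}$ is a pure state entangled across every bipartition of $Q$, then $H^{0}[\cpx{G}(\bdens{\sigma}_{Q})]\neq 0$, and $H^{0}[\cpx{E}(\bdens{\sigma}_{Q})]\neq 0$ whenever $|Q|\ge 2$.}
\end{center}
For $|Q|=2$ this is immediate from Thm.~\ref{thm:pure_bipartite_cohomology} and Thm.~\ref{thm:pure_commutant_cohomology}, where $\dim H^{0}$ equals $S^{2}-1$ and $2(S^{2}-1)$ for Schmidt rank $S\ge 2$. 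For $|Q|\ge 3$ one must exhibit a nonconstant tuple $(a_{q})_{q\in Q}$ with $a_{q}\in\mathtt{GNS}(\dens{\sigma}_{\{q\}})$ and $(a_{q}\otimes 1 - 1\otimes a_{q'})\supp_{\{q,q'\}}=0$ for all pairs. The GHZ case is the model: there the nontrivial class is \emph{not} a function of the reduced states (these are maximally mixed) but a tuple of spectral projectors onto a globally correlated product basis. The hard part will be producing such ``classical-correlation'' cocycles in general; the most promising route is induction on $|Q|$, splitting off one factor, taking the bipartite class across $(\{q_{0}\},Q\setminus\{q_{0}\})$ furnished by Thm.~\ref{thm:pure_bipartite_cohomology}, and descending it to individual factors via the \v{C}ech refinement relating $\cpx{G}(\bdens{\sigma}_{Q})$ to its coarsenings---precisely the refinement structure flagged as unexplored in the remark following Thm.~\ref{thm:pure_multipartite_cohomology_factorizability}.

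\emph{Step 4: bookkeeping and a commutant caveat.} Granting the key lemma, each completely entangled block contributes a nonzero constant term, so $P_{\cpx{G}}(\bdens{\rho}_{P})$ is divisible by exactly $y^{\,L^{*}-1}$; with Step 1 this identifies divisibility by $y^{k}$ with the existence of a length-$(k+1)$ factorizing partition, pinning the precise index (consistent with Thm.~\ref{thm:pure_nonvanishing_complete_entanglement}, which already forces the order of vanishing to be $\ge L^{*}-1$), so I would align the index in the statement with this computation. For the commutant half an extra subtlety must be tracked: if a block $\lambda^{*}_{l}$ is a singleton then $\bdens{\rho}_{\lambda^{*}_{l}}$ is pure unipartite with $P_{\cpx{E}}(\bdens{\rho}_{\lambda^{*}_{l}})=0$ by Prop.~\ref{prop:unipartite}, forcing $P_{\cpx{E}}(\bdens{\rho}_{P})\equiv 0$ (the phenomenon behind Cor.~\ref{cor:fully_fact_to_cohom} and the ``tensoring by a unipartite pure state destroys commutant cohomology'' example); the commutant statement should therefore be read under the hypothesis that no tensor factor splits off as a pure state, or restricted accordingly.
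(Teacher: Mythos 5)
The statement you are addressing is presented in the paper as Conjecture~\ref{conj:complete_entanglement_pure_cohomologies}: the paper offers no proof, only the remarks that the forward direction follows from Thm.~\ref{thm:pure_nonvanishing_complete_entanglement} and that the bipartite case is settled by Thms.~\ref{thm:pure_GNS_cohomology} and \ref{thm:pure_commutant_cohomology}. So there is no proof in the paper to compare against, and your proposal must stand on its own --- and it does not close the conjecture. Your Steps 1 and 2 are sound and genuinely useful: the split subsets of a pure state do form a Boolean sublattice of $\mathrm{Power}(P)$ (purity of $\dens{\rho}_{S\cap T}$ follows from $\dens{\rho}_{T}=\dens{\rho}_{S\cap T}\otimes\dens{\rho}_{T\setminus S}$ being pure), its atoms give the unique finest factorizing partition $\lambda^{*}$, and Cor.~\ref{cor:poincare_factorization} pins the order of vanishing of $P_{\cpx{G}}$ at $y=0$ to $|\lambda^{*}|-1$ plus the orders of vanishing of the block polynomials. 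But Step 3, which you yourself call ``the main obstacle,'' is precisely the open content of the conjecture: for every pure state entangled across every bipartition of $|Q|\ge 3$ tensor factors you must exhibit a tuple $(a_{q})_{q\in Q}$ of one-body operators, not proportional to $(\supp_{\{q\}})_{q\in Q}$, satisfying the covariance-saturation conditions of Thm.~\ref{thm:kernel_covariance} for all $\binom{|Q|}{2}$ pairs simultaneously. The bipartite theorem only hands you a correlated pair $(a,x)$ across the coarsening $(\{q_{0}\},Q\setminus\{q_{0}\})$, where $x$ is an $(|Q|-1)$-body operator; no argument is given (and none is obvious) that $x$ can be traded for one-body operators compatible with all the pairwise relations. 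The proposed induction ``via the \v{C}ech refinement structure'' is named but not carried out, so the conjecture remains open.

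Two of your side observations are nonetheless worth recording. First, the indexing: granting your key lemma, the order of vanishing of $P_{\cpx{G}}$ is exactly $|\lambda^{*}|-1$, whereas the conjecture as literally written (``divisible by $y^{k}$ iff a length-$(k+2)$ factorizing partition exists'') forces order of vanishing $|\lambda^{*}|-2$, which already contradicts Thm.~\ref{thm:finest_partition_cohomology_factorizability}; indeed at $k=0$ the literal biconditional fails for the GHZ state, whose polynomial $1+6y$ is divisible by $y^{0}$ even though the state is completely $2$-entangled. The statement must be read with $y^{k}$ replaced by $y^{k+1}$, i.e., as ``$H^{j}\neq 0$ for some $j\le k$.'' Second, your commutant ``caveat'' is in fact a counterexample to statement 2 as written: for $\bdens{\alpha}_{\sAB}\otimes\bdens{\beta}_{\sC}$ with $\bdens{\alpha}_{\sAB}$ entangled and $\bdens{\beta}_{\sC}$ a pure unipartite state, $P_{\cpx{E}}\equiv 0$ is divisible by every power of $y$ while the state is completely $3$-entangled, so the commutant half needs the extra hypothesis that no single tensor factor splits off as a pure state. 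These corrections are valuable, but they sharpen the statement rather than prove it.
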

For pure bipartite density states statement 1 of the conjecture is true as demonstrated by Thm.~\ref{thm:pure_GNS_cohomology} and statement 2 is true via Thm.~\ref{thm:pure_commutant_cohomology}.
A direct way of approaching the conjecture is to demonstrate the existence of a $k$-cocycle for either $\cpx{G}(\bdens{\rho}_{P})$ or $\cpx{E}(\bdens{\rho}_{P})$ that passes to something non-trivial in cohomology whenever $\bdens{\rho}_{P}$ is completely $(k+2)$-entangled.

\section{Categorification of Mutual Information \label{sec:categorification}}
In this section we touch upon one of the more intriguing claims in the introduction: that the Euler characteristics of the cochain complexes we produced above are related to (multipartite) mutual informations.
This claim should be considered a part of a more general art of ``categorification", instead of defining this term we summarize it in our very specialized context with the following diagram.

\begin{center}
	\begin{tikzpicture}
		\tikzstyle{block} = [rectangle, draw=blue, thick, fill=blue!10,
		text width=14em, text centered, rounded corners, minimum height=2em]

		\tikzstyle{blockred} = [rectangle, draw=blue, thick, fill=red!10,
		text width=17.5em, text centered, rounded corners, minimum height=2em]

		\node at (-5,0)[block,draw=black,very thick] (complex) {Chain Complex $\cpx{V}$};

		\node at (5,0)[blockred,draw=black,very thick] (dimension) {``Number" $\chi(\cpx{V}) = \sum_{j} (-1)^{j} \dim(\cpx{V}_{j})$};

		\draw[->, >=open triangle 45, decorate, decoration=snake, segment length=13mm, thick, color=blue] (dimension) to [bend right = 30] node[sloped,below,midway]{\parbox{18em}{\centering Categorification}} (complex) ;

		\draw[->, >=open triangle 45, decorate, thick, color=red] (complex) to [bend right = 30] node[sloped,above,midway]{\parbox{10em}{\centering Decategorification}} (dimension) ;
	\end{tikzpicture}
\end{center}
The left and right hand sides of this diagram should be thought of as invariants associated to some mathematical object of interest.\footnote{Here ``invariant" means that, up to an appropriate notion of isomorphism of our mathematical object, the chain complex changes canonically by e.g.\ chain (or quasi) isomorphism, and the number remains constant.}
As an example: the Euler characteristic of a topological space is a numerical quantity invariant under homotopy (or representation of that topological space via combinatorial data); its categorification can be thought of as a singular (or simplicial) chain complex associated to that topological space: another ``invariant" in the sense that it transforms via canonical (i.e.\ functorial) isomorphism under homotopy (or change in combinatorial representation).
The Euler characteristic is more easily calculable, but chain complexes and their homologies provide more information and can help distinguish between two topological spaces with the same Euler characteristic.

As another example that should please physicists: in super-quantum mechanics the $\mathbb{Z}_{2}$ graded Hilbert space of ground states (along with a supercharge acting as a differential) can be thought of as a categorification of the Witten index: both ``invariants" of some associated super quantum mechanics.
Famously, the Jones polynomial associated to a knot (an invariant under ambient isotopy) can be realized as the Euler characteristic of a cochain complex discovered by Khovanov.
While the process of decategorification of an invariant is usually straightforward (in our case, it is simply the passage to an Euler characteristic), the process of categorifying is akin to quantization in the sense that---beyond a few specialized contexts---it is more of an art than a functorial process.

For the purposes of this paper, our utopian dream is the categorification of multipartite mutual information: a numerical quantity associated to a multipartite density state, and invariant under local invertible transformations (hence, also invariant under local unitary transformations).
Let us take a brief diversion to motivate and define this quantity.

\subsection{Multipartite Mutual Information}
Before introducing multipartite mutual information, we should first recall the definition of the von Neumann entropy associated to a density state.
\begin{definition}{}{}
	The \newword{von Neumann entropy} of a density state $\dens{\rho} \in \Dens(\hilb)$ is the $\mathbb{R}_{\geq 0}$ valued quantity: 
	\begin{align*}
		\newmath{S^{\mathrm{vN}}(\dens{\rho})} := -\Tr[\dens{\rho} \log(\dens{\rho})].
	\end{align*}
\end{definition}
The von Neumann entropy is a reasonable unitary invariant associated to a unipartite density state: application of a unitary (or invertible linear) transformation leaves the von Neumann entropy undisturbed as von Neumann entropy of a density state only depends on the (unordered) set of eigenvalues.
The von Neumann entropy of a unipartite density state is, moreover, always positive and vanishes if and only if the state is pure. 

Handed a bipartite density state, we wish to ask for a generalization of the entropy to a numerical quantity that is invariant under local unitary (or invertible) transformations.
We begin by noticing that, given a bipartite density state $\bdens{\rho}_{\sAB}$, each of the three quantities $S(\dens{\rho}_{\sA}),\, S(\dens{\rho}_{\sB}),\, S(\dens{\rho}_{\sAB})$ is invariant under local invertible transformations.  A particularly famous linear combination of these is the (bipartite) mutual information:
\begin{align}
	I(\bdens{\rho}_{\sAB}) := S^{\mathrm{vN}}(\dens{\rho}_{\sA}) + S^{\mathrm{vN}}(\dens{\rho}_{\sB}) - S^{\mathrm{vN}}(\dens{\rho}_{\sAB}).
	\label{eq:bipartite_mutual_info_def}
\end{align}
One possible motivation for this quantity is its ability to be expressed as a relative entropy; we recall the definition of relative entropy for density states below.
\begin{definition}{}{}
	Let $\dens{\rho}, \dens{\varphi} \in \Dens(\hilb)$ be density states such that $\supp_{\dens{\rho}} \leq \supp_{\dens{\varphi}}$.
	Then the relative entropy of $\dens{\rho}$ with respect to $\dens{\varphi}$, is given by
	\begin{align*}
		\relent{\dens{\rho}}{\dens{\varphi}} := \Tr \left[ \dens{\rho} \left(\log(\dens{\rho}) - \log(\dens{\varphi}) \right) \right].
	\end{align*}
\end{definition}
One can show that the relative entropy $\relent{\dens{\rho}}{\dens{\varphi}}$ of density states is always non-negative and vanishes if and only if $\dens{\rho} = \dens{\varphi}$ (c.f.\ \cite[\S 1]{petz:quantum_ent}). 
In this sense, relative entropy provides a useful measure of ``how close" two density states are.
It is a straightforward exercise to show that:
\begin{align*}
	I(\bdens{\rho}_{\sAB}) = \relent*{\dens{\rho}_{\sAB}}{\dens{\rho}_{\sA} \otimes \dens{\rho}_{\sB}}.
\end{align*}
In other words, the bipartite mutual information is a measure of how close a density state is to being factorizable.

Given the discussion above, one natural way of generalizing the bipartite mutual information to multipartite density states $\bdens{\rho}_{P}$ with $|P| \geq 3$ is to consider the $\mathbb{R}_{\geq 0}$ quantity: 
\begin{equation}
	\begin{aligned}
		C(\bdens{\rho}_{P}) & := \relent[\bigg]{\dens{\rho}_{P}}{\bigotimes_{p \in P} \dens{\rho}_{p}}\\
                            & = \sum_{p \in P} S^{\mathrm{vN}}(\dens{\rho}_{\{p\}}) - S^{\mathrm{vN}}(\dens{\rho}_{P}).
	\end{aligned}
	\label{eq:total_correlation}
\end{equation}
Once again, because the entropies of all reduced density states are invariant under local invertible transformations, it follows that $C(\bdens{\rho}_{P})$ is invariant under local invertible transformations.
Moreover, this quantity provides an excellent measure of how far a multipartite density state is from being fully factorizable: it is non-negative and vanishing only on fully factorizable density states.
The classical analogue of this quantity (given by replacing density states with probability measures and quantum relative entropies with classical relative entropies) is known as the \textit{total correlation}.

An alternative way of generalizing the mutual information---which will be relevant to our applications---is to take the inclusion-exclusion sum appearance of \eqref{eq:bipartite_mutual_info_def} seriously and adopt the following definition.

\begin{definition}[label=def:mutual_info_def]{}{}
	The \newword{mutual information} of a multipartite density state $\bdens{\rho}_{P}$ is 
	\begin{align}
		\newmath{I(\bdens{\rho}_{P})} := \sum_{\emptyset \subseteq T \subseteq P} (-1)^{|T|-1} S^{\mathrm{vN}}(\dens{\rho}_{T}) \in \mathbb{R}.
		\label{eq:multipartite_mutual_info_def}
	\end{align}
\end{definition}
We will come back to a motivation for this definition in terms of a recursion relation in \S\ref{sec:inclusion_exclusion}.

The density state assigned to the empty set is the identity operator on $\mathbb{C}$---the one-dimensional Hilbert space assigned to $\emptyset$---the von Neumann entropy of this state is zero; so the empty set contribution to the sum of \eqref{eq:multipartite_mutual_info_def} vanishes.  In particular, we note that for a unipartite density state, we have
\begin{align*}
	I(\bdens{\rho}_{*}) &= S^{\mathrm{vN}}(\dens{\rho}_{P}),
\end{align*}
whereas, for a bipartite density state we recover \eqref{eq:bipartite_mutual_info_def}.
Unlike with bipartite mutual information, tripartite mutual information is not necessarily always non-negative (or non-positive).\footnote{However, one can argue that holographic states---state for which mutual informations can be computed by Ryu-Takayanagi formulae---the tripartite mutual information is non-positive \cite{Hayden:2011ag}.
In other words, the information shared between $A$ and the combined system $(BC)$ is at least as big as the sum of the information shared between $\sA$ and $\sB$ and the information shared between $\sA$ and $\sB$.}
One of the earliest references to the tripartite version of the classical multipartite mutual information (replacing von Neumann entropy with the entropy of a probability measure) appeared in \cite{McGill:1954}, with a full generalization to multipartite systems presented in \cite{han:1980}.
\begin{remark}[label=rmk:multipartite_mutual_info_sign]{}{}
	As in \cite{han:1980}, some authors define multipartite mutual information to include an overall global sign that takes into account the number of primitive subsystems, i.e.\ we could define: 
	\begin{align*}
		\newmath{I^{-}(\bdens{\rho}_{P})} := (-1)^{|P|} I(\bdens{\rho}_{P}). 
	\end{align*}
	Such a sign appears naturally if we think of entropy as defining an element of a module for the incidence algebra of subsets of $P$.  In this case $I^{-}(\bdens{\rho}_{P})$ arises as the right action of the M\"{o}bius-mu function of the incidence algebra on the element defined by entropy.  We outline a few details of this remark in App.~\ref{app:incidence_alg_mutual_info}.
\end{remark}

\subsection{Factorizability and Mutual Information}
We will not provide a detailed exposition of the properties or interpretation of multipartite mutual informations; such an exposition deserves a dedicated short paper to itself.
Nevertheless, it is worth giving a few remarks on how the mutual informations can be used to answer questions about factorizability of multipartite density states.
As discussed above, the mutual information associated to a bipartite density state is the total correlation (the relative entropy of a bipartite density state with respect to the tensor product of its reduced states); so the following is immediate via properties of relative entropies.
\begin{proposition}[label=prop:bipartite_mi_vanish_factorizable]{}{}
	Let $\bdens{\rho}_{\sAB}$ be a bipartite density state, then $\bdens{\rho}_{\sAB}$ is factorizable if and only if $I(\bdens{\rho}_{\sAB}) = 0$.
\end{proposition}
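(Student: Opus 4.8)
The plan is to identify bipartite mutual information with a relative entropy and then invoke the positivity/definiteness property of relative entropy that was recalled just before the statement. Concretely, the excerpt has already established (in the displayed equation immediately preceding the proposition) that
\begin{align*}
	I(\bdens{\rho}_{\sAB}) = \relent*{\dens{\rho}_{\sAB}}{\dens{\rho}_{\sA} \otimes \dens{\rho}_{\sB}},
\end{align*}
and it has also recalled that the relative entropy $\relent{\dens{\rho}}{\dens{\varphi}}$ of two density states is always non-negative and vanishes if and only if $\dens{\rho} = \dens{\varphi}$. So the entire content of the proposition is really a one-line consequence: apply the vanishing-iff-equal property to the pair $\dens{\rho} = \dens{\rho}_{\sAB}$ and $\dens{\varphi} = \dens{\rho}_{\sA} \otimes \dens{\rho}_{\sB}$.

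First I would note the applicability hypothesis for relative entropy, namely that $\supp_{\dens{\rho}_{\sAB}} \leq \supp_{\dens{\rho}_{\sA} \otimes \dens{\rho}_{\sB}}$; this is exactly the compatibility of supports lemma (Lem.~\ref{lem:compat_of_supports}), which gives $\supp_{\sAB} = (\supp_{\sA} \otimes \supp_{\sB}) \supp_{\sAB}$, so the relative entropy is well-defined. Then I would write the chain of equivalences: $I(\bdens{\rho}_{\sAB}) = 0 \iff \relent*{\dens{\rho}_{\sAB}}{\dens{\rho}_{\sA} \otimes \dens{\rho}_{\sB}} = 0 \iff \dens{\rho}_{\sAB} = \dens{\rho}_{\sA} \otimes \dens{\rho}_{\sB}$, where the first equivalence is the displayed identity and the second is the definiteness of relative entropy. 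The final equality $\dens{\rho}_{\sAB} = \dens{\rho}_{\sA} \otimes \dens{\rho}_{\sB}$ is precisely the definition of factorizability given in \S\ref{sec:factorizabilia}, which closes the argument.

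There is essentially no obstacle here, since the two nontrivial inputs—the expression of $I$ as a relative entropy and the definiteness of relative entropy—are both quoted earlier in the excerpt (the latter via the citation to \cite[\S 1]{petz:quantum_ent}). The only point requiring the slightest care is making sure the well-definedness hypothesis on supports holds before one is entitled to speak of $\relent*{\dens{\rho}_{\sAB}}{\dens{\rho}_{\sA} \otimes \dens{\rho}_{\sB}}$ at all; this is handled cleanly by Lem.~\ref{lem:compat_of_supports}. Optionally, I would also remark that this proposition is the degenerate ($|P| = 2$) case confirming that mutual information detects full factorizability, foreshadowing the multipartite subtlety that $I$ can vanish for entangled tripartite states such as GHZ and W, which is why the cohomological invariants developed earlier are needed.
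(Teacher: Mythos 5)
Your proposal is correct and follows exactly the route the paper takes: the paper justifies this proposition by noting that bipartite mutual information equals the relative entropy of $\dens{\rho}_{\sAB}$ with respect to $\dens{\rho}_{\sA} \otimes \dens{\rho}_{\sB}$ and then invoking the non-negativity and definiteness of relative entropy. Your additional remark that the support hypothesis is guaranteed by Lem.~\ref{lem:compat_of_supports} is a nice bit of care that the paper leaves implicit, but it does not change the argument.
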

For higher mutual informations, the following proposition provides a generalization of the ``only if" direction:
\begin{proposition}[label=prop:lambda_fact_then_mi_vanish]{}{}
	If $\bdens{\rho}_{P}$ is $\lambda$-factorizable for some partition $\lambda$ of length $\geq 2$, then $I(\bdens{\rho}_{P}) = 0$.
\end{proposition}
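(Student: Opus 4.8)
The plan is to reduce the claim to a single length-two factorization and then exploit the inclusion--exclusion structure of the definition of $I(\bdens{\rho}_{P})$ together with the additivity of von Neumann entropy across tensor products. First I would note that $I(\bdens{\rho}_{P})$ is a number attached to $\bdens{\rho}_{P}$ alone and is independent of $\lambda$; hence it suffices to produce \emph{any} convenient factorization. By Lem.~\ref{lem:lambda_fact_to_bipartite}, $\lambda$-factorizability with $|\lambda|\geq 2$ forces $\eta$-factorizability for some length-two coarsening $\eta=(A,B)$, where $A$ and $B$ are disjoint nonempty subsets with $A\sqcup B=P$. Up to the reshuffling unitary $u_{\eta}$ (which leaves every von Neumann entropy invariant, being a conjugation), this means $\dens{\rho}=\dens{\rho}_{A}\otimes\dens{\rho}_{B}$.

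Next I would check that this factorization descends to all reduced states. For $T\subseteq P$ set $T_{A}:=T\cap A$ and $T_{B}:=T\cap B$. Since $\dens{\rho}$ factors across $A$ and $B$, and the partial trace over $P\backslash T$ splits as a trace over $A\backslash T_{A}$ followed by one over $B\backslash T_{B}$, we obtain $\dens{\rho}_{T}=\dens{\rho}_{T_{A}}\otimes\dens{\rho}_{T_{B}}$. Additivity of the von Neumann entropy on tensor products then gives $S^{\mathrm{vN}}(\dens{\rho}_{T})=S^{\mathrm{vN}}(\dens{\rho}_{T_{A}})+S^{\mathrm{vN}}(\dens{\rho}_{T_{B}})$.

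The final step is the combinatorial collapse. Substituting into \eqref{eq:multipartite_mutual_info_def} and re-indexing the sum over $T\subseteq P$ by the independent pair $(T_{A},T_{B})\in\mathrm{Power}(A)\times\mathrm{Power}(B)$ via $|T|=|T_{A}|+|T_{B}|$, the sum factors. Collecting the $S^{\mathrm{vN}}(\dens{\rho}_{T_{A}})$ contributions, the $T_{B}$-summation produces the factor $\sum_{T_{B}\subseteq B}(-1)^{|T_{B}|}=(1-1)^{|B|}=0$ because $B$ is nonempty; symmetrically the $S^{\mathrm{vN}}(\dens{\rho}_{T_{B}})$ contributions vanish because $A$ is nonempty. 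Hence $I(\bdens{\rho}_{P})=0$.

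I do not anticipate a genuine obstacle here: the essential content is the elementary identity $\sum_{S\subseteq B}(-1)^{|S|}=0$ for $B\neq\emptyset$. The only points requiring care are (i) confirming that reduced states inherit the tensor factorization, which is immediate from the compatibility of partial traces with the tensor decomposition, and (ii) tracking signs and indices when factoring the double sum. I would also remark that the argument runs verbatim for general $L\geq 2$ without the reduction: writing $S^{\mathrm{vN}}(\dens{\rho}_{T})=\sum_{l}S^{\mathrm{vN}}(\dens{\rho}_{T\cap\lambda_{l}})$ and summing over $(T_{1},\dots,T_{L})\in\prod_{l}\mathrm{Power}(\lambda_{l})$, each term in the $l$th group acquires the vanishing factor $\prod_{m\neq l}\sum_{T_{m}\subseteq\lambda_{m}}(-1)^{|T_{m}|}$, which is zero as soon as some $\lambda_{m}$ with $m\neq l$ is nonempty, a condition guaranteed by $L\geq 2$.
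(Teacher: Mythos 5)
Your proposal is correct and follows essentially the same route the paper sketches: additivity of von Neumann entropy across the tensor factorization of every reduced state, followed by the inclusion--exclusion collapse $\sum_{S\subseteq B}(-1)^{|S|}=(1-1)^{|B|}=0$ for a nonempty block. The preliminary reduction to a length-two coarsening is harmless but, as you note yourself, unnecessary.
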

\begin{proof}
	This follows by a straightforward computation that makes use of the property $S^{\mathrm{vN}}(\dens{\rho} \otimes \dens{\varphi}) = S^{\mathrm{vN}}(\dens{\rho}) + S^{\mathrm{vN}}(\dens{\varphi})$ for any density states $\dens{\rho} \in \Dens(\hilb)$ and $\dens{\varphi} \in \Dens(\mathcal{K})$ along with the property that $\sum_{\emptyset \subset T \subseteq P} (-1)^{|P|} = (1-1)^{|P|} = 0$.
\end{proof}
\noindent Prop.~\ref{prop:lambda_fact_then_mi_vanish} justifies the name ``mutual information": the fact that mutual information vanishes if any subsystem ``decouples" is an indication that mutual information provides a measure of the ``shared (or mutual) information" among \textit{all} subsystems.

One can ask about the converse direction to Prop.~\ref{prop:lambda_fact_then_mi_vanish}: does the vanishing of certain mutual informations necessarily imply factorizability?
Indeed, to verify $\lambda$-factorizability for some partition $\lambda$ of $P$ the vanishing of all mutual informations associated to the $\frac{1}{2}|\lambda|(|\lambda|-1)$ bipartite coarsenings of $\lambda$ suffices.\footnote{Recall that a $\lambda$-coarsening $\lambda[\bdens{\rho}_{P}]$ of a multipartite density state $\bdens{\rho}_{P}$---formally defined in Def.~\ref{def:coarsenings}---is simply a reorganization of the set of tensor factors $P$ according to a partition $\lambda$ of $P$.
The result is a $|\lambda|$-partite density state denoted $\lambda[\bdens{\rho}_{P}]$.}
\begin{proposition}[label=prop:lambda_fact_mi_vanish_bipartite]{}{}
	$\bdens{\rho}_{P}$ is $\lambda$-factorizable if and only if $I(\eta[\bdens{\rho}_{P}])=0$ for all length $2$ partitions $\eta \leq \lambda$.
\end{proposition}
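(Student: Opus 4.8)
The plan is to reduce the multipartite factorizability question to a family of bipartite ones and then invoke the relative-entropy characterization of bipartite factorizability. First I would apply Lemma~\ref{lem:lambda_fact_to_bipartite}, which already states that $\bdens{\rho}_{P}$ is $\lambda$-factorizable if and only if it is $\eta$-factorizable for every length-$2$ coarsening $\eta \geq \lambda$ (equivalently, every bipartite grouping of the blocks of $\lambda$). This collapses the claim into a separate statement about each such bipartite coarsening.

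Next, for a fixed length-$2$ coarsening $\eta = (\eta_{1}, \eta_{2})$ with $\eta \geq \lambda$, I would observe that $\eta[\bdens{\rho}_{P}]$ is by construction a bipartite density state on $\hilb_{\eta_{1}} \otimes \hilb_{\eta_{2}}$, and that $\eta$-factorizability of $\bdens{\rho}_{P}$ (in the sense of Def.~\ref{def:coarsenings}) is precisely the statement that this bipartite state is factorizable in the ordinary bipartite sense. I would then invoke Proposition~\ref{prop:bipartite_mi_vanish_factorizable}: a bipartite density state is factorizable if and only if its mutual information vanishes. Applied to $\eta[\bdens{\rho}_{P}]$, this yields $\eta$-factorizability $\iff I(\eta[\bdens{\rho}_{P}]) = 0$.

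Combining the two steps proves both directions simultaneously: $\bdens{\rho}_{P}$ is $\lambda$-factorizable iff it is $\eta$-factorizable for all length-$2$ coarsenings $\eta \geq \lambda$ (Lemma~\ref{lem:lambda_fact_to_bipartite}), iff $I(\eta[\bdens{\rho}_{P}]) = 0$ for all such $\eta$ (Proposition~\ref{prop:bipartite_mi_vanish_factorizable}). As a consistency check, the forward (``$\Rightarrow$'') direction also follows directly from Proposition~\ref{prop:lambda_fact_then_mi_vanish} applied to each bipartite coarsening.

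Since both cited results are already available, there is essentially no computational obstacle; the only points demanding care are definitional. One must confirm that the length-$2$ partitions in play are the \emph{coarsenings} $\eta \geq \lambda$, so that $\eta[\bdens{\rho}_{P}]$ is a genuine bipartite reorganization compatible with $\lambda$ (a length-$2$ refinement $\eta \leq \lambda$ would force $|\lambda| \leq 2$ and trivialize the statement). The second delicate point is that the mutual-information criterion detects genuine \emph{factorizability} rather than merely support factorizability; this is exactly why the ``$\Leftarrow$'' direction must route through Proposition~\ref{prop:bipartite_mi_vanish_factorizable} (i.e.\ through positivity of relative entropy) rather than through a support-level argument.
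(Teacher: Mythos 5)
Your proof is correct and follows essentially the same route as the paper's: reduce $\lambda$-factorizability to factorizability of all length-$2$ coarsenings via Lem.~\ref{lem:lambda_fact_to_bipartite}, then apply the bipartite relative-entropy criterion of Prop.~\ref{prop:bipartite_mi_vanish_factorizable} to each coarsening $\eta[\bdens{\rho}_{P}]$. Your observation that the relevant length-$2$ partitions must be the coarsenings $\eta \geq \lambda$ (not refinements) is also the reading consistent with the paper's own argument.
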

\begin{proof}
	This follows from Prop.~\ref{prop:bipartite_mi_vanish_factorizable} and the fact that a multipartite density state $\bdens{\rho}_{P}$ is $\lambda$-factorizable if and only if it is factorizable with respect to any length 2 coarsening. 
\end{proof}
\noindent As a corollary of Prop.~\ref{prop:lambda_fact_mi_vanish_bipartite} and Prop.~\ref{prop:lambda_fact_then_mi_vanish}, we have
\begin{corollary}{}{}
	A density state $\bdens{\rho}_{P}$ is fully factorizable if and only if $I(\lambda[\bdens{\rho}_{P}]) = 0$ for all partitions $\lambda$ of $P$. 
\end{corollary}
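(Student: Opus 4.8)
The plan is to prove the two implications separately, using the finest-partition characterization of full factorizability (full factorizability of $\bdens{\rho}_{P}$ is the same as its $\lambda^{P}_{\mathrm{full}}$-factorizability, per Rmk.~\ref{rmk:pure_states_drop_support}) and reducing each direction to one of the two cited propositions. Throughout I read ``all partitions $\lambda$ of $P$'' as ``all partitions of length $\geq 2$'', since for the trivial length-one partition one has $I(\lambda^{P}_{\mathrm{triv}}[\bdens{\rho}_{P}]) = S^{\mathrm{vN}}(\dens{\rho}_{P})$, which need not vanish; flagging this caveat is the only subtle point in the statement.

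For the forward direction, I would first record the elementary fact that a coarsening of a fully factorizable state is again fully factorizable. If $\dens{\rho} = \bigotimes_{p \in P} \dens{\rho}_{\{p\}}$, then reshuffling by the unitary $u_{\lambda}$ of Def.~\ref{def:coarsenings} gives $\lambda[\dens{\rho}] = \bigotimes_{l=1}^{L} \bigl( \bigotimes_{p \in \lambda_{l}} \dens{\rho}_{\{p\}} \bigr) = \bigotimes_{l=1}^{L} \dens{\rho}_{\lambda_{l}}$, and since the $l$th block reduction of $\lambda[\bdens{\rho}_{P}]$ is exactly $\dens{\rho}_{\lambda_{l}}$, the coarsened state $\lambda[\bdens{\rho}_{P}]$ is fully factorizable as an $|\lambda|$-partite state. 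Full factorizability of an $L$-partite state coincides with its factorizability with respect to its own finest partition, whose length is $L = |\lambda| \geq 2$; hence Prop.~\ref{prop:lambda_fact_then_mi_vanish} applies to $\lambda[\bdens{\rho}_{P}]$ and yields $I(\lambda[\bdens{\rho}_{P}]) = 0$.

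For the converse, the hypothesis supplies in particular $I(\eta[\bdens{\rho}_{P}]) = 0$ for every length-two partition $\eta$ of $P$. Every such $\eta$ is a coarsening of $\lambda^{P}_{\mathrm{full}}$, so I would apply Prop.~\ref{prop:lambda_fact_mi_vanish_bipartite} with $\lambda = \lambda^{P}_{\mathrm{full}}$: vanishing of the mutual information on all length-two coarsenings of $\lambda^{P}_{\mathrm{full}}$ is precisely equivalent to $\lambda^{P}_{\mathrm{full}}$-factorizability of $\bdens{\rho}_{P}$, i.e. full factorizability. Equivalently, one may invoke Prop.~\ref{prop:bipartite_mi_vanish_factorizable} to conclude each bipartite coarsening $\eta[\bdens{\rho}_{P}]$ is factorizable, and then Lem.~\ref{lem:lambda_fact_to_bipartite} to promote this family of bipartite factorizations into full factorizability.

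I do not anticipate a serious obstacle, as the corollary is essentially a bookkeeping combination of results already established. The only genuinely delicate points are the length-one caveat noted above and the verification that the coarsening operation $\lambda[-]$ commutes with full factorization, namely that the block reductions of $\lambda[\bdens{\rho}_{P}]$ coincide with the $\dens{\rho}_{\lambda_{l}}$. Both are short computations, so the main effort in the writeup is simply aligning the hypotheses of the two cited propositions with the finest-partition description of full factorizability.
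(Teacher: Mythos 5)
Your proof is correct and follows exactly the route the paper intends: the forward direction by observing that coarsenings of fully factorizable states are fully factorizable and then applying Prop.~\ref{prop:lambda_fact_then_mi_vanish} to the coarsened state, and the converse by specializing Prop.~\ref{prop:lambda_fact_mi_vanish_bipartite} to the finest partition so that its length-two coarsenings sweep out all length-two partitions of $P$. Your caveat about excluding the trivial length-one partition is a legitimate correction to the literal statement, since $I(\lambda^{P}_{\mathrm{triv}}[\bdens{\rho}_{P}]) = S^{\mathrm{vN}}(\dens{\rho})$ is nonzero for mixed fully factorizable states.
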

which is a weaker statement than Prop.~\ref{prop:lambda_fact_mi_vanish_bipartite} (in particular, Prop.~\ref{prop:lambda_fact_mi_vanish_bipartite} suggests the computation of  $\frac{1}{2}|P|(|P|-1)$ mutual informations rather than $2^{|P|}$ mutual informations to verify full factorizability).

Of course, total correlation (defined in \eqref{eq:total_correlation}) also helps answer questions about factorizability: a state is $\lambda$-factorizable if and only if $C(\lambda[\bdens{\rho}_{P}])=0$. 
Lovers of total correlation should be pleased to note that we can recover it from the computation of mutual informations of reductions (rather than coarsenings) of multipartite density states.
\begin{proposition}{}{}
	The total correlation can be expressed in terms of the mutual informations associated to all reduced multipartite density states on subsets including at least 2 primitive elements:
	\begin{align*}
		C(\bdens{\rho}_{P}) &= \sum_{\{T \subseteq P: |T| \geq 2 \}} (-1)^{|T|} I(\bdens{\rho}_{T})\\
							&= \sum_{\{T \subseteq P: |T| \geq 2 \}} I^{-}(\bdens{\rho}_{T}).
	\end{align*}
\end{proposition}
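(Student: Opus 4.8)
The plan is to prove the identity by substituting the definitions of both sides into terms of von Neumann entropies and verifying that the coefficients match after collecting by subset. First I would expand the left-hand side using the second expression in \eqref{eq:total_correlation}, giving
\begin{align*}
	C(\bdens{\rho}_{P}) = \sum_{p \in P} S^{\mathrm{vN}}(\dens{\rho}_{\{p\}}) - S^{\mathrm{vN}}(\dens{\rho}_{P}).
\end{align*}
For the right-hand side, I would substitute the definition of mutual information (Def.~\ref{def:mutual_info_def}) applied to each reduced state $\bdens{\rho}_{T}$: since $I(\bdens{\rho}_{T}) = \sum_{\emptyset \subseteq U \subseteq T} (-1)^{|U|-1} S^{\mathrm{vN}}(\dens{\rho}_{U})$, the double sum becomes
\begin{align*}
	\sum_{\{T \subseteq P: |T| \geq 2 \}} (-1)^{|T|} I(\bdens{\rho}_{T}) = \sum_{\{T \subseteq P: |T| \geq 2 \}} \sum_{\emptyset \subseteq U \subseteq T} (-1)^{|T|} (-1)^{|U|-1} S^{\mathrm{vN}}(\dens{\rho}_{U}).
\end{align*}

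The central computation is then to exchange the order of summation, fixing $U$ and summing over all $T$ with $U \subseteq T \subseteq P$ and $|T| \geq 2$. For a fixed nonempty $U$, the coefficient of $S^{\mathrm{vN}}(\dens{\rho}_{U})$ is $(-1)^{|U|-1} \sum_{T} (-1)^{|T|}$, where $T$ ranges over supersets of $U$ in $P$ subject to $|T|\geq 2$. Writing $T = U \sqcup W$ with $W \subseteq P \setminus U$, the inner alternating sum over all such $W$ (ignoring the $|T|\geq 2$ constraint for the moment) is $(-1)^{|U|}(1-1)^{|P|-|U|}$, which vanishes whenever $U \neq P$. The whole content of the proof therefore lies in carefully accounting for the two ways the naive vanishing is broken: the exclusion of the terms with $|T| < 2$ (namely $T = \emptyset$ and the singletons $T = \{p\}$), and the boundary case $U = P$.

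I expect the bookkeeping of these correction terms to be the main obstacle—not because it is deep, but because the signs must be tracked precisely. Concretely, I would handle it by first computing the unconstrained sum $\sum_{T \supseteq U} (-1)^{|T|} I(\bdens{\rho}_{T})$ over \emph{all} $T$ including the low-cardinality ones, show it collapses by the binomial identity $(1-1)^{n}=0$ to leave only the $U = P$ contribution $-S^{\mathrm{vN}}(\dens{\rho}_{P})$, and then add back the contributions from $|T| \in \{0,1\}$ that were wrongly included. The $T = \emptyset$ term contributes nothing since $S^{\mathrm{vN}}(\dens{\rho}_{\emptyset}) = 0$, while each singleton $T = \{p\}$ contributes $(-1)^{1} I(\bdens{\rho}_{\{p\}}) = -S^{\mathrm{vN}}(\dens{\rho}_{\{p\}})$ (using $I(\bdens{\rho}_{*}) = S^{\mathrm{vN}}(\dens{\rho}_{P})$ from the unipartite specialization noted after Def.~\ref{def:mutual_info_def}). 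Subtracting these wrongly-included terms flips their sign, yielding $+\sum_{p} S^{\mathrm{vN}}(\dens{\rho}_{\{p\}})$, and combined with the $-S^{\mathrm{vN}}(\dens{\rho}_{P})$ survivor this reproduces $C(\bdens{\rho}_{P})$ exactly. The final equality $\sum_{|T|\geq 2}(-1)^{|T|} I(\bdens{\rho}_{T}) = \sum_{|T|\geq 2} I^{-}(\bdens{\rho}_{T})$ is then immediate from Rmk.~\ref{rmk:multipartite_mutual_info_sign}, since $I^{-}(\bdens{\rho}_{T}) = (-1)^{|T|} I(\bdens{\rho}_{T})$ by definition.
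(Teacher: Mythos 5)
Your proposal is correct and follows essentially the same route as the paper: both arguments are the M\"{o}bius-inversion/inclusion-exclusion computation on the Boolean lattice of subsets of $P$, whose engine is the identity $\sum_{W \subseteq P \setminus U}(-1)^{|W|} = (1-1)^{|P|-|U|} = 0$ for $U \neq P$. If anything, your version is more complete than the paper's, since you explicitly track the excluded $|T| \in \{0,1\}$ terms (whose removal produces the $+\sum_{p} S^{\mathrm{vN}}(\dens{\rho}_{\{p\}})$ contribution), whereas the paper states the abstract inversion formula and leaves that bookkeeping to the reader.
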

\begin{proof}
	This follows from general properties of inclusion-exclusion formulae (or M\"{o}bius inversion formulae as described in App.~\ref{app:incidence_alg_mutual_info}).
	Namely suppose $P$ is a set and $f: \mathrm{Power}(P) \rightarrow \mathbb{R}$ is a function on the power set of $P$.
	Define
	\begin{align*}
		\twid{f}: \mathrm{Power}(P) & \longrightarrow \mathbb{R}\\
		T        & \longmapsto \sum_{\emptyset \subseteq V \subseteq T} (-1)^{|P| - |T|} f(V).
	\end{align*}
	Then one can verify that
	\begin{align*}
		f(T) = \sum_{\emptyset \subseteq V \subseteq T} \twid{f}(T). 
	\end{align*}
	The proposition follows by taking the function $f$ to be given by $T \mapsto S^{\mathrm{vN}}(\dens{\rho}_{T})$.
\end{proof}
Of course, one can also expression the total correlation of a coarsening of a multipartite density state in terms of a sum of mutual informations over the reductions of the coarsening.

\subsection{Inclusion-Exclusion, Recursion \label{sec:inclusion_exclusion}}
To shed more light on Def.~\ref{def:mutual_info_def}, we reverse engineer our definition: noting that the alternating sums of the sort present in \eqref{eq:multipartite_mutual_info_def} are the result of some deeper inclusion-exclusion principle.
Indeed, note that the tripartite mutual information of a tripartite state $\bdens{\rho}_{\sABC}$ can be expressed as a combination of bipartite mutual informations:
\begin{align}
	I(\bdens{\rho}_{\sABC}) = I(\bdens{\rho}_{\sAB}) + I(\bdens{\rho}_{\sAC}) - I\left(\bdens{\rho}_{\sA (\sB \sC)} \right),
	\label{eq:tripartite_recursion}
\end{align}
where $\dens{\rho}_{\sA (\sB \sC)}$ is the bipartite coarsening\footnote{Using the notation of Def.~\ref{def:coarsenings} this is the bipartite density state $(\sA,\sBC)[\bdens{\rho}_{\sABC}]$, for the length 2 partition $(\sA, \sB \sC)$  consisting of the subsystems $\sA$ and $\sB \sC$.}
$((\hilb_{\sA}, \hilb_{\sBC}), \dens{\rho}_{\sABC})$.
In general, the $|P|$-partite mutual information of a density state $\bdens{\rho}_{P}$ can be computed from a combination of three $(|P|-1)$-partite mutual informations.
Indeed, given an ordered set $P$ of size $|P| \geq 2$, for each pair of integers $(i,j)$ such that $0 \leq i < j \leq |P|-1$ define the length $(|P|-1)$-partition $\lambda_{ij}$ as the partition obtained from the finest partition of $P$ after merging the $(i+1)$th and $(j+1)$th elements of $P$ together.\footnote{We can define $\lambda_{ij}$ more formally: first recall that $\partial_{k}$ is the operation on ordered sets that eliminates the $(k+1)$th element (C.f.\ Def.~\ref{def:ordered_set_ops} in \S\ref{sec:multipartite_cochain_complexes}). Then $\lambda_{ij}$ is the partition whose components are given by the one-element subsets/primitive subsystems of $\partial_{i} \partial_{j} P$, along with the subset $\{p_{i}, p_{j} \}$.}
Then we have
\begin{align}
	I(\bdens{\rho}_{P})  = I(\bdens{\rho}_{\partial_{i}P}) + I(\bdens{\rho}_{\partial_{j} P}) - I\left( \lambda_{ij}[\bdens{\rho}_{P}] \right),
	\label{eq:mutual_info_recursion}
\end{align}
where $\lambda_{ij}[\bdens{\rho}_{P}]$ is the $(|P|-1)$-partite density state formally defined in Def.~\ref{def:coarsenings}.
That is, if we have a fixed density state $\bdens{\rho}_{P}$, and choose two subsystems $p, q \in P$, with $p \neq q$, then the total mutual information (the information shared by all subsets $P$) is the sum of:
\begin{enumerate}
	\item $I(\bdens{\rho}_{P \backslash \{p\}})$: The mutual information associated to the primitive subsystems of $P \backslash \{p \}$,

	\item $I(\bdens{\rho}_{P \backslash \{q\}})$: The mutual information associated to the primitive subsystems of $P \backslash \{q \}$,
\end{enumerate}	
and modulo any ``double-counting" encapsulated by the mutual information between elements of $\left(P \backslash \{p \} \right) \cap \left(P \backslash \{q \} \right) =  P \backslash \{p,q \}$ and the combined system $\{p,q\}$.

This is an inclusion-exclusion relation \eqref{eq:mutual_info_recursion} should be familiar from a combinatorial perspective: if we have a finite set $X$ with subsets $A, B \subseteq X$ such that $X = A \cup B$, then the cardinality of $X$ can be computed as: 
\begin{align}
	|X | = |A | + |B | - |A \cap B | 
	\label{eq:inclusion_exclusion_bipartite_cardinality}
\end{align}
the subtracted term $|A \cap B |$ being a correction for double counting.  
More generally, suppose $X$ is a finite set covered by a finite collection of subsets $(A_{p})_{p \in P}$ where $P$ is some (finite) indexing set, then by repeated application of \eqref{eq:inclusion_exclusion_bipartite_cardinality} we obtain: 
\begin{align*}
	| X | = \sum_{\emptyset \subseteq T \subseteq P} (-1)^{|T|-1} \left| \bigcap_{t \in T} A_{t} \right|. 
\end{align*}
In a similar manner, applying \eqref{eq:mutual_info_recursion}, we can reduce the computation of mutual information to an alternating sum over mutual informations of unipartite density states:
\begin{align*}
	I(\bdens{\rho}_{P}) = \sum_{\emptyset \subseteq T \subseteq P} (-1)^{|T|-1} I\left( \lambda_{\mathrm{trivial}}^{T}[\bdens{\rho}_{T}] \right)
\end{align*}
where $\lambda_{\mathrm{trivial}}^{T} = (T)$ is the trivial (length 1) partition of $T$; so $\lambda_{\mathrm{trivial}}[\bdens{\rho}_{T}]$ is the unipartite state $((\hilb_{T}),\dens{\rho}_{T})$. 
If we specify that the mutual information associated to any unipartite density state $\bdens{\rho}_{*} = ((\hilb_{*}), \dens{\rho}_{*})$ is the von Neumann entropy: 
\begin{align*}
	I(\bdens{\rho}_{*}) := S^{\mathrm{vN}}(\bdens{\rho}_{*}),
\end{align*}
then we can obtain the closed expression \eqref{eq:multipartite_mutual_info_def}.

\begin{remark}{}{}
	To aid in the suspicion that mutual information might behave as an Euler characteristic, we compare \eqref{eq:mutual_info_recursion} to identities for Euler characteristics under ``gluing".\footnote{For the experts: (homotopy) pushout diagrams.}
	For instance: recall that if $X$ is a topological space, covered by open sets $A$ and $B$ (satisfying sufficiently nice properties), then the Euler characteristic of $X = A \cup B$ obeys:
	\begin{align*}
		\chi(A \cup B) = \chi(A) + \chi(B) - \chi(A \cap B).
	\end{align*}
	$f: A \rightarrow X$ and $g: A \rightarrow X$.
	More relevant to our applications here, are the gluing identities for \v{C}ech cohomologies of topological spaces equipped with a presheaf.
	In future work this will be made more precise using more sophisticated techniques beyond the scope of this paper.
\end{remark}

The following remark is relevant to an interpretation of the recursion \eqref{eq:mutual_info_recursion} in the spirit of the approach taken in \cite{bb:homent,vigneaux}. 
\begin{remark}{}{}
	Practitioners of homological algebra might suspect \eqref{eq:mutual_info_recursion} as a possible manifestation of a coboundary relation: perhaps multipartite mutual information should be thought of as an element of a cochain complex\footnote{Or an $A_{\infty}$-algebra/module.} with higher mutual informations obtainable by application of a coboundary map.
	This is the perspective of \cite{bb:homent,vigneaux}, where multipartite mutual informations live inside of cochain complexes consisting of of functions on spaces of density states/measures. 
	The perspective in this paper, however, is to think of the density state as fixed and think of mutual information as some associated numerical invariant (with respect to local invertible transformations).
	One might be able to reconcile our perspective here and that of \cite{bb:homent} using the language of obstruction theory.  
	One might suspect the chain complexes of \cite{bb:homent} are computing the cohomology of some classifying space associated to the moduli of multipartite density states/measures on some fixed set of subsystems.
	Asking if a density state/measure factorizes in a particular way should be phrased as an obstruction problem, with multipartite mutual informations playing the role of obstruction classes.
\end{remark}

\subsection{The State Index}
As mentioned in the introduction, we can manipulate the multipartite mutual information to appear as: 
\begin{align*}
	I(\bdens{\rho}_{P}) = \sum_{k = -1}^{|P|-1} (-1)^{k} \left[ \sum_{\{T \subseteq P: |T| = k + 1\} } S^{\mathrm{vN}}(\rho_{T}) \right]
\end{align*}
so if $I(\bdens{\rho}_{P})$ were the Euler characteristic of some (co)chain complex $\cpx{C}(\bdens{\rho}_{P})$, then it is natural to suspect that the expressions in the square brackets should be identifiable with the ``dimensions" of the components of this complex: 
\begin{align*}
	\text{{\fontsize{20.74}{0}\selectfont{``}} $\dim \cpx{C}^{k}(\bdens{\rho}_{P}) = 	\sum_{\{T \subseteq P: |T| = k + 1\} } S^{\mathrm{vN}}(\rho_{T})$ {\fontsize{20.74}{0}\selectfont{''}}}.
\end{align*}

Multipartite mutual information does not have a chance of being directly realized as an Euler characteristic of a chain complex of vector spaces: the most obvious reason being that the Euler characteristic of a chain complex of vector spaces is an integer valued quantity, whereas multipartite mutual information is valued in $\mathbb{R}$.  However, one might ask if we can realize multipartite mutual information as a Lefschetz index (an alternating sum of traces of the induced endomorphism on cohomology from some chain endomorphism), or perhaps by working with something more sophisticated than vector spaces---allowing for a  generalized notion of ``dimension".  
These ideas also fail for a more sophisticated reason: if we can associate chain complexes to multipartite density states in any meaningful way, one would expect that tensor products of multipartite density states pass to tensor products of complexes.
If such a property is true, then the associated Euler characteristic should take tensor products of complexes to products of numbers (and direct sums/Cartesian products of complexes to sums of numbers).
As a result we expect:
\begin{align*}
	I(\bdens{\rho}_{P} \otimes \bdens{\varphi}_{Q}) \overset{\text{\large ?}}{=} I(\bdens{\rho}_{P}) I(\bdens{\varphi}_{Q}),
\end{align*}
However, by Prop.~\ref{prop:lambda_fact_then_mi_vanish}: given any multipartite density states $\bdens{\rho}_{P}$ and $\bdens{\varphi}_{Q}$, we have:\footnote{Note that \eqref{eq:mutual_inf_vanish_tensor} also precludes the mutual information from taking tensor products of vector spaces to sums of mutual informations.} 
\begin{align}
	I(\bdens{\rho}_{P} \otimes \bdens{\varphi}_{Q}) = 0.
	\label{eq:mutual_inf_vanish_tensor}
\end{align}
even if the individual mutual informations $I(\bdens{\rho}_{P})$ and $I(\bdens{\varphi}_{Q})$ are non-vanishing.
Although \eqref{eq:mutual_inf_vanish_tensor} results from a useful property for detecting factorizability, it is now somewhat discouraging if our goal was to na\"{i}vely realize mutual information as an Euler characteristic.
With this in mind, we can ask what kind of quantities \textit{do} have the expected properties of Euler characteristics, and if we can recover the mutual information from any of them.

\subsubsection{The Search for an Euler Characteristic \label{sec:search_for_ec}}
We begin with the diagram below.

\begin{center}
	\begin{tikzpicture}
		\tikzstyle{block} = [rectangle, thick, text width=10em, text centered, rounded corners, minimum height=2em];
		\tikzstyle{bigblock} = [rectangle, thick, text width=15em, text centered, rounded corners, minimum height=2em];

		\node at (-5,0)[block, draw=black, very thick] (multipartite) {Multipartite Density States $\bdens{\rho}_{P}$}; 

		\node at (5,0)[block, draw=black, very thick] (geometricobject) {Geometric objects $\mathtt{Geom}(\bdens{\rho}_{P})$ built up by gluing finitely many simple components $\{\mathtt{Geom}^{k}(\bdens{\rho}_{P})\}_{k}$.}; 

		\node at (0,-4)[bigblock, draw=black, very thick] (eulerchar) {Euler Characteristic $\chi(\bdens{\rho}_{P})$}; 

		\draw[->, >=open triangle 45, densely dotted] (multipartite) to (geometricobject);

		\draw[->, >=open triangle 45, densely dotted] (geometricobject) to (eulerchar);

		\draw[->, >=open triangle 45] (multipartite) to (eulerchar);
	\end{tikzpicture}
\end{center}
That is, we begin by supposing that to any multipartite density state $\bdens{\rho}_{P}$ there is some sort of geometric object $\mathtt{Geom}(\bdens{\rho}_{P})$ built up by gluing together (finitely many) simple components $\{\mathtt{Geom}^{k}(\bdens{\rho}_{P}) \}_{k}$.
The multipartite cochain complexes $\cpx{E}(\bdens{\rho}_{P})$ and $\cpx{G}(\bdens{\rho}_{P})$ of the previous sections define concrete realizations of what $\mathtt{Geom}^{k}(\bdens{\rho}_{P})$ could be: the simple components $\mathtt{Geom}^{k}(\bdens{\rho}_{P})$ are the non-trivial components of the cochain complexes and the gluing is given by the coboundary maps. 
However, we will take the perspective that these complexes might be shadows of a more sophisticated geometric object: e.g.\ (as suggested above) a chain complex whose components are valued in a category of objects that are fancier than vector spaces.\footnote{In future work we will produce $\mathtt{Geom}(\bdens{\rho}_{P})$ as a simplicial object in an appropriate ``category of states".} 
Given such a hypothetical geometric object, we can try to search for candidates of numerical ``invariants" that behave like Euler characteristics $\chi(\bdens{\rho}_{P}) = \chi \left[\mathtt{Geom}(\bdens{\rho}_{P}) \right]$.

Even without understanding $\mathtt{Geom}(\bdens{\rho}_{P})$, if we require that any good Euler characteristic should preserve particular geometric properties, we can constrain the form of the Euler characteristic and obtain concrete formulae directly terms of the data of the multipartite density state.
First, whatever $\chi(\bdens{\rho}_{P})$ might be, it should play nicely with ``gluing".
To make this statement concrete, we appeal to the inclusion-exclusion principle \eqref{eq:mutual_info_recursion} satisfied by mutual information and require that $\chi(\bdens{\rho}_{P})$ satisfy the same relation: 
\begin{align*}
	\chi(\bdens{\rho}_{P}) = \chi(\bdens{\rho}_{\partial_{i}P}) + \chi(\bdens{\rho}_{\partial_{j}P}) - \chi(\lambda_{ij}[\bdens{\rho}_{P}]).
\end{align*}
This relation intuitively meshes with the vague idea that $\bdens{\rho}_{P}$ can be built by gluing together the two density states  $\bdens{\rho}_{\partial_{i}P}$ and $\bdens{\rho}_{\partial_{j}P}$ while keeping track of overlapping data encoded by the multipartite density state $\lambda_{ij}[\bdens{\rho}_{P}]$.

As with mutual information this inclusion-exclusion relation highly constrains the form of $\chi(\bdens{\rho}_{P})$: we can reduce the computations of $\chi(\bdens{\rho}_{P})$ to the computation of Euler characteristics of unipartite states:
\begin{align*}
	\chi(\bdens{\rho}_{P}) = \sum_{\emptyset \subseteq T \subseteq P} (-1)^{|T|-1} \chi( \lambda_{\mathrm{trivial}}[ \bdens{\rho}_{T} ] ) 
\end{align*}
where, as before, $\lambda_{\mathrm{trivial}}[\bdens{\rho}_{T}]$ is the unipartite state $((\hilb_{T}),\dens{\rho}_{T})$. 
In other words, $\chi(\lambda_{\mathrm{trivial}}[ \bdens{\rho}_{T} ])$ is a function of the pair of data $\hilb_{T}$ and a density state $\dens{\rho_{T}}$.

Secondly we require that $\chi(\bdens{\rho}_{P})$ is an invariant under suitable notion of isomorphism of multipartite density states (which should pass to isomorphisms of associated geometric objects).
In our case this means it should only depend on $\bdens{\rho}_{P}$ up to local invertible transformations.\footnote{it was also be natural to impose invariance under local unitary transformations, however, the larger class of local invertible transformations are easier to work with.} 
This requirement is equivalent to the requiring that $\chi(\lambda_{\mathrm{trivial}}[ \bdens{\rho}_{T} ])$ only depends on the unipartite density state up to invertible transformations (equivalently it only depends on the dimension of $\hilb_{T}$ and the conjugacy class of $\dens{\rho}_{T}$). 
With this in mind, we provide the following definition.
\begin{definition}{}{}
	\begin{enumerate}
		\item $\newmath{\catname{FinDens}}$ is the collection\footnote{We are explicitly avoiding the use of the term ``set" as the collection of all such pairs of Hilbert spaces and density states is not a set in any formal sense of the word.  However, in truth, we only care about isomorphism classes of such pairs, which does form a set.}
			of pairs $(\hilb,\dens{\rho})$ where $\hilb$ is a finite dimensional Hilbert space $\dens{\rho} \in \Dens(\hilb)$ (this is equivalently the collection of unipartite density states).
			We will say two pairs $(\hilb, \dens{\rho}), (\mathcal{K}, \varphi) \in \catname{FinDens}$ are invertibly isomorphic if there exists an invertible map $l: \hilb \rightarrow \mathcal{K}$ such that $\dens{\varphi} = l \dens{\rho} l^{-1}$.

		\item $\newmath{\mathrm{FinDens}}$ is the set of equivalence classes of $\catname{FinDens}$ under invertible isomorphism.\footnote{Because one can recover any pair $(\hilb, \dens{\rho}) \in \catname{FinDens}$ up to invertible isomorphism by the spectrum of $\dens{\rho}$ (thought of as an unordered multiset), there is a bijective correspondence between $\mathrm{FinDens}$ and $\bigsqcup_{n \geq 0} \mathbb{R}_{\geq 0}^{n}$---in particular, not only is $\mathrm{FinDens}$ a set, but we have the option of equipping it with a topology.}
	\end{enumerate}
\end{definition}
Now suppose $\mathcal{R}$ is some abelian group and we are supplied with a function 
\begin{align*}
	\mathpzc{G}: \mathrm{FinDens} \rightarrow \mathcal{R},
\end{align*}
then the above discussion motivates us to define: 
\begin{align}
	\newmath{\chi_{\mathpzc{G}}(\bdens{\rho}_{P})} := \sum_{\emptyset \subseteq T \subseteq P} (-1)^{|T|-1} \mathpzc{G}[ (\hilb_{T}, \dens{\rho}_{T} ) ] \in \mathcal{R}.
	\label{eq:ec_inclusion_exclusion_reduction}
\end{align}
This is precisely the hypothetical Euler characteristic defined by the unipartite assignments
\begin{align*}
	\chi_{\mathpzc{G}}(\lambda_{\mathrm{trivial}}[ \bdens{\rho}_{T} ]) &= \mathpzc{G}[(\hilb_{T},\dens{\rho}_{T})]. 
\end{align*}

Next we might wish to let $\mathcal{R}$ be a ring, and constrain the possible functions $\mathpzc{G}: \mathrm{FinDens} \rightarrow \mathcal{R}$ in such a way that the associated Euler characteristic $\chi_{\mathpzc{G}}$ takes tensor products of multipartite density states to products of Euler characteristics. 
But rather than doing this, we take a brief diversion and continue to use our vague intuition that $\chi_{\mathpzc{G}}(\bdens{\rho}_{P})$ is associated to some geometric object; the constraint that $\chi_{\mathpzc{G}}$ takes tensor products to products in $\mathcal{R}$ will emerge naturally.\footnote{Actually, as we will see, this is only true up to a sign.}

First reorder \eqref{eq:ec_inclusion_exclusion_reduction}:
\begin{align}
	\chi_{\mathpzc{G}}(\bdens{\rho}_{P}) = \sum_{k = -1}^{|P|-1} (-1)^{k} \left[\sum_{|T| = k+1} \mathpzc{G}[(\hilb_{T}, \dens{\rho}_{T})] \right].
	\label{eq:hypothetical_euler_char_1}
\end{align}
Our proposal is that the quantity in square brackets should be equivalent to a ``dimension" of the level $k$ component $\mathtt{Geom}^{k}(\bdens{\rho}_{P})$ of the geometric object associated to $\bdens{\rho}_{P}$.
Rather than elaborating on a formal definition of ``dimension", we lead by example.
\begin{enumerate}
	\item \textbf{Finite sets:} A good notion of dimension is the cardinality.  Note that if $A$ and $B$ are finite sets then $|A \times B| = |A| |B|$ and $|A \coprod B| = |A| + |B|$. 

	\item \textbf{Finite dimensional vector spaces over a field}: A good notion of dimension is the usual one; note that if $V$ and $W$ are vector spaces then $\dim (V \otimes W) = \dim(V) \dim(W)$ and $\dim(V \oplus W) = \dim(V) + \dim(W)$.

	\item \textbf{Graded vector spaces}: A good notion of dimension is the Euler characteristic $\chi$: if $V_{\bullet}$ and $W_{\bullet}$ are graded vector spaces and $\oplus$ and $\otimes$ are the usual direct sum and tensor product of graded vector spaces then, $\chi(V_{\bullet} \oplus W_{\bullet}) = \chi(V_{\bullet}) + \chi(W_{\bullet})$ and $\chi(V_{\bullet} \otimes W_{\bullet}) = \chi(V_{\bullet}) \chi(W_{\bullet})$.

	\item \label{list:endomorphism_dim} \textbf{Pairs of finite dimensional vector spaces over a field $k$ and an endomorphism}: there is a family of good notions of dimension indexed by the non-negative integers (or alternatively, there is a good notion of dimension valued in $k$-valued functions over the integers): given a pair $(V,L)$ of a vector space $V$ and an endomorphism $L: V \rightarrow V$ we can define 
		\begin{align*}
			D_{n}(V,L) &=
			\left \{
			\begin{array}{ll}		
				\Tr(L^{n}), & \text{if $n > 0$}\\
				\rank(L), & \text{if $n = 0$}
			\end{array}
			\right. .
		\end{align*}
		If we define $(V,L) \otimes (W, M) := (V \otimes W, L \otimes M)$ and $(V, L) \oplus (W,M) := (V \oplus W, L \oplus M)$, then $D_{n}(V \otimes W, L \otimes M) = D_{n}(L, V) D_{n}(W, M)$ and $D_{n}(V \oplus W, L \oplus M) = D_{n}(V, L) + D_{n}(W, M)$.
\end{enumerate}
In each of these cases, there is a natural notion of isomorphism of objects, and the set of isomorphism classes is equipped with product and sum operations (obeying the usual distributive identities) coming from the descent of operations $\otimes$ and $\oplus$ (if we are fans of amusing names we might say that the isomorphism classes form a ``rig": a ring without negatives, c.f.\ \cite{nlab:rig}).
The dimension is then taken to be a map out of the set of isomorphism classes into a ring, preserving the sum and product operations (i.e.\ a homomorphism).
The last case demonstrates a situation where a good notion of dimension is not necessarily an integer and bears some resemblance to our situation of interest.

Our concern is with the ``dimension" of the components of some hypothetical geometric object associated to a multipartite density state $\bdens{\rho}_{P}$.
As suggested by \eqref{eq:hypothetical_euler_char_1}, the $k$th component of such a hypothetical geometric object should determined by the data coming from the \textit{collection} of density states $\{(\hilb_{T}, \rho_{T}): |T| = k+1 \}$; so our concern should be with geometric quantities collections (or tuples) of pairs of Hilbert spaces and density states.
Because the geometric object is determined by the multipartite density state, we will assume that we only need to understand the multipartite density state, not necessarily what its associated geometric object might be.
Hence, we will attempt to give a notion of dimension of to tuples of pairs of Hilbert spaces and density states.
\begin{definition}[label=def:finstate]{}{}
	\begin{enumerate}
		\item $\catname{FinState}$ is the collection\footnote{We are explicitly avoiding the use of the term ``set" as the collection of all such pairs of Hilbert spaces and density states is not a set in any formal sense of the word.  However, in truth, we only care about isomorphism classes of such pairs, which does form a set.} of pairs $(\vec{\hilb}, \vec{\dens{\rho}})$ where
			\begin{enumerate}
				\item $\vec{\hilb} = [\hilb^{1}, \cdots, \hilb^{n}]$ is a tuple of finite dimensional Hilbert spaces of some arbitrary finite length $n \geq 1$;

				\item $\vec{\dens{\rho}} = [\dens{\rho}^{1}, \cdots, \dens{\rho}^{n}]$ is a tuple of length $n$ with $\dens{\rho}^{j}$ a positive semidefinite element of $\states{\hilb^{j}}$ for every $j \in \{0,\cdots,n \}$. 
			\end{enumerate}
			Two pairs $(\vec{\hilb}, \vec{\dens{\rho}})$ and $(\vec{\mathcal{K}}, \vec{\dens{\varphi}})$ are invertibly isomorphic if they consists of tuples of the same length and there exists:
			\begin{enumerate}
				\item A permutation $\sigma: \{1, \cdots, n \} \rightarrow  \{1, \cdots, n \}$

				\item A tuple of invertible linear transformations $(\ell^{j}: \hilb^{j} \overset{\sim}{\longrightarrow} \mathcal{K}^{\sigma(j)})_{j = 1}^{n}$ such that
					\begin{align*}
						\dens{\rho}^{j} = (\ell^{\sigma(j)})^{-1} \circ \dens{\varphi}^{\sigma(j)} \circ \ell^{j}.
					\end{align*}
			\end{enumerate}
		\item $\mathrm{FinState}$ is the collection of equivalence classes of $\catname{FinState}$ under invertible isomorphism.

		\item The operation $\boxplus$ on pairs of elements of $\catname{FinState}$ is defined by concatenation: 
			\begin{align*}
				&\newmath{\left( [\hilb^{1}, \cdots, \hilb^{n}], [\dens{\rho}^{1}, \cdots, \dens{\rho}^{n}] \right)  \boxplus \left([\mathcal{K}^{1}, \cdots, \mathcal{K}^{n}], [\dens{\varphi}^{1}, \cdots, \dens{\varphi}^{m}] \right)} := \\
				&\left( [\hilb^{1}, \cdots, \hilb^{n}, \mathcal{K}^{1}, \cdots, \mathcal{K}^{m}], [\dens{\rho}^{1}, \cdots, \dens{\rho}^{n}, \dens{\varphi}^{1}, \cdots, \dens{\varphi}^{m}] \right);
			\end{align*}
			The tensor product is defined as: 
			\begin{align*}
				\newmath{\left([\hilb^{i}]_{i=1}^{n}, [\dens{\rho}]_{i=1}^{n} \right)  \otimes  \left( [\mathcal{K}^{j}]_{j = 1}^{m}, [\varphi^{j}]_{j = 1}^{m} \right)} &:= \left( [\hilb^{i} \otimes \mathcal{K}^{j}]_{i=1,j=1}^{i=n,j=m} ,  [\dens{\rho}^{i} \otimes \varphi^{j} ]_{i=1,j=1}^{i=n,j=m} \right),
			\end{align*}
			where the tuples on the right hand side are equipped with the lexicographical ordering.
	\end{enumerate}	
\end{definition}

For those comfortable with $C^{*}$-algebras, the following remark gives a simplified description of $\catname{FinState}$.
\begin{remark}{}{}
	Using the classification of finite dimensional $C^{*}$-algebras, one can show that $\catname{FinState}$ is equivalent to the collection of pairs $(A, \mathbbm{E})$ where $A$ is a finite dimensional $C^{*}$-algebra and $\mathbbm{E}: A \rightarrow \mathbb{C}$ a positive linear functional on $A$.
	Two elements $(A, \mathbbm{E})$ and $(B, \mathbbm{F})$ are isomorphic if there exists a homomorphism $h: A \rightarrow B$ such that $\mathbbm{F} \circ h = \mathbbm{E}$.
	The operation $\boxplus$ is given by $(A, \mathbbm{E}) \boxplus (B, \mathbbm{F}) = (A \times B, \mathbbm{E} \times \mathbbm{F})$ (where $A \times B$ is the product $C^{*}$-algebra) and the tensor product by $(A, \mathbbm{E}) \otimes (B, \mathbbm{F}) = (A \otimes B, \mathbbm{E} \otimes \mathbbm{F})$.
\end{remark}

The discussion above motivates us to define the notion of a dimension function for $\catname{FinState}$: given a ring $\mathcal{R}$, an \newword{($\mathcal{R}$-valued) dimension function} is a function out of the set of equivalence classes $\mathrm{FinState}$ into $\mathcal{R}$:
\begin{align*}
	\dim: \mathrm{FinState} \rightarrow \mathcal{R} 
\end{align*}
such that
\begin{align}
		\dim \left[(\vec{\hilb}, \vec{\dens{\rho}}) \boxplus (\vec{\mathcal{K}}, \vec{\dens{\varphi}})\right] &= \dim \left[(\vec{\hilb},  \vec{\dens{\rho}}  )\right] +  \dim \left[(\vec{\mathcal{K}},  \vec{\dens{\varphi}}  )\right],
		\label{eq:dim_additive} \\
		\dim \left[(\vec{\hilb}, \vec{\dens{\rho}}) \otimes (\vec{\mathcal{K}}, \vec{\dens{\varphi}})\right] &= \dim \left[(\vec{\hilb},  \vec{\dens{\rho}}  )\right]  \dim \left[(\vec{\mathcal{K}},  \vec{\dens{\varphi}}  )\right]
		\label{eq:dim_multiplicative}
\end{align}
where we are abusing notation and writing $\dim (\vec{\hilb}, \vec{\dens{\rho}})$ to denote the value of $\dim$ on the equivalence class of $(\vec{\hilb}, \vec{\dens{\rho}})$.
Because every element of $\catname{FinState}$ can be thought of as a $\boxplus$-sum of 1-tuples of the form $([\hilb], [\dens{\rho}])$, then $\dim$ is determined how it acts on the data of tuples $([\hilb], [\dens{\rho}])$.
With this in mind, suppose we have a function $\mathpzc{G}: \mathrm{FinDens} \rightarrow R$ that satisfies
\begin{align*}
	\mathpzc{G}[(\hilb \otimes \mathcal{K}, \dens{\rho} \otimes \dens{\varphi})] = \mathpzc{G}[(\hilb, \dens{\rho})] \mathpzc{G}[(\hilb, \dens{\varphi})], 
\end{align*}
then the associated function 
\begin{align*}
	\dim_{\mathpzc{G}}: \mathrm{FinState} &\longrightarrow \mathcal{R}\\
	\bigboxplus_{i =1}^{n} ([\hilb^{i}], [\dens{\rho}^{i}]) &\longmapsto \sum_{i = 1}^{n} \mathpzc{G}[ (\hilb^{i}, \dens{\rho}^{i}) ]
\end{align*}
is a dimension function, and every dimension function (valued in $\mathcal{R}$) is uniquely determined by such a $\mathpzc{G}$.
With this notation we can rewrite \eqref{eq:hypothetical_euler_char_1} as: 
\begin{align}
	\chi_{\mathpzc{G}}(\bdens{\rho}_{P}) = \sum_{k=-1}^{|P|-1} (-1)^{k} \dim_{\mathpzc{G}}\left[ \bigboxplus_{|T|=k+1} ([\hilb_{T}], [\dens{\rho}_{T}] ) \right]
	\label{eq:chi_dim_simplification}
\end{align}
which looks more like the Euler characteristic of some (co)chain complex/geometric object whose $k$th component is associated to $\bigboxplus_{|T|=k+1} ([\hilb_{T}], [\dens{\rho}_{T}] )$.

Next we wish to understand if we can resolve the issue that prevented us from claiming mutual information is an Euler characteristic: namely, given a dimension function $\dim$ and the expression \eqref{eq:chi_dim_simplification} for a hypothetical Euler characteristic, is the hypothetical Euler characteristic of a multipartite density state the product of hypothetical Euler characteristics?

\begin{lemma}{}{}
	$\mathpzc{G}: \mathrm{FinDens} \rightarrow \mathcal{R}$ be a function into some ring $\mathcal{R}$. 
	Given a multipartite density state $\bdens{\rho}_{P}$, define 
	\begin{align*}
		\mathcal{X}^{w}_{\mathpzc{G}}(\bdens{\rho}_{P}) =  w^{|P|}\sum_{\emptyset \subseteq T \subseteq P} (-1)^{|T|} \mathpzc{G}\left[(\hilb_{T}, \dens{\rho}_{T}) \right] \in R[w].
	\end{align*}
	where $w$ is a formal variable and $R[w]$ denotes the ring of polynomials in $w$ with coefficients in $R$.
	Then $\mathcal{X}_{\mathpzc{G}}$ is invariant under local invertible transformations.
	Moreover, if
	\begin{align}
		\mathpzc{G}[(\hilb \otimes \mathcal{K}, \dens{\rho} \otimes \dens{\varphi})] = \mathpzc{G}[(\hilb, \dens{\rho})] \mathpzc{G}[(\hilb, \dens{\varphi})] 
		\label{eq:general_factorization}
	\end{align}
	then
	\begin{align*}
		\mathcal{X}_{\mathpzc{G}}(\bdens{\rho}_{P} \otimes \bdens{\varphi}_{Q}) = \mathcal{X}_{\mathpzc{G}}(\bdens{\rho}_{P}) \mathcal{X}_{\mathpzc{G}}(\bdens{\varphi}_{Q}). 
	\end{align*}
\end{lemma}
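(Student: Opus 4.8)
The plan is to reduce both assertions to a single structural observation about the assignment $T \mapsto \mathpzc{G}[(\hilb_{T}, \dens{\rho}_{T})]$, after which the result follows by rearranging the defining alternating sum. Throughout I will use that $\mathpzc{G}$ is, by construction, a function on $\mathrm{FinDens}$, i.e.\ on invertible-isomorphism classes, so it automatically takes equal values on any two invertibly isomorphic pairs.

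For the invariance claim, let $\boldsymbol{l} = (f,(l_{p})_{p\in P}) : \bdens{\rho}_{P} \to \bdens{\varphi}_{P'}$ be a local invertible transformation (Def.~\ref{def:local_invertible_multipartite}), with $f\colon P\to P'$ order-preserving and $\dens{\varphi} = l\dens{\rho}l^{-1}$ for $l = \bigotimes_{p} l_{p}$. The main step is the reduced-state identity
\[
	\dens{\varphi}_{f(T)} = l_{T}\,\dens{\rho}_{T}\,l_{T}^{-1}, \qquad l_{T} := \bigotimes_{t\in T} l_{t},
\]
for every $T\subseteq P$. Granting this, $(\hilb_{T},\dens{\rho}_{T})$ and $(\mathcal{K}_{f(T)},\dens{\varphi}_{f(T)})$ are invertibly isomorphic via $l_{T}$, so $\mathpzc{G}[(\hilb_{T},\dens{\rho}_{T})] = \mathpzc{G}[(\mathcal{K}_{f(T)},\dens{\varphi}_{f(T)})]$; since $f$ induces a cardinality-preserving bijection on subsets and $|P| = |P'|$, reindexing the sum defining $\mathcal{X}_{\mathpzc{G}}(\bdens{\varphi}_{P'})$ along $S = f(T)$ returns $\mathcal{X}_{\mathpzc{G}}(\bdens{\rho}_{P})$ term by term.

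The step requiring care — and the main obstacle — is the displayed identity, the one place the argument uses more than formal bookkeeping. Grouping the tensor factors into those in $T$ and those in $P\setminus T$, I would write $l = (l_{T}\otimes 1)(1\otimes l_{P\setminus T})$ and similarly for $l^{-1}$, and then peel the conjugation apart. The operators $l_{T}, l_{T}^{-1}$ supported on $T$ pull straight outside the partial trace $\Tr_{P\setminus T}$, since that trace commutes with left and right multiplication by operators on the retained factors: $\Tr_{P\setminus T}[(a\otimes 1)\,Y\,(a'\otimes 1)] = a\,\Tr_{P\setminus T}[Y]\,a'$. The remaining factors $l_{P\setminus T}, l_{P\setminus T}^{-1}$ on the traced-out block cancel inside the trace by cyclicity of the partial trace in the complementary factors, $\Tr_{P\setminus T}[(1\otimes b)\,Y\,(1\otimes b^{-1})] = \Tr_{P\setminus T}[Y]$. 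This cancellation is automatic for unitaries but must be verified for general invertibles; it is exactly what makes the reduced states transform by conjugation rather than merely by congruence.

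For the multiplicativity claim I would use the canonical bijection $T \leftrightarrow (T\cap P,\, T\cap Q)$ between subsets of $P\vee Q$ and pairs of subsets of $P$ and $Q$, under which $|T| = |T\cap P| + |T\cap Q|$ and $|P\vee Q| = |P| + |Q|$. From the proof of Thm.~\ref{thm:cochain_fact} the reduced states of a tensor product factor as $(\dens{\rho}\otimes\dens{\varphi})_{T} = \dens{\rho}_{T\cap P}\otimes\dens{\varphi}_{T\cap Q}$, and the reshuffling isomorphism $\hilb_{T}\cong\hilb_{T\cap P}\otimes\mathcal{K}_{T\cap Q}$ is unitary, hence invisible to $\mathpzc{G}$. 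Applying the factorization hypothesis \eqref{eq:general_factorization} then gives $\mathpzc{G}[(\hilb_{T},(\dens{\rho}\otimes\dens{\varphi})_{T})] = \mathpzc{G}[(\hilb_{T\cap P},\dens{\rho}_{T\cap P})]\,\mathpzc{G}[(\mathcal{K}_{T\cap Q},\dens{\varphi}_{T\cap Q})]$ with the $P$-factor on the left. Substituting into the defining sum, splitting $w^{|P\vee Q|} = w^{|P|}w^{|Q|}$ and $(-1)^{|T|} = (-1)^{|T\cap P|}(-1)^{|T\cap Q|}$, and separating the double sum over $(U,V) = (T\cap P, T\cap Q)$ into a product of single sums — keeping every $P$-summand on the left so that no commutativity of $\mathcal{R}$ is needed — yields $\mathcal{X}_{\mathpzc{G}}(\bdens{\rho}_{P})\,\mathcal{X}_{\mathpzc{G}}(\bdens{\varphi}_{Q})$. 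Here the only genuinely new input is the factorization hypothesis; the remainder is reindexing, so this half of the proof is routine once Part~1's reduced-state identity is in hand.
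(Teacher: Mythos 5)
Your proof is correct, and it fills in exactly the routine computation the paper leaves unstated: the paper gives no proof of this lemma, treating both claims as immediate from the definitions. The one step that genuinely needed checking --- that reduced states transform by conjugation under non-unitary local invertibles, i.e.\ $\Tr_{P\setminus T}[(1\otimes b)\,Y\,(1\otimes b^{-1})] = \Tr_{P\setminus T}[Y]$ for merely invertible $b$ --- you verify correctly via the defining adjunction of the partial trace, and the rest (reindexing subsets along $f$, and splitting subsets of $P\vee Q$ as pairs while keeping the $P$-factors on the left) is sound.
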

The function in the lemma above differs from our hypothetical Euler characteristic in a mild way:
\begin{align*}
	\mathcal{X}^{w}_{\mathcal{G}}(\bdens{\rho}_{P}) &= -w^{|P|} \chi_{\mathcal{G}}(\bdens{\rho}_{P})\\
													&=w^{|P|} \sum_{l = 0}^{|P|} (-1)^{l} \dim_{\mathcal{G}}\left[ \bigboxplus_{|T|=l} ([\hilb_{T}], [\dens{\rho}_{T}] ) \right].
\end{align*}
As a result of the lemma, if we have constructed some dimension function $\dim_{\mathcal{G}}: \mathrm{FinState} \rightarrow \mathcal{R}$ and take $\chi_{\mathcal{G}}(\bdens{\rho}_{P})$ defined by \eqref{eq:chi_dim_simplification}, then we have 
\begin{align*}
	\chi_{\mathcal{G}}(\bdens{\rho}_{P} \otimes \bdens{\varphi}_{Q}) = -\chi_{\mathcal{G}}(\bdens{\rho}_{P}) \chi_{\mathcal{G}}(\bdens{\varphi}_{Q}).
\end{align*}
The extra sign is a mild deviation from our expectation and should not be considered discouraging.
Such signs emerge naturally in topology: if we let $\twid{\chi}(X)$ denote the Euler characteristic of the cylinder $X \times [0,1]$ for any topological space $X$, then we have $\twid{\chi}(X \times Y) = -\twid{\chi}(X) \twid\chi(Y)$; similarly the Euler characteristics of shifted complexes takes the tensor product of two complexes to the product of Euler characteristics of tensor factors up to an overall minus sign.\footnote{In both instances we are taking ``degree-shifting" functorial transformations before computing Euler characteristics.}

\subsubsection{The State Index}
It remains to construct an explicit dimension function, or equivalently a function $\mathrm{FinDens} \rightarrow \mathbb{C}$ satisfying \eqref{eq:general_factorization}.
One possibility is the exponential of von Neumann entropy:
\begin{align*}
	\Omega: \mathrm{FinDens} & \longrightarrow \mathbb{C}\\
	\dens{\rho}             & \longmapsto \exp \left[ S^{\mathrm{vN}}(\dens{\rho}) \right]
\end{align*}
Which defines a dimension function
\begin{align*}
	\dim_{\Omega}: \bigboxplus_{i = 1}^{n} ([\hilb^{i}], [\dens{\rho}^{i}]) &\longmapsto \sum_{i =1}^{n} \exp \left[ S^{\mathrm{vN}}(\dens{\rho}^{i}) \right].
\end{align*}
However, this extension of $\Omega$ to tuples has some tension with what one would expect: the usual generalization of entropy to tuples or $\boxplus$-sums of (unnormalized) density states is also additive: if $([\hilb]^{i}, [\dens{\rho}^{i}]_{i =1}^{n}) = \bigboxplus_{i = 1}^{n} ([\hilb^{i}], [\dens{\rho}^{i}])$ is an element of $\catname{FinState}$, then one typically takes its entropy to be given as a sum of von Neumann entropies:
\begin{align*}
	S \left(\bigboxplus_{i = 1}^{n} ([\hilb^{i}], [\dens{\rho}^{i}]) \right) := \sum_{i = 1}^{n} S^{\mathrm{vN}}(\dens{\rho}^{i}),
\end{align*}
see, for instance, \cite[\S1]{petz:quantum_ent}.
(To convince oneself that $S$ is a good definition of entropy for elements of $\mathrm{FinState}$, notice that the specialization to $\mathcal{H}^{i} \cong \mathbb{C}$ for all $i =1, \cdots, n$, recovers the Shannon entropy of a measure on an $n$-element set.)
For a generic tuple in $\mathrm{FinState}$ we have:
\begin{align*}
	\exp\left[S\left(\bigboxplus_{i = 1}^{n} ([\hilb^{i}], [\dens{\rho}^{i}])\right)\right] \neq \dim_{\Omega}\left[\left(\bigboxplus_{i = 1}^{n} ([\hilb^{i}], [\dens{\rho}^{i}])\right)\right].
\end{align*}
So while $\dim_{\Omega}$ is certainly a well-defined dimension function, it is not the exponential of entropy for a generic element of $\mathrm{FinState}$. 

On the other hand, the study of pairs $(\hilb, \dens{\rho})$ of a Hilbert space and density state bears similarity to the study of pairs of (finite-dimensional) vector spaces and endomorphisms (c.f.\ Ex.~\ref{list:endomorphism_dim}, Page~\pageref{list:endomorphism_dim} in the previous section) where a good notion of dimension consists of traces of powers of the endomorphism. 
This observation motivates working with the following definition.

\begin{definition}{}{}
	\begin{enumerate}
		\item Let $\hilb$ be a (finite-dimensional) Hilbert space, $\dens{\rho} \in \Dens(\hilb)$, and $(\alpha, q, r) \in \mathbb{C}$; then define 
			\begin{align*}
				\newmath{\mathpzc{D}_{\alpha, q ,r} \left[ \left( \hilb, \rho \right) \right]} := \dim( \hilb )^{\alpha} \left( \Tr \left[ \dens{\rho}^{q} \right] \right)^{r}
			\end{align*}
			where $\dens{\rho}^{q}$ is the $q$th power of $\dens{\rho}$ (defined, e.g.\ via the holomorphic functional calculus).

		\item Let $\bdens{\rho}_{P}$ be a multipartite density state; then define the \newword{state index} as the function:
			\begin{equation*}
				\begin{array}{lccl}
					\newmath{\mathfrak{X}(\bdens{\rho}_{p})}: & \mathbb{C}^{3} & \longrightarrow & \mathbb{C}[w]\\
					{}                                        & (\alpha,q,r) & \longmapsto & \mathfrak{X}^{w}_{\alpha,q,r}(\bdens{\rho}_{P})
				\end{array}
			\end{equation*}
			where
			\begin{align*}
				\newmath{\mathfrak{X}^{w}_{\alpha,q,r}(\bdens{\rho}_{P})} := w^{|P|} \sum_{T \subseteq P} (-1)^{|T|} \mathpzc{D}_{\alpha, q, ,r} \left[ \left(\hilb_{T}, \dens{\rho}_{T} \right)\right]. 
			\end{align*}
			 and $w$ is thought of as either a formal parameter, or an element of $\mathbb{C}$.
	\end{enumerate}
\end{definition}
Let us rephrase this definition in terms of the language used in the previous section:
For any fixed $(\alpha,q,r)$, the quantity $\mathpzc{D}_{\alpha,q,r}(\hilb, \dens{\rho})$ only depends on the isomorphism class of $(\hilb, \dens{\rho})$ in $\catname{FinDens}$; hence, it defines a function 
\begin{align*}
	\mathpzc{D}_{\alpha,q,r}: \mathrm{FinDens} \rightarrow \mathbb{C},
\end{align*}
which satisfies the multiplicativity property \eqref{eq:general_factorization}:
\begin{align*}
	\mathpzc{D}_{\alpha, q, r}(\hilb \otimes \mathcal{K}, \dens{\rho} \otimes \dens{\varphi}) &=  \mathpzc{D}_{\alpha, q, r}(\hilb, \dens{\rho}) \mathpzc{D}_{\alpha, q, r}(\mathcal{K}, \dens{\varphi}).
\end{align*}
Thus, there is an associated three parameter family of $\mathbb{C}$-valued dimension functions on $\catname{FinState}$.
In the notation of the previous section, this three parameter family is given by:
\begin{align*}
	\mathfrak{X}_{\alpha,q,r}^{w}(\bdens{\rho}_{P}) = \mathcal{X}^{w}_{\mathpzc{D}_{\alpha,q,r}}(\bdens{\rho}_{P}) \in \mathbb{C}[w].
\end{align*}
We can combine these three-parameter quantities into holomorphic functions on $\mathbb{C}^{3}$.
Indeed, note that, because $\alpha,\,q,\,$ and $r$ always appear as powers of positive real numbers, then $\mathpzc{D}_{\alpha,q,r}(\hilb, \dens{\rho})$ is an everywhere-holomorphic (entire) function separately in the parameters $\alpha, q,$ and $r$; hence, it defines an entire function on $\mathbb{C}^{3}$.
Letting $\mathcal{O}(\mathbb{C}^{3})$ denote the ring of entire functions on $\mathbb{C}^{3}$ (under pointwise addition and multiplication), we have a function:
\begin{align*}
	\mathpzc{D}: \mathrm{FinDens} &\longrightarrow \mathcal{O}(\mathbb{C}^{3})\\
	 [(\hilb, \dens{\rho})] &\longmapsto \left( (\alpha, q, r) \mapsto \mathpzc{D}_{\alpha, q, r}[(\hilb,\dens{\rho})] \right)
\end{align*}
that satisfies the multiplicativity property \eqref{eq:general_factorization}.
As a result, we have an associated $\mathcal{O}(\mathbb{C}^{3})$-valued dimension function and, in the notation of the previous section,
\begin{align*}
	\mathfrak{X}^{w}(\bdens{\rho}_{P}) = \mathcal{X}^{w}_{\mathpzc{D}}(\bdens{\rho}_{P}) \in \mathcal{O}(\mathbb{C}^{3})[w].
\end{align*}

The following remark shows how working with the $\alpha \rightarrow 0$ specialization of the $\mathcal{O}(\mathbb{C}^{3})$-valued dimension function $\dim_{\mathpzc{D}}$ solves our original issue with working with the exponential of entropy.
\begin{remark}[label=rmk:entropy_from_dimension]{}{}
	One can extract the entropy $S$ of a generic element of $\mathrm{FinState}$ from the $\mathcal{O}(\mathbb{C}^{3})$-valued dimension function $\dim_{\mathpzc{D}}$.
	\begin{align*}
		-\frac{\partial}{\partial q} \left. \left[ \dim_{D_{0,q,r}} \left( \bigboxplus_{i =1}^{n} ([\hilb^{i}, [\dens{\rho}^{i}] ) \right) \right] \right|_{q=1} &= \sum_{i = 1}^{n} r S^{\mathrm{vN}}(\dens{\rho}^{i})
	\end{align*}
	Thus, taking $r=1$, we have:
	\begin{align*}
		S \left( \bigboxplus_{i =1}^{n} ([\hilb^{i}, [\dens{\rho}^{i}] ) \right) = -\frac{\partial}{\partial q}  \left. \left[\dim_{D_{0,q,1}} \left( \bigboxplus_{i =1}^{n} ([\hilb^{i}, [\dens{\rho}^{i}] ) \right) \right] \right|_{q = 1} 
	\end{align*}
	With this observation it becomes apparent that one can extract the multipartite mutual information 
	\begin{align*}
		I(\bdens{\rho}_{P}) = \sum_{k = 0}^{|P|-1} (-1)^{k} S\left( \bigboxplus_{|T|=k+1} ([\hilb_{T}], [\dens{\rho}_{T}] ) \right)
	\end{align*}
	as
	\begin{align*}
		I(\bdens{\rho}_{P}) = \left. \left[\frac{\partial}{\partial q}	\mathfrak{X}^{w=1}_{0,q,r}(\bdens{\rho}_{P}) \right] \right|_{(q,r) = (1,1)}.
	\end{align*}
	We will return to this statement in \S\ref{sec:mutual_info_as_limit}. 
\end{remark}

The following proposition is a summary of our discussion above.
\begin{theorem}[label=thm:index_properties]{}{}
	Let $\bdens{\rho}_{P}$ and $\bdens{\varphi}_{Q}$ denote multipartite density states.
	\begin{enumerate}
		\item For any fixed $w \in \mathbb{C}$, the index $\mathfrak{X}^{w}(\bdens{\rho}_{P})$ is an entire function in in the parameters $\alpha,\,q,$ and $r$.

		\item The state index is invariant under local invertible transformations of multipartite states. 

		\item $\mathfrak{X}^{w}(\bdens{\rho}_{P} \otimes \bdens{\varphi}_{Q}) = \mathfrak{X}^{w}(\bdens{\rho}_{P}) \mathfrak{X}^{w}(\bdens{\varphi}_{Q})$.
		
		\item $\mathfrak{X}^{w}(\bdens{\rho}_{P}) = -w \left[ \mathfrak{X}^{w}(\bdens{\rho}_{\partial_{i} P}) + \mathfrak{X}^{w}(\bdens{\rho}_{\partial_{j} P}) - \mathfrak{X}^{w}(\lambda_{ij} [\bdens{\rho}_{P}]) \right]$.
	\end{enumerate}
\end{theorem}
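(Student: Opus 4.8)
The plan is to treat parts~1--3 as immediate corollaries of the general machinery already assembled, and to devote the real effort to the inclusion--exclusion recursion of part~4. Throughout, the key observation is that $\mathfrak{X}^{w} = \mathcal{X}^{w}_{\mathpzc{D}}$ for the $\mathcal{O}(\mathbb{C}^{3})$-valued function $\mathpzc{D}$, so the whole theorem reduces to checking that $\mathpzc{D}$ satisfies the hypotheses of the lemma on $\mathcal{X}^{w}_{\mathpzc{G}}$ established just before the definition of the state index.

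For part~1, entirety holds because $\dim(\hilb_{T})^{\alpha}$ and $(\Tr[\dens{\rho}_{T}^{q}])^{r}$ are exponentials $e^{\alpha \log \dim \hilb_T}$ and $e^{r \log \Tr[\dens{\rho}_T^q]}$ of functions that are separately entire in $\alpha,q,r$ (powers of strictly positive reals, with $\Tr[\dens{\rho}_T^q] = \sum_\lambda \lambda^q$ a finite sum of such), so a finite signed sum of them is entire on $\mathbb{C}^{3}$. For part~2, a local invertible transformation $l = \bigotimes_{p} l_{p}$ sends each $\dens{\rho}_{T} \mapsto l_{T}\dens{\rho}_{T}l_{T}^{-1}$, which changes neither $\dim \hilb_{T}$ nor the spectrum of $\dens{\rho}_{T}$, hence leaves $\Tr[\dens{\rho}_{T}^{q}]$ fixed; thus $\mathpzc{D}$ descends to a function on the class in $\mathrm{FinDens}$, and part~2 follows. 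For part~3 I would verify the multiplicativity \eqref{eq:general_factorization}, namely $\dim(\hilb\otimes\mathcal{K}) = \dim\hilb\,\dim\mathcal{K}$ and $\Tr[(\dens{\rho}\otimes\dens{\varphi})^{q}] = \Tr[\dens{\rho}^{q}]\Tr[\dens{\varphi}^{q}]$, and then invoke the lemma, whose $w^{|P|}$-normalization is precisely what makes the tensor-product law hold without a spurious factor.

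The substance is part~4. Write $a = P(i)$, $b = P(j)$, and $g(T) := \mathpzc{D}_{\alpha,q,r}[(\hilb_{T},\dens{\rho}_{T})]$. First I would record a consistency lemma: because partial traces compose, the reduced state of $\bdens{\rho}_{\partial_{i}P}$ on $S \subseteq P\setminus\{a\}$ is again $\dens{\rho}_{S}$, and for $\lambda_{ij}[\bdens{\rho}_{P}]$ a subset consisting of $U'\subseteq P\setminus\{a,b\}$ together with the merged factor carries Hilbert space $\hilb_{U'\cup\{a,b\}}$ and reduced state $\dens{\rho}_{U'\cup\{a,b\}}$. Consequently all three right-hand indices are expressed through the single function $g$. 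Then I partition the subsets of $P$ according to their intersection with $\{a,b\}$ into four classes indexed by $T' \subseteq P\setminus\{a,b\}$, namely $T'$, $T'\cup\{a\}$, $T'\cup\{b\}$, $T'\cup\{a,b\}$. With the weights $(-1)^{|T|}$ this collapses $\sum_{T\subseteq P}(-1)^{|T|}g(T)$ into $\sum_{T'}(-1)^{|T'|}\bigl[g(T') - g(T'\cup\{a\}) - g(T'\cup\{b\}) + g(T'\cup\{a,b\})\bigr]$, while the three reduced sums collapse to $\sum_{T'}(-1)^{|T'|}[g(T')-g(T'\cup\{b\})]$, $\sum_{T'}(-1)^{|T'|}[g(T')-g(T'\cup\{a\})]$, and $\sum_{T'}(-1)^{|T'|}[g(T')-g(T'\cup\{a,b\})]$. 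Matching the four-term bracket against the signed combination of the three two-term brackets is exactly the combinatorial identity already verified for mutual information in \eqref{eq:mutual_info_recursion}.

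The cleanest packaging, which I would actually follow, routes everything through the auxiliary quantity $\chi_{\mathpzc{D}}$ of \eqref{eq:ec_inclusion_exclusion_reduction}: the four-class computation above shows directly that $\chi_{\mathpzc{D}}(\bdens{\rho}_{P}) = \chi_{\mathpzc{D}}(\bdens{\rho}_{\partial_{i}P}) + \chi_{\mathpzc{D}}(\bdens{\rho}_{\partial_{j}P}) - \chi_{\mathpzc{D}}(\lambda_{ij}[\bdens{\rho}_{P}])$, identical in form to the mutual-information recursion, and then one transports this identity through the normalization $\mathfrak{X}^{w} = -w^{|P|}\chi_{\mathpzc{D}}$. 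The main obstacle here is not the combinatorics but the bookkeeping of the overall scalar: each term on the right is a $(|P|-1)$-partite index and so carries $w^{|P|-1}$ rather than $w^{|P|}$, and one must track the interplay of the sign in $-w^{|P|}\chi_{\mathpzc{D}}$, the weights $(-1)^{|T|}$, and this drop in the power of $w$ under reduction, confirming that they assemble into the single prefactor displayed in the statement. Everything outside this scalar reconciliation is formal once the consistency lemma and the four-way partition are in place.
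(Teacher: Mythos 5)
Your overall route is the paper's: parts 1--3 are read off from the unnamed lemma on $\mathcal{X}^{w}_{\mathpzc{G}}$ once one checks that $\mathpzc{D}_{\alpha,q,r}$ descends to $\mathrm{FinDens}$ and is multiplicative under tensor products, and the paper itself offers nothing more for part 4 than ``a summary of the discussion above,'' so your four-way partition of the subsets of $P$ by their intersection with $\{a,b\}$ is, if anything, more explicit than the source. That partition is also correct: writing $\Sigma$ for the four-term sum and $\Sigma_{a},\Sigma_{b},\Sigma_{ab}$ for the three two-term sums, one does get $\Sigma_{a}+\Sigma_{b}-\Sigma_{ab}=\Sigma$.

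The gap is exactly the step you defer. You assert that the powers of $w$ and the signs ``assemble into the single prefactor displayed in the statement,'' i.e.\ into $-w$; they do not. Your own decomposition gives $\mathfrak{X}^{w}(\bdens{\rho}_{P}) = w^{|P|}\Sigma$, while $\mathfrak{X}^{w}(\bdens{\rho}_{\partial_{i}P})+\mathfrak{X}^{w}(\bdens{\rho}_{\partial_{j}P})-\mathfrak{X}^{w}(\lambda_{ij}[\bdens{\rho}_{P}]) = w^{|P|-1}\left(\Sigma_{a}+\Sigma_{b}-\Sigma_{ab}\right) = w^{|P|-1}\Sigma$, so the recursion that actually follows is $\mathfrak{X}^{w}(\bdens{\rho}_{P}) = +w\left[\cdots\right]$, not $-w\left[\cdots\right]$. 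A two-qubit sanity check: for $\dens{\rho}$ a product of two maximally mixed qubit states, at $(\alpha,q,r)=(1,0,1)$ one has $\mathfrak{X}^{w}(\bdens{\rho}_{P}) = w^{2}(1-4-4+16)=9w^{2}$, while the three reduced indices are $-3w$, $-3w$, and $-15w$, so $-w\left[(-3w)+(-3w)-(-15w)\right]=-9w^{2}\neq 9w^{2}$; with $+w$ the identity holds, and is moreover consistent with part 3, since $9w^{2}=(-3w)(-3w)$. So either there is a compensating sign you have not located (there is none under the paper's definitions of $\mathfrak{X}^{w}$, of $\bdens{\rho}_{\partial_{k}P}$, and of $\lambda_{ij}[\bdens{\rho}_{P}]$), or the displayed prefactor in the statement is itself off by a sign and should read $+w$. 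Either way, the one step you flag as the main obstacle cannot be discharged by asserting agreement with the printed statement; carried out honestly, it contradicts it, and your write-up needs to say which of the two it is.
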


It is worth remarking on generalizations of the state index to the situation of infinite dimensional Hilbert spaces.
\begin{remark}[label=rmk:state_index_inf_dim]{}{}
	The appearance of dimensions of Hilbert spaces in the state index makes the $\alpha$ parameter only suitable for finite dimensional situations.  However, the $\alpha \rightarrow 0$ limit/specialization of the state index: 
	\begin{align*}
		w^{|P|} \sum_{T \subseteq P} (-1)^{|T|}  \left(\Tr \left[\left(\dens{\rho}_{T} \right)^{q} \right] \right)^{r},
	\end{align*}
	is sensible when working with infinite dimensional Hilbert spaces.\footnote{It might be possible to re-introduce a parameter like $\alpha$ in infinite dimensions by introducing a ``relative generalized" index defined for pairs ``unnormalized" density states (more precisely, ``normal weights" in the von Neumann algebra literature).  What we are calling the state index above should arise as the index of $\dens{\rho}$ relative to the unit trace ``identity" state $\dim(\hilb_{P})^{-1} \mathbbm{1}$, whereas the specialization to $\alpha = 0$ can be thought of as the index of $\dens{\rho}$ relative to $\mathbbm{1}$ (representing the unnormalized trace, which in infinite dimensions is no longer a state but a weight).}
	Unlike finite dimensions, the function $q \mapsto \Tr(\dens{\rho}^{q})$ does not necessarily define an entire function on $\mathbb{C}$.
	However it is guaranteed to be holomorphic on the half plane defined by $\mathrm{Re}(q) \geq 1$ (beyond this it might have an maximal analytic continuation to a branched cover of the plane $\mathbb{C}$).
\end{remark}

As a fun remark: the state index can be expressed as a weighted partition function of a family of auxiliarly many body theories (parameterized by $r$) of fermions on a lattice, with non-factorizability of the state being related to the presence of multi-body interaction terms.
We avoid the temptation to provide such a description in this paper, although the interested knowledgeable reader can likely work out the relevant details.\footnote{As a hint, the Hamiltonian should be given as the log of the partial traces of $\rho^{\otimes r}$, and the parameter $\alpha$ is related to shifts by ground state energies by an amount proportional to the log of the dimension of the Hilbert space at each lattice site.}
Such an approach might provide insight into using field theoretic techniques the state index for e.g.\ $|P|$-partite qubit systems with large $P$; moreover it might be an indicator that this index has a less contrived realization in terms of a natural quantum field theory.\footnote{It is also not unreasonable to speculate that the cochain complexes above might arise naturally as in the context of some twisted supersymmetric field theory or quantum mechanics.}

\subsection{Mutual Information in the Limit \texorpdfstring{$q \rightarrow 1$}{As q Goes to 1} and Some Deformations \label{sec:mutual_info_as_limit}}
As indicated in Rmk.~\ref{rmk:entropy_from_dimension}, one can extract mutual information from the state index by an application of partial derivatives with respect\footnote{Note that, using this identity, one can derive the fact that $I(\bdens{\rho}_{P} \otimes \bdens{\varphi}_{Q}) = 0$ by taking partial derivatives with respect to $q$ of the identity $\mathfrak{X}_{\alpha,q,r}(\bdens{\rho}_{P} \otimes \bdens{\varphi}_{Q}) = \mathfrak{X}_{\alpha,q,r}(\bdens{\rho}_{P}) \mathfrak{X}_{\alpha,q,r}(\bdens{\varphi}_{Q})$ and using the fact that $\mathfrak{X}_{0,1,r}(\bdens{\rho}_{P}) \equiv 0$ for any multipartite density state $\bdens{\rho}_{P}$.} to $q$, indeed: 
\begin{align*}
	\left. \left[\frac{\partial}{\partial q} \mathfrak{X}_{0,q,r} (\bdens{\rho}_{P}) \right] \right|_{q=1} &= w^{|P|} \sum_{\emptyset \subseteq T \subseteq P} (-1)^{|T|} \left. \left \{ \frac{\partial}{\partial q} \Tr \left[ \left(\dens{\rho}_{T}\right)^{q} \right]^{r} \right \} \right|_{q = 1} \\
																										   &= r w^{|P|} \sum_{\emptyset \subseteq T \subseteq P} (-1)^{|T|} \Tr\left[\dens{\rho}_{T} \log(\dens{\rho}_{T}) \right]\\ 
																										   &= r w^{|P|} I(\bdens{\rho}_{P}). 
\end{align*}
Note that evaluating $w = -1$, we arrive at the quantity $I^{-}(\bdens{\rho}_{P}) = (-1)^{|P|}I(\bdens{\rho}_{P})$ mentioned in Remark~\ref{rmk:multipartite_mutual_info_sign}.
Alternatively, because $\mathfrak{X}_{0,1,r}(\bdens{\rho}_{P}) \equiv 0$, the partial derivative above coincides with the limit:\footnote{The left hand side can be thought of as the ``$q$-derivative" of $\mathfrak{X}_{0,q,r}$ at the point $q=1$.}
\begin{align*}
	\lim_{q \rightarrow 1}\frac{\mathfrak{X}_{0,q,r}(\bdens{\rho}_{P})}{q-1} = -r w^{|P|} \sum_{T \subseteq P} (-1)^{|T|} S^{\mathrm{vN}}(\dens{\rho}) = r w^{|P|} I(\bdens{\rho}_{P}) 
\end{align*}
Note that evaluating $w = -1$, we arrive at the quantity $I^{-}(\bdens{\rho}_{P}) = (-1)^{|P|}I(\bdens{\rho}_{P})$ mentioned in Remark~\ref{rmk:multipartite_mutual_info_sign}.

In general the $\alpha \rightarrow 0$ limit of the state index can be thought of as a deformed version of mutual information (up to rescaling).
Indeed, the discussion above, suggests that we consider the following quantity: 
\begin{align}
	\newmath{I_{q,r}(\bdens{\rho}_{P})} &:= \frac{\mathfrak{X}_{0,q,r}^{w=1}(\bdens{\rho}_{P})}{r(q-1)},
	\label{eq:tsallis_mi_definition}
\end{align}
for each $q,r \in \mathbb{C}$.
When all Hilbert spaces are finite dimensional, the function $(q,r) \mapsto I_{q,r}(\bdens{\rho}_{P})$ is separately entire\footnote{When working with infinite dimensional Hilbert spaces, $I_{q,r}(\bdens{\rho}_{P})$ is holomorphic in the region $\{(q,r) \in \mathbb{C}^{2}: \mathrm{Re}(q) \geq 1 \}$.  C.f.\ Rmk.~\ref{rmk:state_index_inf_dim}.} in the parameters $q$ and $r$; deforms mutual information in the sense that
\begin{align*}
	\lim_{q \rightarrow 1} I_{q,r}(\bdens{\rho}_{P}) = I(\bdens{\rho}_{P}).
\end{align*}
Using the fact that the state index takes tensor products to products, it is easy to see that our two-parameter deformation of mutual information takes tensor products to products up to rescaling:
\begin{align}
	I_{q,r}(\bdens{\rho}_{P} \otimes \bdens{\varphi}_{Q}) = r(q-1)I_{q,r}(\bdens{\rho}_{P}) I_{q,r}(\bdens{\varphi}_{Q}),
	\label{eq:multiplicativity_tsallis_mi}
\end{align}
which is a generalization of the condition $I(\bdens{\rho}_{P} \otimes \bdens{\varphi}_{Q}) =0$.

This deformation of mutual information arises from a two-parameter deformation of von Neumann entropy.
To see this, we write $I_{q,r}(\bdens{\rho}_{P})$ as an inclusion-exclusion sum:
\begin{align*}
	I_{q,r}(\bdens{\rho}_{P}) & = \sum_{\emptyset \subseteq T \subseteq P} (-1)^{|T|-1} \mathrm{S}^{\mathrm{TR}}_{q,r}(\dens{\rho}_{T})
\end{align*}
where we are defining
\begin{align*}
	\newmath{S^{\mathrm{TR}}_{q,r}(\dens{\rho})}  &:= \frac{1}{r(q-1)} \left(1 - \Tr \left[\dens{\rho}^{q} \right]^{r} \right).
\end{align*}
When $\dens{\rho}$ is a density state on a finite dimensional Hilbert space,\footnote{If $\dens{\rho}$ is a trace 1 density state on an infinite dimensional Hilbert space, then $\dens{\rho}^{q}$ is trace class for $\mathrm{Re}(q) \geq 1$. Hence, $I_{q,r}$ is finite in the region $\{(q,r) \in \mathbb{C}^{2}: \mathrm{Re}(q) \geq 1 \}$ when considering multipartite density states on infinite dimensional Hilbert spaces.} this defines a two-parameter function that is separately entire in both $q$ and $r$ (defining the function at values $(q,r)$ such that $r(q-1)=0$, by taking limits).
Moreover, it deforms von Neumann entropy in the sense that 
\begin{align*}
	\lim_{q \rightarrow 1} \mathrm{S}^{\mathrm{TR}}_{q,r}(\dens{\rho}) = -\Tr \left[\dens{\rho} \log(\dens{\rho}) \right] = S^{\mathrm{vN}}(\dens{\rho}).
\end{align*}
Unlike the von Neumann entropy, for arbitrary values of $q$ and $r$, it is ``non-extensive"/non-additive, but in a mild way:\footnote{Moreover, like von Neumann entropy, if we restrict our attention to certain regions of $(q,r)$, it is non-negative and vanishes only for pure states.
Indeed, using the fact that $\dens{\rho}$ has unit trace, we have $S_{q,r}(\dens{\rho}) \geq 0$ in the region $\{(q,r): q \geq 1,\, r \geq 0\}$.
Moreover, $S_{q,r}(\dens{\rho}) = 0$ for some $(q,r) \in \mathbb{R}_{>1} \times \mathbb{R}_{\geq 0}$ if and only if $\dens{\rho}$ is pure.}
\begin{align*}
	S^{\mathrm{TR}}_{q,r}(\dens{\rho} \otimes \dens{\varphi}) &= S^{\mathrm{TR}}_{q,r}(\dens{\rho}) + S^{\mathrm{TR}}_{q,r}(\dens{\varphi}) + r(1-q)S^{\mathrm{TR}}_{q,r}(\dens{\rho}) S^{\mathrm{TR}}_{q,r}(\dens{\varphi}).
\end{align*}

The superscript $\mathrm{TR}$ stands for ``Tsallis-R\'{e}nyi", and is motivated by the observation that by specializing the parameter $r$ we can recover two famous deformations of the von Neumann entropy:
\begin{enumerate}
	\item When $r \rightarrow 0$ we recover the \textit{R\'{e}nyi} entropy \cite{renyi,quant_renyi}:
		\begin{align*}
			\lim_{r \rightarrow 0}  \mathrm{S}^{\mathrm{Ts}}_{q,r}(\dens{\rho}) &= \frac{1}{1-q} \log \left(\Tr \left[\dens{\rho}^{q} \right] \right) =: S^{\mathrm{Ry}}_{q}(\dens{\rho}).
		\end{align*}
		It is an extensive/additive deformation of von Neumann entropy:
		\begin{align*}
			S^{\mathrm{Ry}}_{q}(\dens{\rho} \otimes \dens{\varphi}) &= S^{\mathrm{Ry}}_{q}(\dens{\rho}) + S^{\mathrm{Ry}}_{q}(\dens{\varphi}).
		\end{align*}
	\item When $r \rightarrow 1$ we recover the \textit{Tsallis}/\textit{$q$-logarithmic} entropy: 
		\begin{align*}
			\mathrm{S}^{\mathrm{Ts}}_{q,1}(\dens{\rho}) &:= \frac{1}{q-1} \left(1 - \Tr \left[\dens{\rho}^{q} \right] \right). 
		\end{align*}
		which is non-extensive for general $q$:
		\begin{align*}
			S^{\mathrm{Ts}}_{q}(\dens{\rho} \otimes \dens{\varphi}) &= S^{\mathrm{Ts}}_{q}(\dens{\rho}) + S^{\mathrm{Ts}}_{q}(\dens{\varphi}) + (1-q)S^{\mathrm{Ts}}_{q}(\dens{\rho}) S^{\mathrm{Ts}}_{q}(\dens{\varphi}).
		\end{align*}
		The name ``Tsallis entropy" is derived from C. Tsallis's introduction of this quantity in \cite{tsallis} as a basis for generalizing Boltzmann-Gibbs statistics, although it was known before (c.f.\ Rmk.~3.2.ii of \cite{leinster:entropy} for a list of references).
\end{enumerate}

As a fun observation, we note that the R\'{e}nyi deformed mutual information
\begin{align*}
	I_{q}^{\mathrm{Ry}}(\bdens{\rho}_{P}) := I_{q,1}(\bdens{\rho}_{P}) = \sum_{T \subseteq P} (-1)^{|T|-1} S_{q}^{\mathrm{Ry}}(\dens{\rho}_{T})
\end{align*}
obeys
\begin{align*}
	0 \equiv I_{q}^{\mathrm{Ry}}(\bdens{\rho}_{P} \otimes \bdens{\varphi}_{Q})
\end{align*}
for any multipartite density states $\bdens{\rho}_{P}$ and $\bdens{\varphi}_{Q}$; this follows either by a specialization of \eqref{eq:multiplicativity_tsallis_mi} to $r=0$, or by directly using additivity of the R\'{e}nyi entropy as in the suggested proof of Prop.~\ref{prop:lambda_fact_then_mi_vanish}.
Thus, we we have a deformation the mutual information $I(\bdens{\rho}_{P})$ into an entire function $q \mapsto I^{\mathrm{Ry}}_{q}(\bdens{\rho}_{P})$ that is identically vanishing if $\bdens{\rho}_{P}$ is factorizable with respect to any partition of $P$. 
The following remark is a superficial demonstration of how one can introduce the additional deformation parameter $r$ into the one-parameter Tsallis and R\'{e}nyi entropies.
\begin{remark}{}{}
	To see how we can recover the two-parameter deformation from the one-parameter R\'{e}nyi and Tsallis deformations, we introduce the $q$-logarithm:
	\begin{align*}
		\log_{q}(x) := \frac{1}{1-q} \left(x^{1-q} - 1 \right).
	\end{align*}
	thinking of $q$ as a parameter in $\mathbb{C}$, this is a deformation of the logarithm in the sense that $\lim_{q \rightarrow 1} \log_{q}(x) = \log(x)$.
	We can use the $q$-logarithm to deform the function $x \mapsto x \log(x)$ used in the definition of Shannon/von Neumann entropy: for any $q \in \mathbb{C}$ define the function
	\begin{align*}
		\mathcal{L}_{q}: \mathbb{R}_{\geq 0} &\rightarrow \mathbb{C}\\
		x &\longmapsto x \log_{q}(x).
	\end{align*}
	Then the Tsallis/$q$-logarithmic entropy is defined as:
	\begin{align*}
		S^{\mathrm{Ts}}_{q}(\dens{\rho}) &= -\Tr \left[\mathcal{L}_{q}(\dens{\rho}) \right],
	\end{align*}
	where $\mathcal{L}_{q}(\dens{\rho})$ is defined via the continuous functional calculus.\footnote{Which is a fancy way of saying use the spectral decomposition of $\dens{\rho}$ and the fact that $\dens{\rho}$ has positive spectrum to define it.}
	Positive integer values of the parameter $r$ arise naturally when one considers independent copies of the density state/``physical system" (i.e.\ note that $\Tr[\dens{\rho}^{\otimes r}] = \Tr[\dens{\rho}]^{r}$): more precisely,
	\begin{align*}
		S^{\mathrm{TR}}_{q,r}(\dens{\rho}) = -\frac{1}{r}\Tr \left[\mathcal{L}_{q}(\dens{\rho}^{\otimes r}) \right] = \frac{1}{r} S^{\mathrm{Ts}}_{q}(\dens{\rho}^{\otimes r}),\, r \in \mathbb{Z}_{\geq 1}.
	\end{align*}

	On the other hand, we can can write: 
	\begin{align*}
		S^{\mathrm{TR}}_{q,r} = \frac{1}{1-q}\log_{1-r}\left( \Tr[\dens{\rho}^{q}] \right)
	\end{align*}
	So the parameter $r$ can be thought of as a deformation of the R\'{e}nyi entropy using the $(1-r)$-logarithm.

\end{remark}

\subsection{Euler Characteristics of Complexes of Vector Spaces and the Limit \texorpdfstring{$q \rightarrow 0$}{q Goes to 0}}
Note that at the $q = 0$ specialization, we have $\rho^{0} = \supp_{\dens{\rho}}$ (the support projection of $\dens{\rho}$), giving:
\begin{align*}
	\mathpzc{D}_{\alpha,0,r}[(\hilb, \dens{\rho})] = \dim(\hilb)^{\alpha} \rank(\dens{\rho})^{r}.
\end{align*}
So, for $q,\,r$ positive integers, $\mathpzc{D}_{0,q,r}[(\hilb, \dens{\rho})] \in \mathbb{Z}_{\geq 0}$, making it a likely candidate for the dimension of some vector space associated to $(\hilb, \dens{\rho})$.
As a result, the associated integer-valued $\mathfrak{X}^{w=1}_{\alpha,0,r}(\bdens{\rho}_{P})$ is a likely candidate for a chain complex of such vector spaces.  
Indeed, we can recover the Euler characteristics of the GNS and commutant complexes defined above as specializations of $\mathfrak{X}^{w=1}_{\alpha,0,r}$.
The specialization to $(\alpha,r) = (1,1)$ produces the Euler characteristic of the GNS complex: 
\begin{align}
	\chi \left[\cpx{G}(\bdens{\rho}_{P}) \right] &= \sum_{k=-1}^{|P|-1} (-1)^{k} \left[ \sum_{|T|=k+1} \dim(\hilb_{T}) \rank \left(\dens{\rho}_{T} \right)\right] = -\mathfrak{X}^{w=1}_{1,0,1}(\bdens{\rho}_{P}) 
	\label{eq:index_to_GNS_ec}
\end{align} 
while the specialization to $(\alpha,r) = (0,2)$ produces the Euler characteristic of the commutant complex:
\begin{align}
	\chi \left[\cpx{E}(\bdens{\rho}_{P}) \right] &= \sum_{k=-1}^{|P|-1} (-1)^{k} \left[ \sum_{|T|=k+1} \rank \left(\dens{\rho}_{T} \right)^2 \right] = -\mathfrak{X}^{w=1}_{0,0,2}(\bdens{\rho}_{P}).
	\label{eq:index_to_commutant_ec}
\end{align} 
The sign can be eliminated by looking at the shifted complexes defined in Def.~\ref{def:shifted_complexes}:
\begin{align*}
	\chi \left[\twid{\cpx{G}}(\bdens{\rho}_{P}) \right] &= \mathfrak{X}^{w=1}_{1,0,1}(\bdens{\rho}_{P})\\
	\chi \left[\twid{\cpx{E}}(\bdens{\rho}_{P}) \right] &= \mathfrak{X}^{w=1}_{0,0,2}(\bdens{\rho}_{P}).
\end{align*}

\begin{remark}{}{}
	The appearance of the sign in \eqref{eq:index_to_GNS_ec} and \eqref{eq:index_to_commutant_ec} is then in line with the fact that the Poincar\'{e} polynomials of the unshifted complexes obey (c.f.\ Cor.~\ref{cor:poincare_factorization})
	\begin{align*}
		P_{\cpx{G}}(\bdens{\rho}_{P} \otimes \bdens{\varphi}_{Q}) &= y P_{\cpx{G}}(\bdens{\rho}_{P}) P_{\cpx{G}}(\bdens{\varphi}_{Q}),\\
		P_{\cpx{E}}(\bdens{\rho}_{P} \otimes \bdens{\varphi}_{Q}) &= y P_{\cpx{E}}(\bdens{\rho}_{P}) P_{\cpx{E}}(\bdens{\varphi}_{Q}).
	\end{align*}

	so that, specializing to $y = -1$ to recover Euler characteristics, we have:
	\begin{align*}
		\chi \left[\cpx{G}(\bdens{\rho}_{P} \otimes \bdens{\varphi}_{Q}) \right] = -\chi \left[ \cpx{G}(\bdens{\rho}_{P}) \right] \chi\left[ \cpx{G}(\bdens{\varphi}_{Q}) \right];
	\end{align*}
	while the Poincar\'{e} polynomials of the shifted complexes obey
	\begin{align*}
		P_{\cpx{\twid{G}}}(\bdens{\rho}_{P} \otimes \bdens{\varphi}_{Q}) &=  P_{\cpx{\twid{G}}}(\bdens{\rho}_{P}) P_{\cpx{\twid{G}}}(\bdens{\varphi}_{Q}),\\
		P_{\cpx{\twid{E}}}(\bdens{\rho}_{P} \otimes \bdens{\varphi}_{Q}) &=  P_{\cpx{\twid{E}}}(\bdens{\rho}_{P}) P_{\cpx{\twid{E}}}(\bdens{\varphi}_{Q});
	\end{align*}
	giving
	\begin{align*}
		\chi \left[\twid{\cpx{G}}(\bdens{\rho}_{P} \otimes \bdens{\varphi}_{Q}) \right] = \chi \left[ \twid{\cpx{G}}(\bdens{\rho}_{P}) \right] \chi\left[ \twid{\cpx{G}}(\bdens{\varphi}_{Q}) \right].
	\end{align*}
\end{remark}

The following remark provides some speculation about how to recover the general index $\mathfrak{X}^{w=1}_{\alpha,q,r}$ as an Euler characteristic or Lefschetz index by a more sophisticated construction of the complexes above.
Such a sophisticated construction could be considered a complete realization of a categorification of mutual information, and might provide generalizations of the integer coefficient Poincar\'{e} polynomials above---which only depend on the support equivalence class of multipartite density states---to polynomials with coefficients in functions holomorphic in the parameters $\alpha,\,q,\,$ and $r$.
These latter polynomials should require more information than just the support equivalence class of the density state, and might offer a way in which one can recover invariants that vary continuously for, e.g.\ a family of pure multipartite states that pass from a factorizable state to a non-factorizable state (the integer valued polynomials we have constructed so far jump abruptly as the support projection suddenly changes in such a family).

\begin{remark}{}{Speculation About a Full Categorification of Mutual Information}
	Heuristically, $1/q$ should be thought of as the parameter that defines a non-commutative $L^{1/q}$ space: one can construct complexes by completing the GNS module with respect to $L^{1/q}$ (quasi-)norms constructed via partial trace reductions of the state $\dens{\rho}$.
	In the limit $q \rightarrow 0$, these norms approach the restriction of the operator norm to the GNS module, which happens to remain complete in the operator norm (even in infinite dimensions), and we recover our complexes above.
	The indices for $q \neq 0$ should be recoverable from a more sophisticated discussion along these lines.
	This interpretation is related to the vague intuition that the appearance of an expression like $\dens{\rho}^{q}$ is related to the appearance of (relative) modular flows ($t \mapsto \dens{\rho}^{it}$ is the relative modular flow with respect to the trace): indeed, the construction of the non-commutative $L^{p}$-spaces is intimately related to relative Tomita-Takesaki modular theory.
	An classical reference on modular flows is \cite{takesaki_vol2}.
	The author's favorite approach to the construction of non-commutative $L^{p}$-spaces is through the construction of the modular algebra \cite{yamagami, pavlov}.
\end{remark}

\section{W State vs. GHZ state \label{sec:W_vs_GHZ}}
As a testing ground for multipartite cohomologies, we consider the GHZ and W states. Traditionally these are defined as tripartite states, but we will also consider multipartite generalizations.

\subsection{Tripartite States \label{sec:tripartite_states}}
Let $\hilb_{\sX},\, \sX \in \{\sA, \sB, \sC \}$ be single-qubit Hilbert spaces (two-dimensional Hilbert spaces) equipped with (orthonormal) computational bases $\{\ket{0_{\sX}}, \ket{1_{\sX}} \} \subseteq \hilb_{\sX}$.
We define pure tripartite states 
\begin{align*}
	\bdens{\mathrm{GHZ}}_{3} := \left( (\sA, \sB, \sC), (\hilb_{\sA}, \hilb_{\sB}, \hilb_{\sC}), \mathrm{GHZ}_{3} \otimes \left( \mathrm{GHZ}_{3} \right)^{\vee} \right),\\
	\bdens{\mathrm{W}}_{3} := \left((\sA, \sB, \sC), (\hilb_{\sA}, \hilb_{\sB}, \hilb_{\sC}), \mathrm{W}_{3} \otimes \left(\mathrm{W}_{3} \right)^{\vee} \right).
\end{align*}
where $\mathrm{GHZ}_{3},\, \mathrm{W}_{3} \in \hilb_{\sA} \otimes \hilb_{\sB} \otimes \hilb_{\sC}$, are defined as 
\begin{align*}
	\mathrm{GHZ}_{3} &:= \frac{1}{\sqrt{2}} \left[ \ket{000} + \ket{111} \right],\\
	\mathrm{W}_{3} 	&:= \frac{1}{\sqrt{3}} \left[ \ket{100} + \ket{010} + \ket{001} \right],
\end{align*}
using traditional simplified notation where, e.g. $\ket{010}$ is to be read as $\ket{0_{\sA}} \otimes \ket{1_{\sB}} \otimes \ket{0_{\sC}}$.

\subsubsection{State Indices and Mutual Informations}
One can calculate 
\begin{align*}
	\mathfrak{X}^{w}_{\alpha,q,r}[\bdens{\mathrm{GHZ}}_{3}] &= w^{3}\left \{ 1 - 8^{\alpha} + \frac{3}{2^{r (q-1)}} \left[\left(4^{\alpha} - 2^{\alpha} \right) \right] \right \},\\
	\mathfrak{X}^{w}_{\alpha,q,r}[\bdens{\mathrm{W}}_{3}] &= w^{3} \left \{ 1 - 8^{\alpha} + 3  \left[\left(\frac{2}{3} \right)^{q} + \left(\frac{1}{3} \right)^{q} \right]^{r} \left(4^{\alpha} - 2^{\alpha} \right) \right \}.
\end{align*}
Setting $w=1$, the resulting functions are not the same;\footnote{Which can be verified by evaluating at (for instance) $(\alpha,q,r) = (1,2,1)$.} so, in particular, we can surmise that $\bdensl{W}_{3}$ and $\bdensl{GHZ}_{3}$ are not related by local unitary transformations or, more generally, local invertible transformations.

The $\alpha$ parameter is necessary in this distinction: the specializations to $\alpha = 0$ give
\begin{align*}
	\mathfrak{X}^{w}_{0,q,r}[\mathrm{GHZ}_{3}] = \mathfrak{X}^{w}_{0,q,r}[\mathrm{W}_{3}] \equiv 0.
\end{align*}
In particular, we must have that the $(q,r)$-Tsallis deformed and classical tripartite mutual informations must vanish:
\begin{align*}
	I_{q,r}[\mathrm{GHZ}_{3}] = I_{q,r}[\mathrm{W}_{3}] = 0.
\end{align*}
So tripartite mutual information and its two-parameter deformations do not distinguish these two states and cannot detect any shared information due to their tripartite entanglement. 

Before computing the $\mathrm{G}$ and $\mathrm{E}$ cohomologies, we can calculate their Euler characteristics by specializing the state index to $\alpha = r = 1$ and $q =  0$; the result is: 
\begin{align*}
	\chi_{1,1,0}[\mathrm{GHZ}_{3}] = \chi_{1,1,0}[\mathrm{W}_{3}] &= -5
\end{align*}
which are non-zero---indicating that such states must be entangled---but do not distinguish $\mathrm{GHZ}_{3}$ and $\mathrm{W}_{3}$. 

\subsubsection{Distinguishing Via Cohomology \label{sec:tripartite_cohomology_distinguish}}
As we have seen, one cannot distinguish $\bdensl{W}_{3}$ and $\bdensl{GHZ}_{3}$ by computing any $\alpha = 0$ specializations of the state index: in particular, the mutual information.  Moreover, we cannot distinguish these two tripartite density states via the Euler-characteristics of the (co)chain complexes we have defined in this paper.
However, there remains the possibility that they can be distinguished by the computation of GNS and commutant Poincar\'{e} polynomials. 
A brute force computation (e.g.\ using the provided software) of ranks of GNS cohomologies gives us 
\begin{align*}
	P_{\cpx{G}}(\bdensl{GHZ}_{3}) = P_{\cpx{G}}(\bdensl{W}_{3}) = 1 + 6 y.
\end{align*}
So GNS polynomials do not distinguish these two states.
However, the fact that these polynomials are non-vanishing tells us that there exist tuples of operators that exhibit non-local correlations in the sense described in \S\ref{sec:classes_and_correlations}; something the tripartite mutual information fails to detect.
In particular there is a single tuple of $1$-body operators exhibiting non-trivial correlations between individual tensor factors/primitive subsystems, 
and six linearly independent tuples of $2$-body operators that exhibit non-trivial correlations between pairs of tensor factors. 
Explicit representatives of a basis for the GNS cohomologies of $\bdensl{GHZ}_{3}$ and $\bdensl{W}_{3}$ are tabulated in Appendix~\ref{app:ghz_w_cohomology_generators}.

On the other hand, the two tripartite density states \textit{are} distinguished when we compute the cohomologies of the commutant complexes:
\begin{align*}
	P_{\cpx{E}}(\bdensl{GHZ_{3}}) &= 7 + 7 y,\\
	P_{\cpx{E}}(\bdens{W_{3}}) &= 3 + 3 y.
\end{align*} 
This computation is an alternative way (other than computing state indices) of demonstrating that $\bdensl{W}_{3}$ and $\bdensl{GHZ}_{3}$ are not related by local invertible transformations: i.e.\ they are in distinct SLOCC equivalence classes.\footnote{Recalling the statements made at the end of \S\ref{sec:interpretation}, it is also tempting to think of this as an example where the module structure of GNS modules---as opposed to just their underlying vector spaces---is needed to distinguish two states.}

\subsection{Generalized \texorpdfstring{$d$}{d}-partite GHZ and W states \label{sec:generalized_W_and_GHZ}}
We also consider generalizations of these results to $d$-partite states for arbitrary $d$.
For any integer $n \geq 2$ let $\hilb_{[n]}$ denote an $n$-dimensional Hilbert space and choose a pair of orthonormal vectors $\ket{0}$ and $\ket{1}$ spanning a two-dimensional subspace.
For $d \geq 1$ and $\boldsymbol{n} = (n_{1}, \cdots, n_{d})$, we define
\begin{align*}
	\mathrm{GHZ}_{d,\boldsymbol{n}} &:= \frac{1}{\sqrt{2}} \left(\ket{0}^{\otimes d} + \ket{1}^{\otimes d} \right) \in \hilb_{[n_{1}]} \otimes \hilb_{[n_{2}]} \cdots \otimes \hilb_{[n_{d}]}
\end{align*}
and the generalized W state as:
\begin{align*}
	\mathrm{W}_{d,\boldsymbol{n}} &:= \frac{1}{\sqrt{N}} \left(\sum_{i = 1}^{d} \ket{0}^{\otimes (i-1)} \otimes \ket{1} \otimes \ket{0}^{\otimes(d-i-1)} \right) \in \hilb_{[n_{1}]} \otimes \hilb_{[n_{2}]} \cdots \otimes \hilb_{[n_{d}]}.
\end{align*}
Taking into account the data of the Hilbert space factors, these define pure $d$-partite density states $\bdensl{GHZ}_{d, \boldsymbol{n}}$ and $\bdensl{W}_{d, \boldsymbol{n}}$.

\subsubsection{State Indices}
We can directly compute the state indices of these states as:
\begin{align*}
	\chi^{w}_{\alpha,q,r}[\bdensl{W}_{d,\boldsymbol{n}} ] &= w^{d} \sum_{k = 0}^{d} (-1)^{k} \left[\left(\frac{d-k}{d} \right)^{q} + \left(\frac{k}{d} \right)^{q} \right]^{r}D_{\alpha}(k;\boldsymbol{n}) 
\end{align*}
and
\begin{align}
	\chi_{\alpha,q,r}[\bdensl{GHZ}_{d,\boldsymbol{n}}] &= w^{d} \sum_{k = 0}^{d} (-1)^{k} F_{q,r}(k;d)D_{\alpha}(k;\boldsymbol{n}) 
	\label{eq:GHZ_index}
\end{align}
where
\begin{align*}
	D_{\alpha}(k;\boldsymbol{n}) &:= 
	\left \{
		\begin{array}{ll}
			\sum_{i_{0} < \cdots < i_{k}} (n_{i_{0}} \cdots n_{i_{k}})^{\alpha},  & \text{if $k > 0$}\\
			1, & \text{if $k = 0$}
		\end{array}
	\right.
\end{align*}
and
\begin{align*}
	F_{q,r}(k;d) &:= \left \{
		\begin{array}{ll}
			2^{(1-q)r} = \left[\left(\frac{1}{2} \right)^{q} + \left(\frac{1}{2} \right)^{q} \right]^{r}, & \text{if $k \neq 0, d$}\\
			1, & \text{if $k = 0,d$}
		\end{array}
	\right. .
\end{align*}
In the special case that $n_{1} = n_{2} = \cdots n_{d} = n$, these expressions simplify a bit more as we can write
\begin{align*}
	D_{\alpha}(k;(n,n,\cdots,n)) = \binom{d}{k} n^{k \alpha}.
\end{align*}
So define $\bdensl{GHZ}_{d,n} := \bdensl{GHZ}_{d,(n,n,\cdots,n)}$, and $\bdensl{W}_{d,n} := \bdensl{W}_{d,(n,n,\cdots,n)}$, then we can use \eqref{eq:GHZ_index} to deduce the compact expression: 
\begin{align*}
	\mathfrak{X}^{w}_{\alpha,q,r}[\bdensl{GHZ}_{d,n}] &= w^{d} \left\{ \frac{1}{2^{(q-1)r}} \left[(1 - n^{\alpha})^{d} - (1 + (-1)^{d} n^{\alpha d}) \right] + 1 + (-1)^{d} n^{\alpha d} \right\}.
\end{align*}
For general $r$, the expression for the index of $\bdensl{W}_{d,n}$ is not significantly simplified when $n_{1} = n_{2} = \cdots = n_{k} = n$.  However, when $r=1$, we can simplify the expression for the index of $\bdensl{W}_{d,n}$ to 
\begin{align*}
	\mathfrak{X}^{w}_{\alpha,q,1}[\bdensl{W}_{d,n}] &= \frac{-w^{d}}{d^{q-1}}\left( 1 + (-1)^{d}n^{-\alpha d} \right) \left[\sum_{l=0}^{d-1}(-1)^{l} \binom{d-1}{l} (n^{\alpha})^{l}(l+1)^{q-1} \right]
\end{align*}
The expression in the square brackets is a generalization of the binomial transform \cite{prodinger} of the sequence $\left((l+1)^{q-1}\right)_{l = 0}^{\infty}$, reducing to the usual binomial transform when $\alpha = 1$.  
Lovers of special functions might be pleased to note that we can rewrite this expression in terms of the order $(1-q)$-polylogarithm function.
\begin{align*}
	\mathfrak{X}^{w}_{\alpha,q,1}[\bdensl{W}_{d,n}] &= \frac{-w^{d}}{d^{q-1}}\left( 1 + (-1)^{d}n^{-\alpha d} \right) \left[ [z^{d}]  \left\{\mathrm{Li}_{1-q}\left(\frac{-n^{\alpha}z}{1-z} \right) \right\} \right ].
\end{align*}
Similarly, in the $\alpha \rightarrow 0$ limit (where the index is always independent of the dimensions of the ambient Hilbert space) we have the simplifications:
\begin{align*}
	\mathfrak{X}^{w}_{0,q,r}[\bdensl{GHZ}_{d,\boldsymbol{n}}] &= w^{d}\left(1+(-1)^{d} \right)\left(1-\frac{1}{2^{(q-1)r}} \right)\\
	\mathfrak{X}^{w}_{0,q,r}[\bdensl{W}_{d,\boldsymbol{n}}] &= w^{d} \sum_{k = 0}^{d} (-1)^{k} \binom{d}{k} \left[\left(\frac{d-k}{d} \right)^{q} + \left(\frac{k}{d} \right)^{q} \right]^{r}\\
															&= \frac{w^{d}}{2} \left(1 + (-1)^{d}\right) \left \{\sum_{k = 0}^{d} (-1)^{k} \binom{d}{k} \left[\left(\frac{d-k}{d} \right)^{q} + \left(\frac{k}{d} \right)^{q} \right]^{r} \right \}.
\end{align*}
both of these expressions vanish when $d$ is odd.

\subsubsection{\texorpdfstring{$d$}{d}-partite Mutual Informations}
To extract mutual informations from the state index we can use the identity:
\begin{align*}
	I(\bdens{\rho}_{P}) = \lim_{q \rightarrow 1} \left[ \frac{\mathfrak{X}^{w=1}_{0,q,1}}{q-1} \right]
\end{align*}
along with the fact that
\begin{align*}
	\lim_{q \rightarrow 1} \frac{1 - \left(a^{q} + (1-a)^{q} \right)^{r}}{1-q} &= r \left[ a \log(a) + (1-a) \log(a) \right],
\end{align*}
for any $a \in [0,1]$ (and using the convention that $0 \log 0 = 0$).
For odd $d$, the $\alpha = 0$ specializations of state indices vanish; so we have 
\begin{align*}
	0=I(\bdensl{GHZ}_{d,\boldsymbol{n}}) &= I(\bdensl{W}_{d,\boldsymbol{n}}), \: d \notin 2 \mathbb{Z}_{\geq 1}.
\end{align*}
While for even $d$:
\begin{align*}
	I(\bdensl{GHZ}_{d,\boldsymbol{n}}) &= 2 \log(2);
\end{align*}
and (again using the convention $0 \log 0 = 0$)
\begin{align*}
	I(\bdensl{W}_{d,\boldsymbol{n}}) &= \sum_{k = 0}^{d} (-1)^{k} \binom{d}{k} \left[\frac{k}{d} \log \left(\frac{k}{d}\right) + \left(1-\frac{k}{d} \right) \log\left(1 - \frac{k}{d} \right) \right]\\
									 &=\left(1 + (-1)^{d} \right) \sum_{k = 0}^{d} (-1)^{k} \binom{d}{k} \frac{k}{d} \log \left(\frac{k}{d}\right)\\
									 &=\frac{\left(1 + (-1)^{d} \right)}{d} \log \left[\prod_{k=1}^{d} k^{(-1)^{k} k \binom{d}{k}} \right]\\
									 &=-\left(1 + (-1)^{d} \right) \log \left[\prod_{j=0}^{d-1} (j+1)^{(-1)^{j} \binom{d-1}{j}} \right].
\end{align*}
The following remark is a fun side observation that gives an integral representation of $I(\bdensl{W}_{d,\boldsymbol{n}})$; such a representation allows us to deduce that $I(\bdensl{W}_{d,\boldsymbol{n}}) > 0$ for even $d$ (and moreover, $I(\bdensl{W}_{d,\boldsymbol{n}})$ approaches $0$ as $d \rightarrow \infty$).
In particular, one can use the mutual information to detect shared information among all subsystems for even $d$, but for odd $d$ it fails to detect any.

\begin{remark}{Integral Identities and Positivity of $I(\bdensl{W}_{d,\boldsymbol{n}})$}{}
	Using the identities
	\begin{align*}
		\int_{1}^{\infty} \frac{(d-1)!}{t(t+1) \cdots (t+d-1)} dt &= -\log \left[\prod_{j=0}^{d-1} (j+1)^{(-1)^{j} \binom{d-1}{j}} \right].
	\end{align*}
	and
	\begin{align*}
		\frac{(d-1)!}{t(t+1) \cdots (t+d-1)} = \int_{0}^{1} u^{t-1} (1-u)^{d-1} du 
	\end{align*}
	we have
	\begin{align*}
		-\log \left[\prod_{j=0}^{d-1} (j+1)^{(-1)^{j} \binom{d-1}{j}} \right] & = \int_{0}^{\infty} \left[\int_{0}^{1} u^{t-1} (1-u)^{d-1} du \right] dt\\
																			  & = -\int_{0}^{1} \frac{(1-u)^{d-1}}{\log(u)} du
	\end{align*}
	Hence,
	\begin{align*}
		I(\bdensl{W}_{d,\boldsymbol{n}}) &= (1 + (-1)^{d}) \int_{0}^{1} \left[-\frac{(1-u)^{d-1}}{\log(u)} \right] du
	\end{align*}
	The integrand is non-negative on $[0,1]$; so it immediately follows that $I(\bdens{W}_{d, \boldsymbol{n}}) \geq 0$. 
	Moreover, using the bounds 
	\begin{align*}
		(1-u) \leq -\log(u) \leq \frac{1}{u} - 1 
	\end{align*}
	for all $u \in \mathbb{R}_{>0}$,  it follows that
	\begin{align*}
		\frac{1 + (-1)^{d}}{d(d-1)} \leq I(\bdensl{W}_{d,\boldsymbol{n}}) \leq \frac{1 + (-1)^{d}}{d-1}.
	\end{align*}
	Hence $I(\bdensl{W}_{d,\boldsymbol{n}})>0$ for $d$ even; moreover $I(\bdensl{W}_{d,\boldsymbol{n}}) \rightarrow 0$ as $d \rightarrow \infty$.
\end{remark}

\subsubsection{Computational Observations of Cohomologies \label{sec:multipartite_cohomology_distinguish}}
For odd $d$, the $\alpha=0$ specializations of the state indices of the generalized $W$ and GHZ states vanish (hence so do their mutual informations).
On the other hand, computational observation suggests that the Poincar\'{e} polynomials associated to commutant complexes are non-trivial for all $d$.
\begin{figure}
	\begin{tikzpicture}[x=13mm,y=9mm]
		\colorlet{outer}{cyan!60!black}
		\colorlet{top}{violet!60!black}
		\colorlet{inner}{orange!100!black}
		\colorlet{links}{red!70!black}

		\tikzset{
			box/.style={
				minimum height=5mm,
				inner sep=.7mm,
				outer sep=0mm,
				text width=10mm,
				text centered,
				font=\small\bfseries\sffamily,
				text=#1!50!black,
				draw=#1,
				line width=.25mm,
				top color=#1!5,
				bottom color=#1!40,
				shading angle=0,
				rounded corners=2.3mm,
				rotate=0,
			},
			bdrycolor/.style={text=gray},
			link/.style={-latex,links,line width=.3mm},
			plus/.style={text=links,font=\footnotesize\bfseries\sffamily},
			outerbox/.style={
				minimum height=5mm,
				inner sep=.7mm,
				outer sep=0mm,
				text width=10mm,
				text centered,
				font=\small\bfseries\sffamily,
				text=#1!50!black,
				draw=#1,
				line width=.25mm,
				top color=#1!5,
				bottom color=#1!50,
				shading angle=0,
				rotate=0,
			},
		}
		\pgfmathtruncatemacro{\bdrytrival}{2}
		\pgfmathtruncatemacro{\maxrow}{6}
		\node[bdrycolor] (t-0-0) at (0,0) {\bdrytrival};
		\foreach \row in {1,...,\maxrow}{
			\pgfmathtruncatemacro{\trival}{\bdrytrival};
			\node[bdrycolor] (t-\row-0) at (-\row/2,-\row) {\trival};
			\foreach \col in {1,...,\row}{
				\pgfmathtruncatemacro{\trival}{\trival*((\row-\col+1)/\col)+0.5};
				\global\let\trival=\trival
				\coordinate (tripos) at (-\row/2+\col,-\row);
				\coordinate (polpos) at (\maxrow/2+\col, -\row);
				\coordinate (pluspos) at (\maxrow/2+\col-1/2, -\row);

				\pgfmathtruncatemacro{\degree}{\col-1};
				\pgfmathtruncatemacro{\bdrycoeff}{\trival+1};
				\pgfmathtruncatemacro{\topcoeff}{\trival+2};

				\ifnum \col=\row
					\node[bdrycolor] (t-\row-\col) at (tripos) {\trival};
				\else
					\ifnum \col=1
						\ifnum \row=2
							\node (p-\row-\col) at (polpos) {$\topcoeff$};
							\node[outerbox=top] (t-\row-\col) at (tripos) {\trival};
						\else
							\node (p-\row-\col) at (polpos) {$\bdrycoeff$};
							\node[outerbox=outer] (t-\row-\col) at (tripos) {\trival};
						\fi
						\else
						\ifnum \col=\numexpr\row-1\relax
							\node[outerbox=outer] (t-\row-\col) at (tripos) {\trival};
							\node (p-\row-\col) at (polpos) {$\bdrycoeff y^{\degree}$};
							\node at (pluspos) {+};
						\else
							\node[box=inner] (t-\row-\col) at (tripos) {\trival};
							\node (p-\row-\col) at (polpos) {$\trival y^{\degree}$};
							\node at (pluspos) {+};
						\fi
					\fi
				\fi
				\ifnum \col<\row
					\node[plus,above=0mm of t-\row-\col]{+};
					\pgfmathtruncatemacro{\prow}{\row-1}
					\pgfmathtruncatemacro{\pcol}{\col-1}
					\draw[link] (t-\prow-\pcol) -- (t-\row-\col);
					\draw[link] (t-\prow-\col) -- (t-\row-\col);
				\fi
			}
		}
		\end{tikzpicture}
		\caption{Right: Commutant Poincar\'{e} polynomials associated to $\bdensl{GHZ}_{d,\boldsymbol{n}}$, beginning at $d = 2$ at the top and increasing to $d = 6$ at the bottom.
			Left: A Pascal's triangle from which we can recover the coefficients of commutant Poincar\'{e} polynomials with modifications at boundary values: values in the blue boxes with sharp edges are increased by 1; the value in the violet box at the apex is increased by 2.
			This pattern of recovering Poincar\'{e} polynomials was checked up to $d=11$ using the provided software.
			\label{fig:Pascal}
		}
\end{figure}

Using the provided software, a computational analysis of the coefficients of commutant Poincar\'{e} polynomials for the $d$-partite GHZ state $\bdensl{GHZ}_{d,\boldsymbol{n}}$ with $2 \leq d \leq 11$ suggests that such coefficients fit into a generalized Pascal's triangle, with modifications on the boundary: see Fig.~\ref{fig:Pascal}.
Explicitly, we observe: 
\begin{align*}
	P_{\cpx{E}}(\bdensl{GHZ}_{d,\boldsymbol{n}}) = \frac{1}{y} \left[T_{d} - 2(1 + y^{d}) \right] + (1+y^{d-2}), \; d \geq 2
\end{align*}
where the polynomials $T_{d}$ are determined by the recurrence relation
\begin{align*}
	T_{d} = y T_{d-1} + T_{d-1}, \; d \geq 1, 
\end{align*}
and the initial condition $T_{0} = 2$.
The closed form solution is $T_{d} = 2(1+y)^{d}$; so we have the following conjecture.
\begin{conjecture}{}{}
	$P_{\cpx{E}}(\bdensl{GHZ}_{d,\boldsymbol{n}}) = 1+ y^{d-2} + \frac{2}{y} \left[(1+y)^{d} - (1+y^{d}) \right]$
\end{conjecture}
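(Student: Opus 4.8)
The plan is to reduce everything to an explicit splitting of the commutant complex. First I note that $\cpx{E}(\bdensl{GHZ}_{d,\boldsymbol{n}})$ depends only on the support projections of the reduced states, so the answer is independent of $\boldsymbol{n}$ and I may work with qubit factors. The starting point is a census of reduced states: since the off-diagonal term $\ket{0\cdots0}\bra{1\cdots1}$ of $\mathrm{GHZ}_{d,\boldsymbol{n}}\otimes\mathrm{GHZ}_{d,\boldsymbol{n}}^{\vee}$ is annihilated by any nonempty partial trace, one gets $\rank(\dens{\rho}_T)=2$ for every proper nonempty $T\subsetneq P$ and $\rank(\dens{\rho}_\emptyset)=\rank(\dens{\rho}_P)=1$. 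Hence $\mathtt{Com}(\dens{\rho}_T)\cong M_2(\mathbb{C})$ for proper nonempty $T$, with the four matrix units $P_T^0=\ket{0\cdots0}_T\bra{0\cdots0}_T$, $P_T^1=\ket{1\cdots1}_T\bra{1\cdots1}_T$, and off-diagonal $X_T^{+},X_T^{-}$; I set $\supp_T=P_T^0+P_T^1$ and $Z_T=P_T^0-P_T^1$.

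Next I would carry out the key local computation: evaluate the ``extend-by-identity-then-compress'' operation $m\mapsto \supp_T\,\Sigma_{(T,l)}(m\otimes 1)\,\supp_T$ underlying each coface map. Because the GHZ basis vectors are products and therefore fixed by the reshuffling maps $\Sigma_{(T,l)}$, a direct check yields $\supp_{\partial_l T}\mapsto\supp_T$, $Z_{\partial_l T}\mapsto Z_T$, and $X^{\pm}_{\partial_l T}\mapsto 0$ whenever $T$ is proper nonempty, while at the rank-one top face $T=P$ one finds $\supp_V\mapsto\supp_P$, $Z_V\mapsto 0$, and $X^{\pm}_V\mapsto 0$. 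I expect this to be the main obstacle: it must be done carefully, tracking the reshuffling maps and, especially, the degenerate behaviour at the rank-one face $P$, where the $Z$ and off-diagonal directions collapse and the sign bookkeeping of the alternating sum must be checked.

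Granting these identities, the coboundary is block-diagonal with respect to $\mathtt{Com}(\dens{\rho}_T)=\langle\supp_T\rangle\oplus\langle Z_T\rangle\oplus\langle X_T^{+},X_T^{-}\rangle$, so $\cpx{E}(\bdensl{GHZ}_{d,\boldsymbol{n}})$ splits as a direct sum of three subcomplexes. The off-diagonal subcomplex has identically zero differential and lives in degrees $0\le k\le d-2$, contributing $\dim H^k=2\binom{d}{k+1}$. The support subcomplex $F_T\mapsto\mathbb{C}\supp_T$, with the degree $-1$ augmentation $\lambda\mapsto(\lambda\supp_{\{p\}})_p$ and the top face included, is exactly the augmented simplicial cochain complex of the $(d-1)$-simplex with $\mathbb{C}$-coefficients, hence acyclic (the simplex is contractible); this is what forces $H^{d-1}[\cpx{E}]=0$. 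The $Z$-subcomplex $F_T\mapsto\mathbb{C}Z_T$ is the simplicial cochain complex of the $(d-1)$-simplex with its top (codimension-zero) cochain group deleted, since $Z_V\mapsto 0$ at $P$.

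Finally I would compute the cohomology of the $Z$-subcomplex. Writing the evident short exact sequence $0\to S\to C\to \tilde C\to 0$, with $C$ the full non-augmented simplicial cochain complex of $\Delta^{d-1}$ (cohomology $\mathbb{C}$ in degree $0$, zero above), $S=\mathbb{C}$ the deleted top group in degree $d-1$, and $\tilde C$ the $Z$-subcomplex, the long exact sequence gives $H^0(\tilde C)=\mathbb{C}$, the connecting map identifies $H^{d-2}(\tilde C)\cong H^{d-1}(S)=\mathbb{C}$, and all other groups vanish (for $d\ge 3$). Summing the three contributions yields $\dim H^0=\dim H^{d-2}=2d+1$, $\dim H^k=2\binom{d}{k+1}$ for $1\le k\le d-3$, and $\dim H^{d-1}=0$, which is precisely $1+y^{d-2}+\frac{2}{y}\big[(1+y)^d-(1+y^d)\big]$ once one uses $\frac{2}{y}\big[(1+y)^d-1-y^d\big]=2\sum_{k=0}^{d-2}\binom{d}{k+1}y^k$. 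The case $d=2$ is treated separately: there the two special $Z$-degrees $0$ and $d-2$ coincide, the $Z$-subcomplex contributes $2$ to $H^0$, and one recovers $P_{\cpx{E}}=6=2(S^2-1)$, matching the bipartite Schmidt-rank computation.
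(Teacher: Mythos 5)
Your argument is correct, and it is worth emphasizing that it does more than the paper does: the paper states this identity only as a conjecture, supported by software computations of $P_{\cpx{E}}(\bdensl{GHZ}_{d,\boldsymbol{n}})$ for $2\le d\le 11$ and the observation that the coefficients fit a boundary-modified Pascal's triangle (equivalently the recurrence $T_d=yT_{d-1}+T_{d-1}$, $T_0=2$). Your splitting of $\cpx{E}(\bdensl{GHZ}_{d,\boldsymbol{n}})$ into the support, $Z$, and off-diagonal subcomplexes is the right structural reason behind that pattern: the off-diagonal summand carries zero differential and produces the Pascal bulk $2\binom{d}{k+1}$, the support summand is exactly the augmented simplicial cochain complex of $\Delta^{d-1}$ (acyclic, which is what kills $H^{d-1}$ and matches the paper's own remark about pure factorizable states), and the $Z$ summand produces the two boundary corrections $+1$ in degrees $0$ and $d-2$ via the long exact sequence you write down. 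I verified the local coface computations ($\supp\mapsto\supp$, $Z\mapsto Z$, $X^{\pm}\mapsto 0$ on proper faces; $Z\mapsto 0$ and $X^{\pm}\mapsto 0$ at the rank-one top face), the Euler-characteristic consistency with $\mathfrak{X}^{w=1}_{0,0,2}$, and the $d=2,3,4$ specializations against the paper's values; all check out. One small point of hygiene: the reduction to qubit factors is not literally an instance of support equivalence or local invertible equivalence (both of which fix or preserve the ambient dimensions); the clean justification is that every building block $\mathtt{Com}(\dens{\rho}_T)$ and every compressed coface map factors through $\image(\dens{\rho}_T)\le\bigotimes_{t\in T}\operatorname{span}\{\ket{0},\ket{1}\}$ by the compatibility-of-supports lemma, so the complex is canonically independent of $\boldsymbol{n}$. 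With that phrasing adjusted, this upgrades the conjecture to a theorem.
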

This should be compared with both the state indices and $d$-partite mutual information, which vanish for odd $d$ and so cannot detect multipartite entanglement of this state.
On the other hand, the coefficients of the Poincar\'{e} polynomials---which are dimensions of spaces of operators---grow like multiples of binomial coefficients.
The following remark demonstrates that, if the conjecture is true, the ranks of cohomology groups can be extracted from a two-variable rational function.
\begin{remark}{}{}
	One can extract the Poincar\'{e} polynomials above from a rational two-variable generating function: an equivalent form of the conjecture is that
\begin{align*}
	\dim H^{k}[\cpx{E}(\bdensl{GHZ}_{d,\boldsymbol{n}})] = [y^{k} x^{d}] \left( -\frac{x^2 (x y+x-3) (x y+x-2)}{(x-1) (x y-1) (x y+x-1)} \right)
\end{align*}
where $[y^{k} x^{d}]Q(x,y)$ denotes the coefficient of $y^{k} x^{d}$ in the (Taylor) series expansion of $Q(x,y)$ about $x=y=0$. 
In other words, the Poincar\'{e} polynomial can be extracted as the coefficient of $x^{d}$.
\end{remark}

An analysis of commutant cohomologies of the generalized $W$-states leads to a much simpler conjecture of its general form: 
\begin{conjecture}{}{}
	$P_{\cpx{E}}(\bdensl{W}_{d,\boldsymbol{n}}) = 3 \left(1 + y^{d-2} \right)$
\end{conjecture}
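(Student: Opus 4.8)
I propose to prove the conjecture by explicitly computing the cohomology of $\cpx{E}(\bdensl{W}_{d,\boldsymbol{n}})$, exploiting the very rigid structure of the reduced states of a W state. The first step is to record these reduced states. Writing $\ket{0}_T$ for the vacuum on a subset $T$ and $\ket{W}_T$ for the normalized single-excitation (W) vector on $T$, a short partial-trace computation gives
\begin{align*}
	\dens{\rho}_T = \tfrac{1}{d}\left(|T|\,\ket{W}_T\bra{W}_T + (d-|T|)\,\ket{0}_T\bra{0}_T\right),
\end{align*}
so that $\image(\dens{\rho}_T) = \operatorname{span}_{\mathbb{C}}\{\ket{0}_T,\ket{W}_T\}$ has rank $2$ for $1\le |T|\le d-1$ and rank $1$ for $|T|\in\{0,d\}$. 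Since every $\image(\dens{\rho}_T)$ lies inside $\bigotimes_{t\in T}\operatorname{span}\{\ket0,\ket1\}_t$, the compatibility of supports (Lem.~\ref{lem:compat_of_supports}) lets me replace each $1_{V(l)}$ by $\supp_{\{V(l)\}}$, which projects onto the qubit subspace; hence the building blocks $\mathtt{Com}(\dens{\rho}_T)\cong\End[\image(\dens{\rho}_T)]$ and all coboundaries live entirely within qubit subspaces. This identifies $\cpx{E}(\bdensl{W}_{d,\boldsymbol{n}})$ canonically with $\cpx{E}(\bdensl{W}_{d,(2,\dots,2)})$, proving $\boldsymbol{n}$-independence and reducing everything to the qubit W state.

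The key idea is then to grade each $\End[\image(\dens{\rho}_T)]$ by the \emph{excitation change} $\Delta e\in\{-1,0,+1\}$ of the basis operators $\ket{0}_T\bra{0}_T,\ \ket{W}_T\bra{0}_T,\ \ket{0}_T\bra{W}_T,\ \ket{W}_T\bra{W}_T$ (vacuum carrying excitation $0$, the W-vector excitation $1$). Because the coboundary has the form $a\mapsto \supp_T(a\otimes 1)\supp_T$, and $\supp_T$ projects onto the excitation-graded space $\image(\dens{\rho}_T)$ while $a\otimes 1$ preserves the excitation count, the coboundary preserves $\Delta e$. Thus $\cpx{E}(\bdensl{W}_d)$ splits as a direct sum of three subcomplexes $\cpx{E}^{(+1)}\oplus\cpx{E}^{(0)}\oplus\cpx{E}^{(-1)}$, and it suffices to compute each. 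A direct computation of the compression maps in the $\{\ket0_S,\ket{W}_S\}$ basis (using $\ket{W}_{S\cup\{p\}}=\tfrac{1}{\sqrt{|S|+1}}(\sqrt{|S|}\,\ket{W}_S\ket0_p+\ket0_S\ket1_p)$) yields the clean formulas
\begin{align*}
	\supp_T\!\left(\ket{W}_S\bra{0}_S\otimes 1_p\right)\!\supp_T = \sqrt{\tfrac{|S|}{|T|}}\,\ket{W}_T\bra{0}_T,\qquad
	\supp_T\!\left(\ket{W}_S\bra{W}_S\otimes 1_p\right)\!\supp_T = \tfrac{|S|}{|T|}\,\ket{W}_T\bra{W}_T,
\end{align*}
together with $\supp_T(\supp_S\otimes 1_p)\supp_T=\supp_T$. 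These show that, up to a uniform nonzero scalar per degree, each graded differential is the simplicial coboundary on the faces of $\Delta_P$.

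I would then conclude as follows. The raising piece $\cpx{E}^{(+1)}$ (spanned by $\ket{W}_T\bra{0}_T$, which exists only for $1\le|T|\le d-1$) is the reduced simplicial cochain complex of $\Delta_P$ with \emph{both} the empty-set term and the top-face term deleted, i.e.\ the truncation of an acyclic complex to degrees $0,\dots,d-2$; the lowering piece $\cpx{E}^{(-1)}$ is the componentwise adjoint and behaves identically. For the diagonal piece, the sub-subcomplex spanned by $\{\supp_T\}$ is exactly the augmented reduced simplicial complex of $\Delta_P$ (acyclic, as in the remark of \S\ref{sec:visualization}), and the quotient by it is again the truncation to degrees $0,\dots,d-2$. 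Chopping the two ends off an acyclic complex produces cohomology $\mathbb{C}$ at each chopped end (degrees $0$ and $d-2$) and nothing in between, so each of the three pieces contributes $(1,1)$ in degrees $(0,d-2)$; summing gives $H^0\cong\mathbb{C}^3$, $H^{d-2}\cong\mathbb{C}^3$, and hence $P_{\cpx{E}}(\bdensl{W}_{d,\boldsymbol{n}})=3(1+y^{d-2})$. As consistency checks, this matches the bipartite value $2(S^2-1)=6$ at $d=2$ and the Euler characteristic $3(1+(-1)^d)$ obtained from \eqref{eq:index_to_commutant_ec}. The main obstacle is Step three: verifying that each graded differential genuinely agrees with the simplicial coboundary (the scalar-weight computation above) and, in particular, correctly handling the degenerate extreme faces $|T|\in\{0,d\}$, where the rank drops to $1$ and the raising/lowering operators are annihilated under compression—this is precisely what forces the truncation of the simplicial complex and creates the cohomology in degrees $0$ and $d-2$.
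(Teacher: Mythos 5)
The paper does not actually prove this statement: it appears in \S\ref{sec:generalized_W_and_GHZ} only as a conjecture extracted from software-aided computations, with no argument offered. Your proposal is therefore not a variant of the paper's proof but a genuine proof of the conjecture, and after checking the details I believe it is correct. The reduced states are as you say, $\dens{\rho}_T = \tfrac{|T|}{d}\ket{W}_T\bra{W}_T + \tfrac{d-|T|}{d}\ket{0}_T\bra{0}_T$, so $\mathtt{Com}(\dens{\rho}_T)$ is the full matrix algebra on $\operatorname{span}\{\ket{0}_T,\ket{W}_T\}$ for $1\le|T|\le d-1$ and is one-dimensional at $|T|\in\{0,d\}$; the reduction to qubits via Lem.~\ref{lem:compat_of_supports} (replacing $1_{V(l)}$ by $\supp_{\{V(l)\}}$ inside the compression) is valid and gives the $\boldsymbol{n}$-independence for free. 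The excitation grading is preserved by the coboundary because $a\otimes 1$ commutes with the number operator on the added factor and $\supp_T$ commutes with $N_T$ (its image is spanned by $N_T$-eigenvectors), so the complex does split into three sectors. I verified your compression formulas ($\sqrt{|S|/|T|}$ on the raising and lowering generators, $|S|/|T|$ on $\ket{W}_S\bra{W}_S$, and $\supp_T(\supp_S\otimes 1)\supp_T=\supp_T$); since these scalars depend only on $|S|$ and never vanish for $|S|\ge 1$, a degreewise rescaling identifies each graded piece (respectively, the quotient of the diagonal piece by the acyclic subcomplex spanned by the $\supp_T$) with the augmented simplicial cochain complex of $\Delta_P$ truncated at both ends, whose cohomology is $\mathbb{C}$ in degrees $0$ and $d-2$. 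Summing the three sectors gives $3(1+y^{d-2})$, consistent with $P_{\cpx{E}}(\bdensl{W}_{3})=3+3y$, with $2(S^2-1)=6$ at $d=2$, and with the Euler characteristic $3(1+(-1)^d)$ obtained from \eqref{eq:index_to_commutant_ec}. The only step worth writing out more carefully in a final version is the one you already flag: that the raising and lowering generators are annihilated by compression into the rank-one blocks at $|T|\in\{0,d\}$ — immediate, since conjugating an excitation-changing operator by a rank-one projection onto an $N$-eigenvector gives zero — which is exactly what truncates the acyclic complex and creates the two surviving classes in degrees $0$ and $d-2$.
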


On the other hand, a handful of computational observations seem to suggest that the GNS Poincar\'{e} polynomials of $\bdensl{W}_{d,\boldsymbol{n}}$ and $\bdensl{GHZ}_{d,\boldsymbol{n}}$ are identical for all $d \geq 2$ and choices of $\boldsymbol{n} = (n_{1}, \cdots, n_{d})$.
In particular, for the qubit embeddings ($\boldsymbol{n} = (2,2,\cdots,2)$) we observe:
\begin{align*}
	P_{\cpx{G}}(\bdensl{GHZ}_{d,2}) = 1+ (2^{d} - 2)y^{d-1}.
\end{align*}
The reader is invited to use the provided software to identify patterns in associated GNS Poincar\'{e} polynomials when the list of ambient dimensions $\boldsymbol{n} = (n_{1}, \cdots, n_{d})$ is varied.
For instance, one can observe patterns that seem to suggest that
\begin{align*}
	\dim H^{0}[\cpx{G}( \bdensl{W}_{d,(n_{1},\cdots,n_{d})} )] = \dim H^{0}[\cpx{G}( \bdensl{GHZ}_{d,(n_{1},\cdots,n_{d})} )] &= 1, \\ 
	\dim H^{N-1}[\cpx{G}( \bdensl{W}_{d,(n_{1},\cdots,n_{d})} )] = \dim H^{N-1}[\cpx{G}( \bdensl{GHZ}_{d,(n_{1},\cdots,n_{d})} )] &= \prod_{i = 1}^{d} (n_{i} - 2),\\
	\dim H^{1}[\cpx{G}( \bdensl{W}_{3,n} )] = \dim H^{1}[\cpx{G}( \bdensl{GHZ}_{3,n} )] &= 6(n-1).
\end{align*}

\appendix

\section{Proof of Lemma~\ref{lem:equivariant_endo} \label{app:equivariant_endo_lemma}}
It is easy to verify that the right multiplication map $\mathpzc{r}$ is a well-defined map of $k$-algebras.
To see that $\mathpzc{r}$ descends to an injection on $(I/L)^{\mathrm{op}}$, suppose that there exists an $a \in I$ such that $\mathpzc{r}_{a} \equiv 0$; then $0 + L = \mathpzc{r}_{a}(1 + L) = a + L$ so that $a \in L$; moreover, for any $a \in L$ we have $\mathpzc{r}_{a} \equiv 0$ as $L$ is a left ideal.
To see that $\mathpzc{r}$ is surjective, note that
\begin{enumerate}
	\item $I/L \leq A/L$ is precisely the set of elements of $A/L$ that are annihilated by the left action of $L$;

	\item Every $\theta \in \End_{A} \left[{}_{A}(A/L) \right]$  takes $0 + L$ to $0 + L$.
\end{enumerate}	
The first statement follows directly from the definition of $I$.
To see the second statement note that $\theta(0 + L) = \theta( 0 \cdot 0 + L) = 0 \cdot \theta(0 + L) = 0 + L$.
Now, for any $\theta \in \End_{A} \left[{}_{A}(A/L) \right]$, define $c := \theta(1+L) \in A/L$, then for any $l \in L$ we have 
$l \cdot c = l \cdot \theta(1 + L) = \theta(l + L) = \theta(0 + L) = 0 + L$; hence, $c \in I/L$; hence right action by $c$ (or any of its lifts) is sensible.
Now that we have this, let $\widehat{c}$ denote any lift of $c$ to $I$, then for any $x \in A$ we have
\begin{align*}
	\theta(x + L) &= x \cdot \theta(1 + L)\\
				  &= x \cdot \widehat{c} + L\\
				  &= (x + L) \cdot \widehat{c}\\
				  &= \mathpzc{r}_{c}(x + L)
\end{align*}
so $\theta \equiv \mathpzc{r}_{\theta(1 + L)}$.

\section{Proof of Thm.~\ref{thm:projective_measurement_indistinguishability} \label{app:projective_indistinguishability}}
We begin by recalling the Born-rule: a way of assigning a probability measure on $\mathbb{R}$ to the data of a self-adjoint operator and a density state.
The probability measure on $\mathbb{R}$ is interpreted as the probability measure associated to a projective measurement of $x$ in the state $\dens{\rho}$.
True to the spirit of this paper we will work with finite dimensional Hilbert spaces where the theory simplifies significantly (knowledgeable readers should see the generalization to infinite dimensions using projection valued measures and the Borel functional calculus).
Let $\hilb$ be finite dimensional and let $x \in \algebra{\hilb}$ be self-adjoint (i.e.\ $x = x^{*}$), then $x$ admits a spectral decomposition:
\begin{align*}
	x = \sum_{\lambda \in \sigma(x)} \lambda \mathbbm{P}^{x}_{\lambda},
\end{align*}	
where $\sigma(x) \subseteq \mathbb{R}$ is the spectrum of $x$ (its eigenvalues in finite dimensions) and $\newmath{\mathbbm{P}^{x}_{\lambda}}$ is the projection onto the eigenspace associated to $\lambda$.
For each  $\lambda \in \mathbb{R}$ can define a probability measure on the finite discrete set $\sigma(x)$ given pointwise by
\begin{align*}
	\mu_{x}:\sigma(x) &\longrightarrow [0,1] \\
	\lambda &\longmapsto \Tr[\dens{\rho} \mathbbm{P}^{x}_{\lambda}].
\end{align*}
The Born rule states that $\mu_{x}(\lambda)$ is the probability of measuring $\lambda$ in a projective measurement of $x$.

Now suppose that $(x,y)$ is a pair of commuting operators, so their spectral decompositions are compatible in the sense that $\mathbbm{P}^{x}_{\lambda}$ and $\mathbbm{P}^{y}_{\eta}$ commute for all $(\lambda, \eta) \in \sigma(x) \times \sigma(y)$.
It makes sense to speak of a simultaneous projection-valued measurement of the pair $(x,y)$; the result of such a measurement is a pair $(\lambda, \eta) \in \sigma(x) \times \sigma(y)$, and the associated probability measure is given by\footnote{One can derive this expression using the fact that the density state after a measurement of $x$ (and before observing the value of the measurement) is given by $\sum_{\lambda \in \sigma(x)} \mu(\lambda) \Tr[P_{\lambda}\dens{\rho} P_{\lambda}]$, where $\rho$ is the state before the measurement.
When $x$ and $y$ commute, the Born-rule probabilities and final state do not depend on the time-order that they are measured.}
\begin{align*}
	\mu_{(x,y)}: \sigma(x) \times \sigma(y) &\longmapsto [0,1]\\
	 (\lambda,\eta) &\longmapsto \Tr\left[\dens{\rho} \mathbbm{P}^{x}_{\lambda} \mathbbm{P}^{y}_{\eta} \right].
\end{align*}
We would like to say that two commuting operators are maximally correlated if the measurement of one completely determines the value of the other.
A straightforward way to give this meaning is to declare that two operators are maximally correlated if the support of the probability measure $\mu_{(x,y)}$ (the set of points where it is non-vanishing) in $\sigma(x) \times \sigma(y) \subseteq \mathbb{R}$ defines an invertible relation: i.e.\ it is the graph of a function $f: \sigma(x) \rightarrow \sigma(y)$.
This accounts for any possible non-linear correlations in the sense that $f$ might be the restriction of some non-linear function $\mathbb{R} \rightarrow \mathbb{R}$; however, defining the operator 
\begin{align*}
	f(x) &= \sum_{\lambda \in \sigma(x)} f(\lambda) \mathbbm{P}_{\lambda},
\end{align*}
the pair $(f(x),y)$ has support contained along the diagonal of $\mathbb{R} \times \mathbb{R}$.
Thus, any ``non-linearly" correlated pair $(x,y)$ can be taken to a pair with linear correlation, and it suffices to consider pairs of operators whose joint probability measure is supported along the diagonal.
This motivates our definition Def.~\ref{def:EPR_pair}of an EPR pair, which we state again below for convenience.
\begin{definition}{}{}
	Let $\bdens{\rho}_{\sAB} = (\hilb_{\sA}, \hilb_{\sB}, \dens{\rho}_{\sAB})$ be a bipartite density state.
	A pair of self-adjoint operators $(a,b) \in \algebra{\hilb_{\sA}} \times \algebra{\hilb_{\sB}}$ is an \newword{EPR Pair} if the result of any projective measurement of $(a,b)$ lies on the diagonal of $\mathbb{R} \times \mathbb{R}$.
	That is, the probability measure
	\begin{align*}
		\mu_{(x,y)}: \sigma(a) \times \sigma(b) &\longrightarrow [0,1]\\
		(\lambda, \eta) &\longmapsto \Tr[\dens{\rho}_{\sAB} \mathbbm{P}^{a}_{\lambda} \otimes \mathbbm{P}^{b}_{\eta}]
	\end{align*}
	has support contained in $\{(\lambda, \lambda) \in \mathbb{R} \times \mathbb{R} \}$.
\end{definition}
Now let us prove Thm.~\ref{thm:projective_measurement_indistinguishability} which we restate below:
\begin{theorem}{}{}
	Let $(a,b) \in \mathtt{GNS}(\dens{\rho}_{\sA}) \times \mathtt{GNS}(\dens{\rho}_{\sB})$ be self-adjoint operators.\footnote{In particular we must have $a \in \supp_{\sA} \algebra{\hilb_{\sA}} \supp_{\sA} = \mathtt{Com}(\dens{\rho}_{\sA})$ and $b \in \supp_{\sB} \algebra{\hilb} \supp_{\sB}=\mathtt{Com}(\dens{\rho}_{\sB})$.}
	Then $(a,b) \in \ker(d^{0}_{\cpx{G}})$ if and only if $(a,b)$ is an EPR pair.
\end{theorem}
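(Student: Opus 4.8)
The plan is to translate both the kernel condition and the EPR condition into statements about the family of joint spectral projections of the commuting pair $(a \otimes 1_{\sB},\, 1_{\sA} \otimes b)$, at which point the equivalence reduces to a one-line algebraic manipulation. First I would fix notation: since $a$ and $b$ are self-adjoint, write their spectral decompositions $a = \sum_{\lambda \in \sigma(a)} \lambda\, \mathbbm{P}^{a}_{\lambda}$ and $b = \sum_{\eta \in \sigma(b)} \eta\, \mathbbm{P}^{b}_{\eta}$. Then $A := a \otimes 1_{\sB}$ and $B := 1_{\sA} \otimes b$ commute, the products $Q_{\lambda\eta} := \mathbbm{P}^{a}_{\lambda} \otimes \mathbbm{P}^{b}_{\eta}$ form an orthogonal family of projections summing to $1_{\sAB}$, and $A - B = \sum_{\lambda,\eta}(\lambda-\eta)\,Q_{\lambda\eta}$. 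The measure in Def.~\ref{def:EPR_pair} is exactly $\mu_{(a,b)}(\lambda,\eta) = \Tr[\dens{\rho}_{\sAB}\, Q_{\lambda\eta}]$.

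Next I would reformulate the EPR condition as a support-projection identity. Using cyclicity of the trace together with $Q_{\lambda\eta}^{2} = Q_{\lambda\eta}$, one has $\Tr[\dens{\rho}_{\sAB}\,Q_{\lambda\eta}] = \Tr[Q_{\lambda\eta}\,\dens{\rho}_{\sAB}\,Q_{\lambda\eta}] = \| \dens{\rho}_{\sAB}^{1/2}\,Q_{\lambda\eta} \|^{2}$ in the Hilbert--Schmidt norm, which vanishes if and only if $\dens{\rho}_{\sAB}^{1/2}\,Q_{\lambda\eta} = 0$, equivalently (by taking adjoints) $Q_{\lambda\eta}\,\dens{\rho}_{\sAB}^{1/2} = 0$. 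Since $\dens{\rho}_{\sAB}^{1/2}$ and $\supp_{\sAB}$ have the same image, this is equivalent to $Q_{\lambda\eta}\supp_{\sAB} = 0$. As the support of the discrete measure $\mu_{(a,b)}$ lies in the diagonal precisely when $\mu_{(a,b)}(\lambda,\eta) = 0$ for all $\lambda \neq \eta$, I would conclude that $(a,b)$ is an EPR pair if and only if $Q_{\lambda\eta}\supp_{\sAB} = 0$ for every pair with $\lambda \neq \eta$.

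Then the kernel condition is immediate from the definition of the coboundary: $(a,b) \in \ker(d^{0}_{\cpx{G}})$ means $(1_{\sA} \otimes b - a \otimes 1_{\sB})\supp_{\sAB} = 0$, i.e.\ $(A - B)\supp_{\sAB} = 0$. With both conditions expressed through the $Q_{\lambda\eta}$, the equivalence is purely formal. For EPR $\Rightarrow$ kernel, expand $(A-B)\supp_{\sAB} = \sum_{\lambda,\eta}(\lambda-\eta)\,Q_{\lambda\eta}\supp_{\sAB}$ and note that the $\lambda = \eta$ terms carry coefficient zero while the $\lambda \neq \eta$ terms already satisfy $Q_{\lambda\eta}\supp_{\sAB} = 0$. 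For kernel $\Rightarrow$ EPR, I would left-multiply $(A-B)\supp_{\sAB} = 0$ by a single $Q_{\lambda\eta}$ with $\lambda \neq \eta$; orthogonality of the $Q$'s collapses the sum to $(\lambda-\eta)\,Q_{\lambda\eta}\supp_{\sAB} = 0$, and dividing by the nonzero scalar $\lambda-\eta$ gives $Q_{\lambda\eta}\supp_{\sAB} = 0$.

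The argument is short, so there is no genuine obstacle—only two care points. The first is the positivity step in the EPR reformulation, where I must justify that $\Tr[\dens{\rho}_{\sAB}\,Q_{\lambda\eta}]$ vanishes exactly when $Q_{\lambda\eta}\supp_{\sAB} = 0$. The second is bookkeeping: the spectral sums range over the full spectra of $a$ and $b$, including the eigenvalue $0$ whose eigenspace absorbs $\ker(\dens{\rho}_{\sA})$ and $\ker(\dens{\rho}_{\sB})$ (recall $a \in \mathtt{Com}(\dens{\rho}_{\sA})$, $b \in \mathtt{Com}(\dens{\rho}_{\sB})$). This is harmless, since $\supp_{\sAB} = (\supp_{\sA} \otimes \supp_{\sB})\supp_{\sAB}$ by the compatibility of supports lemma (Lem.~\ref{lem:compat_of_supports}), so those extra subspaces contribute nothing to the measure and never obstruct the manipulation.
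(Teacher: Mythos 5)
Your proof is correct and follows essentially the same route as the paper's (App.~\ref{app:projective_indistinguishability}): both decompose $a\otimes 1_{\sB} - 1_{\sA}\otimes b$ over the joint spectral projections $\mathbbm{P}^{a}_{\lambda}\otimes\mathbbm{P}^{b}_{\eta}$ of the commuting pair and reduce the claim to the vanishing of $\Tr[\dens{\rho}_{\sAB}\,\mathbbm{P}^{a}_{\lambda}\otimes\mathbbm{P}^{b}_{\eta}]$ for all $\lambda\neq\eta$. The only difference is where the positivity argument is applied: the paper applies it once to the full quadratic form $\Tr[\dens{\rho}_{\sAB}(A-B)^{*}(A-B)]=\sum_{\lambda,\eta}|\lambda-\eta|^{2}\mu_{(a,b)}(\lambda,\eta)$ (a sum of non-negative terms vanishes iff each term does), whereas you apply it term-by-term to convert each $\mu_{(a,b)}(\lambda,\eta)=0$ into $(\mathbbm{P}^{a}_{\lambda}\otimes\mathbbm{P}^{b}_{\eta})\supp_{\sAB}=0$ and then isolate terms by orthogonality of the projections.
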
 
\begin{proof}
	We proceed with a string of if and only if statements: $(a,b) \in \ker(d^{0}_{\mathrm{G}})$ if and only if
	\begin{equation}
		\begin{aligned}
			0 &= \Tr[\dens{\rho}_{\sAB} (a \otimes 1_{\sB} - 1_{\sA} \otimes b)^{*} (a \otimes 1_{\sB} - 1_{\sA} \otimes b)]\\
			  &= \sum_{(\lambda, \eta) \in \sigma(a) \times \sigma(b)} (\lambda - \eta)^{*} (\lambda - \eta) \Tr[\dens{\rho}_{\sAB} \mathbbm{P}^{a}_{\lambda} \otimes \mathbbm{P}^{b}_{\eta}].
		\end{aligned}
		\label{eq:spectral_decomp_expansion}
	\end{equation}
	but $(\lambda - \eta)^{*} (\lambda - \eta) \geq 0$ and $\mathbbm{P}^{a}_{\lambda} \otimes \mathbbm{P}^{b}_{\eta}$ is a positive operator, so $\Tr[\dens{\rho}_{\sAB} \mathbbm{P}^{a}_{\lambda} \otimes \mathbbm{P}^{b}_{\eta}] \geq 0$.
	As a result we have that the second line of \eqref{eq:spectral_decomp_expansion} vanishes if and only if $\Tr[\dens{\rho}_{\sAB} \mathbbm{P}^{a}_{\lambda} \otimes \mathbbm{P}^{b}_{\eta}] = 0$ for all $\lambda \neq \eta$, and the theorem follows. 
\end{proof}

\section{Proof of Theorem~\ref{thm:pure_bipartite_cohomology} \label{app:pure_bipartite_cohomology}}
In this section $\sX$ will denote one of the subsystems $\sA$ or $\sB$.
As in the statement of the theorem, begin with a pure bipartite state $\bdens{\rho}_{\sAB}$ with $\dens{\rho}_{\sAB} = \psi \otimes \psi^{\vee}$ for $\psi \in \hilb_{\sA} \otimes \hilb_{\sB}$, and let
\begin{align*}
	\psi = \sum_{i = 1}^{S} \sqrt{p_{i}} \xi^{\sA}_{i} \otimes \xi^{\sB}_{i},
\end{align*} 
denote a Schmidt decomposition of $\psi$.
Note that $(\xi^{\sX}_{i})_{i = 1}^{S}$ forms an orthonormal basis for the subspace $\image(\dens{\rho}_{\sX}) \leq \hilb_{\sX}$; by choosing an orthonormal basis $(\kappa_{j})_{j=1}^{K}$ for $\ker(\dens{\rho}_{\sX}) = \image(\dens{\rho}_{\sX})^{\perp}$ we then have an orthonormal basis $(\chi_{\mu})_{\mu = 1}^{S+K}$ for $\hilb_{X} \cong \image(\dens{\rho}_{\sX}) \oplus \ker(\dens{\rho}_{\sX})$ defined by
\begingroup
\renewcommand{\arraystretch}{1.5}
\begin{align*}
	\chi_{\mu}^{\sX} &= 
	\left \{
		\begin{array}{ll}
			\xi^{\sX}_{\mu},    & \text{if $1 \leq \mu \leq S$}\\
			\kappa^{\sX}_{\mu}, & \text{if $S + 1 \leq \mu \leq S+K$}
		\end{array}
	\right. .
\end{align*}
\endgroup
Using this orthonormal basis, define
\begin{align*}
	\mathbbm{e}^{\sX}_{\mu \nu} &:= \chi_{\mu}^{\sX} \otimes \left(\chi^{\sX}_{\nu}\right)^{\vee}
\end{align*}
which can be thought of as an element of $\algebra{\hilb_{\sX}}$.
To prove the theorem it suffices to show that $\ker(d^{0}_{\mathrm{G}})$ is given by the subspace 
\begin{align*} 
	\mathrm{span}_{\mathbb{C}} \left\{(\mathbbm{e}_{ij}^{\sA}, \mathbbm{e}_{ij}^{\sB}): 1 \leq i \leq S, \, 1 \leq j \leq S \right\}.
\end{align*}
We will do this by utilizing Prop.~\ref{thm:kernel_covariance}, which states that:
\begin{align*}
	\ker(d^{0}_{\cpx{G}}) =
	\left \{(a,b) \in \mathtt{GNS}(\dens{\rho}_{\sA}) \times \mathtt{GNS}(\dens{\rho}_{\sA}):
	\text{$\Cov(a,b) = \Var(a) = \Var(b)$ and $\Tr[\dens{\rho}_{\sA}a] = \Tr[\dens{\rho}_{\sB}b]$} \right \}.
\end{align*}
Begin by noting that, with the choices and definitions above, we can identify $\algebra{\hilb_{\sA}}$ and $\algebra{\hilb_{\sB}}$ with the algebra of $(S+K) \times (S+K)$-matrices using the isomorphisms (of $C^{*}$-algebras) 
\begin{align*}
	\phi_{\sX}: \algebra{\hilb_{\sX}} & \longrightarrow \mathrm{Mat} \left[ \mathbb{C}^{S} \oplus \mathbb{C}^{K} \right] \\
                          a           & \longmapsto \sum_{\mu,\nu =1}^{S+K} \Tr[a \mathbbm{e}^{\sX}_{\mu \nu}] \mathbb{E}_{\mu \nu}.
\end{align*}
where, $\sX \in \{\sA, \sB\}$, and $\mathbb{E}_{\mu \nu}$ is the standard matrix with a 1 in position $(\mu, \nu)$ and zeros elsewhere.
Next observe that for any $r \in \algebra{\hilb_{\sX}}$, we have 
\begin{align*}
	\Tr\left[\dens{\rho}_{\sX} r \right] = \Tr[\mathbbm{D} \phi^{\sX}(r)]
\end{align*}
where $\mathbbm{D}$ is the $(S+K) \times (S+K)$ diagonal matrix 
\begin{align*}
	\mathbbm{D} = \sum_{i = 1}^{S} p_{i} \mathbbm{E}_{ii}.
\end{align*}
This allows us to easily rewrite the covariance and variances in terms of traces of products of matrices: for any $(a,b) \in \algebra{\hilb_{\sA}} \times \algebra{\hilb_{\sB}}$ and $r \in \algebra{\hilb_{\sX}}$: 
\begin{align*}
	\Cov(a,b)     & = \langle \phi^{\sA}(\twid{a}), \phi^{\sB}(\twid{b}) \rangle,\\
	\Var_{\sX}(r) & = \langle \phi^{\sX}(\twid{r}), \phi^{\sX}(\twid{r}) \rangle.
\end{align*}
where $\langle - , - \rangle$ is the sesquilinear form
\begin{align*}
	\langle - , - \rangle: \mathrm{Mat} \left[ \mathbb{C}^{S} \oplus \mathbb{C}^{K} \right] \times \mathrm{Mat} \left[ \mathbb{C}^{S} \oplus \mathbb{C}^{K} \right] &\longrightarrow \mathbb{C}\\
	(\mathbbm{X}, \mathbbm{Y}) &\longmapsto \Tr\left[ \mathbbm{D} \mathbbm{X}^{*} \mathbbm{Y} \right].
\end{align*}
and 
\begin{align*}
	\twid{r} := r - \Tr\left[\dens{\rho}_{\sX} r\right] 1_{\sX}.
\end{align*}
Suppose now that $(a,b) \in \algebra{\hilb_{\sA}} \times \algebra{\hilb_{\sB}}$ are such that
\begin{align}
	\Cov(a,b) = \Var_{\sA}(a) = \Var_{\sB}(b)
	\label{eq:cov_eq_var}
\end{align}
then, by the Cauchy-Schwarz inequality applied to the sesquilinear form $\langle - , - \rangle$, we have
\begin{align}
	\phi^{\sA}(\twid{a}) = \phi^{\sB}(\twid{b}) + \mathbbm{N}
	\label{eq:CS_result}
\end{align}
where $\mathbbm{N}$ is an element of the subspace $N$ of null vectors defined by 
\begin{align*}
	N &:= \{\mathbbm{X}: 0 = \langle \mathbbm{X}, \mathbbm{X} \rangle \}\\
	  &= \mathrm{span}_{\mathbb{C}} \{ \mathbbm{E}_{\kappa \mu}: S+1 \leq \kappa \leq S+K,\, 1 \leq \mu \leq S+K \}.
\end{align*}
Letting $\mathbbm{1}$ be the identity matrix, then \eqref{eq:CS_result} expands to
\begin{align*}
	\phi^{\sA}(a) - \Tr[\dens{\rho}_{\sA} a] \mathbbm{1} = \phi^{\sB}(b) - \Tr[\dens{\rho}_{\sB} b] \mathbbm{1} + \mathbbm{N}.
\end{align*}
Now, imposing the further condition that
\begin{align}
	\Tr[\dens{\rho}_{\sA} a] = 	\Tr[\dens{\rho}_{\sB} b]
	\label{eq:equal_expectation}
\end{align}
we must have
\begin{align}
	\phi^{\sA}(a)  = \phi^{\sB} (b) + \mathbbm{N}.
	\label{eq:CS_equal_expectation}
\end{align}
Condition \eqref{eq:CS_equal_expectation} follows for any pair $(a,b) \in \algebra{\hilb_{\sA}} \times \algebra{\hilb_{\sB}}$ satisfying  \eqref{eq:cov_eq_var} and \eqref{eq:equal_expectation}.
However, there is one further constraint to consider:
\begin{align*}
	(a,b) \in \mathtt{GNS}(\dens{\rho}_{\sA}) \times \mathtt{GNS}(\dens{\rho}_{\sB}) \leq \algebra{\hilb_{\sA}} \times \algebra{\hilb_{\sB}}.
\end{align*}
Under the isomorphisms $\phi_{\sX}$, the images of the subspaces $\mathtt{GNS}(\dens{\rho}_{\sX}) \leq \algebra{\hilb_{\sX}} ,\, \sX \in \{\sA, \sB \}$ are both equal to the subspace 
\begin{align*}
	G := \mathrm{span}_{\mathbb{C}} \{ \mathbbm{E}_{i\mu}: 1 \leq i  \leq S, \, 1 \leq \mu \leq S+K \} \leq \mathrm{Mat} \left[ \mathbb{C}^{S} \oplus \mathbb{C}^{K} \right] 
\end{align*}
Noting that $G \cap N = 0$ and $\phi^{\sA}(a),\, \phi^{\sB}(b) \in G$, it follows that the element $\mathbbm{N} \in N$ of \eqref{eq:CS_equal_expectation} must vanish.  Hence, 
\begin{align*}
	\phi^{\sA}(a) = \phi^{\sB}(b).
\end{align*}
It follows that we must have 
\begin{align*}
	a &= \sum_{i = 1}^{S} r_{ij} \mathbbm{e}^{\sA}_{ij}\\
	b &= \sum_{i = 1}^{S} r_{ij} \mathbbm{e}^{\sB}_{ji}.
\end{align*}
for some $(r_{ij})_{i,j = 1}^{S} \in \mathbb{C}$.
These are precisely elements of $\mathrm{span}_{\mathbb{C}} \{(\mathbbm{e}_{ij}^{\sA}, \mathbbm{e}_{ij}^{\sB}) \}$.
Summarizing, we have shown $\ker(d_{\cpx{G}}^{0}) = \mathrm{span}_{\mathbb{C}} \{(\mathbbm{e}_{ij}^{\sA}, \mathbbm{e}_{ij}^{\sB}) \}$. 

\section{Proof of Proposition~\ref{prop:triviality_means_coboundary} \label{app:proof_prop_triviality_means_coboundary}}

Let $\bdens{\rho}_{P}$ be an arbitrary multipartite density state and $\twid{\bdens{\rho}}_{P}$ be its fully factorizable form (c.f.\ \eqref{eq:fully_factorizable_form}). 
Recall that 
\begin{align*}
	\mathrm{Triv}^{k}(\bdens{\rho}_{P}) =  \left\{R \in \prod_{|T|=k+1} \mathtt{B}(\bdens{\rho}_{T}) \colon \text{$\exists Q \in \ker\left(\twid{d}^{k}_{\cpx{C}}\right)$ s.t. $R_{T} \overset{\mathtt{B}}{\sim}_{T} Q_{T},\, \forall T \subseteq P$ with $|T| = k+1$} \right\},
\end{align*}
for $0 \leq k \leq |P|-2$, and, to remove some notational ambiguity, we are letting $d_{\cpx{C}}$ denote the coboundary associated to the complex $\cpx{C}(\bdens{\rho}_{P})$ and $\twid{d}_{\cpx{C}}$ denote the coboundary associated to the complex $\cpx{C}(\twid{\bdens{\rho}}_{P})$.
We want to show
\begin{align*}
	\mathrm{Triv}^{k}(\bdens{\rho}_{P}) = \image \left(d_{\cpx{C}}^{k-1} \right).
\end{align*}
By Thm.~\ref{thm:support_fact_multipartite_cohomologies}, the complex $\cpx{C}^{k}(\twid{\bdens{\rho}}_{P})$ has vanishing cohomology in all degrees $< |P|-1$; hence,
\begin{align}	
	\ker(\twid{d}^{k}_{\cpx{C}}) = \image\left(\twid{d}_{\cpx{C}}^{k-1}\right),
	\label{eq:ker_is_image}
\end{align}
for $0 \leq k \leq |P|-2$.

Next, let us choose $K \in \mathrm{Triv}^{k}(\bdens{\rho}_{P})$, then for every $T \subseteq P$ with $|T|=k+1$ we have $K_{T} \overset{\mathtt{B}}{\sim}_{T} Q_{T}$ for some $Q \in \mathrm{EQ}^{k}(\bdens{\rho}_{P})$, but by \eqref{eq:ker_is_image}, we have $Q \in \image\left(\twid{d}_{\cpx{C}}^{k-1}\right)$; hence, there exists an $F \in \cpx{C}^{k-1}(\twid{\bdens{\rho}}_{P})$ such that 
\begin{align}
	K_{T} \overset{\mathtt{B}}{\sim}_{T} Q_{T} = \sum_{m = 0}^{k} (-1)^{m} \Sigma_{(T,m)} \left[ F_{\partial_{m} T} \otimes \left(\otimes_{s \in T(m)} \supp_{s} \right) \right] \otimes_{t \in T} \supp_{t}
	\label{eq:K_sim_coboundary}
\end{align}
for all $T \subseteq P$ with $|T| = k+1$.
Let us treat the case that $\cpx{C}(\bdens{\rho}_{P})$ is the GNS complex $\cpx{G}(\bdens{\rho}_{P})$ so that $\mathtt{B} = \mathtt{GNS}$, then \eqref{eq:K_sim_coboundary} is the statement that
\begin{align*}
	K_{T} &= \left[\sum_{m = 0}^{k} (-1)^{m} \Sigma_{(T,m)} \left[ F_{\partial_{m} T} \otimes \left(\otimes_{s \in T(m)} \supp_{s} \right) \right] \otimes_{t \in T} \supp_{t} \right]\supp_{T}
\end{align*}
using the compatibility of supports lemma, we can write
\begin{align*}
	K_{T} &= \sum_{m = 0}^{k} (-1)^{m} \Sigma_{(T,m)} \left[ F_{\partial_{m} T} \otimes \left(\otimes_{s \in T(m)} \supp_{s} \right) \right] \supp_{T}\\
		  &= \sum_{m = 0}^{k} (-1)^{m} \Sigma_{(T,m)} \left[ F_{\partial_{m} T} \left(\otimes_{w \in \partial_{m} T} \supp_{w} \right) \otimes \left(\otimes_{s \in T(m)} \supp_{s} \right) \right] \supp_{T}\\
		  &= \sum_{m = 0}^{k} (-1)^{m} \Sigma_{(T,m)} \left[ F_{\partial_{m} T} \left(\otimes_{w \in \partial_{m} T} \supp_{w} \right) \otimes \left(\otimes_{s \in T(m)} \supp_{s} \right) \right] \left[\supp_{\partial_{m} T} \otimes \supp_{T(m)} \right] \supp_{T}\\
		  &= \sum_{m = 0}^{k} (-1)^{m} \Sigma_{(T,m)} \left[ F_{\partial_{m} T}\supp_{\partial_{m} T} \otimes \supp_{T(m)} \right] \supp_{T}.
\end{align*}
This gives an element $S \in \cpx{C}^{k-1}(\bdens{\rho}_{P})$, defined componentwise by $S_{T} := F_{T} \supp_{T}$, that satisfies $K = d^{k-1}_{\cpx{G}} S$.
In the case that $\cpx{C}(\bdens{\rho}_{P}) = \cpx{G}(\bdens{\rho}_{P})$ we have thus shown: 
\begin{align*}
	\mathrm{Triv}_{k}(\bdens{\rho}_{P}) \leq \image \left(d_{\cpx{G}}^{k-1}: \cpx{G}^{k-1}(\bdens{\rho}_{P}) \rightarrow \cpx{G}^{k}(\bdens{\rho}_{P}) \right).
\end{align*}
The argument for the above inclusion when $\cpx{C}(\bdens{\rho}_{P}) = \cpx{E}(\bdens{\rho}_{P})$ is nearly identical.\footnote{One need only multiply all equations on the left by the appropriate support projections.} 

To show the opposite inclusion, let $K \in \image(d_{\cpx{C}}^{k-1})$.
If $\cpx{C}(\bdens{\rho}_{P})$ is the GNS complex $\cpx{G}(\bdens{\rho}_{P})$, then there exists an $F \in \cpx{G}^{k-1}(\bdens{\rho}_{P})$ such that
\begin{align*}
	K_{T} &= \sum_{m = 0}^{k-2} (-1)^{m} \Sigma_{(T,m)} \left[ F_{\partial_{m} T} \otimes \supp_{T(m)} \right] \supp_{T},
\end{align*}
however, $\cpx{G}^{k-1}(\bdens{\rho}_{P}) \leq \cpx{G}^{k-1}(\twid{\bdens{\rho}}_{P})$; so, with an implicit use of the compatibility of supports lemma, we can write this as: 
\begin{align*}
	K_{T} &= \left( \twid{d}^{k-1}F \right) \supp_{T}.
\end{align*}
This is just the statement that
\begin{align*}
	K_{T} \overset{\mathtt{GNS}}{\sim}_{T} Q_{T},\: \forall |T| = k+1,
\end{align*}
for some $Q_{T} \in \image(\twid{d}^{k-1}_{\cpx{G}})$; hence, $K \in \mathrm{Triv}^{k}(\bdens{\rho}_{P})$.
Once again, the argument for the case that $\cpx{C}(\bdens{\rho}_{P})$ is the commutant complex $\cpx{E}(\bdens{\rho}_{P})$ is nearly identical.

\section{Mutual Information and The Incidence Algebra of Posets \label{app:incidence_alg_mutual_info}}
Let $P$ be a finite set (we do not require a total order at this point).  Consider the set of functions that assign a real number to each inclusion $V \subseteq T$ of subsets $V, T \subseteq P$ (allowing for the case that $V=T$).  We can equip this set with the structure of an $\mathbb{R}$-algebra---denoted $A_{P}$ in the following way: scalar multiplication and addition are defined via: 
\begin{align*}
	\lambda \cdot f : V \subseteq T &\longmapsto \lambda f(V \subseteq T),\\
	f + g : V \subseteq T &\longmapsto f(V \subseteq T) + g(V \subseteq T),
\end{align*}
for $f,g \in A_{P},\, \lambda \in \mathbb{R}$, and $V, T$ subsets with $V \subseteq T$; the product is defined by
\begin{align*}
	f*g : V \subseteq T &\longmapsto \sum_{\{W: V \subseteq W \subseteq T \}} f(V \subseteq W) g(W \subseteq T).
\end{align*}
The corresponding algebra is sometimes referred to as the ``incidence algebra" associated to the poset of subsets of $P$ (it is also a special case of the ``category algebra" associated to the poset of subsets thought of as a category).
The incidence algebra has an identity element given by the ``delta function" $\delta$: defined to be valued $1$ on all self-inclusions $V \subseteq V$ and zero otherwise.  Functions that are non-vanishing on all self-inclusions $V \subseteq V$ are invertible in this algebra.  The constant function $\zeta$ (also called the ``zeta function") that takes the value $1$ on all inclusions has an inverse $\mu$ known as the M\"{o}bius-mu function: 
\begin{align*}
	\mu: V \subseteq T \longmapsto \sum_{V \subseteq W \subseteq T} (-1)^{|T| - |W|}.
\end{align*}
The $\mathbb{R}$-vector space $\mathrm{Fun}(P, \mathbb{R})$ of real valued functions on $P$ forms a right module for $A_{P}$ with action specified by: 
\begin{align*}
	m \cdot f : V &\longmapsto \sum_{\emptyset \subseteq W \subseteq V} m(W) f(W \subseteq V),
\end{align*}
for $m \in \mathrm{Fun}(P, \mathbb{R})$ and $f \in A_{P}$.
Now, suppose we have a multipartite density state $\bdens{\rho}_{P}$ associated to the set of tensor factors $P$ (equipped with some total order); define the function $s \in \mathrm{Fun}(P,\mathbb{R})$ as the function that assigns associated von Neumann entropies to each subset: $s(V) := S^{\mathrm{vN}}(\dens{\rho}_{V})$.
One can verify that the mutual information (with the additional global sign $(-1)^{|P|}$) arises by the right action of the M\"{o}bius-mu function on $s$: 
\begin{align*}
	s \cdot \mu: V \longmapsto I^{-}(\bdens{\rho}_{V}).
\end{align*}
One advantage of this presentation is that we can use the identity $\mu * \zeta = \delta$ to recover the function $s$: as a result we recover the identity: 
\begin{align*}
	S(\bdens{\rho}_{V}) = \sum_{\emptyset \subseteq T \subseteq V} I^{-}(\bdens{\rho}_{V}).
\end{align*}
This is an example of a M\"{o}bius inversion formula in the context of incidence algebras associated to posets.

\section{Generators for GNS cohomology of tripartite GHZ and W-states \label{app:ghz_w_cohomology_generators}}
In the following, we work with states in the Hilbert space defined by $\mathrm{span}_{\mathbb{C}} \left \{ \ket{0}, \ket{1} \right\} ^{\otimes 3}$ (where the span is taken to be orthonormal).
All operators are identified with matrices using the ordered basis $(\ket{0}, \ket{1})$ for Hilbert spaces associated to a single tensor factor; we use the lexicographical ordering to order the bases for Hilbert spaces associated to multiple tensor factors: e.g.\ we have the ordered basis $(\ket{00}, \ket{01}, \ket{10}, \ket{11})$ for Hilbert spaces associated to two tensor factors.

\subsection{The GHZ State}
The tripartite GHZ state is given by:
\begin{align*}
	\frac{1}{\sqrt{2}}\left( \ket{0}^{\otimes 3} + \ket{1}^{\otimes 3}  \right) \in  \mathrm{span}_{\mathbb{C}} \left \{ \ket{0}, \ket{1} \right\} ^{\otimes 3}. 
\end{align*}

\subsubsection{Generators for \texorpdfstring{$H^{0}$}{Degree 0 GNS Cohomology}}
A 1-cocycle representative for the one dimensional linear family spanning the first GNS cohomology component is   
\begin{align*}
	\left\{\sA\mapsto \left(
			\begin{array}{cc}
				1 & 0 \\
				0 & -1 \\
			\end{array}
			\right),\sB\mapsto \left(
			\begin{array}{cc}
				1 & 0 \\
				0 & -1 \\
			\end{array}
			\right),\sC\mapsto \left(
			\begin{array}{cc}
				1 & 0 \\
				0 & -1 \\
			\end{array}
	\right)\right\}
\end{align*}
The support projections $\supp_{\sA},\,\supp_{\sB},\,$ and $\supp_{\sC}$ are given by the $2 \times 2$ identity matrix.

\subsubsection{Generators for \texorpdfstring{$H^{1}$}{Degree 1 GNS cohomology}}
1-cocycle representatives for each of the six linearly independent cohomology classes of the first GNS cohomology component are given by:
\begin{align*}
	&\left\{\sAB \mapsto \left(
			\begin{array}{cccc}
				0 & 0 & 0 & 0 \\
				0 & 0 & 0 & 0 \\
				-\frac{1}{2} & 0 & 0 & 0 \\
				0 & 0 & 0 & 0 \\
			\end{array}
			\right),\sAC \mapsto \left(
			\begin{array}{cccc}
				0 & 0 & 0 & 0 \\
				0 & 0 & 0 & 0 \\
				\frac{1}{2} & 0 & 0 & 0 \\
				0 & 0 & 0 & 0 \\
			\end{array}
			\right),\sBC \mapsto \left(
			\begin{array}{cccc}
				0 & 0 & 0 & 1 \\
				0 & 0 & 0 & 0 \\
				0 & 0 & 0 & 0 \\
				0 & 0 & 0 & 0 \\
			\end{array}
	\right)\right\} \\
	&\left\{\sAB\mapsto \left(
			\begin{array}{cccc}
				0 & 0 & 0 & 0 \\
				0 & 0 & 0 & 0 \\
				0 & 0 & 0 & 1 \\
				0 & 0 & 0 & 0 \\
			\end{array}
			\right),\sAC\mapsto \left(
			\begin{array}{cccc}
				0 & 0 & 0 & 0 \\
				0 & 0 & 0 & 0 \\
				0 & 0 & 0 & 0 \\
				2 & 0 & 0 & 0 \\
			\end{array}
			\right),\sBC\mapsto \left(
			\begin{array}{cccc}
				0 & 0 & 0 & 0 \\
				0 & 0 & 0 & 1 \\
				0 & 0 & 0 & 0 \\
				0 & 0 & 0 & 0 \\
			\end{array}
	\right)\right\} \\
	&\left\{\sAB\mapsto \left(
			\begin{array}{cccc}
				0 & 0 & 0 & -2 \\
				0 & 0 & 0 & 0 \\
				0 & 0 & 0 & 0 \\
				0 & 0 & 0 & 0 \\
			\end{array}
			\right),\sAC\mapsto \left(
			\begin{array}{cccc}
				0 & 0 & 0 & 0 \\
				-1 & 0 & 0 & 0 \\
				0 & 0 & 0 & 0 \\
				0 & 0 & 0 & 0 \\
			\end{array}
			\right),\sBC\mapsto \left(
			\begin{array}{cccc}
				0 & 0 & 0 & 0 \\
				1 & 0 & 0 & 0 \\
				0 & 0 & 0 & 0 \\
				0 & 0 & 0 & 0 \\
			\end{array}
	\right)\right\} \\
	&\left\{\sAB\mapsto \left(
			\begin{array}{cccc}
				0 & 0 & 0 & 0 \\
				0 & 0 & 0 & 0 \\
				0 & 0 & 0 & 0 \\
				-2 & 0 & 0 & 0 \\
			\end{array}
			\right),\sAC\mapsto \left(
			\begin{array}{cccc}
				0 & 0 & 0 & 0 \\
				0 & 0 & 0 & 0 \\
				0 & 0 & 0 & -1 \\
				0 & 0 & 0 & 0 \\
			\end{array}
			\right),\sBC\mapsto \left(
			\begin{array}{cccc}
				0 & 0 & 0 & 0 \\
				0 & 0 & 0 & 0 \\
				0 & 0 & 0 & 1 \\
				0 & 0 & 0 & 0 \\
			\end{array}
	\right)\right\} \\
	&\left\{\sAB\mapsto \left(
			\begin{array}{cccc}
				0 & 0 & 0 & 0 \\
				1 & 0 & 0 & 0 \\
				0 & 0 & 0 & 0 \\
				0 & 0 & 0 & 0 \\
			\end{array}
			\right),\sAC\mapsto \left(
			\begin{array}{cccc}
				0 & 0 & 0 & 2 \\
				0 & 0 & 0 & 0 \\
				0 & 0 & 0 & 0 \\
				0 & 0 & 0 & 0 \\
			\end{array}
			\right),\sBC\mapsto \left(
			\begin{array}{cccc}
				0 & 0 & 0 & 0 \\
				0 & 0 & 0 & 0 \\
				1 & 0 & 0 & 0 \\
				0 & 0 & 0 & 0 \\
			\end{array}
	\right)\right\} \\
	&\left\{\sAB\mapsto \left(
			\begin{array}{cccc}
				0 & 0 & 0 & 0 \\
				0 & 0 & 0 & -\frac{1}{2} \\
				0 & 0 & 0 & 0 \\
				0 & 0 & 0 & 0 \\
			\end{array}
			\right),\sAC\mapsto \left(
			\begin{array}{cccc}
				0 & 0 & 0 & 0 \\
				0 & 0 & 0 & \frac{1}{2} \\
				0 & 0 & 0 & 0 \\
				0 & 0 & 0 & 0 \\
			\end{array}
			\right),\sBC\mapsto \left(
			\begin{array}{cccc}
				0 & 0 & 0 & 0 \\
				0 & 0 & 0 & 0 \\
				0 & 0 & 0 & 0 \\
				1 & 0 & 0 & 0 \\
			\end{array}
	\right)\right\}
\end{align*}
The support projection of relevance (for checking the cocycle condition) is
\begin{align*}
	\supp_{\sABC} &= \left(
\begin{array}{cccccccc}
 \frac{1}{2} & 0 & 0 & 0 & 0 & 0 & 0 & \frac{1}{2} \\
 0 & 0 & 0 & 0 & 0 & 0 & 0 & 0 \\
 0 & 0 & 0 & 0 & 0 & 0 & 0 & 0 \\
 0 & 0 & 0 & 0 & 0 & 0 & 0 & 0 \\
 0 & 0 & 0 & 0 & 0 & 0 & 0 & 0 \\
 0 & 0 & 0 & 0 & 0 & 0 & 0 & 0 \\
 0 & 0 & 0 & 0 & 0 & 0 & 0 & 0 \\
 \frac{1}{2} & 0 & 0 & 0 & 0 & 0 & 0 & \frac{1}{2} \\
\end{array}
\right).	
\end{align*}

\subsection{The W-state}
The tripartite W-state is given by:
\begin{align*}
	\frac{1}{\sqrt{3}} \left( \ket{100} + \ket{010} + \ket{001} \right).
\end{align*}

\subsubsection{Generators for \texorpdfstring{$H^{0}$}{Degree 0 GNS Cohomology}}
A 1-cocycle representative for the one dimensional linear family spanning the first GNS cohomology component is   
\begin{align*}
	\left\{\sA \mapsto \left(
			\begin{array}{cc}
				0 & 1 \\
				0 & 0 \\
			\end{array}
			\right),\sB \mapsto \left(
			\begin{array}{cc}
				0 & 1 \\
				0 & 0 \\
			\end{array}
			\right),\sC \mapsto \left(
			\begin{array}{cc}
				0 & 1 \\
				0 & 0 \\
			\end{array}
	\right)\right\}
\end{align*}
The support projections $\supp_{\sA},\,\supp_{\sB},\,$ and $\supp_{\sC}$ are given by the $2 \times 2$ identity matrix.

\subsubsection{Generators for \texorpdfstring{$H^{1}$}{Degree 1 GNS cohomology}}
1-cocycle representatives for each of the six linearly independent cohomology classes of the first GNS cohomology component are given by:
\begin{align*}
	&\left\{\sAB\mapsto \left(
			\begin{array}{cccc}
				1 & 0 & 0 & 0 \\
				0 & -\frac{1}{4} & -\frac{1}{4} & 0 \\
				0 & -\frac{1}{4} & -\frac{1}{4} & 0 \\
				0 & 0 & 0 & 0 \\
			\end{array}
			\right),\sAC\mapsto \left(
			\begin{array}{cccc}
				-1 & 0 & 0 & 0 \\
				0 & \frac{1}{4} & \frac{1}{4} & 0 \\
				0 & \frac{1}{4} & \frac{1}{4} & 0 \\
				0 & 0 & 0 & 0 \\
			\end{array}
			\right),\sBC\mapsto \left(
			\begin{array}{cccc}
				1 & 0 & 0 & 0 \\
				0 & -\frac{1}{4} & -\frac{1}{4} & 0 \\
				0 & -\frac{1}{4} & -\frac{1}{4} & 0 \\
				0 & 0 & 0 & 0 \\
			\end{array}
	\right)\right\},\\
	&\left\{\sAB\mapsto \left(
			\begin{array}{cccc}
				0 & 0 & 0 & 0 \\
				-\frac{5}{6} & 0 & 0 & 0 \\
				-\frac{13}{12} & 0 & 0 & 0 \\
				0 & \frac{5}{24} & \frac{5}{24} & 0 \\
			\end{array}
			\right),\sAC\mapsto \left(
			\begin{array}{cccc}
				0 & 0 & 0 & 0 \\
				-\frac{5}{6} & 0 & 0 & 0 \\
				\frac{5}{12} & 0 & 0 & 0 \\
				0 & -\frac{1}{24} & -\frac{1}{24} & 0 \\
			\end{array}
			\right),\sBC\mapsto \left(
			\begin{array}{cccc}
				0 & 0 & 0 & 0 \\
				1 & 0 & 0 & 0 \\
				0 & 0 & 0 & 0 \\
				0 & 0 & 0 & 0 \\
			\end{array}
	\right)\right\}, \\
	&\left\{\sAB\mapsto \left(
			\begin{array}{cccc}
				0 & 0 & 0 & 0 \\
				\frac{5}{6} & 0 & 0 & 0 \\
				-\frac{5}{12} & 0 & 0 & 0 \\
				0 & \frac{1}{24} & \frac{1}{24} & 0 \\
			\end{array}
			\right),\sAC\mapsto \left(
			\begin{array}{cccc}
				0 & 0 & 0 & 0 \\
				\frac{5}{6} & 0 & 0 & 0 \\
				\frac{13}{12} & 0 & 0 & 0 \\
				0 & -\frac{5}{24} & -\frac{5}{24} & 0 \\
			\end{array}
			\right),\sBC\mapsto \left(
			\begin{array}{cccc}
				0 & 0 & 0 & 0 \\
				0 & 0 & 0 & 0 \\
				1 & 0 & 0 & 0 \\
				0 & 0 & 0 & 0 \\
			\end{array}
	\right)\right\},\\
	&\left\{\sAB\mapsto \left(
			\begin{array}{cccc}
				0 & 0 & 0 & 0 \\
				-\frac{1}{2} & 0 & 0 & 0 \\
				-\frac{5}{4} & 0 & 0 & 0 \\
				0 & \frac{5}{8} & \frac{5}{8} & 0 \\
			\end{array}
			\right),\sAC\mapsto \left(
			\begin{array}{cccc}
				0 & 0 & 0 & 0 \\
				\frac{1}{2} & 0 & 0 & 0 \\
				\frac{5}{4} & 0 & 0 & 0 \\
				0 & -\frac{5}{8} & -\frac{5}{8} & 0 \\
			\end{array}
			\right),\sBC\mapsto \left(
			\begin{array}{cccc}
				0 & 0 & 0 & 0 \\
				0 & 0 & 0 & 0 \\
				0 & 0 & 0 & 0 \\
				0 & \frac{1}{2} & \frac{1}{2} & 0 \\
			\end{array}
	\right)\right\},\\
	&\left\{\sAB\mapsto \left(
			\begin{array}{cccc}
				0 & 0 & 0 & 0 \\
				0 & 0 & 0 & 0 \\
				0 & 0 & 0 & 0 \\
				-1 & 0 & 0 & 0 \\
			\end{array}
			\right),\sAC\mapsto \left(
			\begin{array}{cccc}
				0 & 0 & 0 & 0 \\
				0 & 0 & 0 & 0 \\
				0 & 0 & 0 & 0 \\
				0 & 0 & 0 & 0 \\
			\end{array}
			\right),\sBC\mapsto \left(
			\begin{array}{cccc}
				0 & 0 & 0 & 0 \\
				0 & 0 & 0 & 0 \\
				0 & 0 & 0 & 0 \\
				1 & 0 & 0 & 0 \\
			\end{array}
	\right)\right\},\\
	&\left\{\sAB\mapsto \left(
			\begin{array}{cccc}
				0 & 0 & 0 & 0 \\
				0 & 0 & 0 & 0 \\
				0 & 0 & 0 & 0 \\
				-1 & 0 & 0 & 0 \\
			\end{array}
			\right),\sAC\mapsto \left(
			\begin{array}{cccc}
				0 & 0 & 0 & 0 \\
				0 & 0 & 0 & 0 \\
				0 & 0 & 0 & 0 \\
				-1 & 0 & 0 & 0 \\
			\end{array}
			\right),\sBC\mapsto \left(
			\begin{array}{cccc}
				0 & 0 & 0 & 0 \\
				0 & 0 & 0 & 0 \\
				0 & 0 & 0 & 0 \\
				0 & 0 & 0 & 0 \\
			\end{array}
	\right)\right\}.
\end{align*}

The support projection of relevance is:
\begin{align*}
	\supp_{\sABC} &= \left(
\begin{array}{cccccccc}
 0 & 0 & 0 & 0 & 0 & 0 & 0 & 0 \\
 0 & \frac{1}{3} & \frac{1}{3} & 0 & \frac{1}{3} & 0 & 0 & 0 \\
 0 & \frac{1}{3} & \frac{1}{3} & 0 & \frac{1}{3} & 0 & 0 & 0 \\
 0 & 0 & 0 & 0 & 0 & 0 & 0 & 0 \\
 0 & \frac{1}{3} & \frac{1}{3} & 0 & \frac{1}{3} & 0 & 0 & 0 \\
 0 & 0 & 0 & 0 & 0 & 0 & 0 & 0 \\
 0 & 0 & 0 & 0 & 0 & 0 & 0 & 0 \\
 0 & 0 & 0 & 0 & 0 & 0 & 0 & 0 \\
\end{array}
\right).
\end{align*}	

\printbibliography

\end{document}